\def\l@subsubsection#1#2{}
\newtheorem{theorem}{Theorem}[section]
\newtheorem{definition}[theorem]{Definition}
\newtheorem{lemma}[theorem]{Lemma}
\newtheorem{proposition}[theorem]{Proposition}
\newtheorem{corollary}[theorem]{Corollary}
\newcommand{\phantomsubfloat}[1]{
    {
        \captionsetup[subfigure]{labelformat=empty}
        \subfloat[][]{#1}
    }%
}
\begin{document}
\title{\textbf{Building holographic code from the boundary}}
\begin{CJK*}{UTF8}{gbsn}
\author{Wei Wang (王巍)}
\email{wwwei\_wwwang@sjtu.edu.cn}
\affiliation{Tsung-Dao Lee Institute, Shanghai Jiao Tong University, Shanghai, 201210, China}

\begin{abstract}
Holographic quantum error-correcting code, the quantum-information structure hypothesized for the AdS/CFT correspondence, has being attracting increasing attention in new directions interrelating the studies of quantum gravity and quantum simulation. In this work, we initiate a novel approach for building holographic code that can be generally applied in potentially broad and interdisciplinary contexts. Our approach takes an ``opposite'' route to the conventional paradigm that is based on bulk tensor-networks. As illustrated in an exact model, we start from scalable descriptions of boundary qudits which can guide succinct quantum-circuit simulations, and rigorously show how the bulk qudits and the encoding structure emerge from boundary entanglement patterns. By investigating the entanglement patterns, we systematically unfold the hypothetical structure for bulk reconstruction and the details of the Ryu-Takayanagi formula in the formalism of operator-algebra quantum error correction, demonstrating desired properties that are not yet proved in the established models. Our work might offer a fresh perspective for the study of holographic code.
\end{abstract}

\maketitle

\tableofcontents


\section{Introduction}
Holographic quantum error-correcting code (HQEC)~\cite{almheiri2015,pastawski2015,dong2016,harlow2017} is proposed to interpret the emergent bulk locality and bulk reconstruction in a perturbative setting of the anti-de Sitter/conformal field theory (AdS/CFT) correspondence~\cite{witten1998,maldacena1999}. It has laid the foundation for more advanced studies of quantum gravity from the quantum-information perspective~\cite{donnelly2017,pastawski2017,almheiri2018,akers2019,cotler2019,hayden2020,osborne2020,penington2020,cao2021,harlow2021,akers2022a,akers2022,balasubramanian2023}, and has motivated the development of variant versions of the holographic correspondence, e.g., the $p$-adic AdS/CFT~\cite{gubser2017,heydeman2018a,bhattacharyya2018,hung2019,chen2021a,chen2021,yan2023,ebert2023}.

Intriguingly, HQEC has garnered increasing attention from broader areas of quantum physics. An earlier example is its special error-correction properties that have been an interesting topic in the study of quantum gates for fault-tolerant computation~\cite{faist2020,harris2020,cree2021,cao2022,farrelly2022,bao2022a}. More recently, within the vigorously evolving field of quantum simulation, the topic of ``quantum gravity in the lab'' has pointed out the significance and the feasibility to study systems featuring in properties of gravity through simulating the dual quantum systems~\cite{periwal2021,daley2022,anglesmunne2024,bluvstein2022,jafferis2022a,shapoval2022,bhattacharyya2022,brown2023,nezami2023,bluvstein2024,xu2024}. In this direction, quantum simulations of HQEC appears viable on near-term quantum devices, and are believed to capture salient aspects of the desired studies.

HQEC is essentially a hypothetical quantum-information structure. It views the bulk local degrees of freedom (logical qudits) as encoded in the CFT degrees of freedom (physical qudits) on the asymptotic boundary, and is believed to formalize the principle of holography about how bulk locality is emergent from the boundary entanglement~\cite{ryu2006,almheiri2015,harlow2017}. This hypothetical structure, though still in development, can be essentially encapsulated by a series of characteristics for many-qudit systems~\cite{almheiri2015,harlow2017,pastawski2017}, which ``qualify'' whether a quantum-error-correction construction captures the expected quantum-information interpretation of holography.


Accordingly, the study of HQEC in any specific context desires an exact model of holographic code which realizes these characteristics as a whole~\footnote{We use ``HQEC'' to refer to the hypothetical quantum-information structure for AdS/CFT, and use ``holographic code'' for models that are expected to realize the structure.}. To date, the tensor-network approach~\cite{jahn2021} plays the major role in building holographic codes, as it has brought brilliant insights into how a model can be ``assembled'' on the bulk, i.e., through the contraction of bulk local tensor structures on a network with hyperbolic geometry. The well-established models mainly include the pioneering HaPPY code~\cite{pastawski2015}, its elegant variations, and approximate versions~\cite{hayden2016,donnelly2017,evenbly2017,kim2017a,harris2018,jahn2019,jahn2019a,mcmahon2020,cao2021,jahn2021,pollack2022,steinberg2023}.

While these inspiring models have substantiated the significance of HQEC~\cite{almheiri2015,pastawski2015,dong2016,harlow2017,pastawski2017,cao2021}, the broadening focus and the deepening exploration of HQEC necessitate advancements in the construction of holographic codes (see the following subsection for details). On one hand, in the established models, the current methods still face challenges in demonstrating certain desired aspects of the hypothetical structure that correspond to properties of bulk reconstruction~\cite{jahn2021,cao2021}. On the other hand, the ``bulk-based'' tensor-network approach does not directly describe the structure of the boundary states~\cite{jahn2021}, and hence provides insufficient prescription for potential scalable quantum simulation of the code states. Importantly, it remains elusive what structures of boundary states can warrant the emergent bulk locality and determines the expected bulk reconstruction~\cite{harlow2017,jahn2021}, which is essential in the quantum-information interpretation of holography.

In this work, we present a ``boundary-based'' approach, contributing to address the above issues simultaneously. Particularly, our construction is based on ``elementary'' description of the physical qudits, and our arguments only rely on the necessary concepts in the general framework of quantum error correction, without introducing additional ingredients or methods. Hence, we anticipate that our work can initiate a novel perspective for building and studying holographic code in general and interdisciplinary contexts.

\subsection{Specific motivations}
More specifically, our work is motivated by the following detailed challenges in theoretical construction of exact holographic code and the prospect for quantum simulation studies based on scalable structures of boundary states. Given the following facts, investigating a boundary-based construction of holographic code indeed aligns naturally with the needs for advancements.

\subsubsection{Challenges in constructing exact models}
In theoretical studies of holographic code, there are challenges in demonstrating the hypothetical subregion duality within the formalism of operator-algebra quantum error correction (OAQEC)~\cite{harlow2017,jahn2021,cao2021}. Here, OAQEC is the general formalism of quantum error correction~\cite{beny2007,beny2007a}, which includes the relatively simpler subsystem-code formalism as a special case. The importance of this formalism is that the genuine OAQEC (distinct from the subsystem-code formalism) is essential to the description of the quantum correction to the order $G^0$ beyond the semi-classical-limit bulk gravity, and underlies a version of the Ryu-Takayanagi/Faulkner-Lewkowycz-Maldacena (RT/FLM) formula of entanglement entropy~\cite{ryu2006,faulkner2013,harlow2017}. Of particular note is that recently OAQEC is receiving significantly increasing attention from the community of quantum physics, and one of its most prominent advance is in the study of HQEC~\cite{dauphinais2024}.


One difficulty is to show that the genuine OAQEC and the condition of complementary recovery are satisfied for \emph{arbitrary} boundary bipartition (both connected and disconnected), which are expected in interpreting the AdS/CFT correspondence, and is the basis for systematically studying subregion duality in OAQEC formalism~\cite{harlow2017}. In the earlier models, e.g., the original HaPPY pentagon code and its variations~\cite{pastawski2015,jahn2021}, the subsystem-code formalism is present while the genuine OAQEC cannot be evidently realized. In the later improvement, e.g., hybrid tensor-network models~\cite{donnelly2017,cao2021}, the formalism of OAQEC can be only demonstrated for restricted cases of boundary bipartition.

Another difficulty is the systematical demonstration of uberholography~\cite{pastawski2017}. The properties of uberholography are featured by certain universality of the scaling behavior of bulk-operator reconstructions on the boundary, and are believed inherent and peculiar to HQEC with respect to the geometric manifestations of algebraic aspects of quantum error correction. However, in the established models, while certain evidence of uberholography can be observed~\cite{cao2021}, it remains elusive how to realize the universality of the scaling behavior.




At a more fundamental level beneath these difficulties lies the inadequacy of the currently developed method for rigorously demonstrating the complementary recovery or explicitly describing the expected structure of entanglement wedges and logical-operator subalgebras~\cite{cao2021,jahn2021}. Particularly, the tensor-network based greedy algorithm~\cite{pastawski2015} shows limitations in fully capturing the subregion duality. To address these issues from a fundamental perspective, we need deeper insights into the relationship between the description of the boundary and the reconstruction properties of the bulk.

However, studying the boundary states seems not straightforward in the tensor-network based construction. Indeed, the encoding in the tensor-network paradigm can be viewed as constructed from the bulk to the boundary, hence the geometry of the boundary (physical) qudits, as determined by the uncontracted tensor legs, is not uniquely or clearly specified, but depends on different inflation rules of growing the tensors. Moreover, the computation of the boundary states from the tensor contraction can be highly inefficient in large system size. These facts~\cite{jahn2021} necessitate novel considerations on the model construction, which can shed light on deeper understanding of HQEC from the boundary side.

\subsubsection{Potential quantum simulation studies}
Similar motivation comes from the challenges in potential quantum simulation studies of HQEC. Indeed, while some initial progress has been made towards realizing the holographic physics on analogue quantum simulator~\cite{periwal2021,daley2022}, it is difficult to find a realistic Hamiltonian that stabilizes the structure of HQEC as the low-energy physics. And a recent proposal~\cite{anglesmunne2024} for simulating the HaPPY code in digital quantum computer setups~\cite{daley2022} reveals the challenges in scalability: The scheme requires sufficient number of gates entangling bulk qubits and boundary qubits that are realized in the same quantum hardware, hence the feasibility is challenged by the increasing system size and the aim of sufficient realization of holographic properties relevant to the emergent bulk locality.


There is an exemplary case of a scalable scheme for quantum simulation of quantum error-correcting code, which might shed light on the case of holographic code. The example is the recent success in realizing the topological surface code (toric code) in digital quantum computers~\cite{kitaev2003,andersen2020,satzinger2021,marques2022,krinner2022,zhao2022,liu2022,googlequantumai2023}. The experimental schemes utilize the explicit structure of the topological code basis states, i.e., they can be defined from a product state together with a family of quantum gates as $\ket{\Psi_{\mathrm{toric}}}\sim\prod_p(\frac{\mathds{1}+B_p}{2})\ket{000\cdots}$, an equal-weight sum of product states specified by the $B_p$'s~\cite{kitaev2003,satzinger2021,liu2022}. Although the complete set of stabilizer generators of the surface code includes many other terms, the $B_p$ gates together with the product state $\ket{000\cdots}$ solely form a minimal data set on the physical qubits for a scalable description of the code structure: The $B_p$'s all have fixed form and finite support, and are hence independent on the system size. Furthermore, properties of the gates also comprise the knowledge that bridges the structures of the code states to the logical qubits and logical operators~\cite{kitaev2003}.

Despite the difference in the essence of quantum error correction, the advantage of the code-state-structure in the simulation of surface code prompts us to consider possible scalable description of the boundary code states in a holographic code. Indeed, in this line of thought, the recent development of quantum processor, e.g., the reconfigurable atom arrays~\cite{bluvstein2022,bluvstein2024,xu2024}, has unveiled a promising platform for efficiently simulating many-body entangled states of physical qubits, and attracts the attention for potential quantum simulation of holographic code.

However, as mentioned previously, for models built from tensor-network on bulk geometry, uncovering the structure of boundary code state can suffer from ambiguity of the boundary geometry and extensive computational workload in large scale. Hence, there lacks straightforward ways towards a concise and scalable description of the code-state structures~\cite{pastawski2015,jahn2021}. These facts further concretize the need for a new perspective on the construction of holographic codes, from which we can sufficiently access the boundary entanglement structures.

\subsection{Summary of main results}
With the aim to address the above two issues within one novel approach for building holographic codes, we take an ``opposite'' perspective to the tensor-network paradigm. In our approach, we do not resort to any additional structure like the tensor network, but only rely on certain ``elementary'' description of the physical qudits. Then, starting with the boundary, our construction of exact model simply exemplifies the principle of holography, i.e., how the bulk is emergent from the boundary entanglement. The reason enabling our approach is that the hypothetical HQEC structure inherently embodies the guiding principles for the model construction, i.e., the way to specify the bulk discrete geometry and the encoding isometry.

As we will show, an effective discrete hyperbolic geometry can be inherently specified in a proposal for ``demystifying'' the peculiar properties of uberholography. The key of the proposal is to identify the fractional number $h$ of the universal scaling component $1/h$ in uberholography as the fractional Hausdorff dimension of certain alternative geometry of the physical/boundary qudits, which is linked to the standard 1D geometry through a rearrangement operation. Note that the geometric rearrangement resembles the experimental dynamic reconfiguration of cold neutral atoms in the recent development of hybrid analogue-digital quantum simulation~\cite{bluvstein2022,bluvstein2024,xu2024}, and such coincidence can provide convenience for quantum simulation studies.

It will also be clear that if we conceive a boundary basis state as a sum of qudit-product-states, then hypothetical properties regarding the reconstruction of a single bulk qudit can derive detailed nonlocal constraints for those qudit-product-states, thereby specifying the entanglement patterns. Then, as a pivotal point in our approach, we can show how the logical/bulk degrees of freedom can emerge from the entanglement patterns of physical qudits, and how the encoding structure can be specified accordingly. Here, the term ``entanglement patterns'' is borrowed from the general description of many-body entanglement in the studies of various topics in condensed matter physics~\cite{zeng2019}, and appears to be applicable in broad contexts.

Then, the fundamental reason our construction can excel in comprehensively realizing the HQEC structure is as follows: (1) With explicit structures of the boundary code states we can investigate the reconstruction of the bulk local operators or local operator algebra of single bulk qudits, in different forms and on various boundary subregions. (2) Based on the knowledge of such reconstructions, for given boundary bipartition, the entanglement wedges consisting of reconstructable bulk qudits and the logical subalgebras consisting of reconstructable bulk operators can be completely specified.

A main part of this work is devoted to illustrate our approach in an exact model of holographic code. The illustrative model is built on ququarts (qudits with $d=4$) which aligns with recent trend on developing qudit-based experimental setups~\cite{low2020,chi2022,ringbauer2022,seifert2023a,liu2023,fischer2023}. The model is defined only based on a minimal data set describing the physical qudits, i.e., a family of two-qudit quantum gates $\{T_{ii'}\}$ together with a product state $\ket{000\cdots}$. And the boundary code basis state has the scalable form 
\begin{equation}\label{bcs1}
\ket*{\tilde{\varphi}_n}\propto\prod_{\{ii',jj',kk'\}}(\frac{\mathds{1}+T_{ii'}T_{jj'}T_{kk'}}{2})\ket{\alpha\alpha'\alpha''\cdots},
\end{equation}
where $\ket{\alpha\alpha'\alpha''\cdots}$ is a product state prepared from $\ket{000\cdots}$ through sequential application of the $\{T_{ii'}\}$ gates, which can be represented pictorially to depict the entanglement patterns. In terms of the scalable form, we describe example of succinct quantum-circuit simulation of the boundary states with the number of layers scaling sublinearly on the total number $N$ of the physical qudits. An interesting property of the scaling behavior is that it is upper bounded by $N^{\frac{1}{h}}$ as determined by the universal component in uberholography.

A distinguishing feature of our work is that merely based on the boundary entanglement patterns as given above, we not only rigorously specify the emergent bulk degrees of freedom and the encoding isometry, but also unfold the hypothetical HQEC structures by demonstrating the following characteristics.

For the aspects regarding the reconstruction of single bulk qudits, we show that \emph{i)} the connected code distance scales linearly on the system size and agrees with the radial direction from the center of the bulk to the boundary; \emph{ii)} the (disconnected) minimal boundary subregion for recovering individual bulk qudits scales sublinearly on the system size with exactly the form $\sim N^{\frac{1}{h}}$, realizing the universal scaling component $1/h$ and underlying systematic description of uberholography.

For the aspects regarding subregion duality, \emph{i)} we show that the condition of complementary recovery and the genuine OAQEC formalism (nontrivial center of subalgebra) are satisfied for arbitrary boundary bipartitions, including both connected and disconnected cases; \emph{ii)} we establish a general framework to exactly describe the entanglement wedges and the subalgebras, which translates between the geometric and the algebraic aspects of subregion duality and seems rarely presented in the literature; \emph{iii)} based on the framework we can exactly extract the bulk terms and the area (operator) terms of the RT formula in OAQEC which is in a version of the RT/FLM formula; \emph{iv)} the representative geometric illustration of subregion duality, as shown below, qualitatively agrees with the expected properties.

\begin{center}
\begin{figure}[h]
\centering
    \includegraphics[width=8.5cm]{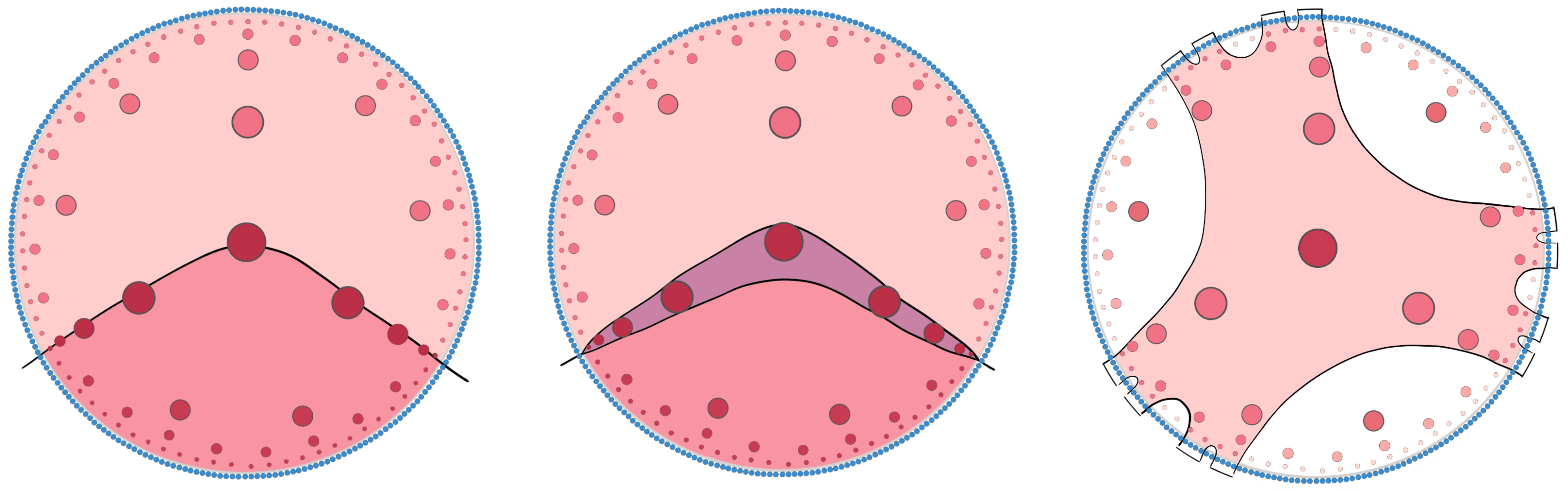}   
\end{figure}
\end{center}

The above illustration for our model show that subregion duality within genuine OAQEC is featured by the presence of an entangling surface as the overlap of two entanglement wedges, which is distinct from the subsystem-code case with geometric complementarity. The left shows a conventional depiction in which a minimal surface threads the bulk qudits ``shared'' by the two subalgebras. In our arguments, we adopt the equivalent depiction as shown in the middle which clearly specify the entangling surface and the overlapped entanglement wedges. The right is an exact example illustrating uberholography for the recovery of the central bulk qudit, where the entanglement wedge corresponds to a disconnect and measure-zero boundary subregion.

The exact holographic code might be regarded as a minimalistic model of HQEC in the following sense: On one hand, it suffices to formally realize the hypothetical structures and hence confirms the feasibility and advantage of our approach, and also shows promise in potential quantum simulation studies; on the other hand, as the initial model for our approach, it can be also further extended or developed towards more complex and hence more optimal models in studying AdS/CFT, which will be covered in our text.

\subsection{Outline}
The following text is organized as follows: In Sec.~\ref{pre} we concisely describe the hypothetical HQEC structure as a series of characteristics which will be demonstrated in our exact model. It is noticeable that in Sec.~\ref{pre}, we discuss how to systematically describe subregion duality in genuine OAQEC and present criterion for complementary recovery, which might be new since they are insufficiently addressed in the literature. In Sec.~\ref{idea} we discuss the basic idea in our approach for building holographic code, which is illustrated by the construction of our exact model in Sec~\ref{exmodel}. Then, based on the construction, we directly prove the condition of complementary recovery for our model in Sec.~\ref{crsec}, and demonstrate the HQEC characteristics regarding the reconstruction of single bulk qudit in Sec.~\ref{minisec}.

Results in Sec.~\ref{crsec} and \ref{minisec} can be viewed as a bridge between the construction of the exact model and the demonstration of the main HQEC characteristics. Based on these results, in Sec.~\ref{sdrt}, we systematically demonstrate the characteristics regarding subregional duality, including both the algebraic and the geometric aspects, and illustrate how to explicitly extract the terms in the RT formula. In the remark at the end of Sec.~\ref{sdrt}, we discuss possible way to extend the exact model and also briefly discuss how the model and the extensions are relevant to the quantum-information studies of the AdS/CFT or the $p$-adic AdS/CFT correspondence.

Note that an important part of our results will be presented in the form of theorems. In this work, all the proved theorems without specifically referring to our model apply to the general setting for studying holographic code.



\section{Preliminaries}\label{pre}
Before introducing our basic idea for the new approach, we give a compact description of the characteristics of HQEC, which will be demonstrated in our exact model. The demonstrations will be either in the form of theorems, or in detailed descriptions together with pictorial illustrations.

These characteristics were initially hypothesized and developed in Ref.~\cite{almheiri2015,harlow2017,pastawski2017}, but remains to be systematically and sufficiently addressed in the language of OAQEC. Indeed, since the proposal of OAQEC~\cite{beny2007,beny2007a}, this general formalism of quantum error correction has received limited attention until recent significantly renewed focus~\cite{dauphinais2024}. And in the studies of holographic code, realizing certain desired HQEC characteristics, including the conditions of genuine OAQEC together and complementary recovery for \emph{all} boundary bipartition, seems overly stringent for the current methods of model construction. Hence, the subsystem-code formalism has been the major framework to understand the subregion duality within the quantum-information interpretation.

Here, to add necessary ingredients, we describe the HQEC characteristics in the language of OAQEC. Since OAQEC is general and includes the subsystem-code formalism as a special case, our description is consistent to those in the literature. And when necessary we will particularly indicate that the necessity of genuine OAQEC condition for subregion duality.

There are three parts in our discussion on the HQEC characteristics, which might not be covered conventionally in the literature: We clarify important aspects of the condition of complementary recovery that were insufficiently addressed, and give two criteria for demonstrating the condition in the language of OAQEC. We show how to formally and generally capture the meaning of the entanglement wedge and the corresponding geometric presentation within the formalism of OAQEC. We also describe how to systematically capture the conditions of uberholography in an exact model of holographic code.




\subsection{Concepts and notations}\label{basic}
Here, we give a brief but comprehensive description of the mathematical concepts that are necessary for presenting the essential characteristics of HQEC within the formalism of OAQEC. For readers who are familiar with OAQEC, this part can be skipped in the first reading.

The basic perspective of HQEC is viewing local degrees of freedom in the bulk as logical qudits that are encoded in the boundary (physical) qudits. Here, a qudit $\mathbb{C}^d$ is simply a $d$-dimensional Hilbert space. The construction of a holographic code consists of two ingredients: (1) Discretization of the 1D boundary that specifies the geometry of physical qudits $\{\mathfrak{h}_1=\mathbb{C}^d,\mathfrak{h}_2=\mathbb{C}^d,\mathfrak{h}_3=\mathbb{C}^d,\ldots\}$, and discretization of the 2D hyperbolic bulk that specifies the geometry of the logical qudits $\{\mathfrak{e}_{\boldsymbol{1}}=\mathbb{C}^{d'},\mathfrak{e}_{\boldsymbol{2}}=\mathbb{C}^{d'},\mathfrak{e}_{\boldsymbol{3}}=\mathbb{C}^{d'},\ldots\}$. (2) An encoding isometry $\mathcal{E}\xrightarrow{R}\mathcal{H}$, i.e., a linear transformation which maps the states in the bulk Hilbert space $\mathcal{E}=\mathfrak{e}_{\boldsymbol{1}}\otimes\mathfrak{e}_{\boldsymbol{2}}\otimes\mathfrak{e}_{\boldsymbol{3}}\otimes\cdots=(\mathbb{C}^{d'})^{\otimes K}$ to the states in the boundary Hilbert space $\mathcal{H}=\mathfrak{h}_1\otimes\mathfrak{h}_2\otimes\mathfrak{h}_3\otimes\cdots=(\mathbb{C}^d)^{\otimes N}$, and satisfies $R^+R=\mathds{1}_{\mathcal{E}}$. Here, $\mathds{1}_{\mathcal{E}}$ is the identity operator on $\mathcal{E}$, and $R^+R=\mathds{1}_{\mathcal{E}}$ means that the linear map $R$ preserves the inner product, i.e. $\braket{R\psi}{R\psi'}_{\mathcal{H}}=\braket{\psi}{\psi'}_{\mathcal{E}}$.


As a convention of notation, we use $i,i',i_1,j,k,\ldots$ to label boundary qudits with specific locations, and use the bold $\boldsymbol{x},\boldsymbol{x}',\boldsymbol{x}_1,\ldots$ for bulk qudits. These notations can refer to specific qudits in the discrete geometry, or equivalently the corresponding sites of the discrete geometry on which the qudits live. We use $\ket{\alpha_i},\ket{\alpha'_i},\ket{\alpha''_i},\ldots\in\mathbb{C}^d$ to denote the state of a single local boundary qudit, and use $\ket{\boldsymbol{\beta}_{\boldsymbol{x}}},\ket{\boldsymbol{\beta}'_{\boldsymbol{x}}},\ket{\boldsymbol{\beta}''_{\boldsymbol{x}}},\ldots\in\mathbb{C}^{d'}$ to denote the state of a single local bulk qudit. In our notation, for a local qudit (either boundary or bulk), the labels $\alpha_i$ and $\boldsymbol{\beta}_{\boldsymbol{x}}$ are themselves the indices of local bases of states, while for the whole boundary or bulk system, we add a subscript as the indices, e.g., $\ket{\psi_m}$, $\ket{\widetilde{\varphi}_n}$ and $\ket{\boldsymbol{B}_{\boldsymbol{n}}}$.

\subsubsection{Reconstruction of logical operators}
The image of $R$ is the code subspace $\mathcal{H}_{\mathrm{code}}=P_{\mathrm{code}}\mathcal{H}\subset\mathcal{H}$ with the projection operator $P_{\mathrm{code}}=RR^+$ mapping states in $\mathcal{H}$ onto states in $\mathcal{H}_{\mathrm{code}}$. We use $\mathbf{L}(\mathcal{E})$, $\mathbf{L}(\mathcal{H})$ and $\mathbf{L}(\mathcal{H}_{\mathrm{code}})$ to denote the set of operators, which can be viewed as a vector space or an algebra~\footnote{With the operator summation and the multiplication by a scalar, $\mathbf{L}(\mathcal{H})$ can be viewed as a vector space; with the operator summation and the operator multiplication, $\mathbf{L}(\mathcal{H})$ can be viewed as a ring in which the identity operator plays the role of the identity element; the coexistence of the vector-space and ring structures defines $\mathbf{L}(\mathcal{H})$ as an algebra.}. In the following text, we use the notations $O,O',O_1,\ldots$ for general operators in $\mathbf{L}(\mathcal{H})$; use the notations $\widetilde{O}, \widetilde{O}',\widetilde{O}_1,\ldots$ for logical operators on $\mathcal{H}_{\mathrm{code}}$; and use the notations $\widetilde{\boldsymbol{O}}$ for bulk operators on $\mathcal{E}$.

It might be noticeable to the mathematically inclined readers that for simplicity, we identify the operators in $\mathbf{L}(\mathcal{H}_{\mathrm{code}})$ as operators $O\in\mathbf{L}(\mathcal{H})$ with $O=P_{\mathrm{code}}OP_{\mathrm{code}}$. The latter are simply those operators that act as the zero operator on the orthogonal complement $(\mathcal{H}_{\mathrm{code}})^{\perp}$ and leave $\mathcal{H}_{\mathrm{code}}$ invariant under the actions, which surely commute with $P_{\mathrm{code}}$. Within our arguments for quantum error correction and operator reconstruction, the difference between the two types plays no role, and it is more conventional~\cite{harlow2017} and convenient to work with the latter.




The central theme in HQEC is to investigate how bulk operators can be reconstructed, or recovered on various boundary subregion $A$ by studying the quantum-error-correction property of $\mathcal{H}_{\mathrm{code}}$ against the erasure of the complement boundary subregion $\overline{A}$. Here, a boundary subregion $A=\{i_1,i_2,\ldots\}$ can be understood as a subset of the boundary qudits, or equivalently the subset of the corresponding sites of the discrete boundary geometry. The bipartition $A\overline{A}$ of physical qudits specifies a decomposition $\mathcal{H}=\mathcal{H}_{A}\otimes\mathcal{H}_{\overline A}$, i.e., a tensor product of $\mathcal{H}_A=\otimes_{i\in A}\mathfrak{h}_i$ and $\mathcal{H}_{\overline A}=\otimes_{i\in \overline{A}}\mathfrak{h}_i$. 

For simplicity, we use the notations $\mathds{1}_A$ and $\mathds{1}_{\overline{A}}$ for the identity operators $\mathds{1}_{\mathcal{H}_A}$ and $\mathds{1}_{\mathcal{H}_{\overline{A}}}$ respectively. We use $O_A,O'_A,Q_A,\ldots$ to simplify the notation $O_A\otimes\mathds{1}_{\overline{A}},O'_A\otimes\mathds{1}_{\overline{A}},Q_A\otimes\mathds{1}_{\overline{A}},\ldots$, and use $O_{\overline{A}},O'_{\overline{A}},Q_{\overline{A}},\ldots$ in a similar way.

We say that a logical operator $\widetilde{O}$ can be reconstructed on subregion $A$ if and only if $\widetilde{O}=P_{\mathrm{code}}O_AP_{\mathrm{code}}$ for some $O_A$ with $[O_A,P_{\mathrm{code}}]=0$. Note that in some literature~\cite{harlow2017}, the reconstruction is expressed in another equivalent way: $\widetilde{O}\ket*{\tilde{\psi}}=O_A\ket*{\tilde{\psi}}$ and $\widetilde{O}^+\ket*{\tilde{\psi}}={O_A}^+\ket*{\tilde{\psi}}$ for any code state $\ket*{\tilde{\psi}}\in\mathcal{H}_{\mathrm{code}}$~\footnote{As a basic property of operators, a subspace $\mathcal{H}_{\mathrm{code}}$ is invariant under the action of both $O_A$ and ${O_A}^+$ if and only if $[O_A,P_{\mathrm{code}}]=0$.}.

\subsubsection{von Neumann algebra on $\mathcal{H}_{\mathrm{code}}$ and OAQEC}
Given a boundary bipartition $A\overline{A}$, the formalism of OAQEC describes whether a von Neumann algebra $\mathcal{M}$ of logical operators can be protected against the erasure errors on $\overline{A}$, which generalizes the subsystem code formalism and the conventional quantum erasure correction.

By a von Neumann algebra $\mathcal{M}$ on the finite-dimensional Hilbert space $\mathcal{H}_{\mathrm{code}}$, we mean a subalgebra of $\mathbf{L}(\mathcal{H}_{\mathrm{code}})$ (including the identity operator $\mathds{1}_{\mathcal{H}_{\mathrm{code}}}$) in which the adjoint operator $\widetilde{O}^+$ of each element $\widetilde{O}\in\mathcal{M}$ also belongs to $\mathcal{M}$. In the case of finite-dimensional Hilbert space, the von Neumann algebra generated by a given collection of operators $\{\mathds{1}_{\mathcal{H}_{\mathrm{code}}},O_1,O_2,\ldots\}$ (including the identity) simply consists of all the linear combinations of products of the form $O_2O^+_3\cdots$, i.e. of operators in the collection and their adjoints.

Given $\mathcal{M}$, we can define another von Neumann algebras $\mathcal{M}'$, the commutant of $\mathcal{M}$, i.e. the collection of all operators in $\mathbf{L}(\mathcal{H}_{\mathrm{code}})$ that commute with every operator in $\mathcal{M}$. We define the the center of $\mathcal{M}$ as $\mathrm{Z}(\mathcal{M})=\mathcal{M}\cap\mathcal{M}'$, i.e. the operators in $\mathcal{M}$ that commute with every operator in $\mathcal{M}$, which is also a von Neumann algebra. A fundamental result from the theory of von Neumann algebra is that $\mathcal{M}''=(\mathcal{M}')'=\mathcal{M}$, and hence $\mathrm{Z}(\mathcal{M})=\mathrm{Z}(\mathcal{M}')$. And an important property of the commutant is that $\mathcal{M}_1\subset\mathcal{M}_2$ implies $\mathcal{M}'_2\subset\mathcal{M}'_1$. When we say that $\mathcal{M}$ can be reconstructed on the boundary subregion $A$, we mean that every operator $\widetilde{O}\in\mathcal{M}$ can be reconstructed on $A$.

Based on the language of von Neumann algebra, we can establish an equivalence between the correctability against the erasure of $\overline{A}$ and the logical-operator reconstruction on $A$.
\begin{lemma}[See Ref.~\cite{harlow2017,pastawski2017}]\label{oaqec}
Consider a von Neumann algebra $\mathcal{M}$ on $\mathcal{H}_{\mathrm{code}}\subset\mathcal{H}$ and a decomposition $\mathcal{H}=\mathcal{H}_{A}\otimes\mathcal{H}_{\overline A}$. Then, the (erasure of) subregion $\overline{A}$ is correctable with respect to $\mathcal{M}$, if and only if for every operator $O_{\overline{A}}$ supported on $\overline{A}$ we have $P_{\mathrm{code}}O_{\overline{A}}P_{\mathrm{code}}\in\mathcal{M}'$; if and only if every logical operator in $\mathcal{M}$ can be reconstructed on subregion $A$.
\end{lemma}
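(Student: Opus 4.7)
The plan is to prove the two biconditionals by establishing the chain (iii) $\Rightarrow$ (ii) $\Leftrightarrow$ (i) $\Rightarrow$ (iii), where I label the three conditions as (i) correctability of $\overline{A}$ with respect to $\mathcal{M}$, (ii) $P_{\mathrm{code}} O_{\overline{A}} P_{\mathrm{code}} \in \mathcal{M}'$ for every $O_{\overline{A}}$, and (iii) every $\widetilde{O}\in\mathcal{M}$ is reconstructible on $A$. The easiest arrow is (iii) $\Rightarrow$ (ii): given $\widetilde{O} = P_{\mathrm{code}}O_A P_{\mathrm{code}}$ with $[O_A, P_{\mathrm{code}}]=0$, I would use that $[O_A, O_{\overline{A}}]=0$ by disjoint supports and push $P_{\mathrm{code}}$ through $O_A$ on both sides to verify $[\widetilde{O}, P_{\mathrm{code}}O_{\overline{A}}P_{\mathrm{code}}]=0$; since this holds for every $\widetilde{O}\in\mathcal{M}$, the projected $\overline{A}$-operator lies in $\mathcal{M}'$.

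For (ii) $\Leftrightarrow$ (i), I would model the erasure of $\overline{A}$ as the partial-trace channel on the boundary, pick a basis $\{\ket{\mu}_{\overline{A}}\}$ so that its Kraus operators take the form $K_\mu = \mathds{1}_A\otimes\bra{\mu}_{\overline{A}}$, and invoke the Bény--Kempf--Kribs OAQEC recovery criterion: $\mathcal{M}$ is correctable under this channel precisely when $P_{\mathrm{code}}K_\mu^+ K_\nu P_{\mathrm{code}}\in\mathcal{M}'$ for all $\mu,\nu$. Because $\{K_\mu^+ K_\nu\}=\{\mathds{1}_A\otimes\ket{\mu}\!\bra{\nu}_{\overline{A}}\}$ is a linear basis for operators supported on $\overline{A}$, this criterion is linearly equivalent to (ii).

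The main work is (i) $\Rightarrow$ (iii): from the existence of a recovery channel $\mathcal{R}_A$ for the erasure of $\overline{A}$, construct an explicit $O_A$ that commutes with $P_{\mathrm{code}}$ and represents each $\widetilde{O}\in\mathcal{M}$ on the code. My approach is to define $O_A$ through the Heisenberg-picture dual $\mathcal{R}_A^\dagger(\widetilde{O})$, which by construction is an $A$-supported operator whose action on any code state reproduces $\widetilde{O}\ket{\tilde\psi}$. An alternative, more algebraic route is to introduce the $*$-algebra $\mathcal{B}_A$ of $P_{\mathrm{code}}$-commuting projected $A$-operators and the analogous $\mathcal{B}_{\overline{A}}$, and to prove a finite-dimensional commutant identity $\mathcal{B}_A = (\mathcal{B}_{\overline{A}})'$ inside $\mathbf{L}(\mathcal{H}_{\mathrm{code}})$ by diagonalizing the reduced states of code vectors on $A$ and $\overline{A}$; combining this with (ii), which places the generators of $\mathcal{B}_{\overline{A}}$ inside $\mathcal{M}'$, then yields $\mathcal{M}\subseteq (\mathcal{B}_{\overline{A}})'=\mathcal{B}_A$, which is exactly (iii).

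The main obstacle I expect is ensuring the reconstructed $O_A$ genuinely commutes with $P_{\mathrm{code}}$ rather than merely satisfying $\widetilde{O} = P_{\mathrm{code}} O_A P_{\mathrm{code}}$ up to off-diagonal corrections between $\mathcal{H}_{\mathrm{code}}$ and its complement, because without this commutation one loses the Heisenberg-picture multiplicativity implicit in identifying a subalgebra of logical operators. My intended remedy is to pass through the canonical block decomposition of $\mathcal{H}_{\mathrm{code}}$ induced by the center $\mathrm{Z}(\mathcal{M})$, which reduces the global claim to a direct sum of standard subsystem-code reconstructions on each block, where the commutation with the block projections, and hence with $P_{\mathrm{code}}$, follows automatically from the Schmidt structure of the code states with respect to the bipartition $A\overline{A}$.
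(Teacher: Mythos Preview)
The paper does not supply its own proof of this lemma: it is stated with a citation to Harlow and to Pastawski--Preskill, and immediately afterwards the text remarks that the lemma ``can be viewed as a definition of the OAQEC formalism of quantum error correction for erasure errors.'' So there is no in-paper argument to compare against; your primary route---(iii) $\Rightarrow$ (ii) by disjoint supports plus $[O_A,P_{\mathrm{code}}]=0$, (ii) $\Leftrightarrow$ (i) via the B\'eny--Kribs--Kempf Kraus criterion applied to the partial-trace channel, and (i) $\Rightarrow$ (iii) via the Heisenberg dual of a recovery map---is exactly the approach of the cited references and is sound.

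There is, however, a genuine gap in your \emph{alternative} algebraic route. You propose to prove the commutant identity $\mathcal{B}_A=(\mathcal{B}_{\overline{A}})'$ inside $\mathbf{L}(\mathcal{H}_{\mathrm{code}})$, where $\mathcal{B}_A$ and $\mathcal{B}_{\overline{A}}$ are the algebras of $P_{\mathrm{code}}$-commuting projected operators on $A$ and $\overline{A}$. But that identity is precisely the condition of \emph{complementary recovery} (the paper's Proposition~\ref{cr1}), and it is \emph{not} true for a generic code subspace; the paper devotes Propositions~\ref{cr1} and~\ref{cr2} to giving criteria for when it holds, and later invests Section~\ref{crsec} in verifying it for the specific model. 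What \emph{is} always true (and what suffices for (ii) $\Rightarrow$ (iii)) is the weaker statement that $\mathcal{B}_A'$ equals the von Neumann algebra generated by \emph{all} $P_{\mathrm{code}}O_{\overline{A}}P_{\mathrm{code}}$, including those with $[O_{\overline{A}},P_{\mathrm{code}}]\neq 0$; this is the paper's Lemma~\ref{mvn}. With that correction your algebraic argument goes through: (ii) places that generated algebra inside $\mathcal{M}'$, hence $\mathcal{M}\subseteq\mathcal{B}_A$, which is (iii). So keep your primary route as written, and if you retain the alternative, replace $(\mathcal{B}_{\overline{A}})'$ by the commutant of the algebra generated by all projected $\overline{A}$-operators.
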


This lemma can be viewed as a definition of the OAQEC formalism of quantum error correction for erasure errors~\cite{pastawski2017}. It borrows the term ``reconstruction'' from the context of the AdS/CFT correspondence, and underlies the general quantum-information interpretation of holography~\cite{almheiri2015,harlow2017}. Within this generality, the subsystem-code formalism for $\mathcal{M}$ is the special case when $\mathrm{Z}(\mathcal{M})=\mathbb{C}\mathds{1}_{\mathcal{H}_{\mathrm{code}}}$, and hence $\mathcal{M}$ is called a factor. Here $\mathbb{C}\mathds{1}_{\mathcal{H}_{\mathrm{code}}}$ denotes the trivial von Neumann algebra consisting of constant operators. If we further have trivial $\mathcal{M}'$, i.e., $\mathcal{M}=\mathbf{L}(\mathcal{H}_{\mathrm{code}})$, then we will be in the conventional sense of quantum error correction of erasure errors. Note that by the genuine OAQEC formalism, we require that $\mathrm{Z}(\mathcal{M})\ne\mathbb{C}\mathds{1}_{\mathcal{H}_{\mathrm{code}}}$, i.e., the center is nontrivial. Sometimes we omit the subscript of $\mathds{1}_{\mathcal{H}_{\mathrm{code}}}$ as simply $\mathds{1}$ when the corresponding Hilbert space is clear in the context.

\subsubsection{Tensor product of local bulk von Neumann algebras}\label{vnaob}
In our arguments, we use the terms ``logical operator'' and ``bulk operator'' interchangeably for both operators on $\mathcal{H}_{\mathrm{code}}$ and operators on $\mathcal{E}$, which can be identified as one another. That is because when we view $R$ as the unitary map between $\mathcal{E}$ and $\mathcal{H}_{\mathrm{code}}$, $\mathbf{L}(\mathcal{E})\xrightarrow{R\boldsymbol{\cdot} R^+}\mathbf{L}(\mathcal{H}_{\mathrm{code}})$ and $\mathbf{L}(\mathcal{H}_{\mathrm{code}})\xrightarrow{R^+\boldsymbol{\cdot} R}\mathbf{L}(\mathcal{E})$ are isomorphisms of operator algebras. In other words, the algebraic properties of $\boldsymbol{\mathcal{M}}=R^+\mathcal{M}R$ on $\mathcal{E}$ are the same as those of $\mathcal{M}$ on $\mathcal{H}_{\mathrm{code}}$, and reversely, the algebraic properties of $\mathcal{M}=R\boldsymbol{\mathcal{M}}R^+$ on $\mathcal{H}_{\mathrm{code}}$ is the same as those of $\boldsymbol{\mathcal{M}}$ on $\mathcal{E}$. For example, we have $\mathcal{M}'=(R\boldsymbol{\mathcal{M}}R^+)'=R\boldsymbol{\mathcal{M}}'R^+$ and $\mathrm{Z}(\mathcal{M})=R\mathrm{Z}(\boldsymbol{\mathcal{M}})R^+$.

To be consistent with the convention in the literature and to also follow the previous definition, when discussing the reconstruction of an operator $\widetilde{\boldsymbol{O}}$ on $\mathcal{E}$, we actually work on operators $\widetilde{O}=R\widetilde{\boldsymbol{O}}R^+$ on $\mathcal{H}_{\mathrm{code}}$. And when discussing the geometric connotation of a von Neumann algebra $\mathcal{M}$ on $\mathcal{H}_{\mathrm{code}}$ in the study of subregion duality, i.e., its structure described in terms of local bulk qudit operators, we study the von Neumann algebra $\boldsymbol{\mathcal{M}}=R^+\mathcal{M}R$ on $\mathcal{E}$ and show how it can be represented by tensor product of local bulk operator algebras. In the latter case, with respect to the tensor product structure $\mathcal{E}=\cdots\otimes\mathfrak{e}_{\boldsymbol{x}}\otimes\mathfrak{e}_{\boldsymbol{x}'}\otimes\mathfrak{e}_{\boldsymbol{x}''}\otimes\cdots$, a tensor product of local bulk von Neumann algebras $\cdots\otimes\boldsymbol{\mathcal{M}}_a(\boldsymbol{x})\otimes\boldsymbol{\mathcal{M}}_a(\boldsymbol{x}')\otimes\boldsymbol{\mathcal{M}}_a(\boldsymbol{x}'')\otimes\cdots$ is a von Neumann algebra on $\mathcal{E}$ which consists of all linear combinations of operators of the form $\cdots\otimes\widetilde{\boldsymbol{O}}_{\boldsymbol{x}}\otimes\widetilde{\boldsymbol{O}}_{\boldsymbol{x}'}\otimes\widetilde{\boldsymbol{O}}_{\boldsymbol{x}''}\otimes\cdots$ with $\widetilde{\boldsymbol{O}}_{\boldsymbol{x}}\in\boldsymbol{\mathcal{M}}_a(\boldsymbol{x})\subset\mathbf{L}(\mathfrak{e}_{\boldsymbol{x}}),\widetilde{\boldsymbol{O}}_{\boldsymbol{x}'}\in\boldsymbol{\mathcal{M}}_a(\boldsymbol{x}')\subset\mathbf{L}(\mathfrak{e}_{\boldsymbol{x}'}),\widetilde{\boldsymbol{O}}_{\boldsymbol{x}''}\in\boldsymbol{\mathcal{M}}_a(\boldsymbol{x}'')\subset\mathbf{L}(\mathfrak{e}_{\boldsymbol{x}''}),\dots$.

Note that when it is necessary to distinguish a single-bulk-qudit operator $\widetilde{\boldsymbol{O}}_{\boldsymbol{x}}\in\mathbf{L}(\mathfrak{e}_{\boldsymbol{x}})$ and the corresponding bulk operator $\cdots\otimes\mathds{1}_{\mathfrak{e}_{\boldsymbol{x}'}}\otimes\widetilde{\boldsymbol{O}}_{\boldsymbol{x}}\otimes\mathds{1}_{\mathfrak{e}_{\boldsymbol{x}''}}\otimes\cdots\in\mathbf{L}(\mathcal{E})$ with effect only on the qudits $\boldsymbol{x}$, we use the abbreviated notation $\cdots\otimes\widetilde{\boldsymbol{O}}_{\boldsymbol{x}}\otimes\cdots$ for the latter.

In the discussion on subregion duality, we use the notation $\mathrm{W}[A]$ for certain subset of bulk qudits associated to a boundary subregion $A$. For simplicity, we sometimes use the notation $\otimes_{\boldsymbol{x}\in\mathrm{W}[A]}\mathbf{L}(\mathfrak{e}_{\boldsymbol{x}})$ for the complete form $(\otimes_{\boldsymbol{x}\in\mathrm{W}[A]}\mathbf{L}(\mathfrak{e}_{\boldsymbol{x}}))\otimes(\otimes_{\boldsymbol{x}\notin\mathrm{W}[A]}\mathbb{C}\mathds{1}_{\mathfrak{e}_{\boldsymbol{x}}})$ of all bulk operators supported on bulk qudits in $\mathrm{W}[A]$. In the following texts, we always adopt this type of convention to omit the identity operators or the trivial parts in a tensor product when the meaning is clear in the context. 

Note that we say that a bulk qudit $\boldsymbol{x}$ lies outside the support of a bulk operator $\widetilde{\boldsymbol{O}}$ if the operator acts trivially on $\boldsymbol{x}$, i.e., the operator can be written in a product form with $\mathds{1}_{\mathfrak{e}_{\boldsymbol{x}}}$. Then, bulk qudits not of this type form the support of $\widetilde{\boldsymbol{O}}$. We say a bulk qudit $\boldsymbol{x}$ lies outside the support of $\boldsymbol{\mathcal{M}}$, if every operator in $\boldsymbol{\mathcal{M}}$ acts trivially on the qudit. Then, the support of $\boldsymbol{\mathcal{M}}$ can be specified.



\subsection{HQEC characteristics}\label{hqecc}

Now, we consider an encoding isometry $(\mathbb{C}^{d'})^{\otimes K}=\mathcal{E}\xrightarrow{R}\mathcal{H}=(\mathbb{C}^d)^{\otimes N}$, and view the physical qudits and the logical qudits as arranged in the discrete boundary geometry and the bulk geometry respectively (see Fig.~\ref{fig1}). The expected characteristics of a holographic code mainly describe how the algebraic properties of bulk operator reconstruction or quantum error correction manifest in geometric properties of the bulk~\cite{almheiri2015,harlow2017,pastawski2017,cree2021,cao2021}, which formalize the intuitive picture that the bulk is emergent from the boundary entanglement. In the following brief discussion of the HQEC characteristics, a main part will be devoted to the subregion duality in the genuine OAQEC formalism, since it has certain distinct meaning from the commonly studied cases in the subsystem-code formalism, and is insufficiently addressed in the literature.

\subsubsection{A word on the discrete bulk geometry}
The discretization of the bulk geometry is conventionally regarded as a regular tessellation of the Poincar\'e disk in which each bulk qudit lives on a tile~\cite{pastawski2015}. There are also non-regular tessellations~\cite{jahn2021,cao2021} in which the bulk qudits only live on a part of the tiles and are hence more sparse. In either way, even in the large-system-size limit, the discrete description of the bulk cannot fully capture the continuous geometry in the original AdS/CFT correspondence, e.g., properties of the symmetries. In addressing this gap, an interesting proposal is to alternatively consider the $p$-adic AdS/CFT~\cite{gubser2017,heydeman2018a,bhattacharyya2018,hung2019} which replaces the reals $\mathbb{R}$ for the boundary by the $p$-adic numbers $\mathbb{Q}_p$~\footnote{$\mathbb{Q}_p$ is simply an alternative geometric completion of the rationals $\mathbb{Q}$ with respect to the alternatively defined norm associated with the prime $p$.} and hence possesses intrinsically discrete bulk geometry. In this case, the bulk could be viewed as either an abstract hyperbolic tree graph without any embedding on the AdS continuous space, or an tessellation of the Poincar\'e disk with ideal polygons (with vertex points live on the asymptotic boundary). It remains a challenge to study the $p$-adic AdS/CFT in a concrete holographic code.

Considering the various types of bulk geometry discretization and the gap in capturing the symmetry, we may view the structure of HQEC as more focused on how the boundary entanglement can qualitatively guarantee the expected bulk reconstruction. Hence, it is inclusive to only require the basic features for reading the data of the bulk geometry from the boundary entanglement. That are simply the necessary characteristics needed to describe the following HQEC characteristics on the bulk reconstruction, i.e., a hyperbolic tessellation which can represent the radial direction and properties relevant to the subregion duality. In the tensor network paradigm, the necessary characteristics is equivalent to the dual graph of the tessellation, e.g., the bulk network for the tensor contractions.  

\subsubsection{Reconstruction of bulk local operators}
Foundationally, HQEC interprets the counter-intuitive properties from the AdS/Rindler construction as inherent quantum-error-correction characteristics of $\mathcal{H}_{\mathrm{code}}\subset\mathcal{H}$. For instance, it interprets the overlapping causal wedges for the recovery of local bulk operators as the multiple reconstruction of local bulk operators on different boundary subregions. Here, the local bulk operator algebra on a bulk qudit (located at) $\boldsymbol{x}$, i.e. $\mathbf{L}(\mathfrak{e}_{\boldsymbol{x}})$, can be identified with the von Neumann algebra $\mathcal{M}(\boldsymbol{x})=R(\cdots\otimes\mathbb{C}\mathds{1}_{\mathfrak{e}_{\boldsymbol{x}'}}\otimes\mathbf{L}(\mathfrak{e}_{\boldsymbol{x}})\otimes\mathbb{C}\mathds{1}_{\mathfrak{e}_{\boldsymbol{x}''}}\otimes\cdots)R^+$ on $\mathcal{H}_{\mathrm{code}}$. Note that the von Neumann algebra $\cdots\otimes\mathbb{C}\mathds{1}_{\mathfrak{e}_{\boldsymbol{x}'}}\otimes\mathbf{L}(\mathfrak{e}_{\boldsymbol{x}})\otimes\mathbb{C}\mathds{1}_{\mathfrak{e}_{\boldsymbol{x}''}}\otimes\cdots$ on $\mathcal{E}$ consists of all the bulk operators $\cdots\otimes\mathds{1}_{\mathfrak{e}_{\boldsymbol{x}'}}\otimes\widetilde{\boldsymbol{O}}_{\boldsymbol{x}}\otimes\mathds{1}_{\mathfrak{e}_{\boldsymbol{x}''}}\otimes\cdots$ which act trivially on the bulk qudits other than $\boldsymbol{x}$, and is isomorphic to $\mathbf{L}(\mathbb{C}^{d'})$.

As a primary characteristic, a holographic code needs to reconcile the contradiction between the radial commutativity in AdS/CFT and the basic property of local quantum field theory. That is, nontrivial bulk operators can commute with all local boundary operators just as logical operators are protected against (and hence commute with) all local erasure errors of certain size~\cite{almheiri2015,harlow2017}. Furthermore, certain measures of how well the bulk local information is protected should be consistent with the bulk radial direction: the closer to the center of bulk, the better the information is protected. Such measure can be captured by the \emph{connected distance} $\mathrm{d_c}({\boldsymbol{x}})$, i.e. the smallest size of connected boundary subregion whose erasure is not correctable with respect to $\mathcal{M}(\boldsymbol{x})$~\cite{pastawski2017,cree2021}. Formally the measure is defined as
\begin{align*}
\begin{split}
&\mathrm{d_c}({\boldsymbol{x}})=\min_{\overline{A}}\{|\overline{A}|: \mathcal{M}(\boldsymbol{x})~\mathrm{cannot~be~reconstructed~on}~A\}\\
&=\min_{\overline{A}}\{|\overline{A}|: \overline{A}~\mathrm{is~not~correctable~with~respect~to}~\mathcal{M}(\boldsymbol{x})\}
\end{split}
\end{align*}
where $\overline{A}$ runs through only \emph{connected} boundary subregions and $|\overline{A}|$ is the size of $\overline{A}$. The connectedness respects the boundary locality and geometry. We can also define the connected price $\mathrm{p_c}({\boldsymbol{x}})$ as the smallest \emph{connected} boundary subregion that recover all the local bulk operator on the bulk qudit $\boldsymbol{x}$. That is 
\begin{equation*}
\mathrm{p_c}({\boldsymbol{x}})=\min_{A}\{|A|: \mathcal{M}(\boldsymbol{x})~\mathrm{can~be~reconstructed~on}~A\}.
\end{equation*}

Then, we can specify the following characteristics.



\paragraph*{\textbf{Characteristic 1}}\label{chac1} To demonstrate the radial commutativity, $\mathrm{d_c}({\boldsymbol{x}})$ should scales linearly on the total number $N$ of boundary qudits. This way, by definition, the bulk qudit $\boldsymbol{x}$ can be protected against arbitrary boundary local erasure errors with size smaller than $\mathrm{d_c}({\boldsymbol{x}})$. And according to Lemma~\ref{oaqec}, bulk operators on $\boldsymbol{x}$ commute with any local boundary operator with size smaller than $\mathrm{d_c}({\boldsymbol{x}})$. It is also required that moving from the boundary to the center of the bulk, $\mathrm{d_c}({\boldsymbol{x}})$ increase along the radial direction.

Note that here we emphasize the importance of the connected code distance $\mathrm{d_c}({\boldsymbol{x}})$ over that of the ``standard'' code distance $\mathrm{d}({\boldsymbol{x}})$. The emphasis is due to the following three reasons: (1) $\mathrm{d_c}({\boldsymbol{x}})$ respects the locality and geometry of the boundary qudits, and is hence relevant to the essence of radial commutativity~\cite{almheiri2015,cree2021}. (2) $\mathrm{d}({\boldsymbol{x}})$ is not guaranteed as increasing with system size in many models of holographic code, but this seems not to affect the demonstration of important bulk reconstruction properties. For instance, in the premier toy model for studying HQEC, the HaPPY pentagon code, the code distance $\mathrm{d}({\boldsymbol{x}})$ for the central bulk qubit is a constant, i.e., two pairs of apart boundary qubits can recover certain nontrivial bulk operators on the central bulk qubit~\cite{pastawski2015}. (3) With different geometry or connectedness, the size of boundary recovery for bulk qudits can be quite different~\cite{pastawski2017}. Indeed, according to uberholography, we have $\lim_{N\to\infty}\frac{\mathrm{d}({\boldsymbol{x}})}{\mathrm{d_c}({\boldsymbol{x}})}=0$, which means that the distance is negligible relative to the connected distance in large system size. This will be further discussed later.

\subsubsection{Subregion duality: complementary recovery and OAQEC}
A more general perspective on the bulk reconstruction considers what bulk operators can be reconstructed on a given boundary subregion $A$ or its complement $\overline{A}$. In AdS/CFT (with semi-classical bulk gravity), this is captured by subregion duality for a boundary bipartition $A\overline{A}$, i.e., the assignment of two entanglement wedges, as two bulk subregions, to $A$ and $\overline{A}$ respectively. The entanglement wedges are expected to underlie (as the support of) the bulk operators that can be reconstructed on $A$ and $\overline{A}$ respectively. And in the continuum limit, we can define the entanglement wedges as the bulk domain of dependence in an achronal surface that is separated by a minimum-area co-dimension-one surface and enclosed by $A$ ($\overline{A}$) together with the minimal surface~\cite{almheiri2015,harlow2017,pastawski2017,cree2021}.

In a holographic code (finite system), the characteristics of subregion region  duality include both algebraic and geometric aspects, which are dependent on the conditions of complementary recovery~\footnote{In this work we only consider the state-independent or exact complementary recovery~\cite{cao2021}.} and the formalism of quantum error correction. Generally, within the formalism of OAQEC (including certain sub-leading quantum corrections to the semi-classical bulk gravity), the formal description of subregion duality relies on the logical subalgebras in stead of merely the bulk degrees of freedom. In this part, we describe the necessary algebraic aspects for the OAQEC formalism, and then in the next part we describe how these algebraic aspects manifest in the bulk geometry.

We firstly clarify the condition of complementary recovery for $\mathcal{H}_{\mathrm{code}}$ as defined in Ref.~\cite{harlow2017}: the simultaneous recoveries of certain von Neuamnn algebra $\mathcal{M}$ (of logical operators in $\mathcal{H}_{\mathrm{code}}$) and its commutant $\mathcal{M}'$ on boundary subregions $A$ and $\overline{A}$ respectively. Indeed, the meaning of this condition has not yet been sufficiently elucidated. For example, it is not clear from the original definition which $\mathcal{M}$ ($\mathcal{M}'$) is to be reconstructed.

In the established models, since the greedy wedge based on the tensor-network structure can play a substitute role of the entanglement wedge for certain cases of boundary bipartition, thorough comprehension on the condition of complementary recovery might be avoided. However, if only relying on the greedy wedge technics, for many cases of boundary bipartition, even connected cases, the complementarity is ambiguous~\cite{jahn2021}.

Hence, to study the entanglement wedge without tensor network, we should demonstrate the subregion duality in a more general and rigorous way. In the following, we clarify the uniqueness of $\mathcal{M}$ ($\mathcal{M}'$) to be reconstructed, and then develop criteria for checking the condition of complementary recovery.

We define $\mathcal{M}_A\subset\mathbf{L}(\mathcal{H}_{\mathrm{code}})$ to be the von Neumann algebra consisting of all logical operators that can be reconstructed on $A$, and define $\mathcal{M}_{\overline{A}}\subset\mathbf{L}(\mathcal{H}_{\mathrm{code}})$ in a similar way. That is
\begin{align}\label{mamba}
\begin{split}
&\mathcal{M}_A=\{P_{\mathrm{code}}O_{A}P_{\mathrm{code}}:[O_A,P_{\mathrm{code}}]=0\},\\
&\mathcal{M}_{\overline{A}}=\{P_{\mathrm{code}}O_{\overline{A}}P_{\mathrm{code}}:[O_{\overline{A}},P_{\mathrm{code}}]=0\}.
\end{split}
\end{align} 
Note that the commutativity with $P_{\mathrm{code}}$ is necessary, without which we cannot guarantee that the two defined are von Neumann algebras.

We can prove the following criterion which states that $\mathcal{M}_A$ and $\mathcal{M}'_A$ are the only von Neumann algebras to satisfy the condition of complementary recovery. 
\begin{proposition}[Criterion 1 of complementary recovery]\label{cr1}
For a given decomposition $\mathcal{H}=\mathcal{H}_{A}\otimes\mathcal{H}_{\overline{A}}$, the code $\mathcal{H}_{\mathrm{code}}$ exhibits complementary recovery if and only if $\mathcal{M}_A$ and $\mathcal{M}'_A$ can be reconstructed on $A$ and on $\overline{A}$ respectively, or equivalently, $\mathcal{M}'_A=\mathcal{M}_{\overline{A}}$ ($\mathcal{M}'_{\overline{A}}=\mathcal{M}_A$).
\end{proposition}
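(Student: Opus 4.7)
The plan is to deduce everything from Lemma~\ref{oaqec} combined with double-commutant reasoning, exploiting the tautological fact that $\mathcal{M}_A$ and $\mathcal{M}_{\overline{A}}$ are reconstructible on their respective subregions by construction. First I would record the routine verifications that $\mathcal{M}_A$ and $\mathcal{M}_{\overline{A}}$ defined in Eq.~\eqref{mamba} are genuine von Neumann algebras on $\mathcal{H}_{\mathrm{code}}$ (with $P_{\mathrm{code}}$ playing the role of the identity element, and closure under products relying essentially on $[O_A,P_{\mathrm{code}}]=0$).

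The central step is to apply Lemma~\ref{oaqec} to the algebra $\mathcal{M}_A$, which is reconstructible on $A$ by definition. The lemma yields $P_{\mathrm{code}}O_{\overline{A}}P_{\mathrm{code}}\in\mathcal{M}_A'$ for \emph{every} $O_{\overline{A}}$ supported on $\overline{A}$; restricting to those $O_{\overline{A}}$ that commute with $P_{\mathrm{code}}$ extracts the unconditional inclusion $\mathcal{M}_{\overline{A}}\subset\mathcal{M}_A'$. The symmetric argument (or equivalently taking commutants and using $\mathcal{M}''=\mathcal{M}$) gives $\mathcal{M}_A\subset\mathcal{M}_{\overline{A}}'$. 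These two inclusions hold for any code and any bipartition, independently of whether complementary recovery is present.

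Armed with them, the equivalence in the proposition falls out. Reconstructibility of $\mathcal{M}_A'$ on $\overline{A}$ means, by the definition of $\mathcal{M}_{\overline{A}}$, exactly $\mathcal{M}_A'\subset\mathcal{M}_{\overline{A}}$; combined with the reverse inclusion just established, this is the equality $\mathcal{M}_A'=\mathcal{M}_{\overline{A}}$ (and equivalently $\mathcal{M}_{\overline{A}}'=\mathcal{M}_A$ by the double-commutant theorem). For the equivalence with complementary recovery in the sense of Ref.~\cite{harlow2017}, the easy direction picks the witness $\mathcal{M}=\mathcal{M}_A$, whose commutant $\mathcal{M}_A'=\mathcal{M}_{\overline{A}}$ is reconstructible on $\overline{A}$ by construction. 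For the converse, given any witness $\mathcal{M}$ with $\mathcal{M}$ recoverable on $A$ and $\mathcal{M}'$ on $\overline{A}$, the definitions give $\mathcal{M}\subset\mathcal{M}_A$ and $\mathcal{M}'\subset\mathcal{M}_{\overline{A}}$, while Lemma~\ref{oaqec} applied to $\mathcal{M}$ and to $\mathcal{M}'$ yields the opposite inclusions $\mathcal{M}_{\overline{A}}\subset\mathcal{M}'$ and $\mathcal{M}_A\subset\mathcal{M}''=\mathcal{M}$. Equalities $\mathcal{M}=\mathcal{M}_A$ and $\mathcal{M}'=\mathcal{M}_{\overline{A}}$ follow, in particular establishing $\mathcal{M}_A'=\mathcal{M}_{\overline{A}}$; this incidentally shows that the witness algebra is unique, which is the substantive content alluded to in the paragraph preceding the proposition.

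The main subtle point to watch is the asymmetry between the unrestricted quantifier ``every $O_{\overline{A}}$'' appearing in Lemma~\ref{oaqec} and the restricted quantifier ``$O_{\overline{A}}$ with $[O_{\overline{A}},P_{\mathrm{code}}]=0$'' appearing in the definition of $\mathcal{M}_{\overline{A}}$. This mismatch permits only the one-sided inclusion $\mathcal{M}_{\overline{A}}\subset\mathcal{M}_A'$ to be extracted from the lemma alone, so the reverse inclusion must be supplied by the complementary-recovery hypothesis itself; no attempt to close the gap by algebraic manipulation on arbitrary $O_{\overline{A}}$ will work in general. Aside from being careful about this asymmetry, the argument is a direct bookkeeping exercise in von Neumann algebras.
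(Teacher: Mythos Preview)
Your proposal is correct and follows essentially the same route as the paper: both arguments derive the unconditional inclusion $\mathcal{M}_{\overline{A}}\subset\mathcal{M}_A'$ from Lemma~\ref{oaqec}, then use the witness algebra $\mathcal{M}$ from the complementary-recovery hypothesis to close the gap and force $\mathcal{M}=\mathcal{M}_A$, $\mathcal{M}'=\mathcal{M}_{\overline{A}}$. The only cosmetic difference is that the paper obtains $\mathcal{M}_A'\subset\mathcal{M}'$ via commutant monotonicity from $\mathcal{M}\subset\mathcal{M}_A$, whereas you obtain the equivalent inclusions $\mathcal{M}_{\overline{A}}\subset\mathcal{M}'$ and $\mathcal{M}_A\subset\mathcal{M}$ by a second direct application of Lemma~\ref{oaqec} to the witness; both are one-line consequences of the same lemma.
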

Note that the last two equations are equivalent since $\mathcal{M}''_A=\mathcal{M}_A$. Though this proposition is not explicitly given in the literature about HQEC, it should be an elementary property of von Neumann algebra already implicitly included in related work. For clarity and completeness, we still give our proof for this proposition in Appendix~\ref{crcr}.

What might be really useful in studying concrete model of holographic code, especially when the structure of boundary code states is known, might be the following criterion. It is indeed what we use later to the illustrative model.
\begin{proposition}[Criterion 2 of complementary recovery]\label{cr2}
For a given decomposition $\mathcal{H}=\mathcal{H}_{A}\otimes\mathcal{H}_{\overline{A}}$, the code $\mathcal{H}_{\mathrm{code}}$ exhibits complementary recovery if and only if every $P_{\mathrm{code}}O_{\overline{A}}P_{\mathrm{code}}$, with $O_{\overline{A}}$ not necessarily commuting with $P_{\mathrm{code}}$, equals $P_{\mathrm{code}}Q_{\overline{A}}P_{\mathrm{code}}$ for some $Q_{\overline{A}}$ commuting with $P_{\mathrm{code}}$.
\end{proposition}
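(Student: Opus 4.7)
The plan is to reduce both implications to Lemma~\ref{oaqec} together with Proposition~\ref{cr1}. Let me introduce the shorthand $\mathcal{N}_{\overline{A}}=\{P_{\mathrm{code}}O_{\overline{A}}P_{\mathrm{code}}:O_{\overline{A}}\in\mathbf{L}(\mathcal{H})\}$ for the compressed set of all $\overline{A}$-supported operators, \emph{without} any commutativity constraint on $O_{\overline{A}}$. The inclusion $\mathcal{M}_{\overline{A}}\subseteq\mathcal{N}_{\overline{A}}$ is trivial from the definitions in Eq.~\ref{mamba}, so the content of Criterion~2 is exactly the reverse inclusion $\mathcal{N}_{\overline{A}}\subseteq\mathcal{M}_{\overline{A}}$.

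For the forward direction I would assume complementary recovery, which by Proposition~\ref{cr1} reads $\mathcal{M}'_A=\mathcal{M}_{\overline{A}}$. Since $\mathcal{M}_A$ is reconstructible on $A$ by definition, $\overline{A}$ is correctable with respect to $\mathcal{M}_A$, so Lemma~\ref{oaqec} yields $\mathcal{N}_{\overline{A}}\subseteq\mathcal{M}'_A=\mathcal{M}_{\overline{A}}$. This is precisely the stated hypothesis: every $P_{\mathrm{code}}O_{\overline{A}}P_{\mathrm{code}}$ already lies in $\mathcal{M}_{\overline{A}}$, and hence can be written as $P_{\mathrm{code}}Q_{\overline{A}}P_{\mathrm{code}}$ for some $Q_{\overline{A}}$ commuting with $P_{\mathrm{code}}$.

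The reverse direction is the more delicate half. The naive move of applying Lemma~\ref{oaqec} with $\mathcal{M}=\mathcal{M}_{\overline{A}}$ only regurgitates the always-true chain $\mathcal{M}_{\overline{A}}\subseteq\mathcal{N}_{\overline{A}}\subseteq\mathcal{M}'_A$ and does not close. The key trick is to feed the commutant $\mathcal{M}'_{\overline{A}}$ into Lemma~\ref{oaqec} instead: under the hypothesis $\mathcal{N}_{\overline{A}}\subseteq\mathcal{M}_{\overline{A}}=(\mathcal{M}'_{\overline{A}})'$, the lemma certifies that $\overline{A}$ is correctable with respect to $\mathcal{M}'_{\overline{A}}$, which is equivalent to $\mathcal{M}'_{\overline{A}}\subseteq\mathcal{M}_A$. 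The opposite inclusion $\mathcal{M}_A\subseteq\mathcal{M}'_{\overline{A}}$ is automatic, e.g., by applying Lemma~\ref{oaqec} once more to the reconstruction of $\mathcal{M}_{\overline{A}}$ on $\overline{A}$ (which gives $\mathcal{N}_A\subseteq\mathcal{M}'_{\overline{A}}$) and noting $\mathcal{M}_A\subseteq\mathcal{N}_A$. Combining the two inclusions yields $\mathcal{M}_A=\mathcal{M}'_{\overline{A}}$, and taking commutants gives $\mathcal{M}'_A=\mathcal{M}_{\overline{A}}$, which by Proposition~\ref{cr1} is complementary recovery.

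The main obstacle I anticipate is the conceptual one of spotting that the right input to Lemma~\ref{oaqec} in the reverse direction is $\mathcal{M}'_{\overline{A}}$ rather than $\mathcal{M}_{\overline{A}}$ itself; this is what converts the subspace-level hypothesis $\mathcal{N}_{\overline{A}}\subseteq\mathcal{M}_{\overline{A}}$ into an algebra-level reconstructibility statement. Once that choice is made, the argument is a short shuffle of three chains of inclusions inside $\mathbf{L}(\mathcal{H}_{\mathrm{code}})$, relying only on the double-commutant identity $\mathcal{M}''=\mathcal{M}$ and the definitions of $\mathcal{M}_A$ and $\mathcal{M}_{\overline{A}}$.
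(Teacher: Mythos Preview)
Your proof is correct. Both directions go through as you describe, and the key move in the reverse direction---feeding $\mathcal{M}'_{\overline{A}}$ into Lemma~\ref{oaqec} so that the hypothesis $\mathcal{N}_{\overline{A}}\subseteq\mathcal{M}_{\overline{A}}=(\mathcal{M}'_{\overline{A}})'$ yields reconstructibility of $\mathcal{M}'_{\overline{A}}$ on $A$---is exactly the right one.

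The paper organizes the argument slightly differently: it first isolates an auxiliary lemma (Lemma~\ref{mvn}) asserting that $\mathcal{M}'_A$ is the \emph{minimal} von Neumann algebra containing all $P_{\mathrm{code}}O_{\overline{A}}P_{\mathrm{code}}$, proved by applying Lemma~\ref{oaqec} to the commutant $\mathcal{N}'$ of that generated algebra. Criterion~2 then follows by combining Lemma~\ref{mvn} with Proposition~\ref{cr1}: complementary recovery holds iff $\mathcal{M}_{\overline{A}}$ coincides with this minimal algebra, which (since $\mathcal{M}_{\overline{A}}$ is already a von Neumann algebra containing the generators) reduces to the generator-by-generator condition. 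Your route bypasses the generated algebra entirely and applies Lemma~\ref{oaqec} directly to $\mathcal{M}'_{\overline{A}}$; this is a bit more streamlined for the purpose at hand. The paper's route has the advantage of producing Lemma~\ref{mvn} as an independent structural fact about $\mathcal{M}'_A$. The underlying trick---invoke Lemma~\ref{oaqec} on a well-chosen commutant---is the same in both.
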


The proof is given in Appendix~\ref{crcr}. To apply this criterion to a code, we do not have to generally study the $P_{\mathrm{code}}O_{\overline{A}}P_{\mathrm{code}}$ operators. Instead, we consider a basis of operators spanning $\mathbf{L}(\mathcal{H}_{\overline{A}})$. If the projection of every basis operator can be replaced by some $P_{\mathrm{code}}Q_{\overline{A}}P_{\mathrm{code}}$ with $[Q_{\overline{A}},P_{\mathrm{code}}]=0$, then the projection of arbitrary operator supported on $\overline{A}$, as a linear combination of the basis operators, will possess the same property.

Note that based on the above propositions we can derive the equivalent result in Ref.~\cite{pollack2022} on the uniqueness for the recovered subalgebra. In that work, the corresponding theorem concerns more about when the set $R^+(\mathbf{L}(\mathcal{H}_A)\otimes\mathbb{C}\mathds{1}_{\overline{A}})R$ is a subalgebra.

Now, we can list the characteristics as the basis and the algebraic aspects for describing the subregion duality in the formalism of OAQEC. And we emphasize the \emph{arbitrariness} of boundary bipartition $A\overline{A}$, which is required for the comprehensiveness in the description but has not been demonstrated in the tensor-network models.  

\paragraph*{\textbf{Characteristic 2}\label{cplmrc} (complementary recovery)} For \emph{arbitrary} boundary bipartition $A\overline{A}$, including both connected and disconnected cases, we expect that Prop.~\ref{cr1}, \ref{cr2} or other equivalent conditions are satisfied.

According to Prop.~\ref{cr1}, once Characteristic 2 is satisfied, $\mathcal{M}_A$ and $\mathcal{M}_{\overline{A}}$ will be the unique players in the formal description of the subregion duality. Hence, if we particularly require the genuine OAQEC formalism, the requirement will be on $\mathcal{M}_A$ and $\mathcal{M}_{\overline{A}}$. In the following, we use $\mathds{1}$ instead of $\mathds{1}_{\mathcal{H}_{\mathrm{code}}}$, since the Hilbert space is clear in the context.

\paragraph*{\textbf{Characteristic 3}\label{chac3} (genuine OAQEC)} In this work, we particularly require that for \emph{arbitrary} boundary bipartition $A\overline{A}$, upon the condition of complementary recovery, the code has nontrivial center $\mathrm{Z}(\mathcal{M}_A)\ne\mathbb{C}\mathds{1}$, i.e., there exists at least one nontrivial logical operator (not a constant operator) in the center. If the center is trivial, the description will be simply in the subsystem-code formalism.

Note that if a holographic code satisfies Characteristic 3, then it will have intrinsic difference from the conventional sense of quantum error correction. For example, while the three-qutrit code $[[3,1,2]]_3$~\cite{cleve1999,almheiri2015,harlow2017} is the simplest illustration of the idea of HQEC in the conventional sense, there \emph{cannot} be a three-qudit code, in which OAQEC is satisfied for all bipartition and the single-qudit erasure error can also be detected. That is because for any bipartition, i.e., separating one qudit from the other two, the encoded information can be only partially protected in the form of subalgebra.

We can understand the above fact as showing the restriction imposed by Characteristic 3 on the correctability of holographic code. It reveals a trait of the quantum-error-correction interpretation of the AdS/CFT correspondence: the better the logical information is protected, the more ``trivial'' the bulk quantum gravity is described. 

\subsubsection{Subregion duality: entanglement wedge and RT/FLM formula}\label{sdew}
According to the above characteristics, within the formalism of genuine OAQEC (Characteristic 3), the logical operators that can be reconstructed on the boundary subregions are described by $\mathcal{M}_A$ and $\mathcal{M}_{\overline{A}}$. In other words, the players in the subregion duality are von Neumann algebras instead of bulk subregions, which implies important difference in representing the subregion duality in contrast to the conventional cases. We start the following discussion with elucidating this difference by contrasting three possibilities of subregion duality. The first and the second possibilities are typical in the literature, and the third one is general for the genuine OAQEC and encompasses the cases in our model.

\paragraph*{\textbf{Residual bulk qudit}} The first possible case of subregion duality is generic in the situation where the condition of complementary recovery fails. In this case, there exists at least one ``residual'' bulk qudit $\boldsymbol{x}$ such that it can be recovered neither on $A$ nor on $\overline{A}$, and additionally, $\mathcal{M}(\boldsymbol{x})$ commutes with both $\mathcal{M}_A$ and $\mathcal{M}_{\overline{A}}$. Conventionally, we use $\mathrm{W}[A]$ ($\mathrm{W}[\overline{A}]$) to denote the collection of all bulk qudits that can be reconstructed on $A$ ($\overline{A}$), i.e.,
\begin{align}\label{ew1}
\begin{split}
\mathrm{W}[A]=\{\boldsymbol{x}:\mathcal{M}(\boldsymbol{x})\subset\mathcal{M}_A\},\\
\mathrm{W}[\overline{A}]=\{\boldsymbol{x}:\mathcal{M}(\boldsymbol{x})\subset\mathcal{M}_{\overline{A}}\}.
\end{split}
\end{align}
Then, if we regard $\mathrm{W}[A]$ and $\mathrm{W}[\overline{A}]$ as the entanglement wedges, as shown in Fig.~\ref{1a}, there will be a ``residual'' bulk subregion consisting of the unrecoverable bulk qudits (the white subregion in Fig.~\ref{1a}), which insulates the two entanglement wedges.



\paragraph*{\textbf{Subsystem-code formalism}} In the second case, we have a geometric version of complementarity~\cite{pastawski2017}: For any bulk qudit at $\boldsymbol{x}$, we have either $\mathcal{M}(\boldsymbol{x})\subset\mathcal{M}_A$ or $\mathcal{M}(\boldsymbol{x})\subset\mathcal{M}_{\overline{A}}$. Accordingly, it is easy to prove the condition of complementary recovery (see the criterion in Prop.~\ref{cr1}). And it can be also proved that the equivalent von Neumann algebras $R^+\mathcal{M}_A R$ and $R^+\mathcal{M}_{\overline{A}} R$ on $\mathcal{E}$ (Sec.~\ref{vnaob}) can be represented in terms of the local bulk operators as (see App.~\ref{possc} for proof)
\begin{align}\label{ssc}
\begin{split}
&R^+\mathcal{M}_AR=\otimes_{\boldsymbol{x}\in\mathrm{W}[A]}\mathbf{L}(\mathfrak{e}_{\boldsymbol{x}})\\
&R^+\mathcal{M}_{\overline{A}}R=\otimes_{\boldsymbol{x}\in\mathrm{W}[\overline{A}]}\mathbf{L}(\mathfrak{e}_{\boldsymbol{x}}).
\end{split}
\end{align}
Here we keep adopting the definition of $\mathrm{W}[A]$ and $\mathrm{W}[\overline{A}]$ as in Eq.~\ref{ew1}. Based on these equalities, it is easy to see that $\mathrm{Z}(\mathcal{M}_A)=\mathcal{M}_A\cap\mathcal{M}_{\overline{A}}=\mathbb{C}\mathds{1}$ and $\mathrm{Z}(R^+\mathcal{M}_A R)=R^+\mathrm{Z}(\mathcal{M}_A)R=\mathbb{C}\mathds{1}$, i.e., the case is in the subsystem-code formalism.

In this case, Eq.~\ref{ssc} shows a ``perfect match'' of operators algebras and the bulk degrees of freedom, i.e., $R^+\mathcal{M}_A R$ ($R^+\mathcal{M}_{\overline{A}} R$) exactly consists of all bulk operators supported on the bulk subregion $\mathrm{W}[A]$ ($\mathrm{W}[\overline{A}]$). It follows that the subregion duality can be completely described in terms of the geometric representation of $\mathrm{W}[A]$ and $\mathrm{W}[\overline{A}]$. As illustrated in Fig.~\ref{1b}, the expected geometric representation of subregion duality is that the bulk minimal surface assigned to the boundary bipartition $A\overline{A}$ simply separates the the two wedges~\cite{almheiri2015,pastawski2015,harlow2017,pastawski2017}. According to this ``perfect match'', describing subregion duality usually only focuses on the bulk qudits, and hence the greedy algorithm in the tensor network paradigm can work efficiently in specifying the greedy wedge~\cite{pastawski2015,jahn2021}.

\paragraph*{\textbf{Genuine OAQEC: failure of geometric complementarity}} The third possible case is general when the genuine OAQEC formalism (nontrivial center $\mathrm{Z}(\mathcal{M}_A)$) together with complementary recovery (Characteristic 2 and 3) are satisfied. In this case, the nontrivial center guarantees the existence of at least one bulk qudit $\boldsymbol{x}$ such that $\mathcal{M}(\boldsymbol{x})$ belongs to neither $\mathcal{M}_A$ nor $\mathcal{M}_{\overline{A}}$ (failure of the geometric complementarity), otherwise, through the arguments in App.~\ref{possc}, we will get back to the subsystem-code formalism. However, because of the condition of complementary recover, the full local algebra $\mathcal{M}(\boldsymbol{x})$ of such a bulk qudit $\boldsymbol{x}$ does not commute with $\mathcal{M}_A$ or $\mathcal{M}_{\overline{A}}$, and hence this case is distinct from the first possibility above with the residual region.

\begin{center}
\begin{figure}[ht]
\phantomsubfloat{\label{1a}}\phantomsubfloat{\label{1b}}
\phantomsubfloat{\label{1c}}\phantomsubfloat{\label{1d}}\phantomsubfloat{\label{1e}}\phantomsubfloat{\label{1f}}
\centering
    \includegraphics[width=8.5cm]{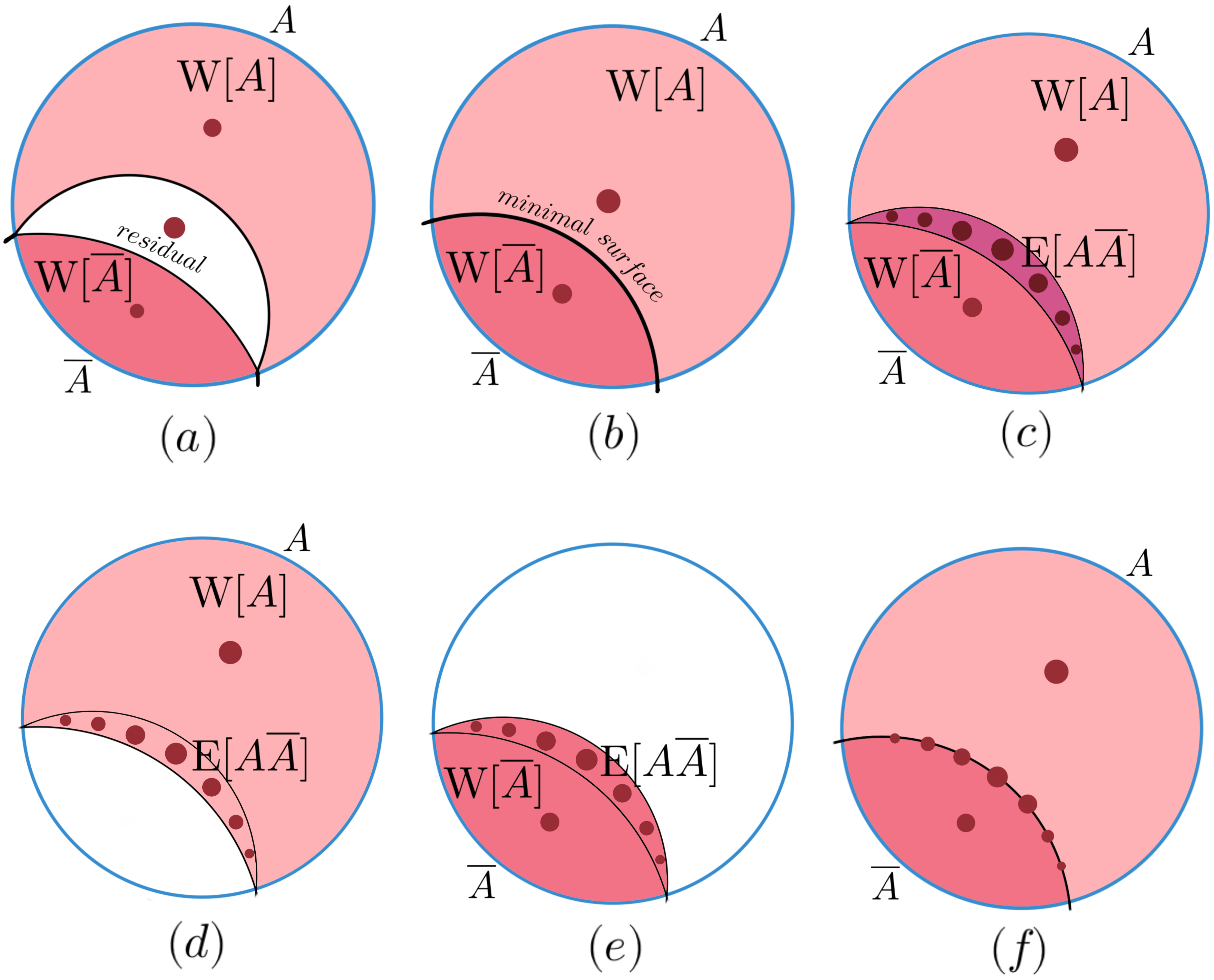}   
\caption{General illustrations of subregion duality for a connected boundary bipartition $A\overline{A}$. The blue boundary stands for the boundary/physical qudits. For conciseness, we draw single red dots to represent all the bulk/logical qudits lying within bulk subregions $\mathrm{W}[A]$ and $\mathrm{W}[\overline{A}]$. (a) Complementary recovery fails and a residual bulk subregion (the white) insulates recoverable subregions (the pink and the red). (b) Geometric complementarity, i.e., complementary recovery with subsystem code correctability, where the entanglement wedges are separated by the minimal surface. (c) Complementary recovery with genuine OAQEC. The entangling surface $\mathrm{E}[A\overline{A}]$ (the purple) that consists of not fully recoverable bulk qudits separates the two bulk subregions $\mathrm{W}[A]$ and $\mathrm{W}[\overline{A}]$ of fully recoverable bulk qudits. (d) The entanglement wedge associated to boundary subregion $A$. (e) The entanglement wedge associated to boundary subregion $\overline{A}$. (f) Alternative representation of subregion duality for genuine OAQEC: the not-fully-recoverable bulk qudits are threaded through a extremal surface.}
\label{fig1}
\end{figure}
\end{center}

\paragraph*{\textbf{Discrepancy between subalgebra and subregion}} These facts imply that there is no simple match of the von Neumann algebras on the code subspace and the subregions of bulk qudits. Instead, some bulk qudits can be included in both the support of $R^+\mathcal{M}_AR$ and that of $R^+\mathcal{M}_{\overline{A}}R$; additionally, a part of operators on these bulk qudits is included in $R^+\mathcal{M}_AR$ while some other part is included in $R^+\mathcal{M}_{\overline{A}}R$. Then, in the genuine OAQEC formalism, if we view the entanglement wedges as bulk subregions, the geometric representation of the entanglement wedges has limitations in describing the subregion duality. It was hence suggested that in this case the entanglement wedges might only possess the meaning of $R^+\mathcal{M}_AR$ and $R^+\mathcal{M}_{\overline{A}}R$ themselves~\cite{harlow2017,cree2021}. However, to elucidate the geometric meaning of the subregion duality, the structures of $R^+\mathcal{M}_AR$ and $R^+\mathcal{M}_{\overline{A}}R$ need to be represented in terms of the bulk degrees of freedom, in a similar way as in Eq.~\ref{ssc}. This is because only through the bulk qudits that represent the bulk locality can we concretize the meaning of the bulk geometry in a model.

\paragraph*{\textbf{Needs for a general framework}} In analogy to the subsystem-code case in which through the geometric representation in Fig.~\ref{1b} we can access the complete information of the subregion duality through Eq.~\ref{ssc}, we need a similar framework to rigorously ``translate'' between the algebraic and geometric aspects of subregion duality in the genuine OAQEC. Note that in the study of an exact model, we not only need to demonstrate the expected subregion duality, but also need to investigate properties of holography based on the subregion duality. Hence, analogous to the role of greedy algorithm in the tensor-network paradigm, the framework is expected to address the following questions: For a given boundary bipartition $A\overline{A}$, how to access the complete description of subregion duality, including the structures of $\mathcal{M}_A$, $\mathcal{M}_{\overline{A}}$ and $\mathrm{Z}(\mathcal{M}_A)$, and how they manifest in the bulk geometry as entanglement wedges?

\paragraph*{\textbf{Entangling surface}} Such expected framework might be specific to different models. Hence, to specify the necessary characteristics regarding the framework, we firstly consider general and rigorous but ``rough'' connections between logical subalgebras and bulk subregions in genuine OAQEC. We start with dividing all the bulk qudits into three disjoint sub-collections $\mathrm{W}[A]$, $\mathrm{W}[\overline{A}]$ and $\mathrm{E}[A\overline{A}]$ according to how the local operator algebra of each bulk qudit can be reconstructed. The three sub-collections are illustrated in Fig.~\ref{1c}. Here, we keep using the the definition of $\mathrm{W}[A]$ and $\mathrm{W}[\overline{A}]$ as in the above two cases (see Eq.~\ref{ew1}), i.e., the collection of bulk qudits whose full local algebra can be reconstructed on $A$ or on $\overline{A}$. And we define $\mathrm{E}[A\overline{A}]$ as the entangling surface in consistent with the terminology in Ref.~\cite{donnelly2017,cree2021}:
\begin{definition}[Entangling surface]\label{esdef}
Upon the condition in Characteristic 2 and 3, we define $\mathrm{E}[A\overline{A}]$ as the collection of those bulk qudits whose full local algebra is recoverable neither on $A$ nor on $\overline{A}$,
\begin{equation}
\mathrm{E}[A\overline{A}]=\{\boldsymbol{x}:\mathcal{M}(\boldsymbol{x})\not\subset\mathcal{M}_A~and~\mathcal{M}(\boldsymbol{x})\not\subset\mathcal{M}_{\overline{A}}\}.
\end{equation}
\end{definition}

\paragraph*{\textbf{A rough description of the subalgebras}} Now, the bulk Hilbert space can be correspondingly decomposed as $\mathcal{E}=(\otimes_{\boldsymbol{x}\in\mathrm{W}[A]}\mathfrak{e}_{\boldsymbol{x}})\otimes(\otimes_{\boldsymbol{x}\in\mathrm{E}[A\overline{A}]}\mathfrak{e}_{\boldsymbol{x}})\otimes(\otimes_{\boldsymbol{x}\in\mathrm{W}[\overline{A}]}\mathfrak{e}_{\boldsymbol{x}})$. And it is easy to show that the supports of the von Neumann algebras are~\footnote{By definition and according to similar arguments in App.~\ref{possc}, $\otimes_{\boldsymbol{x}\in\mathrm{W}[A]}\mathbf{L}(\mathfrak{e}_{\boldsymbol{x}})\subset R^+\mathcal{M}_A R$, hence $\mathrm{W}[A]$ must lie within the support of $R^+\mathcal{M}_A R$. It is also easy to show that $\mathrm{E}[A\overline{A}]$ lies within the support. Indeed, suppose that there exists a bulk qudit $\boldsymbol{x}\in\mathrm{E}[A\overline{A}]$ which is excluded from the support, then all operators in $R^+\mathcal{M}_A R$ must act trivially on this qudit. Or equivalently, $\mathcal{M}(\boldsymbol{x})\subset\mathcal{M}'_A=\mathcal{M}_{\overline{A}}$, i.e. $\boldsymbol{x}\in\mathrm{W}[\overline{A}]$, which contradicts the fact that $\boldsymbol{x}\in\mathrm{E}[A\overline{A}]$. On the other hand, it is easy to see that $\mathrm{W}[\overline{A}]$ is excluded from the support of $R^+\mathcal{M}_A R$, because it lies within the support of $R^+\mathcal{M}_{\overline{A}} R$ which commute with $R^+\mathcal{M}_A R$. Hence, we can conclude that the support of $R^+\mathcal{M}_A R$ is $\mathrm{W}[A]\cup\mathrm{E}[A\overline{A}]$. Similar arguments apply to the support of $R^+\mathcal{M}_{\overline{A}} R$.}
\begin{align}\label{suppen}
\begin{split}
&\mathrm{supp}(R^+\mathcal{M}_A R)=\mathrm{W}[A]\cup\mathrm{E}[A\overline{A}],\\
&\mathrm{supp}(R^+\mathcal{M}_{\overline{A}}R)=\mathrm{W}[\overline{A}]\cup\mathrm{E}[A\overline{A}]. 
\end{split}
\end{align}
Furthermore, Eq.~\ref{ssc} is replaced by
\begin{align}\label{oaqecsd}
\begin{split}
\otimes_{\boldsymbol{x}\in\mathrm{W}[A]}\mathbf{L}(\mathfrak{e}_{\boldsymbol{x}})&\subset R^+\mathcal{M}_AR,\\
\otimes_{\boldsymbol{x}\in\mathrm{W}[\overline{A}]}\mathbf{L}(\mathfrak{e}_{\boldsymbol{x}})&\subset R^+\mathcal{M}_{\overline{A}}R,\\
\mathrm{Z}(R^+\mathcal{M}_AR)=R^+\mathrm{Z}(\mathcal{M}_A)R&\subset\otimes_{\boldsymbol{x}\in\mathrm{E}[A\overline{A}]}\mathbf{L}(\mathfrak{e}_{\boldsymbol{x}}).
\end{split}
\end{align}
Here, the first two relations directly follow the above discussion and the definition of $\mathrm{W}[A]$ and $\mathrm{W}[\overline{A}]$. And the last relation says that the support of the nontrivial center lies within the entangling surface. To show the last relation, we notice the fact that for each operator $\widetilde{O}\in\mathrm{Z}(\mathcal{M}_A)$, since it commutes with both $\mathcal{M}_A$ and $\mathcal{M}_{\overline{A}}$, $R^+\widetilde{O}R$ must commute with both $\otimes_{\boldsymbol{x}\in\mathrm{W}[A]}\mathbf{L}(\mathfrak{e}_{\boldsymbol{x}})$ and $\otimes_{\boldsymbol{x}\in\mathrm{W}[\overline{A}]}\mathbf{L}(\mathfrak{e}_{\boldsymbol{x}})$. Hence, $R^+\widetilde{O}R$ must lie within $\otimes_{\boldsymbol{x}\in\mathrm{E}[A\overline{A}]}\mathbf{L}(\mathfrak{e}_{\boldsymbol{x}})$.

\paragraph*{\textbf{Entanglement wedge and geometric representation}}
If we insist in viewing the entanglement wedges as subregions of bulk qudits, then according to Eq.~\ref{suppen}, we can define the entanglement wedges as $\mathrm{W}[A]\cup\mathrm{E}[A\overline{A}]$ (see Fig.~\ref{1d}) and $\mathrm{W}[\overline{A}]\cup\mathrm{E}[A\overline{A}]$ (see Fig.~\ref{1e}), which partially capture the subregion duality but are sufficient to show the basic geometric illustration of the subregion duality. According to the above definition, the entangling surface should be understood as the overlap of the two entanglement wedges (see Fig.~\ref{1c}). In the geometric representation, since the genuine OAQEC only capture certain sub-leading quantum correction to the semi-classical bulk gravity, the entangling surface is expected to replace the minimal surface with a slight change, as shown in Fig.\ref{1c}.

It is noticeable that in Fig.\ref{1c} the entangling surface appears as a bulk subregion with dimension the same as the entanglement wedges. Indeed, this is simply due to the way we label qudits in the finite discretization of the bulk, and in the large-system-size limit it should appear as of the same dimension as the minimal surface. It is also worthwhile to emphasize again that the entangling surface characterizes the complementary recovery for genuine OAQEC, and should be distinguished from the residual bulk subregion in Fig.~\ref{1a} which insulates two noncomplementary wedges.

The readers might also notice the discrepancy between the entanglement wedges and the notations $\mathrm{W}[A]$ and $\mathrm{W}[\overline{A}]$ in our description, which is completely due to our convention for convenience in description, i.e., we define $\mathrm{W}[A]$ and $\mathrm{W}[\overline{A}]$ as concrete as in Eq.~\ref{ew1}. This discrepancy can be removed by defining $\mathrm{W}[A]$ and $\mathrm{W}[\overline{A}]$ as the supports of the von Neumann algebras, then the illustration of the subregion duality might look like Fig.~\ref{1f}, where an extremal surface threads those bulk qudits not fully recoverable on $A$ or $\overline{A}$. This convention is adopted by Ref.~\cite{cao2021} in which the genuine OAQEC and the condition of complementary recovery are shown for certain connected boundary bipartions. However, in that way, it seems not convenient to explicitly specify the structure of the von Neumann algebras, and the pictorial illustration of subregion duality in OAQEC might appear inconsistent with those in the subsystem-code formalism, which will be clear later. Hence, bearing in mind the equivalence, we choose the former with defining the entangling surface.

\paragraph*{\textbf{Splits in the entangling surface}} Though the geometric representation illustrated in Fig.~\ref{1c} capture the supports of $R^+\mathcal{M}_A R$, $R^+\mathcal{M}_{\overline{A}}R$, there remains a gap to access the complete information of the subregion duality, i.e. how bulk qudits in the supports contribute to structures of $R^+\mathcal{M}_A R$ and $R^+\mathcal{M}_{\overline{A}}R$. Indeed, Eq.~\ref{suppen} together with Eq.~\ref{oaqecsd} imply that for each bulk qudit or each of disjoint sub-collections of bulk qudits in $\mathrm{E}[A\overline{A}]$, part of its operators belongs to $R^+\mathcal{M}_AR$ while another part belongs to $R^+\mathcal{M}_{\overline{A}}R$, and there can be nontrivial overlap between the two parts which contributes to the center $\mathrm{Z}(R^+\mathcal{M}_AR)$. In other words, the corresponding (local) operator algebra for bulk qudits in the entangling surface ``splits''. Hence, $R^+\mathcal{M}_A R$ ($R^+\mathcal{M}_{\overline{A}}R$) should be essentially a von Neumann algebra tensor product of factors going through certain (local) split parts in the entangling surface and also through the full local bulk operator algebras for qudits in $\mathrm{W}[A]$ ($\mathrm{W}[\overline{A}]$), and $\mathrm{Z}(R^+\mathcal{M}_A R)$ should be simply a von Neumann algebra tensor product of certain shared split parts in the entangling surface.



\paragraph*{\textbf{Characteristic 4.1} (Complete description of subregion duality)}\label{chac41} According to the above analysis, the construction of a model need to present a framework such that for a given boundary bipartition $A\overline{A}$, the structures of the von Neumann algebras $R^+\mathcal{M}_A R$, $R^+\mathcal{M}_{\overline{A}}R$ and $\mathrm{Z}(R^+\mathcal{M}_A R)$ can be explicitly presented in terms of the bulk local degrees of freedom. And the following aspects are expected to be achieved through the framework: (1) Upon Characteristic 2 and 3, given a boundary bipartition $A\overline{A}$, the framework can specify the entangling surface and the entanglement wedges, as describing the supports of $R^+\mathcal{M}_A R$ and $R^+\mathcal{M}_{\overline{A}}R$; (2) the splits of the bulk qudits in the entangling surface, including their locality together with their algebraic characterizations, can be formally described; (3) the structure of the von Neumann algebras in terms of the (local) splits, i.e., how each split contributes to the structure with respect to its bulk locality, can be explicitly described. Then, for a connected boundary bipartition, the geometric representation of the subregion duality in terms of entanglement wedges, as can be specified through the framework, is expected to be in a way consistent with Fig.~\ref{1c}.



In addition to the above basic characteristic of subregion duality, it is also expected to represent the basic reconstruction properties of a single bulk qudit in terms of the entanglement wedges. In such illustrations, unlike the case in Fig.~\ref{1c}, there can be multiple boundary subregions. And due to the particular geometric representation of the entanglement wedge in genuine OAQEC, the illustrations cannot be exactly the same as the conventional ones in the literature~\cite{almheiri2015,harlow2017,pastawski2017}. Hence, we make the following convention to keep the consistency.



\paragraph*{\textbf{Convention for pictorial illustration in genuine OAQEC}} (1) When illustrating the case for a connected boundary bipartition (only one bipartition), we shade all the three parts for $\mathrm{W}[A]$, $\mathrm{W}[\overline{A}]$ and $\mathrm{E}[A\overline{A}]$, as in Fig.~\ref{1c}. (2) When pictorially representing the case for multiple recoveries, i.e., with multiple connected boundary subregion $A_1,A_2,\cdots$ (see Fig.~\ref{2a} and \ref{2b}), we only shade $\mathrm{W}[A_1],\mathrm{W}[A_2],\cdots$ and add annotation for the entangling surfaces when necessary. (3) When representing disconnected bipartition for uberholography as hollowed out from a connected reconstruction $A$ by removing subregions $A_1,A_2,\cdots$, we simply shade the remaining while leaving $\mathrm{W}[A_1],\mathrm{W}[A_2],\cdots$ blank as shown in Fig.~\ref{2c} and \ref{2d} so that the shaded represent the entanglement wedge associated to disconnected boundary subregion.

\paragraph*{\textbf{Characteristic 4.2}}\label{chac42} As derived characteristics of the basic properties of subregion duality, the geometric representations of the entanglement wedges are expected to demonstrate the multiple reconstruction of local bulk operators on connected boundary subregions. For example, a bulk local operator $\widetilde{O}_{\boldsymbol{x}}$ can be reconstructed on two subregions $A_1$ and $A_2$ but not on $A_1\cap A_2$ (see Fig.~\ref{2a}). This appears contradictary to the field-theory picture. As another example, for certain tripartition $ABC$ of the boundary (see Fig.~\ref{2b}), some bulk local operator $\widetilde{O}_{\boldsymbol{x}}$ on the central bulk qudit $\boldsymbol{x}$ can be reconstructed on $A\cup B$, $A\cup C$ and $B\cup C$, but on none of $A$, $B$ or $C$.

\paragraph*{\textbf{Entanglement entropy}} A direct but simple characterization of how the bulk geometry is emergent from the boundary entanglement is through the entanglement entropy $\mathrm{S}(\widetilde{\rho}_A)$ with $\widetilde{\rho}_A=\mathrm{Tr}_{\overline{A}}[\widetilde{\rho}]$ for an arbitrary state $\widetilde{\rho}$ on $\mathcal{H}_{\mathrm{code}}$. As shown in Ref.~\cite{harlow2017}, Characteristic 2 and 3 ensure the following form of the entanglement entropy
\begin{align}
\begin{split}
\mathrm{S}(\widetilde{\rho}_A)&=\mathrm{Tr}[\widetilde{\rho}\mathcal{L}_A]+\mathrm{S}(\widetilde{\rho},\mathcal{M}_A)\\
\mathrm{S}(\widetilde{\rho}_{\overline{A}})&=\mathrm{Tr}[\widetilde{\rho}\mathcal{L}_A]+\mathrm{S}(\widetilde{\rho},\mathcal{M}_{\overline{A}}),
\end{split}
\end{align}
which, in the above description of the subregion duality, can be interpreted as a version of the Ryu-Takayanagi/Faulkner-Lewkowycz-Maldacena (RT/FLM) formula that includes the original Ryu-Takayanagi formula (the leading term in the approximation of the bulk gravity)~\cite{ryu2006} together with certain sub-leading quantum corrections to the order of $G^0$. Here, the area operator $\mathcal{L}_A$ can be viewed as an operator in the nontrivial center $\mathrm{Z}(\mathcal{M}_A)$, which is interpreted as certain bulk operator integrated along the minimal surface. The second term refers to the entanglement in the bulk, and $\mathrm{S}(\widetilde{\rho},\mathcal{M}_A)$ ($\mathrm{S}(\widetilde{\rho},\mathcal{M}_{\overline{A}})$) is the algebraic entropy of $\widetilde{\rho}$ which is uniquely defined with respect to the von Neumann algebra $\mathcal{M}_A$ ($\mathcal{M}_{\overline{A}}$)~\cite{harlow2017}.

Note that due to the nontrivial center in the genuine OAQEC formalism, this bulk term shows intrinsic difference from that in the special case of subsystem-code formalism. For example, in the subsystem-code formalism, for any pure code state as a product state across the two entanglement wedges, e.g., bulk-qudit-product state or computational basis state, its bulk term is zero. But it is in general not the case in the genuine OAQEC formalism. And this difference concretizes the correction from the bulk entanglement~\cite{faulkner2013,harlow2017}.

\paragraph*{\textbf{Characteristic 5}}\label{chac5} In an exact model of holographic code, for a connected boundary bipartition $A\overline{A}$, both the bulk term of an arbitrary $\widetilde{\rho}_{\mathrm{code}}$ and the area operator $\mathcal{L}_A$ should be rigorously derivable. And the area term in the entropy, i.e. $\mathrm{Tr}[\widetilde{\rho}\mathcal{L}_A]$ is expected to be consistent with certain meaning similar to the minimal surface. Indeed, since the meaning of minimal surface in a model based on a given bulk discretization or tessellation can only be concretized through the discrete bulk qudits or tiles, this requirement can only have the following sense: The area term should be consistent with the number of tiles that certain geodesic crosses or passes by.

Demonstrations of the above characteristics regarding the subregion duality are restricted in the established models, since the greedy-wedge method does not directly show the details of how the local bulk operator algebra in the entangling surface splits and is hence insufficient to specify the explicit structures of the von Neumann algebras. In recent advancement, this difficulty could be addressed by investigating the structure of the tensors~\cite{cao2021}, though it is limited to certain connected boundary bipartitions. As we will show, in our exact model, the above characteristics can be rigorously demonstrated for arbitrary connected boundary bipartitions, and also for all disconnected cases relevant to uberholography.


\subsubsection{Uberholography}\label{uberintro}
Uberholography characterizes the bulk reconstruction on disconnected boundary subregions, which is beyond the description of the AdS/Rindler reconstruction~\cite{almheiri2015,pastawski2017}. As illustrated in Fig.~\ref{2c} and \ref{2d}, it says that the reconstruction of $\mathcal{M}(\boldsymbol{x})$ for a local bulk qudit $\boldsymbol{x}$ on connected boundary subregion is too ``consuming''; and sequential removals of sub-connected parts from the connected boundary subregion does not affect the recovery of $\mathcal{M}(\boldsymbol{x})$. Furthermore, if one keeps the process until the recovery is no longer protected against further removal, the eventual result is a disconnected ``economical'' subregion. Importantly, all such disconnected boundary subregion for recovery scales sublinearly on the total number of physical qudits $N$ (hence measure zero), and the scaling is tightly captured by a universal component $1/h$, i.e. $\le N^{1/h}$. Note that we express the component in the reciprocal form to emphasize the importance of $h$ as will be discussed later.

Originally, the properties of uberholography, including the universal $1/h$, are derived based on assuming both the algebraic and geometric complementarity (see Fig.~\ref{1b}) and rely on the continuum-limit treatment of the minimal surface~\cite{pastawski2017}. However, in a finite-size holographic code of OAQEC, the entanglement wedges is not simply separated by the minimal surface (see previous discussion). Hence the original way, i.e., utilizing the minimal surface to hollow out a connected boundary subregion for reconstruction, cannot be replicated, and we should not anticipate to obtain the exactly same results or the same value of the universal component $1/h$. In other words, certain difference in the geometric illustration of uberholography might be inevitable.

\begin{center}
\begin{figure}[ht]
\phantomsubfloat{\label{2a}}\phantomsubfloat{\label{2b}}
\phantomsubfloat{\label{2c}}\phantomsubfloat{\label{2d}}
\centering
    \includegraphics[width=8.5cm]{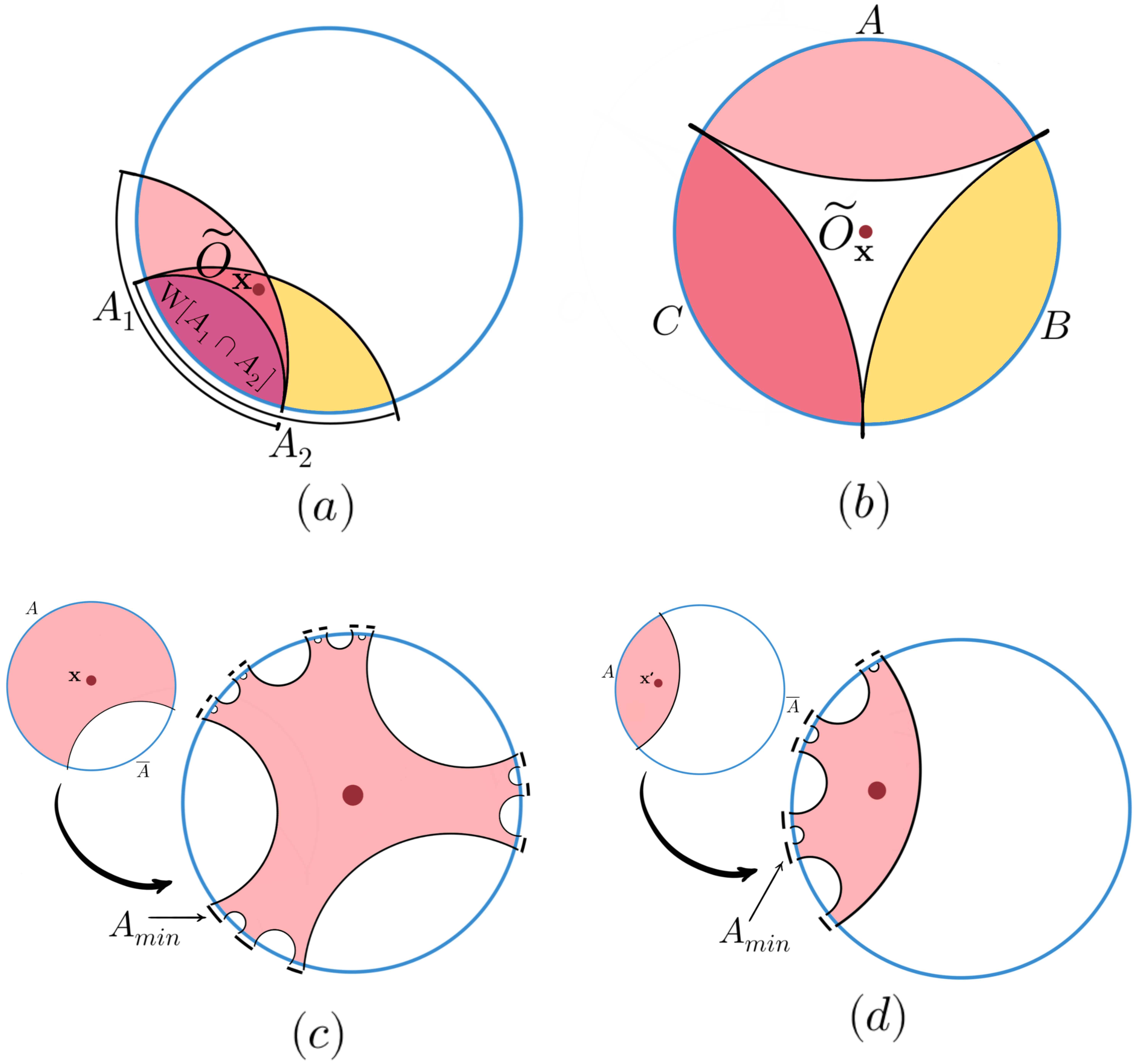} 
\caption{Properties of reconstruction of single-bulk-qudit operators represented in terms of entanglement wedges. (a) A bulk qudit operator $\widetilde{O}_{\boldsymbol{x}}$ can be reconstructed on both boundary subregions $A_1$ and $A_2$, but not on $A_1\cap A_2$. (b) A bulk qudit operator $\widetilde{O}_{\boldsymbol{x}}$ on the central bulk qudit $\boldsymbol{x}$ can be reconstructed on $A\cup B$, $A\cup C$ and $B\cup C$, but on none of $A$, $B$ or $C$. (c) (d) Illustrations of a minimal reconstruction $A_{\mathrm{min}}$ for bulk qudit $\boldsymbol{x}$ as hollowed out by removing subregions $A_1,A_2,\cdots$ from a connected reconstruction $A$. Starting from the shaded bulk subregion for $\mathrm{W}[A]$, we blank the bulk subregions for $\mathrm{W}[A_1],\mathrm{W}[A_2],\cdots$. The bulk qudit in (c) is closer to the center of bulk than the bulk qudit in (d).}
\label{fig2} 
\end{figure}
\end{center}

In the established models of holographic code, the signature of uberholography can be probed~\cite{cao2021}, but a systematic demonstration, including the universality of the scaling component, is hard to be reached. The main reason for this difficulty is that the boundary geometry in the tensor-network model is not uniquely specified, and the greedy-wedge method does not generally work for the disconnected boundary bipartition~\cite{jahn2021}.

While the value of $1/h$ cannot be exactly the same as originally predicted in the continuum case, the universality of the scaling behavior should be important in a holographic code. To formalize the characteristics of uberholography in a concrete model, we introduce the definition of minimal boundary subregion for reconstruction to generally capture these discrete and measure-zero boundary recoveries.

\begin{definition}[Minimal recovery]\label{mini}
A boundary subregion $A$ that reconstructs $\mathcal{M}(\boldsymbol{x})$ is called minimal if further removal of any of its subset of physical qudits will disable the recovery.
\end{definition}

Note that the minimal subregion is not unique and not necessarily the smallest subregion. For example as in the original arguments~\cite{pastawski2017}, starting from different connected reconstruction of $\mathcal{M}(\boldsymbol{x})$ on the boundary, one can reach different disconnected boundary subregion, but there is no guarantee that such subregions all have the same size. Based on the meaning of minimal recoveries, we can view following properties as the characteristic of uberholography in a holographic code.

\paragraph*{\textbf{Characteristic 6} (Uberholography)}\label{chac6} There exists $0<1/h<1$, such that any minimal boundary subregion reconstructing $\mathcal{M}(\boldsymbol{x})$ for a bulk qudit $\boldsymbol{x}$ is disconnected and scales as $\sim N^{1/h}$. A necessary condition is that the \emph{price} $\mathrm{p}(\boldsymbol{x})$ (and hence the distance which is always smaller) scales sublinearly on $N$.



\section{Basic idea for the new approach}\label{idea}
To describe the basic idea for a new possible way to construct holographic code, we start with a fundamental perspective on the relationship between the two ingredients of a model, i.e., the discretization of the bulk and the boundary geometries, and the encoding isometry $\mathfrak{e}_{\boldsymbol{1}}\otimes\mathfrak{e}_{\boldsymbol{2}}\otimes\mathfrak{e}_{\boldsymbol{3}}\otimes\cdots=\mathcal{E}\xrightarrow{R}\mathcal{H}=\mathfrak{h}_1\otimes\mathfrak{h}_2\otimes\mathfrak{h}_3\otimes\cdots$. We try to emphasize that in the general scenario of conceiving an exact model it is crucial to ensure the compatibility of the two ingredients, prior to investigating the holographic and quantum-error-correction properties of the encoding.

The key is to note that the two ingredients are individual as different types of structure. That is, while the geometric discretization can assign the qudits specific locations in the bulk and boundary geometries, the encoding $\mathcal{E}\xrightarrow{R}\mathcal{H}$ is itself a purely algebraic structure whose characterization is independent on the geometry. However, in realizing the hypothetical HQEC characteristics, the two ingredients are necessarily present as an organic whole. In other words, the geometry of qudits need to be compatible with the encoding isometry in order to underlie the expected geometric manifestation of the operator-algebra properties derived from the encoding.

The simplest example of this compatibility is the boundary reconstruction of a single bulk qudit (see Sec.~\ref{hqecc} or Ref.~\cite{almheiri2015}). For a bulk qudit not in the center, its boundary reconstruction has smaller size on one side of the boundary that is closer to the bulk qudit, and has greater size on the opposite side; but for the central bulk qudit, this difference in size disappears. Presence of this property not only requires the geometry on the relative location of the bulk qudits and boundary qudits to coincide with the algebraic operator reconstruction properties, but also requires their bulk radial distance from the center to coincide with their code distance.



In the conventional paradigm~\cite{jahn2021}, the tensor network might be viewed as an additional structure that simultaneously specifies the two ingredients and guarantees their compatibility: The inherent discrete network holds the geometries of the bulk qudits and the boundary qudits; the tensors, as local algebraic structures living on nodes of the network, are contracted into the integrated encoding isometry. Thus, demonstrating the algebraic aspects of the operator reconstruction properties of the encoding is essentially investigating properties of the tensor-network structure themselves. And once these properties are demonstrated, they are inherently manifested in the geometries as expected.

While the tensor-network approach provides convenience in defining models, certain HQEC characteristics, e.g., the conditions of uberholography, are proposed afterwards, and hence might not be straightforwardly demonstrated with the developed methods~\cite{pastawski2017}. Therefore, instead of leveraging the convenience of tensor-network construction of models, we try to take a fundamental perspective on how to organically define the two ingredients for a model without resorting to additional structures. In the following, we show that a possibly deeper insight into the demonstration of uberholography inherently embodies a guideline to specify the bulk geometry discretization, and also offers clues for defining a compatible encoding map.

\subsection{Rearrangement of physical qudits}
Before proceeding, it is necessary to discuss the meaning of an important geometric operation on the physical qudits, i.e., the rearrangement or reconfiguration which is simply a one-to-one correspondence between two different geometric arrangements or presentations of the same qudits $\mathfrak{h}_1,\mathfrak{h}_2,\mathfrak{h}_3,\ldots$ (see Fig.~\ref{fig3}). It will be adopted in our following arguments that the rearrangement of physical qudits leaves the algebraic properties of the encoding isometry invariant. For example, the essences of quantum-error-correction properties including Characteristics 2 and 3 (see Par.~\ref{chac3}) do not rely on any geometric arrangement of the qudits. In other words, assigning the qudits any different geometry does not affect whether the characteristics are satisfied or not. In practice, it can happen that the proof of these characteristics might be more convenient in certain special geometries different from the standard 1D geometry. In that case, rearranging the physical qudits back into the standard 1D geometry, the proved results for Characteristics 2 and 3 still hold.



\begin{center}
\begin{figure}[ht]
\centering
    \includegraphics[width=8.5cm]{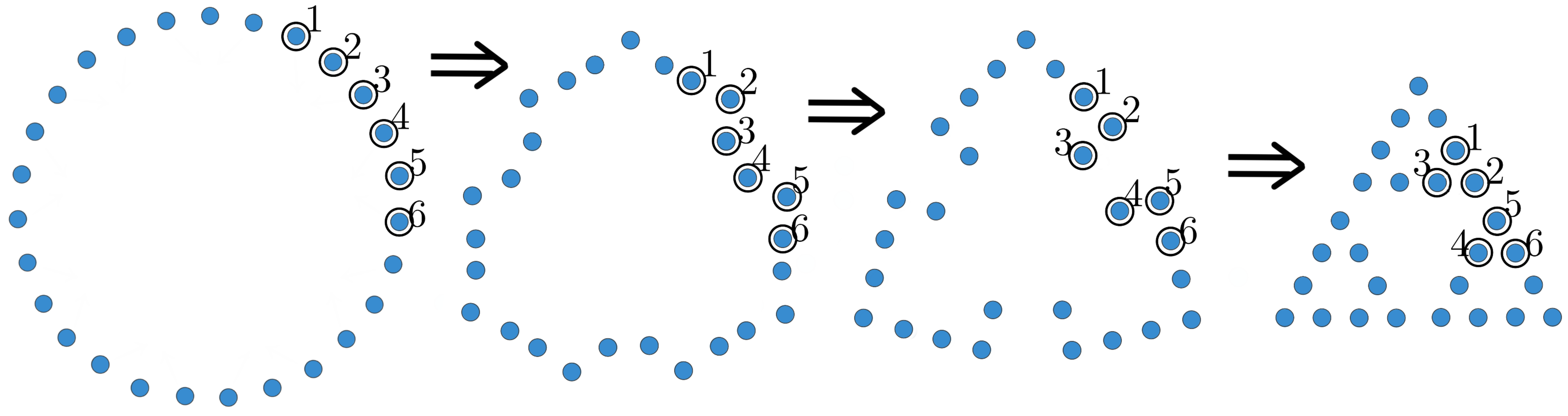}   
\caption{Rearrangement of the qudits from one lattice geometry (1D ring) to another (Sierpi\'nski triangle). The rearrangement can be described in a self-similar way: three qudtis are retracted into a triangle, while three such triangles are retracted into a larger triangular blocks. In larger system size, three of such triangular blocks are further retracted and such process continues. According to the rearrangement, the locality is changed: The nearest neighboring qudits 3 and 4 are rearranged apart, while the next-next neighboring qudits 2 and 5 are rearranged into nearest neighbors. This change will amplify drastically in larger system size.}
\label{fig3}
\end{figure}
\end{center}

What depend on the geometry is how the algebraic properties are manifest or interpreted through the locality, the connectedness and other features that are associated to a concrete (lattice) geometry. For example, Characteristics 1, 4.1, 4.2, 5 and 6 can be viewed as how the logical operator reconstruction is interpreted through the 1D boundary geometry and the hyperbolic bulk geometry. Since rearrangement can (drastically) change the locality of qudits (see Fig.~\ref{fig3}), Characteristics 1, 4.1, 4.2, 5 and 6 can have different geometric presentations in distinct geometries, but the rearrangement establishes their equivalence.

\subsection{Scaling component $\frac{1}{h}$ vs. Hausdorff dimension $h$}\label{h1h}
In building a model of holographic code, to ensure the characteristics of uberholography with a specific scaling component $1/h$, we need to gain insight into the following questions.

First, how to manifest the given value of $1/h$ in a model? Or equivalently, what feature of the construction can guarantee the scalings of minimal boundary subregions all subject to the given $1/h$? Second, as been previously discussed for Characteristic 6 (see Sec.~\ref{uberintro}), the value of $1/h$ might be different from the originally derived one~\cite{pastawski2017}. Then, what determines the possible value of $1/h$ prior to building a holographic code? 


Moreover, since the minimality of the boundary recovery implies the bare sufficiency to read the information of a local bulk qudit, the hierarchical properties of the bulk qudits associated to the bulk radial distance should manifest on the sizes of the minimal subregions. However, due to the zero measure and the disconnectedness on the 1D geometry (see Fig.~\ref{2c} and \ref{2d}), such manifestation cannot be readily apparent. And there is a lack of clarity regarding what feature of the construction is responsible for this manifestation. In other words, the presence of uberholography in the standard 1D geometry, regarding the minimality of a subregion for bulk operator reconstruction, seems deviate from a more conventional picture in experience in which the minimality is expected to appear as connected and compact in geometry.


Intriguingly, the above issues can be illuminated if we view $h$ as the Hausdorff dimension of certain alternative geometry for the physical qudits, which can be linked to the standard 1D geometry through a geometric rearrangement. Here for a specific $1/h$ ($0.5<1/h<1$), the dimension $h$ must be fractional, i.e. $1<h<2$, like that of fractal geometry embedded in 2D but not necessarily referring to fractality. Importantly, while in the 1D case the total number $N$ of the physical qudits and the linear size $N_0$ of the geometry are the same, in such an alternative geometry with fractional Hausdorff dimension $h$, the two numbers are splitted into two, i.e. $N\sim{N_0}^h$. This implies that any subregion that scales sublinearly as $\sim N^{1/h}$, e.g., the possible subregions to be demonstrated as minimal recoveries in a model, now scales linearly as $\sim N_0$ and with nonzero measure with respect to $N_0$. These relations are listed in Tab.~\ref{table1}.

\begin{table}[ht]
\begin{center}
\begin{tabular}{ |c|c|c| } 
\hline
   Hausdorff dimension  & 1 & $h$ \\ 
 \hline
 linear size & $N$ & $N_0$ \\ 
 \hline
 total number & $N$ & $N={N_0}^h$ \\ 
 \hline
 $A_{\mathrm{min}}$ & $\le N^{1/h}$, & linear on $N_0$, \\
    & disconnected, & (almost) connected, \\
    & measure zero & measure nonzero\\
\hline
$A$ (connected in 1D) & linear on $N$ & superlinear on $N_0$ \\
\hline
\end{tabular}.
\end{center} 
\caption{Comparison table of the sizes between the standard 1D geometry (left column) and the alternative geometry (right column).}
\label{table1}
\end{table}

According to this observation, the alternative geometry together with the rearrangement might provide a more natural presentation of uberholography which can ``demystify'' the meaning of the universal $1/h$. In the following, we elaborate on this point with the simplest example of alternative geometry: The value $1/h=\mathrm{log}(2)/\mathrm{log}(3)\approx0.631$ corresponding to the Hausdorff dimension $h=\mathrm{log}(3)/\mathrm{log}(2)\approx 1.585$ of the Sierpi\'nski (triangle) fractal geometry where the linear size $N_0$ can be viewed as the size of any lateral side (see Fig.~\ref{fig3} and \ref{fig4}). And there exists a rearrangement bridging the alternative geometry and the standard 1D geometry as illustrated in Fig.~\ref{fig3} and \ref{fig4}. The rearrangement can be described as in both directions: from 1D to the alternative geometry, the rule for the rearrangement is described in Fig.~\ref{fig3}, and from the alternative back to 1D, the reverse of the rule is illustrated in Fig.~\ref{4a}. Importantly, the description is scalable.

Note that our application of the fractal geometry in the study of uberholography is distinct from that in Ref.~\cite{bao2022}. In the later, the fractal geometry is a subset of the boundary qubits considered as noise. In our case, the fractal geometry is simply an alternative geometry of the whole of the physical qudits. It will be clear in the following texts that our study uncovers a new and possibly essential connection between the growing focuses on the fractal geometries from the many-body physics perspective~\cite{yoshida2013,kempkes2019,manna2020,xu2021,biesenthal2022,zhu2022} and the research of HQEC.


\subsection{``Demystifying'' properties of uberholography}\label{demys}
First, in the alternative geometry, typical subregions that scale sublinearly on $N$ can be (almost) connected~\footnote{Here, by a almost connected subregion, we mean a subregion which can be viewed as the union of a small number of connected parts.} and compact. In the illustrative example as shown in Fig.~\ref{fig4}, such subregions can be identified as those open paths surrounding the holes. It is noticeable that after rearranging back to the 1D geometry, these paths manifest in disconnectedness and zero measure (see Fig.~\ref{4a}, \ref{4b} and \ref{4c}). Especially, their geometries in 1D appears similar to the minimal subregions for recovering single bulk qudits in uberholography as predicted originally in Ref.~\cite{pastawski2017}. This observation suggests that the ``fractal-like'' geometry of the minimal boundary recoveries as embedded in 1D and with scaling component $1/h$~\cite{pastawski2017} might result from the alternative geometry with the fractional Hausdorff dimension $h$ and the rearrangement between the alternative geometry and the standard 1D geometry.

\onecolumngrid
\begin{center}
\begin{figure}[ht]
\centering
    \includegraphics[width=17cm]{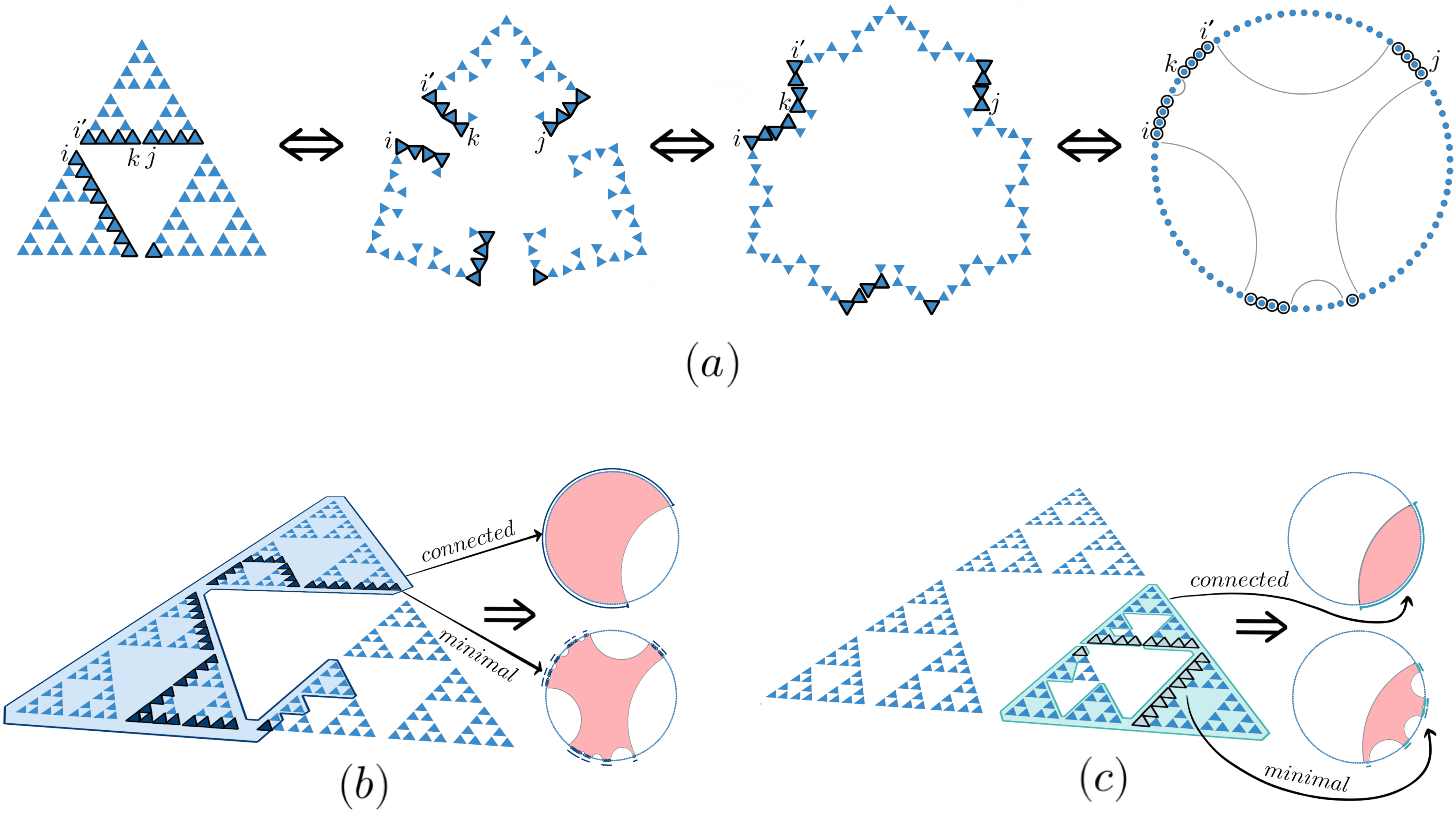}   
\phantomsubfloat{\label{4a}}\phantomsubfloat{\label{4b}}
\phantomsubfloat{\label{4c}}
\caption{Rearrangement between the alternative Sierpi\'nski geometry and the standard 1D geometry of the physical qudits. The singly highlighted represent the open paths, while the shaded (in blue or green) represent subregions in the alternative geometry, which are rearranged into a connected boundary subregion in 1D. The bulk geodesics and the shaded bulk subregions (in pink) in the standard geometry are merely used for the analogy to the predictions of uberholography in the literature, they do not acquire specific meanings within a holographic code until our later demonstrations in an exact model. (a) Illustration of the scalable rule that guides the rearrangement between the Sierpi\'nski geometry and the 1D geometry. (b) The singly highlighted open path surrounds the central hole. According to the rule of the rearrangement, the blue-shaded region is the smallest region which includes the path as a subset and can be rearranged into a connected subregion in 1D. (c) The singly highlighted open path surrounds a non-central hole. Similarly, the green-shaded region is the smallest region that includes the path and is rearranged into a connected subregion in 1D.}
\label{fig4}
\end{figure}
\end{center}
\twocolumngrid

Second, due to the fractional Hausdorff dimension, there appear ``holes'' in the alternative geometry as hollowed out from the 2D plane (see Fig.~\ref{4b} and \ref{4c}), which inherently possess certain hierarchical structure analogous to that of the bulk qudits consistent with the bulk radial direction. Then, after rearranging back to the 1D geometry, the paths surrounding holes of different size in the alternative geometry resemble the picture of different disconnected minimal subregions recovering bulk qudits of different radial distances (comparing Fig.~\ref{4b} and \ref{4c} with reference to Fig.~\ref{2c}, \ref{2d} and Ref.~\cite{pastawski2017}). This observation suggests that the geometric structure of the holes in the alternative geometry effectively captures the bulk discrete geometry for the bulk qudits in the standard picture. In this way, the paths surrounding the holes seems to naturally present how the hierarchical properties of the bulk qudits associated to the bulk radial distance manifest on the sizes of the minimal subregions, as questioned previously.

Third, the rule of the rearrangement seems to underlie the peculiar property of uberholography: the size of the smallest connected boundary recoveries of a bulk qudit overwhelms the size of any disconnected minimal boundary recoveries (see Sec.~\ref{uberintro}). Indeed, according to the discussion in Sec.~\ref{hqecc}, any connected boundary subregion $A$ supporting the reconstruction of $\mathcal{M}(\boldsymbol{x})$ must include a subregion $A_{\mathrm{min}}$ supporting a minimal recovery of $\mathcal{M}(\boldsymbol{x})$, i.e., $A_{\mathrm{min}}\subset A$. Hence, if the open paths can be demonstrated to be the minimal recoveries in certain model, according the way the open paths are rearranged back to 1D, the subregion (shaded in blue in Fig.~\ref{4b} or green in Fig.~\ref{4c}) that is rearranged into a connected boundary subregion in 1D and is the smallest that includes the open path (highlighted singly) as a subset has to be much larger and scale superlinearly on the open path.

Fourth, the above relationship (in terms of the rearrangement) between the disconnected minimal subregions and the connected subregions for recovery might unveil how the minimal recoveries results from hollowing out certain qudits from a connected recovery, which is the original arguments in the introduction of uberholography~\cite{pastawski2017}. As illustrated in Fig.~\ref{4b} and \ref{4c}, the qudits in the shaded subregion that are rearranged to the connected 1D boundary subregion but not in overlap with the paths can all be hollowed out, and the imagined entanglement wedges as if resulted form the hollowing (by virtually adding the minimal surfaces as the geodesics) resembles those predicted in Ref.~\cite{pastawski2017}.


Furthermore, we note that each open path has clear association to certain hole in the alternative geometry (comparing the paths in Fig.~\ref{4b} and \ref{4c}), while the paths can surround a hole in different ways (comparing the paths in Fig.~\ref{4a} and \ref{4b}). It will be clear in later arguments that this property of the open paths underlies the demonstration for the connected distance of a single bulk qudit. That is, by suitably choosing such paths surround one hole in the alternative geometry, after rearranged back to the standard 1D geometry, we can avoid all connected boundary subregions of certain size. This resembles the argument: operators on a single bulk qudit can be reconstructed on different minimal boundary subregions to avoid all erasure errors of certain size on the boundary.

The above analogy might has brought light to the questions raised previously on the realization of uberholography in a model of holographic code, and also provided insight into the peculiar properties of uberholography. In reality, it could be very difficult to check this perspective with established tensor-network models, since in these model the boundary is not uniquely specified and hence it is elusive how to systematically demonstrate uberholography~\cite{jahn2021}. However, the analogy presents an avenue for initiating a model construction, which will be detailed in the following subsections. And this avenue will be validated in our construction of an exact model of holographic code which satisfies the HQEC characteristics discussed in the previous section. 

\subsection{Hyperbolic tessellation}
The essence of the above scenario is more than a conjecture, since the process of rearranging the physical qudits from the standard 1D geometry towards a geometry with higher Hausdorff dimension is always viable. Indeed, if a model has been demonstrated for certain characteristics of HQEC with a specific $1/h$, or more loosely with multiple values of $1/h$ for recovering different bulk qudits, then, one can always rearrange the physical qudits towards ``reducing'' the scatteredness and ``increasing'' the compactness of the minimal subregions for reconstruction.  

Conversely, the rearrangement of physical qudits from the geometry of Hausdorff dimension $h$ to the standard 1D geometry can inherently specify a structure that effectively captures a possible hyperbolic tessellation in finite system size, and hence gives rise to the bulk geometry discretization. In the case of the Sierpi\'nski fractal as illustrated in Fig.~\ref{5a}, the effect of the rearrangement can be viewed as deforming the sides of the holes into arcs of circles lying within a disk. Note that for the hyperbolic bulk as the Poincar\'e disk, a geodesic is either a straight line as the diameter or an arc of a circle with the two ends perpendicular to the 1D asymptotic boundary. Hence, as shown in Fig.~\ref{5a}, the holes are eventually deformed into ideal polygons in the hyperbolic geometry with curvature $-1$~\footnote{Ideal polygons on the Poincar\'e disk have their endpoints on the asymptotic boundary, and have their sides the geodesics of the hyperbolic geometry. An ideal polygons can be simply specified by the geodesics, and an ideal hyperbolic tessellation can be specified in a similar way with a family of geodesics.}. Note that every ideal triangle has area equal to $\pi$ and every ideal quadrilateral has area equal to $2\pi$.

It is noticeable that as been indicated in the previous discussion, the hierarchy of the holes in the fractional-Hausdorff-dimension geometry resembles that of the expected bulk qudits with different radial distance. Hence, according to Fig.~\ref{5a}, we can view the ideal polygons as the bulk discretization for the bulk qudits, and use the index $\boldsymbol{x}$ for the holes as the index for the discrete bulk locations. Then, we can reach a complete specification of the discrete bulk and boundary geometry for the qudits as shown in Fig.~\ref{5b}.

In general, for a system of finite size, if we roughly view a tessellation of the hyperbolic bulk as specified by a family of geodesics, then the rearrangement from the alternative geometry to the standard 1D geometry can specify such a family in the following way: As illustrated in Fig.~\ref{5c}, in the rearrangement, pairs of physical qudits (linked by the blue or green dashed lines) in a fractional-Hausdorff-dimension geometry drift away from each other towards apart positions in the 1D ring, stretching out their separation into a geodesic (the blue and green dashed lines on the left, the black solid line on the right) in the hyperbolic space. In this way, geodesics crossing in the hyperbolic disk (see the right of Fig.~\ref{5c}) can specify regular polygons, while geodesics meeting at ideal points on the asymptotic boundary (see Fig.~\ref{5a}) can specify ideal polygons. Due to the wealth of the fractional-Hausdorff-dimension geometries and their associated rearrangements to 1D, there should be a wide class of tessellations to be explored.

\onecolumngrid
\begin{center}
\begin{figure}[ht]
\centering
    \includegraphics[width=17cm]{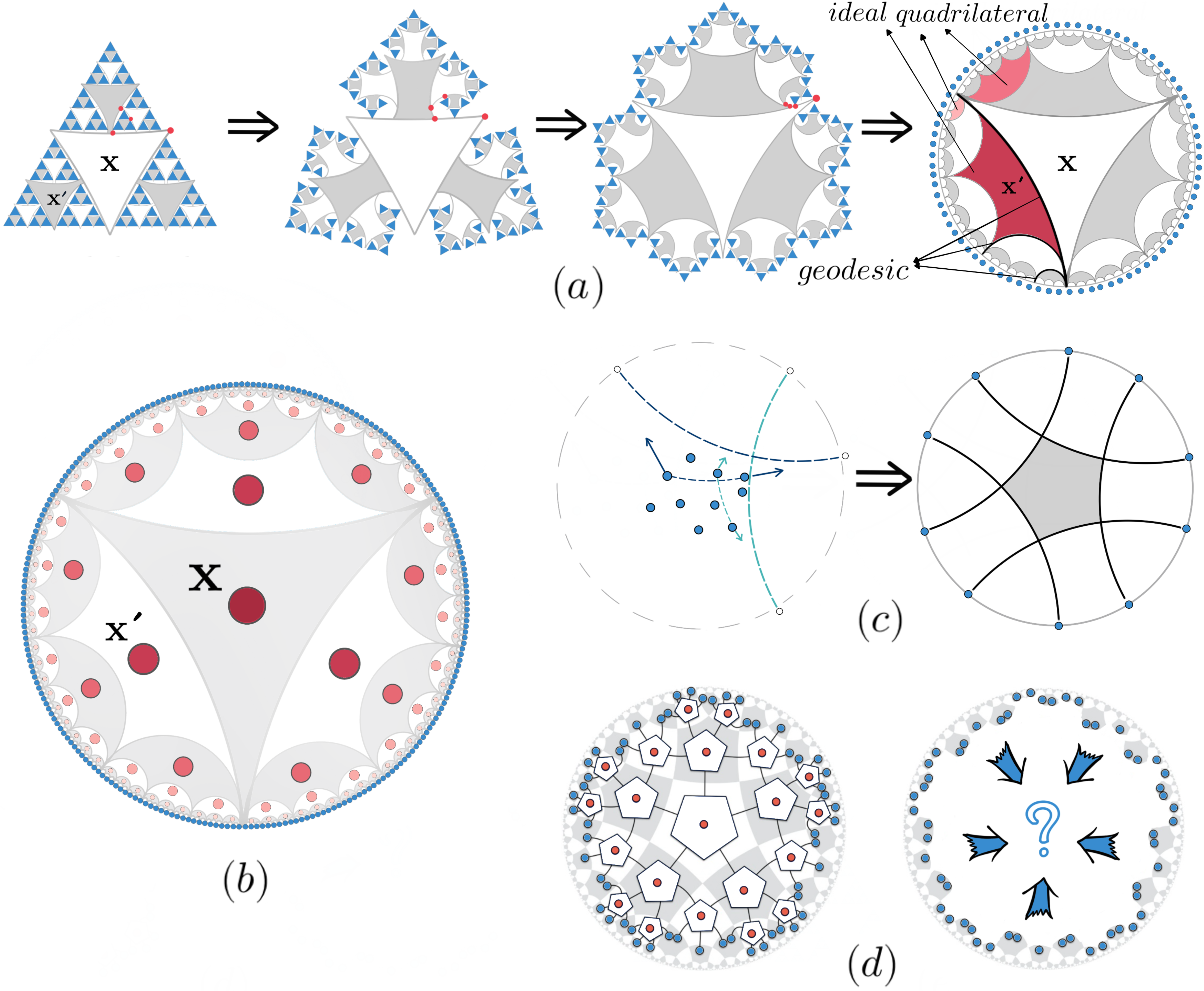}   
\phantomsubfloat{\label{5a}}\phantomsubfloat{\label{5b}}
\phantomsubfloat{\label{5c}}\phantomsubfloat{\label{5d}}
\caption{(a) The rearrangement of the physical qudits from the alternative geometry (Sierpi\'nski fractal) to the standard 1D geometry can automatically specify a geometric structure that underlies an ideal hyperbolic tessellation of the Poincar\'e disk. During the rearrangement, the holes in the alternative geometry are stretched into ideal polygons on the hyperbolic plane whose sides are the geodesics. The red dots indicate the stretch. (b) The bulk geometry discretization, where we identify the discrete location for the bulk qudits as originated from the holes that evolve during the rearrangement. (c) Illustration of how a family of hyperbolic geodesics can be specified  through rearranging pairs of physical qudits from a more ``compact'' geometry to the 1D geometry. The blue and green dashed lines indicate the pairs and are stretched into geodesics in the hyperbolic plane. (d) The illustration of the tensor-network construction of the HaPPY Pentagon code. The two figures of (d) are adapted from figures in Ref.~\cite{pastawski2015}. The geometry of the physical qubits in the HaPPY Pentagon code might be viewed as rearranged from certain unknown alternative geometry in a similar way as described in (a).}
\label{fig5}
\end{figure}
\end{center}
\twocolumngrid

\subsection{Remarks on the bulk discretization}\label{remark1}
The above arguments describes a general mechanism or scenario that appears to naturally incorporate both the specification of the bulk-geometry discretization and a seemingly essential interpretation of uberholography. Before we discuss how this scenario can underlie the construction of exact models of holographic code, it might be important to consider whether and how the established tensor-network models can fit into the above description.

For example, as illustrated in Fig.~\ref{5d}, there might be certain fractional-Hausdorff-dimension alternative geometry for the HaPPY pentagon code, in which the minimal recoveries of bulk local qudits have compact and almost connected geometry. Indeed, following the regular tessellation structures for the pentagon code, i.e., the geodesics (see Fig.~\ref{5d}), we can possibly apply the reverse of the above general scenario to ``shrink'' the uncontracted tensor legs representing the boundary (the blue dots in Fig.~\ref{5d}) to some more compact geometry of fractional Hausdorff dimension. And as we will indicate later in Sec.~\ref{remark2}, there is signature for such possibility from the perspective of the reconstruction of single bulk qudit. Nevertheless, such validation is limited to the insufficient demonstration of uberholography in these established models at their current stage of exploration, and there lacks concrete example of rigorously proved minimal recoveries~\cite{jahn2021}.

In the present work, we adopt a ``constructive'' perspective to directly build exact models of holographic code based on the above general scenario. As we will show, our construction based on the scenario inherently guarantees more systematical demonstrations of uberholography than in the established models. To explicitly show this advantage and to rigorously realize the HQEC characteristics (discussed in Sec.~\ref{hqecc}) based on the above scenario, instead of exploring the wealth of the possible alternative geometries and the rearrangements, we consider the most concise case as shown in Fig.~\ref{5a} and \ref{5b}.

It might be noted that the bulk geometry shown in Fig.~\ref{5b} does not possess the symmetry properties of the regular tessellation shown in Fig.~\ref{5d}. Indeed, in the studies of qudit systems with a finite size where the quantum-information structure can be well-defined, e.g., in quantum simulation studies, the expected symmetry properties at asymptotic limit cannot be formally established. And even if the system can be described in the large-system-size limit, the symmetry remains intrinsically different from that of the continuous AdS/CFT geometry. Hence, in a formal study of holographic codes, it might be a prioritized to focus on how to qualitatively capture the the emergent bulk locality and bulk reconstruction from the boundary entanglement.

It follows that the bulk discretization shown in Fig.~\ref{5b} might already suffice to capture the bulk hierarchical structure needed for the HQEC characteristics. And such bulk discretization appears to have advantage in accompanied by unambiguously specified boundary geometry for explicitly investigating and presenting the entanglement structures of boundary states (comparing Fig.~\ref{5a}, \ref{5d} and Ref.~\cite{anglesmunne2024}).

Indeed, as we will show in Sec.~\ref{remark3}, the exact model built based on the description of Fig.~\ref{5b} can serve as a minimalistic yet comprehensive example of holographic code, upon which models with more complex bulk geometry and more idealized symmetry properties can be straightforwardly extended. In Sec.~\ref{remark3}, we will also discuss the relationship between our model and its possible extensions to the study of AdS/CFT or p-adic AdS/CFT.

\subsection{Basics for defining the encoding}\label{iencoding}
Following the above discussion on the geometric setting, we consider how to define the encoding isometry $\mathfrak{e}_{\boldsymbol{1}}\otimes\mathfrak{e}_{\boldsymbol{2}}\otimes\mathfrak{e}_{\boldsymbol{3}}\otimes\cdots=\mathcal{E}\xrightarrow{R}\mathcal{H}=\mathfrak{h}_1\otimes\mathfrak{h}_2\otimes\mathfrak{h}_3\otimes\cdots$. The key is to ensure that the defined encoding is compatible with the discretization in the standard geometry (see the discussion in the commencement of Sec.~\ref{idea}). And the clue is that the rearrangement not only specify the discretization, but also inherently sketches the subregions of physical qudits in the alternative geometry which are rearranged to the disconnected subregions in the standard geometry that could be potentially realized as minimal recoveries in a model (see Sec.~\ref{demys}).

The clue implies that in conceiving a holographic code with a specific universal scaling component $1/h$, we can work in the alternative geometry: (1) We conceive the physical qudits $\mathfrak{h}_1\otimes\cdots\otimes\mathfrak{h}_i\otimes\cdots\otimes\mathfrak{h}_N$ as on the alternative geometry with fractional-Hausdorff-dimension $h$. (2) We use the holes to index the logical qudits $\mathfrak{e}_{\boldsymbol{1}}\otimes\cdots\otimes\mathfrak{e}_{\boldsymbol{x}}\otimes\cdots\otimes\mathfrak{e}_{K}$ (see Fig.~\ref{fig4} and \ref{5a}). Then, the sketch of minimal recoveries in the alternative geometry with respect to the holes (see Fig.~\ref{fig4}) sets up a criterion for the compatibility: the minimal boundary recoveries as derived from the algebraic properties of $\mathcal{E}\xrightarrow{R}\mathcal{H}$ need to be consistent with the sketch. And if the above criterion for compatibility are satisfied, we can rearrange back to the standard 1D geometry so that in demonstrating the HQEC characteristics the physical qudits truly have the meaning of boundary qudits and the logical qudits truly have the meaning of bulk qudits (see Fig.~\ref{5a} and \ref{5b}).

Crucially, we note that the minimal recoveries encapsulate the bare essentials of the information about the bulk qudits and their hierarchical structure consistent with the bulk radial direction (see Par.~\ref{chac6}). Hence, the encoding is not only expected to compatibly satisfy the HQEC characteristics regarding the reconstruction of single bulk qudits as predicted by the illustrations in Fig.~\ref{4b} and \ref{4c}, it is also expected to derive other HQEC characteristics. This is because given the knowledge of the recovery of single bulk qudits (the local operators or local algebra), for a boundary bipartition, it will be clear which bulk qudits or which part of its local operator algebra can be recovered on one or both sides of the bipartition in specifying the entanglement wedges. This consideration shapes our basic idea, and will be illustrated in our exact model.

To see how to define such a compatible encoding with the alternative geometry, we note that in general, defining an isometry $\mathcal{E}\xrightarrow{R}\mathcal{H}$ is equivalent to specifying (the structure of) the basis states $\{\ket*{\tilde{\varphi}_n}=R\ket{\boldsymbol{\beta}_{\mathbf{1}}\cdots\boldsymbol{\beta}_{\boldsymbol{x}}\cdots\boldsymbol{\beta}_{K}}\}_{1\le n\le d'^{K}}$ that span the code subspace $\mathcal{H}_{\mathrm{code}}$. Here, $\ket{\boldsymbol{\beta}_{\boldsymbol{1}}\cdots\boldsymbol{\beta}_{\boldsymbol{x}}\cdots\boldsymbol{\beta}_{K}}$ is the abbreviation for the bulk qudit-product basis states $\ket{\boldsymbol{\beta}_{\boldsymbol{1}}}\otimes\ket{\boldsymbol{\beta}_{\boldsymbol{2}}}\otimes\ket{\boldsymbol{\beta}_{\boldsymbol{3}}}\otimes\cdots\in\mathfrak{e}_{\boldsymbol{1}}\otimes\mathfrak{e}_{\boldsymbol{2}}\otimes\mathfrak{e}_{\boldsymbol{3}}\otimes\cdots=\mathcal{E}$. In other words, we simply need to establish a one-to-one correspondence between the orthonormal bulk states $\ket{\boldsymbol{\beta}_{\mathbf{1}}\cdots\boldsymbol{\beta}_{\boldsymbol{x}}\cdots\boldsymbol{\beta}_{K}}$s that span $\mathcal{E}$ and a family of orthonormal states $\{\ket*{\tilde{\varphi}_n}\}$ in $\mathcal{H}$.

In terms of this fact, we can view the expected error-correction/operator-reconstruction properties of an encoding $R$ as determined by the structures of the $\ket*{\tilde{\varphi}_n}$ states and their correspondence to the bulk qudit-product-states. In a step further, we might even view the corresponding bulk degrees of freedom $\boldsymbol{\beta}_{\mathbf{1}}\cdots\boldsymbol{\beta}_{\boldsymbol{x}}\cdots\boldsymbol{\beta}_{K}$ (and hence the correspondence) as ``emergent'' from the structures of the $\ket*{\tilde{\varphi}_n}$ states, and hence the desired (algebraic) HQEC characteristics based on the correspondence can all be viewed as rooted in the structures of the family of $\ket*{\tilde{\varphi}_n}$ states.

In the rest context of this section, we will discuss how the ``emergence'' viewpoint of the bulk local degrees of freedom is possible. Then, in the construction following such a viewpoint, we can start with a family of physical-qudits states with particularly defined structure, i.e., the potential boundary code basis $\ket*{\tilde{\varphi}_n}$s, based on which we can specify the ``emergent'' degrees of freedom $\boldsymbol{\beta}_{\mathbf{1}},\ldots,\boldsymbol{\beta}_{\boldsymbol{x}},\ldots,\boldsymbol{\beta}_{K}$s and define the encoding isometry. The following discussion will underlie a new possibility and a new scenario for building the encoding isometry for a holographic code, which are featured by ``emergence''.

\subsection{Entanglement patterns as structures of states}\label{pattern0}
In the following arguments, we employ a backward induction approach: We start with \emph{imagining} an encoding $R$ structure with a code basis $\{\ket*{\tilde{\varphi}_n}=R\ket{\boldsymbol{\beta}_{\mathbf{1}}\cdots\boldsymbol{\beta}_{\boldsymbol{x}}\cdots\boldsymbol{\beta}_{K}}\}_{1\le n\le d'^{K}}$, which satisfies the HQEC characteristics and is compatible with the alternative geometry setting as described above. Then, we work \emph{backward} on deriving necessary conditions on the structure of the $\ket*{\tilde{\varphi}_n}$s by showing that certain essential operator-reconstruction properties on the minimal recoveries can manifest on the structure of the code basis states. Thus, the derived conditions can serve as a guideline for incorporating the sketch of minimal recoveries in the alternative geometry, and hence for specifying the desired state structures. Indeed, the derived conditions can also sharpen the sketch.


\paragraph*{\textbf{Structure of a state}} Before proceeding to elaborate the above basic idea, we annotate on the ``structure'' of a state. While it is intricate to generally characterize the structure of a many-body state, we can loosely regard the structure of $\ket*{\tilde{\varphi}_n}$ as the whole of two aspects: (1) What qudit-product-states $\{\ket{\alpha_1\cdots\alpha_i\cdots\alpha_N}\}$s expand $\ket*{\tilde{\varphi}_n}$; (2) and how these qudit-product-states expand $\ket*{\tilde{\varphi}_n}$, i.e., the expansion coefficients. The two aspects are expected to effectively distinguish different structures among states within certain context.

\paragraph*{\textbf{Assumption on the expansion}} To simplify the formal description of the structure of states and to better elucidate the meaning of ``emergence'', we make the following assumption. We assume that in the encoding $R$ which is to be considered in the following arguments or to be construct in our model, each code basis state $\ket*{\tilde{\varphi}_n}$ (as mentioned above) is an equal-weight sum of certain qudit-product-states from the basis $\{\ket{\alpha_1\cdots\alpha_i\cdots\alpha_N}\}$. In general, the expansions of the $\ket*{\tilde{\varphi}_n}$s do not exhaust all the basis states, hence only a part of such states participate in the expansions. For convenience, we denote such qudit-product-states by $\{\ket{\psi_m}=\ket{\alpha_1\cdots\alpha_i\cdots\alpha_N}\}$.

\paragraph*{\textbf{Entanglement pattern represents structure}}\label{pattern} Following the above assumption, the coefficients in the expansion of a $\ket*{\tilde{\varphi}_n}$ state are constant. Hence, the structure of a $\ket*{\tilde{\varphi}_n}$ state can be solely captured by those qudit-product-states $\ket{\psi_m}$s that expand $\ket*{\tilde{\varphi}_n}$. Note that one qudit-product-state $\ket{\psi_m}$ is completely determined by the configuration $\alpha_1,\ldots,\alpha_i,\ldots,\alpha_N$ of the local physical-qudit degrees of freedom. Hence, a sub-collection of the $\ket{\psi_m}$ states, e.g., those expand a basis state $\ket*{\tilde{\varphi}_n}$, is specified by certain constraint on the configurations. For example, the constraint can be understood as the pattern of how up and down spins are arranged as manifesting different types of quantum ordering, which are intensively studied in the context of condensed matter physics~\cite{levin2005,chen2010,wang2017,zeng2019}. Here, we borrow the term ``\emph{entanglement pattern}'' from the context of condensed matter physics to represent the constraint that specifies a sub-collection of the $\ket{\psi_m}$ states in the expansion of $\ket*{\tilde{\varphi}_n}$.

Note that similar to the usage in condensed matter physics, here the ``entanglement pattern'' is used to represent the structure of states. The term ``entanglement'' is included because the constraint of configurations are regarded as an \emph{intuitive} nonlocal characterization of the many-body entanglement of states~\cite{zeng2019}, though, it is not clear how the many-body entanglement can be formally captured. Indeed, it will be feasible to explore the desired structure of states if the attempt focuses on the constraints of configurations $\{\alpha_1,\ldots,\alpha_i,\ldots,\alpha_N\}$. This point will be clear in terms of the following discussion and the illustrative model. In the following texts, we will use ``entanglement pattern'' and ``pattern'' interchangeably. 

The following arguments only rely on the equal-weight-sum assumption and the algebraic properties from the HQEC characteristics. The following two corollaries is even independent on the assumption, but only follow the HQEC characteristics.





\subsection{Conditions and guidelines}\label{cencoding}

\subsubsection{Correspondence between entanglement patterns and bulk degrees of freedom}
\paragraph*{\textbf{Condition 1}} Following the assumption of equal-weight summation, it is easy to show that in the \emph{imagined} encoding $R$, any two orthogonal basis states $\ket*{\tilde{\varphi}_n}$ and $\ket{\tilde{\varphi}_{n'}}$ must be expanded by disjoint sub-collections of the $\ket{\psi_m}$ states. In other words, each $\ket{\psi_m}$ only contributes to the expansion of one $\ket*{\tilde{\varphi}_n}$, and the structures of the $\ket*{\tilde{\varphi}_n}$ states are characterized by distinct patterns on the configurations $\alpha_1\cdots\alpha_i\cdots\alpha_N$. It implies that the correspondence between the bulk qudit-produt-state $\ket{\boldsymbol{\beta}_{\mathbf{1}}\cdots\boldsymbol{\beta}_{\boldsymbol{x}}\cdots\boldsymbol{\beta}_{K}}$ and the code basis state $\ket*{\tilde{\varphi}_n}$ is essentially a correspondence between the entanglement patterns and the bulk local degrees of freedom. It further means that distinguishing different configurations of bulk local degrees of freedom is equivalent to distinguishing different patterns of boundary local degrees of freedom.




The above condition shows a feasible framework to establish the correspondence in building the structure of states, i.e., to explore the distinguishability among the entanglement patterns. This framework underlies the guideline, and will be the central thread running through our construction of the illustrative model.


\subsubsection{Distinguishability of entanglement patterns}

To see the difference of the entanglement patterns corresponding to the difference in bulk local degrees of freedom, we can consider
\begin{align}\label{eg1}
\begin{split}
\ket*{\tilde{\varphi}_n}&=R\ket{\boldsymbol{\beta}_{\mathbf{1}}\cdots\boldsymbol{\beta}_{\boldsymbol{x}}\cdots\boldsymbol{\beta}_{K}}\\
\ket{\tilde{\varphi}_{n'}}&=R\ket{\boldsymbol{\beta}_{\mathbf{1}}\cdots\boldsymbol{\beta}'_{\boldsymbol{x}}\cdots\boldsymbol{\beta}_{K}}
\end{split}
\end{align}
to derive more specific conditions. Here the only difference is $\boldsymbol{\beta}'_{\boldsymbol{x}}\ne\boldsymbol{\beta}_{\boldsymbol{x}}$, other $\boldsymbol{\beta}_{\boldsymbol{x}'}$s with $\boldsymbol{x}'\ne \boldsymbol{x}$ are all the same. Indeed, for any two bulk qudit-product-states, we can interpolate a sequence of bulk qudit-product-states such that every neighboring two states differ only in one single-qudit-state. Hence, studying the entanglement patterns for $\ket*{\tilde{\varphi}_n}$ and $\ket{\tilde{\varphi}_{n'}}$ as defined in Eq.~\ref{eg1} can reveal general properties of the entanglement patterns.

Now, we consider a minimal subregion $A$ that recover $\mathcal{M}(\boldsymbol{x})$. Then, according to Lemma.~\ref{oaqec} and Prop.~\ref{cr2}, we can straightforwardly prove the following corollary~\footnote{For the proof, we notice that there is some operator $Q_A$ commuting with $P_{\mathrm{code}}$ and satisfying $P_{\mathrm{code}}Q_AP_{\mathrm{code}}=R\dyad{\boldsymbol{\beta}'_{\boldsymbol{x}}}{\boldsymbol{\beta}_{\boldsymbol{x}}}R^+$. Then, we have $\ket{\tilde{\varphi}_{n'}}=P_{\mathrm{code}}Q_AP_{\mathrm{code}}\ket*{\tilde{\varphi}_n}$. With this equality, we can easily show that $\ev{O_A}{\varphi_n}=1$ while $\ev{O_A}{\varphi_{n'}}=\ev{O_A}{P_{\mathrm{code}}Q_AP_{\mathrm{code}}\varphi_n}=0$. In a similar way, using the commutativity between operators recoverable on $A$ and operators supported on $\overline{A}$, we can also easily show that $\ev{O_{\overline{A}}}{\varphi_n}=\ev{O_{\overline{A}}}{\varphi_{n'}}$ and $\mel{\varphi_n}{O'_{\overline{A}}}{\varphi_{n'}}=0$ for any $O_{\overline{A}}$ and $O'_{\overline{A}}$.}.
\begin{corollary}\label{idea1}
Assume a well-defined encoding isometry $R$ satisfying Characteristic 2 (complementary recovery). Consider the states $\ket*{\tilde{\varphi}_n}$ and $\ket{\tilde{\varphi}_{n'}}$ as specified above, and a minimal boundary subregion $A$ recovering $\mathcal{M}(\boldsymbol{x})$. Then, (1) any boundary operators $O_A$ and $O'_A$ that reconstruct $\dyad{\boldsymbol{\beta}_{\boldsymbol{x}}}$ and $\dyad{\boldsymbol{\beta}'_{\boldsymbol{x}}}$ respectively can distinguish $\ket*{\tilde{\varphi}_n}$ and $\ket{\tilde{\varphi}_{n'}}$; (2) $\ket*{\tilde{\varphi}_n}$ and $\ket{\tilde{\varphi}_{n'}}$ are indistinguishable for any boundary operators (not necessarily commuting with $P_{\mathrm{code}}$) on the complement subregion $\overline{A}$.
\end{corollary}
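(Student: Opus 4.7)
The plan is to reduce both claims to the algebraic consequences of complementary recovery, using Lemma~\ref{oaqec} and Prop.~\ref{cr2} to translate between bulk operators on $\mathcal{E}$ and boundary operators commuting with $P_{\mathrm{code}}$. I note in advance that the argument does not use the minimality of $A$ anywhere; all that is required is that $A$ reconstructs $\mathcal{M}(\boldsymbol{x})$, so the corollary actually holds under the milder hypothesis.

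For part (1), the key step is to construct a ``transition'' boundary operator on $A$ that maps $\ket*{\tilde{\varphi}_n}$ to $\ket{\tilde{\varphi}_{n'}}$. Since $\mathcal{M}(\boldsymbol{x})$ is reconstructible on $A$, Prop.~\ref{cr2} (applied to the logical operator $R\dyad{\boldsymbol{\beta}'_{\boldsymbol{x}}}{\boldsymbol{\beta}_{\boldsymbol{x}}}R^+\in\mathcal{M}(\boldsymbol{x})$) provides a $Q_A$ with $[Q_A,P_{\mathrm{code}}]=0$ satisfying $P_{\mathrm{code}}Q_A P_{\mathrm{code}}=R\dyad{\boldsymbol{\beta}'_{\boldsymbol{x}}}{\boldsymbol{\beta}_{\boldsymbol{x}}}R^+$, so that $\ket{\tilde{\varphi}_{n'}}=P_{\mathrm{code}}Q_A P_{\mathrm{code}}\ket*{\tilde{\varphi}_n}$. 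For any reconstructions $O_A,O'_A$ of $\dyad{\boldsymbol{\beta}_{\boldsymbol{x}}}$ and $\dyad{\boldsymbol{\beta}'_{\boldsymbol{x}}}$, one has $P_{\mathrm{code}}O_A P_{\mathrm{code}}=R(\cdots\otimes\dyad{\boldsymbol{\beta}_{\boldsymbol{x}}}\otimes\cdots)R^+$ and analogously for $O'_A$. Reading off matrix elements with $R^+R=\mathds{1}_{\mathcal{E}}$ gives $\ev{O_A}{\tilde{\varphi}_n}=1$, $\ev{O_A}{\tilde{\varphi}_{n'}}=0$ (and symmetrically for $O'_A$), so the two states are distinguished.

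For part (2), I would invoke Lemma~\ref{oaqec} on the complementary side: correctability of $\overline{A}$ with respect to $\mathcal{M}(\boldsymbol{x})$ yields $P_{\mathrm{code}}O_{\overline{A}}P_{\mathrm{code}}\in\mathcal{M}(\boldsymbol{x})'$ for every $O_{\overline{A}}$, including those not commuting with $P_{\mathrm{code}}$. Pulling back through $R$, this operator acts block-diagonally in the bulk $\boldsymbol{x}$-basis, with each diagonal block depending only on the other bulk indices. Because $\ket*{\tilde{\varphi}_n}$ and $\ket{\tilde{\varphi}_{n'}}$ share every bulk index \emph{except} at $\boldsymbol{x}$, the diagonal entries produce equal expectation values $\ev{O_{\overline{A}}}{\tilde{\varphi}_n}=\ev{O_{\overline{A}}}{\tilde{\varphi}_{n'}}$, and the vanishing off-diagonal structure gives $\mel{\tilde{\varphi}_n}{O'_{\overline{A}}}{\tilde{\varphi}_{n'}}=0$. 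The only point requiring care is a bookkeeping one: remembering that all matrix elements between code states depend only on $P_{\mathrm{code}}O_{\overline{A}}P_{\mathrm{code}}$, which is what lets the conclusion extend to arbitrary $O_{\overline{A}}$, not just those commuting with $P_{\mathrm{code}}$. Once this observation is in hand, the remainder is a short computation.
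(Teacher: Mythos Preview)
Your proposal is correct and, for part~(1), essentially identical to the paper's footnote proof: both construct the transition operator $Q_A$ with $P_{\mathrm{code}}Q_AP_{\mathrm{code}}=R\dyad{\boldsymbol{\beta}'_{\boldsymbol{x}}}{\boldsymbol{\beta}_{\boldsymbol{x}}}R^+$ and read off the expectation values (you could in fact dispense with $Q_A$ entirely in part~(1) and just evaluate $\ev{O_A}{\tilde{\varphi}_n}$ via $R^+R=\mathds{1}_{\mathcal{E}}$, as you do). For part~(2) there is a mild difference in route: the paper stays on the boundary side, using the disjoint-support commutativity $[Q_A,O_{\overline{A}}]=0$ together with $Q_A\ket*{\tilde{\varphi}_n}=\ket{\tilde{\varphi}_{n'}}$ to push the computations through; you instead invoke Lemma~\ref{oaqec} to place $P_{\mathrm{code}}O_{\overline{A}}P_{\mathrm{code}}$ in $\mathcal{M}(\boldsymbol{x})'$ and then pull back to the bulk, where the operator becomes $\mathds{1}_{\mathfrak{e}_{\boldsymbol{x}}}\otimes(\text{rest})$ and the conclusion is immediate. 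Both routes are short and rely on the same underlying input (correctability of $\overline{A}$ with respect to $\mathcal{M}(\boldsymbol{x})$); your version is a bit more structural, the paper's a bit more computational. Your observation that minimality of $A$ is not used is also accurate. One small citation issue: the existence of $Q_A$ follows directly from the hypothesis that $A$ reconstructs $\mathcal{M}(\boldsymbol{x})$ (i.e.\ the definition of reconstruction, or Lemma~\ref{oaqec}), not from Prop.~\ref{cr2}, which is a criterion for complementary recovery and is not actually needed here.
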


\paragraph*{\textbf{Condition 2}} This corollary brings to our mind the importance of the sub-configuration $\alpha_{i_1}\alpha_{i_2}\cdots$ of physical qudits in the minimal subregion $A$ and the sub-configuration $\alpha_{j_1}\alpha_{j_2}\cdots$ of physical qudits in the complement $\overline{A}$. In terms of the correspondence between the entanglement patterns and $\{\ket{\boldsymbol{\beta}_{\mathbf{1}}\cdots\boldsymbol{\beta}_{\boldsymbol{x}}\cdots\boldsymbol{\beta}_{K}}\}$, Cor.~\ref{idea1} implies that contrasting some difference on the sub-configuration $\alpha_{i_1}\alpha_{i_2}\cdots$ suffices to distinguish the patterns corresponding to the difference in $\boldsymbol{\beta}_{\boldsymbol{x}}\ne\boldsymbol{\beta}'_{\boldsymbol{x}}$. On the contrary, no difference on complement sub-configuration $\alpha_{j_1}\alpha_{j_2}\cdots$ can distinguish the patterns. This indeed manifests the basic properties of quantum error-correction as stated in Lemma~\ref{oaqec}.


This condition combines the expectation on the entanglement patterns from three aspects. First, it describes how the operator-reconstruction properties can manifest on the entanglement patterns and the correspondence. Second, it indicates that the conditions regarding the minimal recoveries suffices to guide specifying the ``emergent'' degrees of freedom in building the structure of states. Third, it is exactly the entry point to incorporate the sketch of subregions in the alternative geometry that are rearranged to potential minimal recoveries in the standard geometry.

If we incorporate the the sketch in Fig.~\ref{fig4}, i.e., viewing the open paths surrounding the hole $\boldsymbol{x}$ as potential sub-configurations in the above condition for distinguishing $\boldsymbol{\beta}_{\boldsymbol{x}}\ne\boldsymbol{\beta}'_{\boldsymbol{x}}$, then the condition further outlines the search for the entanglement patterns. That is, in the candidate patterns, the emergent $\{\boldsymbol{\beta}_{\boldsymbol{x}}\}$, i.e., certain distinguishable features, needs to be barely read off the sub-configuration $\alpha_{i_1}\alpha_{i_2}\cdots$ supported on a sketched open path while it can never be read off the complement sub-configuration $\alpha_{j_1}\alpha_{j_2}\cdots$.


\paragraph*{\textbf{Condition 3}} Now, if we consider the indistinguishability on $\overline{A}$, Cor.~\ref{idea1} also implies properties on the expansion of the code basis states. Indeed, if some $\ket{\psi_m}=\ket{\alpha_1\cdots\alpha_i\cdots\alpha_N}$ in the expansion of $\ket*{\tilde{\varphi}_n}$ meets the sub-configuration $\alpha_{j_1}\alpha_{j_2}\cdots$ on $\overline{A}$, then there must be some $\ket{\psi_{m'}}=\ket{\alpha'_1\cdots\alpha'_i\cdots\alpha'_N}$ in the expansion of $\ket{\tilde{\varphi}_{n'}}$, which also meets the same sub-configuration on $\overline{A}$, and vice versa. It follows that the appearance frequency of such sub-configuration in the expansion of $\ket*{\tilde{\varphi}_n}$ and $\ket{\tilde{\varphi}_{n'}}$ must be equal, otherwise some operator supported on $\overline{A}$ can distinguish $\ket*{\tilde{\varphi}_n}$ and $\ket{\tilde{\varphi}_{n'}}$ through the sub-configuration. Consequently, the states $\ket*{\tilde{\varphi}_n}$, $\ket{\tilde{\varphi}_{n'}}$, and all other basis states spanning the code subspace are expanded by the same number of qudit-product-states~\footnote{The argument of sub-configuration $\alpha_{j_1}\alpha_{j_2}\cdots$ in the complement $\overline{A}$ applies to any $\ket*{\tilde{\varphi}_n}=R\ket{\boldsymbol{\beta}_{\mathbf{1}}\cdots\boldsymbol{\beta}_{\boldsymbol{x}}\cdots\boldsymbol{\beta}_{K}}$ and $\ket{\tilde{\varphi}_{n'}}=R\ket{\boldsymbol{\beta}_{\mathbf{1}}\cdots\boldsymbol{\beta}'_{\boldsymbol{x}}\cdots\boldsymbol{\beta}_{K}}$. And it simply means that the numbers of qudit-product-states in their expansion are the same. Then, since any two basis states are indeed the two ends of a sequence of such states appearing pairwise, we conclude that all such basis states are expanded by the same number of qudit-product-states.}. In other words, different entanglement patterns specify the same number of the $\ket{\psi_m}$ states.



\subsubsection{Manifestation of further operator-reconstruction property}

Further investigation on the properties of minimal boundary reconstructions reveals more restrictive conditions, and hence will provide more concrete clues. For example, the following corollary describes the basic properties of the sub-configurations as mentioned above, and can be easily proved~\footnote{The minimal boundary subregion for recovering $\mathcal{M}(\boldsymbol{x})$ cannot be unique, since Characteristic 1 guarantees that any small enough part of a minimal recovery is correctable, i.e., there exists another minimal recovery that excludes this part. However, two minimal recovery cannot be disjoint, since if so then one of the two should be correctable, and we get a contradiction to the nature of a minimal recovery.}.
\begin{corollary}
Assume a well-defined encoding $R$ that satisfies Characteristic 1. Then the minimal boundary subregion for recovering $\mathcal{M}(\boldsymbol{x})$ is not unique. And for any two different such minimal subregions $A_1$ and $A_2$, the overlap $A_1\cap A_2$ is nonempty.
\end{corollary}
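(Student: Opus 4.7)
The plan is to prove the two claims separately, using only Characteristic 1 (linear scaling of the connected distance) together with Lemma~\ref{oaqec} and the fact that $\mathcal{M}(\boldsymbol{x})\cong\mathbf{L}(\mathbb{C}^{d'})$ is a factor with trivial center.

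For non-uniqueness, the idea is simply that any single qudit of a minimal recovery $A$ can be replaced. Fix any $i\in A$. The singleton $\{i\}$ is connected of size one, and Characteristic 1 guarantees $\mathrm{d_c}(\boldsymbol{x})$ scales linearly in $N$, so for any reasonable system size $1<\mathrm{d_c}(\boldsymbol{x})$. Hence $\{i\}$ is correctable with respect to $\mathcal{M}(\boldsymbol{x})$, i.e.\ $\mathcal{M}(\boldsymbol{x})$ can be reconstructed on $\overline{\{i\}}$. Now iteratively remove qudits from $\overline{\{i\}}$ as long as $\mathcal{M}(\boldsymbol{x})$ remains reconstructable; since the boundary is finite this terminates in a minimal recovery $A'\subset\overline{\{i\}}$. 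Since $i\in A$ but $i\notin A'$, we have $A'\neq A$.

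For the overlap claim, I would argue by contradiction: suppose two minimal recoveries satisfy $A_1\cap A_2=\emptyset$, so $A_2\subset\overline{A_1}$. Since $A_1$ recovers $\mathcal{M}(\boldsymbol{x})$, Lemma~\ref{oaqec} tells us $\overline{A_1}$ is correctable with respect to $\mathcal{M}(\boldsymbol{x})$, and correctability is monotonic under containment, so $A_2$ is also correctable. Applying Lemma~\ref{oaqec} once more, this forces every $P_{\mathrm{code}}O_{A_2}P_{\mathrm{code}}$ to lie in $\mathcal{M}(\boldsymbol{x})'$. On the other hand, because $A_2$ is itself a recovery, every logical operator in $\mathcal{M}(\boldsymbol{x})$ is of the form $P_{\mathrm{code}}O_{A_2}P_{\mathrm{code}}$ for some $O_{A_2}$ commuting with $P_{\mathrm{code}}$. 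Combining the two inclusions yields $\mathcal{M}(\boldsymbol{x})\subset\mathcal{M}(\boldsymbol{x})'$, hence $\mathcal{M}(\boldsymbol{x})\subset\mathrm{Z}(\mathcal{M}(\boldsymbol{x}))=\mathbb{C}\mathds{1}$, contradicting $\mathcal{M}(\boldsymbol{x})\cong\mathbf{L}(\mathbb{C}^{d'})$ for $d'\geq 2$.

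The main obstacle is really just the careful bookkeeping between the two equivalent phrasings of Lemma~\ref{oaqec} (correctability of $\overline{A}$ versus reconstructability on $A$), since the argument needs to toggle between them twice within a single chain of reasoning. Once that toggle is set up cleanly and the factor property of the single-bulk-qudit algebra is invoked, both statements follow without any model-specific input.
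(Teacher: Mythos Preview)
Your proof is correct and follows essentially the same route as the paper's (terse footnote) argument. For non-uniqueness you both use Characteristic~1 to erase a small piece of a given minimal recovery and then pass to a minimal recovery inside the complement; for the overlap claim you both observe that disjointness would make one recovery correctable with respect to $\mathcal{M}(\boldsymbol{x})$, and you make explicit the final step the paper leaves implicit---that a region which is simultaneously correctable and a recovery forces $\mathcal{M}(\boldsymbol{x})\subset\mathcal{M}(\boldsymbol{x})'$, contradicting the factor property of $\mathbf{L}(\mathbb{C}^{d'})$.
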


\paragraph*{\textbf{Condition 4}} This corollary implies that the sub-configurations (and its underlying subregions), upon which we can distinguish the entanglement patterns corresponding to the difference $\boldsymbol{\beta}_{\boldsymbol{x}}\ne\boldsymbol{\beta}'_{\boldsymbol{x}}$, is not unique. Here, the clue is that in specifying an emergent local degrees of freedom $\{\boldsymbol{\beta}_{\boldsymbol{x}}\}$ by suitably defining the entanglement patterns, the distinguishable feature corresponding to $\boldsymbol{\beta}_{\boldsymbol{x}}\ne\boldsymbol{\beta}'_{\boldsymbol{x}}$ should apply to a large enough subregion of physical qudits, which encompasses different potential minimal subregions for recovery. However, it also says that the distinguishable feature corresponding to $\boldsymbol{\beta}_{\boldsymbol{x}}\ne\boldsymbol{\beta}'_{\boldsymbol{x}}$ can be simply read off one of such subregions. This seemingly paradoxical condition further narrow down the exploration of the satisfactory entanglement patterns, which will be illustrated in our model.

It is noticeable that this clue is consistent with the sketch for minimal recoveries, e.g., open paths surrounding a hole can be multiple as shown in Fig.~\ref{4a} and \ref{4b}. It also sharpens the sketch: those open paths surrounding a hole and to be rearranged to potential minimal recoveries in the standard geometry should overlap with each other. Moreover, it says that in candidate entanglement patterns in our search, the sub-configurations on different open paths surrounding the same hole should be correlated, and deriving more restrictive conditions can clarify such correlation.

\paragraph*{\textbf{Condition 5}} The above condition seems also paradoxical to Cor.~\ref{idea1}: sub-configuration on $\overline{A_1}$ appears unrelated to the bulk degrees of freedom at $\boldsymbol{x}$, but it has nonzero overlap with that in $A_2$, and hence partially participates in the emergence of the bulk degrees of freedom. Indeed, this observation points to further restricted conditions on the entanglement patterns: Sub-configurations on the minimal subregions $A_1,A_2,\ldots$ are also minimal in distinguishing the patterns corresponding to $\ket*{\tilde{\varphi}_n}=R\ket{\boldsymbol{\beta}_{\mathbf{1}}\cdots\boldsymbol{\beta}_{\boldsymbol{x}}\cdots\boldsymbol{\beta}_{K}}$ and $\ket{\tilde{\varphi}_{n'}}=R\ket{\boldsymbol{\beta}_{\mathbf{1}}\cdots\boldsymbol{\beta}'_{\boldsymbol{x}}\cdots\boldsymbol{\beta}_{K}}$. It follows that sub-configurations on the intersection of two minimal subregions cannot distinguish the entanglement patterns. This condition indeed manifests Characteristic 4.2 of HQEC: a bulk local qudit can be recovered in two different boundary subregions but not in their intersection.

Based on how Condition 4 sharpens the sketch, ``solving'' the above seemingly paradoxical condition by investigating how sub-configurations on different paths (see Fig.~\ref{fig5}) can be correlated might unveil form of the desired entanglement patterns.

\begin{center}
\begin{figure}[ht]
\centering
    \includegraphics[width=8.5cm]{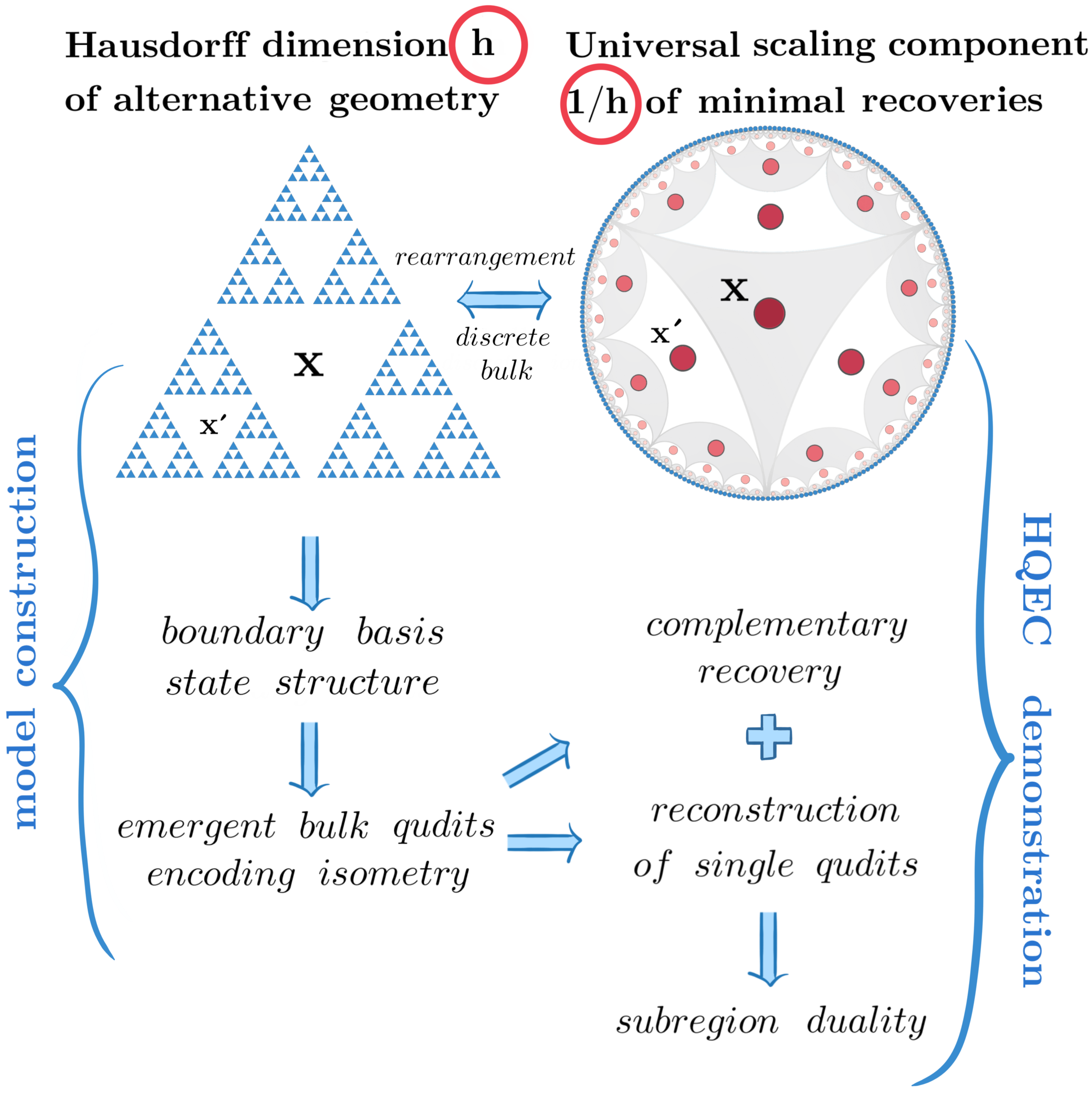}   
\caption{An outline of the steps to construct a holographic code and demonstrate the HQEC characteristics.}
\label{fig6}
\end{figure}
\end{center}

\subsubsection{Outline of construction and demonstration}\label{outline}
In principle, we can keep exploring the conditions and further concretize the guideline. However, a concrete illustration as an exact model of holographic code might be more needed to elucidate our approach as already detailed above.

As shown in Fig.~\ref{fig6}, we summarize the above discussion into basic steps for building a model and demonstrating the HQEC characteristics. The underlying idea is to identify the $h$ in the universal scaling component $1/h$ (to be demonstrated for uberholography) as the fractional Hausdorff dimension of certain alternative geometry of the physical qudits, and to specify a rearrangement between the alternative geometry and the standard 1D geometry which inherently gives rise to the bulk discretization. The geometric rearrangement translates between two different sets of geometries for both the physical/boundary and logical/bulk qudits (see Fig.~\ref{5a}, \ref{5b} and \ref{fig6}).

We firstly build the model in the alternative geometric setting, i.e., indexing the physical qudits with the vertices of the fractional-Hausdorff-dimension lattice and index the logical qudits with the holes of the lattice, and specify the structure of the encoding according to the above guidelines. Then, rearranging the system, i.e., the physical/boundary and logical/bulk qudits, to the standard geometric setting (see Fig.~\ref{5b} and \ref{fig6}), the qudits are automatically indexed in the standard way so that we can demonstrate the HQEC characteristics. 



This outline also describes the following construction of an exact illustrative model which will explicitly confirm our basic idea for the new approach of building holographic codes. And as discussed in Sec.~\ref{remark1}, based on the concise yet comprehensive model, more advanced and complex models should be straightforwardly developed by exploring the wealth of the fractaional-Hausdorff-demension geometries and associated rearrangements.

\section{Exact model}\label{exmodel}
In this section, we continue elaborating on the new approach. We concretize the meaning of ``emergence'' in our approach, and illustrate the basic idea in Sec.~\ref{iencoding} through the construction of an explicit encoding isometry with the physical/boundary qudits living on the alternative geometry and the logical/bulk qudits indexed by the holes. Especially, we show that there are explicit state structures, as described by entanglement patterns, which satisfy the conditions derived in Sec.~\ref{cencoding}. Our construction starts with specifying these states, i.e., those to be proved as the code basis states $\ket*{\tilde{\varphi}_n}$s, then, we formally establish the encoding isometry from the these states. Based on the formulation in this section, after rearranging the alternative geometric setting to the standard setting (see Fig.~\ref{fig4}, \ref{fig5} and \ref{fig6}), i.e., when viewing the qudits as living on the 1D boundary and the hyperbolic bulk respectively, we will rigorously prove the HQEC characteristics in the following sections.

\onecolumngrid
\begin{center}
\begin{figure}[ht]
\centering
    \includegraphics[width=17cm]{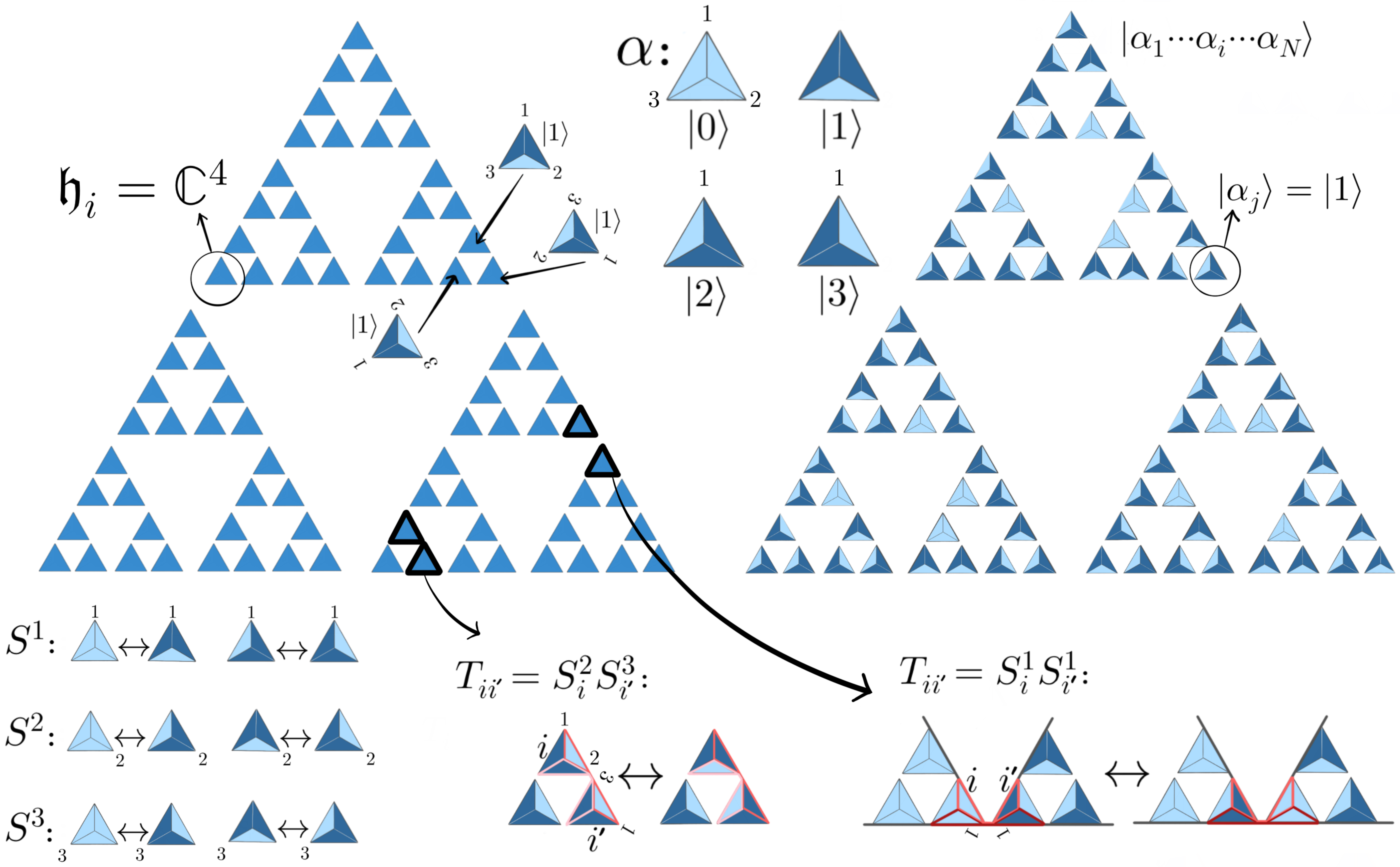}   
\caption{Pictorial representation of a physical qudit-product-state $\ket{\alpha_1\cdots\alpha_i\cdots\alpha_N}$, the convention of orientation, and illustrations of operators $S^1,S^2,S^3,T_{ii'}$. Each colored triangle has three sides with either two dark blue ones and a light blue one, or with three sides all light blue. In fitting the colored triangles to the lattice, we specify the convention of orientation by indicators $1,2,3$ for the three angles, or equivalently, the three axis. We view $1$ as the top, and the three indicator always appear clockwise from $1$ to $2$ and to $3$. The convention in fitting is that the top is always outward relative to each smallest triangular three-qudit block. The support as neighboring pairs of vertices for $T_{ii'}$ is highlighted by black solid lines. Note that the illustrations for $S^1,S^2,S^3$ include all possibilities of $\ket{\alpha}$ and hence completely describe the actions, while the illustration for a $T_{ii'}$ only include one concrete example of the $\ket{\alpha}$s.}
\label{fig7}
\end{figure}
\end{center}
\twocolumngrid

\subsection{Minimal data set}
Specifically, we arrange $N$ physical (boundary) qudits in the Sierpi\'nski geometry, which can be rearranged to the standard 1D geometry as on the boundary (see Fig.~\ref{fig4}, \ref{fig5} and \ref{fig7}). Each physical qudit is a ququart ($d=4$) with a local basis denoted by $\ket{\alpha}=\ket{0},\ket{1},\ket{2},\ket{3}\in\mathbb{C}^4$, and can be understood as a $3/2$-spin. Our choice of the ququarts coincides with recent experimental development of quantum simulator where the local degrees of freedom in the quantum hardware has inherently four levels or other number of levels greater than two~\cite{low2020,chi2022,ringbauer2022,seifert2023a,liu2023,fischer2023}.

The guiding principle for our construction of the encoding isometry is the establishment of a correspondence between the boundary state structures that describe the physical qudits and certain degrees of freedom ``emergent'' from the entanglement patterns (see Sec.~\ref{pattern0}). The emergent degrees of freedom will be proved to form the bulk, while the correspondence will derive the formulation of the encoding isometry.

In establishing the framework described by the guideline, all the components, including the qudit-product-states $\{\ket{\psi_m}\}$ that expand the boundary code basis states $\{\ket*{\tilde{\varphi}_n}\}$ (see Sec.~\ref{pattern0}), can be viewed as originated from a minimal data set on the physical qudits, i.e., 
\begin{equation}\label{data}
\ket{0\cdots0\cdots0},\quad \{T_{ii'}\},
\end{equation}
a qudit-product-state together with a family of two-qudit quantum gates. For example, while the qudit-product-state $\ket{\psi_m}=\ket{\alpha_1\cdots\alpha_i\cdots\alpha_N}$ can be characterized by certain properties of the configuration $\alpha_1,\ldots,\alpha_i,\ldots,\alpha_N$, it can also be equivalently defined as
\begin{equation}\label{dpsi}
\ket{\psi_m}=(T_{ii'}T_{jj'}\cdots)\ket{0\cdots0\cdots0},
\end{equation} 
where the product goes through certain gates within the family. The gates and the $\ket{\psi_m}$ states are inspired by our recent work~\cite{wang2023} on the long-range entanglement in many-body physics. While they are not necessary to define or to describe the model, they form a bridge to potential quantum simulation studies of the model.

In describing the $\ket{\psi_m}$s and the $\ket*{\tilde{\varphi}_n}$s, the gates also provide convenience in formally representing essential properties of the structures of the states. Hence, the gates will also be a powerful tool for demonstrating the HQEC characteristics in terms of the structures. In the following, our arguments will be based on both the properties of the gates and pictorial representations of the configuration $\alpha_1,\ldots,\alpha_i,\ldots,\alpha_N$ that specify the $\ket{\psi_m}$ states.

\subsection{Qudit-product-state}
To describe the operators and the states in a straightforward and concise way, we utilize a pictorial representation of the local qudit basis state $\ket{\alpha}$ so that we can illustrate the configuration $\alpha_1,\ldots,\alpha_i,\ldots,\alpha_N$ explicitly in pictures. In this way, the patterns of the configurations (see Par.~\ref{pattern}) for defining the boundary code states $\{\ket*{\tilde{\varphi}_n}\}$ will be clear.

\paragraph*{\textbf{Pictorial representation}}\label{picrep} We represent the basis $\ket{\alpha}=\ket{0},\ket{1},\ket{2},\ket{3}$ by the colored triangles as shown in Fig.~\ref{fig7}. Each of such triangle has three sides with either two dark blue ones and a light blue one, or with three sides all light blue---four possible cases in total. In fitting these triangles into the lattice and representing a product state $\ket{\alpha_1\cdots\alpha_i\cdots\alpha_N}$, we make the convention on the orientation by indicating the three angles of the triangle with $1,2,3$ as illustrated in Fig.~\ref{fig7}. In the illustration, $1$ labels the top of the triangle in our convention of orientation, and $2$ and $3$ follows clockwise (see Fig.~\ref{fig7}). Then, the convention of fitting is that the top of the triangle, as indicated by $1$, is outward relative to the smallest triangular three-qudit block containing the triangle. Obviously, based on the convention, all possible fittings of the colored triangles to the lattice can be mapped one-to-one onto all the qudit-product-states $\ket{\alpha_1\cdots\alpha_i\cdots\alpha_N}$. And in our following study of the physical qudits, we can always refer to this pictorial representation.


A $T_{ii'}$ gate can be viewed as a product of two single-qudit operators defined on the two physical qudits $i$ and $i'$ respectively. The index $\{ii'\}$ exactly goes through all neighbors in the Sierpi\'nski geometry. For each pair of neighboring qudits, the choice of such two single-qudit operators depends on the location of the pair in the geometry. There are three choices for such single-qudit operator on each qudit, 
\begin{align}\label{ds}
\begin{split} 
S^1=\dyad{0}{1}+\dyad{1}{0}+\dyad{2}{3}+\dyad{3}{2},\\
S^2=\dyad{0}{2}+\dyad{2}{0}+\dyad{1}{3}+\dyad{3}{1},\\
S^3=\dyad{0}{3}+\dyad{3}{0}+\dyad{1}{2}+\dyad{2}{1}.
\end{split}
\end{align}
Here, the superscripts $1,2,3$ coincide with the three indicators of the angles for the convention of orientation (see Fig.~\ref{fig7}). Or equivalently, they can be viewed as the three directions of axis of the colored-triangle-representation of the qudit basis state $\ket{\alpha}$. Then as illustrated in Fig.~\ref{fig7}, the action of the three operators on the local basis state simply shifts the dark and light sides or annihilate/create dark sides on the two sides of the corresponding axis.

For convenience of discussion, we use $\sigma=0,1,2,3$ for the superscript and define $S^0=\mathds{1}$. Obviously, $S^\sigma$ is both unitary and self-adjoint, and hence, according to the definition, we can show that
\begin{align}\label{ss}
\begin{split}
&S^\sigma S^\sigma=\mathds{1},\quad S^\sigma S^{\sigma'}=S^{\sigma'}S^\sigma \\
&S^1 S^2=S^3,\quad S^1 S^3=S^2,\quad S^2 S^3=S^1\\
&S^1 S^2S^3=\mathds{1}.
\end{split}
\end{align}
It is also easy to check that the multiplications of the $S^\sigma$ operators and projections of the form $\dyad{\alpha}$ for $\ket{\alpha}=\ket{0},\ket{1},\ket{2},\ket{3}$ result in a basis of operators, $\{\dyad{\alpha'}{\alpha}\}$, which spans $\mathbf{L}(\mathbb{C}^4)$. Hence, the $S^\sigma$ operators together with the $\dyad{\alpha}$ projections form the generators of all operators on the single qudit.

\paragraph*{\textbf{The two-qudit gates}} Now, we define that when the neighboring qudits $i$ and $i'$ belong to different smallest triangular three-qudit blocks (see Fig.~\ref{fig7}), we have
\begin{equation}\label{dt1}
T_{ii'}=S^1_iS^1_{i'};
\end{equation}
and when $i$ and $i'$ belong to the same smallest triangular three-qudit block, and it goes clockwise from $i$ to $i'$ (see Fig.~\ref{fig7}), we have
\begin{equation}\label{dt2}
T_{ii'}=S^2_iS^3_{i'}.
\end{equation}
By definition, we can easily show the unitarity of $T_{ii'}$, its self-adjointness and the commutativity of any two $T_{ii'}$s,
\begin{align}\label{tc1}
\begin{split}
T_{ii'}=T_{ii'}^+=T_{ii'}^{-1},\quad [T_{ii'},T_{i''i'''}]=0.
\end{split}
\end{align}

Formally, we can now simply define $\{\ket{\psi_m}\}$ as all the states satisfying Eq.~\ref{dpsi}, i.e. any qudit-product-state that can be obtained by applying a finite number of the $T_{ii'}$ gates to $\ket{0\cdots0\cdots0}$. However, to explicitly describe the $\ket{\psi_m}$ states in terms of the configuration $\alpha_1,\ldots,\alpha_i,\ldots,\alpha_N$, we illustrate the action of the $T_{ii'}$ gates on a product state $\ket{\alpha_1\cdots\alpha_i\cdots\alpha_N}$.

We take advantage of the pictorial representation of a qudit-product-state as shown in Fig.~\ref{fig7} and \ref{8a}, which can be viewed as the structure of interlocking loops encircling the holes of the lattice. As shown in Fig.~\ref{8a}, such a loop can be viewed as a strand of the dark/light blue sides of the colored triangles tightly surrounding a hole, and the loops are interlocked through the colored triangles, i.e., each such triangle is engaged in three loops (see Fig.~\ref{8a}). An important property of such a loop is the parity of the number of dark sides (see Fig.~\ref{8a}). For instance, in $\ket{0\cdots 0\cdots 0}$, the parity for each loop is even (see Fig.~\ref{8b}).


As illustrated in Fig.~\ref{fig7}, the action of $T_{ii'}$ on $\ket{\alpha_1\cdots\alpha_i\cdots\alpha_N}$ affects the two loops that qudits $i$ and $i'$ are both engaged: $T_{ii'}$ either annihilates/create a pair of dark sides in a loop (see the left illustration of $T_{ii'}$ in Fig.~\ref{fig7}), or switches between the configurations (dark, light) and (light, dark) for pair of sides in a loop (see the right illustration of $T_{ii'}$ in Fig.~\ref{fig7}). Then, as illustrated in Fig.~\ref{8b}, such action on $\ket{0\cdots 0\cdots 0}$ results in a product state $\ket{\psi_{m_1}}=\ket{\alpha_1\cdots\alpha_i\cdots\alpha_N}$ which preserves the parity of the number of dark sides on each loop (also on each of the three laterals of the lattice) to be \emph{even}. Obviously, further acting arbitrarily finite number of the $T_{ii'}$ operators (supported on distinct neighboring qudits) on $\ket{\psi_{m_1}}$ also results in new product states, e.g., $\ket{\psi_{m_2}}$ and $\ket{\psi_{m_3}}$ as shown in Fig.~\ref{8b}, which still preserves the even parity (see Fig.~\ref{8b}).

\paragraph*{\textbf{The physical qudit-product-states}}\label{psim} Hence, according to Eq.~\ref{dpsi}, each $\ket{\psi_m}$ is simply a product state that has even parity of the number of dark sides on each loop (also on the laterals). Furthermore, it can be proved~\cite{wang2023} that any such product state of even-parity property can be defined by Eq.~\ref{dpsi}. Therefore, the even-parity property on the configuration $\alpha_1,\ldots,\alpha_i,\ldots,\alpha_N$ exactly characterizes the collection $\{\ket{\psi_m}\}$ of qudit-product-states. Note that by definition, $\ket{0\cdots0\cdots0}$ belongs to the collection.

Indeed, the characterization of the $\ket{\psi_m}$ states in Eq.~\ref{dpsi} can be refined. That is, there is an accurate correspondence relation between different sub-collections of the gates and distinct $\ket{\psi_m}$ states. This relation is summarized in the following proposition, which is important in defining the encoding isometry and in presenting the structural properties of the boundary code states in terms of the gates.

\begin{center}
\begin{figure}[ht]
\centering
    \includegraphics[width=8.5cm]{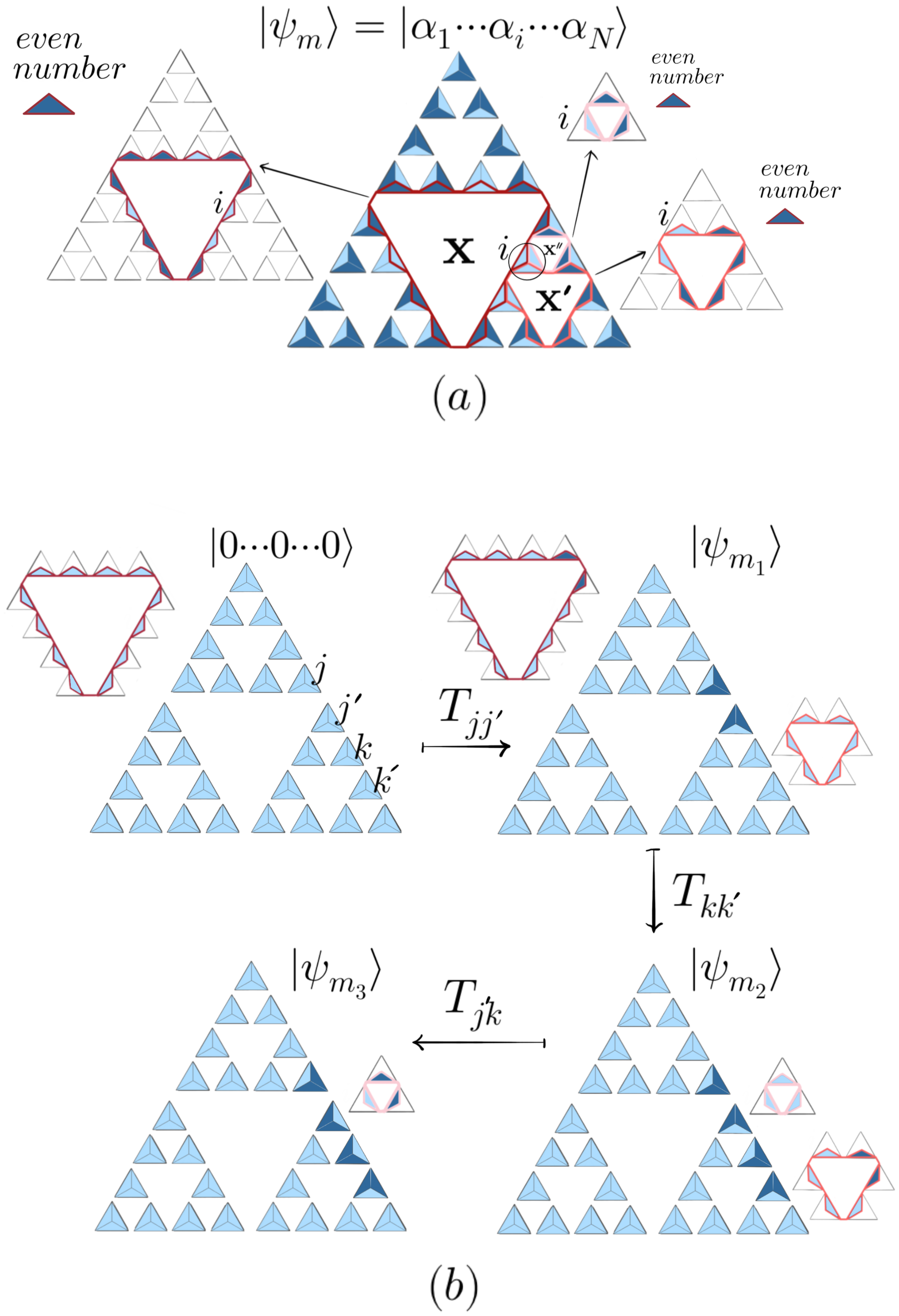}   
\phantomsubfloat{\label{8a}}\phantomsubfloat{\label{8b}}
\caption{Pictorial illustration of the physical qudit-product-state $\ket{\psi_m}$. (a) views the pictorial illustration given by Fig.~\ref{fig7} as interlocking loops surround the holes, in which each loop is simply a strand of the dark/light blue sides of the colored triangles. The qudit $i$, circulated by black solid line, is engaged in three such loops. In $\ket{\psi_m}$, the parity of dark sides in each loop (and also in each of the three laterals of the lattice) is even. (b) $T_{jj'}$ maps $\ket{0\cdots0\cdots0}$ to $\ket{\psi_{m_1}}$, $T_{kk'}$ maps $\ket{\psi_{m_1}}$ to $\ket{\psi_{m_2}}$, and $T_{j'k}$ maps $\ket{\psi_{m_2}}$ to $\ket{\psi_{m_3}}$. The even-parity for all the loops is kept through the mappings.}
\label{fig8}
\end{figure}
\end{center}

\begin{proposition}\label{gates1}
Fix an arbitrary $\ket{\psi_{m_0}}$, the states of the form $(T_{ii'}T_{jj'}\cdots)\ket{\psi_{m_0}}$ go through all the $\ket{\psi_m}$ states. And any two distinct products $(T_{i_1i'_1}T_{i_2i'_2}\cdots)$ and $(T_{j_1j'_1}T_{j_2j'_2}\cdots)$ always map $\ket{\psi_{m_0}}$ to distinct qudit-product-states.
\end{proposition}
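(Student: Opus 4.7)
My plan is to split the proposition into two natural parts---reachability (the $T$-orbit of $\ket{\psi_{m_0}}$ covers the full set $\{\ket{\psi_m}\}$) and freeness (distinct gate products produce distinct images)---and dispatch each in turn.

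For reachability, the key ingredients are the involutivity $T_{ii'}^{-1} = T_{ii'}$ from Eq.~\eqref{tc1} and the characterization in Par.~\ref{psim} that every $\ket{\psi_m}$ is of the form $G\ket{0\cdots 0\cdots 0}$ for some product $G$ of $T$'s. Writing $\ket{\psi_{m_0}} = G_0\ket{0\cdots 0\cdots 0}$, involutivity yields $\ket{0\cdots 0\cdots 0} = G_0\ket{\psi_{m_0}}$; then for any $\ket{\psi_m} = G_m\ket{0\cdots 0\cdots 0}$ one has $\ket{\psi_m} = (G_m G_0)\ket{\psi_{m_0}}$, which is again a product of $T$'s acting on $\ket{\psi_{m_0}}$.

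For freeness, commutativity and involutivity (Eq.~\eqref{tc1}) let me reduce each gate product to the subset $E$ of edges whose gate appears an odd number of times, and two products act identically on $\ket{\psi_{m_0}}$ iff $\prod_{e\in E_1\triangle E_2} T_e$ fixes $\ket{\psi_{m_0}}$. Because each $T_e$ is a tensor of single-qudit operators from the Klein four-group $V_4 = \{S^0, S^1, S^2, S^3\}$ and the three non-identity $S^\sigma$'s permute the local basis without fixed points (by Eq.~\eqref{ds}), such a product can fix a product state $\ket{\psi_{m_0}}$ only if its restriction to every qudit is $\mathds{1}$---equivalently, only if $\prod_{e\in E} T_e = \mathds{1}$ as an operator. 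Freeness therefore reduces to the claim that the ``edge vectors'' $v_e \in (\mathbb{Z}_2^2)^N$ encoding the $V_4$-contributions of $T_e$ at its two endpoints are $\mathbb{Z}_2$-linearly independent.

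I expect this linear-independence step to be the main obstacle. My plan is to settle it by a dimension count: reachability supplies a surjection from the $2^{|E|}$ edge-subsets onto $\{\ket{\psi_m}\}$, so the map is a bijection once one verifies $|\{\ket{\psi_m}\}| = 2^{|E|}$. The count $|\{\ket{\psi_m}\}| = 4^N/2^r$, with $r$ the rank of the even-parity constraints over all loops and the three laterals, follows directly from Par.~\ref{psim}; the remaining combinatorial identity $|E| = 2N - r$ is a careful but elementary enumeration on the Sierpi\'nski geometry, consistent with the base-case check $N=3$, $|E|=3$, $r=3$ for a single three-qudit block. A cleaner alternative is induction on the Sierpi\'nski level, where the inductive step adjoins a new block together with its connecting type-1 edges and verifies, by a small $\mathbb{Z}_2^2$ linear-algebra calculation, that the new gate contributions remain independent of those from the previous level.
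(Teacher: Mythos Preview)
Your reduction for both parts is sound and matches the paper's: part (1) is identical, and for part (2) you correctly reduce to showing that a nontrivial product $\prod_{e\in E}T_e$ (with $E$ the symmetric difference) acts nontrivially on $\ket{\psi_{m_0}}$, and then correctly observe that---since each nonidentity $S^\sigma$ has no fixed basis vector---this amounts to showing the product has nonempty support as an operator on $\mathcal{H}$.

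Where you diverge is in how you establish that nonempty support. You propose a global dimension count ($2^{|E|}$ versus $4^N/2^r$, hence the identity $|E|=2N-r$) or an induction on the Sierpi\'nski level, both of which you defer. The paper instead gives a short, direct geometric argument: among all qudits touched by some gate in the product, pick one closest to a corner of the Sierpi\'nski lattice (or a corner qudit itself, which has only two neighbors). That qudit has a neighbor even closer to the corner which is \emph{not} touched by any gate in the product, so at most two of the three gates through it appear, and the $S^\sigma$ contributions there cannot cancel to $\mathds{1}$. This single ``extremal vertex'' observation finishes the proof in a paragraph and needs no enumeration of holes, laterals, or ranks of constraint systems. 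Your counting strategy is not wrong in principle, but it trades a one-line local argument for a combinatorial identity you have not actually verified; the paper's route is both shorter and self-contained.
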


The proof of the proposition is given the in App.\ref{pogates1}. For simplicity in the following arguments, we define $\mathcal{H}_0\subset\mathcal{H}$ as the subspace spanned by $\{\ket{\psi_m}\}$, and denote the corresponding projection operator by $P_0$ ($P_0\mathcal{H}=\mathcal{H}_0$). Since each $T_{ii'}$ (and its adjoint) maps one $\ket{\psi_m}$ to another $\ket{\psi_{m'}}$ still within $\mathcal{H}_0$, we have
\begin{equation}\label{tc2}
[P_0,T_{ii'}]=0.
\end{equation}

\onecolumngrid
\begin{center}
\begin{figure}[ht]
\centering
    \includegraphics[width=17cm]{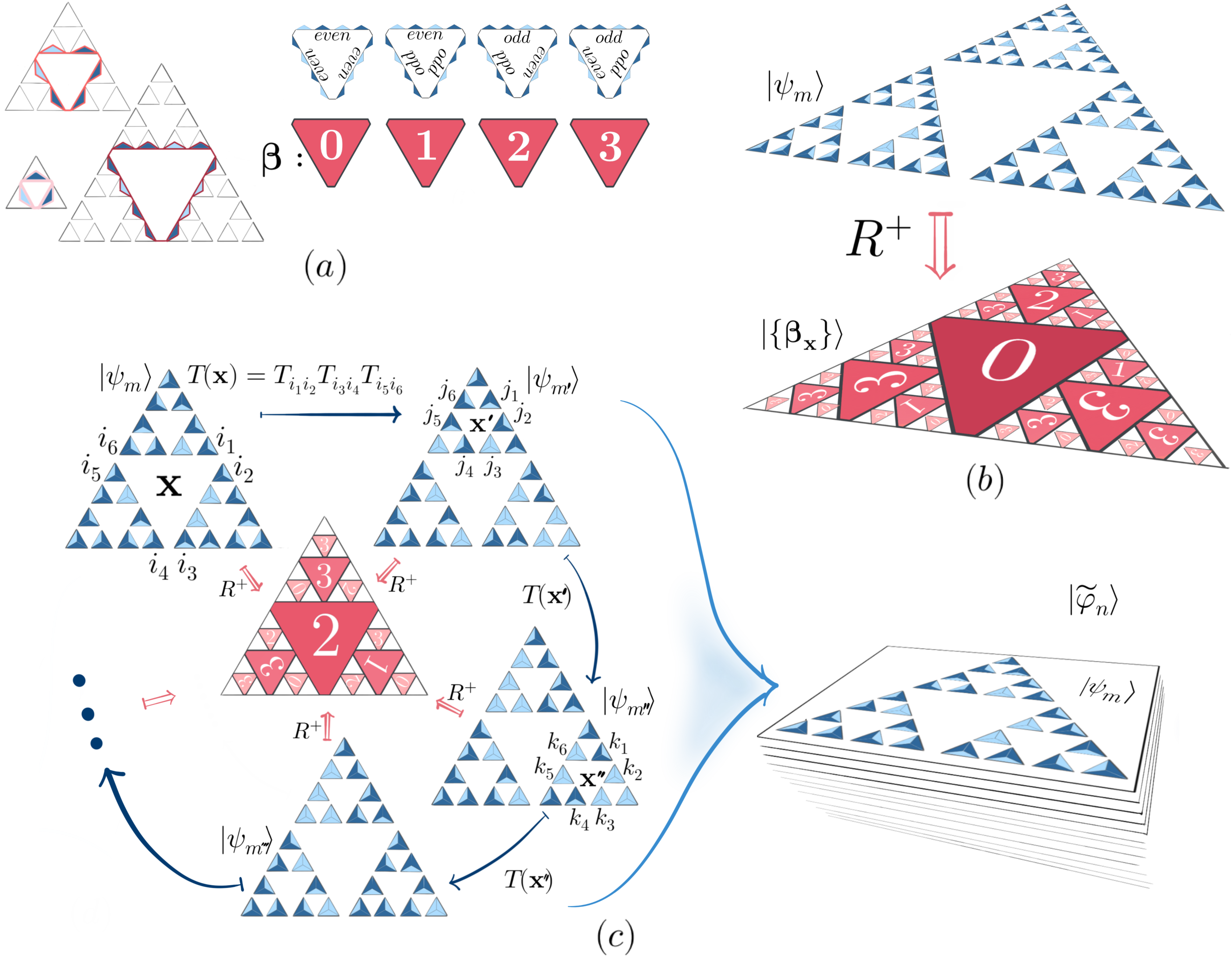}   
\phantomsubfloat{\label{9a}}\phantomsubfloat{\label{9b}}\phantomsubfloat{\label{9c}}
\caption{The emergent degrees of freedom, the assembled gates and the entanglement patterns for the code basis state $\ket{\tilde{\varphi}_n}$. (a) The degrees of freedom for an emergent qudit can be read off the four possibilities of parities on the laterals of a loop. (b) According to the way the emergent degrees of freedom are specified, each physical qudit-product-state $\ket{\psi_m}$ is associated to a unique configurations of the emergent degrees of freedom $\{\boldsymbol{\beta}_{\boldsymbol{x}}\}=\boldsymbol{\beta}_{\boldsymbol{1}},\ldots,\boldsymbol{\beta}_{\boldsymbol{x}},\ldots,\boldsymbol{\beta}_{K}$, or equivalently characterized by a unique entanglement pattern. (c) The assembled gates and the illustration of their actions on a physical qudit-product-state. The action of any product of assembled gates on $\ket{\psi_m}$ does not change the associated $\ket{\{\boldsymbol{\beta}_{\boldsymbol{x}}\}_n}$ or the characterizing entanglement pattern. An equal-weight sum of all the $\ket{\psi_m}$s associated to the same $\ket{\{\boldsymbol{\beta}_{\boldsymbol{x}}\}_n}$ gives rise to $\ket{\tilde{\varphi}_n}$.}
\end{figure}
\label{fig9}
\end{center}
\twocolumngrid

\subsection{Entanglement patterns and emergent bulk qudits}\label{pattern1}
Based on the qudit-product-states $\{\ket{\psi_m}=\ket{\alpha_1\cdots\alpha_i\cdots\alpha_N}\}$, we can define the code states and the encoding isometry. According to the basic idea discussed in Sec.~\ref{iencoding}, the key is to identify certain patterns of the configurations $\alpha_1,\cdots,\alpha_i,\cdots,\alpha_N$, which are expected to give rise to the emergent bulk degrees of freedom and demonstrate the conditions derived in Sec.~\ref{iencoding}.

Recall that the patterns of the configurations can be viewed as constraints on those $\alpha_i$s in a qudit-product-state. While a specific configuration $\alpha_1,\cdots,\alpha_i,\cdots,\alpha_N$ determines a $\ket{\psi_m}$ state, a specific pattern on the configurations characterizes a sub-collection of the $\ket{\psi_m}$ states. As we will show, certain nonlocal features of the configurations in the $\ket{\psi_m}$ states give rise to the entanglement patterns and also underlie the emergent degrees of freedom.

In terms of the pictorial representation of the $\ket{\psi_m}$ states (see Fig.~\ref{fig7} and \ref{8a}), we focus on the loop structure surrounding each hole $\boldsymbol{x}$. We can identify four possibilities of the sub-configuration $\alpha_{i_1},\alpha_{i_2},\ldots$ on the loop (see Fig.~\ref{8a} and \ref{9a}). The four possibilities can be viewed as degrees of freedom $\boldsymbol{\beta}=\boldsymbol{0},\boldsymbol{1},\boldsymbol{2},\boldsymbol{3}$: $\boldsymbol{\beta}=\boldsymbol{0}$ corresponds to the possibility where the three loop laterals all have even number of dark sides; $\boldsymbol{\beta}=\boldsymbol{1},\boldsymbol{2}$ or $\boldsymbol{3}$ correspond to those in which two out of the three loop laterals have odd number of dark sides (see Fig.~\ref{9a}). Note that the convention of orientation is indicated in Fig.~\ref{9b}.

\paragraph*{\textbf{Entanglement patterns}}\label{pattern11} Clearly, specifying one out of the four possibilities for each hole imposes nonlocal constraint on the full configuration $\alpha_1,\cdots,\alpha_i,\cdots,\alpha_N$ in the $\ket{\psi_m}$ states. Then, by a entanglement pattern, we simply refer to such a specification. Obviously, all the entanglement patterns (specifications) classify the the $\ket{\psi_m}$s into disjoint sub-collections, i.e., each $\ket{\psi_m}$ is associated to a unique entanglement pattern (see Fig.~\ref{9a} and \ref{9b}). Following the basis idea in Sec.~\ref{cencoding}, later we will formally define a code basis state $\ket*{\tilde{\varphi}_n}$ as expanded by the $\ket{\psi_m}$s within such a sub-collection. As will be clear in the next section, the loop of qudits include different subregions of minimal recoveries for the same bulk qudit. Indeed, specifying the entanglement patterns in terms of the loops follows the guideline from Condition 4 and 5 in Sec.~\ref{cencoding}.

\paragraph*{\textbf{Emergent bulk qudits}} More formally, for each hole $\boldsymbol{x}$, we can associate a ququart $\mathfrak{e}_{\boldsymbol{x}}=\mathbb{C}^4$ (also labeled by $\boldsymbol{x}$) with a local basis $\ket{\boldsymbol{\beta}_{\boldsymbol{x}}}=\ket{\boldsymbol{0}},\ket{\boldsymbol{1}},\ket{\boldsymbol{2}},\ket{\boldsymbol{3}}$ according to the four possibilities. It means that specifying one out of the four possibilities for each hole indeed specifies a product state $\ket{\boldsymbol{\beta}_{\boldsymbol{1}}\cdots\boldsymbol{\beta}_{\boldsymbol{x}}\cdots\boldsymbol{\beta}_{K}}\in\mathfrak{e}_{\boldsymbol{1}}\otimes\mathfrak{e}_{\boldsymbol{2}}\otimes\mathfrak{e}_{\boldsymbol{3}}\otimes\cdots$.

In the following, we call the emergent ququarts $\mathcal{E}=\mathfrak{e}_{\boldsymbol{1}}\otimes\mathfrak{e}_{\boldsymbol{2}}\otimes\mathfrak{e}_{\boldsymbol{3}}\otimes\cdots=(\mathbb{C}^{4})^{\otimes K}$ as the logical/bulk qudits in our model, since according to the rearrangement linking the alternative geometry to the standard geometry, the index $\boldsymbol{x}$ for the holes is the index of the discrete hyperbolic geometry (see Fig.~\ref{fig5}), and the emergent qudits are indeed assigned with specific bulk discrete geometry in the standard geometric setting as shown in Fig.~\ref{5b}. Then, based on the nonlocal constraints, we have established a correspondence between the entanglement patterns and the bulk qudit-product-states $\ket{\boldsymbol{\beta}_{\boldsymbol{1}}\cdots\boldsymbol{\beta}_{\boldsymbol{x}}\cdots\boldsymbol{\beta}_{K}}\in\mathcal{E}$. This correspondence is the essence of the desired encoding isometry and will be formalized below.

\subsection{Encoding isometry and code basis states}
To show that the above correspondence between the entanglement patterns and the $\ket{\boldsymbol{\beta}_{\boldsymbol{1}}\cdots\boldsymbol{\beta}_{\boldsymbol{x}}\cdots\boldsymbol{\beta}_{K}}$ states can be formalized into to a well-defined encoding isometry, and to explicitly define the code basis states, we take advantage of both the properties of the gates and the pictorial representation of the entanglement patterns. For simplicity in notation, we use $\ket{\{\boldsymbol{\beta}_{\boldsymbol{x}}\}_n}$ to denote all the $\ket{\boldsymbol{\beta}_{\boldsymbol{1}}\cdots\boldsymbol{\beta}_{\boldsymbol{x}}\cdots\boldsymbol{\beta}_{K}}$ states that are engaged in the above correspondence.

It is convenient to define an operator that will play the role of $\mathcal{H}\xrightarrow{R^+}\mathcal{E}$, which will naturally capture how we have established the above correspondence. Then, we just need to prove that $R=(R^+)^+$ is an isometry so that $R$ will be the desired encoding. Indeed, according to the above discussion, each $\ket{\psi_m}$ belongs to a unique sub-collection characterized by a entanglement pattern, and is hence assigned to a unique $\ket{\{\boldsymbol{\beta}_{\boldsymbol{x}}\}_n}$ state. Since the $\ket{\psi_m}$s are orthonormal, this assignment enables us to define an operator~\footnote{For states orthogonal to $\mathcal{H}_0$, they are assigned to $0$.} such that
\begin{equation}\label{dr+}
R^+\ket{\psi_m}=(1/\sqrt{2^K})\ket{\{\boldsymbol{\beta}_{\boldsymbol{x}}\}_n}, 
\end{equation}
where the meaning of the constant $1/\sqrt{2^K}$ will be clear later.

By definition, $R^+$ accurately capture the correspondence. That is, all the $\ket{\psi_m}$s with $R^+\ket{\psi_m}$ equal to a specific $\ket{\{\boldsymbol{\beta}_{\boldsymbol{x}}\}_n}$ state exactly form the sub-collection characterized by the entanglement pattern that corresponds to $\ket{\{\boldsymbol{\beta}_{\boldsymbol{x}}\}_n}$. Then, if we define $\ket*{\tilde{\varphi}_n}$ as the equal-weight summation of all these $\ket{\psi_m}$ states assigned to $\ket{\{\boldsymbol{\beta}_{\boldsymbol{x}}\}_n}$, and times the constant $1/\sqrt{2^K}$, we can write $R^+$ as
\begin{equation*}
R^+=\sum_n\dyad{\{\boldsymbol{\beta}_{\boldsymbol{x}}\}_n}{\varphi_n}.
\end{equation*}
Thus, $R=(R^+)^+$ has the form
\begin{equation}\label{encoding1}
R=\sum_n\dyad{\varphi_n}{\{\boldsymbol{\beta}_{\boldsymbol{x}}\}_n}.
\end{equation}

Now, the correspondence between the entanglement patterns and the $\ket{\{\boldsymbol{\beta}_{\boldsymbol{x}}\}_n}$ states is clearly presented in $R$. To show that $R$ is the desired encoding isometry ($R^+R=\mathds{1}$) and $\{\ket*{\tilde{\varphi}_n}\}$ forms the desired code basis, we simply need to prove that the $\ket{\{\boldsymbol{\beta}_{\boldsymbol{x}}\}_n}$ states in Eq.~\ref{encoding1} cover all the possible product states $\ket{\boldsymbol{\beta}_{\boldsymbol{1}}\cdots\boldsymbol{\beta}_{\boldsymbol{x}}\cdots\boldsymbol{\beta}_{K}}$s that span $\mathcal{E}$, and the $\ket*{\tilde{\varphi}_n}$s are orthonormal. With this in mind, we further investigate the properties of the entanglement patterns by studying how the the $\{T_{ii'}\}$ gates represent the structural properties of the $\ket*{\tilde{\varphi}_n}$ states and the $R$ operator.

\paragraph*{\textbf{Assembled gates}} For convenience in the following arguments, we assemble the $T_{ii'}$ gates into more complex $T(\boldsymbol{x})$ gates in the following way: We define each $T(\boldsymbol{x})$ associated to a hole (bulk qudit) $\boldsymbol{x}$ as $T(\boldsymbol{x})=T_{i_1i_2}T_{i_3i_4}T_{i_5i_6}$, i.e. the product of the $T_{ii'}$ operators on the three corners of the loop surrounding the hole (see Fig.~\ref{9c}). For the smallest loop formed by three qudits $i,j,k$, we define $T(\boldsymbol{x})=T_{ij}T_{jk}T_{ki}$. An important property of the way we assemble the gates is that each gate $T_{ii'}$ belongs to a unique one of such assembling, which can be easy checked with Fig.~\ref{fig7} and \ref{9c}. It follows that the way we assemble the gates indeed partitions the gates according to the holes. Due to the unitarity and the self-adjointness of the $T_{ii'}$ operators, $T(\boldsymbol{x})$ is also unitary and self-adjoint. Furthermore, since the $T_{ii'}$s commute with each other and commute with $P_0$, the $T(\boldsymbol{x})$s also commute with each other and commute with $P_0$. 



Our following arguments are based on the fact that any two $\ket{\psi_m}$ and $\ket{\psi_{m'}}$ can be connected by applying suitably chosen $T_{ii'}$ gates (see Prop.~\ref{gates1}), according to which we can prove a proposition for how the the associated emergent degrees of freedom can change if applying the gates (both the $T_{ii'}$s and the $T(\boldsymbol{x})$s) to a $\ket{\psi_m}$ state.
\begin{proposition}\label{pp1}
We follow the above setting for constructing the code. (1) Consider an arbitrary $\ket{\psi_m}$ and an arbitrary $T(\boldsymbol{x})=T_{i_1i_2}T_{i_3i_4}T_{i_5i_6}$. The three bulk qudit-product-states 
\begin{equation*}
\begin{split}
&\ket{\boldsymbol{\beta}_{\boldsymbol{1}}\cdots\boldsymbol{\beta}_{\boldsymbol{x}}\cdots\boldsymbol{\beta}_{K}}=\sqrt{2^K}R^+\ket{\psi_m}\\
&\ket{\boldsymbol{\beta}_{\boldsymbol{1}}\cdots\bar{\boldsymbol{\beta}}_{\boldsymbol{x}}\cdots\boldsymbol{\beta}_{K}}=\sqrt{2^K}R^+T_{i_1i_2}\ket{\psi_m}\\&\ket{\boldsymbol{\beta}_{\boldsymbol{1}}\cdots\tilde{\boldsymbol{\beta}}_{\boldsymbol{x}}\cdots\boldsymbol{\beta}_{K}}=\sqrt{2^K}R^+T_{i_3i_4}T_{i_1i_2}\ket{\psi_m}
\end{split}
\end{equation*}
differ from each other exactly in $\boldsymbol{\beta}_{\boldsymbol{x}}\ne\bar{\boldsymbol{\beta}}_{\boldsymbol{x}}\ne\tilde{\boldsymbol{\beta}}_{\boldsymbol{x}}\ne\boldsymbol{\beta}_{\boldsymbol{x}}$ (other $\boldsymbol{\beta}_{\boldsymbol{x}'}$s are the same). Here, $T_{i_1i_2}$ can be replaced by any other in $\{T_{i_1i_2},T_{i_3i_4},T_{i_5i_6}\}$, and $T_{i_3i_4}T_{i_1i_2}$ can be replaced by the product of any two out of the three. Furthermore, by suitably choosing operators in the product, $\bar{\boldsymbol{\beta}}_{\boldsymbol{x}}$ (or $\tilde{\boldsymbol{\beta}}_{\boldsymbol{x}}$) can be arbitrary in $\boldsymbol{0},\boldsymbol{1},\boldsymbol{2},\boldsymbol{3}$ (but different from $\boldsymbol{\beta}_{\boldsymbol{x}}$).

 
(2) For arbitrary $\ket{\psi_m}$ and arbitrary nontrivial product $(T_{ii'}T_{jj'}\cdots)$, the equality
\begin{equation*}
R^+(T_{ii'}T_{jj'}\cdots)\ket{\psi_m}=R^+\ket{\psi_m}.
\end{equation*} 
holds if and only if $(T_{ii'}T_{jj'}\cdots)$ can be written as a product of the $T(\boldsymbol{x})$ gates, i.e. $(T(\boldsymbol{x})T(\boldsymbol{x}')\cdots)$. 
%
\end{proposition}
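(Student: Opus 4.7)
The plan is to prove (1) by direct combinatorial analysis using the pictorial representation, and then deduce (2) from part (1) combined with the group-theoretic structure of $\{T_{ii'}\}$.

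For part (1), I would start from a generic $\ket{\psi_m}$ and use the explicit forms of $S^2, S^3$ in Eq.~\ref{ds} together with the loop/parity picture of Fig.~\ref{fig7} and Fig.~\ref{8a} to track how $T_{i_1 i_2}$ (a corner gate of the loop around $\boldsymbol{x}$, with convention as in Fig.~\ref{9c}) modifies dark-side parities loop by loop. The central claim to establish is a strict locality property: the only loop whose lateral parities are altered is the one surrounding $\boldsymbol{x}$, so every $\boldsymbol{\beta}_{\boldsymbol{x}'}$ with $\boldsymbol{x}' \ne \boldsymbol{x}$ is preserved. This should follow from an enumeration of which sides of which colored triangles each $S^\sigma$ flips, combined with the orientation convention, so that parity contributions to non-target loops cancel in pairs. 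Once locality is in hand, the induced map on $\boldsymbol{\beta}_{\boldsymbol{x}} \in \{\boldsymbol{0}, \boldsymbol{1}, \boldsymbol{2}, \boldsymbol{3}\}$ flips exactly two of the three lateral parities, which is a non-identity element of the Klein four-group acting on the four labels. The three corners of the loop should produce the three distinct non-identity elements, indexed by the three axes in analogy with the relations $S^1 S^2 = S^3$ of Eq.~\ref{ss}; products of two distinct corners then realize the remaining non-identity element, exhausting all three values $\bar{\boldsymbol{\beta}}_{\boldsymbol{x}} \ne \boldsymbol{\beta}_{\boldsymbol{x}}$.

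For part (2), the ``if'' direction is immediate from part (1): since $T(\boldsymbol{x}) = T_{i_1 i_2} T_{i_3 i_4} T_{i_5 i_6}$ applies all three non-identity Klein-group elements in sequence and their product is the identity, $T(\boldsymbol{x})$ preserves every $\boldsymbol{\beta}$, and hence so does any product of $T(\boldsymbol{x})$'s. For the ``only if'' direction, I would exploit the $\mathbb{F}_2$-vector space structure of $G = \langle \{T_{ii'}\} \rangle$: by Prop.~\ref{gates1} combined with $T_{ii'}^2 = \mathds{1}$ and the commutativity of the $T_{ii'}$'s (Eq.~\ref{tc1}), no relations exist in $G$ beyond these, so $G$ is freely generated as an $\mathbb{F}_2$-vector space by $\{T_{ii'}\}$. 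Because each $T_{ii'}$ lies in a unique $T(\boldsymbol{x})$, we obtain a direct-sum decomposition $G = \bigoplus_{\boldsymbol{x}} G_{\boldsymbol{x}}$, where each $G_{\boldsymbol{x}} \cong \mathbb{F}_2^3$ is generated by the three corner gates of the loop around $\boldsymbol{x}$. By part (1) the action of $G$ on the $\boldsymbol{\beta}$-configuration decomposes along this direct sum, and on each factor the resulting surjection $G_{\boldsymbol{x}} \to (\mathbb{F}_2 \times \mathbb{F}_2)_{\boldsymbol{x}}$ has one-dimensional kernel spanned by $T(\boldsymbol{x})$. Assembling across all $\boldsymbol{x}$, the kernel of the global action is $\bigoplus_{\boldsymbol{x}} \langle T(\boldsymbol{x}) \rangle$, which is exactly the subgroup generated by the $T(\boldsymbol{x})$'s.

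The main obstacle is the locality claim in part (1). Since each qudit participates in three loops, applying a single $T_{ii'}$ could a priori alter the $\boldsymbol{\beta}$'s of several neighboring holes, and the fact that only $\boldsymbol{\beta}_{\boldsymbol{x}}$ changes is a delicate consequence of how the chosen axes in the definitions $T_{ii'} = S^2_i S^3_{i'}$ and $T_{ii'} = S^1_i S^1_{i'}$ interact with the orientation convention and with which sides of the local triangles belong to which loops. Making this fully rigorous requires careful case analysis of qudit positions in the Sierpi\'nski lattice and verification that the spurious parity contributions to non-target loops cancel in pairs.
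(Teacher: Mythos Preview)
Your proposal is correct and follows essentially the same route as the paper: for part~(1) both arguments use the pictorial representation to establish the locality claim (the paper handles your ``main obstacle'' by the brief case split that a corner gate either misses the loop around $\boldsymbol{x}'$ entirely or acts within a single lateral of it, preserving that lateral's parity), and for part~(2) both use the partition of the $T_{ii'}$'s into the assembled $T(\boldsymbol{x})$'s together with part~(1). Your Klein-four-group and $\mathbb{F}_2$-vector-space framing is a cleaner algebraic packaging of the same combinatorial content the paper states more informally.
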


The proof for this proposition is given in App.~\ref{popp1}. This proposition simply says that applying suitably chosen product of the $T_{ii'}$ gates to $\ket{\psi_m}$, we can reach a $\ket{\psi_{m'}}$ which is mapped through $R^+$ to an arbitrarily targeted $\ket{\boldsymbol{\beta}'_{\boldsymbol{1}}\cdots\boldsymbol{\beta}'_{\boldsymbol{x}}\cdots\boldsymbol{\beta}'_{K}}$ in $\mathcal{E}$. In other words, the $\ket{\{\boldsymbol{\beta}_{\boldsymbol{x}}\}_n}$ states appeared in the range of $R^+$ traverse the whole basis spanning $\mathcal{E}$, which is desired.

Prop.~\ref{pp1} also says that if the applied gates consists of the $T(\boldsymbol{x})$s, the resulted physical qudit-product-state will be confined within the characterization of the same entanglement pattern and also the same corresponding $\ket{\{\boldsymbol{\beta}_{\boldsymbol{x}}\}_n}$. Combining Prop.~\ref{pp1} and Prop.~\ref{gates1}, we can conclude that for an arbitrary entanglement pattern, all the $\ket{\psi_m}$ states lying within the sub-collection characterized by this entanglement pattern can be specified in the following way: We fix an arbitrary $\ket{\psi_m}$ in the sub-colledtion, then any other $\ket{\psi_{m'}}$ can be obtained by applying certain product of the $T(\boldsymbol{x})$ gates to $\ket{\psi_m}$, i.e. $\ket{\psi_{m'}}=(T(\boldsymbol{x})T(\boldsymbol{x}')\cdots)\ket{\psi_m}$~\footnote{It should be noted that by fixing different $\ket{\psi_m}$ and $\ket{\psi_{m'}}$ lying within the same sub-collection, by applying the $T(\boldsymbol{x})$ gates we can traverse the same sub-collection of states. There is no contradiction. Indeed, because of the equality $T(\boldsymbol{x})T(\boldsymbol{x})=\mathds{1}$, if we have $\ket{\psi_{m'}}=(T(\boldsymbol{x})T(\boldsymbol{x}')\cdots)\ket{\psi_m}$ then we also have $(T(\boldsymbol{x})T(\boldsymbol{x}')\cdots)\ket{\psi_{m'}}=(T(\boldsymbol{x})T(\boldsymbol{x}')\cdots)(T(\boldsymbol{x})T(\boldsymbol{x}')\cdots)\ket{\psi_m}=\ket{\psi_m}$. That is, $\ket{\psi_m}$ and $\ket{\psi_{m'}}$ can be expressed in terms of each other with the same product $(T(\boldsymbol{x})T(\boldsymbol{x}')\cdots)$.}. Moreover, according to Prop.~\ref{gates1}, it is clear that this way of specification establishes a one-to-one correspondence between all the $\ket{\psi_{m'}}$ states within the same sub-collection and all the possible products of the $T(\boldsymbol{x})$ gates.

Note that all the possible products of the $T(\boldsymbol{x})$ gates are exactly the terms in the expansion of the following projection operator~\footnote{Accroding to $[T(\boldsymbol{x}),T(\boldsymbol{x}')]=0$, $T(\boldsymbol{x})T(\boldsymbol{x})=\mathds{1}$ and $(\frac{\mathds{1}+T(\boldsymbol{x})}{2})(\frac{\mathds{1}+T(\boldsymbol{x})}{2})=(\frac{\mathds{1}+T(\boldsymbol{x})}{2})$, it can be easily checked that this operator is a projection.}
\begin{align}\label{exp1}
\begin{split}
\prod_{\boldsymbol{x}=1}^{K}(\frac{\mathds{1}+T(\boldsymbol{x})}{2})=&\frac{1}{2^{K}}(\mathds{1}+\sum_{\boldsymbol{x}=1}^{K}T(\boldsymbol{x})+\sum_{\boldsymbol{x}<\boldsymbol{x}'}^{K}T(\boldsymbol{x})T(\boldsymbol{x}')\\
&+\sum_{\boldsymbol{x}<\boldsymbol{x}'<\boldsymbol{x}''}^{K}T(\boldsymbol{x})T(\boldsymbol{x}')T(\boldsymbol{x}'')\\
&+\cdots+\prod_{\boldsymbol{x}=1}^{K}T(\boldsymbol{x})),
\end{split}
\end{align} 
where there are exactly $2^K$ terms.

\paragraph*{\textbf{Code basis state}}\label{cbs0} Then, for a given entanglement pattern (or a corresponding $\ket{\{\boldsymbol{\beta}_{\boldsymbol{x}}\}_n}$ state) and an arbitrary $\ket{\psi_m}$ that lies within the sub-collection characterized by the entanglement pattern, the accordingly defined $\ket*{\tilde{\varphi}_n}$ state has the following form
\begin{align}\label{cbs1}
\begin{split}
&\ket*{\tilde{\varphi}_n}\\
&=\sqrt{2^{K}}\prod_{\boldsymbol{x}=1}^{K}(\frac{\mathds{1}+T(\boldsymbol{x})}{2})\ket{\psi_m}\\
&=\frac{1}{\sqrt{2^{K}}}(\ket{\psi_m}+\sum_{\boldsymbol{x}=1}^{K}T(\boldsymbol{x})\ket{\psi_m}\\
&+\sum_{\boldsymbol{x}<\boldsymbol{x}'}^{K}T(\boldsymbol{x})T(\boldsymbol{x}')\ket{\psi_m}+\cdots+\prod_{\boldsymbol{x}=1}^{K}T(\boldsymbol{x})\ket{\psi_m}).
\end{split}
\end{align} 
This form shows that $\ket*{\tilde{\varphi}_n}$ is normalized. Then, since different $\ket*{\tilde{\varphi}_n}$ and $\ket{\tilde{\varphi}_{n'}}$ are by definition orthogonal, we have $\braket{\tilde{\varphi}_n}{\tilde{\varphi}_{n'}}=\delta_{nn'}$, i.e., the $\{\ket*{\tilde{\varphi}_n}\}$ is orthonormal as desired.

The above arguments and Eq.~\ref{cbs1} reveal how the assembled gates represent the structural properties of the $\ket*{\tilde{\varphi}_n}$ states. With the pictorial illustration in Fig~\ref{9c}, they also show that the structure of $\ket*{\tilde{\varphi}_n}$ can be simply identified as an entanglement patterns, i.e., it is an equal-weight sum of all the $\ket{\psi_m}$ states characterized by such an entanglement pattern. In this way, upon the demonstration of HQEC characteristics in the following sections, our construction will explicitly illustrate how the logical/bulk degrees of freedom $\{\boldsymbol{\beta}_{\boldsymbol{x}}\}_n$ are emergent from the entanglement patterns of physical/boundary qudits, as described in Sec.~\ref{cencoding}. Now, we can conclude these argument into the following theorem on the encoding isometry.
\begin{theorem}[\textbf{Encoding isometry}]\label{cp1}
$\mathcal{E}\xrightarrow{R}\mathcal{H}$, as given in Eq.~\ref{encoding1}, is an isometry, i.e. $R^+R=\mathds{1}$. And the code subspace $\mathcal{H}_{\mathrm{code}}=P_{\mathrm{code}}\mathcal{H}$ with $P_{\mathrm{code}}=RR^+$ is spanned by the orthonormal states $\{\ket*{\tilde{\varphi}_n}\}$ as described in Eq.~\ref{cbs1}. Furthermore, the projection operator $P_{\mathrm{code}}$ of the code subspace has the form
\begin{equation}\label{p1}
P_{\mathrm{code}}=\big[\prod_{\boldsymbol{x}=\boldsymbol{1}}^{K}(\frac{\mathds{1}+T(\boldsymbol{x})}{2})\big]P_0.
\end{equation} 
\end{theorem}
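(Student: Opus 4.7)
The plan is to verify the three claims in sequence, drawing entirely on Prop.~\ref{gates1}, Prop.~\ref{pp1}, and the definitions of $R^{+}$ in Eq.~\ref{dr+} and of the entanglement-pattern sub-collections of $\{\ket{\psi_m}\}$. First I would establish Eq.~\ref{cbs1} itself. Fix a pattern labelled by $\ket{\{\boldsymbol{\beta}_{\boldsymbol{x}}\}_n}$ and any representative $\ket{\psi_m}$ in the corresponding sub-collection. Expanding $\prod_{\boldsymbol{x}}\bigl(\tfrac{\mathds{1}+T(\boldsymbol{x})}{2}\bigr)$ as in Eq.~\ref{exp1} yields $2^{K}$ terms of the form $(T(\boldsymbol{x})T(\boldsymbol{x}')\cdots)\ket{\psi_m}$. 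By part (2) of Prop.~\ref{pp1}, every such term lies in the \emph{same} sub-collection as $\ket{\psi_m}$ (it is still sent by $R^{+}$ to the same $\ket{\{\boldsymbol{\beta}_{\boldsymbol{x}}\}_n}$), and by Prop.~\ref{gates1} the $2^{K}$ distinct products yield $2^{K}$ distinct (and hence orthonormal) qudit-product-states. Thus $\sqrt{2^{K}}\prod_{\boldsymbol{x}}\bigl(\tfrac{\mathds{1}+T(\boldsymbol{x})}{2}\bigr)\ket{\psi_m}$ is a normalized equal-weight sum of exactly these $2^{K}$ states, which by definition is $\ket*{\tilde{\varphi}_n}$. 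Orthonormality $\braket{\tilde{\varphi}_n}{\tilde{\varphi}_{n'}}=\delta_{nn'}$ then follows because distinct patterns pick out disjoint sub-collections of the orthonormal family $\{\ket{\psi_m}\}$.

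Next I would argue that the set $\{\ket{\{\boldsymbol{\beta}_{\boldsymbol{x}}\}_n}\}$ appearing in Eq.~\ref{encoding1} exhausts the full product basis of $\mathcal{E}=(\mathbb{C}^{4})^{\otimes K}$. This is the step I expect to be the crux, since everything else is bookkeeping. The idea is to use part (1) of Prop.~\ref{pp1}: starting from any fixed $\ket{\psi_m}$ with associated pattern $(\boldsymbol{\beta}_{\boldsymbol{1}},\ldots,\boldsymbol{\beta}_{K})$, at each hole $\boldsymbol{x}$ we can, by multiplying on the left by a single $T_{ii'}$ inside $T(\boldsymbol{x})$ or by a product of two of them, change $\boldsymbol{\beta}_{\boldsymbol{x}}$ to any prescribed value in $\{\boldsymbol{0},\boldsymbol{1},\boldsymbol{2},\boldsymbol{3}\}$ while leaving every $\boldsymbol{\beta}_{\boldsymbol{x}'}$ with $\boldsymbol{x}'\neq\boldsymbol{x}$ untouched. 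Iterating hole-by-hole, every one of the $4^{K}$ target configurations is realized by some $\ket{\psi_{m'}}$, so the index $n$ in Eq.~\ref{encoding1} really ranges over a full orthonormal basis of $\mathcal{E}$. Combined with the orthonormality of $\{\ket*{\tilde{\varphi}_n}\}$ from the previous paragraph, this immediately gives $R^{+}R=\sum_{n,n'}\ket{\{\boldsymbol{\beta}_{\boldsymbol{x}}\}_n}\braket{\tilde{\varphi}_n}{\tilde{\varphi}_{n'}}\bra{\{\boldsymbol{\beta}_{\boldsymbol{x}}\}_{n'}}=\mathds{1}_{\mathcal{E}}$, so $R$ is an isometry and $\{\ket*{\tilde{\varphi}_n}\}$ spans $\mathcal{H}_{\mathrm{code}}$.

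Finally I would identify $P_{\mathrm{code}}=RR^{+}$ with the operator in Eq.~\ref{p1} by testing both sides on a basis of $\mathcal{H}$. On $\mathcal{H}_0^{\perp}$ both vanish: $R^{+}$ by the extension convention in Eq.~\ref{dr+}, and $P_0$ by definition. On each $\ket{\psi_m}\in\mathcal{H}_0$, applying $R^{+}$ then $R$ gives $(1/\sqrt{2^{K}})\ket*{\tilde{\varphi}_n}$ where $n$ is the pattern of $\ket{\psi_m}$; by the already-established Eq.~\ref{cbs1}, this equals $\prod_{\boldsymbol{x}}\bigl(\tfrac{\mathds{1}+T(\boldsymbol{x})}{2}\bigr)\ket{\psi_m}$, which matches $\prod_{\boldsymbol{x}}\bigl(\tfrac{\mathds{1}+T(\boldsymbol{x})}{2}\bigr)P_0\ket{\psi_m}$ since $P_0\ket{\psi_m}=\ket{\psi_m}$. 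Because the two operators agree on a basis they agree everywhere, proving Eq.~\ref{p1}. The only nontrivial ingredient beyond the definitions is the completeness step, for which I would rely carefully on Prop.~\ref{pp1}(1); the rest reduces to assembling Eq.~\ref{exp1}, Eq.~\ref{cbs1}, and Prop.~\ref{gates1}.
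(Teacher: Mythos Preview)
Your approach is essentially the paper's own: use Prop.~\ref{pp1}(1) for completeness of the patterns, Prop.~\ref{gates1} together with Prop.~\ref{pp1}(2) to identify each sub-collection with the $2^{K}$ products of assembled gates acting on a fixed representative, and then verify Eq.~\ref{p1} by checking agreement on $\mathcal{H}_0$ and on $\mathcal{H}_0^{\perp}$.

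One point to tighten: in your first paragraph you show that each of the $2^{K}$ terms $(T(\boldsymbol{x})T(\boldsymbol{x}')\cdots)\ket{\psi_m}$ lies in the sub-collection and that they are pairwise distinct, but you do not argue that they \emph{exhaust} it. Since $\ket*{\tilde{\varphi}_n}$ is defined as $1/\sqrt{2^{K}}$ times the sum of \emph{all} $\ket{\psi_{m'}}$ assigned to $\ket{\{\boldsymbol{\beta}_{\boldsymbol{x}}\}_n}$, you need the converse: any $\ket{\psi_{m'}}$ in that sub-collection can be written as $(T(\boldsymbol{x})T(\boldsymbol{x}')\cdots)\ket{\psi_m}$. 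This follows by combining Prop.~\ref{gates1} (which gives some product $(T_{ii'}T_{jj'}\cdots)$ connecting $\ket{\psi_m}$ to $\ket{\psi_{m'}}$) with the ``only if'' direction of Prop.~\ref{pp1}(2) (since $R^{+}\ket{\psi_{m'}}=R^{+}\ket{\psi_m}$, that product must be a product of assembled gates). Once you make this explicit, your argument is complete and matches the paper.
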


Note that the last equation can be easily derived according to Eq.~\ref{cbs1} and the obvious fact that $P_0$ commutes with $\prod_{\boldsymbol{x}=\boldsymbol{1}}^{K}(\frac{\mathds{1}+T(\boldsymbol{x})}{2})$~\footnote{The commutativity guarantees that $P_{\mathrm{code}}$ is a projection operator. According to Eq.~\ref{cbs1}, we have $P_{\mathrm{code}}\ket{\psi_m}=\frac{1}{\sqrt{2^{K}}}\ket*{\tilde{\varphi}_n}\in\mathcal{H}_{\mathrm{code}}$ and $P_{\mathrm{code}}\ket*{\tilde{\varphi}_n}=\sqrt{2^{K}}P_{\mathrm{code}}P_{\mathrm{code}}\ket{\psi_m}=\sqrt{2^{K}}P_{\mathrm{code}}\ket{\psi_m}=\ket*{\tilde{\varphi}_n}$.}. It is also easy to see that the gates commute with $P_{\mathrm{code}}$, i.e.
\begin{equation}
[P_{\mathrm{code}},T_{ii'}]=0,\quad [P_{\mathrm{code}},T(\boldsymbol{x})]=0.
\end{equation}

\begin{center}
\begin{figure}[ht]
\centering
    \includegraphics[width=8.5cm]{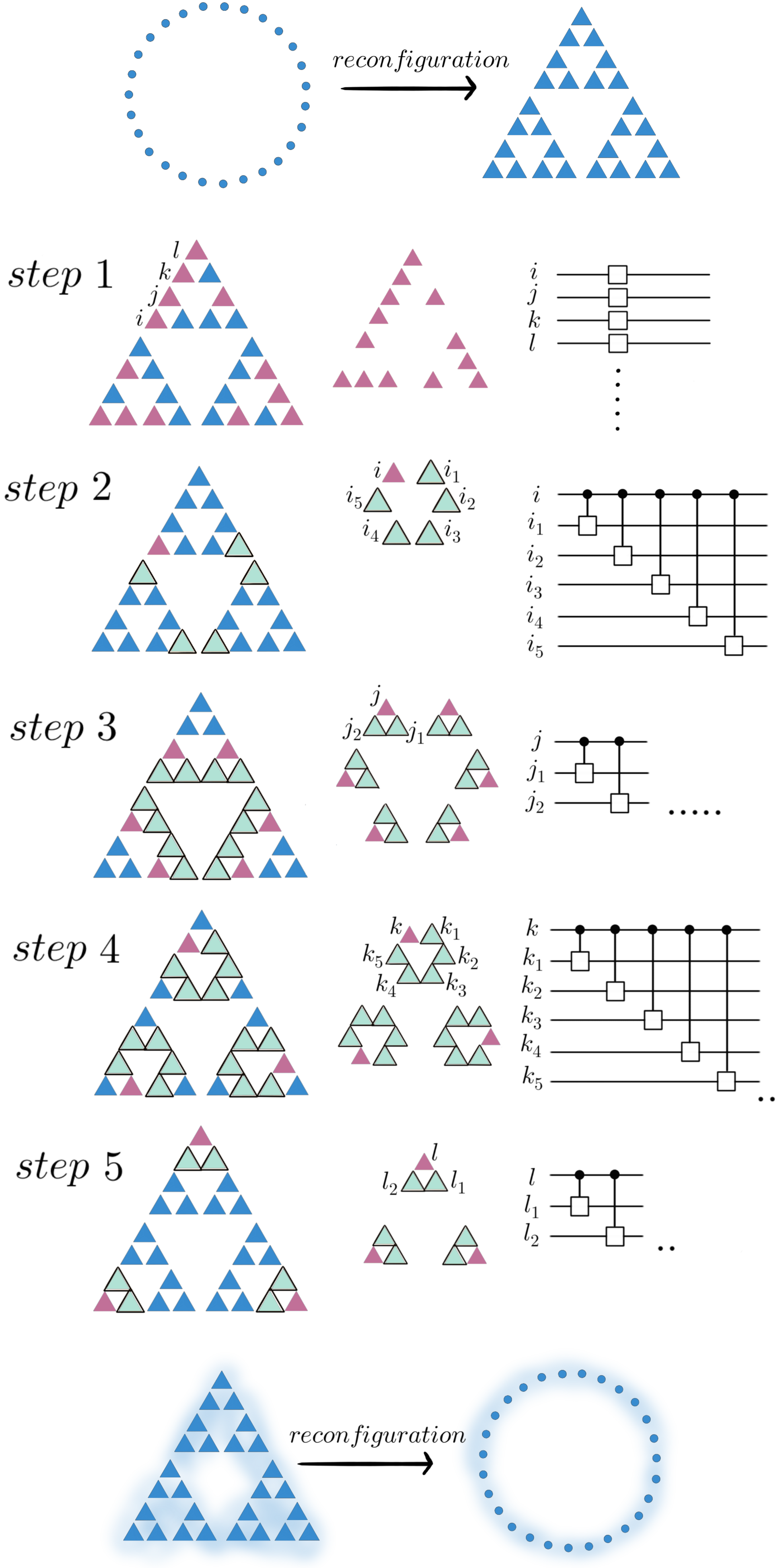}   
\caption{The purple-colored play the role of control qudits while the highlighted play the role of target qudits. The specification of each gate, i.e., the specific operator, is ``incomplete'', since in ququart system there are much more choices than in the qubit case~\cite{wang2020a}. However, as pointed out in App.~\ref{poscalable}, in the controlled gates, the operators for the target qudits is the $S^{\sigma}$ operators determined by the assembled $T(\boldsymbol{x})$ gates.}
\label{fig10}
\end{figure}
\end{center}

\subsection{Scalability and quantum circuit simulation}\label{scalable}
Before we unfold the HQEC characteristics in our model, we point out that Eq.~\ref{cbs1} connects two equivalent characterizations of the $\ket{\tilde{\varphi}_n}$ states, i.e, the algebraic form in terms of the assemble gates and the pictorial representation of the entanglement patterns. Both the two characterizations underlie scalability of the structures of the $\ket{\tilde{\varphi}_n}$s and hence the encoding, which manifests in two aspects.

First, the assembled $T(\boldsymbol{x})$ gates, in either the standard geometry or the alternative geometry, have fixed forms and sizes so that increasing system size simply adds new $(1/\sqrt{2})(\mathds{1}+T(\boldsymbol{x}))$ terms in Eq.~\ref{cbs1}. And the entanglement patterns, i.e. the equal-weight-sum form of $\ket{\tilde{\varphi}_n}$ in terms of the qudit-product-states with pictorial representation, provide convenience to analytically study the model in arbitrary system size. In terms of these properties, studies of the model for different purposes with respect to the boundary states, e.g., studying the dynamic properties, investigating the possible parent (nonlocal) Hamiltonians, developing approximate holographic code and exploring advanced topics in quantum gravity, should be direct and transparent.

Second, based on Eq.~\ref{cbs1}, one can develop scalable and succinct schemes for potential quantum circuit simulation of the states on near-term quantum devices. Here, instead of detailed discussion on comprehensive simulations of the model which requires a considerable amount of space of the texts, we illustrate a possible scheme by sketching the steps for quantum-circuit simulation of a concrete state $\ket{\tilde{\varphi}_{n_0}}=\sqrt{2^{K}}\prod_{\boldsymbol{x}=1}^{K}(\frac{\mathds{1}+T(\boldsymbol{x})}{2})\ket{0\cdots0\cdots0}$. Our description is inspired by the recently developing technique in quantum processor and hybrid analogue-digital quantum simulation, i.e., dynamically reconfigurable arrays of cold neutral atoms which can be ``shuttled'' between different geometric arrangements with optical tweezers for entangling distant qubits without complex quantum gates on neighboring qubits~\cite{bluvstein2022,bluvstein2024,xu2024}.

Intriguingly, this technique of reconfiguration which coherently transports entangled cold atoms in experiments aligns with the geometric concept utilized in our approach, i.e., the rearrangement which mathematically maps qudits in different geometric arrangements. According to this alignment, when initializing the physical qudits in the state $\ket{0\cdots0\cdots0}$, it is not necessary to arrange the qudits in the standard geometry or the alternative geometry. But in our sketch as illustrated in Fig.~\ref{fig10} we imagine the alternative geometric setting in order to conveniently describe the circuit structure associated to each assembled gate (see Fig.~\ref{9c}).

Fig.~\ref{fig10} describes the quantum-circuit in steps, where the number of steps agrees with the number of layers in the circuit which scales linearly on $N_0=N^{1/h}$. Recall that $1/h$ is exactly the universal scaling component in uberholography and $N_0$ is the linear size of the alternative geometry which scales sublinearly on the total number $N$ of physical qudits. As shown in Fig.~\ref{fig10}, the steps correspond to a way we group the $T(\boldsymbol{x})$ gates. The underlying fact is that the $(1/\sqrt{2})(\mathds{1}+T(\boldsymbol{x}))$ term in Eq.~\ref{cbs1} simply maps one qudit-product-state to an equal-weight sum of two qudit-product-states, and the effect can be simulated with two-qudit controlled gates. The $T(\boldsymbol{x})$ gates in each step are disjoint but have overlap with the gates in the preceding and the succeeding steps. For each step, we label certain qudits (the purple in Fig.~\ref{fig10}) in the supports of the assembled gates so that they are only in the overlap with the succeeding step.

Then, Step 1 simply applies single-qudit gates to these control qudits, and the rest steps apply the controlled gates sequentially to realize the effect of the $(1/\sqrt{2})(\mathds{1}+T(\boldsymbol{x}))$ term in Eq.~\ref{cbs1}. And finally, we imagine that the reconfiguration technique can transport the qudits back to the standard geometric setting. The detail of the circuit is discussed in App.~\ref{poscalable}. Importantly, the illustration in Fig.~\ref{fig10} can be straightforwardly scaled up to arbitrarily large size.

\section{Complementary recovery}\label{crsec}

In the preceding section, we have explicitly established an encoding isometry $\mathcal{E}\xrightarrow{R}\mathcal{H}$ with $\mathcal{E}=\mathfrak{e}_{\boldsymbol{1}}\otimes\mathfrak{e}_{\boldsymbol{2}}\otimes\mathfrak{e}_{\boldsymbol{3}}\otimes\cdots=(\mathbb{C}^4)^{\otimes K}$, $\mathcal{H}=\mathfrak{h}_1\otimes\mathfrak{h}_2\otimes\mathfrak{h}_3\otimes\cdots=(\mathbb{C}^4)^{\otimes N}$. Before we unfold the properties of bulk operator reconstruction and show how they can manifest in geometry, we demonstrate the condition of complementary recovery (see Par.~\ref{cplmrc}) which is fundamental to other characteristics and is independent on the geometry. As will be shown below, this condition can be derived from the structural properties of the code states $\ket*{\tilde{\varphi}_n}$s as represented by properties of the gates. In our arguments, we will also describe the basic form of operators supported on specific subregions of the boundary, which recover logical operators. These arguments will lay a basis for the proof in the next section.

We consider an arbitrary nontrivial bipartition of the physical qudits $A\overline{A}$, and the corresponding decomposition $\mathcal{H}=\mathcal{H}_A\otimes\mathcal{H}_{\overline{A}}$. According to the criterion stated in Prop.~\ref{cr2}, we need to prove that for any operator $O_A$ supported on the physical qudits in $A$, there exists an operator $Q_A$ that is commutative with $P_{\mathrm{code}}$ and also supported on $A$ so that we have $P_{\mathrm{code}}Q_AP_{\mathrm{code}}=P_{\mathrm{code}}O_AP_{\mathrm{code}}$.

Indeed, in the proof, it suffices to consider only a basis of operators spanning $\mathbf{L}(\mathcal{H}_A)$. That is, if the projection of every basis operator can be replaced by some $P_{\mathrm{code}}Q_AP_{\mathrm{code}}$ with $[Q_A,P_{\mathrm{code}}]=0$, then the projection of arbitrary operator supported on $A$, as a linear combination of the basis operators, will possess the same property. Recall that the operator basis for a single qudit can be represented as $S_i^\sigma\dyad{\alpha_i}$ for $\sigma=0,1,2,3$ and $\ket{\alpha_i}=\ket{0},\ket{1},\ket{2},\ket{3}$. Then, an operator of the basis spanning $\mathbf{L}(\mathcal{H}_A)$ has the form
\begin{align}
\begin{split}
&(S_{i_1}^{\sigma_{i_1}}\dyad{\alpha_{i_1}})\otimes(S_{i_2}^{\sigma_{i_2}}\dyad{\alpha_{i_2}})\otimes\cdots\\
=&(S_{i_1}^{\sigma_{i_1}}\otimes S_{i_2}^{\sigma_{i_2}}\otimes\cdots)(\dyad{\alpha_{i_1}}\otimes\dyad{\alpha_{i_2}}\otimes\cdots)\\
=&(\otimes_{i\in A}S_i^{\sigma_i})(\otimes_{i\in A}\dyad{\alpha_i}).
\end{split}
\end{align}

It is easy to show that this operator does not commute with $P_{\mathrm{code}}$. That is because the action of $(\dyad{\alpha_{i_1}}\otimes\dyad{\alpha_{i_2}}\otimes\cdots)$ on $\ket*{\tilde{\varphi}_n}$ only ``picks up'' the expanding $\ket{\psi_{m}}$ states that matches the sub-configuration $\alpha_{i_1},\alpha_{i_2},\ldots$. And there are some other $\ket{\psi_{m'}}$ states which are also in the expansion but does not match the sub-configuration. Hence, the action results in a state not in $\mathcal{H}_{\mathrm{code}}$.

Before we show the existence of the desired $Q_{A}$ for the projection of this basis operator on $\mathcal{H}_{\mathrm{code}}$, i.e. $P_{\mathrm{code}}[(\otimes_{i\in A}S_i^{\sigma_i})(\otimes_{i\in A}\dyad{\alpha_i})\otimes\mathds{1}_{\overline{A}}]P_{\mathrm{code}}$, or abbreviated as $P_{\mathrm{code}}(\otimes_{i\in A}S_i^{\sigma_i})(\otimes_{i\in A}\dyad{\alpha_i})P_{\mathrm{code}}$, we simplify the problem and show that it suffices to consider the desired $Q_A$ for $P_{\mathrm{code}}(\otimes_{i\in A}\dyad{\alpha_i})P_{\mathrm{code}}$.



Indeed, we recall that $P_{\mathrm{code}}=\big[\prod_{\boldsymbol{x}=1}^{K}(\frac{\mathds{1}+T(\boldsymbol{x})}{2})\big]P_0$ in which the bracket commutes with $P_0$, and note that $\otimes_{i\in A}\dyad{\alpha_i}$ also commutes with $P_0$~\footnote{We write $\otimes_i\dyad{\alpha_i}$ as $\dyad{\alpha_{i_1}\alpha_{i_2}\cdots}$. Then the state $\ket{\alpha_{i_1}\alpha_{i_2}\cdots}$ on $A$ either matches some $\ket{\psi_m}$ on which $\otimes_i\dyad{\alpha_i}$ acts as the identity operator, or it matches no states in $\mathcal{H}_0$ and hence acts as $0$. It means that the action of $\otimes_i\dyad{\alpha_i}$ (and also $(\otimes_i\dyad{\alpha_i})^+=\otimes_i\dyad{\alpha_i}$) leaves $\mathcal{H}_0$ invariant. Hence $\otimes_i\dyad{\alpha_i}$ commutes with $P_0$.}. Then, we can write
\begin{align*}
\begin{split}
&P_{\mathrm{code}}(\otimes_{i\in A}S_i^{\sigma_i})(\otimes_{i\in A}\dyad{\alpha_i})P_{\mathrm{code}}\\
&=\prod_{\boldsymbol{x}=1}^{K}(\frac{\mathds{1}+T(\boldsymbol{x})}{2})P_0(\otimes_{i\in A}S_i^{\sigma_i})P_0(\otimes_{i\in A}\dyad{\alpha_i})\\
&\times\prod_{\boldsymbol{x}=1}^{K}(\frac{\mathds{1}+T(\boldsymbol{x})}{2}).
\end{split}
\end{align*}
This express can simplified according to the following lemma on the term $P_0(\otimes_{i\in A}S_i^{\sigma_i})P_0$. 
\begin{lemma}\label{stot1}
In the setting of our code, $P_0(\otimes_{i\in A}S_i^{\sigma_i})P_0\ne 0$ if and only if $\otimes_{i\in A}S_i^{\sigma_i}$ equals the identity operator $\mathds{1}$ or a product of the $T_{ii'}$ gates, i.e., $\otimes_{i\in A}S_i^{\sigma_i}=(T_{ii'}T_{jj'}\cdots)$.
\end{lemma}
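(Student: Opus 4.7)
\medskip
\noindent\textbf{Proof proposal.} The plan is to handle the two directions separately. The forward implication is immediate: if $\otimes_{i\in A}S_i^{\sigma_i}$ equals $\mathds{1}$ or a product of the $T_{ii'}$'s, then by Eq.~\ref{tc2} it commutes with $P_0$, so $P_0(\otimes_{i\in A}S_i^{\sigma_i})P_0 = (\otimes_{i\in A}S_i^{\sigma_i})P_0$, which is nonzero because $\otimes_{i\in A}S_i^{\sigma_i}$ is unitary and $P_0\ne 0$. The converse is the substantive half, and I would reduce it to combining (i) the basis-permuting nature of the $S^\sigma$ operators, and (ii) the reachability statement in Prop.~\ref{gates1}.

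For the converse, assume $P_0(\otimes_{i\in A}S_i^{\sigma_i})P_0\ne 0$. Because each $S^\sigma$ permutes the local basis $\{\ket 0,\ket 1,\ket 2,\ket 3\}$, the operator $\otimes_i S_i^{\sigma_i}$ sends each product state $\ket{\psi_m}$ to another product state in that same computational basis. The nonvanishing of $P_0(\otimes_i S_i^{\sigma_i})P_0$ then forces the existence of some $\ket{\psi_m}\in\mathcal{H}_0$ whose image lies in $\mathcal{H}_0$; call it $\ket{\psi_{m''}}$. By Prop.~\ref{gates1}, there is a unique product $Y=T_{j_1j_1'}T_{j_2j_2'}\cdots$ (possibly the empty one, $Y=\mathds{1}$) with $Y\ket{\psi_m}=\ket{\psi_{m''}}$, so $(\otimes_i S_i^{\sigma_i})\ket{\psi_m}=Y\ket{\psi_m}$.

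The final step is to upgrade this single-state equality to operator equality. I would expand $Y$ using Eq.~\ref{dt1}, Eq.~\ref{dt2} and the Klein-four relations Eq.~\ref{ss}, collapsing the $S$-factors at each site into a single $S_i^{\tau_i}$ so that $Y=\otimes_i S_i^{\tau_i}$. The site-by-site reading $S_i^{\sigma_i}\ket{\alpha_i}=S_i^{\tau_i}\ket{\alpha_i}$, together with the direct check from Eq.~\ref{ds} that the four vectors $\{S^\sigma\ket\alpha\}_{\sigma=0,1,2,3}$ are mutually distinct for every $\ket\alpha$, forces $\sigma_i=\tau_i$ at every site, whence $\otimes_{i\in A}S_i^{\sigma_i}=Y$ as operators on all of $\mathcal{H}$. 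Support-tracking then confirms that although individual $T_{jj'}$ factors of $Y$ may straddle $\partial A$, the product $Y$ is automatically supported in $A$ since it agrees with the left-hand operator.

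The main obstacle I anticipate is precisely this upgrade step: agreement of two operators on a single basis vector is generally too weak to imply operator equality, and the argument here relies crucially on both sides being tensor products of basis-permuting operators so that the local $S$-factors at each site are individually determined by their action on one chosen local basis state. A secondary (but routine) bookkeeping point is the reduction of an arbitrary product of $T_{ii'}$'s to a single tensor product of $S$'s, which is clean because the $T_{ii'}$'s commute (Eq.~\ref{tc1}) and the $S^\sigma$'s generate the Klein four-group (Eq.~\ref{ss}).
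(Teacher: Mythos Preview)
Your proposal is correct and follows essentially the same route as the paper's proof in App.~\ref{postot1}: both directions match, the easy direction uses Eq.~\ref{tc2}, and for the substantive direction both you and the paper (i) extract a single $\ket{\psi_m}\in\mathcal{H}_0$ whose image lands in $\mathcal{H}_0$, (ii) invoke Prop.~\ref{gates1} to write that map as a product $Y$ of $T$-gates, (iii) expand $Y$ site-by-site as $\otimes_i S_i^{\tau_i}$, and (iv) use that the four $S^\sigma\ket{\alpha}$ are distinct to force $\sigma_i=\tau_i$ everywhere. Your added remark about support-tracking (that the resulting $T$-product is automatically supported on $A$) is a useful observation the paper leaves implicit; it is what the surrounding argument in Sec.~\ref{crsec} actually needs when forming $(T_{ii'}T_{jj'}\cdots)Q_A$.
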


The proof is given in App.~\ref{postot1}. This lemma says that only the basis operators of the form $(T_{ii'}T_{jj'}\cdots)(\otimes_{i\in A}\dyad{\alpha_i})$ gives rise to nonzero projections on $\mathcal{H}_{\mathrm{code}}$. Furthermore, since the gates commute with $P_{\mathrm{code}}$, the nonzero projection has the form
\begin{multline*}
P_{\mathrm{code}}(T_{ii'}T_{jj'}\cdots)(\otimes_{i\in A}\dyad{\alpha_i})P_{\mathrm{code}}\\
=(T_{ii'}T_{jj'}\cdots)P_{\mathrm{code}}(\otimes_{i\in A}\dyad{\alpha_i})P_{\mathrm{code}}.
\end{multline*}
It follows that we only need to find the $Q_A$ for $P_{\mathrm{code}}(\otimes_{i\in A}\dyad{\alpha_i})P_{\mathrm{code}}$, and then, $(T_{ii'}T_{jj'}\cdots)Q_A$ is the desired operator for the general basis operator.


The $Q_A$ operator for $P_{\mathrm{code}}(\otimes_{i\in A}\dyad{\alpha_i})P_{\mathrm{code}}$ can be constructed as follows.
\begin{align}\label{qoa1}
\begin{split}
Q_A&=\frac{1}{2^{K}}\bigg[\otimes_i\dyad{\alpha_i}+\sum_{\boldsymbol{x}=1}^{K}T(\boldsymbol{x})(\otimes_i\dyad{\alpha_i})T(\boldsymbol{x})\\
&+\sum_{\boldsymbol{x}<\boldsymbol{x}'}^{K}T(\boldsymbol{x})T(\boldsymbol{x}')(\otimes_i\dyad{\alpha_i})T(\boldsymbol{x})T(\boldsymbol{x}')\\
&+\cdots+(\prod_{\boldsymbol{x}=1}^{K}T(\boldsymbol{x}))(\otimes_i\dyad{\alpha_i})(\prod_{\boldsymbol{x}=1}^{K}T(\boldsymbol{x}))\bigg].
\end{split}
\end{align}
In comparison with Eq.~\ref{cbs1} for the expansion of $\ket*{\tilde{\varphi}_n}$, the above expression of $Q_A$ simply sums up all the $(T(\boldsymbol{x})T(\boldsymbol{x}')\cdots)(\otimes_i\dyad{\alpha_i})(T(\boldsymbol{x})T(\boldsymbol{x}')\cdots)$ operators, and the summation is exactly indexed by the terms in the expansion given by Eq.~\ref{exp1}. Through the following proposition, we show that this operator exactly satisfies the desired condition for $Q_A$.

\begin{proposition}\label{qa}
In the setting for our code, for any $\otimes_{i\in A}\dyad{\alpha_i}=\dyad{\alpha_{i_1}\alpha_{i_2}\cdots}$ supported on $A$, the operator $Q_A$ given by Eq.~\ref{qoa1} satisfies following conditions:

(1) $Q_A$ is also supported on $A$. 

(2) $[P_{\mathrm{code}},Q_A]=0$. For any $\ket*{\tilde{\varphi}_n}$, $Q_A\ket*{\tilde{\varphi}_n}\ne0$ if and only if some $\ket{\psi_m}$ in the expansion of $\ket*{\tilde{\varphi}_n}$ matches the sub-configuration $\alpha_{i_1},\alpha_{i_2},\ldots$ on $A$. 

(3) In the nonzero case, $Q_A\ket*{\tilde{\varphi}_n}=c_A\ket*{\tilde{\varphi}_n}$ for each $\ket*{\tilde{\varphi}_n}$, where the positive constnat $c_A$ is only determined by the geometry of the support $A$ (independent on $\ket*{\tilde{\varphi}_n}$ and the content of the configuration $\alpha_{i_1},\alpha_{i_2},\ldots$).

(4) $P_{\mathrm{code}}Q_AP_{\mathrm{code}}=P_{\mathrm{code}}(\otimes_{i\in A}\dyad{\alpha_i})P_{\mathrm{code}}$.
\end{proposition}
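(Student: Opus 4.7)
The plan is to verify the four claims in sequence, leveraging three structural facts: (a) each assembled gate $T(\boldsymbol{x})$ factors as a tensor product of single-qudit $S^\sigma$-type operators across its support; (b) the family $\{T(\boldsymbol{x})\}$ generates an abelian group isomorphic to $(\mathbb{Z}_2)^K$ that acts freely and transitively on the $\ket{\psi_m}$'s within any single entanglement pattern (Prop.~\ref{pp1} combined with Prop.~\ref{gates1}); and (c) each $T(\boldsymbol{x})$ fixes every code basis state, $T(\boldsymbol{x})\ket*{\tilde{\varphi}_n}=\ket*{\tilde{\varphi}_n}$, which is immediate from Eq.~\ref{cbs1} together with $T(\boldsymbol{x})^2=\mathds{1}$.

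For claim (1), write a generic product $U_{\mathcal{S}}=\prod_{\boldsymbol{x}\in\mathcal{S}}T(\boldsymbol{x})$ as $U_{\mathcal{S},A}\otimes U_{\mathcal{S},\overline{A}}$ using (a). The summand $U_{\mathcal{S}}(\otimes_{i\in A}\dyad{\alpha_i})U_{\mathcal{S}}$ then factors as $(U_{\mathcal{S},A}(\otimes_{i\in A}\dyad{\alpha_i})U_{\mathcal{S},A})\otimes(U_{\mathcal{S},\overline{A}}\mathds{1}_{\overline{A}}U_{\mathcal{S},\overline{A}})$, and the second factor collapses to $\mathds{1}_{\overline{A}}$ since $S^\sigma S^\sigma=\mathds{1}$; hence every term of $Q_A$, and thus $Q_A$ itself, is supported on $A$. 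For claim (2), commutativity with $P_0$ is immediate from Eq.~\ref{tc2} and the fact that each $\otimes_{i\in A}\dyad{\alpha_i}$ commutes with $P_0$; commutativity with $\prod_{\boldsymbol{x}}(\frac{\mathds{1}+T(\boldsymbol{x})}{2})$ follows from $T(\boldsymbol{y})Q_AT(\boldsymbol{y})=Q_A$, since conjugation by $T(\boldsymbol{y})$ sends the $\mathcal{S}$-summand to the $(\mathcal{S}\triangle\{\boldsymbol{y}\})$-summand using $[T(\boldsymbol{y}),T(\boldsymbol{x})]=0$ and $T(\boldsymbol{y})^2=\mathds{1}$, merely permuting the full sum. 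Hence $[Q_A,P_{\mathrm{code}}]=0$.

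For claim (3), I would use (c) to reduce
\begin{equation*}
Q_A\ket*{\tilde{\varphi}_n}=\frac{1}{2^K}\sum_{\mathcal{S}}U_{\mathcal{S}}(\otimes_{i\in A}\dyad{\alpha_i})\ket*{\tilde{\varphi}_n}.
\end{equation*}
The projection $\otimes_{i\in A}\dyad{\alpha_i}$ extracts from the equal-weight expansion $\ket*{\tilde{\varphi}_n}=(1/\sqrt{2^K})\sum_m\ket{\psi_m}$ (over the pattern labelling $\ket*{\tilde{\varphi}_n}$) exactly those $\ket{\psi_m}$'s matching the sub-configuration $\alpha_{i_1}\alpha_{i_2}\cdots$. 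By (b) and (c), summing $U_{\mathcal{S}}\ket{\psi_m}$ over all $2^K$ choices of $\mathcal{S}$ reconstructs $\sqrt{2^K}\ket*{\tilde{\varphi}_n}$ for each matching $\ket{\psi_m}$. Writing $|\mathcal{M}_{n,\alpha}|$ for the number of matches then gives $Q_A\ket*{\tilde{\varphi}_n}=(|\mathcal{M}_{n,\alpha}|/2^K)\ket*{\tilde{\varphi}_n}$, which is zero exactly when no $\ket{\psi_m}$ in the expansion matches.

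The main obstacle is showing that the scalar $c_A=|\mathcal{M}_{n,\alpha}|/2^K$ depends only on $A$, not on $n$ or on the content of $\alpha$. The key step is to identify the stabilizer subgroup
\begin{equation*}
H_A=\{U_{\mathcal{S}}\in\langle T(\boldsymbol{x})\rangle:U_{\mathcal{S},A}=\mathds{1}_A\},
\end{equation*}
i.e.\ those products whose $S^\sigma$-factors combine to $S^0$ on every qudit of $A$. Using (b), within any pattern the set of $\ket{\psi_m}$'s matching an admissible $\alpha$ is a single $H_A$-orbit of size $|H_A|$; and $|H_A|$ is determined purely by how the $S^\sigma$-factors of each $T(\boldsymbol{x})$ intersect $A$, so it is independent of both $n$ and $\alpha$. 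Finally, for claim (4) the same orbit count applied directly to $P_{\mathrm{code}}(\otimes_{i\in A}\dyad{\alpha_i})P_{\mathrm{code}}\ket*{\tilde{\varphi}_n}$ (using $P_{\mathrm{code}}\ket{\psi_m}=(1/\sqrt{2^K})\ket*{\tilde{\varphi}_n}$ whenever $\ket{\psi_m}$ lies in the pattern of $\ket*{\tilde{\varphi}_n}$) produces the same eigenvalue $c_A$ on every $\ket*{\tilde{\varphi}_n}$; together with (2) this equates $P_{\mathrm{code}}Q_AP_{\mathrm{code}}$ with $P_{\mathrm{code}}(\otimes_{i\in A}\dyad{\alpha_i})P_{\mathrm{code}}$ on all of $\mathcal{H}_{\mathrm{code}}$.
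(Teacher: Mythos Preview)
Your proposal is correct and follows essentially the same approach as the paper's proof: both reduce (1) to the cancellation of the $S^\sigma$-factors on $\overline{A}$, and both establish the constant $c_A=|H_A|/2^K$ by identifying the matching $\ket{\psi_m}$'s within a pattern with the products $U_{\mathcal S}$ acting trivially on $A$ (your $H_A$ is exactly what the paper calls ``products with support disjoint from $A$''). The only cosmetic differences are that you prove $[Q_A,P_{\mathrm{code}}]=0$ via the conjugation-invariance $T(\boldsymbol{y})Q_AT(\boldsymbol{y})=Q_A$ rather than by first computing $Q_A\ket*{\tilde\varphi_n}$, and you phrase the counting in clean group-orbit language; the underlying content is identical.
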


The proof of the this proposition is given in App.~\ref{poqa}. According to this proposition and the above discussion, we can conclude as follows. 
\begin{theorem}[Complementary recovery]\label{crthm}
Our model satisifies the condition of complementary recovery as described in Par.~\ref{cplmrc}, in which the boundary bipartition $A\overline{A}$ is arbitrary. 
\end{theorem}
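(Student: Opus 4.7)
The plan is to invoke Criterion 2 (Prop.~\ref{cr2}): it suffices to exhibit, for each basis operator $O_A \in \mathbf{L}(\mathcal{H}_A)$, a partner $Q_A$ supported on $A$ with $[Q_A, P_{\mathrm{code}}] = 0$ and $P_{\mathrm{code}}Q_A P_{\mathrm{code}} = P_{\mathrm{code}} O_A P_{\mathrm{code}}$. Since $\{S_i^\sigma \dyad{\alpha_i}\}$ generates $\mathbf{L}(\mathbb{C}^4)$, I would take basis elements of the factored form $(\otimes_{i \in A} S_i^{\sigma_i})(\otimes_{i \in A} \dyad{\alpha_i})$ and then appeal to linearity to extend to all of $\mathbf{L}(\mathcal{H}_A)$, which automatically gives the result for arbitrary bipartitions since no restriction on the geometry of $A\overline{A}$ is imposed.

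The first main step is to use the factorization $P_{\mathrm{code}} = \prod_{\boldsymbol{x}}\bigl((\mathds{1}+T(\boldsymbol{x}))/2\bigr)\, P_0$ from Thm.~\ref{cp1} together with the facts that both the product of projectors and $\otimes_{i\in A} \dyad{\alpha_i}$ commute with $P_0$. Rearranging the sandwich isolates a factor $P_0 (\otimes_{i \in A} S_i^{\sigma_i}) P_0$, which by Lemma~\ref{stot1} vanishes unless the product of $S$'s equals $\mathds{1}$ or a gate product $T_{ii'}T_{jj'}\cdots$. In the surviving case, these $T$-gates commute with $P_{\mathrm{code}}$, so they may be freely pulled outside; the problem thereby collapses to producing a $Q_A$ for the single class of operators $\otimes_{i \in A} \dyad{\alpha_i}$, with the general answer being $(T_{ii'}T_{jj'}\cdots) Q_A$.

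For the reduced problem, I would take $Q_A$ to be the symmetrized expression in Eq.~\eqref{qoa1}: the average of the conjugates $(T(\boldsymbol{x}_1)\cdots T(\boldsymbol{x}_k))(\otimes_{i\in A} \dyad{\alpha_i})(T(\boldsymbol{x}_1)\cdots T(\boldsymbol{x}_k))$ over all subsets of holes, whose index set mirrors the terms in the expansion \eqref{exp1}. Proposition~\ref{qa} then supplies exactly the four properties needed to close the argument: support on $A$, commutation with $P_{\mathrm{code}}$, a uniform scalar action on every code basis state $\ket{\tilde{\varphi}_n}$ whose expansion contains a $\ket{\psi_m}$ matching the sub-configuration $\alpha_{i_1}\alpha_{i_2}\cdots$, and the equality $P_{\mathrm{code}}Q_A P_{\mathrm{code}} = P_{\mathrm{code}}(\otimes_{i\in A}\dyad{\alpha_i})P_{\mathrm{code}}$. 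Invoking Prop.~\ref{cr2} then yields Thm.~\ref{crthm}.

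The main obstacle is establishing property (1) of Prop.~\ref{qa}, namely that $Q_A$ is genuinely supported on $A$. Each assembled gate $T(\boldsymbol{x})$ whose loop straddles the cut acts nontrivially on qudits in $\overline{A}$, so individual conjugates $T(\boldsymbol{x})(\otimes\dyad{\alpha_i})T(\boldsymbol{x})$ have support extending beyond $A$. The symmetrization over all subsets must conspire to cancel these contributions on $\overline{A}$, which calls for a careful combinatorial argument exploiting the three-site structure $T(\boldsymbol{x}) = T_{i_1i_2}T_{i_3i_4}T_{i_5i_6}$ and the local algebra of the $S^\sigma$'s in Eq.~\eqref{ss}. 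Once (1)--(3) are in hand, property (4) follows by comparing the scalar action of $Q_A$ to that of $\otimes_{i\in A}\dyad{\alpha_i}$ on each code basis state $\ket{\tilde{\varphi}_n}$, using the entanglement-pattern description of $\ket{\tilde{\varphi}_n}$ and noting that the sub-configuration-matching condition is a pattern-level property invariant under the $T(\boldsymbol{x})$-symmetrization.
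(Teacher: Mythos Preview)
Your approach is essentially identical to the paper's: invoke Prop.~\ref{cr2}, reduce to basis operators $(\otimes_{i\in A}S_i^{\sigma_i})(\otimes_{i\in A}\dyad{\alpha_i})$, use the factorization of $P_{\mathrm{code}}$ and Lemma~\ref{stot1} to strip off the $S$-part as a gate product commuting with $P_{\mathrm{code}}$, and then apply Prop.~\ref{qa} to the symmetrized $Q_A$ of Eq.~\eqref{qoa1}.

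One correction on your anticipated obstacle: property~(1) of Prop.~\ref{qa} does \emph{not} require any cancellation across the symmetrization. Each individual conjugate $(T(\boldsymbol{x})T(\boldsymbol{x}')\cdots)(\otimes_{i\in A}\dyad{\alpha_i})(T(\boldsymbol{x})T(\boldsymbol{x}')\cdots)$ is already supported on $A$. Writing the gate product as $\prod_j S_j^{\sigma_j}$, for any qudit $j\in\overline{A}$ the factor $S_j^{\sigma_j}$ commutes past $\otimes_{i\in A}\dyad{\alpha_i}$ and meets its partner on the other side, collapsing via $S_j^{\sigma_j}S_j^{\sigma_j}=\mathds{1}$ from Eq.~\eqref{ss}. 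What remains is simply $\otimes_{i\in A}\dyad{\alpha'_i}$ for some shifted configuration $\alpha'_i = S_i^{\sigma_i}\alpha_i$ on $A$ (cf.\ Eq.~\eqref{qaapp1}). So the support claim is immediate term-by-term; no combinatorics over subsets of holes is needed.
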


According to Prop.~\ref{qa}, we can also prove the following corollary which describes the conditions under which two $Q_A$ and $Q'_A$ operators, as defined in Eq.~\ref{qoa1}, give the same projections within the code subspace $\mathcal{H}_{\mathrm{code}}$. The result will be important to the arguments in the next section.
\begin{corollary}\label{qaa}
In the setting of our code, we consider two sub-configurations $\alpha_{i_1},\alpha_{i_2},\ldots$ and $\alpha'_{i_1},\alpha'_{i_2},\ldots$ both supported on a boundary subregion $A$. Let $Q_A$ and $Q'_A$ be the operators defined as in Eq.~\ref{qoa1} for $\otimes_{i\in A}\dyad{\alpha_i}$ and $\otimes_{i\in A}\dyad{\alpha'_i}$ respectively, and assume that $P_{\mathrm{code}}Q_AP_{\mathrm{code}}\ne0$, $P_{\mathrm{code}}Q'_AP_{\mathrm{code}}\ne0$. Then the following three conditions are equivalent:

(1) $Q_A=Q'_A$.

(2) Some $\ket{\psi_m}$ and $\ket{\psi_{m'}}$ that both participate the expansion of the same $\ket*{\tilde{\varphi}_n}$ match the two sub-configurations respectively.

(3) The operator $\otimes_{i\in A}\dyad{\alpha'_i}$ is equal to $(T(\boldsymbol{x})T(\boldsymbol{x}')\cdots)(\otimes_{i\in A}\dyad{\alpha_i})(T(\boldsymbol{x})T(\boldsymbol{x}')\cdots)$ for some product of the assembled gates.
\end{corollary}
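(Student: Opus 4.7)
The plan is to establish the equivalence via the cyclic chain $(3)\Rightarrow(1)\Rightarrow(2)\Rightarrow(3)$, exploiting the abelian group structure of the assembled gates together with Prop.~\ref{qa} and the connectivity of the $\ket{\psi_m}$ within a fixed entanglement-pattern class (the remark after Prop.~\ref{pp1}). For $(3)\Rightarrow(1)$ I would first rewrite Eq.~\ref{qoa1} compactly as $Q_A=\frac{1}{2^K}\sum_{W\in G}W(\otimes_{i\in A}\dyad{\alpha_i})W$, where $G=\{T(\boldsymbol{x}_1)T(\boldsymbol{x}_2)\cdots\}$ is the abelian group (isomorphic to $(\mathbb{Z}/2)^K$) generated by the $K$ pairwise commuting, self-inverse $T(\boldsymbol{x})$'s. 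Given $\otimes_{i\in A}\dyad{\alpha'_i}=U(\otimes_{i\in A}\dyad{\alpha_i})U$ with $U\in G$, the commutativity of $G$ turns every summand $VU(\otimes_{i\in A}\dyad{\alpha_i})UV$ in the analogous sum for $Q'_A$ into $W(\otimes_{i\in A}\dyad{\alpha_i})W$ with $W=UV$; as $V$ ranges over $G$ so does $W$, giving $Q'_A=Q_A$.

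For $(1)\Rightarrow(2)$, the hypothesis $P_{\mathrm{code}}Q_AP_{\mathrm{code}}\neq 0$ furnishes by Prop.~\ref{qa}(3) some $\ket*{\tilde{\varphi}_n}$ with $Q_A\ket*{\tilde{\varphi}_n}=c_A\ket*{\tilde{\varphi}_n}\neq 0$; the equality $Q'_A=Q_A$ then forces $Q'_A\ket*{\tilde{\varphi}_n}\neq 0$ as well, so applying Prop.~\ref{qa}(2) to $Q_A$ and $Q'_A$ separately extracts the required $\ket{\psi_m}$ and $\ket{\psi_{m'}}$ in the expansion of the same $\ket*{\tilde{\varphi}_n}$ that match $\alpha$ and $\alpha'$ on $A$ respectively.

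For $(2)\Rightarrow(3)$, the arguments following Prop.~\ref{pp1} guarantee $U\in G$ with $\ket{\psi_{m'}}=U\ket{\psi_m}$. By Eqs.~\ref{dt1}--\ref{dt2} each $T(\boldsymbol{x})$ is a tensor product of $S^{\sigma}$ operators, and by Eq.~\ref{ds} each $S^{\sigma}$ permutes the local basis $\{\ket{0},\ket{1},\ket{2},\ket{3}\}$; hence $U$ acts site-wise by a local basis permutation, and conjugation carries the rank-one product projector $\otimes_{i\in A}\dyad{\alpha_i}$ to another rank-one product projector $\otimes_{i\in A}\dyad{\alpha''_i}$. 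The unique vector on $A$ stabilised by the conjugate projector must equal the $A$-restriction of $U\ket{\psi_m}=\ket{\psi_{m'}}$, which matches $\alpha'$; thus $\alpha''_i=\alpha'_i$ for every $i\in A$, which is (3). I anticipate this last implication to be the main obstacle: one must track the site-wise permutation induced by $U$ on $A$ carefully to verify that the tensor-product form is preserved and that the resulting sub-configuration coincides with $\alpha'$ rather than some other equally admissible sub-configuration; this step rests on the explicit permutation action of the $S^{\sigma}$'s on $\mathbb{C}^4$ and on the uniqueness of the basis vector stabilised by a rank-one product projector.
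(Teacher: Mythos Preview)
Your proposal is correct and follows essentially the same cyclic chain $(1)\Rightarrow(2)\Rightarrow(3)\Rightarrow(1)$ as the paper's proof in App.~\ref{poqaa}, with the same ingredients: Prop.~\ref{qa} for $(1)\Rightarrow(2)$, the connectivity remark after Prop.~\ref{pp1} for $(2)\Rightarrow(3)$, and the group bijection $W\mapsto UW$ for $(3)\Rightarrow(1)$. The only cosmetic difference is in $(2)\Rightarrow(3)$: the paper explicitly splits the support of $U=(T(\boldsymbol{x})T(\boldsymbol{x}')\cdots)$ into $A_0\subset A$ and $A'_0\subset\overline{A}$ and observes that the $\overline{A}$-part cancels under conjugation while the $A$-part sends $\alpha\mapsto\alpha'$, whereas you package the same computation as ``unique vector stabilised by the conjugate rank-one projector''; both reduce to the site-wise identity $S^{\sigma_i}\dyad{\alpha_i}S^{\sigma_i}=\dyad{S^{\sigma_i}\alpha_i}$.
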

The proof is given in App.~\ref{poqaa}.

\section{Reconstruction of a single bulk qudit}\label{minisec}
In this section, we demonstrate the HQEC characteristics regarding the reconstruction of a single bulk qudit in our code. The central theme is to prove Characteristic 1 on the connected code distance, and to describe all the possible minimal boundary subregions for reconstructing the subalgebra $\mathcal{M}(\boldsymbol{x})=R(\cdots\otimes\mathbb{C}\mathds{1}_{\mathfrak{e}_{\boldsymbol{x}'}}\otimes\mathbf{L}(\mathfrak{e}_{\boldsymbol{x}})\otimes\mathbb{C}\mathds{1}_{\mathfrak{e}_{\boldsymbol{x}''}}\otimes\cdots)R^+$ in terms of the distinguishability properties of the entanglement patterns of the code states (see Sec.~\ref{pattern1}). This section serves as a bridge between the construction of the code in the preceding section and the demonstration of the HQEC characteristics regarding subregion duality in the next section. The results on reconstructing bulk local operators will play a crucial role in describing subregion duality in the OAQEC formalism.

We will keep employing the rearrangement of physical qudits (see Fig.~\ref{fig4}, \ref{fig5} and \ref{fig6}) between the standard geometric setting and the alternative geometric setting. In the alternative geometry we follow the construction of the code and systematically describe the minimal recoveries. It will be clear that the entanglement patterns described in the preceding section satisfy the conditions predicted in Sec.~\ref{cencoding}. In rearranging to the standard geometric setting of the code, we show how the geometric properties of the disconnected minimal recoveries give rise to the proof of Characteristic 1. Our arguments will elucidate how the alternative geometry and the rearrangement underlie the expected HQEC characteristics regarding the reconstruction of a single bulk qudit, as proposed in Sec.~\ref{demys}.

Note that according to how uberholography was firstly proposed in Ref.~\cite{pastawski2017}, one might expect the minimal recoveries to be specified based on the entanglement wedges. Indeed, thanks to the explicit structure of the boundary code states shown in the previous section, we can directly investigate and accurately describe the minimal recoveries, which not only gives an exact demonstration of the universal scaling component $1/h$, but also underlies the construction of entanglement wedges for disconnected boundary bipartition. In this section, the geometric manifestation of the minimal recoveries will be only on the boundary. The full geometric manifestation including the structure of entanglement wedges in the bulk will be presented in the next section when studying the subregion duality.


\subsection{Connected code distance and reconstruction of $\mathcal{M}(\pmb{x})$ generators}\label{rcgmx}
As of primary importance in demonstrating the HQEC characterisitcs, we need to prove that the connected distance $\mathrm{d_c}({\boldsymbol{x}})$ of a bulk qudit $\boldsymbol{x}$ scales linearly on the total number $N$ of physical qudits, and the values for different bulk qudits are consistent with their bulk radial distance to the center (see Par.~\ref{chac1}). In principle, for each bulk qudit $\boldsymbol{x}$, we might need to explore the subregion $\overline{A}$ of correctability to ensure the the maximal size. However, according to Lemma~\ref{oaqec}, for given bipartition $A\overline{A}$ of the boundary, $\overline{A}$ is correctable with respect to $\mathcal{M}(\boldsymbol{x})$ if and only if $\mathcal{M}(\boldsymbol{x})$ can be reconstructed on $A$. Hence, it is sufficient to explore the subregion $A$ of reconstructability to confirm the minimal size. Indeed, we show that based on certain typical examples of boundary recoveries of $\mathcal{M}(\boldsymbol{x})$, it suffices for us to prove the following theorem.

\begin{theorem}\label{cdistance}
In our code, the connected code distance $\mathrm{d_c}({\boldsymbol{x}})$ for the bulk qudits takes the values $\frac{N}{9}+1,\frac{N}{27}+1,\frac{N}{81}+1,\ldots$, along the radial direct from the center towards the boundary.
\end{theorem}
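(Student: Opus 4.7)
\emph{Proof plan.} By Theorem~\ref{crthm} (complementary recovery), a connected boundary subregion $\overline{A}$ is correctable with respect to $\mathcal{M}(\boldsymbol{x})$ if and only if $\mathcal{M}(\boldsymbol{x})$ is reconstructible on its complement $A$. Hence $\mathrm{d_c}(\boldsymbol{x})$ equals one plus the largest size of a connected 1D arc $\overline{A}$ for which $\mathcal{M}(\boldsymbol{x})$ still remains reconstructible on $A$. The plan is to translate this into a geometric condition in the alternative Sierpi\'nski setting, bound both sides of the resulting extremal problem using the self-similar structure of the fractal, and then read off the size in the 1D boundary via the rearrangement.

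The first step is to characterize reconstructibility in terms of ``open paths'' around the hole corresponding to $\boldsymbol{x}$. Proposition~\ref{pp1} identifies the off-diagonal generators of $\mathcal{M}(\boldsymbol{x})$ with elementary $T_{ii'}$ gates placed at the three corner pairs of the loop encircling hole $\boldsymbol{x}$, while Proposition~\ref{qa} and Corollary~\ref{qaa} realize the diagonal projectors $R\dyad{\boldsymbol{\beta}}_{\boldsymbol{x}}R^{+}$ via $Q_{A}$ operators built from sub-configurations whose parities on the three loop laterals determine $\boldsymbol{\beta}_{\boldsymbol{x}}$ as in Fig.~\ref{9a}. Putting these together, $\mathcal{M}(\boldsymbol{x})$ is reconstructible on $A$ if and only if $A$ contains a suitable open path encircling hole $\boldsymbol{x}$: a connected stretch of boundary qudits whose sub-configuration completely fixes $\boldsymbol{\beta}_{\boldsymbol{x}}$.

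For the upper bound $\mathrm{d_c}(\boldsymbol{x})\le N/3^{k+1}+1$, where $k\ge 1$ indexes the hyperbolic radial level (with $k=1$ for the central bulk qudit), I construct an explicit connected 1D arc $\overline{A}$ of this size whose rearrangement into the Sierpi\'nski geometry eliminates every open path around hole $\boldsymbol{x}$. The Sierpi\'nski hierarchy places $\boldsymbol{x}$ inside a principal sub-triangle containing $N/3^{k}$ boundary qudits; every open path surrounding $\boldsymbol{x}$ must traverse one of three ``inner sides'' of this sub-triangle, each of size $N/3^{k+1}$. The arc $\overline{A}$ is designed to cover one such inner side entirely together with one extra qudit required to block the last remaining alternative route, yielding $|\overline{A}|=N/3^{k+1}+1$. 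For the matching lower bound, I will show that any connected 1D arc of size $\le N/3^{k+1}$ leaves at least one open path encircling $\boldsymbol{x}$ intact in $A$: this exploits the multiplicity of paths emphasized in Sec.~\ref{demys}, namely that three combinatorially distinct encircling paths enter $\boldsymbol{x}$'s hole from its three child sub-triangles and, under the rearrangement, occupy well-separated regions of the 1D boundary, so no single arc of sub-critical size can intersect all three simultaneously.

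The level identification follows from the discrete bulk geometry in Fig.~\ref{5b}: the levels $k=1,2,3,\ldots$ correspond to the central bulk qudit, the next ring of three qudits, the ring of nine beyond that, and so on, moving outward toward the asymptotic boundary; the values $N/9+1, N/27+1, N/81+1,\ldots$ then line up radially as claimed. The main obstacle will be the lower bound: it requires a careful inductive enumeration of the three distinct open paths at each level and a rigorous verification that connected 1D arcs of size $\le N/3^{k+1}$ cannot cover them all at once. I expect to proceed by induction on $k$, reducing the claim at level $k$ to the same claim at level $k-1$ restricted to a single principal sub-triangle of size $N/3$; the self-similarity of the Sierpi\'nski fractal makes the recursion natural, but the base case and the precise combinatorics of ``arc--path intersection'' under the rearrangement must be handled with care.
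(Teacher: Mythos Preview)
Your plan has a genuine gap in the characterization of reconstructibility. You assert that $\mathcal{M}(\boldsymbol{x})$ is reconstructible on $A$ if and only if $A$ contains ``a suitable open path encircling hole $\boldsymbol{x}$'', but this is not what Propositions~\ref{pp1} and~\ref{qa} give you. The diagonal generators $\dyad{\boldsymbol{\beta}_{\boldsymbol{x}}}$ are indeed recovered from parity data along paths, but the off-diagonal generators $\widetilde{\boldsymbol{S}}^{\boldsymbol{\sigma}}_{\boldsymbol{x}}$ are reconstructed by the two-qudit gates $T_{i_1i_2},T_{i_3i_4},T_{i_5i_6}$ sitting at the three \emph{corner pairs} of the loop. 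Full reconstruction of $\mathcal{M}(\boldsymbol{x})$ therefore requires both ingredients: two paths in two blocks \emph{and} access to at least two of the three corner pairs (so that two, hence all three, $\widetilde{\boldsymbol{S}}^{\boldsymbol{\sigma}}_{\boldsymbol{x}}$ are available). Your path-only criterion collapses this two-part structure and would, for instance, wrongly declare the two-lateral subregion of Fig.~\ref{12c} a full recovery of $\mathcal{M}(\boldsymbol{x})$.

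This gap infects both bounds. For the lower bound, what actually drives the argument is that any connected arc of size $N/9$ covers at most \emph{one} of the six corner qudits $i_1,\ldots,i_6$ (see Fig.~\ref{fig11}); the remaining five give two corner gates, and two paths avoiding $\overline{A}$ then complete the reconstruction (Fig.~\ref{fig14}). Your ``three encircling paths'' picture does not capture this corner-counting. For the upper bound, your approach---show that erasing one inner side plus one extra qudit blocks every open path---would require a complete classification of all reconstructions (not available until Thm.~\ref{minithm}), and in fact fails as stated: erasing one full lateral still leaves two laterals and two corner pairs in $A$, which suffice. The paper's route is much shorter: exhibit a connected $\overline{A}$ of size $N/9+1$ that contains an entire corner pair $(i_1,i_2)$; then $T_{i_1i_2}$ is supported on $\overline{A}$ and reconstructs a nontrivial $\widetilde{\boldsymbol{S}}^{\boldsymbol{\sigma}}_{\boldsymbol{x}}$, so $\overline{A}$ is not correctable by Lemma~\ref{oaqec}. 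You should reorganize the argument around the corner-pair structure rather than around encircling paths.
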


The proof takes advantage of the rearrangement between the alternative geometry and the standard 1D geometry of the physical qudits. As shown in Fig.~\ref{fig11}, with respect to the central bulk qudit $\boldsymbol{x}$, through the rearrangement the alternative geometry can classify all possible (to-be-erased) connected boundary subregion $\overline{A}$ of the size $\mathrm{d_c}({\boldsymbol{x}})-1$ (see above theorem) into two types: (1) $\overline{A}$ lies within one big triangular block (see Fig.~\ref{11a} and \ref{11b}); (2) $\overline{A}$ crosses two big triangular blocks (see Fig.~\ref{11c} and \ref{11d}). An important feature that will be utilized in our proof is that in either of the two types, $\overline{A}$ only covers at most one out of the six corner qudits $i_1,i_2,i_3,i_4,i_5,i_6$ that supports the $T(\boldsymbol{x})$ gate (see Fig.~\ref{9c}), and the rest five qudits can be covered by connected open paths in the alternative geometry that avoids $\overline{A}$, which will be illustrated later.

\onecolumngrid
\begin{center}
\begin{figure}[ht]
\centering
    \includegraphics[width=14cm]{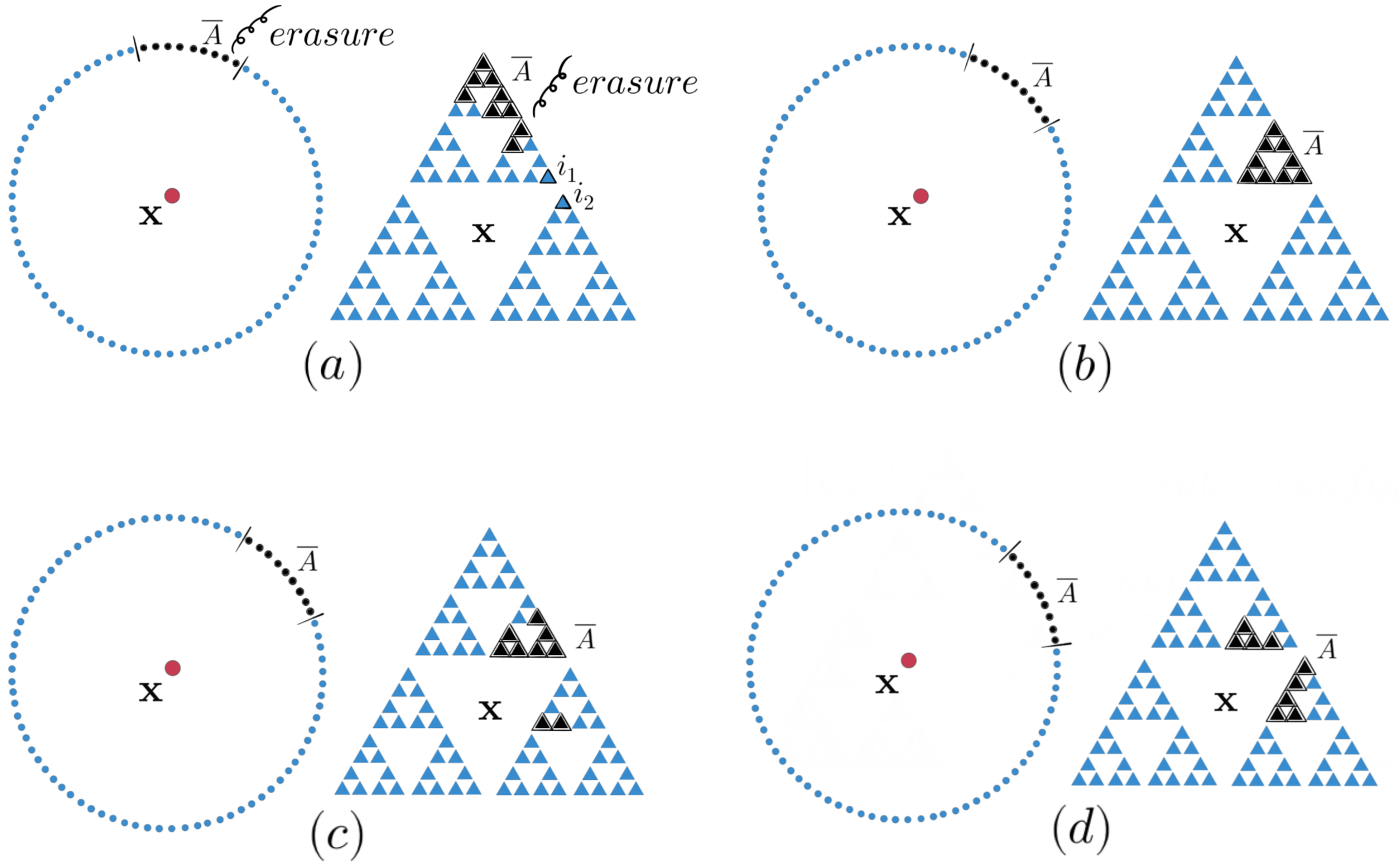}   
\phantomsubfloat{\label{11a}}\phantomsubfloat{\label{11b}}
\phantomsubfloat{\label{11c}}\phantomsubfloat{\label{11d}}
\caption{Through the rearrangement, the alternative geometry classifies the connected boundary subregions $\overline{A}$ with size $\mathrm{d_c}({\boldsymbol{x}})-1$ for the center bulk qudit $\boldsymbol{x}$ (see Thm.~\ref{cdistance}) into two types. (a), (b) Type 1, in which $\overline{A}$ lies within one big triangular block. (c), (d) Type 2, in which $\overline{A}$ crosses two blocks. In either of the two types, only at most one out of the six corner qudits $i_1,i_2,i_3,i_4,i_5,i_6$ supporting the $T(\boldsymbol{x})$ gate is covered by $\overline{A}$, and the rest five can be connected by open paths.}
\label{fig11}
\end{figure}
\end{center}
\twocolumngrid

Then to prove Thm.~\ref{cdistance} for the central bulk qudit, we simply need to show that \emph{i)} $\mathcal{M}(\boldsymbol{x})$ has recoveries on the subregion $A$ for any $\overline{A}$ of the two types as illustrated in Fig.~\ref{fig11}; \emph{ii)} extension by one physical qudit of such a $\overline{A}$ subregion while keeping the connectedness can disable the recovery on $A$. This will be a direct result from the following discussion on the reconstruction of the generator operators in $\mathcal{M}(\boldsymbol{x})$. Note that for other bulk qudits closer to the boundary, due to the self-similarity of the alternative geometry, the situation is similar and we can replicate the proof for the central bulk qudit.

To initialize the arguments, we recall that operators in $\mathcal{M}(\boldsymbol{x})$ simply have the form $\widetilde{O}=R(\cdots\otimes\mathds{1}_{\mathfrak{e}_{\boldsymbol{x}'}}\otimes\widetilde{\boldsymbol{O}}_{\boldsymbol{x}}\otimes\mathds{1}_{\mathfrak{e}_{\boldsymbol{x}''}}\otimes\cdots)R^+=R(\cdots\otimes\widetilde{\boldsymbol{O}}_{\boldsymbol{x}}\otimes\cdots)R^+$ with $\widetilde{\boldsymbol{O}}_{\boldsymbol{x}}\in\mathbf{L}(\mathfrak{e}_{\boldsymbol{x}})$. Since an emergent bulk qudit is also a ququart $\mathbb{C}^4$, i.e., $\mathbf{L}(\mathfrak{e}_{\boldsymbol{x}})=\mathbf{L}(\mathbb{C}^4)$, then, similar to the case of a physical ququart (see discussion on Eq.~\ref{ds}), $\mathbf{L}(\mathfrak{e}_{\boldsymbol{x}})$ is generated by two types of operators
\begin{align}\label{dts}
\begin{split}
&\dyad{\boldsymbol{\beta}},\quad \boldsymbol{\beta}=\boldsymbol{0},\boldsymbol{1},\boldsymbol{2},\boldsymbol{3};\\
&\widetilde{\boldsymbol{S}}^{\boldsymbol{0}}=\mathds{1},\\
&\widetilde{\boldsymbol{S}}^{\boldsymbol{1}}=\dyad{\boldsymbol0}{\boldsymbol1}+\dyad{\boldsymbol1}{\boldsymbol0}+\dyad{\boldsymbol2}{\boldsymbol3}+\dyad{\boldsymbol3}{\boldsymbol2},\\
&\widetilde{\boldsymbol{S}}^{\boldsymbol{2}}=\dyad{\boldsymbol0}{\boldsymbol2}+\dyad{\boldsymbol2}{\boldsymbol0}+\dyad{\boldsymbol1}{\boldsymbol3}+\dyad{\boldsymbol3}{\boldsymbol1},\\
&\widetilde{\boldsymbol{S}}^{\boldsymbol{3}}=\dyad{\boldsymbol0}{\boldsymbol3}+\dyad{\boldsymbol3}{\boldsymbol0}+\dyad{\boldsymbol1}{\boldsymbol2}+\dyad{\boldsymbol2}{\boldsymbol1}.
\end{split}
\end{align}
And it is easy to see the noncommutativity between these generators, i.e.
\begin{align}\label{cts}
\begin{split}
&[\widetilde{\boldsymbol{S}}^{\boldsymbol{\sigma}},\dyad{\boldsymbol{\beta}}]=\dyad{\boldsymbol{\beta}'}{\boldsymbol{\beta}}-\dyad{\boldsymbol{\beta}}{\boldsymbol{\beta}'}\ne0,\\
&\boldsymbol{\sigma}=\boldsymbol{1},\boldsymbol{2},\boldsymbol{3}.
\end{split}
\end{align}
The commutativity only holds between certain pairs of $(\dyad{\boldsymbol{\beta}}+\dyad{\boldsymbol{\beta}'})$ and $\widetilde{\boldsymbol{S}}^{\boldsymbol{\sigma}}$, i.e.,
\begin{align}\label{ctss}
\begin{split}
&[\widetilde{\boldsymbol{S}}^{\boldsymbol{1}},(\dyad{\boldsymbol{0}}+\dyad{\boldsymbol{1}})]=[\widetilde{\boldsymbol{S}}^{\boldsymbol{1}},(\dyad{\boldsymbol{2}}+\dyad{\boldsymbol{3}})]=0,\\
&[\widetilde{\boldsymbol{S}}^{\boldsymbol{2}},(\dyad{\boldsymbol{0}}+\dyad{\boldsymbol{2}})]=[\widetilde{\boldsymbol{S}}^{\boldsymbol{2}},(\dyad{\boldsymbol{1}}+\dyad{\boldsymbol{3}})]=0,\\
&[\widetilde{\boldsymbol{S}}^{\boldsymbol{3}},(\dyad{\boldsymbol{0}}+\dyad{\boldsymbol{3}})]=[\widetilde{\boldsymbol{S}}^{\boldsymbol{3}},(\dyad{\boldsymbol{2}}+\dyad{\boldsymbol{1}})]=0,\\
&[\widetilde{\boldsymbol{S}}^{\boldsymbol{\sigma}},(\dyad{\boldsymbol{\beta}}+\dyad{\boldsymbol{\beta}'})]\ne0~for~others.
\end{split}
\end{align}



Now, considering the algebra isomorphism $\mathbf{L}(\mathcal{E})\xrightarrow{R\boldsymbol{\cdot} R^+}\mathbf{L}(\mathcal{H}_{\mathrm{code}})$, $\mathcal{M}(\boldsymbol{x})$ is equivalently generated by operators of the form $R(\cdots\otimes\dyad{\boldsymbol{\beta}_{\boldsymbol{x}}}\otimes\cdots)R^+$ and $R(\cdots\otimes\widetilde{\boldsymbol{S}}^{\boldsymbol{\sigma}}_{\boldsymbol{x}}\otimes\cdots)R^+$, and with the same commutation relation. In the following, we study the reconstructions of these generators in $\mathcal{M} (\boldsymbol{x})$, whose combination will underlie the minimal reconstruction of the subalgebra $\mathcal{M}(\boldsymbol{x})$. As will be shown in the next section, the reconstructions of these generators will provide great convenience to explicitly describe the entanglement wedges and the von Neumann algebras in the study of subregion duality.

Note that for simplicity in notation, we use the abbreviation ``reconstruction of $\dyad{\boldsymbol{\beta}_{\boldsymbol{x}}}$'' instead of the complete version ``reconstruction of $R(\cdots\otimes\dyad{\boldsymbol{\beta}_{\boldsymbol{x}}}\otimes\cdots)R^+$''. But as discussed in Sec.~\ref{basic}, in the arguments regarding reconstruction, we formally work on operator $R(\cdots\otimes\dyad{\boldsymbol{\beta}_{\boldsymbol{x}}}\otimes\cdots)R^+$ in $\mathbf{L}(\mathcal{H}_{\mathrm{code}})$.

\subsubsection{Criterion for reconstructing $\dyad{\pmb{\beta}_{\pmb{x}}}$ on a subregion}
For convenience and clarity in the following arguments, we use the notation $\dyad{\overline{\boldsymbol{\beta}}_{\boldsymbol{x}}}$ instead of $\dyad{\boldsymbol{\beta}_{\boldsymbol{x}}}$. That is, by using $\overline{\boldsymbol{\beta}}_{\boldsymbol{x}}$, we mean a specifically identified one out of $\boldsymbol{\beta}_{\boldsymbol{x}}=\boldsymbol0,\boldsymbol1,\boldsymbol2,\boldsymbol3$ for the bulk qudit $\boldsymbol{x}$ in consideration, while we use $\boldsymbol{\beta}_{\boldsymbol{x}'},\boldsymbol{\beta}'_{\boldsymbol{x}'}$ for $\boldsymbol{x}'\ne\boldsymbol{x}$ to denote arbitrariness. The focus of our discussion will be sub-configurations $\alpha_{i_1},\alpha_{i_2},\ldots$s on a boundary subregion $A$. To circumvent verbosity in expression, by ``a sub-configuration matches a entanglement pattern (or a $\ket*{\tilde{\varphi}_n}$)'', or ``a sub-configuration is extracted from a entanglement pattern (or a $\ket*{\tilde{\varphi}_n}$)'', we simply mean that a sub-configuration $\alpha_{i_1},\alpha_{i_2},\ldots$ matches a $\ket{\psi_m}=\ket{\alpha_1\cdots\alpha_{i_1}\alpha_{i_2}\cdots\alpha_N}$ state (see Fig.~\ref{12a}) characterized by an entanglement pattern or participating in the expansion of $\ket*{\tilde{\varphi}_n}$.

Recall how $\cdots\otimes\dyad{\overline{\boldsymbol{\beta}}_{\boldsymbol{x}}}\otimes\cdots$ acts on the bulk qudit-product-states: $(\cdots\otimes\dyad{\overline{\boldsymbol{\beta}}_{\boldsymbol{x}}}\otimes\cdots)\ket{\boldsymbol{\beta}_{\boldsymbol{1}}\cdots\boldsymbol{\beta}_{\boldsymbol{x}}\cdots\boldsymbol{\beta}_{K}}\ne0$ if and only if $\boldsymbol{\beta}_{\boldsymbol{x}}=\overline{\boldsymbol{\beta}}_{\boldsymbol{x}}$. Accordingly, it is easy to show an equivalent condition as a criterion for a boundary operator supported on $A$ to reconstruct $R(\cdots\otimes\dyad{\overline{\boldsymbol{\beta}}_{\boldsymbol{x}}}\otimes\cdots)R^+$ (recall the definition of reconstruction in Sec.~\ref{basic}). Clearly, the condition consists of three aspects: \emph{i)} $[Q_A,P_{\mathrm{code}}]=0$; \emph{ii)} $Q_A\ket*{\tilde{\varphi}_n}\ne0$ if and only if $\ket*{\tilde{\varphi}_n}=R\ket{\boldsymbol{\beta}_{\boldsymbol{1}}\cdots\overline{\boldsymbol{\beta}}_{\boldsymbol{x}}\cdots\boldsymbol{\beta}_{K}}$, irrespective of $\boldsymbol{\beta}_{\boldsymbol{x}'}$ ($\boldsymbol{x}'\ne\boldsymbol{x}$); \emph{iii)} in the nonzero case, we have $Q_A\ket*{\tilde{\varphi}_n}=\ket*{\tilde{\varphi}_n}$.

The above condition says that the desired $Q_A$ exactly ``picks up'' those ``right'' $\ket*{\tilde{\varphi}_n}$ states that correspond to the $\ket{\boldsymbol{\beta}_{\boldsymbol{1}}\cdots\overline{\boldsymbol{\beta}}_{\boldsymbol{x}}\cdots\boldsymbol{\beta}_{K}}$s ($\boldsymbol{\beta}_{\boldsymbol{x}'}$ arbitrary for $\boldsymbol{x}'\ne\boldsymbol{x}$) in the encoding. Based on the meaning of ``picks up'', we can develop a sufficient condition for the reconstruction of $\dyad{\overline{\boldsymbol{\beta}}_{\boldsymbol{x}}}$ on a given boundary subregion $A$, which depends only on the distinguishability of the entanglement patterns on boundary subregions with respect to the emergent bulk degrees of freedom.


To see what operators on a boundary subregion $A$ can ``pick up'' a ``right'' state, we consider a sub-configuration $\alpha_{i_1},\alpha_{i_2},\ldots$ supported on $A$, which is extracted from the entanglement pattern (and hence the $\ket*{\tilde{\varphi}_n}$) corresponding to a specific $\ket{\boldsymbol{\beta}_{\boldsymbol{1}}\cdots\overline{\boldsymbol{\beta}}_{\boldsymbol{x}}\cdots\boldsymbol{\beta}_{K}}$ in the encoding. For $\dyad{\alpha_{i_1}\alpha_{i_2}\cdots}$, Eq.~\ref{qoa1} defines an operator $Q_A(\alpha_{i_1},\alpha_{i_2},\ldots)$ supported on $A$ and commutative with $P_{\mathrm{code}}$. Then according to Prop.~\ref{qa}, since $\alpha_{i_1},\alpha_{i_2},\ldots$ matches the code basis state $\ket*{\tilde{\varphi}_n}$ by definition, we have $Q_A(\alpha_{i_1},\alpha_{i_2},\ldots)\ket*{\tilde{\varphi}_n}=c_A\ket*{\tilde{\varphi}_n}$ with $\ket*{\tilde{\varphi}_n}=R\ket{\boldsymbol{\beta}_{\boldsymbol{1}}\cdots\overline{\boldsymbol{\beta}}_{\boldsymbol{x}}\cdots\boldsymbol{\beta}_{K}}$.

Though this example is trivial, it shows that there exists an operator $Q_A(\alpha_{i_1},\alpha_{i_2},\ldots)$ on $A$, which can ``pick up'' at least one ``right'' state. It also implies that certain specific sub-configuration on $A$ as described above can narrow down entanglement patterns to include (but not limited to) those corresponding to the $\ket{\boldsymbol{\beta}_{\boldsymbol{1}}\cdots\overline{\boldsymbol{\beta}}_{\boldsymbol{x}}\cdots\boldsymbol{\beta}_{K}}$s in the encoding. It follows that if we combine the effects of restriction from all such sub-configurations, then all the ``right'' code basis states can be ``picked up''.

Following this line of thinking, we only need to worry if this ``pick-up'' from the combined effects serves exactly. Indeed, though a sub-configuration $\alpha_{i_1},\alpha_{i_2},\ldots$ is extracted from a entanglement pattern corresponding to some $\ket{\boldsymbol{\beta}_{\boldsymbol{1}}\cdots\overline{\boldsymbol{\beta}}_{\boldsymbol{x}}\cdots\boldsymbol{\beta}_{K}}$, it might also matches some entanglement pattern corresponding to $\ket{\boldsymbol{\beta}'_{\boldsymbol{1}}\cdots\overline{\boldsymbol{\beta}'}_{\boldsymbol{x}}\cdots\boldsymbol{\beta}'_{K}}$ with $\overline{\boldsymbol{\beta}'}_{\boldsymbol{x}}\ne\overline{\boldsymbol{\beta}}_{\boldsymbol{x}}$. For example, when the subregion $A$ is far from the loop surrounding the hole $\boldsymbol{x}$, the change of the configuration on the loop cannot be reflected on $A$, hence, sub-configurations on $A$ can match entanglement patterns corresponding to different $\overline{\boldsymbol{\beta}'''}_{\boldsymbol{x}}\ne\overline{\boldsymbol{\beta}''}_{\boldsymbol{x}}\ne\overline{\boldsymbol{\beta}'}_{\boldsymbol{x}}\ne\overline{\boldsymbol{\beta}}_{\boldsymbol{x}}$, and ``wrong'' states will be ``picked up''.

\onecolumngrid
\begin{center}
\begin{figure}[ht]
\centering
    \includegraphics[width=17cm]{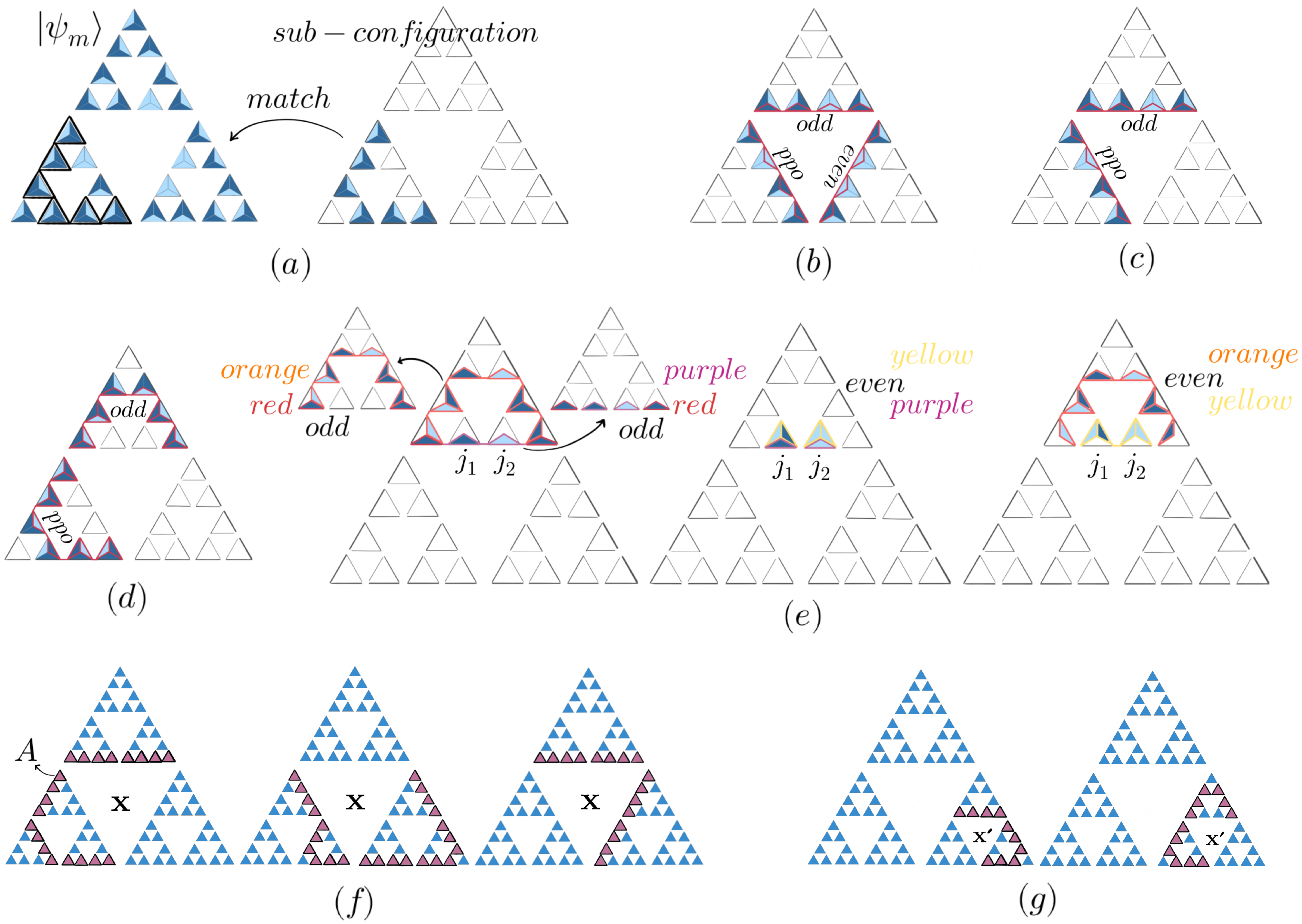}   
\phantomsubfloat{\label{12a}}\phantomsubfloat{\label{12b}}
\phantomsubfloat{\label{12c}}\phantomsubfloat{\label{12d}}
\phantomsubfloat{\label{12e}}\phantomsubfloat{\label{12f}}\phantomsubfloat{\label{12g}}
\caption{(a) The meaning of sub-configuration matching a qudit-product-state. (b) Sub-configuration on the loop (of qudits) surrounding the central hole. The red-highlighted is the loop of sides as illustrated in Fig.~\ref{8a} and \ref{9a}. The parity of the number of dark sides in the red-highlighted for each lateral is specified. (c) Sub-configuration on two out of the three laterals in the loop. The corresponding red-highlighted sides are the same as in the loop. (d) Sub-configuration on two connected paths with the same ends as the laterals. Note that while in the case of laterals each triangle (qudit) contributes one side to the red-highlighted, in the case of the path a triangle can contribute two sides. (e) The red-highlighted only include the two sides in the two ending triangles respectively. The orange-highlighted together with the two red sides are the same as the red-highlighted in one path in (d). The two purple-highlighted together with the two red sides are the same as the red-highlighted in one lateral in (c). The yellow -highlighted are on the triangle $j_1$ and $j_2$. The yellow together with the purple exactly cover two triangles. The yellow together with the orange exactly form three loops of sides. (f) Typical subregions for recoveries of $\dyad{\overline{\boldsymbol{\beta}}_{\boldsymbol{x}}}$. (g) Typical subregions for recoveries of $\dyad{\overline{\boldsymbol{\beta}}_{\boldsymbol{x}'}}$.}
\label{fig12}
\end{figure}
\end{center}
\twocolumngrid

The above heuristic arguments have pointed out that if the ``pick-up'' can avoid all the ``wrong'' basis states, some operator can reconstruct $\dyad{\overline{\boldsymbol{\beta}}_{\boldsymbol{x}}}$ on $A$. To formalize the idea, we specify the collection of all such sub-configurations on $A$, i.e.
 \begin{align}\label{collection}
\begin{split}
\{&\alpha_{i_1},\alpha_{i_2},\ldots~on~A\big|it~matches~some~pattern\\
&~corresponding~to~a~\ket{\boldsymbol{\beta}_{\boldsymbol{1}}\cdots\overline{\boldsymbol{\beta}}_{\boldsymbol{x}}\cdots\boldsymbol{\beta}_{K}},\\
&\boldsymbol{\beta}_{\boldsymbol{x}'}~arbitrary~for~\boldsymbol{x}'\ne\boldsymbol{x}\}.
\end{split}
\end{align}
And we formalize the combined effects into $Q_A$, an equal-weight-sum of all the \emph{distinct} $Q_A(\alpha_{i_1},\alpha_{i_2},\ldots)$ operators for sub-configurations $\alpha_{i_1},\alpha_{i_2},\ldots$s in the collection. That is,
\begin{equation}\label{qoa2}
Q_A=\frac{1}{c_A}\sum_{\alpha_{i_1},\alpha_{i_2},\ldots}Q_A(\alpha_{i_1}.\alpha_{i_2},\ldots)
\end{equation}
Note that according to Cor.~\ref{qaa}, the operators $Q_A(\alpha_{i_1},\alpha_{i_2},\ldots)$ and $Q_A(\alpha'_{i_1},\alpha'_{i_2},\ldots)$ that are defined for different sub-configurations on the same $A$ might be equal, hence the index in the sum does not go through all the sub-configurations in the collection, but only takes those representatives.

Then, we can formalize the above heuristic arguments into the following proposition. And the rigorous proof is given in App.~\ref{poqaaa}.

\begin{proposition}\label{qaaa}
In our code, we consider $\overline{\boldsymbol{\beta}}_{\boldsymbol{x}}$ which can be any of $\boldsymbol0,\boldsymbol1,\boldsymbol2,\boldsymbol3$, and consider a boundary subregion $A$ together with the collection of sub-configurations specified correspondingly by Eq.~\ref{collection}. If the sub-configurations in the collection only match the entanglement patterns correspond to the $\ket{\boldsymbol{\beta}_{\boldsymbol{1}}\cdots\overline{\boldsymbol{\beta}}_{\boldsymbol{x}}\cdots\boldsymbol{\beta}_{K}}$ states ($\boldsymbol{\beta}_{\boldsymbol{x}'}$ arbitrary for $\boldsymbol{x}'\ne\boldsymbol{x}$) in the encoding, then the operator $Q_A$ defined in Eq.~\ref{qoa2} reconstructs $\dyad{\overline{\boldsymbol{\beta}}_{\boldsymbol{x}}}$ on $A$.
\end{proposition}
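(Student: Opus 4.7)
The plan is to verify the three-part criterion laid out just before Eq.~\ref{qoa2}: namely that $Q_A$ (i) commutes with $P_{\mathrm{code}}$, (ii) annihilates every code basis state $\ket*{\tilde{\varphi}_n}$ whose emergent label at $\boldsymbol{x}$ differs from $\overline{\boldsymbol{\beta}}_{\boldsymbol{x}}$, and (iii) fixes every $\ket*{\tilde{\varphi}_n}$ whose emergent label at $\boldsymbol{x}$ equals $\overline{\boldsymbol{\beta}}_{\boldsymbol{x}}$. Part~(i) is immediate, since each summand $Q_A(\alpha_{i_1},\alpha_{i_2},\ldots)$ commutes with $P_{\mathrm{code}}$ by Prop.~\ref{qa}, and the defining property of reconstruction is preserved under sums.

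For the two nontrivial parts, the key input is the pair of results: Prop.~\ref{qa} tells us that a single $Q_A(\alpha_{i_1},\alpha_{i_2},\ldots)$ acts as $c_A\,\mathds{1}$ on those $\ket*{\tilde{\varphi}_n}$ whose expansion contains some $\ket{\psi_m}$ matching $\alpha_{i_1},\alpha_{i_2},\ldots$ and as $0$ on all other code basis states; Cor.~\ref{qaa} tells us that two such operators are equal precisely when their defining sub-configurations are extracted from (possibly different $\ket{\psi_m}$s of) the same $\ket*{\tilde{\varphi}_n}$. Combining these two statements, the \emph{distinct} operators $Q_A(\alpha_{i_1},\alpha_{i_2},\ldots)$ appearing in the equal-weight sum \eqref{qoa2} are in bijection with the equivalence classes of sub-configurations in the collection \eqref{collection}, and these classes are in turn in bijection with a subset of the ``right'' code basis states (those $\ket*{\tilde{\varphi}_n}$ with $\boldsymbol{\beta}_{\boldsymbol{x}}=\overline{\boldsymbol{\beta}}_{\boldsymbol{x}}$ that have at least one $\ket{\psi_m}$ supplying a sub-configuration on $A$).

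For part~(iii), fix a right code basis state $\ket*{\tilde{\varphi}_n}=R\ket{\boldsymbol{\beta}_{\boldsymbol{1}}\cdots\overline{\boldsymbol{\beta}}_{\boldsymbol{x}}\cdots\boldsymbol{\beta}_{K}}$. Every $\ket{\psi_m}$ in its expansion supplies a sub-configuration on $A$ that lies in the collection \eqref{collection}, and by Cor.~\ref{qaa} all of these sub-configurations yield the \emph{same} operator $Q_A^{(n)}$. Moreover, any other $Q_A(\alpha'_{i_1},\ldots)$ in the sum acts on $\ket*{\tilde{\varphi}_n}$ only if some $\ket{\psi_m}$ of $\ket*{\tilde{\varphi}_n}$ matches $\alpha'_{i_1},\ldots$, and this would again force $Q_A(\alpha'_{i_1},\ldots)=Q_A^{(n)}$ by Cor.~\ref{qaa}. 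Thus, among the distinct representatives in \eqref{qoa2}, exactly one acts nontrivially on $\ket*{\tilde{\varphi}_n}$, contributing $c_A\ket*{\tilde{\varphi}_n}$; after dividing by $c_A$ we obtain $Q_A\ket*{\tilde{\varphi}_n}=\ket*{\tilde{\varphi}_n}$. For part~(ii), take a wrong basis state $\ket*{\tilde{\varphi}_{n'}}=R\ket{\boldsymbol{\beta}'_{\boldsymbol{1}}\cdots\boldsymbol{\beta}'_{\boldsymbol{x}}\cdots\boldsymbol{\beta}'_{K}}$ with $\boldsymbol{\beta}'_{\boldsymbol{x}}\neq\overline{\boldsymbol{\beta}}_{\boldsymbol{x}}$. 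If any term $Q_A(\alpha_{i_1},\ldots)$ of the sum acted nontrivially on it, then by Prop.~\ref{qa} some $\ket{\psi_{m'}}$ in its expansion would match $\alpha_{i_1},\ldots$; but this sub-configuration lies in \eqref{collection} and hence, by the hypothesis of the proposition, must match only patterns corresponding to states with label $\overline{\boldsymbol{\beta}}_{\boldsymbol{x}}$ at $\boldsymbol{x}$, contradicting the disjointness of distinct entanglement patterns noted in Sec.~\ref{pattern1}. Hence every summand annihilates $\ket*{\tilde{\varphi}_{n'}}$ and so does $Q_A$.

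The main obstacle I anticipate is purely combinatorial rather than conceptual: carefully accounting for the fact that the index set in \eqref{qoa2} runs over equivalence classes under the identification of Cor.~\ref{qaa}, so that the ``one nonzero summand'' argument for part~(iii) does not double-count and the normalization $1/c_A$ is exactly right. Once this identification is stated cleanly, the rest of the argument reduces to assembling Prop.~\ref{qa} and Cor.~\ref{qaa} together with the disjointness of entanglement patterns; there is no further structural input needed.
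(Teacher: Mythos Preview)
Your proposal is correct and follows essentially the same route as the paper's proof in App.~\ref{poqaaa}: both verify the three-part criterion by invoking Prop.~\ref{qa} for the action of each summand and Cor.~\ref{qaa} to show that at most one distinct representative in the sum \eqref{qoa2} acts nontrivially on any given $\ket*{\tilde{\varphi}_n}$, with the hypothesis of the proposition ruling out contributions to ``wrong'' basis states. The combinatorial obstacle you flag---that the sum is over equivalence classes so the normalization $1/c_A$ is exact---is precisely the point the paper handles, and your argument for it (exactly one summand survives on each right $\ket*{\tilde{\varphi}_n}$) matches theirs.
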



\subsubsection{Typical boundary reconstructions of $\dyad{\pmb{\beta}_{\pmb{x}}}$}
Now, we elucidate the meaning of the above proposition and the distinguishability of the entanglement patterns in a step-wise manner with concrete examples. We show typical examples of boundary subregions supporting reconstructions of $\dyad{\pmb{\beta}_{\pmb{x}}}$. Our arguments take advantage of the pictorial representation of the $\ket{\psi_m}$ states (see Fig.~\ref{fig7}, \ref{fig8} and \ref{fig9}).

\paragraph*{\textbf{Reconstruction on the loop of qudits}} The simplest example of a reconstruction $Q_A$ is on the loop of qudits surrounding the hole $\boldsymbol{x}$ (see Fig.~\ref{12b}). To check this with Prop.~\ref{qaaa}, we consider the collection of sub-configurations on the loop as defined in Eq.~\ref{collection}. If $\alpha_{i_1},\alpha_{i_2},\dots$ is such a sub-configuration as illustrated in Fig.~\ref{12b}, then it matches some entanglement pattern corresponding to $\ket{\boldsymbol{\beta}_{\boldsymbol{1}}\cdots\overline{\boldsymbol{\beta}}_{\boldsymbol{x}}\cdots\boldsymbol{\beta}_{K}}$ in the encoding. However, by definition of the emergent bulk degrees of freedom (see Sec.~\ref{pattern1}) and the pictorial representation (see Fig.~\ref{9a}), the sub-configuration is precisely what we use to identify a concrete possibility of the parity of numbers of dark sides on the three laterals, which corresponds to $\overline{\boldsymbol{\beta}}_{\boldsymbol{x}}$. In other words, if any other entanglement pattern matches this sub-configuration, according to the concrete possibility of parity, the emergent degrees of freedom must be $\overline{\boldsymbol{\beta}}_{\boldsymbol{x}}$, and hence the entanglement pattern must correspond to some $\ket{\boldsymbol{\beta}'_{\boldsymbol{1}}\cdots\overline{\boldsymbol{\beta}}_{\boldsymbol{x}}\cdots\boldsymbol{\beta}'_{K}}$. Based on this observation, it has been sufficient to apply Prop.~\ref{qaaa}, and we can conclude that there is a reconstruction of $\dyad{\overline{\boldsymbol{\beta}}_{\boldsymbol{x}}}$ on the loop.

Now, it is clear that the existence of a reconstruction of $\dyad{\overline{\boldsymbol{\beta}}_{\boldsymbol{x}}}$ on a given boundary subregion simply relies on whether the entanglement patterns can be exactly distinguished on the subregion with respect to the emergent bulk degrees of freedom. Following this line of thinking, we can identify more boundary subregions for reconstructing $\dyad{\overline{\boldsymbol{\beta}}_{\boldsymbol{x}}}$.

Indeed, the loop surrounding the hole $\boldsymbol{x}$ is redundant to distinguish the ``right'' entanglement patterns, and two laterals of the loop already suffice (see Fig.~\ref{12c}). To see how this is possible, we consider an arbitrary $\ket{\psi_m}$ state. Recall that according to the description in Par.~\ref{psim} and the pictorial representation in Fig.~\ref{8a}, the total number of dark sides on any loop, i.e., the sum of the numbers in the three laterals, is even. Due to this restriction, on the two laterals the four possibilities of parity, i.e., (even, even), (even, odd), (odd, even) and (odd, odd), precisely determines the four possibilities on the loop, i.e., (even, even, even), (even, odd, odd), (odd, even, odd) and (odd, odd, even) (see Fig.~\ref{9a}). In other words, on the $\ket{\psi_m}$ state, identifying one out of the four possibilities on the loop corresponding to the emergent $\overline{\boldsymbol{\beta}}_{\boldsymbol{x}}$ is equivalent to identifying certain possibility out of the four on the two laterals.

\paragraph*{\textbf{Reconstruction on two laterals}} Now, as illustrated in Fig.~\ref{12c}, consider a sub-configuration $\alpha_{i_1},\alpha_{i_2},\dots$ on two specific laterals (of the loop) in the collection as defined in Eq.~\ref{collection}. Since $\alpha_{i_1},\alpha_{i_2},\dots$ matches a $\ket{\psi_m}$ state specified by a entanglement pattern corresponding to $\ket{\boldsymbol{\beta}_{\boldsymbol{1}}\cdots\overline{\boldsymbol{\beta}}_{\boldsymbol{x}}\cdots\boldsymbol{\beta}_{K}}$, then, in the pictorial representation of the $\ket{\psi_m}$ state, the sub-configuration on the loop must shows a concrete possibility, say, (even, odd, odd) as illustrated in Fig.~\ref{12b}, corresponding to the emergent $\overline{\boldsymbol{\beta}}_{\boldsymbol{x}}$. And in terms of the equivalence stated above, the sub-configuration $\alpha_{i_1},\alpha_{i_2},\dots$ on the two laterals must show a unique corresponding possibility (odd, odd) for the parity of the number of dark sides as illustrated in Fig.~\ref{12c}. Reversely, if another arbitrary entanglement pattern matches $\alpha_{i_1},\alpha_{i_2},\dots$ on the two laterals, i.e., when $\alpha_{i_1},\alpha_{i_2},\dots$ can be viewed as extracted from some $\ket{\psi_{m'}}$ specified by this entanglement pattern, then due to the equivalence again, in the pictorial representation of $\ket{\psi_{m'}}$, the sub-configuration on the loop must shows the concrete possibility (even, odd, odd). And hence this arbitrary entanglement pattern must correspond to some $\ket{\boldsymbol{\beta}'_{\boldsymbol{1}}\cdots\overline{\boldsymbol{\beta}}_{\boldsymbol{x}}\cdots\boldsymbol{\beta}'_{K}}$ in the encoding. Then, applying Prop.~\ref{qaaa}, we can conclude that there exists a reconstruction of $\dyad{\overline{\boldsymbol{\beta}}_{\boldsymbol{x}}}$ on the two laterals.

\paragraph*{\textbf{Simplified criterion for Prop.~\ref{qaaa}}}\label{simpcri} By summarizing the above arguments, we can further simplify the criterion for the existence of a reconstruction of $\dyad{\overline{\boldsymbol{\beta}}_{\boldsymbol{x}}}$ in terms of the distinguishability of entanglement patterns on a give boundary subregion $A$. Indeed, the above example illustrates that according to the way we define the $\ket{\psi_m}$ states (see Par.~\ref{psim}), there exists correlation between sub-configurations on different subregions, which can present in the same $\ket{\psi_m}$. Then, taking advantage of such correlation and referring to the above example, it is easy to show that the following condition guarantees the existence of recosntruction: There exists certain feature or restriction of the sub-configuration $\alpha_{i_1},\alpha_{i_2},\dots$ on $A$ such that for any $\ket{\psi_m}$ state, the feature or restriction can be read off $A$ if and only if in the pictorial representation the loop surrounding the hole $\boldsymbol{x}$ shows the concrete possibility (of the parity on laterals) corresponding to the emergence of $\overline{\boldsymbol{\beta}}_{\boldsymbol{x}}$. If revisiting the above example with this condition, the feature on the two laterals is simply the specification of one possibility out of (even, even), (even, odd), (odd, even) and (odd, odd). Note that if a subregion does not hold any such features, it simply means the entanglement patterns are indistinguishable on the subregions. This simplified criterion further elucidates the meaning of the distinguishability of entanglement patterns on boundary subregions with respect to the emergent bulk degrees of freedom. Based on this, it will be clear that the entanglement patterns specified in Sec.~\ref{pattern1} do realize the prediction discussed in Sec.~\ref{cencoding} (especially Condition 4 and 5).

\paragraph*{\textbf{Reconstruction on two connected paths}}\label{trdyad} According to the simplified criterion, we can even generalize the two laterals to two connected paths with the same ends as in the two laterals (see Fig.~\ref{12d}). We only require such paths to lie within the same triangular blocks as the two laterals, and to be ``economic'' enough to ensure the connectedness. We show that certain feature of the parity of the number of dark sides can be read off the paths if and only if the two laterals shows a concrete possibility of the parity, which is equivalent to the concrete possibility on the loop corresponding to the emergent $\overline{\boldsymbol{\beta}}_{\boldsymbol{x}}$. And hence there exists a reconstruction of $\dyad{\overline{\boldsymbol{\beta}}_{\boldsymbol{x}}}$ on the two paths.

As shown in Fig.~\ref{12d} and \ref{12e}, the feature of the sub-configurations on the two connected paths is simply the specific parity of the total number of the dark sides in the red-highlighted for each path in Fig.~\ref{12d}, or the ``red+orange'' highlighted for each path in Fig.~\ref{12e}. We can show that this parity for each path equals the parity of the dark sides in the highlighted for each lateral (comparing Fig.~\ref{12c} and \ref{12d}), and hence the simplified criterion is satisfied.

Without loss of generality, we can illustrate the arguments with Fig.~\ref{12e}. As shown in Fig.~\ref{12e}, the sub-configuration $\alpha_{i_1},\alpha_{i_2},\ldots$ on one path half-surrounds the sub-configuration $\alpha_{j_1},\alpha_{j_2},\ldots$ of some other qudits within the same triangular block. Now, we highlight by four colors certain collections of sides on the triangles (qudits) as shown in Fig.~\ref{12e}: the ``red+orange'' are on one of the path $\alpha_{i_1},\alpha_{i_2},\ldots$, the ``red+purple'' are on one lateral, the ``yellow+purple'' exactly cover the half-surrounded triangles $\alpha_{j_1},\alpha_{j_2},\ldots$, and the ``yellow+orange'' exactly form three loops. Now, since for each triangle representing a qudit there are even number of dark sides (see Fig.~\ref{fig7}), and in each loop, the total number of darks sides is also even (see Fig.~\ref{8a}), we can show that the parity of the total number of darks for ``yellow+purple'' equals that for ``yellow+orange'' which is even. Hence, the parity of dark sides in the  ``orange'' equals the parity of dark sides in the ``purple'', and hence the parity of dark sides on one paths (``red+orange'') equals the parity of dark sides on the corresponding lateral (``red+purple'').

Now, the typical subregions supporting reconstruction of $\dyad{\overline{\boldsymbol{\beta}}_{\boldsymbol{x}}}$ can be simply viewed as such connected paths (including the laterals) as illustrated in Fig.~\ref{12f}. By self-similarity of the alternative geometry, we have the same results for bulk qudits not in the center (see Fig.~\ref{12g}). Note that $\overline{\boldsymbol{\beta}}_{\boldsymbol{x}}$ can be any of $\boldsymbol{0},\boldsymbol{1},\boldsymbol{2},\boldsymbol{3}$.

\begin{center}
\begin{figure}[ht]
\centering
    \includegraphics[width=8.5cm]{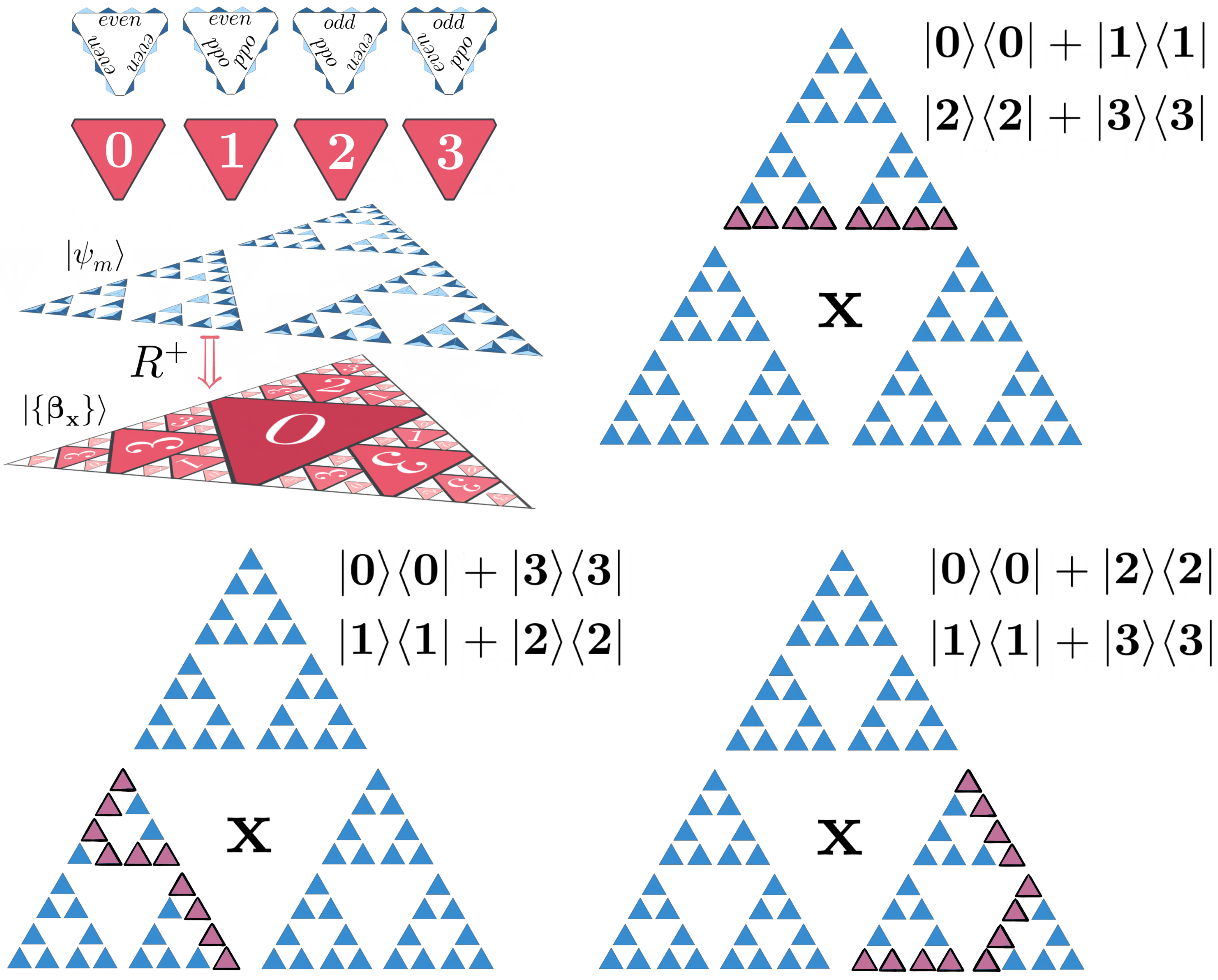}   
\caption{Typical subregions of qudits (the highlighted) supporting the reconstructions of the $\dyad{\overline{\boldsymbol{\beta}}_{\boldsymbol{x}}}+\dyad{\overline{\boldsymbol{\beta}'}_{\boldsymbol{x}}}$ operators. According to the orientation (upper left, the same as in Fig.~\ref{fig9}), the $\overline{\boldsymbol{\beta}}_{\boldsymbol{x}}$ and $\overline{\boldsymbol{\beta}'}_{\boldsymbol{x}}$ in the reconstructed logical operators have dependence on the typical subregions. The dependence is illustrated in the three pictures.}
\label{fig13}
\end{figure}
\end{center}

\subsubsection{Typical boundary reconstruction of $\dyad{\pmb{\beta}_{\pmb{x}}}+\dyad{\pmb{\beta'}_{\pmb{x}}}$}\label{trddyad}
To show more reconstructions of operators on a single bulk qudit $\boldsymbol{x}$, we can generalize the above approach based on Prop.~\ref{qaaa}. Here, we consider the reconstruction of $\dyad{\overline{\boldsymbol{\beta}}_{\boldsymbol{x}}}+\dyad{\overline{\boldsymbol{\beta}'}_{\boldsymbol{x}}}$ with $\overline{\boldsymbol{\beta}}_{\boldsymbol{x}}\ne\overline{\boldsymbol{\beta}'}_{\boldsymbol{x}}$, which will contribute to our arguments in establishing the subregion duality in the next section.

We can emulate the preceding process. The only difference is how $\dyad*{\overline{\boldsymbol{\beta}}_{\boldsymbol{x}}}+\dyad*{\overline{\boldsymbol{\beta}'}_{\boldsymbol{x}}}$ acts on the bulk qudit-product-states: $(\dyad*{\overline{\boldsymbol{\beta}}_{\boldsymbol{x}}}+\dyad*{\overline{\boldsymbol{\beta}'}_{\boldsymbol{x}}})\ket{\boldsymbol{\beta}_{\boldsymbol{1}}\cdots\boldsymbol{\beta}_{\boldsymbol{x}}\cdots\boldsymbol{\beta}_{K}}\ne0$ if and only if $\boldsymbol{\beta}_{\boldsymbol{x}}=\overline{\boldsymbol{\beta}}_{\boldsymbol{x}}$ or $\overline{\boldsymbol{\beta}'}_{\boldsymbol{x}}$. Accordingly, the equivalent condition for an operator $Q_A$ to reconstruct $\dyad{\overline{\boldsymbol{\beta}}_{\boldsymbol{x}}}+\dyad{\overline{\boldsymbol{\beta}'}_{\boldsymbol{x}}}$ on a boundary subregion $A$ only differs slightly: \emph{i)} $[Q_A,P_{\mathrm{code}}]=0$; \emph{ii)} $Q_A\ket*{\tilde{\varphi}_n}\ne0$ if and only if $\ket*{\tilde{\varphi}_n}=R\ket{\boldsymbol{\beta}_{\boldsymbol{1}}\cdots\overline{\boldsymbol{\beta}}_{\boldsymbol{x}}\cdots\boldsymbol{\beta}_{K}}$ or $\ket*{\tilde{\varphi}_n}=R\ket{\boldsymbol{\beta}_{\boldsymbol{1}}\cdots\overline{\boldsymbol{\beta}}'_{\boldsymbol{x}}\cdots\boldsymbol{\beta}_{K}}$, irrespective of $\boldsymbol{\beta}_{\boldsymbol{x}'}$ ($\boldsymbol{x}'\ne\boldsymbol{x}$); \emph{iii)} in the nonzero case, we have $Q_A\ket*{\tilde{\varphi}_n}=\ket*{\tilde{\varphi}_n}$.

Reconstructing $\dyad{\overline{\boldsymbol{\beta}}_{\boldsymbol{x}}}+\dyad{\overline{\boldsymbol{\beta}'}_{\boldsymbol{x}}}$ on $A$ requires looser distinguishability of the entanglement patterns on $A$, but the spirit is the same. Following the arguments in the reconstruction of $\dyad{\overline{\boldsymbol{\beta}}_{\boldsymbol{x}}}$, we can specify the collection of sub-configurations on $A$ similar to that in Eq.~\ref{collection}, and define the operator $Q_A$ similar to that in Eq.~\ref{qoa2}. Then, we can form a similar proposition as Prop.~\ref{qaaa}. Eventually, what can ensure $Q_A$ to be the desired reconstruction is that the subregion $A$ possess distinguishable feature to be read on every $\ket{\psi_m}$ state, which is equivalent to specifying two possibilities of parity out of the four (even, even, even), (even, odd, odd), (odd, even, odd) and (odd, odd, even) on the loop of sides surrounding the hole $\boldsymbol{x}$ (see Fig.~\ref{fig13}).

It is easy to show that typical qualified subregions supporting the reconstruction of $\dyad{\overline{\boldsymbol{\beta}}_{\boldsymbol{x}}}+\dyad{\overline{\boldsymbol{\beta}'}_{\boldsymbol{x}}}$ include each lateral of the loop of qudits surrounding the hole $\boldsymbol{x}$ and any connected path sharing the same ends as the lateral as illustrated in Fig.~\ref{fig13}. It is noticeable that according to the convention of orientation (see Fig.~\ref{9a}, \ref{9b} and \ref{fig13}), for paths lying within different triangular blocks, the $\overline{\boldsymbol{\beta}}_{\boldsymbol{x}}$ and $\overline{\boldsymbol{\beta}'}_{\boldsymbol{x}}$ in the reconstructed $\dyad{\overline{\boldsymbol{\beta}}_{\boldsymbol{x}}}+\dyad{\overline{\boldsymbol{\beta}'}_{\boldsymbol{x}}}$ operators, as indicated in Fig.~\ref{fig13}, are different. For $\boldsymbol{x}'$ not in the center, the results are the same.

\subsubsection{Reconstruction of $\widetilde{\pmb{S}}^{\pmb{\sigma}}_{\pmb{x}}$ as the gates}
It is noticeable that the distinguishability of entanglement patterns for reconstructing $\dyad{\overline{\boldsymbol{\beta}}_{\boldsymbol{x}}}$ does not guarantee the boundary subregions to be minimal reconstructions of $\mathcal{M}(\boldsymbol{x})$, because according to Condition 2 in Sec.~\ref{cencoding}, we still have to show the indistinguishability of the entanglement patterns on the complement subregions. For instance, in the above examples of distinguishability, while $\dyad{\boldsymbol{\beta}_{\boldsymbol{x}}}$ can be reconstructed on the boundary subregion $A$ as two laterals, $\dyad{\boldsymbol{\beta}_{\boldsymbol{x}}}+\dyad{\boldsymbol{\beta}'_{\boldsymbol{x}}}$ can be also reconstructed on the other lateral that is within the complement $\overline{A}$. It means that the entanglement patterns are still distinguishable on $\overline{A}$ with respect to the emergent bulk qudit, and hence $\mathcal{M}(\boldsymbol{x})$ is not completely recovered on $A$.

To construct a recovery of $\mathcal{M}(\boldsymbol{x})$ on a boundary subregion $A$, instead of investigating the indistinguishability on the complement $\overline{A}$, we can consider the reconstruction of the $\widetilde{\boldsymbol{S}}^{\boldsymbol{\sigma}}_{\boldsymbol{x}}$ operators, since the $\widetilde{\boldsymbol{S}}^{\boldsymbol{\sigma}}_{\boldsymbol{x}}$s together with the $\dyad{\boldsymbol{\beta}_{\boldsymbol{x}}}$s form generators of $\mathcal{M}(\boldsymbol{x})$. Indeed, since the $\widetilde{\boldsymbol{S}}^{\boldsymbol{\sigma}}_{\boldsymbol{x}}$s does not commute with the $\dyad{\boldsymbol{\beta}_{\boldsymbol{x}}}$s (see Eq.~\ref{cts}), Lemma~\ref{oaqec} implies that the reconstructions of the $\widetilde{\boldsymbol{S}}^{\boldsymbol{\sigma}}_{\boldsymbol{x}}$s for $\boldsymbol{\sigma}=\boldsymbol{1},\boldsymbol{2},\boldsymbol{3}$ on $A$ simply disables any reconstruction of the $\dyad{\boldsymbol{\beta}_{\boldsymbol{x}}}$s on $\overline{A}$, and negates the distinguishability of the entanglement patterns on $\overline{A}$ with respect to the emergent bulk degrees of freedom.

For convenience in discussion and without loss of generality, we describe a $\widetilde{\boldsymbol{S}}^{\boldsymbol{\sigma}}_{\boldsymbol{x}}$ operator by 
\begin{equation*}
\widetilde{\boldsymbol{S}}^{\boldsymbol{\sigma}}=\dyad*{\bar{\boldsymbol{\beta}}_{\boldsymbol{x}}}{\tilde{\boldsymbol{\beta}}_{\boldsymbol{x}}}+\dyad*{\tilde{\boldsymbol{\beta}}_{\boldsymbol{x}}}{\bar{\boldsymbol{\beta}}_{\boldsymbol{x}}}+\dyad*{\check{\boldsymbol{\beta}}_{\boldsymbol{x}}}{\hat{\boldsymbol{\beta}}_{\boldsymbol{x}}}+\dyad*{\hat{\boldsymbol{\beta}}_{\boldsymbol{x}}}{\check{\boldsymbol{\beta}}_{\boldsymbol{x}}}.
\end{equation*}
Here, we use $\bar{\boldsymbol{\beta}}_{\boldsymbol{x}},\tilde{\boldsymbol{\beta}}_{\boldsymbol{x}},\check{\boldsymbol{\beta}}_{\boldsymbol{x}},\hat{\boldsymbol{\beta}}_{\boldsymbol{x}}$ to represent $\boldsymbol{0},\boldsymbol{1},\boldsymbol{2},\boldsymbol{3}$ in any order, and we use $\boldsymbol{\beta}_{\boldsymbol{x}'}$ for arbitrarness. Then, an operator $Q_A$ ($[Q_A,P_{\mathrm{code}}]=0$) reconstructs $\widetilde{\boldsymbol{S}}^{\boldsymbol{\sigma}}_{\boldsymbol{x}}$ on $A$ if and only if the action of $Q_A$ on the basis states $\ket*{\tilde{\varphi}_n}$s, as a unitary and self-adjoint operator, simply exchanges between two code basis states
\begin{equation*}
\ket{\tilde{\varphi}_{\bar n}}=R\ket*{\boldsymbol{\beta}_{\boldsymbol{1}}\cdots\bar{\boldsymbol{\beta}}_{\boldsymbol{x}}\cdots\boldsymbol{\beta}_K},~\ket{\tilde{\varphi}_{\tilde n}}=R\ket*{\boldsymbol{\beta}_{\boldsymbol{1}}\cdots\tilde{\boldsymbol{\beta}}_{\boldsymbol{x}}\cdots\boldsymbol{\beta}_K}
\end{equation*}
and between
\begin{equation*}
\ket{\tilde{\varphi}_{\check n}}=R\ket*{\boldsymbol{\beta}_{\boldsymbol{1}}\cdots\check{\boldsymbol{\beta}}_{\boldsymbol{x}}\cdots\boldsymbol{\beta}_K},~\ket{\tilde{\varphi}_{\hat n}}=R\ket*{\boldsymbol{\beta}_{\boldsymbol{1}}\cdots\hat{\boldsymbol{\beta}}_{\boldsymbol{x}}\cdots\boldsymbol{\beta}_K}
\end{equation*}
with no effect on and irrespective of the $\boldsymbol{\beta}_{\boldsymbol{x}'}$s ($\boldsymbol{x}'\ne\boldsymbol{x}$).

To see a simple reconstruction of $\widetilde{\boldsymbol{S}}^{\boldsymbol{\sigma}}_{\boldsymbol{x}}$, we consider $T(\boldsymbol{x})=T_{i_1i_2}T_{i_3i_4}T_{i_5i_6}$ as illustrated in Fig.~\ref{9c}, where the three $T_{ii'}$ gates live respectively on the three corners of the loop surrounding the hole $\boldsymbol{x}$. And we compare the action of $P_{\mathrm{code}}T_{ii'}P_{\mathrm{code}}$ with that of $\widetilde{\boldsymbol{S}}^{\boldsymbol{\sigma}}_{\boldsymbol{x}}$.

Based on the discussion in Sec.~\ref{exmodel}, we can easily derive the following properties: (1) According to the definition of $P_{\mathrm{code}}$ in Thm.~\ref{cp1}, each of the three $T_{ii'}$ gates commute with $P_{\mathrm{code}}$, and $P_{\mathrm{code}}T_{ii'}P_{\mathrm{code}}$ is unitary and self-adjoint within $\mathcal{H}_{\mathrm{code}}$. (2) According to Prop.~\ref{pp1} and Eq.~\ref{cbs1}, the action of such a $T_{ii'}$ gates on any $\ket*{\tilde{\varphi}_n}=R\ket{\boldsymbol{\beta}_{\boldsymbol{1}}\cdots\boldsymbol{\beta}_{\boldsymbol{x}}\cdots\boldsymbol{\beta}_{K}}$ results in another $\ket*{\tilde{\varphi}_{n'}}$ and has no effect on $\boldsymbol{\beta}_{\boldsymbol{x}'}$ with $\boldsymbol{x}'\ne\boldsymbol{x}$, hence $P_{\mathrm{code}}T_{ii'}P_{\mathrm{code}}$ belongs to the subalgebra $\mathcal{M}(\boldsymbol{x})$. (3) According to the illustration of the emergent bulk qudits in Fig.~\ref{9a} and Fig.~\ref{figapp2}, the effect of the $T_{ii'}$ on the bulk qudit $\boldsymbol{x}$ is exactly exchanging the degrees of freedom between $\bar{\boldsymbol{\beta}}_{\boldsymbol{x}}$ and $\tilde{\boldsymbol{\beta}}_{\boldsymbol{x}}$, and between $\check{\boldsymbol{\beta}}_{\boldsymbol{x}}$ and $\hat{\boldsymbol{\beta}}_{\boldsymbol{x}}$. And the difference in the effects of $T_{i_1i_2}$, $T_{i_3i_4}$ and $T_{i_5i_6}$ simply corresponds to the difference in the order of $\bar{\boldsymbol{\beta}}_{\boldsymbol{x}},\tilde{\boldsymbol{\beta}}_{\boldsymbol{x}},\check{\boldsymbol{\beta}}_{\boldsymbol{x}},\hat{\boldsymbol{\beta}}_{\boldsymbol{x}}$ in the exchange.

The above comparison clearly shows that the three gates $T_{i_1i_2}$, $T_{i_3i_4}$ and $T_{i_5i_6}$ reconstruct the operator $\widetilde{\boldsymbol{S}}^{\boldsymbol{1}}_{\boldsymbol{x}}$, $\widetilde{\boldsymbol{S}}^{\boldsymbol{2}}_{\boldsymbol{x}}$ and $\widetilde{\boldsymbol{S}}^{\boldsymbol{3}}_{\boldsymbol{x}}$ respectively. Similarly, the products $T_{i_1i_2}T_{i_3i_4}$, $T_{i_3i_4}T_{i_5i_6}$ and $T_{i_3i_4}T_{i_5i_6}$ also reconstruct these operators. Note that the correspondence between the gates and the reconstructed operators depends on the convention in the illustration of the emergent bulk degrees of freedom (see Fig.~\ref{9a} and \ref{9b}).

Based on the above observation on the reconstruction of the $\widetilde{\boldsymbol{S}}^{\boldsymbol{\sigma}}_{\boldsymbol{x}}$ operators, it is sufficient to prove Thm.~\ref{cdistance}.

\paragraph*{\textbf{Proof of Thm.~\ref{cdistance} on the connected code distance}} According to Fig~\ref{fig11} and \ref{12f}, for any type of the geometry of the subregion $\overline{A}$ to be erased, there is always reconstruction of the $\dyad{\overline{\boldsymbol{\beta}}_{\boldsymbol{x}}}$ operators for $\overline{\boldsymbol{\beta}}_{\boldsymbol{x}}=\boldsymbol{0},\boldsymbol{1},\boldsymbol{2},\boldsymbol{3}$ on two connected paths (including laterals) which avoid $\overline{A}$ and hence lies within $A$. As illustrated in Fig.~\ref{fig14}, since the $\overline{A}$ subregions cover at most one out of the six of the corner qudits of the loop surrounding the hole $\boldsymbol{x}$, the two paths for reconstructing $\dyad{\overline{\boldsymbol{\beta}}_{\boldsymbol{x}}}$ and lying with two big triangular blocks can be always extended to the third triangular block by adding one more corner qudit (out of the six). Let us denote this enlarged subregion by $A_{\mathrm{min}}$, which will be proved to be a minimal subregion for recovery $\mathcal{M}(\boldsymbol{x})$ in later discussion. Then, according to the above discussion, $A_{\mathrm{min}}$ supports the reconstructions of two out of the three operators $\widetilde{\boldsymbol{S}}^{\boldsymbol{1}}_{\boldsymbol{x}},\widetilde{\boldsymbol{S}}^{\boldsymbol{2}}_{\boldsymbol{x}},\widetilde{\boldsymbol{S}}^{\boldsymbol{3}}_{\boldsymbol{x}}$; and the other one is simply a product of the two ($\widetilde{\boldsymbol{S}}^{\boldsymbol{\sigma}''}_{\boldsymbol{x}}=\widetilde{\boldsymbol{S}}^{\boldsymbol{\sigma}}_{\boldsymbol{x}}\widetilde{\boldsymbol{S}}^{\boldsymbol{\sigma}'}_{\boldsymbol{x}}$). If follows that the generators of $\mathcal{M}(\boldsymbol{x})$ can all be reconstructed on the $A_{\mathrm{min}}$ subregions as shown in Fig.~\ref{fig14}, which can completely avoid the $\overline{A}$ subregions of any type. In other words, any $\overline{A}$ of size $\mathrm{d_c}({\boldsymbol{x}})-1$ is correctable with respect to $\mathcal{M}(\boldsymbol{x})$. On the other hand, there is example of connected subregion of size $\mathrm{d_c}({\boldsymbol{x}})$ which is not correctable. Indeed, the subregion $\overline{A}$ shown in Fig.~\ref{11d} with the corner qudit in the upper big triangular block added will cover two physical qudits as one corner of the loop and hence support the reconstruction of a $\widetilde{\boldsymbol{S}}^{\boldsymbol{\sigma}}_{\boldsymbol{x}}$ operator. Combing the above observations and noting that our arguments also apply to $\boldsymbol{x}'$ no in the center, it is clear that Thm.~\ref{cdistance} has been proved.

\onecolumngrid
\begin{center}
\begin{figure}[ht]
\centering
    \includegraphics[width=15cm]{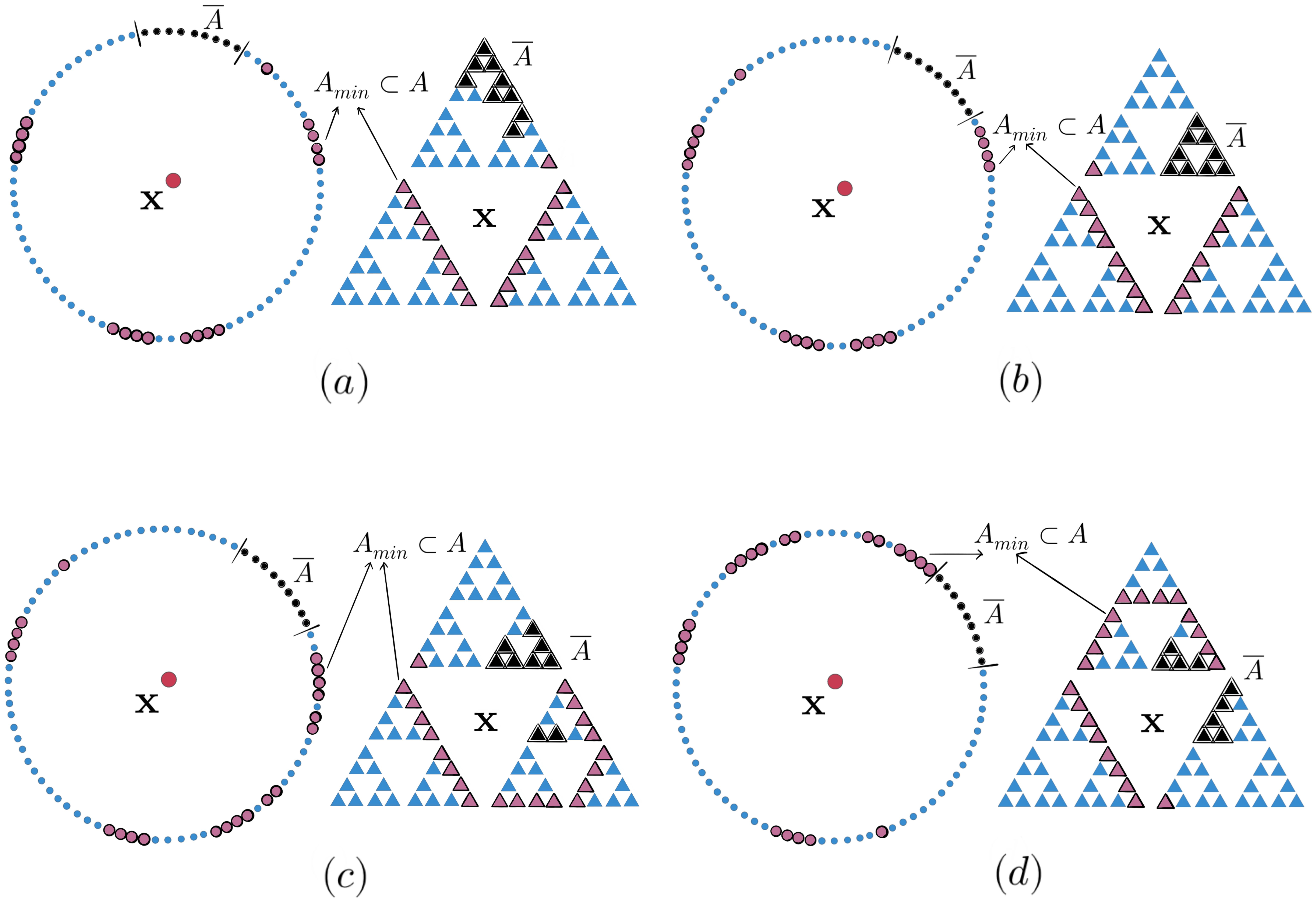}   
\phantomsubfloat{\label{14a}}\phantomsubfloat{\label{14b}}
\phantomsubfloat{\label{14c}}\phantomsubfloat{\label{14d}}
\caption{The purple-highlighted subregions represent the $A_{\mathrm{min}}\subset A$ subregions that avoid the $\overline{A}$ subregions in both the standard geometry and the alternative geometry. The $\overline{A}$ subregions are the same as illustrated in Fig.~\ref{fig11}. In the alternative geometry, each $A_{\mathrm{min}}$ consists of two connected paths in two big triangular blocks together with one more qudit in the third block. $A_{\mathrm{min}}$ supports the reconstructions of $\dyad{\overline{\boldsymbol{\beta}}_{\boldsymbol{x}}}$ operators for $\overline{\boldsymbol{\beta}}_{\boldsymbol{x}}=\boldsymbol{0},\boldsymbol{1},\boldsymbol{2},\boldsymbol{3}$ on the two paths, and support the reconstructions of $\widetilde{\boldsymbol{S}}^{\boldsymbol{\sigma}}_{\boldsymbol{x}}$, $\widetilde{\boldsymbol{S}}^{\boldsymbol{\sigma}'}_{\boldsymbol{x}}$ and $\widetilde{\boldsymbol{S}}^{\boldsymbol{\sigma}''}_{\boldsymbol{x}}=\widetilde{\boldsymbol{S}}^{\boldsymbol{\sigma}}_{\boldsymbol{x}}\widetilde{\boldsymbol{S}}^{\boldsymbol{\sigma}'}_{\boldsymbol{x}}$ on the two corners of the loop. In the standard geometry, $A_{\mathrm{min}}$ appears disconnected and scattered.}
\label{fig14}
\end{figure}
\end{center}
\twocolumngrid

Note that the above proof and the illustration in Fig.~\ref{fig14} elucidate how the alternative geometry and the rearrangement underlie the expected HQEC characteristics regarding the connected reconstruction of a single bulk qudit, as proposed in Sec.~\ref{demys}. Indeed, any connected subregion for reconstructing $\mathcal{M}(\boldsymbol{x})$ must include a minimal subregion $A_{\mathrm{min}}$ for the reconstruction. Hence, Fig.~\ref{fig14} also indicate possible connected boundary subregion for recovering the single bulk qudit $\boldsymbol{x}$.

\subsubsection{Typical reconstruction of $\widetilde{\pmb{S}}^{\pmb{\sigma}}_{\pmb{x}}$}\label{trs}
In more general cases, a single $\widetilde{\boldsymbol{S}}^{\boldsymbol{\sigma_{\boldsymbol{x}}}}_{\boldsymbol{x}}$ operator can be recovered as a product of the $S^{\sigma_i}_i$ operators on scattered physical qudits. These types of reconstruction will be important in our following arguments in explicitly showing the minimal boundary reconstruction of $\mathcal{M}(\boldsymbol{x})$.

Generally, we can consider an arbitrary product of the gates $(T_{ii'}T_{jj'}\cdots)$ which, according to the above discussion, reconstruct a product $\widetilde{\boldsymbol{S}}^{\boldsymbol{\sigma_{\boldsymbol{x}}}}_{\boldsymbol{x}}\widetilde{\boldsymbol{S}}^{\boldsymbol{\sigma_{\boldsymbol{x}'}}}_{\boldsymbol{x}'}\cdots$. However, in certain cases we may organize the product to rewrite it as
\begin{align}\label{organize1}
\begin{split}
&T_{ii'}T_{jj'}\cdots=T_{i_1i_2}T(\boldsymbol{x}_1)T(\boldsymbol{x}_{2})\cdots,\\
&or\\
&T_{ii'}T_{jj'}\cdots=T_{i_1i_2}T_{i_5i_6}T(\boldsymbol{x}_1)T(\boldsymbol{x}_{2})\cdots.
\end{split}
\end{align}
with exactly one or two $T_{ii'}$ gates multiplied with assembled gates. Here, $T_{i_1i_2}$ and $T_{i_5i_6}$ are two out of the three in an assemble gates $T(\boldsymbol{x})=T_{i_1i_2}T_{i_3i_4}T_{i_5i_6}$. In this case, the projection of $(T_{ii'}T_{jj'}\cdots)$ onto the code subspace indeed cancels the assembled gates and hence only a single $\widetilde{\boldsymbol{S}}^{\boldsymbol{\sigma_{\boldsymbol{x}}}}_{\boldsymbol{x}}$ operator is reconstructed.

\onecolumngrid
\begin{center}
\begin{figure}[ht]
\centering
    \includegraphics[width=16cm]{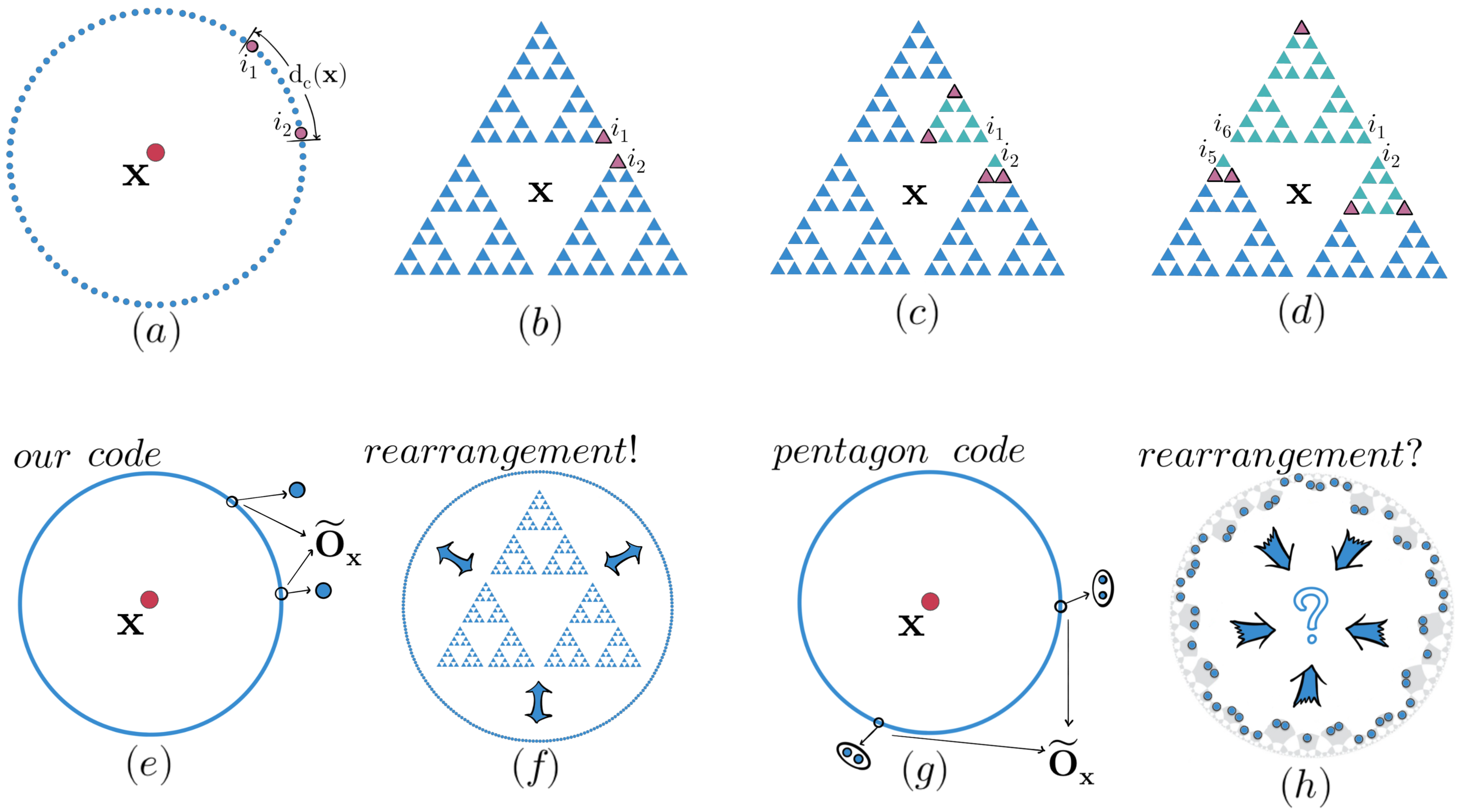}   
\phantomsubfloat{\label{15a}}\phantomsubfloat{\label{15b}}
\phantomsubfloat{\label{15c}}\phantomsubfloat{\label{15d}}
\phantomsubfloat{\label{15e}}\phantomsubfloat{\label{15f}}
\phantomsubfloat{\label{15g}}\phantomsubfloat{\label{15h}}
\caption{(a) (b) Through the rearrangement from the alternative geometric setting to the standard geometric setting, the connected code distance with respect to the central qudit $\boldsymbol{x}$ separates the two physical qudits $i_1$ and $i_2$ that supports a $T_{i_a i_2}$ gate. (c) (d) Two typical reconstructions of a $\widetilde{\boldsymbol{S}}^{\boldsymbol{\sigma_{\boldsymbol{x}}}}_{\boldsymbol{x}}$ operator for the central bulk qudit $\boldsymbol{x}$. The green-colored together with the purple-highlighted specify all the $T_{ii'}$ gates in the product $T_{ii'}T_{jj'}\cdots$ whose support only consists of the purple-highlighted. (e) In our code, two separated local qudit support a nontrivial single-bulk-qudit operator. (f) The rearrangement between the alternative geometric setting and the standard geometric setting underlies the feature in (e). (g) Depiction of Fig.16b in Ref.~\cite{pastawski2015}. In the HaPPY pentagon code, two separated local pairs of qubits support a nontrivial single-bulk-qudit operator. (h) Similarity between the features in (e) and (g) might imply a possible similar mechanism for the pentagon code. The figure (h) is adapted from figures in Ref.~\cite{pastawski2015}.}
\label{fig15}
\end{figure}
\end{center}
\twocolumngrid

On the other hand, while the product $(T_{ii'}T_{jj'}\cdots)$ seems supported on the union of the physical qudits supporting each gate, it is generally supported on a smaller boundary subregion consisting of scattered physical qudits. To show how this is possible, we recall the example indicated in the proof of Prop.~\ref{gates1} (see App.~\ref{pogates1}). As shown in Fig.~\ref{app1a}, we can consider the product of the three gates that all include a physical qudit $l$ in their supports, i.e. $T_{il}$, $T_{jl}$ and $T_{kl}$. Then their total effect on qudit $l$ is the the multiplication $S_l^{1}S_l^{2}S_l^{3}=\mathds{1}$, which means that qudit $l$ does not really belong to the support of the product of gates.

In Fig.~\ref{15c} and \ref{15d}, we illustrate two typical examples of the product $(T_{ii'}T_{jj'}\cdots)$. In such an illustration, $(T_{ii'}T_{jj'}\cdots)$ is specified by the collection of all the physical qudits supporting each $T_{ii'}$ gate in the product, i.e., both the purple-highlighted and the green-colored. In other words, for any $T_{ii'}$ gate, if it is supported on two green-colored, on one purple-highlighted together with a green-colored, or on two purple-highlighted, the gate participates in the product $(T_{ii'}T_{jj'}\cdots)$. It is clear that the true support of the product only consists of the purple-highlighted, because for each green-colored qudit, say $l$, all the three gates $T_{il}$, $T_{jl}$ and $T_{kl}$ are included in the product and hence cancels the effect on $l$. And for each of the purple-highlighted, only two out of the three gates are included in the product $(T_{ii'}T_{jj'}\cdots)$, which results in a $S^{\sigma_i}_i$ operator thereon. It follows that in both figures, the support of $(T_{ii'}T_{jj'}\cdots)$ consists of the scattered purple-highlighted on each of which the net action is the $S^{\sigma_i}_i$ operator.

It is also clear that the product $(T_{ii'}T_{jj'}\cdots)$ in Fig.~\ref{15c} and \ref{15d} can be reorganized in the two forms of Eq.~\ref{organize1} respectively. Accordingly, in Fig.~\ref{15c}, the bulk operator recovered is $\widetilde{\boldsymbol{S}}^{\boldsymbol{\sigma_{\boldsymbol{x}}}}_{\boldsymbol{x}}$, exactly the same as the case in Fig.~\ref{15b}. In Fig.~\ref{15d}, the bulk operator recovered is $\widetilde{\boldsymbol{S}}^{\boldsymbol{\sigma_{\boldsymbol{x}}}}_{\boldsymbol{x}''}=\widetilde{\boldsymbol{S}}^{\boldsymbol{\sigma_{\boldsymbol{x}}}}_{\boldsymbol{x}}\widetilde{\boldsymbol{S}}^{\boldsymbol{\sigma_{\boldsymbol{x}'}}}_{\boldsymbol{x}}$.

\subsection{Remarks on the code distance}\label{remark2}

\subsubsection{Distance and connected distance}
The fact that a two-qudit gate (see Fig.~\ref{15a}, \ref{15b} and \ref{15e}) reconstructs the $\widetilde{\boldsymbol{S}}^{\boldsymbol{\sigma}}_{\boldsymbol{x}}$ operator for the central qudit $\boldsymbol{x}$ simply means that a nontrivial single-bulk-qudit operator is supported on two physical ququarts ($\mathbb{C}^4$). Then, according to the connected distance, it is clear that the code distance with respect to $\mathcal{M}(\boldsymbol{x})$ is $2$. While this might sounds a weakness in error correction, it actually once again illustrates the essential difference between HQEC and the conventional quantum eraser correction, especially in properties corresponding to the OAQEC formalism~\cite{harlow2017}.

Similar to the HaPPY pentagon code~\cite{pastawski2015} in which a nontrivial bulk operator on the central bulk qubit has support on two pairs of neighboring physical qubits (see Fig.~\ref{15g} which depicts Fig.16b of Ref.~\cite{pastawski2015}), the small and constant code distance does not affect the demonstration of the HQEC characteristics. Indeed, in both our code and the pentagon code, the two local physical degrees of freedom, though acting nontrivially on a bulk qudit, live far apart on the boundary and are separated by the connected code distance with respect to the central bulk qudit (see Fig.~\ref{15a} and \ref{15b} for our case), which characterizes the needed correctability to explain the radial commutativity in AdS/CFT. Furthermore, the ``large enough'' connected code distance has been pointed out as a potential condition underlying the investigation of certain possible class of gates in the study of quantum computation~\cite{cree2021}.

There might be an inherent restriction on the code distance---the formalism of operator-algebra quantum error correction implies very ``poor'' protection of logical information in contrast to the conventional codes. In the conventional view, requiring genuine operator-algebra correctability for every bipartition of physical qudits (HQEC Characteristic 3, see Par.~\ref{chac3}) essentially disables the correction of any erasure errors with respect to the whole of the bulk qudits. Hence, with genuine operator-algebra correctability, it is only meaningful to study the distance of a subsystem of the bulk.



But even for the distance of bulk subsystem, the requirement of Characteristic 3 still imposes restrictions. Indeed, while certain tensor-network models that are modified based on the HaPPY code by thinning out bulk qubits~\cite{pastawski2015,cao2021} show system-size-dependent code distance, the increase in the code distance seems in tension with satisfying Characteristic 3~\footnote{First, in these models, each of the component tensor itself does not satisfy the genuine operator-algebra correctability for arbitrary bipartition of its legs. Second, in the process of thinning out the bulk degrees of freedom, tensors with a dangling leg for the bulk qubit are replaced by perfect tensors without bulk legs. For these tensors which lie at the boundary, they are usually contracted with the whole tensor network through only quite few number of legs, and most legs are uncontracted. Consequently, through the operator pushing, some of these uncontracted boundary legs can be avoided and are hence correctable with respect to the whole of the bulk. This simply means that the genuine operator-algebra correctability does not hold for boundary bipartition that separating these uncontracted legs from others.}. In an extreme case, keep thinning out the bulk qubits till only one dangling leg remains, e.g., the HaPPY one qubit code~\cite{pastawski2015}, the unique bulk qubit represents the whole logical degrees of freedom, and hence the large code distance disables those correctable boundary bipartitions to hold genuine operator-algebra correctability.

\subsubsection{``Larger'' recovery of $\dyad{\pmb{\beta}_{\pmb{x}}}$, ``smaller'' recovery of $\widetilde{\pmb{S}}^{\pmb{\sigma}}_{\pmb{x}}$}
Another noteworthy point is that the subregions supporting the reconstructions for the two types of operators, i.e., the $\dyad{\boldsymbol{\beta}_{\boldsymbol{x}}}$s and the $\widetilde{\boldsymbol{S}}^{\boldsymbol{\sigma}}_{\boldsymbol{x}}$s, appear to have different scaling on $N$. This is another property that coincides with the HaPPY code~\cite{pastawski2015}. In the latter, the price and the distance for a single bulk qudit can be ``badly mismatched''~\cite{pastawski2017}, which exactly means that the reconstruction of certain single-bulk-qubit operator requires the size of the subregion to be as large as the price, while some other operator, as mentioned above, is supported only on two local pairs of qubits.

This and the above comparison between our code and the HaPPY pentagon code appears to indicate that the latter might be equivalently described within our framework, i.e., with an alternative geometry and rearrangement. Indeed, for our code, the relation between code distance and connected code distance arises inherently from how the standard geometric setting can be rearranged from the alternative geometry (see Fig.~\ref{fig4}, \ref{15a} and \ref{15b}). And in the alternative geometry, the explicit structures of the code states naturally give rise to the different scaling of the sizes for reconstructing different single-bulk operators. Hence, it should be inspiring to investigate how the same features in the pentagon code can be interpreted in terms of a similar mechanism.

In further studies of holographic code in the formalism of operator-algebra quantum error correction, it might be important to understand the above two features regarding the sizes of subregions for the operators' reconstructions. While such correspondence is currently elusive in the tensor-network construction, our model presents a clear example that can serve as a clue for the potential studies.


\subsection{Minimal boundary subregions for reconstructing $\mathcal{M}(\pmb{x})$}\label{minirc}
Now, we explicitly describe all types of minimal boundary reconstructions of $\mathcal{M}(\boldsymbol{x})$ and show that the universal scaling behavior for them is $\sim N^{1/h}$. According to our discussion in Sec.~\ref{h1h}, it suffices to prove that all the minimal boundary recoveries scale linearly on the linear size $N_0$ (the length of a lateral) of the alternative geometry. Our arguments will mainly focus on the bulk qudit $\boldsymbol{x}$ in the center, for other bulk qudits we will take advantage of the self-similarity of the alternative geometry together with limited annotations. The basic idea is to combine the minimal recoveries of the $\dyad{\boldsymbol{\beta}_{\boldsymbol{x}}}$s and the $\widetilde{\boldsymbol{S}}^{\boldsymbol{\sigma}}_{\boldsymbol{x}}$s. And we can describe the line of thinking as follows.

\onecolumngrid
\begin{center}
\begin{figure}[ht]
\centering
    \includegraphics[width=17cm]{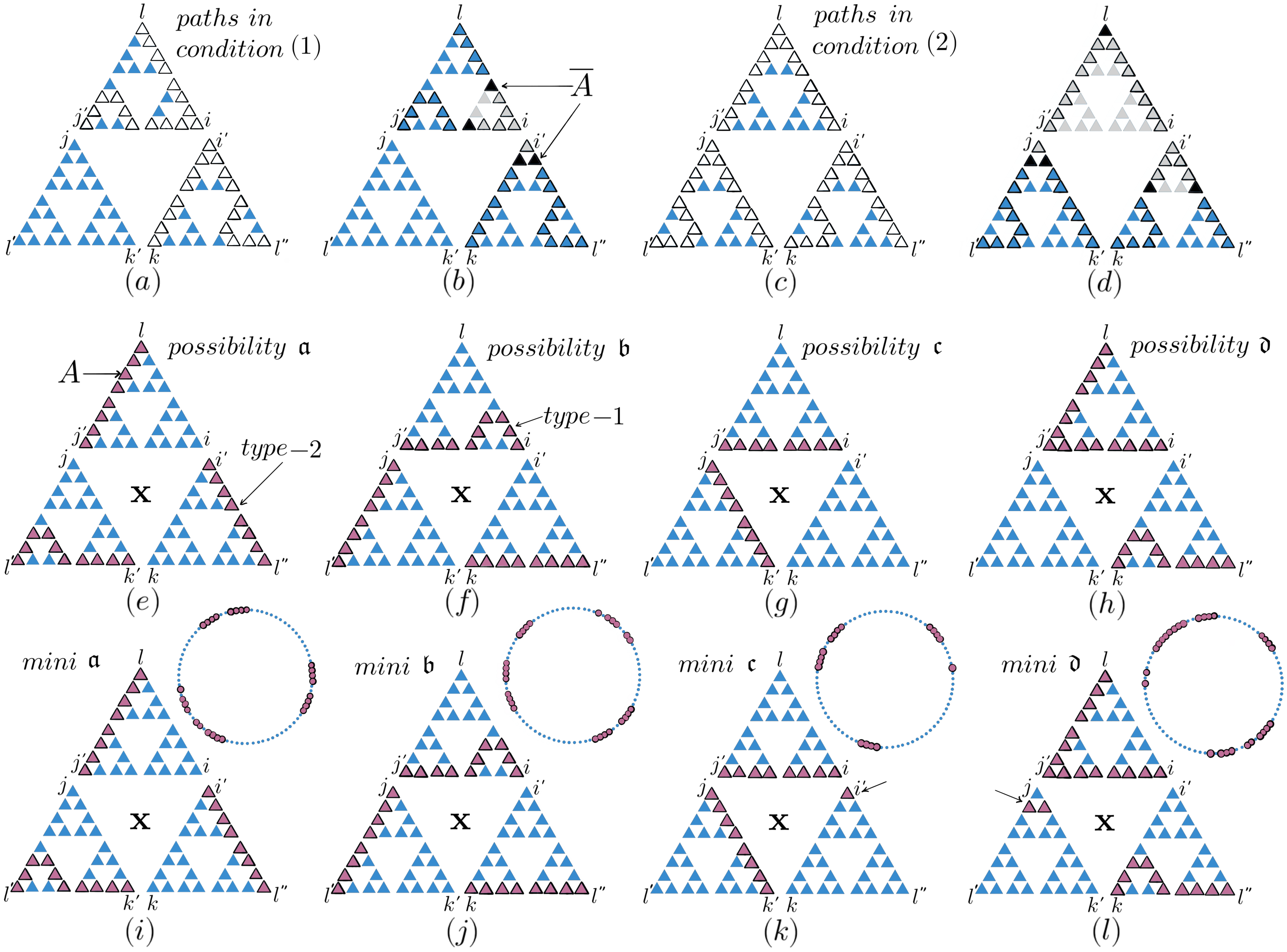}   
\phantomsubfloat{\label{16a}}\phantomsubfloat{\label{16b}}
\phantomsubfloat{\label{16c}}\phantomsubfloat{\label{16d}}
\phantomsubfloat{\label{16e}}\phantomsubfloat{\label{16f}}
\phantomsubfloat{\label{16g}}\phantomsubfloat{\label{16h}}
\phantomsubfloat{\label{16i}}\phantomsubfloat{\label{16j}}
\phantomsubfloat{\label{16k}}\phantomsubfloat{\label{16l}}
\caption{The white-highlighted denote paths described in the conditions of Prop.~\ref{qaaaa}. The purple-highlighted denote qudits in subregion $A$. The black-highlighted denote qudits in subregion $\overline{A}$. (a) (c) Examples of the paths mentioned in conditions (1) and (2) in Prop.~\ref{qaaaa}. (b) (d) Replica of Fig.~\ref{15c} and \ref{15d} with colored changed from green to grey and from purple to black as standing for $\overline{A}$. With reference to (a) and (c), it is clear that the black-highlighted live on the paths. (e) (f) (g) (h) Illustrations of the four possibilities $\mathfrak{a}$, $\mathfrak{b}$, $\mathfrak{c}$ and $\mathfrak{d}$ respectively. (i) (j) (k) (l) Illustrations of the four types of minimal subregions for recovery in both the alternative geometry and the standard 1D geometry of the physical qudits.}
\label{fig16}
\end{figure}
\end{center}
\twocolumngrid

If a boundary subregion $A$ is minimal to reconstruct $\mathcal{M}(\boldsymbol{x})$, then the $\dyad{\boldsymbol{\beta}_{\boldsymbol{x}}}$s can be surely reconstructed on $A$. According to Lemma~\ref{oaqec} and the noncommutativity shown in Eq.~\ref{cts}, there must be no reconstruction of any product $\widetilde{\boldsymbol{S}}^{\boldsymbol{\sigma}_{\boldsymbol{x}}}_{\boldsymbol{x}}\widetilde{\boldsymbol{S}}^{\boldsymbol{\sigma}_{\boldsymbol{x}'}}_{\boldsymbol{x}'}\cdots$ including the $\widetilde{\boldsymbol{S}}^{\boldsymbol{\sigma}_{\boldsymbol{x}}}_{\boldsymbol{x}}$ operators ($\boldsymbol{\sigma}_{\boldsymbol{x}}\ne\boldsymbol{0}$) on the complement $\overline{A}$. Based on this fact, we adopt a reverse approach and start with sufficient condition for the complement $\overline{A}$ to support the recoveries of the product $\widetilde{\boldsymbol{S}}^{\boldsymbol{\sigma}_{\boldsymbol{x}}}_{\boldsymbol{x}}\widetilde{\boldsymbol{S}}^{\boldsymbol{\sigma}_{\boldsymbol{x}'}}_{\boldsymbol{x}'}\cdots$. Then, negating such condition leads to necessary condition for subregion $A$ to support the recoveries of the $\dyad{\boldsymbol{\beta}_{\boldsymbol{x}}}$s. Indeed, as we will show, in terms of the necessary conditions we can already describe all possible minimal subregions supporting the recoveries of the $\dyad{\boldsymbol{\beta}_{\boldsymbol{x}}}$s. Furthermore, the description will guide how to include the recoveries of the $\widetilde{\boldsymbol{S}}^{\boldsymbol{\sigma}}_{\boldsymbol{x}}$s on $A$ and hence complete the recovery of the whole $\mathcal{M}(\boldsymbol{x})$.

For convenience in the following discussion, as illustrated in Fig.~\ref{fig16}, we let the pairs of physical qudits $(i,i'),(j,j'),(k,k')$ be the three corners of the loop surrounding the hole, and let the physical qudits $l$, $l'$ and $l'''$ be at the top of the three big triangular blocks that include $i$, $j$ and $k$ respectively. As illustrated in Fig.~\ref{15c} and \ref{15d}, the typical reconstructions of the product $\widetilde{\boldsymbol{S}}^{\boldsymbol{\sigma}_{\boldsymbol{x}}}_{\boldsymbol{x}}\widetilde{\boldsymbol{S}}^{\boldsymbol{\sigma}_{\boldsymbol{x}'}}_{\boldsymbol{x}'}\cdots$ are supported on subregions across two or three big blocks. If we consider such typical reconstructions as alternatively on $\overline{A}$ (see Fig.~\ref{16b} and \ref{16d}), i.e., as to be erased, their supports can be viewed as living on certain paths, i.e, the supports intersect with the paths and prevent the whole of such paths from being included in the subregion $A$. The proposition below proves that if the geometry of subregion $A$ satisfies certain conditions, i.e., none of certain type of connected paths is completely included in $A$, its complement $\overline{A}$ suffices to support the recoveries of the product $\widetilde{\boldsymbol{S}}^{\boldsymbol{\sigma}_{\boldsymbol{x}}}_{\boldsymbol{x}}\widetilde{\boldsymbol{S}}^{\boldsymbol{\sigma}_{\boldsymbol{x}'}}_{\boldsymbol{x}'}\cdots$, which generalize the reconstruction shown in Fig.~\ref{15c} and \ref{15d}. 

\begin{proposition}\label{qaaaa}
In the setting of our code, we take advantage of the above notations and consider the central hole $\boldsymbol{x}$. Then, either of the following two conditions on whether a subregion $A$ can include non-repeating connected paths within a block can ensure the existence of a reconstructions of the product $\widetilde{\boldsymbol{S}}^{\boldsymbol{\sigma}_{\boldsymbol{x}}}_{\boldsymbol{x}}\widetilde{\boldsymbol{S}}^{\boldsymbol{\sigma}_{\boldsymbol{x}'}}_{\boldsymbol{x}'}\cdots$ with $\widetilde{\boldsymbol{S}}^{\boldsymbol{\sigma}_{\boldsymbol{x}}}_{\boldsymbol{x}}$ included and $\boldsymbol{\sigma}_{\boldsymbol{x}}\ne\boldsymbol{0}$ on the complement $\overline{A}$.

(1). On any two blocks out of the three, the subregion $A$ does not include any of the four types of paths: paths linking qudits $i$ to $j'$, paths linking qudits $i$ to $l$, paths linking qudits $i'$ to $k$, or paths linking $i'$ to $l''$ (see Fig.~\ref{16a}).

(2). In any direction to view the alternative geometry, $A$ does not include any of the six types of paths: paths linking qudits $i'$ to $k$, paths linking qudits $i'$ to $l''$, paths linking qudits $j$ to $k'$, paths linking $j$ to $l'$, paths linking $l$ to $i$, or paths linking $l$ to $j'$(see Fig.~\ref{16c}).
\end{proposition}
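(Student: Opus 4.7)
The plan is to construct, directly on $\overline{A}$, a product of the $T_{ii'}$ gates whose projection onto $\mathcal{H}_{\mathrm{code}}$ realizes the desired operator $\widetilde{\boldsymbol{S}}^{\boldsymbol{\sigma}_{\boldsymbol{x}}}_{\boldsymbol{x}}\widetilde{\boldsymbol{S}}^{\boldsymbol{\sigma}_{\boldsymbol{x}'}}_{\boldsymbol{x}'}\cdots$, generalizing the exemplary templates in Fig.~\ref{15c} and Fig.~\ref{15d}. The algebraic engine is the cancellation $S^1 S^2 S^3=\mathds{1}$ from Eq.~\ref{ss}: whenever three $T_{ii'}$ gates incident to a single interior qudit $l$ are all included in the product, their net action on $l$ is the identity and $l$ drops out of the support. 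Consequently, tracing a non-repeating path of qudits within a big triangular block and including \emph{every} $T_{ii'}$ whose both endpoints lie on the path causes every interior qudit to cancel, while each of the two terminal qudits retains a single factor $S^{\sigma}$. If the two terminal qudits are chosen to be corner qudits of the loop encircling the central hole $\boldsymbol{x}$ (endpoint pairs such as $(i,j')$, $(i,l)$, $(i',k)$, $(i',l'')$), the resulting product admits one of the two organizations of Eq.~\ref{organize1}, so that after absorbing the freely inserted assembled gates $T(\boldsymbol{y})$ (which commute with and project trivially through $P_{\mathrm{code}}$) the residual action inside the code is exactly $\widetilde{\boldsymbol{S}}^{\boldsymbol{\sigma}_{\boldsymbol{x}}}_{\boldsymbol{x}}$, possibly accompanied by $\widetilde{\boldsymbol{S}}^{\boldsymbol{\sigma}_{\boldsymbol{x}'}}_{\boldsymbol{x}'}$ factors coming from non-central holes whose loops are also traversed.

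Under condition~(1), I would then argue as follows. Select two of the three big blocks. The hypothesis says $A$ contains no path of any of the four enumerated types linking the designated corner qudits inside these two blocks. In the planar Sierpi\'nski geometry restricted to the two chosen blocks, this absence forces $\overline{A}$ to carry a separating set of qudits between the forbidden endpoint pairs; a direct inspection at the smallest triangular scale, combined with the self-similarity of the alternative geometry, shows that any inclusion-minimal separator inside a block is itself realized by a connected, non-repeating chain of qudits whose endpoints coincide with block-boundary vertices, and the enumeration of the four path types is set up precisely so that at least one such chain is available in $\overline{A}$. Concatenating two such chains across the chosen pair of blocks and forming the gate product as in the first paragraph produces an operator supported in $\overline{A}$ whose $\mathcal{H}_{\mathrm{code}}$-projection has the required form with $\boldsymbol{\sigma}_{\boldsymbol{x}}\ne\boldsymbol{0}$. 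Condition~(2) is handled by the same template with the richer list of six path types and the additional freedom to rotate the viewing direction; one picks a direction in which the hypothesis fails $A$ and then forms the gate product along the guaranteed chain in $\overline{A}$.

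The hard part will be the clean passage from ``$A$ does not contain any of the enumerated paths'' to the existence of a connected path \emph{of one of those types} lying in $\overline{A}$. A bare planarity or Menger-type argument yields only a vertex cut in $\overline{A}$, whereas the construction above requires the cut to be realized by a single connected chain terminating at specific corner qudits, with its orientation compatible with the convention of Fig.~\ref{9a} so that the projection lands on $\boldsymbol{\sigma}_{\boldsymbol{x}}\ne\boldsymbol{0}$ rather than on the identity sector. I expect this step to reduce, by the self-similarity of the Sierpi\'nski block, to a finite case analysis inside the smallest triangular unit, where the topology of minimal cuts can be enumerated and matched against the four (respectively six) admissible endpoint pairs listed in the proposition. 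A secondary bookkeeping issue is confirming that the incidental $\widetilde{\boldsymbol{S}}^{\boldsymbol{\sigma}_{\boldsymbol{x}'}}_{\boldsymbol{x}'}$ factors picked up along the path cannot conspire to neutralize the central effect on $\boldsymbol{x}$; this is immediate because every such factor acts on a \emph{different} bulk qudit than $\boldsymbol{x}$ and is independently nontrivial, so the tensor-product structure keeps the $\mathfrak{e}_{\boldsymbol{x}}$-component equal to $\widetilde{\boldsymbol{S}}^{\boldsymbol{\sigma}_{\boldsymbol{x}}}_{\boldsymbol{x}}$.
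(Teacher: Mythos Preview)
Your cancellation claim in the first paragraph is wrong. On a non-repeating path, each interior qudit $l$ has exactly two of its three neighbors on the path, so only two of the three gates $T_{\cdot l}$ are included; the net action on $l$ is $S^{\sigma_a}_l S^{\sigma_b}_l = S^{\sigma_c}_l \ne \mathds{1}$, and $l$ remains in the support. The templates in Fig.~\ref{15c} and \ref{15d} are not paths but filled regions: the green qudits form a two-dimensional patch in which every green qudit has all three of its neighbors in green$\cup$purple, so all three incident gates participate and the qudit cancels; only the purple boundary qudits (with just two of three neighbors inside) survive.

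This misreading leads you to the wrong construction strategy. You try to find a connected path in $\overline{A}$ with prescribed corner endpoints, and you correctly flag the passage from ``no path in $A$'' to ``path of the right type in $\overline{A}$'' as the hard step---in fact it need not hold, because $\overline{A}$ can be a scattered separating set with no connected chain reaching the corners. The paper avoids this entirely: for condition~(1) it grows the connected component $\mathscr{I}$ of $i$ inside $A\cap(\text{block})$ as a tree and takes the product of \emph{all} gates with at least one endpoint in $\mathscr{I}$ (except $T_{ii'}$). Every qudit of $\mathscr{I}\setminus\{i\}$ then has all three incident gates present and cancels, so the support reduces to $\{i\}$ together with the frontier set $\bar{\mathscr{I}}\subset\overline{A}$ of first $\overline{A}$-qudits met along paths from $i$. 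The hypothesis that $A$ contains no $i\to j'$ or $i\to l$ path is precisely what guarantees the tree never escapes the block, so the frontier lies in $\overline{A}$. Repeating from $i'$ in the second block and multiplying by $T_{ii'}$ removes $i,i'$ from the support, yielding an operator on $\overline{A}$ whose code projection is the desired $\widetilde{\boldsymbol{S}}^{\boldsymbol{\sigma}_{\boldsymbol{x}}}_{\boldsymbol{x}}\widetilde{\boldsymbol{S}}^{\boldsymbol{\sigma}_{\boldsymbol{x}'}}_{\boldsymbol{x}'}\cdots$. Condition~(2) uses the same filled-region mechanism across all three blocks with an additional complementary region $\mathscr{I}_0$ on the block containing $l$.
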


The meaning of the conditions in the above proposition for our arguments is that if we reversely include such paths in $A$, it would be necessary for $A$ to support the $\dyad{\boldsymbol{\beta}_{\boldsymbol{x}}}$s. The proof of the above proposition is given in App.~\ref{poqaaaa}, which also proves the following corollary.

\begin{corollary}\label{qaaaaa}
In the condition (1) of Prop.~\ref{qaaaa}, the minimal number of qudits in $\overline{A}$ which can prevent the existence of non-repeating connected paths in $A$ linking qudits $i$ to $j'$ or to $l$, is upper bounded by $N_0/2$.
\end{corollary}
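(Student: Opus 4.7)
The plan is to exhibit an explicit choice of $\overline{A}$ of size $2$ (in fact much smaller than the stated bound $N_0/2$), using the recursive Sierpi\'nski structure of the block $B_1$ containing both $i$ and $l$. Recall that $B_1$ has linear size $N_0/2$ and decomposes into three sub-blocks $B_{1a}, B_{1b}, B_{1c}$, each of linear size $N_0/4$, meeting pairwise at single edges of the graph (each meeting being a pair of adjacent qudits, one in each sub-block). Because $i$ sits at the corner of $B_1$ neighboring $B_2$, the self-similar construction places $i$ at the ``outer'' corner of $B_{1a}$, opposite both meeting edges; likewise $k'$ (through which any path from $B_1$ to $j' \in B_3$ within $B_1 \cup B_3$ must cross, via the $(k,k')$ pair) sits at the outer corner of $B_{1b}$, and $l$ sits at the outer corner of $B_{1c}$.

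I would then take $\overline{A}$ to be the two qudits of $B_{1a}$ that participate in its meeting edges with $B_{1b}$ and $B_{1c}$, respectively. Since these two qudits are distinct from $i$ for $N_0 \ge 4$, removing them leaves $B_{1a} \setminus \overline{A}$ as a component of the induced subgraph on $(B_1 \cup B_3) \setminus \overline{A}$ that is disconnected from $B_{1b}$ and $B_{1c}$ (all edges from $B_{1a}$ to the other two sub-blocks pass through the cut qudits). Consequently, no path in $A$ restricted to $B_1 \cup B_3$ can connect $i$ to $l$ (which would require crossing from $B_{1a}$ to $B_{1c}$), nor $i$ to $j'$ (which would require first reaching $k' \in B_{1b}$ to traverse the $(k', k)$ meeting into $B_3$). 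This yields the uniform bound $|\overline{A}| = 2 \le N_0/2$ for all $N_0 \ge 4$; the base case $N_0 = 2$ is degenerate, as $B_1$ then contains a single qudit.

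The main obstacle is really just the bookkeeping verification that the two chosen meeting qudits are genuinely distinct from $i$, $k'$, $l$, which follows from the positioning of corners in the Sierpi\'nski recursion at $n \ge 2$. Paths through $B_2$ need not be considered, because condition~(1) of Prop.~\ref{qaaaa} addresses paths within the pair $B_1 \cup B_3$ only. By self-similarity, the same 2-qudit construction applies when $\boldsymbol{x}$ is not the central bulk qudit -- one recurses into the smaller three-block structure around $\boldsymbol{x}$ -- so the bound $\le N_0/2$ persists throughout the lattice.
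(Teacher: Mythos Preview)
Your core idea—exploiting the recursive bottleneck structure of the Sierpi\'nski block to exhibit a constant-size vertex cut isolating $i$—is sound and in fact yields a bound far tighter than the paper's $N_0/2$. However, you have misread the corner labeling. In the paper's conventions the pairs $(i,i')$, $(j,j')$, $(k,k')$ sit at the three corners of the central loop, and each big block contains exactly two of these six qudits: the block containing $i$ also contains $j'$, with $l$ at its apex (see the setup preceding Prop.~\ref{qaaaa}, and condition~(2) which lists ``paths linking $l$ to $i$, or paths linking $l$ to $j'$'' as all lying in a single block). So $i$, $j'$, $l$ are the three corners of $B_1$; your detour through $k'\in B_1$ and $j'\in B_3$ is incorrect, and the remark about paths through $B_2$ being excluded by condition~(1) rests on this mislabeling. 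With the correct placement $i\in B_{1a}$, $j'\in B_{1b}$, $l\in B_{1c}$, your two meeting qudits of $B_{1a}$ already disconnect $i$ from both $j'$ and $l$ within $B_1$, and the argument finishes without any reference to $B_3$.

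Modulo that fix, this is a genuinely different route from the paper. The paper extracts the corollary from the tree construction in the proof of Prop.~\ref{qaaaa}: one grows a tree of paths rooted at $i$ through $A$, and the blocking set $\bar{\mathscr{I}}\subset\overline{A}$ consists of the first $\overline{A}$-qudit encountered on each branch; the bound $N_0/2$ appears because when the tree fills the whole block its ends lie on the opposite lateral, of length $N_0/2$. Your cut is tighter and more elementary, but the paper's $\bar{\mathscr{I}}$ is not chosen merely for size—it is precisely the support used immediately afterward to build the gate-product reconstruction of the $\widetilde{\boldsymbol{S}}^{\boldsymbol{\sigma}}_{\boldsymbol{x}}$ operators (cf.\ Sec.~\ref{trs} and Fig.~\ref{15c}), which a bare two-qudit cut does not furnish.
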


\subsubsection{Four possibilities of minimal recoveries}
Now, we suppose that $A$ is a minimal boundary subregion supporting the reconstruction of $\mathcal{M}(\boldsymbol{x})$. According to Prop.~\ref{qaaaa}, subregion $A$ must include certain non-repeating connected paths in the big blocks so that the complement $\overline{A}$ cannot support any reconstruction of the $\widetilde{\boldsymbol{S}}^{\boldsymbol{\sigma}}_{\boldsymbol{x}}$ operators. Based on this observation, we can study how certain paths are necessarily included in $A$. We will show that this study inherently unveils all the possible minimal subregions.

For convenience in the following arguments, for a non-repeating connected path with one big block, we call it of type-1 if it links the two ends of a lateral, e.g., a path linking qudits $i$ to $j'$ as in condition (1) of Prop.~\ref{qaaaa} (see Fig.~\ref{16f}); we call it of type-2 if it links one end of a lateral and the top in the same block, e.g., a path linking qudits $i'$ to $l''$ (see Fig.~\ref{16e}). Accordingly, we can classify all the possible minimal subregions into two classes: \emph{i)} On each of the three big blocks, $A$ includes at least a type-1 path or a type-2 path. \emph{ii)} On one block, $A$ includes neither type-1 path nor type-2 path. Note that $A$ must include paths within at least two blocks, otherwise condition (1) in Prop.~\ref{qaaaa} will be satisfied.

According to this classification, we directly describe four possibilities on how the subregion $A$ include the necessary paths, and show how these paths themselves, or together with additional qudits, can form four possible minimal subregion for reconstructing $\mathcal{M}(\boldsymbol{x})$. Then, we will discuss how to show that any possible minimal recovery can be viewed as one of the four possibilities.

In class \emph{i)}, there are two possibilities for the paths to negate the two conditions: $\mathfrak{a}$: Three type-2 paths lie within the three blocks respectively (see Fig.~\ref{16e}). $\mathfrak{b}$: A type-1 path lies in one block, and two type-2 paths lie within the other two blocks respectively (see Fig.~\ref{16f}). In class \emph{ii)}, we can show that paths described in either of the following two possibilities must be present in $A$. $\mathfrak{c}$: Two type-1 paths lie within two blocks respectively (see Fig.~\ref{16g}), coinciding with the typical reconstruction of the $\dyad{\boldsymbol{\beta}_{\boldsymbol{x}}}$s as mentioned in Par.~\ref{trdyad} (see Fig.~\ref{12f}). $\mathfrak{d}$: Both a type-1 path and a type-2 path lie within one block, and a type-2 paths lies within another block (see Fig.~\ref{16h}). Note that to ensure the reconstruction of all the $\dyad{\boldsymbol{\beta}_{\boldsymbol{x}}}$s on a subregion $A$, paths described in $\mathfrak{a}$, $\mathfrak{b}$, $\mathfrak{c}$ and $\mathfrak{d}$ must be present. Otherwise, if any qudit in these paths is removed, it is easy to see that certain product $\widetilde{\boldsymbol{S}}^{\boldsymbol{\sigma}_{\boldsymbol{x}}}_{\boldsymbol{x}}\widetilde{\boldsymbol{S}}^{\boldsymbol{\sigma}_{\boldsymbol{x}'}}_{\boldsymbol{x}'}\cdots$ including the $\widetilde{\boldsymbol{S}}^{\boldsymbol{\sigma}_{\boldsymbol{x}}}_{\boldsymbol{x}}$ operators ($\boldsymbol{\sigma}_{\boldsymbol{x}}\ne\boldsymbol{0}$) can be reconstructed on the complement $\overline{A}$ in a similar fashion as in Fig.~\ref{16b} and \ref{16d}.

Indeed, the paths in $\mathfrak{a}$ or $\mathfrak{b}$ are not only necessary but also sufficient to support a reconstruction of $\mathcal{M}(\boldsymbol{x})$. That is because we can view them as examples of typical reconstructions of the $\dyad{\boldsymbol{\beta}_{\boldsymbol{x}}}$s, which can be easily derived from the criterion in Par.~\ref{simpcri}~\footnote{That is to consider the features on the paths whether it exactly implies the parity on the loops surrounding the central hole in specifying the emergent bulk degrees of freedom. For $\mathfrak{a}$, the correspondence between the parity is: (1) (even, even, even) or (odd, odd, odd) on the three paths corresponds to (even, even, even) on the loop; (2) (even, odd, odd) or (odd, even, even) on the paths corresponds to (odd, even, odd) on the loop... It is important to note that for each $\ket{\psi_m}$ the total number of dark sides on each of the three laterals of the alternative geometry is even.}, and also as typical reconstructions of the $\widetilde{\boldsymbol{S}}^{\boldsymbol{\sigma}}_{\boldsymbol{x}}$s (comparing Fig.~\ref{15c} and Fig.~\ref{16i}), which are described in Sec.~\ref{trs} (see Fig.~\ref{15c} and \ref{15d}). It follows that if a boundary subregion $A$ consists of paths in $\mathfrak{a}$ or $\mathfrak{b}$, it already supports a reconstruction of $\mathcal{M}(\boldsymbol{x})$. In addition, since no more qudit can be removed from the paths (otherwise the conditions in Prop.~\ref{qaaaa} will be satisfied), such a subregion $A$ supports a minimal reconstruction of $\mathcal{M}(\boldsymbol{x})$. These cases of subregions for minimal recoveries are illustrated in Fig.~\ref{16i} and \ref{16j} together with their illustration when rearranged to the standard geometric setting.

In $\mathfrak{c}$ and $\mathfrak{d}$, according to similar arguments as above, the paths surely reconstruct the the $\dyad{\boldsymbol{\beta}_{\boldsymbol{x}}}$s. However, since subregion $A$ does not include any of the type-1 or type-2 path in the third block, it is possible that $\overline{A}$ can include a type-1 path in this block so that the $\dyad{\boldsymbol{\beta}_{\boldsymbol{x}}}+\dyad{\boldsymbol{\beta}'_{\boldsymbol{x}}}$ operators can be reconstructed on $\overline{A}$ (see Sec.~\ref{trddyad}), which do not commute with all the $\widetilde{\boldsymbol{S}}^{\boldsymbol{\sigma}}_{\boldsymbol{x}}$ operators (see Eq.~\ref{ctss}). Then, if $\mathcal{M}(\boldsymbol{x})$ is reconstructed on $A$, $A$ must include a minimal subset of qudits on the third block to prevent any type-1 path to be included in $\overline{A}$, as illustrated in Fig.~\ref{16k} and \ref{16l}.

It is easy to show that preventing any type-1 path on a block is equivalent to preventing both type-1 and type-2 paths sharing the same end of a lateral, which is exactly the condition described in condition (1) of Prop.~\ref{qaaaa}. Hence, according to Cor.~\ref{qaaaaa}, the number of qudits in this minimal subset scales linearly or sublinearly on $N_0$. And according to the proof of Prop.~\ref{qaaaa} (see App.~\ref{poqaaaa}) and the typical reconstruction of the $\widetilde{\boldsymbol{S}}^{\boldsymbol{\sigma}}_{\boldsymbol{x}}$ operators (see Sec.~\ref{trs}), these necessary qudits included in $A$ together with one of the path can reconstruct one $\widetilde{\boldsymbol{S}}^{\boldsymbol{\sigma}}_{\boldsymbol{x}}$ operator with $\boldsymbol{\sigma}_{\boldsymbol{x}}\ne\boldsymbol{0}$. Similarly, the two paths can reconstruct another $\widetilde{\boldsymbol{S}}^{\boldsymbol{\sigma}'}_{\boldsymbol{x}}$ operator with $\boldsymbol{\sigma}'_{\boldsymbol{x}}\ne\boldsymbol{0}$. Hence, the two paths together with the minimal subset of qudits can already reconstruct $\widetilde{\boldsymbol{S}}^{\boldsymbol{\sigma}}_{\boldsymbol{x}}$, $\widetilde{\boldsymbol{S}}^{\boldsymbol{\sigma}'}_{\boldsymbol{x}}$ and $\widetilde{\boldsymbol{S}}^{\boldsymbol{\sigma}''}_{\boldsymbol{x}}=\widetilde{\boldsymbol{S}}^{\boldsymbol{\sigma}}_{\boldsymbol{x}}\widetilde{\boldsymbol{S}}^{\boldsymbol{\sigma}'}_{\boldsymbol{x}}$, and hence reconstructs $\mathcal{M}(\boldsymbol{x})$. Furthermore, it is easy to see that no more qudit can be removed, then the two paths together with the minimal subset is a minimal subregion. These cases together with their geometry after rearranged to the standard setting, as illustrated in Fig.~\ref{16k} and \ref{16l}.

Note that $\mathfrak{c}$ describes the smallest subregion for recovering $\mathcal{M}(\boldsymbol{x})$ which only consists of the two type-1 paths plus one more qudit at the corner in the other block (see Fig.~\ref{16k}). The size of this subregion is $N_0+1$. Moreover, according to the rearrangement between the alternative geometry and the standard 1D geometry, $\mathfrak{c}$ also describes the possibility with smallest connected price (see Fig.~\ref{14d}).

It can be shown that any minimal subregion for recovering $\mathcal{M}(\boldsymbol{x})$ can be viewed as belonging to one of the four possibilities $\mathfrak{a}$, $\mathfrak{b}$, $\mathfrak{c}$ and $\mathfrak{d}$. Indeed, if we classify all the subregion $A$ that can recover $\mathcal{M}(\boldsymbol{x})$ according to the number and the type of paths included in $A$ on each block, it will be clearly that in any specific case, either it essentially matches the four possibilities with difference only in the symmetry or the deformation of the paths, or it includes extra paths or qudits that can be removed to reach the four possibilities. Based on this observation, we can reach the following theorem.

\begin{theorem}\label{minithm}
Consider the central bulk qudit $\boldsymbol{x}$ in our code. If $A$ is a minimal boundary subregion on which $\mathcal{M}(\boldsymbol{x})$ can be reconstructed, then $A$ must fall into one possibility of $\mathfrak{a}$, $\mathfrak{b}$, $\mathfrak{c}$ and $\mathfrak{d}$ as described above and illustrated in Fig.~\ref{16i}, \ref{16j}, \ref{16k} and \ref{16l}. And such a minimal subregion $A$ scales linearly on $N_0$, or sublinearly on $N$ as $\sim N^{1/h}$.
\end{theorem}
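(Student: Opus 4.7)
The plan is to exploit the tension between reconstructing the projectors $\dyad{\boldsymbol{\beta}_{\boldsymbol{x}}}$ on $A$ and the obstruction to reconstructing the non-commuting operators $\widetilde{\boldsymbol{S}}^{\boldsymbol{\sigma}}_{\boldsymbol{x}}$ on $\overline{A}$. Any minimal $A$ recovering $\mathcal{M}(\boldsymbol{x})$ must in particular support the full family of $\dyad{\boldsymbol{\beta}_{\boldsymbol{x}}}$'s, so by Lemma~\ref{oaqec} combined with the non-commutation relations of Eq.~\ref{cts} and \ref{ctss}, no product $\widetilde{\boldsymbol{S}}^{\boldsymbol{\sigma}_{\boldsymbol{x}}}_{\boldsymbol{x}} \widetilde{\boldsymbol{S}}^{\boldsymbol{\sigma}_{\boldsymbol{x}'}}_{\boldsymbol{x}'}\cdots$ with $\boldsymbol{\sigma}_{\boldsymbol{x}}\ne\boldsymbol{0}$ can be reconstructed on $\overline{A}$. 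Taking the contrapositive of Prop.~\ref{qaaaa} then yields the necessary geometric constraint: $A$ must intercept certain connected paths within each of the three big triangular blocks around $\boldsymbol{x}$, preventing $\overline{A}$ from containing any of the obstructing type-1/type-2 paths there.

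Next I would organize the minimal configurations by how $A$ distributes type-1 and type-2 paths across the three blocks. Splitting into two classes---(i) $A$ contains at least one such path in every block, (ii) there is a block in which $A$ contains neither---exhausts the possibilities. Counting arguments within class (i) immediately give $\mathfrak{a}$ (three type-2 paths) and $\mathfrak{b}$ (one type-1 and two type-2 paths). Within class (ii), the exceptional block would allow $\overline{A}$ to carry a type-1 path and hence reconstruct $\dyad{\boldsymbol{\beta}_{\boldsymbol{x}}}+\dyad{\boldsymbol{\beta}'_{\boldsymbol{x}}}$ (Sec.~\ref{trddyad}); since this fails to commute with one of the $\widetilde{\boldsymbol{S}}^{\boldsymbol{\sigma}}_{\boldsymbol{x}}$'s (Eq.~\ref{ctss}), the reconstruction of the subalgebra on $A$ forces $A$ to include a \emph{minimal cut} subset of that block which prevents any type-1 path in $\overline{A}$. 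Together with the path data in the other two blocks, this produces exactly $\mathfrak{c}$ and $\mathfrak{d}$, and Cor.~\ref{qaaaaa} bounds the cut by $N_0/2$.

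I would then verify sufficiency and minimality for each of the four configurations. For $\mathfrak{a}$ and $\mathfrak{b}$, the simplified criterion of Par.~\ref{simpcri} shows the $\dyad{\boldsymbol{\beta}_{\boldsymbol{x}}}$'s are reconstructible on the selected paths, while the typical constructions of Sec.~\ref{trs} give the $\widetilde{\boldsymbol{S}}^{\boldsymbol{\sigma}}_{\boldsymbol{x}}$'s as products of gates supported on these paths; minimality follows because any deletion violates Prop.~\ref{qaaaa} and restores an obstructing reconstruction on $\overline{A}$. For $\mathfrak{c}$ and $\mathfrak{d}$, one of the $\widetilde{\boldsymbol{S}}^{\boldsymbol{\sigma}}_{\boldsymbol{x}}$ generators is reconstructed using the cut subset together with one path (via a product of gates of the form in Eq.~\ref{organize1}), a second $\widetilde{\boldsymbol{S}}^{\boldsymbol{\sigma}'}_{\boldsymbol{x}}$ is reconstructed using the two paths, and the third is just the product; deletion of any cut qudit restores a type-1 path in $\overline{A}$, while deletion from a path either destroys the $\dyad{\boldsymbol{\beta}_{\boldsymbol{x}}}$ reconstruction or reactivates an obstructing $\widetilde{\boldsymbol{S}}$ product. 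The length of each type-1/type-2 path is $O(N_0)$ and the cut is also $O(N_0)$, yielding the $\sim N_0$ scaling and, via $N=N_0^h$, the advertised $\sim N^{1/h}$.

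The hard part will be the completeness of the classification, i.e., ruling out any minimal $A$ that does not match one of $\mathfrak{a},\mathfrak{b},\mathfrak{c},\mathfrak{d}$ up to deformation. The delicate case is when $A$ contains additional paths, redundant cut qudits, or paths that wander between blocks in irregular ways; for each such $A$ one must exhibit a removable qudit whose deletion still leaves a reconstruction of every $\dyad{\boldsymbol{\beta}_{\boldsymbol{x}}}$ and every $\widetilde{\boldsymbol{S}}^{\boldsymbol{\sigma}}_{\boldsymbol{x}}$, contradicting minimality. I expect this to be handled by a systematic case analysis exploiting the three-fold block symmetry and the self-similarity of the Sierpi\'nski geometry; the same self-similarity then lifts the proof from the central bulk qudit to any $\boldsymbol{x}'$, preserving the $O(N_0)$ bound on a suitably rescaled linear size.
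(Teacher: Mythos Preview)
Your proposal is correct and follows essentially the same approach as the paper: negating Prop.~\ref{qaaaa} to force path constraints on $A$, splitting into the two classes according to whether every block carries a type-1/type-2 path, deriving $\mathfrak{a},\mathfrak{b}$ from class (i) and $\mathfrak{c},\mathfrak{d}$ (with the minimal cut bounded by Cor.~\ref{qaaaaa}) from class (ii), and verifying sufficiency/minimality via Par.~\ref{simpcri} and Sec.~\ref{trs}. Your identification of the completeness of the classification as the delicate step is also accurate---the paper treats it by the same kind of case analysis you outline, arguing that any other configuration either matches one of the four up to symmetry/deformation or contains removable qudits.
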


\subsubsection{Minimal recoveries of non-central bulk qudits}
To describe all possible minimal subregions for reconstructing $\mathcal{M}(\boldsymbol{x})$ with $\boldsymbol{x}$ not in the center, we can apply similar method, including typical reconstructions and necessary paths as described in Prop.~\ref{qaaaa}. In this case, the possibility $\mathfrak{c}$ still describes a minimal recoveries and shows the smallest size for both the price and the connected price. The difference is that when specify the reconstructions of the $\dyad{\boldsymbol{\beta}_{\boldsymbol{x}}}$s with the criterion developed in Par.~\ref{simpcri}, we cannot directly utilize the fact that in each $\ket{\psi_m}$ the number of darks side on each laterals of the alternative geometry is even. Hence, possibilities $\mathfrak{a}$, $\mathfrak{b}$ and $\mathfrak{d}$ need to be changed according to a slightly modified version of Prop.~\ref{qaaaa}. Completing the detailed arguments should be straightforward based on the above discussion. Here, we only indicate that the key in the difference: A modified version of type-2 path needs to link an end of a lateral of the loop surround the non-central hole $\boldsymbol{x}$ to the top of the big block with respect to the central hole, instead of the smaller block. Then, the modified $\mathfrak{a}$ simply corresponds to the expectation that the boundary reconstruction of $\boldsymbol{x}$ on the farther side of the boundary is more consuming than that in the nearer boundary.

Now, based on all the above arguments for both central and non-central bulk qudit, we can already reach the price and the connected price.

\begin{corollary}
In our code, the price of the bulk qudits takes the value $\mathrm{p}(\boldsymbol{x})=N_0+1,N_0/2+1,N_0/4+1,\ldots$. The connected price takes the value $\mathrm{p_c}(\boldsymbol{x})\sim (68/100)N,(68/300)N,(68/900)N,\ldots.$
\end{corollary}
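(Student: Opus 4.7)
The plan is to combine Theorem~\ref{minithm} (and its non-central analogue indicated in the remarks) with the explicit self-similar rearrangement between the Sierpi\'nski alternative geometry and the standard 1D boundary, and then optimize separately over the two notions of size.

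\textbf{Step 1 (price of the central bulk qudit).} I would start by ranking the four types of minimal recoveries $\mathfrak{a},\mathfrak{b},\mathfrak{c},\mathfrak{d}$ of Theorem~\ref{minithm} by cardinality. In the alternative geometry, a type-1 path across one of the three large blocks has exactly $N_0/2$ qudits (the length of one lateral of the loop surrounding the central hole), while a type-2 path requires strictly more qudits in order to reach the top of a block. A direct count then gives $|\mathfrak{c}|=N_0+1 < |\mathfrak{b}|<|\mathfrak{d}|<|\mathfrak{a}|$, so $\mathrm{p}(\boldsymbol{x})=|\mathfrak{c}|=N_0+1$ for the central bulk qudit. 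For a bulk qudit at radial depth $r$, the self-similarity of the Sierpi\'nski geometry, together with the modified version of Proposition~\ref{qaaaa} mentioned at the end of Sec.~\ref{minirc}, forces the analogous $\mathfrak{c}$-type minimal recovery to live within a block of linear size $N_0/2^r$. Its two type-1 paths have $N_0/2^{r+1}$ qudits each, so $\mathrm{p}(\boldsymbol{x})=N_0/2^r+1$, producing the announced sequence $N_0+1,\,N_0/2+1,\,N_0/4+1,\ldots$ along the radial direction.

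\textbf{Step 2 (connected price: reduction to a geometric optimization).} The connected price is, by Theorem~\ref{minithm}, the minimum, over all four minimal-recovery types and all their deformations inside each block, of the length of the smallest connected 1D arc whose image in the alternative geometry contains that recovery. Since any connected 1D arc containing a given set contains in particular the \emph{convex hull} (with respect to the 1D ordering) of that set, this reduces to computing, for each admissible choice of paths in the alternative geometry, the span of its image under the rearrangement. Minimizing this span over the four types and their placements yields $\mathrm{p_c}(\boldsymbol{x})$. Before optimizing I would argue, as in the connected-distance proof, that $\mathfrak{c}$-type recoveries (the ones of smallest cardinality and the most ``compact'' in the Sierpi\'nski) also minimize the 1D span, because their qudits can be chosen along the edges of the two largest blocks and hence map under the rearrangement to the outer ends of the corresponding 1D intervals.

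\textbf{Step 3 (recursive computation of the 1D span).} Let $s_k$ denote the size of the smallest connected 1D arc that contains the image of an $\mathfrak{c}$-type minimal recovery at level $k$ of the Sierpi\'nski (with total $N=3^k$). The rearrangement rule of Fig.~\ref{4a} places the three large sub-blocks consecutively on the 1D ring, and inside each sub-block the rule acts self-similarly. A type-1 path along the lateral of a sub-block is mapped by the internal rule to a subset whose 1D span inside that sub-block is precisely $s_{k-1}$ of the appropriate sub-problem, while the single corner qudit in the third block maps to an endpoint of that block's interval. Working this out gives the recursion
\begin{equation*}
s_k \;=\; \tfrac{2}{3}N + s_{k-1},\qquad s_0=1,
\end{equation*}
up to lower-order corrections (the exact rational coefficient in front of the $s_{k-1}$ spill is what ultimately fixes the numerical constant $68/100$; writing the recursion with the precise fractional spill from the alternative-geometry rule gives a convergent geometric series whose limit is the claimed fraction of $N$). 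Solving the recursion in closed form and taking the large-$k$ limit yields $\mathrm{p_c}(\boldsymbol{x})\sim(68/100)N$ for the central bulk qudit.

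\textbf{Step 4 (non-central bulk qudits and final bookkeeping).} For a bulk qudit at radial depth $r$, exactly the same reduction applies, but now the relevant Sierpi\'nski structure is the sub-block of total size $N/3^r$ containing its hole, and the 1D image sits inside the corresponding sub-arc of length $N/3^r$ on the ring. The recursion of Step 3 then produces $\mathrm{p_c}(\boldsymbol{x})\sim (68/100)\,N/3^r$, which is the announced sequence $(68/100)N,(68/300)N,(68/900)N,\ldots$. I expect the main obstacle to be Step 3: extracting the precise constant $68/100$ (as opposed to merely confirming a linear-in-$N$ scaling) requires careful bookkeeping of the ``spill'' of the recovery into the third sub-block at every level of self-similarity, and verifying that $\mathfrak{c}$-type recoveries indeed minimize the 1D span against the more sprawling possibilities $\mathfrak{a},\mathfrak{b},\mathfrak{d}$; everything else follows by assembling results already proved in the previous sections.
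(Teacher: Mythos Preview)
The paper does not give a detailed proof of this corollary; it simply states it as an immediate consequence of the preceding discussion, in particular the explicit claims that the $\mathfrak{c}$-type minimal recovery has size $N_0+1$ and that $\mathfrak{c}$ ``also describes the possibility with smallest connected price (see Fig.~\ref{14d}).'' The numerical constant $68/100$ is evidently meant to be read off the picture rather than derived by a recursion. Your Step~1 (price) is correct and matches the paper's reasoning exactly, and your Step~4 (self-similar scaling by $1/3^r$ for non-central qudits) is also in line with how the paper treats the radial hierarchy.

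The genuine gap is Step~3. The recursion you write, $s_k=\tfrac{2}{3}N+s_{k-1}$ with $s_0=1$ and $N=3^k$, gives $s_k=N$ for every $k$, not $(68/100)N$; so as stated it cannot produce the claimed constant, and your parenthetical hedging (``the exact rational coefficient in front of the $s_{k-1}$ spill is what ultimately fixes the numerical constant'') is an admission that the actual computation has not been carried out. To extract $68/100$ you would have to trace, level by level, exactly how the two laterals and the single corner qudit of the $\mathfrak{c}$-recovery land on the 1D ring under the specific rearrangement rule of Fig.~\ref{fig3}/\ref{4a}, and then take the convex hull; this is a bookkeeping exercise on the figure, not a clean recursion of the form you wrote. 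A secondary issue is your Step~2 claim that $\mathfrak{c}$ minimizes the 1D span: smallest cardinality does not by itself imply smallest span, and the paper only asserts this by pointing to Fig.~\ref{14d} rather than proving it. So while your overall architecture (identify the optimizing type, compute its 1D span via the rearrangement, then scale by self-similarity) is the right one and is more explicit than what the paper offers, the core numerical step is not actually established in your proposal.
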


\section{Subregion duality and RT formula in OAQEC}\label{sdrt}
Based on the results of the previous sections, especially on the typical reconstructions of bulk local operators (see Sec.~\ref{rcgmx}) and the minimal recoveries of bulk local operator algebras (see Sec.~\ref{minirc}), we systematically demonstrate the HQEC characteristics regarding subregion duality, and show the entanglement entropy as a version of the RT/FLM formula (see Characteristic 3, 4.1, 4.2 and 5 in Par.~\ref{chac3} and Sec.~\ref{sdew}). The illustrations of subregion duality for disconnected boundary bipartitions will also complete the demonstration for uberholography (see Characteristic 6 in Par.~\ref{chac6}). Indeed, a systematic demonstration of subregion duality in the formalism of genuine OAQEC has been rarely reported. Therefore, our demonstrations could potentially serve as a useful reference for future studies.

As expected in the previous discussion (see Sec.~\ref{sdew}), our arguments will embody a framework which translates between the algebraic and the geometric aspects of subregion duality in the formalism of genuine OAQEC. Upon the demonstration of the genuine OAQEC (Characteristic 3) for boundary bipartitions $A\overline{A}$, our framework can be presented in the following aspects: We (1) formalize the meaning of ``splits'' for bulk qudits in the entangling surface; (2) specify the entangling surface and the entanglement wedges according to the splits; (3) elaborate the explicit structure of the two von Neumann algebras $R^+\mathcal{M}_A R$ and $R^+\mathcal{M}_{\overline{A}}R$ in terms of the splits in the form of von Neumann algebra tensor product; (4) and realize the ``standard'' prescription for extracting the bulk term and the area term in the RT formula~\cite{harlow2017}. Indeed, while (1) and (2) might depend on the specifics of exact models, the way to achieve (3) and to specify the setting for (4) is general, and the form of Eq.~\ref{ew2}, Prop.~\ref{sova} together with Prop.~\ref{declprop} can be applied generally.


One might make an analogy between the roles of the framework described below and the greedy algorithm developed in the tensor network paradigm~\cite{pastawski2015,jahn2021} where the later can work efficiently in specifying the greedy wedge in the subsystem-code formalism. As we will show, our framework enables us to capture the the exact and complete description of subregion duality in the genuine OAQEC formalism, including both the algebraic and the geometric aspects. Accordingly, not only can we demonstrate the expected properties of subregion duality, but we can also conduct future studies analytically based on these properties.

\onecolumngrid
\begin{center}
\begin{figure}[ht]
\centering
    \includegraphics[width=15cm]{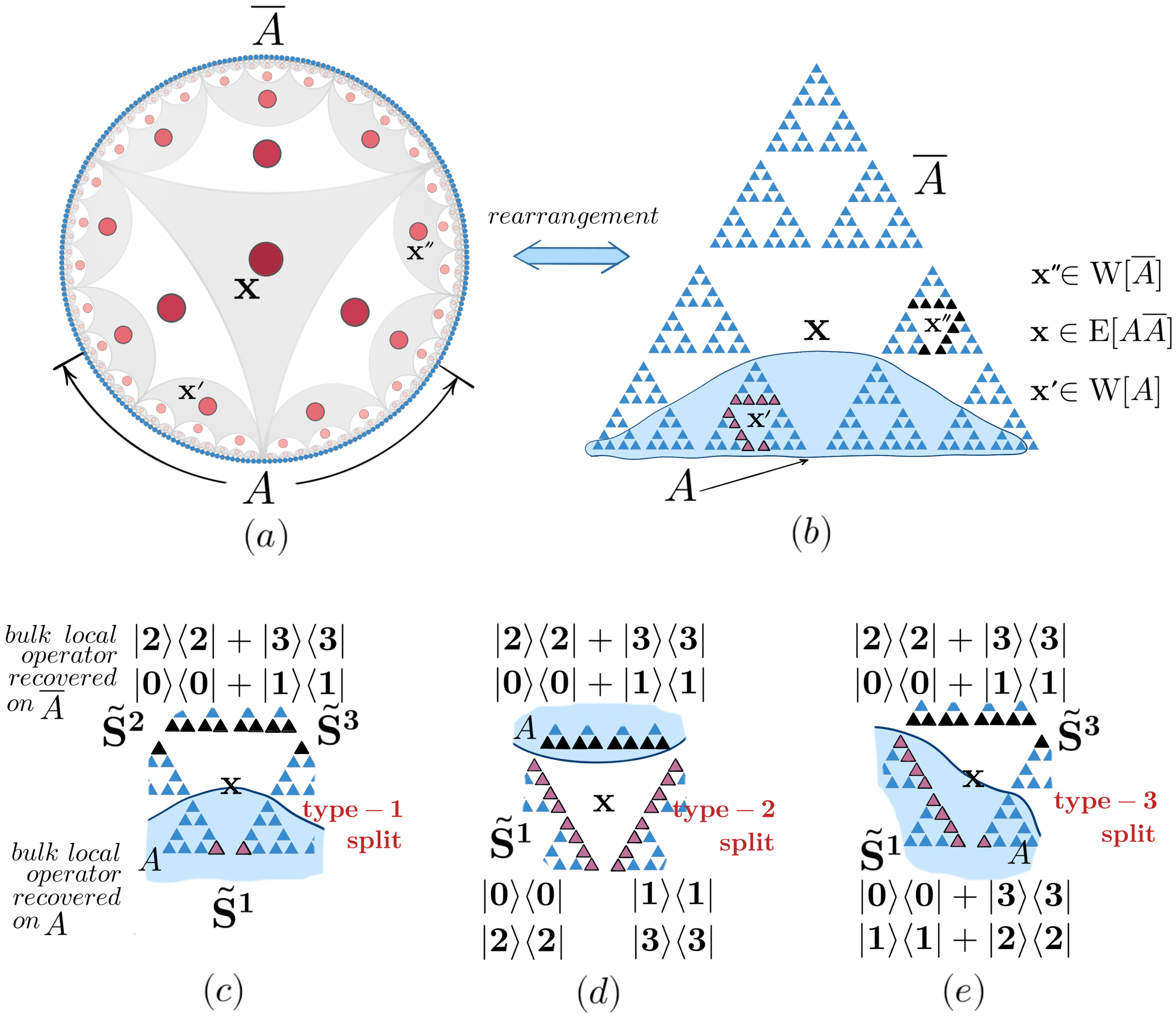}   
\phantomsubfloat{\label{17a}}\phantomsubfloat{\label{17b}}
\phantomsubfloat{\label{17c}}\phantomsubfloat{\label{17d}}
\phantomsubfloat{\label{17e}}
\caption{A connected boundary bipartition $A\overline{A}$ in the standard geometric setting, as illustrated in (a), has a more compact presentation in the alternative geometry, as illustrated in (b). In (b), the purple-highlighted indicate a minimal subregion lying within $A$ for recovering $\mathcal{M}(\boldsymbol{x}')$ of some bulk qudits (holes) $\boldsymbol{x}'$ (see Fig.~{16k} for reference), and similarly the black-highlighted indicate a minimal subregion lying within $\overline{A}$ for recovering $\mathcal{M}(\boldsymbol{x}'')$ of some bulk qudits (holes) $\boldsymbol{x}''$. (c) (d) (e) represent the three types of splits that can be identified in (b) or other bipartitions. As listed for each type, we can identify typical reconstructions of bulk local operators according to the results of the preceding section. The highlighted indicate the subregions for those typical reconstructions (see Fig.~\ref{fig12}, \ref{fig13} and \ref{fig15} for references).}
\label{fig17}
\end{figure}
\end{center}
\twocolumngrid

\subsection{Split of bulk local operator algebra}\label{splitsec}

We keep considering the code as in both the standard geometric setting and the alternative geometric setting, and keep utilizing the rearrangement bridging the two geometries (see Fig.~\ref{fig4}, \ref{fig5} and \ref{fig6}). Now, to demonstrate the HQEC characteristics regarding subregion duality in the standard geometric setting, we take advantage of what we have investigated the code in the alternative geometry---the typical reconstructions of operators $R(\cdots\otimes\dyad{\boldsymbol{\beta}_{\boldsymbol{x}}}\otimes\cdots)R^+$, $R(\cdots\otimes\dyad{\boldsymbol{\beta}_{\boldsymbol{x}}}+\dyad{\boldsymbol{\beta}'_{\boldsymbol{x}}}\otimes\cdots)R^+$ and $R(\cdots\otimes\widetilde{\boldsymbol{S}}^{\boldsymbol{\sigma}}_{\boldsymbol{x}}\otimes\cdots)R^+$ (see Sec.~\ref{rcgmx}) for single bulk qudit $\boldsymbol{x}$ (or a hole $\boldsymbol{x}$), together with the minimal recoveries of $\mathcal{M}(\boldsymbol{x})$ (see Sec.~\ref{minirc}). The basic idea is as follows: Whether a subregion $A$ of physical/boundary qudits can completely or partially recover a logical/bulk qudit $\boldsymbol{x}$ relies on whether a minimal recovery of the whole bulk local operator algebra $\mathcal{M}(\boldsymbol{x})$ or the reconstructions of certain operators in the algebra are supported on $A$; and classifying the bulk qudits according to such different ways of recoveries is expected to specify the entanglement wedges defined in terms of $\mathrm{W}[A]$, $\mathrm{W}[\overline{A}]$ and $\mathrm{E}[A\overline{A}]$ (see Sec.~\ref{sdew}).


More explicitly as illustrated in Fig.~\ref{fig17}, a boundary bipartition $A\overline{A}$ in the standard geometry (see Fig.~\ref{17a}) has a more compact presentation in the alternative geometry (see Fig.~\ref{17b}). Then, based on the latter and according to the results on the minimal recoveries, for certain bulk qudits or holes, e.g., $\boldsymbol{x}'$ and $\boldsymbol{x}''$ in Fig.~\ref{17a} and \ref{17b}, we can identify minimal subregions for the recoveries of $\mathcal{M}(\boldsymbol{x}')$ and $\mathcal{M}(\boldsymbol{x}'')$ within the subregions $A$ and $\overline{A}$ respectively. This simply implies that the bulk qudit $\boldsymbol{x}'$ and $\boldsymbol{x}''$ belong to $\mathrm{W}[A]$ and $\mathrm{W}[\overline{A}]$ respectively. Indeed, thanks to the minimal recoveries described in Sec.~\ref{minirc} and the convenience of the alternative geometry, we can identify all such bulk qudits and hence specify $\mathrm{W}[A]$ and $\mathrm{W}[\overline{A}]$.

It is crucial to note that for each bulk qudit (holes) in the illustration, if we cannot identify any minimal subregion for the recovery of $\mathcal{M}(\boldsymbol{x})$ within $A$ or $\overline{A}$, e.g., for qudits like $\boldsymbol{x}$ in Fig.~\ref{fig17}, then we can always identify certain nontrivial bulk local operators reconstructable on $A$ and $\overline{A}$ respectively. As illustrated in Fig.~\ref{17c}, \ref{17d} and \ref{17e}, according to the results in the preceding section, for each of the three types of $\boldsymbol{x}$ we can present typical bulk local operators of the form $\dyad{\boldsymbol{\beta}_{\boldsymbol{x}}}$, $\dyad{\boldsymbol{\beta}_{\boldsymbol{x}}}+\dyad{\boldsymbol{\beta}'_{\boldsymbol{x}}}$ or $\widetilde{\boldsymbol{S}}^{\boldsymbol{\sigma}}_{\boldsymbol{x}}$ that can be reconstructed on $A$ and $\overline{A}$. This observation exactly illustrates how the bulk local operator algebra can split (see Sec.~\ref{sdew}): (1) It shows clues to prove the nontrivial center for the genuine OAQEC: we can search the nontrivial typical reconstructions which can be supported on both $A$ and $\overline{A}$ in order to ensure the nontrivial center $\mathrm{Z}(R^+\mathcal{M}_A R)$. (2) It can also guide the specification of the entangling surface $\mathrm{E}[A\overline{A}]$ which is the characterization of subregion duality in genuine OAQEC (see Sec.~\ref{sdew}).

What is not apparent from the illustration is how to specify the complete structure of the von Neumann algebras $R^+\mathcal{M}_A R$, $R^+\mathcal{M}_{\overline{A}}R$ and $\mathrm{Z}(R^+\mathcal{M}_A R)$, and how to explicitly extract/compute the bulk and area terms in the RT formula (see Sec.~\ref{sdew}). A clue for the former is that if $\widetilde{\boldsymbol{O}}_1,\widetilde{\boldsymbol{O}}_2,\ldots$ can be reconstructed on $A$, the von Neumann algebra generated by them can also be reconstructed on $A$ (see the definition of generators of a von Neumann algebra in Sec.~\ref{basic}). Hence, it is natural to expect that those split typical reconstructable operators together with those fully reconstructed, though not yet presenting all operators in a von Neumann algebra, can generate $R^+\mathcal{M}_A R$ and $R^+\mathcal{M}_{\overline{A}}R$.

In the following, we generalize and formalize the above heuristic arguments into a framework within which we show how to reach the RT formula for an arbitrary state $\widetilde{\rho}$ in $\mathcal{H}_{\mathrm{code}}$. A key question to be answered is whether the split typical reconstructions (see Fig.~\ref{17c}, \ref{17d} and \ref{17e}) are sufficient to represent or generate all the split bulk local operators and hence sufficiently capture the split. To address this question, we prove a condition of local complementarity on the bulk qudits to formalize the meaning of bulk local splits, which applies generally to the boundary bipartitions considered in our demonstration and underlies the framework.

\subsubsection{Bulk local complementarity and formal meaning of splits}

To elucidate this condition, we start with the following consideration which can apply generally. That is, for each bulk qudit $\boldsymbol{x}$, we can generally specify two von Neumann algebras on $\mathcal{H}_{\mathrm{code}}$: $\mathcal{M}_a(\boldsymbol{x})=\mathcal{M}(\boldsymbol{x})\cap\mathcal{M}_A$ consisting of all logical operators that lie within $\mathcal{M}(\boldsymbol{x})$ and can be reconstructed on $A$; $\mathcal{M}_{\overline{a}}(\boldsymbol{x})=\mathcal{M}(\boldsymbol{x})\cap\mathcal{M}_{\overline{A}}$ consisting of those corresponding to $\overline{A}$. Note that the subscripts follow the convention in the literature~\cite{harlow2017,pastawski2017,cao2021}: on the boundary we use the capital $A$ and $\overline{A}$, while in the bulk we use the lowercase $a$ and $\overline{a}$.

Equivalently, through the algebra isomorphism $\mathbf{L}(\mathcal{H}_{\mathrm{code}})\xrightarrow{R^+\boldsymbol{\cdot} R}\mathbf{L}(\mathcal{E})$, the above von Neumann algebras on $\mathcal{H}_{\mathrm{code}}$ uniquely specifies two von Neumann algebras on the single-bulk-qudit Hilbert space $\mathfrak{e}_{\boldsymbol{x}}$,
\begin{align}\label{structure0}
\begin{split}
&\quad\quad\quad\boldsymbol{\mathcal{M}}_a(\boldsymbol{x})\subset\mathbf{L}(\mathfrak{e}_{\boldsymbol{x}}) \quad\quad \boldsymbol{\mathcal{M}}_{\overline{a}}(\boldsymbol{x})\subset\mathbf{L}(\mathfrak{e}_{\boldsymbol{x}}),\\
&\cdots\otimes\mathbb{C}\mathds{1}_{\mathfrak{e}_{\boldsymbol{x}'}}\otimes\boldsymbol{\mathcal{M}}_a(\boldsymbol{x})\otimes\mathbb{C}\mathds{1}_{\mathfrak{e}_{\boldsymbol{x}''}}\otimes\cdots\\
&\quad\quad\quad\quad\quad\quad\quad\quad=R^+\mathcal{M}_a(\boldsymbol{x})R\subset R^+\mathcal{M}_A R,\\
&\cdots\otimes\mathbb{C}\mathds{1}_{\mathfrak{e}_{\boldsymbol{x}'}}\otimes\boldsymbol{\mathcal{M}}_{\overline{a}}(\boldsymbol{x})\otimes\mathbb{C}\mathds{1}_{\mathfrak{e}_{\boldsymbol{x}''}}\otimes\cdots\\
&\quad\quad\quad\quad\quad\quad\quad\quad=R^+\mathcal{M}_{\overline{a}}(\boldsymbol{x})R\subset R^+\mathcal{M}_{\overline{A}} R. 
\end{split}
\end{align}
Here, $\boldsymbol{\mathcal{M}}_a(\boldsymbol{x})$ and $\boldsymbol{\mathcal{M}}_{\overline{a}}(\boldsymbol{x})$ simply consist of the local operators that can be reconstructed on $A$ and on $\overline{A}$ respectively. And the two tensor products of local operator algebras consist of operators of the form $\cdots\otimes\mathds{1}_{\mathfrak{e}_{\boldsymbol{x}'}}\otimes\widetilde{\boldsymbol{O}}_{\boldsymbol{x}}\otimes\mathds{1}_{\mathfrak{e}_{\boldsymbol{x}''}}\otimes\cdots$ with $\widetilde{\boldsymbol{O}}_{\boldsymbol{x}}\in\boldsymbol{\mathcal{M}}_a(\boldsymbol{x})$ or $\widetilde{\boldsymbol{O}}_{\boldsymbol{x}}\in\boldsymbol{\mathcal{M}}_{\overline{a}}(\boldsymbol{x})$ respectively.

\paragraph*{\textbf{Bulk local complementarity}}\label{blcpar} Now, for a given boundary bipartition $A\overline{A}$, the condition of complementarity formally means that for any bulk qudit $\boldsymbol{x}$, $\mathcal{M}_{\overline{a}}(\boldsymbol{x})$ can be viewed as consisting of all logical operators that lie within $\mathcal{M}(\boldsymbol{x})$ and commute with every operator in $\mathcal{M}_a(\boldsymbol{x})$, i.e. $\mathcal{M}_{\overline{a}}(\boldsymbol{x})=\mathcal{M}(\boldsymbol{x})\cap\mathcal{M}'_a(\boldsymbol{x})$. Or equivalently, 
\begin{equation}\label{structure1}
\boldsymbol{\mathcal{M}}_{\overline{a}}(\boldsymbol{x})=\boldsymbol{\mathcal{M}}'_a(\boldsymbol{x})~in~\mathbf{L}(\mathfrak{e}_{\boldsymbol{x}}).
\end{equation}

The importance of the bulk local complementarity for our demonstration is based on the following fact which can be straightforward checked for all connected boundary bipartitions, and also for all disconnected bipartitions of importance in the demonstration of uberholography. That is, similar to the illustration in Fig.~\ref{fig17}, based on the results in the preceding section and taking advantage of the alternative geometry, if we cannot identify a minimal subregion for recovering $\mathcal{M}(\boldsymbol{x})$ of a bulk qudit $\boldsymbol{x}$ within $A$ or $\overline{A}$, we can always identify nontrivial typical reconstructions of its bulk local operators on $A$ and $\overline{A}$ respectively, i.e., identifying split.

Then, as will be shown later, the importance include the following aspects: (1) In the case that we can identify split, the general structures $\boldsymbol{\mathcal{M}}_a(\boldsymbol{x})$ and $\boldsymbol{\mathcal{M}}_{\overline{a}}(\boldsymbol{x})$ are simply generated by the nontrivial typical reconstructions as illustrated in Fig.~\ref{fig17}, and properties of these typical reconstructions can guarantee $\boldsymbol{\mathcal{M}}_{\overline{a}}(\boldsymbol{x})=\boldsymbol{\mathcal{M}}'_a(\boldsymbol{x})$. (2) In this way, the pair $(\boldsymbol{\mathcal{M}}_a(\boldsymbol{x}),\boldsymbol{\mathcal{M}}_{\overline{a}}(\boldsymbol{x}))$ endows the split of $\mathbf{L}(\mathfrak{e}_{\boldsymbol{x}})$ with a formal meaning, which can be completely described by the typical reconstructions of bulk local operators. (3) Based on the formal meaning of splits, merely with the typical reconstructions, can we completely characterize the structures of $R^+\mathcal{M}_A R$, $R^+\mathcal{M}_{\overline{A}}R$ and $\mathrm{Z}(R^+\mathcal{M}_A R)$. (4) The splits are essential for extracting the bulk and the area terms from the RT formula following the ``standard'' prescription.

Note that according to what we have discussed in Sec.~\ref{sdew}, if for any $\boldsymbol{x}$ the whole of $\mathcal{M}({\boldsymbol{x}})$ can be reconstructed on either $A$ or $\overline{A}$, i.e., $\boldsymbol{\mathcal{M}}_{a}=\mathbf{L}(\mathfrak{e}_{\boldsymbol{x}}),\boldsymbol{\mathcal{M}}_{\overline{a}}=\mathbb{C}\mathds{1}_{\mathfrak{e}_{\boldsymbol{x}}}$ or $\boldsymbol{\mathcal{M}}_{a}=\mathbb{C}\mathds{1}_{\mathfrak{e}_{\boldsymbol{x}}},\boldsymbol{\mathcal{M}}_{\overline{a}}=\mathbf{L}(\mathfrak{e}_{\boldsymbol{x}})$, then the condition of geometric complementarity is satisfied and we will be in the subsystem-code formalism so that the structure of the von Neumann algebras is simply the tensor product of those $\mathbf{L}(\mathfrak{e}_{\boldsymbol{x}})$s. In this case, we definitely have the condition of bulk local complementarity, though the split is trivial. As will be clear, this simplicity does not happen to our code. Instead, we can show the richness of splits and how such richness shapes $R^+\mathcal{M}_A R$ and $R^+\mathcal{M}_{\overline{A}}R$.

\subsubsection{Generality of bulk local complementarity for splits}
To confirm the generality of bulk local complementarity in our code, we start with considering how to prove the condition for a given bipartition $A\overline{A}$. For each bulk qudit $\boldsymbol{x}$ exhibiting split, we denote the two von Neumann algebras generated by the typical local reconstructions (of the form $\dyad{\boldsymbol{\beta}_{\boldsymbol{x}}}$, $\dyad{\boldsymbol{\beta}_{\boldsymbol{x}}}+\dyad{\boldsymbol{\beta}'_{\boldsymbol{x}}}$ or $\widetilde{\boldsymbol{S}}^{\boldsymbol{\sigma}}_{\boldsymbol{x}}$) on $A$ and $\overline{A}$ (see Fig.~\ref{17c}, \ref{17d} and \ref{17e}) by $\boldsymbol{\mathcal{N}}_{\boldsymbol{x}}$ and $\overline{\boldsymbol{\mathcal{N}}}_{\boldsymbol{x}}$ respectively. Obviously, any operator in $\boldsymbol{\mathcal{N}}_{\boldsymbol{x}}$ ($\overline{\boldsymbol{\mathcal{N}}}_{\boldsymbol{x}}$), as a linear sum of products of those typical operators, can be reconstructed on $A$ ($\overline{A}$), i.e., we have $\boldsymbol{\mathcal{N}}_{\boldsymbol{x}}\subset\boldsymbol{\mathcal{M}}_a(\boldsymbol{x})$, $\overline{\boldsymbol{\mathcal{N}}}_{\boldsymbol{x}}\subset\boldsymbol{\mathcal{M}}_{\overline{a}}(\boldsymbol{x})$. Then, if we can prove that $\overline{\boldsymbol{\mathcal{N}}}_{\boldsymbol{x}}=\boldsymbol{\mathcal{N}}'_{\boldsymbol{x}}$, i.e., one is the commutant of the other, by simple arguments~\footnote{By definition, we have $\boldsymbol{\mathcal{M}}_{\overline{a}}(\boldsymbol{x})\subset\boldsymbol{\mathcal{M}}'_a(\boldsymbol{x})$. Then, $\overline{\boldsymbol{\mathcal{N}}}_{\boldsymbol{x}}=\boldsymbol{\mathcal{N}}_{\boldsymbol{x}}'\subset\boldsymbol{\mathcal{M}}_{\overline{a}}(\boldsymbol{x})\subset\boldsymbol{\mathcal{M}}'_a(\boldsymbol{x})$ means that $\boldsymbol{\mathcal{M}}_a(\boldsymbol{x})\subset\boldsymbol{\mathcal{N}}_{\boldsymbol{x}}$. Combining $\boldsymbol{\mathcal{N}}_{\boldsymbol{x}}\subset\boldsymbol{\mathcal{M}}_a(\boldsymbol{x})$ in the main context, we have $\boldsymbol{\mathcal{N}}_{\boldsymbol{x}}=\boldsymbol{\mathcal{M}}_a(\boldsymbol{x})$. Similarly, we can show $\overline{\boldsymbol{\mathcal{N}}}_{\boldsymbol{x}}=\boldsymbol{\mathcal{M}}_{\overline{a}}(\boldsymbol{x})$. And according to the condition $\overline{\boldsymbol{\mathcal{N}}}_{\boldsymbol{x}}=\boldsymbol{\mathcal{N}}_{\boldsymbol{x}}'$ in the main context, we have $\boldsymbol{\mathcal{M}}_{\overline{a}}(\boldsymbol{x})=\boldsymbol{\mathcal{M}}'_a(\boldsymbol{x})$.}, we can easily show $\boldsymbol{\mathcal{M}}_a(\boldsymbol{x})=\boldsymbol{\mathcal{N}}_{\boldsymbol{x}}$ and $\boldsymbol{\mathcal{M}}_{\overline{a}}(\boldsymbol{x})=\overline{\boldsymbol{\mathcal{N}}}_{\boldsymbol{x}}$. In other words, the typical reconstructable operators that we can directly identify as in Fig.~\ref{fig7} simply generate $\boldsymbol{\mathcal{M}}_a(\boldsymbol{x})$ and $\boldsymbol{\mathcal{M}}_{\overline{a}}(\boldsymbol{x})$ respectively. And automatically, we will have $\boldsymbol{\mathcal{M}}_{\overline{a}}(\boldsymbol{x})=\boldsymbol{\mathcal{M}}'_a(\boldsymbol{x})$, i.e., the local complementarity is satisfied so that $\boldsymbol{\mathcal{M}}_a(\boldsymbol{x})$ and $\boldsymbol{\mathcal{M}}_{\overline{a}}(\boldsymbol{x})$ will formally describe the split of $\mathbf{L}(\mathfrak{e}_{\boldsymbol{x}})$.

According to the above analysis, we simply need to show that for every bulk qudit exhibiting split, $\boldsymbol{\mathcal{N}}_{\boldsymbol{x}}$ and $\overline{\boldsymbol{\mathcal{N}}}_{\boldsymbol{x}}$ are commutant to each other. Indeed, we can view $\boldsymbol{\mathcal{N}}_{\boldsymbol{x}}$ and $\overline{\boldsymbol{\mathcal{N}}}_{\boldsymbol{x}}$ as von Neumann algebras on $\mathbb{C}^4$, and it is clear that we have limited number of the types of $\boldsymbol{\mathcal{N}}_{\boldsymbol{x}}$ and $\overline{\boldsymbol{\mathcal{N}}}_{\boldsymbol{x}}$ if going through all $\boldsymbol{x}$ for all bipartitions as mentioned above. Hence, if we can simply list all the types, and rigorously show that $\overline{\boldsymbol{\mathcal{N}}}_{\boldsymbol{x}}=\boldsymbol{\mathcal{N}}'_{\boldsymbol{x}}$ is always satisfied, then the desired conditions can be proved. Indeed, these listed $\boldsymbol{\mathcal{N}}_{\boldsymbol{x}}$ and $\overline{\boldsymbol{\mathcal{N}}}_{\boldsymbol{x}}$ will be ``small'' examples of OAQEC, which can construct ``large'' examples through von Neumann algebra tensor product.

\subsubsection{Examples of splits}\label{exsplitsec}
Now, we list all types of $(\boldsymbol{\mathcal{N}}_{\boldsymbol{x}},\overline{\boldsymbol{\mathcal{N}}}_{\boldsymbol{x}})$. We rigorously prove that the above desired qualifying conditions on the bulk local complementarity are always satisfied, and these $(\boldsymbol{\mathcal{N}}_{\boldsymbol{x}},\overline{\boldsymbol{\mathcal{N}}}_{\boldsymbol{x}})$ have the formal meaning of splits $(\boldsymbol{\mathcal{M}}_a(\boldsymbol{x}),\boldsymbol{\mathcal{M}}_{\overline{a}}(\boldsymbol{x}))$. Note that we will not spend texts on showing how the following types of splits cover all the cases in the boundary bipartitions considered in our demonstration, since it will be straightforward to check in a similar way to the illustration in Fig.~\ref{17a} and \ref{17b}. Hence, the following arguments can be viewed as the proof of the condition of bulk local complementarity for all connected boundary bipartitions, and also for all disconnected bipartitions of importance in the demonstration of uberholography.

In the following, each case of split will be specified by the generators of $\boldsymbol{\mathcal{N}}_{\boldsymbol{x}}$ and $\overline{\boldsymbol{\mathcal{N}}}_{\boldsymbol{x}}$, i.e., those typical reconstructions of bulk local operators on $A$ and $\overline{A}$ respectively. There are three types in total, exactly corresponding to the illustrations in Fig.~\ref{17c}, \ref{17d} and \ref{17e}, and each type include three symmetric cases.

We start with describing the first case of type-1 and sketching the proof for showing $\overline{\boldsymbol{\mathcal{N}}}_{\boldsymbol{x}}=\boldsymbol{\mathcal{N}}'_{\boldsymbol{x}}$, which is detailed in App.~\ref{posplit}. As illustrated in Fig.~\ref{17c} and with reference to Fig.~\ref{fig13} and \ref{fig15}, we can reconstruct for bulk qudit (hole) $\boldsymbol{x}$ the single-bulk-qudit operators
\begin{equation}\label{spliteq1}
\{\widetilde{\boldsymbol{S}}^{\boldsymbol{1}}_{\boldsymbol{x}},~ \mathds{1}_{\mathfrak{e}_{\boldsymbol{x}}}\}
\end{equation}
on the boundary subregion $A$, and reconstruct
\begin{equation}\label{spliteq2}
\{\widetilde{\boldsymbol{S}}^{\boldsymbol{2}}_{\boldsymbol{x}},~\widetilde{\boldsymbol{S}}^{\boldsymbol{3}}_{\boldsymbol{x}},~ \dyad{\boldsymbol0}+\dyad{\boldsymbol1},~ \dyad{\boldsymbol2}+\dyad{\boldsymbol3},~ \mathds{1}_{\mathfrak{e}_{\boldsymbol{x}}}\}
\end{equation}
on the complement subregion $\overline{A}$, which generate $\boldsymbol{\mathcal{N}}_{\boldsymbol{x}}$ and $\overline{\boldsymbol{\mathcal{N}}}_{\boldsymbol{x}}$ respectively.

In our proof, the arguments rely on a decomposition structure of the local bulk Hilbert space $\mathfrak{e}_{\boldsymbol{x}}$, 
\begin{equation}\label{decl0}
\mathfrak{e}_{\boldsymbol{x}}=\oplus_{\mu}(\mathfrak{e}_{\boldsymbol{x}a}^{\mu}\otimes\mathfrak{e}_{\boldsymbol{x}\overline{a}}^{\mu}).
\end{equation}
The decomposition describes how each tensor product $\mathfrak{e}_{\boldsymbol{x}a}^{\mu}\otimes\mathfrak{e}_{\boldsymbol{x}\overline{a}}^{\mu}$ can be identified (through an isometry) as a subspace $\widetilde{\boldsymbol{P}}^{\mu}_{\boldsymbol{x}}\mathfrak{e}_{\boldsymbol{x}}$ for some projection operator $\widetilde{\boldsymbol{P}}^{\mu}_{\boldsymbol{x}}$ on $\mathfrak{e}_{\boldsymbol{x}}$ such that these subspaces are orthogonal, i.e. $\widetilde{\boldsymbol{P}}^{\mu}_{\boldsymbol{x}}\widetilde{\boldsymbol{P}}^{\mu'}_{\boldsymbol{x}}=0$ for $\mu\ne\mu'$, and the projections are complete, i.e. $\sum_{\mu}\widetilde{\boldsymbol{P}}^{\mu}_{\boldsymbol{x}}=\mathds{1}_{\mathfrak{e}_{\boldsymbol{x}}}$. Then, operators on each tensor product, i.e., $\widetilde{\boldsymbol{O}}_{\boldsymbol{x}a}^{\mu}\otimes\mathds{1}^{\mu}_{\boldsymbol{x}\overline{a}}$, $\mathds{1}^{\mu}_{\boldsymbol{x}a}\otimes\widetilde{\boldsymbol{O}}_{\boldsymbol{x}\overline{a}}^{\mu}$, or linear sums of their products, can be identified as certain operators $\widetilde{\boldsymbol{O}}_{\boldsymbol{x}}$ on $\mathfrak{e}_{\boldsymbol{x}}$ that satisifies $\widetilde{\boldsymbol{O}}_{\boldsymbol{x}}=\widetilde{\boldsymbol{P}}^{\mu}_{\boldsymbol{x}}\widetilde{\boldsymbol{O}}_{\boldsymbol{x}}\widetilde{\boldsymbol{P}}^{\mu}_{\boldsymbol{x}}$. Note that this condition means that the action of $\widetilde{\boldsymbol{O}}_{\boldsymbol{x}}$ leaves $\widetilde{\boldsymbol{P}}^{\mu}_{\boldsymbol{x}}\mathfrak{e}_{\boldsymbol{x}}$ invariant, and $\widetilde{\boldsymbol{O}}_{\boldsymbol{x}}$ acts as the zero operator on the complement subspace orthogonal to $\widetilde{\boldsymbol{P}}^{\mu}_{\boldsymbol{x}}\mathfrak{e}_{\boldsymbol{x}}$.

According to the basic properties of von Neumann algebra on a finite-dimensional Hilbert space~\cite{harlow2017}, the important property of such a decomposition that we utilize is the following: All the operators on $\mathfrak{e}_{\boldsymbol{x}}$ that can be identified as the sums $\sum_{\mu}\widetilde{\boldsymbol{O}}_{\boldsymbol{x}a}^{\mu}\otimes\mathds{1}^{\mu}_{\boldsymbol{x}\overline{a}}$ form a von Neumann algebra on $\mathfrak{e}_{\boldsymbol{x}}$, and its commutant simply consists of all the operators that can be identified as the sums $\sum_{\mu}\mathds{1}^{\mu}_{\boldsymbol{x}a}\otimes\widetilde{\boldsymbol{O}}_{\boldsymbol{x}\overline{a}}^{\mu}$~\footnote{It is shown in Ref.~\cite{harlow2017} that any given von Neumann algebra ensures such a decomposition. Based on the orthogonality of the subspaces in a direct sum of Hilbert spaces, it is also easy to prove that any such decomposition uniquely determines a von Neumann algebra.}. Furthermore, operators in the center (within $\mathbf{L}(\mathfrak{e}_{\boldsymbol{x}})$) are simply linear combinations $\sum_{\mu}c_{\mu}\widetilde{\boldsymbol{P}}^{\mu}_{\boldsymbol{x}}$ with $c_{\mu}\in\mathbb{C}$.

According to the above decomposition, in App.~\ref{posplit}, we show that all operators of the form $\sum_{\mu}\widetilde{\boldsymbol{O}}_{\boldsymbol{x}a}^{\mu}\otimes\mathds{1}^{\mu}_{\boldsymbol{x}\overline{a}}$ can be generated by $\{\widetilde{\boldsymbol{S}}^{\boldsymbol{1}}_{\boldsymbol{x}},\mathds{1}_{\mathfrak{e}_{\boldsymbol{x}}}\}$, and reversely, $\widetilde{\boldsymbol{S}}^{\boldsymbol{1}}_{\boldsymbol{x}}$ and $\mathds{1}_{\mathfrak{e}_{\boldsymbol{x}}}$ can both be written in the form $\sum_{\mu}\widetilde{\boldsymbol{O}}_{\boldsymbol{x}a}^{\mu}\otimes\mathds{1}^{\mu}_{\boldsymbol{x}\overline{a}}$. In other words, the von Neumann algebra consisting of all $\sum_{\mu}\widetilde{\boldsymbol{O}}_{\boldsymbol{x}a}^{\mu}\otimes\mathds{1}^{\mu}_{\boldsymbol{x}\overline{a}}$ operators is exactly $\boldsymbol{\mathcal{N}}_{\boldsymbol{x}}$, i.e., generated by $\{\widetilde{\boldsymbol{S}}^{\boldsymbol{1}}_{\boldsymbol{x}},\mathds{1}_{\mathfrak{e}_{\boldsymbol{x}}}\}$. Similarly, we show that the von Neumann algebra consisting of all $\sum_{\mu}\mathds{1}^{\mu}_{\boldsymbol{x}a}\otimes\widetilde{\boldsymbol{O}}_{\boldsymbol{x}\overline{a}}^{\mu}$ operators is exactly $\overline{\boldsymbol{\mathcal{N}}}_{\boldsymbol{x}}$, i.e., generated by $\{\widetilde{\boldsymbol{S}}^{\boldsymbol{2}}_{\boldsymbol{x}},\widetilde{\boldsymbol{S}}^{\boldsymbol{3}}_{\boldsymbol{x}}, \dyad{\boldsymbol0}+\dyad{\boldsymbol1}, \dyad{\boldsymbol2}+\dyad{\boldsymbol3}, \mathds{1}_{\mathfrak{e}_{\boldsymbol{x}}}\}$. Then, we will have $\overline{\boldsymbol{\mathcal{N}}}_{\boldsymbol{x}}=\boldsymbol{\mathcal{N}}'_{\boldsymbol{x}}$. And according to the above discussion, what the two collections of operators generated respectively are exactly $\boldsymbol{\mathcal{M}}_a(\boldsymbol{x})$ and $\boldsymbol{\mathcal{M}}_{\overline{a}}(\boldsymbol{x})$ with $\boldsymbol{\mathcal{M}}_{\overline{a}}(\boldsymbol{x})=\boldsymbol{\mathcal{M}}'_a(\boldsymbol{x})$. Particularly, as shown in App.~\ref{posplit}, in this example of type-1 split, $\mathrm{Z}(\boldsymbol{\mathcal{M}}_a(\boldsymbol{x}))=\boldsymbol{\mathcal{M}}_a(\boldsymbol{x})$ which is surely nontrivial since it contains $\widetilde{\boldsymbol{S}}^{\boldsymbol{1}}_{\boldsymbol{x}}$.

According to the above results, we can easily show another two examples of the same type. We list all the three type-1 splits of $\mathbf{L}(\mathfrak{e}_{\boldsymbol{x}})$ in Tab.~\ref{splittab1}.
\begin{table}\label{splittab1}
\begin{center}
\begin{tabular}{ |c|c| } 
\hline
  Splits & Generators/typical reconsturctions  \\ 
 \hline
$\boldsymbol{\mathcal{M}}_a(\boldsymbol{x})$ & $\widetilde{\boldsymbol{S}}^{\boldsymbol{1}}_{\boldsymbol{x}}$,~ $\mathds{1}_{\mathfrak{e}_{\boldsymbol{x}}}$ \\ 
$\boldsymbol{\mathcal{M}}_{\overline{a}}(\boldsymbol{x})$ & $\widetilde{\boldsymbol{S}}^{\boldsymbol{2}}_{\boldsymbol{x}}$,~$\widetilde{\boldsymbol{S}}^{\boldsymbol{3}}_{\boldsymbol{x}}$,~ $\dyad{\boldsymbol0}+\dyad{\boldsymbol1}$,~ $\dyad{\boldsymbol2}+\dyad{\boldsymbol3}$,~ $\mathds{1}_{\mathfrak{e}_{\boldsymbol{x}}}$ \\ 
$\mathrm{Z}(\boldsymbol{\mathcal{M}}_a(\boldsymbol{x}))$& $\mathrm{Z}(\boldsymbol{\mathcal{M}}_a(\boldsymbol{x}))=\boldsymbol{\mathcal{M}}_a(\boldsymbol{x})\ne\mathbb{C}\mathds{1}_{\mathfrak{e}_{\boldsymbol{x}}}$\\
\hline
&\\
\hline
$\boldsymbol{\mathcal{M}}_a(\boldsymbol{x})$ & $\widetilde{\boldsymbol{S}}^{\boldsymbol{2}}_{\boldsymbol{x}}$,~ $\mathds{1}_{\mathfrak{e}_{\boldsymbol{x}}}$ \\ 
$\boldsymbol{\mathcal{M}}_{\overline{a}}(\boldsymbol{x})$ & $\widetilde{\boldsymbol{S}}^{\boldsymbol{3}}_{\boldsymbol{x}}$,~$\widetilde{\boldsymbol{S}}^{\boldsymbol{1}}_{\boldsymbol{x}}$,~ $\dyad{\boldsymbol0}+\dyad{\boldsymbol2}$,~ $\dyad{\boldsymbol3}+\dyad{\boldsymbol1}$,~ $\mathds{1}_{\mathfrak{e}_{\boldsymbol{x}}}$ \\ 
$\mathrm{Z}(\boldsymbol{\mathcal{M}}_a(\boldsymbol{x}))$& $\mathrm{Z}(\boldsymbol{\mathcal{M}}_a(\boldsymbol{x}))=\boldsymbol{\mathcal{M}}_a(\boldsymbol{x})\ne\mathbb{C}\mathds{1}_{\mathfrak{e}_{\boldsymbol{x}}}$\\
\hline
&\\
\hline
$\boldsymbol{\mathcal{M}}_a(\boldsymbol{x})$ & $\widetilde{\boldsymbol{S}}^{\boldsymbol{3}}_{\boldsymbol{x}}$,~ $\mathds{1}_{\mathfrak{e}_{\boldsymbol{x}}}$ \\ 
$\boldsymbol{\mathcal{M}}_{\overline{a}}(\boldsymbol{x})$ & $\widetilde{\boldsymbol{S}}^{\boldsymbol{1}}_{\boldsymbol{x}}$,~$\widetilde{\boldsymbol{S}}^{\boldsymbol{2}}_{\boldsymbol{x}}$,~ $\dyad{\boldsymbol0}+\dyad{\boldsymbol3}$,~ $\dyad{\boldsymbol1}+\dyad{\boldsymbol2}$,~ $\mathds{1}_{\mathfrak{e}_{\boldsymbol{x}}}$ \\ 
$\mathrm{Z}(\boldsymbol{\mathcal{M}}_a(\boldsymbol{x}))$& $\mathrm{Z}(\boldsymbol{\mathcal{M}}_a(\boldsymbol{x}))=\boldsymbol{\mathcal{M}}_a(\boldsymbol{x})\ne\mathbb{C}\mathds{1}_{\mathfrak{e}_{\boldsymbol{x}}}$\\
\hline
\end{tabular}
\end{center} 
\caption{Type-1 splits}
\end{table}

The type-2 split is illustrated in Fig.~\ref{17d}. In this case, the center is also nontrivial, and the splits together with their generators are lists in Tab.~\ref{splittab2}. The detailed arguments are given in App.~\ref{posplit}.
\begin{table}\label{splittab2}
\begin{center}
\begin{tabular}{ |c|c| } 
\hline
  Splits & Generators/typical reconsturctions  \\ 
 \hline
$\boldsymbol{\mathcal{M}}_a(\boldsymbol{x})$ & $\dyad{\boldsymbol0}+\dyad{\boldsymbol1}$,~ $\dyad{\boldsymbol2}+\dyad{\boldsymbol3}$,~ $\mathds{1}_{\mathfrak{e}_{\boldsymbol{x}}}$ \\ 
$\boldsymbol{\mathcal{M}}_{\overline{a}}(\boldsymbol{x})$ & $\widetilde{\boldsymbol{S}}^{\boldsymbol{1}}_{\boldsymbol{x}}$, ~$\dyad{\boldsymbol0}$, ~$\dyad{\boldsymbol1}$, ~$\dyad{\boldsymbol2}$, ~$\dyad{\boldsymbol3}$,~ $\mathds{1}_{\mathfrak{e}_{\boldsymbol{x}}}$ \\ 
$\mathrm{Z}(\boldsymbol{\mathcal{M}}_a(\boldsymbol{x}))$& $\mathrm{Z}(\boldsymbol{\mathcal{M}}_a(\boldsymbol{x}))=\boldsymbol{\mathcal{M}}_a(\boldsymbol{x})\ne\mathbb{C}\mathds{1}_{\mathfrak{e}_{\boldsymbol{x}}}$\\
\hline
&\\
\hline
$\boldsymbol{\mathcal{M}}_a(\boldsymbol{x})$ & $\dyad{\boldsymbol0}+\dyad{\boldsymbol2}$,~ $\dyad{\boldsymbol1}+\dyad{\boldsymbol3}$,~ $\mathds{1}_{\mathfrak{e}_{\boldsymbol{x}}}$ \\ 
$\boldsymbol{\mathcal{M}}_{\overline{a}}(\boldsymbol{x})$ & $\widetilde{\boldsymbol{S}}^{\boldsymbol{2}}_{\boldsymbol{x}}$, ~$\dyad{\boldsymbol0}$, ~$\dyad{\boldsymbol1}$, ~$\dyad{\boldsymbol2}$, ~$\dyad{\boldsymbol3}$,~ $\mathds{1}_{\mathfrak{e}_{\boldsymbol{x}}}$ \\ 
$\mathrm{Z}(\boldsymbol{\mathcal{M}}_a(\boldsymbol{x}))$& $\mathrm{Z}(\boldsymbol{\mathcal{M}}_a(\boldsymbol{x}))=\boldsymbol{\mathcal{M}}_a(\boldsymbol{x})\ne\mathbb{C}\mathds{1}_{\mathfrak{e}_{\boldsymbol{x}}}$\\
\hline
&\\
\hline
$\boldsymbol{\mathcal{M}}_a(\boldsymbol{x})$ & $\dyad{\boldsymbol0}+\dyad{\boldsymbol3}$,~ $\dyad{\boldsymbol1}+\dyad{\boldsymbol2}$,~ $\mathds{1}_{\mathfrak{e}_{\boldsymbol{x}}}$ \\ 
$\boldsymbol{\mathcal{M}}_{\overline{a}}(\boldsymbol{x})$ & $\widetilde{\boldsymbol{S}}^{\boldsymbol{3}}_{\boldsymbol{x}}$, ~$\dyad{\boldsymbol0}$, ~$\dyad{\boldsymbol1}$, ~$\dyad{\boldsymbol2}$, ~$\dyad{\boldsymbol3}$,~ $\mathds{1}_{\mathfrak{e}_{\boldsymbol{x}}}$ \\ 
$\mathrm{Z}(\boldsymbol{\mathcal{M}}_a(\boldsymbol{x}))$& $\mathrm{Z}(\boldsymbol{\mathcal{M}}_a(\boldsymbol{x}))=\boldsymbol{\mathcal{M}}_a(\boldsymbol{x})\ne\mathbb{C}\mathds{1}_{\mathfrak{e}_{\boldsymbol{x}}}$\\
\hline
\end{tabular}
\end{center} 
\caption{Type-2 splits}
\end{table}

The type-3 split is illustrated in Fig.~\ref{17e}. Different from the previous two types, in this case, the center is trivial, i.e. $\mathrm{Z}(\boldsymbol{\mathcal{M}}_a(\boldsymbol{x}))=\mathbb{C}\mathds{1}_{\mathfrak{e}_{\boldsymbol{x}}}$, and the two generated von Neumann algebras are factors. The splits together with their generators are listed in Tab.~\ref{splittab3}, and the detailed arguments are given in App.~\ref{posplit}.
\begin{table}\label{splittab3}
\begin{center}
\begin{tabular}{ |c|c| } 
\hline
  Splits & Generators/typical reconsturctions  \\ 
 \hline
$\boldsymbol{\mathcal{M}}_a(\boldsymbol{x})$ & $\widetilde{\boldsymbol{S}}^{\boldsymbol{1}}_{\boldsymbol{x}}$,~ $\dyad{\boldsymbol0}+\dyad{\boldsymbol3}$,~ $\dyad{\boldsymbol1}+\dyad{\boldsymbol2}$,~ $\mathds{1}_{\mathfrak{e}_{\boldsymbol{x}}}$ \\ 
$\boldsymbol{\mathcal{M}}_{\overline{a}}(\boldsymbol{x})$ & $\widetilde{\boldsymbol{S}}^{\boldsymbol{3}}_{\boldsymbol{x}}$, ~$\dyad{\boldsymbol0}+\dyad{\boldsymbol1}$, ~$\dyad{\boldsymbol2}+\dyad{\boldsymbol3}$,~ $\mathds{1}_{\mathfrak{e}_{\boldsymbol{x}}}$ \\ 
$\mathrm{Z}(\boldsymbol{\mathcal{M}}_a(\boldsymbol{x}))$& $\mathrm{Z}(\boldsymbol{\mathcal{M}}_a(\boldsymbol{x}))=\mathbb{C}\mathds{1}_{\mathfrak{e}_{\boldsymbol{x}}}$\\
\hline
&\\
\hline
$\boldsymbol{\mathcal{M}}_a(\boldsymbol{x})$ & $\widetilde{\boldsymbol{S}}^{\boldsymbol{3}}_{\boldsymbol{x}}$,~ $\dyad{\boldsymbol0}+\dyad{\boldsymbol2}$,~ $\dyad{\boldsymbol1}+\dyad{\boldsymbol3}$,~ $\mathds{1}_{\mathfrak{e}_{\boldsymbol{x}}}$ \\ 
$\boldsymbol{\mathcal{M}}_{\overline{a}}(\boldsymbol{x})$ & $\widetilde{\boldsymbol{S}}^{\boldsymbol{2}}_{\boldsymbol{x}}$, ~$\dyad{\boldsymbol0}+\dyad{\boldsymbol3}$, ~$\dyad{\boldsymbol1}+\dyad{\boldsymbol2}$,~ $\mathds{1}_{\mathfrak{e}_{\boldsymbol{x}}}$ \\ 
$\mathrm{Z}(\boldsymbol{\mathcal{M}}_a(\boldsymbol{x}))$& $\mathrm{Z}(\boldsymbol{\mathcal{M}}_a(\boldsymbol{x}))=\mathbb{C}\mathds{1}_{\mathfrak{e}_{\boldsymbol{x}}}$\\
\hline
&\\
\hline
$\boldsymbol{\mathcal{M}}_a(\boldsymbol{x})$ & $\widetilde{\boldsymbol{S}}^{\boldsymbol{2}}_{\boldsymbol{x}}$,~ $\dyad{\boldsymbol0}+\dyad{\boldsymbol1}$,~ $\dyad{\boldsymbol2}+\dyad{\boldsymbol3}$,~ $\mathds{1}_{\mathfrak{e}_{\boldsymbol{x}}}$ \\ 
$\boldsymbol{\mathcal{M}}_{\overline{a}}(\boldsymbol{x})$ & $\widetilde{\boldsymbol{S}}^{\boldsymbol{1}}_{\boldsymbol{x}}$, ~$\dyad{\boldsymbol0}+\dyad{\boldsymbol2}$, ~$\dyad{\boldsymbol1}+\dyad{\boldsymbol3}$,~ $\mathds{1}_{\mathfrak{e}_{\boldsymbol{x}}}$ \\ 
$\mathrm{Z}(\boldsymbol{\mathcal{M}}_a(\boldsymbol{x}))$& $\mathrm{Z}(\boldsymbol{\mathcal{M}}_a(\boldsymbol{x}))=\mathbb{C}\mathds{1}_{\mathfrak{e}_{\boldsymbol{x}}}$\\
\hline
\end{tabular}
\end{center} 
\caption{Type-3 splits}
\end{table}

Note that generally in the OAQEC formalism, the splits in the entangling surface need not to happen to individual bulk qudits (see the discussion Sec.~\ref{sdew}), but can happen to multiple bulk qudits as a whole (in the form of tensor-product degrees of freedom). However, in either case, the framework embodied in our arguments can be generally applied. Indeed, the latter case can be observed for some disconnected boundary bipartition in our code. To cover those cases, we can simply extend the previous study of the reconstruction of a single bulk qudit to the multiple case, and apply similar arguments as above to generalize the condition of bulk local complementarity. In reality, since those disconnected boundary bipartitions have no direct importance in our demonstrations, we leave them for future studies.

\begin{center}
\begin{figure}[ht]
\centering
    \includegraphics[width=6.5cm]{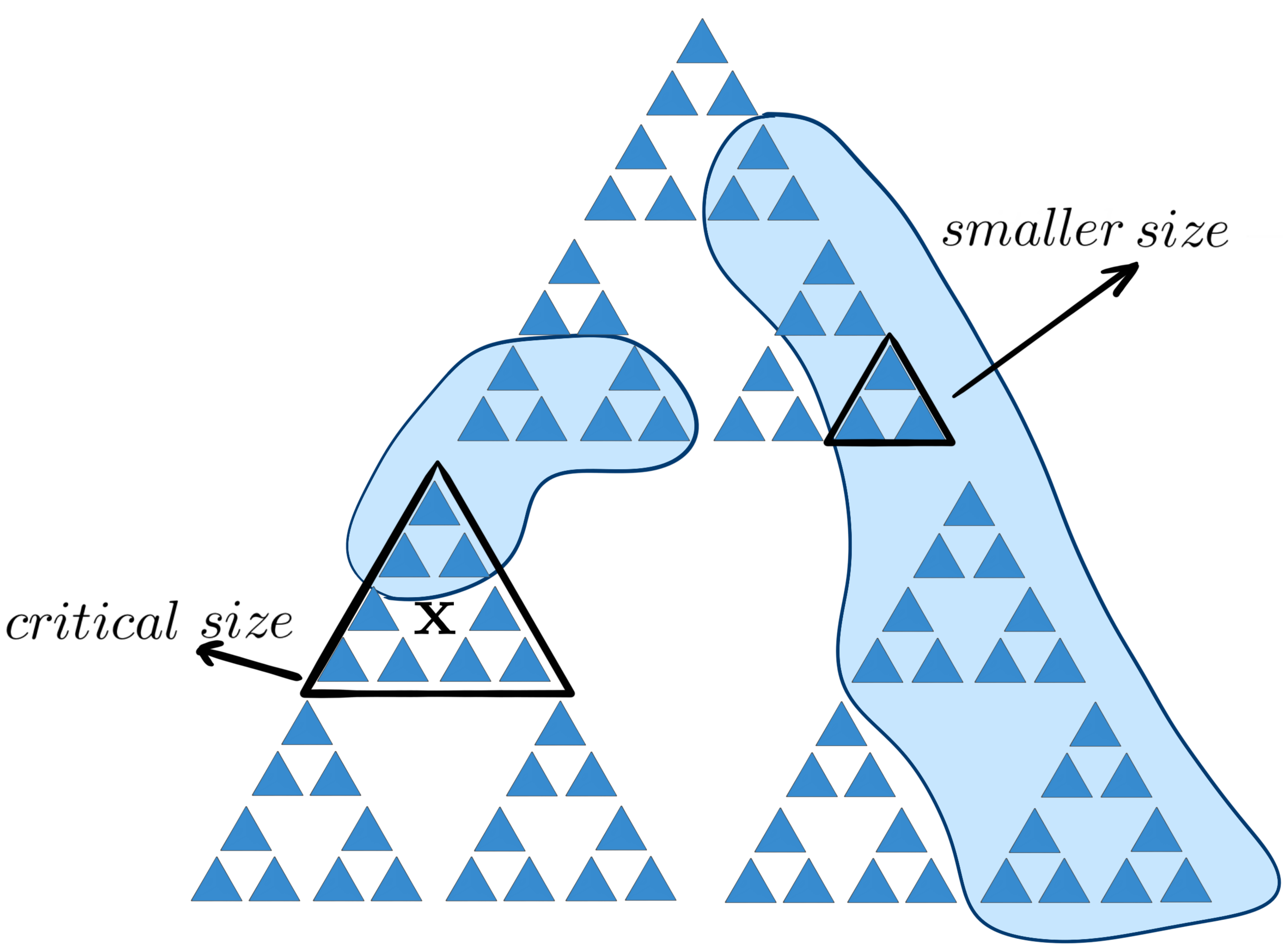}   
\caption{For an arbitrary bipartition $A\overline{A}$, there is a ``critical'' size of the triangular blocks. Blocks of smaller size are completely included in either $A$ or $\overline{A}$. Blocks of the ``critical'' or larger size are separated by the bipartition.}
\label{fig18}
\end{figure}
\end{center}

\subsection{Genuine OAQEC for arbitrary $A\overline{A}$}
Before we show how the above examples of the split of $\mathbf{L}(\mathfrak{e}_{\boldsymbol{x}})$ lead to explicit demonstration of the subregion duality, it is needed to prove that for arbitrary boundary bipartition $A\overline{A}$, the condition of the genuine OAQEC formalism, i.e. HQEC Characteristic 3 (see Par.~\ref{chac3}) is satisfied. Only upon this condition can we describe the meaning of entanglement wedges and subregion duality within genuine OAQEC. And the above examples of splits have already enable us to prove Characteristic 3. Indeed, it suffices to show that $\mathrm{Z}(\mathcal{M}_A)$ is nontrivial for arbitrary $A\overline{A}$. In the following, we show that the type-2 split always appears for some bulk qudit $\boldsymbol{x}$ so that the nontrivial logical operators generated by 
\begin{align}
\begin{split}
&R(\cdots\otimes\dyad{\boldsymbol{\beta}_{\boldsymbol{x}}}+\dyad{\boldsymbol{\beta}'_{\boldsymbol{x}}}\otimes\cdots)R^+,\\
&R(\cdots\otimes\dyad{\boldsymbol{\beta}''_{\boldsymbol{x}}}+\dyad{\boldsymbol{\beta}'''_{\boldsymbol{x}}}\otimes\cdots)R^+,\\
&\mathds{1}_{\mathcal{H}_{\mathrm{code}}}
\end{split}
\end{align}
are shared by subregion $A$ and $\overline{A}$, and belong to the center $\mathrm{Z}(\mathcal{M}_A)$.

Our arguments take advantage of the alternative geometry which shows self-similar structures on the triangular blocks. Note that those triangular blocks can have different size. Then, as shown in Fig.~\ref{fig18}, for any bipartition $A\overline{A}$ of the physical qudits (viewed in the alternative geometry), each block is either completely covered by one subregion, or separated by the two complementary subregions. If we view each triangle as the smallest blocks, then each of these triangles must be either covered by $A$ or by $\overline{A}$. However, if we view the whole lattice as a big block, then it must be separated by the two subregions. Therefore, there must be a ``critical'' size of these blocks such that each block of size below the ``critical'' size is completely covered either by $A$ or by $\overline{A}$. And importantly, there must exist one block of the ``critical'' size which is separated by the bipartition, i.e., with one of its three parts (blocks of smaller size) completely included in $A$ and the other two completely in $\overline{A}$, as shown in Fig.~\ref{fig18}. Then, we consider the corresponding hole (bulk qudit) $\boldsymbol{x}$ of this block. With reference to the examples of the type-2 split (see Fig.~\ref{17d}), we can definitely reconstruct $\dyad{\boldsymbol{\beta}_{\boldsymbol{x}}}+\dyad{\boldsymbol{\beta}'_{\boldsymbol{x}}}$ and $\dyad{\boldsymbol{\beta}''_{\boldsymbol{x}}}+\dyad{\boldsymbol{\beta}'''_{\boldsymbol{x}}}$ on both subregions $A$ and $\overline{A}$, which is the desired result.

We can conclude the above arguments into the following theorem
\begin{theorem}[Genuine OAQEC for arbitrary $A\overline{A}$]
In our code, for arbitrary boundary bipartition $A\overline{A}$, the center is nontrivial, i.e. $\mathrm{Z}(\mathcal{M}_A)=\mathcal{M}_A\cap\mathcal{M}_{\overline{A}}\ne\mathbb{C}\mathds{1}_{\mathcal{H}_{
\mathrm{code}}}$. Equivalently, we also have $\mathrm{Z}(R^+\mathcal{M}_A R)\ne\mathbb{C}\mathds{1}_{\mathcal{E}}$.  
\end{theorem}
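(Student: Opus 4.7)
The plan is to reduce the statement to the exhibition, for an arbitrary bipartition $A\overline{A}$, of at least one bulk qudit $\boldsymbol{x}$ whose local algebra $\mathbf{L}(\mathfrak{e}_{\boldsymbol{x}})$ undergoes a type-2 split in the sense of Sec.~\ref{splitsec} and Tab.~2. Because the type-2 data in Sec.~\ref{exsplitsec} furnish nontrivial projector-sum generators $\dyad{\boldsymbol{\beta}_{\boldsymbol{x}}}+\dyad{\boldsymbol{\beta}'_{\boldsymbol{x}}}$ which belong to both $\boldsymbol{\mathcal{M}}_a(\boldsymbol{x})$ and $\boldsymbol{\mathcal{M}}_{\overline{a}}(\boldsymbol{x})$, their images under $R\,\cdot\,R^{+}$ lie in $\mathcal{M}_A\cap\mathcal{M}_{\overline{A}}=\mathrm{Z}(\mathcal{M}_A)$ and are plainly not proportional to $\mathds{1}_{\mathcal{H}_{\mathrm{code}}}$, which will close the argument.

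First I would pass to the alternative Sierpi\'nski geometry and exploit its self-similar block hierarchy: every hole $\boldsymbol{x}$ sits at some triangular block that subdivides into three strictly smaller sub-blocks, and the whole lattice is itself the largest block. The key combinatorial step is a \emph{critical-block lemma}: for any bipartition $A\overline{A}$ there exists at least one block that is properly split by $A\overline{A}$ while every one of its strictly smaller sub-blocks is wholly contained in $A$ or in $\overline{A}$. Existence follows by descending induction on block scale, using the two boundary facts that the smallest three-qudit triangles are indivisible and that the full lattice is always split. At such a critical block, its three sub-blocks cannot all be on the same side (else the block would not be split) and cannot be distributed $1$--$1$--$1$ (by the pigeonhole argument with two sides), so the distribution is $1$--$2$, exactly the geometry depicted in Fig.~\ref{fig18}.

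Next I would identify the hole $\boldsymbol{x}$ corresponding to this critical block and verify that the loop of physical qudits surrounding $\boldsymbol{x}$ splits geometrically in the way required by type-2 reconstructions (Fig.~\ref{17d}): two of its three laterals lie inside one sub-region (say $\overline{A}$) and the remaining lateral inside the other. Applying the lateral-based typical reconstructions of $\dyad{\boldsymbol{\beta}_{\boldsymbol{x}}}+\dyad{\boldsymbol{\beta}'_{\boldsymbol{x}}}$ from Sec.~\ref{trddyad} then produces, on both $A$ and $\overline{A}$, reconstructions of the same pair of complementary projector sums appearing in a type-2 row of Tab.~2. These are the desired central operators, and this completes the proof of $\mathrm{Z}(\mathcal{M}_A)\neq\mathbb{C}\mathds{1}_{\mathcal{H}_{\mathrm{code}}}$; the equivalent statement $\mathrm{Z}(R^{+}\mathcal{M}_A R)\neq\mathbb{C}\mathds{1}_{\mathcal{E}}$ follows from the algebra isomorphism in Sec.~\ref{vnaob}.

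The principal obstacle I expect is not the algebra, which has already been done for the generators in the split tables, but a careful proof of the critical-block lemma and the adjacency claim that the three sub-blocks of the critical block actually touch the loop around $\boldsymbol{x}$ along its three laterals in a one-to-one correspondence. The lemma itself could be delicate for highly irregular or disconnected $A$, where intermediate-scale blocks may contain many nested configurations; I would handle this by formalising \textquotedblleft scale\textquotedblright\ inductively and defining the critical block as the smallest one that is split. The adjacency claim then follows directly from the rearrangement construction of Sec.~\ref{pattern1} which places the laterals of the loop around $\boldsymbol{x}$ inside the three immediate sub-blocks. Once these two geometric facts are secured, the rest is mechanical bookkeeping using the generators of Tab.~2 and the reconstructions of Sec.~\ref{trddyad}.
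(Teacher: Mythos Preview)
Your proposal is correct and follows essentially the same argument as the paper: use the self-similar block hierarchy in the alternative geometry to locate a critical block whose three immediate sub-blocks are distributed $1$--$2$ between $A$ and $\overline{A}$, then invoke the type-2 split at the corresponding hole to place the nontrivial projector sums $\dyad{\boldsymbol{\beta}_{\boldsymbol{x}}}+\dyad{\boldsymbol{\beta}'_{\boldsymbol{x}}}$ in $\mathcal{M}_A\cap\mathcal{M}_{\overline{A}}$. One small correction: the smallest three-qudit triangles are \emph{not} indivisible under an arbitrary bipartition (their three qudits can land on different sides), so the base of your descending induction should be single qudits, as the paper takes it; with that adjustment the critical-block lemma and the rest of the argument go through exactly as you describe.
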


Note that the above proof is independent on the condition of bulk local complementarity, but only relies on the typical reconstructions and the self-similarity of the alternative geometry.

\subsection{Entangling surface and entanglement wedge}
With the results on the genuine OAQEC formalism, the condition of bulk local complementarity and the formal meaning of splits, we can rigorously show how the structures of $R^+\mathcal{M}_A R$, $R^+\mathcal{M}_{\overline{A}}R$ and $\mathrm{Z}(R^+\mathcal{M}_A R)$ can be represented in terms of bulk degrees of freedom, translating between the algebraic aspects and the geometric aspects of subregion duality. Now, we complete the framework for specifying the subregion duality for a given boundary bipartition $A\overline{A}$ in the following two steps. The first step is to specify the entangling surface and the entanglement wedges as the support of $R^+\mathcal{M}_A R$ and $R^+\mathcal{M}_{\overline{A}}R$, in terms of which we can already show the basic geometric representation of the subregion duality.

We recall the discussion on the genuine OAQEC in Sec.~\ref{sdew}. Generally, for a connected boundary bipartition $A\overline{A}$, the three disjoint sub-collections of the bulk qudits have the following meaning: $\mathrm{W}[A]$ consists of all bulk qudits that can be completely reconstructed on $A$; $\mathrm{W}[\overline{A}]$ consists of all bulk qudits that can be completely reconstructed on $\overline{A}$; and $\mathrm{E}[A\overline{A}]$, the entangling surface as the feature of OAQEC, consists of all bulk qudits that splits. Note that the entanglement wedges should be understood as $\mathrm{W}[A]\cup\mathrm{E}[A\overline{A}]$ and $\mathrm{W}[\overline{A}]\cup\mathrm{E}[A\overline{A}]$.

Based on the condition of bulk local complementarity of the splits, (see Eq.~\ref{structure1}), we can specify:
\begin{align}\label{ew2}
\begin{split}
\mathrm{W}[A]&=\{\boldsymbol{x}:\boldsymbol{\mathcal{M}}_a(\boldsymbol{x})=\mathbf{L}(\mathfrak{e}_{\boldsymbol{x}})\}\\
&=\{\boldsymbol{x}:\boldsymbol{\mathcal{M}}_{\overline{a}}(\boldsymbol{x})=\mathbb{C}\mathds{1}_{\mathfrak{e}_{\boldsymbol{x}}}\},\\
\mathrm{W}[\overline{A}]&=\{\boldsymbol{x}:\boldsymbol{\mathcal{M}}_{\overline{a}}(\boldsymbol{x})=\mathbf{L}(\mathfrak{e}_{\boldsymbol{x}})\}\\
&=\{\boldsymbol{x}:\boldsymbol{\mathcal{M}}_a(\boldsymbol{x})=\mathbb{C}\mathds{1}_{\mathfrak{e}_{\boldsymbol{x}}}\},\\
\mathrm{E}[A\overline{A}]&=\{\boldsymbol{x}:\boldsymbol{\mathcal{M}}_a(\boldsymbol{x})\ne\mathbf{L}(\mathfrak{e}_{\boldsymbol{x}}),~\boldsymbol{\mathcal{M}}_{\overline{a}}(\boldsymbol{x})\ne\mathbf{L}(\mathfrak{e}_{\boldsymbol{x}})\}\\
&=\{\boldsymbol{x}:\boldsymbol{\mathcal{M}}_a(\boldsymbol{x})\ne\mathbb{C}\mathds{1}_{\mathfrak{e}_{\boldsymbol{x}}},~\boldsymbol{\mathcal{M}}_{\overline{a}}(\boldsymbol{x})\ne\mathbb{C}\mathds{1}_{\mathfrak{e}_{\boldsymbol{x}}}\}.
\end{split}
\end{align}
Here, $\boldsymbol{\mathcal{M}}_a(\boldsymbol{x})=\mathbf{L}(\mathfrak{e}_{\boldsymbol{x}})$ simply means operators on the bulk qudit $\boldsymbol{x}$ can be completely reconstructed on $A$. And since $\boldsymbol{\mathcal{M}}_{\overline{a}}(\boldsymbol{x})=\boldsymbol{\mathcal{M}}'_a(\boldsymbol{x})$ within $\mathbf{L}(\mathfrak{e}_{\boldsymbol{x}})$, $\boldsymbol{\mathcal{M}}_a(\boldsymbol{x})=\mathbf{L}(\mathfrak{e}_{\boldsymbol{x}})$ is equivalent to $\boldsymbol{\mathcal{M}}_{\overline{a}}(\boldsymbol{x})=\mathbb{C}\mathds{1}_{\mathfrak{e}_{\boldsymbol{x}}}$. When neither $\boldsymbol{\mathcal{M}}_a(\boldsymbol{x})$ nor $\boldsymbol{\mathcal{M}}_{\overline{a}}(\boldsymbol{x})$ equals the whole local operator algebra, the bulk qudit $\boldsymbol{x}$ nontrivially splits, and the center $\mathrm{Z}(\boldsymbol{\mathcal{M}}_a(\boldsymbol{x}))=\boldsymbol{\mathcal{M}}_a(\boldsymbol{x})\cap\boldsymbol{\mathcal{M}}_{\overline{a}}(\boldsymbol{x})$ (within $\mathbf{L}(\mathfrak{e}_{\boldsymbol{x}})$) describes the part shared by $A$ and $\overline{A}$ on the qudit $\boldsymbol{x}$.

For a given boundary bipartition $A\overline{A}$, the above formal way of specifying $\mathrm{W}[A]$, $\mathrm{W}[\overline{A}]$ and $E[A\overline{A}]$, together with the generators of $\boldsymbol{\mathcal{M}}_a(\boldsymbol{x})$ and $\boldsymbol{\mathcal{M}}_{\overline{a}}(\boldsymbol{x})$ as listed in Tab.~\ref{splittab1}, \ref{splittab2} and \ref{splittab3}, has indeed bridged the basics geometric representation of subregion duality to the typical reconstructions of bulk local operators and the minimal recoveries of the $\mathcal{M}(\boldsymbol{x})$s that can be directly identified from the alternative geometry.

\onecolumngrid
\begin{center}
\begin{figure}[ht]
\centering
    \includegraphics[width=17cm]{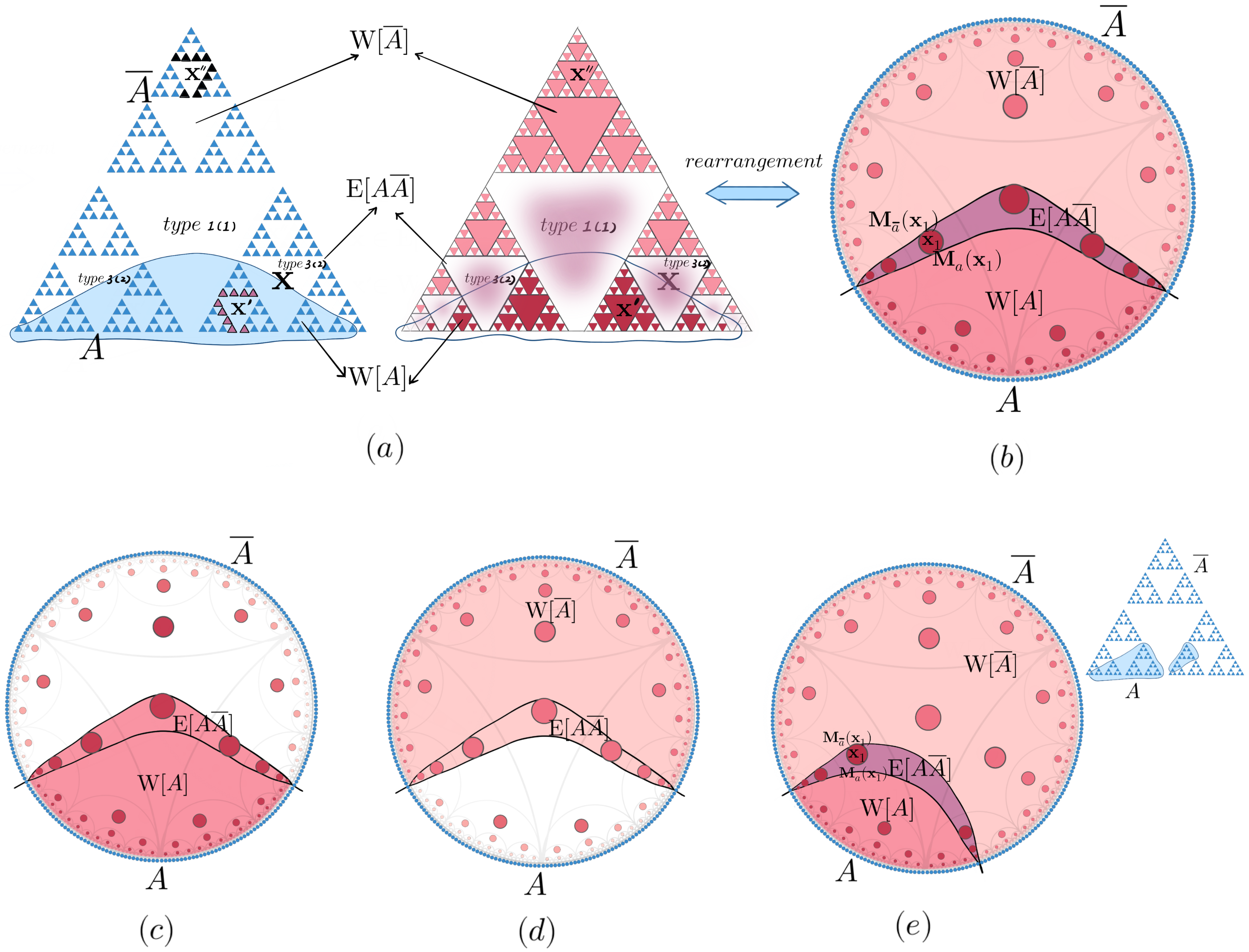}   
\phantomsubfloat{\label{19a}}\phantomsubfloat{\label{19b}}
\phantomsubfloat{\label{19c}}\phantomsubfloat{\label{19d}}
\phantomsubfloat{\label{19e}}
\caption{(a) Given the compact presentation of the boundary bipartition $A\overline{A}$ in the alternative geometry, the minimal recoveries and the typical reconstructions of bulk qudits can be identified to classify the bulk into $\mathrm{W}[A]$, $\mathrm{W}[\overline{A}]$ and $\mathrm{E}[A\overline{A}]$. For the bulk qudits (holes) with split, the type of splits can also be identified according to Tab.~\ref{splittab1}, \ref{splittab2} and \ref{splittab3}. (b) The geometric representation of subregion duality as specified through the rearrangement from (a). The red shades $\mathrm{W}[A]$, the pink shades $\mathrm{W}[\overline{A}]$ and the purple shades $\mathrm{E}[A\overline{A}]$. In $\mathrm{E}[A\overline{A}]$, the pairs $(\boldsymbol{\mathcal{M}}_a(\boldsymbol{x}),\boldsymbol{\mathcal{M}}_{\overline{a}}(\boldsymbol{x}))$ are associated to bulk qudits. (c) The entanglement wedge $\mathrm{W}[A]\cup\mathrm{E}[A\overline{A}]$ shaded by red. (d) The entanglement wedge $\mathrm{E}[A\overline{A}]\cup\mathrm{W}[\overline{A}]$ shaded by pink. The entangling surface $\mathrm{E}[A\overline{A}]$ in (b) is the overlap of the entanglement wedges in (c) and (d). (e) Another example of geometric representation of subregion duality in OAQEC.}
\label{fig19}
\end{figure}
\end{center}
\twocolumngrid

Now we consider the representative example of connected boundary bipartition as illustrated in Fig.~\ref{fig17} and \ref{fig19}, we can continue to specify the entanglement wedges, i.e., the supports of $R^+\mathcal{M}_A R$ and $R^+\mathcal{M}_{\overline{A}}R$: Through the rearrangement, those bulk qudits with identified minimal recoveries on $A$ and on $\overline{A}$ respectively (see Fig.~\ref{19a}) form $\mathrm{W}[A]$ and $\mathrm{W}[\overline{A}]$; and those with identified split typical nontrivial reconstructions form the entangling surface $E[A\overline{A}]$ (see Fig.~\ref{19b}, \ref{19c} and \ref{19d}). Clearly, in the illustration, the entangling surface in Fig.~\ref{19b} as the overlap between the two entanglement wedges in Fig.~\ref{19c} and \ref{19d} qualitatively agree with the general description in Fig.~\ref{1c}, \ref{1d} and \ref{1e}, and the corresponding discussion in Sec.~\ref{sdew}.

\subsection{Structure of subalgebra and geometric meaning}
In the second step, we complete the framework by showing how the bulk qudits in the supports of $R^+\mathcal{M}_A R$, $R^+\mathcal{M}_{\overline{A}} R$ contribute to the structures of the subalgebras. We start with considering the two von Neumann algebra tensor products $\otimes_{\boldsymbol{x}}\boldsymbol{\mathcal{M}}_a(\boldsymbol{x})$ and $\otimes_{\boldsymbol{x}}\boldsymbol{\mathcal{M}}_{\overline{a}}(\boldsymbol{x})$, where each product goes through the whole of bulk qudits. According to Eq.~\ref{ew2}, the two products can be expressed in a clear form which coincides the tripartition of the bulk,
\begin{align*}
\begin{split}
&\cdots\otimes\boldsymbol{\mathcal{M}}_a(\boldsymbol{x})\otimes\boldsymbol{\mathcal{M}}_a(\boldsymbol{x}')\otimes\boldsymbol{\mathcal{M}}_a(\boldsymbol{x}'')\otimes\cdots=\\
&(\otimes_{\boldsymbol{x}\in\mathrm{W}[A]}\mathbf{L}(\mathfrak{e}_{\boldsymbol{x}}))\otimes(\otimes_{\boldsymbol{x}\in\mathrm{E}[A\overline{A}]}\boldsymbol{\mathcal{M}}_a(\boldsymbol{x}))\otimes(\otimes_{\boldsymbol{x}\in\mathrm{W}[\overline{A}]}\mathbb{C}\mathds{1}_{\mathfrak{e}_{\boldsymbol{x}}})
\end{split}
\end{align*}
and
\begin{align*}
\begin{split}
&\cdots\otimes\boldsymbol{\mathcal{M}}_{\overline{a}}(\boldsymbol{x})\otimes\boldsymbol{\mathcal{M}}_{\overline{a}}(\boldsymbol{x}')\otimes\boldsymbol{\mathcal{M}}_{\overline{a}}(\boldsymbol{x}'')\otimes\cdots=\\
&(\otimes_{\boldsymbol{x}\in\mathrm{W}[A]}\mathbb{C}\mathds{1}_{\mathfrak{e}_{\boldsymbol{x}}})\otimes(\otimes_{\boldsymbol{x}\in\mathrm{E}[A\overline{A}]}\boldsymbol{\mathcal{M}}_{\overline{a}}(\boldsymbol{x}))\otimes(\otimes_{\boldsymbol{x}\in\mathrm{W}[\overline{A}]}\mathbf{L}(\mathfrak{e}_{\boldsymbol{x}})).
\end{split}
\end{align*}
For simplicity in notation, if we omit the part of trivial von Neumann algebra $\mathbb{C}\mathds{1}_{\mathfrak{e}_{\boldsymbol{x}}}$, we have
\begin{align}\label{structure2}
\begin{split}
\otimes_{\boldsymbol{x}}\boldsymbol{\mathcal{M}}_a&(\boldsymbol{x})\\
&=(\otimes_{\boldsymbol{x}\in\mathrm{W}[A]}\mathbf{L}(\mathfrak{e}_{\boldsymbol{x}}))\otimes(\otimes_{\boldsymbol{x}\in\mathrm{E}[A\overline{A}]}\boldsymbol{\mathcal{M}}_a(\boldsymbol{x})),\\
\otimes_{\boldsymbol{x}}\boldsymbol{\mathcal{M}}_{\overline{a}}&(\boldsymbol{x})\\
&=(\otimes_{\boldsymbol{x}\in\mathrm{E}[A\overline{A}]}\boldsymbol{\mathcal{M}}_{\overline{a}}(\boldsymbol{x}))\otimes(\otimes_{\boldsymbol{x}\in\mathrm{W}[\overline{A}]}\mathbf{L}(\mathfrak{e}_{\boldsymbol{x}})).
\end{split}
\end{align}

Then, the following proposition (see the proof in App.~\ref{posova}) directly shows that how $R^+\mathcal{M}_A R$, $R^+\mathcal{M}_{\overline{A}} R$ and the center $\mathrm{Z}(R^+\mathcal{M}_A R)$ can be explicitly described by the splits.


\begin{proposition}[Structure of the von Neumann algebras]\label{sova}
We follow the general setup in Sec.~\ref{hqecc}. If for a boundary bipartition $A\overline{A}$, the condition of complementary recovery, the genuine OAQEC, and the condition of bulk local complementarity are satisfied, then we have
\begin{align}
\begin{split}
&R^+\mathcal{M}_A R=(\otimes_{\boldsymbol{x}\in\mathrm{W}[A]}\mathbf{L}(\mathfrak{e}_{\boldsymbol{x}}))\otimes(\otimes_{\boldsymbol{x}\in\mathrm{E}[A\overline{A}]}\boldsymbol{\mathcal{M}}_a(\boldsymbol{x})),\\
&R^+\mathcal{M}_{\overline{A}} R=(\otimes_{\boldsymbol{x}\in\mathrm{E}[A\overline{A}]}\boldsymbol{\mathcal{M}}_{\overline{a}}(\boldsymbol{x}))\otimes(\otimes_{\boldsymbol{x}\in\mathrm{W}[\overline{A}]}\mathbf{L}(\mathfrak{e}_{\boldsymbol{x}})),\\
&\mathrm{Z}(R^+\mathcal{M}_A R)=\otimes_{\boldsymbol{x}\in\mathrm{E}[A\overline{A}]}\mathrm{Z}(\boldsymbol{\mathcal{M}}_a(\boldsymbol{x})).
\end{split}
\end{align}
Here, the center $\mathrm{Z}(R^+\mathcal{M}_A R)$ is within $\mathbf{L}(\mathcal{E})$, while $\mathrm{Z}(\boldsymbol{\mathcal{M}}_a(\boldsymbol{x}))=\boldsymbol{\mathcal{M}}_a(\boldsymbol{x})\cap\boldsymbol{\mathcal{M}}_{\overline{a}}(\boldsymbol{x})$ is the local center in $\mathbf{L}(\mathfrak{e}_{\boldsymbol{x}})$.
\end{proposition}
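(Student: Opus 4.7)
My plan is to prove all three identities via a single two-sided inclusion $R^+\mathcal{M}_A R=\boldsymbol{\mathcal{N}}_A$, where $\boldsymbol{\mathcal{N}}_A:=\bigl(\otimes_{\boldsymbol{x}\in\mathrm{W}[A]}\mathbf{L}(\mathfrak{e}_{\boldsymbol{x}})\bigr)\otimes\bigl(\otimes_{\boldsymbol{x}\in\mathrm{E}[A\overline{A}]}\boldsymbol{\mathcal{M}}_a(\boldsymbol{x})\bigr)$; the corresponding formula for $\overline{A}$ will follow by symmetry and the center formula will fall out of the commutant computation. The forward inclusion $\boldsymbol{\mathcal{N}}_A\subset R^+\mathcal{M}_A R$ is elementary; the reverse comes from activating the double-commutant theorem through complementary recovery, with bulk local complementarity converting local commutants $\boldsymbol{\mathcal{M}}'_a(\boldsymbol{x})$ into the ``other side'' local algebras $\boldsymbol{\mathcal{M}}_{\overline{a}}(\boldsymbol{x})$.

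First I would establish $\boldsymbol{\mathcal{N}}_A\subset R^+\mathcal{M}_A R$ (and, by exchanging $A\leftrightarrow\overline{A}$, also $\boldsymbol{\mathcal{N}}_{\overline{A}}\subset R^+\mathcal{M}_{\overline{A}} R$, where $\boldsymbol{\mathcal{N}}_{\overline{A}}:=\bigl(\otimes_{\boldsymbol{x}\in\mathrm{E}[A\overline{A}]}\boldsymbol{\mathcal{M}}_{\overline{a}}(\boldsymbol{x})\bigr)\otimes\bigl(\otimes_{\boldsymbol{x}\in\mathrm{W}[\overline{A}]}\mathbf{L}(\mathfrak{e}_{\boldsymbol{x}})\bigr)$). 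Every elementary generator of $\boldsymbol{\mathcal{N}}_A$ is a tensor product $\otimes_{\boldsymbol{x}}\widetilde{\boldsymbol{O}}_{\boldsymbol{x}}$ with $\widetilde{\boldsymbol{O}}_{\boldsymbol{x}}\in\mathbf{L}(\mathfrak{e}_{\boldsymbol{x}})$ for $\boldsymbol{x}\in\mathrm{W}[A]$ and $\widetilde{\boldsymbol{O}}_{\boldsymbol{x}}\in\boldsymbol{\mathcal{M}}_a(\boldsymbol{x})$ for $\boldsymbol{x}\in\mathrm{E}[A\overline{A}]$. By the definitions of $\mathrm{W}[A]$ and $\boldsymbol{\mathcal{M}}_a(\boldsymbol{x})$ each such tensor product is locally reconstructable on $A$, hence lies in $R^+\mathcal{M}_A R$. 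Since the latter is closed under sums and products, it contains the full linear span of these generators, which is precisely $\boldsymbol{\mathcal{N}}_A$.

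Next I would compute $\boldsymbol{\mathcal{N}}'_A$ using the finite-dimensional identity that the commutant of a tensor product of local von Neumann algebras equals the tensor product of their local commutants. Together with $\mathbf{L}(\mathfrak{e}_{\boldsymbol{x}})'=\mathbb{C}\mathds{1}_{\mathfrak{e}_{\boldsymbol{x}}}$, $(\mathbb{C}\mathds{1}_{\mathfrak{e}_{\boldsymbol{x}}})'=\mathbf{L}(\mathfrak{e}_{\boldsymbol{x}})$, and bulk local complementarity $\boldsymbol{\mathcal{M}}'_a(\boldsymbol{x})=\boldsymbol{\mathcal{M}}_{\overline{a}}(\boldsymbol{x})$, this yields $\boldsymbol{\mathcal{N}}'_A=\boldsymbol{\mathcal{N}}_{\overline{A}}$. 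Chaining with the forward inclusion and with complementary recovery in the form $R^+\mathcal{M}_{\overline{A}} R=(R^+\mathcal{M}_A R)'$ gives $\boldsymbol{\mathcal{N}}'_A=\boldsymbol{\mathcal{N}}_{\overline{A}}\subset R^+\mathcal{M}_{\overline{A}} R=(R^+\mathcal{M}_A R)'$; taking commutants once more reverses this to $R^+\mathcal{M}_A R=(R^+\mathcal{M}_A R)''\subset\boldsymbol{\mathcal{N}}''_A=\boldsymbol{\mathcal{N}}_A$. Combined with the forward inclusion this proves the first identity, and $R^+\mathcal{M}_{\overline{A}} R=\boldsymbol{\mathcal{N}}_{\overline{A}}$ follows immediately from $(R^+\mathcal{M}_A R)'=\boldsymbol{\mathcal{N}}'_A=\boldsymbol{\mathcal{N}}_{\overline{A}}$ and complementary recovery. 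For the center, $\mathrm{Z}(R^+\mathcal{M}_A R)=R^+\mathcal{M}_A R\cap (R^+\mathcal{M}_A R)'=\boldsymbol{\mathcal{N}}_A\cap\boldsymbol{\mathcal{N}}_{\overline{A}}$; by distributivity of intersection over tensor product the intersection is taken factor-by-factor, yielding $\mathbb{C}\mathds{1}_{\mathfrak{e}_{\boldsymbol{x}}}$ on $\mathrm{W}[A]\cup\mathrm{W}[\overline{A}]$ and $\boldsymbol{\mathcal{M}}_a(\boldsymbol{x})\cap\boldsymbol{\mathcal{M}}_{\overline{a}}(\boldsymbol{x})=\boldsymbol{\mathcal{M}}_a(\boldsymbol{x})\cap\boldsymbol{\mathcal{M}}'_a(\boldsymbol{x})=\mathrm{Z}(\boldsymbol{\mathcal{M}}_a(\boldsymbol{x}))$ on the entangling surface, which is the third identity.

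The hardest part will be the two structural identities for finite-dimensional von Neumann algebras sitting in a tensor product Hilbert space that drive the whole argument, namely $(\mathcal{A}\otimes\mathcal{B})'=\mathcal{A}'\otimes\mathcal{B}'$ and $(\mathcal{A}\otimes\mathcal{B})\cap(\mathcal{C}\otimes\mathcal{D})=(\mathcal{A}\cap\mathcal{C})\otimes(\mathcal{B}\cap\mathcal{D})$, iterated over the potentially large index sets $\mathrm{W}[A]$, $\mathrm{W}[\overline{A}]$ and $\mathrm{E}[A\overline{A}]$. I would isolate these as a self-contained appendix lemma, proved by exploiting the Wedderburn-style block decomposition $\mathfrak{e}_{\boldsymbol{x}}=\oplus_{\mu}(\mathfrak{e}_{\boldsymbol{x}a}^{\mu}\otimes\mathfrak{e}_{\boldsymbol{x}\overline{a}}^{\mu})$ of Eq.~\ref{decl0} already set up in Sec.~\ref{exsplitsec}: after a simultaneous block-diagonal change of basis that trivializes each $\boldsymbol{\mathcal{M}}_a(\boldsymbol{x})$, both identities reduce to routine computations on direct sums of matrix algebras, and the global product over bulk qudits then factorizes without further subtlety.
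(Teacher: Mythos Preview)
Your proposal is correct and follows essentially the same route as the paper's proof in App.~\ref{posova}: establish the forward inclusion $\boldsymbol{\mathcal{N}}_A\subset R^+\mathcal{M}_A R$ on generators, invoke the tensor-product commutant identity together with bulk local complementarity to get $\boldsymbol{\mathcal{N}}'_A=\boldsymbol{\mathcal{N}}_{\overline{A}}$, and then close the sandwich via complementary recovery and the double-commutant theorem. The only cosmetic difference is that the paper simply cites the identities $(\otimes_{\boldsymbol{x}}\boldsymbol{\mathcal{M}}_a(\boldsymbol{x}))'=\otimes_{\boldsymbol{x}}\boldsymbol{\mathcal{M}}'_a(\boldsymbol{x})$ and $\mathrm{Z}(\otimes_{\boldsymbol{x}}\boldsymbol{\mathcal{M}}_a(\boldsymbol{x}))=\otimes_{\boldsymbol{x}}\mathrm{Z}(\boldsymbol{\mathcal{M}}_a(\boldsymbol{x}))$ from the von Neumann algebra literature rather than re-deriving them from the block decomposition as you propose.
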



According to the above proposition combined with Eq.~\ref{ew2}, we have already establish a framework for translating between the algebraic characterization and the geometric representation of the subregion duality, which is as exact as how Eq.~\ref{ssc} does for the subsystem-code formalism. Such translation can be explained as follows, which also instructs how to access the complete description of subregion duality for a given boundary bipartition merely based on the typical reconstructions and the minimal recoveries that can be directly identified in the alternative geometry.

Given the boundary bipartition $A\overline{A}$ which determines the algebraic aspects, i.e., $\mathcal{M}_A$ and $\mathcal{M}_{\overline{A}}$, we can (1) classify the bulk qudits into $\mathrm{W}[A]$, $\mathrm{W}[\overline{A}]$ and $\mathrm{E}[A\overline{A}]$ with specific bulk geometry, (2) and specify how each bulk qudit $\boldsymbol{x}\in\mathrm{E}[A\overline{A}]$ is associated with the pair $(\boldsymbol{\mathcal{M}}_a(\boldsymbol{x}),\boldsymbol{\mathcal{M}}_{\overline{a}}(\boldsymbol{x}))$. As illustrated in Fig.~\ref{19b} and \ref{19e}, (1) and (2) specify the geometric representation of subregion duality in genuine OAQEC. Different from the geometric representation in the subsystem-code formalism (see Fig.~\ref{1b}), we need additionally the $(\boldsymbol{\mathcal{M}}_a(\boldsymbol{x}),\boldsymbol{\mathcal{M}}_{\overline{a}}(\boldsymbol{x}))$ associated to bulk qudits in the entangling surface. Prop.~\ref{sova} endows those pairs with the following meaning: $\boldsymbol{\mathcal{M}}_a(\boldsymbol{x})$ represents how the bulk qudit $\boldsymbol{x}$ contributes to $R^+\mathcal{M}_A R$, $\boldsymbol{\mathcal{M}}_{\overline{a}}(\boldsymbol{x})$ represents how it contributes to $R^+\mathcal{M}_{\overline{A}} R$ and $\mathrm{Z}(R^+\mathcal{M}_A R)$, and $\boldsymbol{\mathcal{M}}_a(\boldsymbol{x})\cap\boldsymbol{\mathcal{M}}_{\overline{a}}(\boldsymbol{x})$ represents how it contributes to $\mathrm{Z}(R^+\mathcal{M}_A R)$. Then, from such geometric representation, the complete structures of $R^+\mathcal{M}_A R$, $R^+\mathcal{M}_{\overline{A}} R$ and $\mathrm{Z}(R^+\mathcal{M}_A R)$ are simply the von Neumann tensor products of such contributions together with the whole bulk local operators algebras of bulk qudits in $\mathrm{W}[A]$ and $\mathrm{W}[\overline{A}]$.

The above description as illustrated in Fig.~\ref{19b} and \ref{19e} can be straightforwardly applied to other connected boundary bipartitions, and the expected properties can be easily checked. Hence, we have indeed demonstrated HQEC Characteristic 4.1 (see Par.~\ref{chac41}). The above framework applies also to all disconnected boundary bipartition of the importance for uberholography as can seen in following examples.

It is important to note that the essence of Prop.~\ref{sova} is general to holographic codes with OAQEC formalism. Indeed, as pointed out in Sec.~\ref{sdew}, the splits need not to happen to individual bulk qudits in the entangling surface. More generally, it happens to disjoint subsets of bulk qudits in the entangling surface, which can be seen for other disconnected boundary bipartition in our code. In that case, there should be a similar version of bulk local complementarity in which the $\boldsymbol{x}$ in $(\boldsymbol{\mathcal{M}}_a(\boldsymbol{x}),\boldsymbol{\mathcal{M}}_{\overline{a}}(\boldsymbol{x}))$ is simply replaced by the label for the disjoint subsets. Then, with the similar condition we still can prove Prop.~\ref{sova} and present a framework of the same form as above.

\subsection{Examples of geometric representations}
We can illustrate further examples of subregion duality for both connected and disconnected boundary bipartitions, which can demonstrate the derived properties as described in Characteristic 4.2 (see Par.~\ref{chac42}) and also complete the demonstration of the geometric aspects of uberholography, i.e. Characteristic 6 (see Sec.~\ref{uberintro}). In Fig.~\ref{fig20} we illustrate representative examples, other relevant cases can be illustrated and examined in a similar way.

\onecolumngrid
\begin{center}
\begin{figure}[ht]
\centering
    \includegraphics[width=17cm]{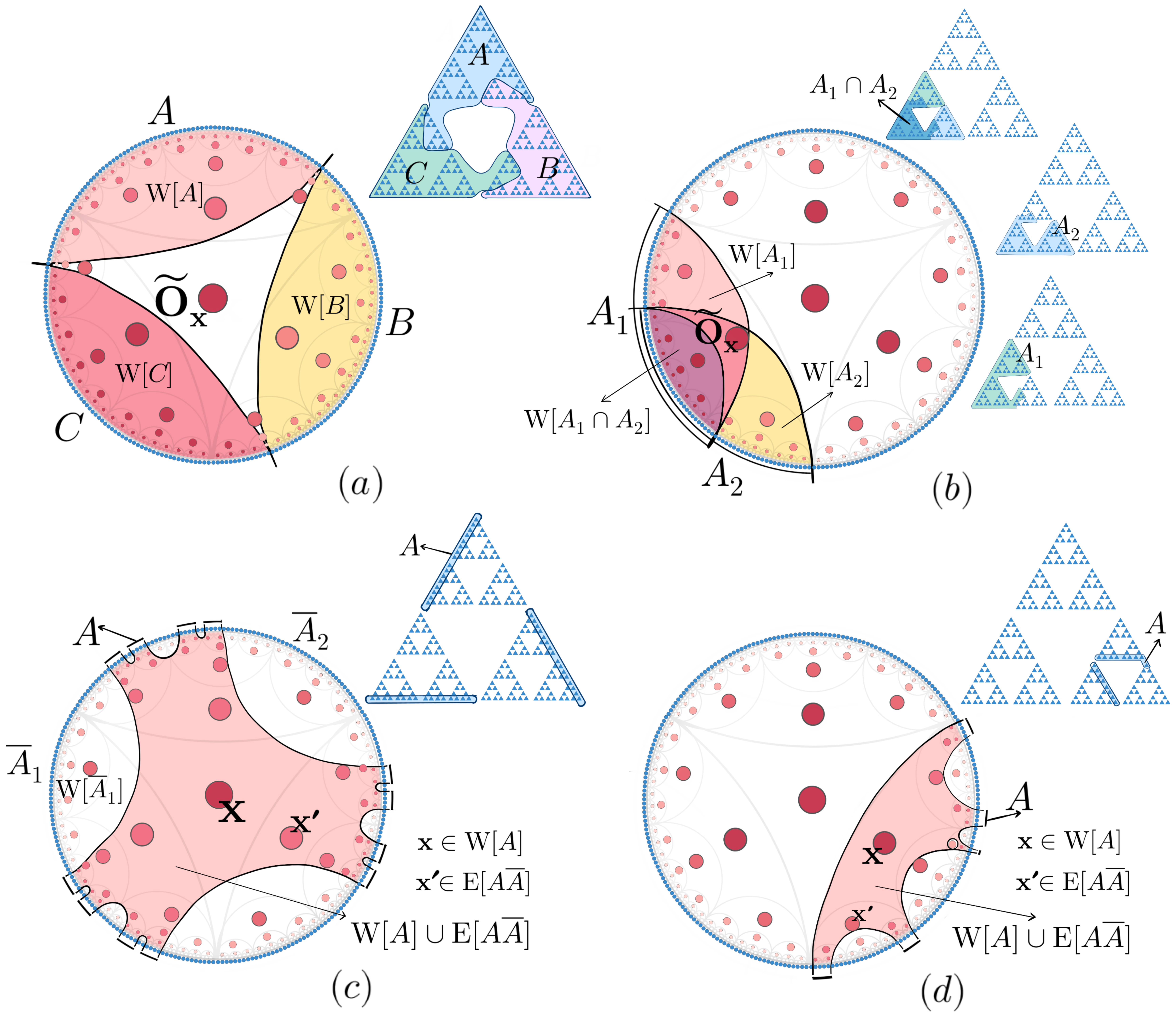}   
\phantomsubfloat{\label{20a}}\phantomsubfloat{\label{20b}}
\phantomsubfloat{\label{20c}}\phantomsubfloat{\label{20d}}
\caption{In each subfigure we present both the standard and the alternative geometric settings to present the boundary subregions. Both the wedges in the bulk and the boundary subregions present in the alternative geometric are shaded for indication. In (a) and (b), the shaded are the $\mathrm{W}[A]$s, while in (c) and (d), the shaded are the entanglement wedges $\mathrm{W}[A]\cup\mathrm{E}[A\overline{A}]$. Within the entanglement wedges in (c) and (d), the red (dark) dots lies within $\mathrm{W}[A]$, while other pink (light) ones lie within $\mathrm{E}[A\overline{A}]$.}
\label{fig20}
\end{figure}
\end{center}
\twocolumngrid

As shown in Fig.~\ref{20a}, the boundary is tripartite as disjoint subregions $A$, $B$ and $C$, and we consider the reconstruction of the bulk qudit $\boldsymbol{x}$ in the center. According to how the boundary subregions presented in the alternative geometry (the upper right) and following the above description for specifying the bulk wedges, we can specify $\mathrm{W}[A]$, $\mathrm{W}[B]$ and $\mathrm{W}[C]$ in the bulk.

In terms of the typical reconstructions studied in the alternative geometry (see Fig.~\ref{fig12} and \ref{fig15}), bulk local operators on $\boldsymbol{x}$ and of the form $\widetilde{\boldsymbol{O}}_{\boldsymbol{x}}=\dyad{\boldsymbol{\beta}_{\boldsymbol{x}}}$ can be reconstructed on none of $A$, $B$ or $C$, since any of $A\cup B$, $A\cup C$ and $B\cup C$ supports the reconstruction of a nontrivial $\widetilde{\boldsymbol{S}}^{\boldsymbol{\sigma_{\boldsymbol{x}}}}_{\boldsymbol{x}}$ operator which does not commute with $\dyad{\boldsymbol{\beta}_{\boldsymbol{x}}}$ (see Eq.~\ref{cts}). It is also clear from the typical reconsturctions that $\widetilde{\boldsymbol{O}}_{\boldsymbol{x}}$ can be reconstructed on any of $A\cup B$, $A\cup C$ and $B\cup C$ (see Fig.~\ref{12f}).

While the above arguments have shown how the example formally demonstrates what is expected in Characteristic 4.2, this example, as representing the genuine OAQEC formalism, differs slightly from the conventional illustration in the subsystem-code formalism. It is clear from Fig.~\ref{20a} that the bulk qudit $\boldsymbol{x}$ belongs to none of $\mathrm{W}[A]$, $\mathrm{W}[B]$ or $\mathrm{W}[C]$, i.e., $\mathcal{M}(\boldsymbol{x})$ has no recovery on $A$, $B$ or $C$. Indeed, the full of $\mathcal{M}(\boldsymbol{x})$ has no recovery in  $A\cup B$, $A\cup C$ or $B\cup C$ either. That is because of the nontrivial center in the genuine OAQEC: Subregions $A$ and $B\cup C=\overline{A}$ share certain operator on $\boldsymbol{x}$, e.g., $\dyad{\boldsymbol{\beta}_{\boldsymbol{x}}}+\dyad{\boldsymbol{\beta}'_{\boldsymbol{x}}}$ (see Fig.~\ref{fig13}), and $\boldsymbol{x}$ lies within each of the entangling surfaces $\mathrm{E}[A\overline{A}]$, $\mathrm{E}[B\overline{B}]$ and $\mathrm{E}[C\overline{C}]$.

As shown in Fig.~\ref{20b}, in the same way with the alternative geometry as above, for the boundary subregions $A_1$ and $A_2$ we can specify $\mathrm{W}[A_1]$, $\mathrm{W}[A_2]$ and $\mathrm{W}[A_1\cap A_2]$. Consider the bulk qudit $\boldsymbol{x}$ in the illustration, which clearly lies within $\mathrm{W}[A_1]$ and $\mathrm{W}[A_2]$, i.e., the whole of $\mathcal{M}(\boldsymbol{x})$ can be recovered there in, but not in $\mathrm{W}[A_1\cap A_2]$. It implies that there exists an operator $\widetilde{\boldsymbol{O}}_{\boldsymbol{x}}$ which can be reconstructed on both $A_1$ and $A_2$ but not on the intersection $A_1\cap A_2$. Indeed, this example also manifests the difference between the genuine OAQEC and the subsystem-code formalism: Although $\boldsymbol{x}$ does not lie within $\mathrm{W}[A_1\cap A_2]$, some operator on it, e.g., $\dyad{\boldsymbol{\beta}_{\boldsymbol{x}}}+\dyad{\boldsymbol{\beta}'_{\boldsymbol{x}}}$, can be reconstructed on $A_1\cap A_2$ since it is included in the nontrivial center shared by $A_1\cap A_2$ and $\overline{A_1\cap A_2}$.

The geometric representations of uberholography are illustrated in Fig.~\ref{20c} and \ref{20d}, in which the minimal subregions $A$ for recovering $\mathcal{M}(\boldsymbol{x})$ have been demonstrated in Fig.~\ref{16i} and \ref{16k} with scaling behavior $\sim N^{1/h}$. Here, we show the geometric aspects and the entanglement wedges in the same as above. In both Fig.~\ref{20c} and \ref{20d}, the shaded is the whole entanglement wedge, i.e. $\mathrm{W}[A]\cup\mathrm{E}[A\overline{A}]$. It can be viewed as hollowed out from the bulk by removing $\mathrm{W}[\overline{A}_1],\mathrm{W}[\overline{A}_2],\ldots$ for the insulated parts $\overline{A}_1,\overline{A}_2,\ldots$ of the complement $\overline{A}$.

It is noticeable that in Fig.~\ref{20c} and \ref{20d}, while the bulk qudit $\boldsymbol{x}$ lies within $\mathrm{W}[A]$, other bulk qudits in the same entanglement wedge, e.g., $\boldsymbol{x}'$, lie within the entangling surface $\mathrm{E}[A\overline{A}]$, since certain operators on them are shared in the center corresponding to $A$ and $\overline{A}$. This again manifests the difference between the genuine OAQEC and the subsystem-code formalism.

\subsection{RT formula in OAQEC}

Now, we extend the framework developed for the demonstration and the study of subregion duality. As we will show, the formal meaning of splits in the entangling surface and how the splits contribute to the structures of the subalgebras underlie a natural way to extract the bulk and area terms of the RT formula in genuine OAQEC following the ``standard'' prescription~\cite{harlow2017}. Our arguments will not only demonstrate HQEC Characteristic 5 (see Par.~\ref{chac5}), but also reveal how the entanglement patterns of the boundary basis states manifest in shaping the area term of the RT formula. In the following, we start with a brief summary of the ``standard'' prescription.

\subsubsection{Standard prescription}
Consider a connected boundary bipartition $A\overline{A}$ ($\mathcal{H}=\mathcal{H}_A\otimes\mathcal{H}_{\overline{A}}$). We have proved the condition of complementary recovery (Characteristic 2) and the genuine OAQEC (Characteristic 3), i.e., $\mathcal{M}_{\overline{A}}=\mathcal{M}'_A$ and $\mathrm{Z}(\mathcal{M}_A)\ne\mathbb{C}\mathds{1}$. Then, according to the standard prescription given by Ref.~\cite{harlow2017}, although the bulk and the area terms in
\begin{align}\label{rtformula0}
\begin{split}
\mathrm{S}(\widetilde{\rho}_A)&=\mathrm{Tr}[\widetilde{\rho}\mathcal{L}_A]+\mathrm{S}(\widetilde{\rho},\mathcal{M}_A)\\
\mathrm{S}(\widetilde{\rho}_{\overline{A}})&=\mathrm{Tr}[\widetilde{\rho}\mathcal{L}_A]+\mathrm{S}(\widetilde{\rho},\mathcal{M}_{\overline{A}}),
\end{split}
\end{align}
have ``universal'' meaning only dependent on the von Neumann algebras $\mathcal{M}_A$ and $\mathcal{M}_{\overline{A}}$, we can extract them from specific nontrivial decompositions of the Hilbert spaces $\mathcal{E}$, $\mathcal{H}_A$ and $\mathcal{H}_{\overline{A}}$, as long as the decompositions satisfy certain conditions as described below. 

As a basic properties of von Neumann algebra~\cite{harlow2017}, similar to the case described in Sec.~\ref{exsplitsec}, there exists a decomposition $\mathcal{H}_{\mathrm{code}}=\oplus_{\boldsymbol{\mu}}(\mathcal{H}^{\boldsymbol{\mu}}_a\otimes\mathcal{H}^{\boldsymbol{\mu}}_{\overline{a}})$ such that logical operators $\widetilde{O}\in\mathcal{M}_A$ are simply all operators of the form $\sum_{\boldsymbol{\mu}}(\widetilde{O}^{\boldsymbol{\mu}}_a\otimes\mathds{1}^{\boldsymbol{\mu}}_{\overline{a}})$, logical operators $\widetilde{O'}\in\mathcal{M}_{\overline{A}}$ are simply those $\sum_{\boldsymbol{\mu}}(\mathds{1}^{\boldsymbol{\mu}}_a\otimes\widetilde{O}^{\boldsymbol{\mu}}_{\overline{a}})$, and logical operators $\widetilde{O}\in\mathrm{Z}(\mathcal{M}_A)$ are simply those $\sum_{\boldsymbol{\mu}}c_{\boldsymbol{\mu}}(\mathds{1}^{\boldsymbol{\mu}}_a\otimes\mathds{1}^{\boldsymbol{\mu}}_{\overline{a}})$ with ($c_{\boldsymbol{\mu}}\in\mathbb{C}$). Note that here the subscripts $a$ and $\overline{a}$ can be viewed as the bulk counterpart of the subscripts $A$ and $\overline{A}$ for the boundary, indicating a specific boundary bipartition.

For convenience in our arguments, in terms of the unitary map $\mathcal{E}\xrightarrow{R}\mathcal{H}_{\mathrm{code}}$, we can consider the equivalent decomposition of $\mathcal{E}$, i.e., the bulk Hilbert space. Then, the basic condition can be also formally summarized as follows 
\begin{align}\label{decl1}
\begin{split}
&\mathcal{E}=\oplus_{\boldsymbol{\mu}}(\mathcal{E}^{\boldsymbol{\mu}}_a\otimes\mathcal{E}^{\boldsymbol{\mu}}_{\overline{a}}), \quad \mathcal{E}^{\boldsymbol{\mu}}_a\otimes\mathcal{E}^{\boldsymbol{\mu}}_{\overline{a}}\xrightarrow{\boldsymbol{I}^{\boldsymbol{\mu}}}\mathcal{E},\\
&R^+\mathcal{M}_A R\ni\widetilde{\boldsymbol{O}}=\sum_{\boldsymbol{\mu}}{\boldsymbol{I}^{\boldsymbol{\mu}}}(\widetilde{\boldsymbol{O}}^{\boldsymbol{\mu}}_a\otimes\mathds{1}^{\boldsymbol{\mu}}_{\overline{a}}){\boldsymbol{I}^{\boldsymbol{\mu}}}^+,\\
&R^+\mathcal{M}_{\overline{A}} R\ni\widetilde{\boldsymbol{O}}=\sum_{\boldsymbol{\mu}}{\boldsymbol{I}^{\boldsymbol{\mu}}}(\mathds{1}^{\boldsymbol{\mu}}_a\otimes\widetilde{\boldsymbol{O}}^{\boldsymbol{\mu}}_{\overline{a}}){\boldsymbol{I}^{\boldsymbol{\mu}}}^+,\\
&\mathrm{Z}(R^+\mathcal{M}_A R)\ni\widetilde{\boldsymbol{O}}=\sum_{\boldsymbol{\mu}}{\boldsymbol{I}^{\boldsymbol{\mu}}}(\mathds{1}^{\boldsymbol{\mu}}_a\otimes\mathds{1}^{\boldsymbol{\mu}}_{\overline{a}}){\boldsymbol{I}^{\boldsymbol{\mu}}}^+,
\end{split}
\end{align}
where we have $\widetilde{\boldsymbol{O}}^{\boldsymbol{\mu}}_a\in\mathbf{L}(\mathcal{E}^{\boldsymbol{\mu}}_a),  \widetilde{\boldsymbol{O}}^{\boldsymbol{\mu}}_{\overline{a}}\in\mathbf{L}(\mathcal{E}^{\boldsymbol{\mu}}_{\overline{a}})$. The isometry $\boldsymbol{I}^{\boldsymbol{\mu}}$ underlies the formal meaning of ``identified as'', based on which, the formal meaning of the decomposition is ${\boldsymbol{I}^{\boldsymbol{\mu}'}}^+\boldsymbol{I}^{\boldsymbol{\mu}}=\delta_{\boldsymbol{\mu}\boldsymbol{\mu}'}\mathds{1}^{\boldsymbol{\mu}},\widetilde{\boldsymbol{P}}^{\boldsymbol{\mu}}=\boldsymbol{I}^{\boldsymbol{\mu}}{\boldsymbol{I}^{\boldsymbol{\mu}}}^+, \sum_{\boldsymbol{\mu}}\widetilde{\boldsymbol{P}}^{\boldsymbol{\mu}}=\mathds{1}$. And we denote by $\{\ket*{\boldsymbol{B}^{\boldsymbol{\mu}}_{a,\boldsymbol{n}}}\}_{1\le\boldsymbol{n}\le\mathrm{dim}\mathcal{E}^{\boldsymbol{\mu}}_a}$ and $\{\ket*{\boldsymbol{B}^{\boldsymbol{\mu}}_{\overline{a},\boldsymbol{n}'}}\}_{1\le\boldsymbol{n}'\le\mathrm{dim}\mathcal{E}^{\boldsymbol{\mu}}_{\overline{a}}}$ as bases of $\mathcal{E}^{\boldsymbol{\mu}}_a$ and $\mathcal{E}^{\boldsymbol{\mu}}_{\overline{a}}$ respectively.

According to Theorem 5.1 in Ref.~\cite{harlow2017}, there exist two decompositions (of subspaces) of $\mathcal{H}_A$ and $\mathcal{H}_{\overline{A}}$ respectively together with a family of isometric linear maps $\{\boldsymbol{J}^{\boldsymbol{\mu}}\}$, i.e.,
\begin{align}\label{decl2}
\begin{split}
&\mathcal{E}^{\boldsymbol{\mu}}_a\otimes \mathcal{F}^{\boldsymbol{\mu}}_a\xrightarrow{\boldsymbol{U}^{\boldsymbol{\mu}}_A}\mathcal{H}_A, \quad \mathcal{E}^{\boldsymbol{\mu}}_{\overline{a}}\otimes \mathcal{F}^{\boldsymbol{\mu}}_{\overline{a}}\xrightarrow{\boldsymbol{U}^{\boldsymbol{\mu}}_{\overline{A}}}\mathcal{H}_{\overline{A}},\\
&\mathcal{E}^{\boldsymbol{\mu}}_a\otimes\mathcal{E}^{\boldsymbol{\mu}}_{\overline{a}}\xrightarrow{\boldsymbol{J}^{\boldsymbol{\mu}}}\mathcal{E}^{\boldsymbol{\mu}}_a\otimes(\mathcal{F}^{\boldsymbol{\mu}}_a\otimes\mathcal{F}^{\boldsymbol{\mu}}_{\overline{a}})\otimes\mathcal{E}^{\boldsymbol{\mu}}_{\overline{a}},\\
&\boldsymbol{J}^{\boldsymbol{\mu}}(\ket*{\boldsymbol{B}^{\boldsymbol{\mu}}_{a,\boldsymbol{n}}}\otimes\ket*{\boldsymbol{B}^{\boldsymbol{\mu}}_{\overline{a},\boldsymbol{n}'}})=\ket*{\boldsymbol{B}^{\boldsymbol{\mu}}_{a,\boldsymbol{n}}}\otimes\ket{\chi^{\boldsymbol{\mu}}}\otimes\ket*{\boldsymbol{B}^{\boldsymbol{\mu}}_{\overline{a},\boldsymbol{n}'}}\\
&\ket{\chi^{\boldsymbol{\mu}}}\in\mathcal{F}^{\boldsymbol{\mu}}_a\otimes\mathcal{F}^{\boldsymbol{\mu}}_{\overline{a}}, \quad \braket{\chi^{\boldsymbol{\mu}}}{\chi^{\boldsymbol{\mu}}}=1,
\end{split}
\end{align}
where the index of decomposition is consistent with Eq.~\ref{decl1}. Note that we use equivalent expression~\footnote{Note that in Ref.~\cite{harlow2017}, the part for decompositions of $\mathcal{H}_A$ and $\mathcal{H}_{\overline{A}}$ are described by decomposition followed by unitary operators. Indeed, since the decomposition is in nature a family of ``orthogonal'' isometry statifying properties as above for the $\boldsymbol{I}^{\boldsymbol{\mu}}$s, decompositions (``orthogonal'' isometries) of $\mathcal{H}_A$ and $\mathcal{H}_{\overline{A}}$ followed by unitary operators on $\mathcal{H}_A$ and $\mathcal{H}_{\overline{A}}$ are again decompositions (``orthogonal'' isometries). Hence, we directly, and also equivalently, use $\boldsymbol{U}^{\boldsymbol{\mu}}_A$ and $\boldsymbol{U}^{\boldsymbol{\mu}}_{\overline{A}}$. In terms of this perspective, in the decompositions of $\mathcal{H}_A$ and $\mathcal{H}_{\overline{A}}$, we equivalently and directly use $\mathcal{E}^{\boldsymbol{\mu}}_a$ and $\mathcal{E}^{\boldsymbol{\mu}}_{\overline{a}}$ that decompose $\mathcal{E}$, instead of $\mathcal{H}^{\boldsymbol{\mu}}_{A_1}$ and $\mathcal{H}^{\boldsymbol{\mu}}_{\overline{A}_1}$ (see Ref.~\cite{harlow2017} where the superscript was $\alpha$) which were used as general copies of $\mathcal{E}^{\boldsymbol{\mu}}_a$ and $\mathcal{E}^{\boldsymbol{\mu}}_{\overline{a}}$ respectively. And we specify the necessary isometry $\boldsymbol{J}^{\boldsymbol{\mu}}:\ket{\boldsymbol{B}^{\boldsymbol{\mu}}_a}\otimes\ket{\boldsymbol{B}^{\boldsymbol{\mu}}_{\overline{a}}}\mapsto\ket{\boldsymbol{B}^{\boldsymbol{\mu}}_a}\otimes\ket{\chi^{\boldsymbol{\mu}}}\otimes\ket{\boldsymbol{B}^{\boldsymbol{\mu}}_{\overline{a}}}$ by directly using $\ket{\boldsymbol{B}^{\boldsymbol{\mu}}_a}$ and $\ket{\boldsymbol{B}^{\boldsymbol{\mu}}_{\overline{a}}}$ on the right-hand-side, instead of using the copies $\ket{\boldsymbol{B}^{\boldsymbol{\mu}}_a}_{A_1}$ and $\ket{\boldsymbol{B}^{\boldsymbol{\mu}}_{\overline{a}}}_{\overline{A}_1}$. In our notation, there is no loss of generality, since the generality is contained the ``orthogonal'' isometries that underlie the decomposition.} which is more concise for the rigorous proving, but with slightly different appearance from Ref.~\cite{harlow2017}.

We can regard the two decompositions for $\mathcal{H}_A$ and $\mathcal{H}_{\overline{A}}$ and the family of isometry as to show how the decomposition structure of $\mathcal{E}$ (see Eq.~\ref{decl1}) is paralleled in $\mathcal{H}_A\otimes\mathcal{H}_{\overline{A}}$, and hence they are required to satisfy that
\begin{align}\label{diagram10}
\begin{split}
&R\boldsymbol{I}^{\boldsymbol{\mu}}=(\boldsymbol{U}^{\boldsymbol{\mu}}_A\otimes\boldsymbol{U}^{\boldsymbol{\mu}}_{\overline{A}})\boldsymbol{J}^{\boldsymbol{\mu}},\\
&R\ket*{\boldsymbol{B}^{\boldsymbol{\mu}}_{a,\boldsymbol{n}}\boldsymbol{B}^{\boldsymbol{\mu}}_{\overline{a},\boldsymbol{n}'}}\\
&=R\boldsymbol{I}^{\boldsymbol{\mu}}(\ket*{\boldsymbol{B}^{\boldsymbol{\mu}}_{a,\boldsymbol{n}}}\otimes\ket*{\boldsymbol{B}^{\boldsymbol{\mu}}_{\overline{a},\boldsymbol{n}'}})\\
&=(\boldsymbol{U}^{\boldsymbol{\mu}}_A\otimes\boldsymbol{U}^{\boldsymbol{\mu}}_{\overline{A}})\boldsymbol{J}^{\boldsymbol{\mu}}(\ket*{\boldsymbol{B}^{\boldsymbol{\mu}}_{a,\boldsymbol{n}}}\otimes\ket*{\boldsymbol{B}^{\boldsymbol{\mu}}_{\overline{a},\boldsymbol{n}'}})\\
&=(\boldsymbol{U}^{\boldsymbol{\mu}}_A\otimes\boldsymbol{U}^{\boldsymbol{\mu}}_{\overline{A}})(\ket*{\boldsymbol{B}^{\boldsymbol{\mu}}_{a,\boldsymbol{n}}}\otimes\ket{\chi^{\boldsymbol{\mu}}}\otimes\ket*{\boldsymbol{B}^{\boldsymbol{\mu}}_{\overline{a},\boldsymbol{n}'}}),
\end{split}
\end{align}
where we use $\ket*{\boldsymbol{B}^{\boldsymbol{\mu}}_{a,\boldsymbol{n}}\boldsymbol{B}^{\boldsymbol{\mu}}_{\overline{a},\boldsymbol{n}'}}$ abbreviated for $\boldsymbol{I}^{\boldsymbol{\mu}}(\ket*{\boldsymbol{B}^{\boldsymbol{\mu}}_{a,\boldsymbol{n}}}\otimes\ket*{\boldsymbol{B}^{\boldsymbol{\mu}}_{\overline{a},\boldsymbol{n}'}})$.
A more succinct description of the above equations can be given by the commutative diagram
\begin{equation}\label{diagram11}
\begin{tikzcd}  
&\mathcal{E}^{\boldsymbol{\mu}}_a\otimes\mathcal{E}^{\boldsymbol{\mu}}_{\overline{a}}\arrow{r}{\boldsymbol{J}^{\boldsymbol{\mu}}}\arrow{d}[swap]{\boldsymbol{I}^{\boldsymbol{\mu}}}&\mathcal{E}^{\boldsymbol{\mu}}_a\otimes\mathcal{F}^{\boldsymbol{\mu}}_a\otimes\mathcal{F}^{\boldsymbol{\mu}}_{\overline{a}}\otimes\mathcal{E}^{\boldsymbol{\mu}}_{\overline{a}}\arrow{d}{\boldsymbol{U}^{\boldsymbol{\mu}}_A\otimes\boldsymbol{U}^{\boldsymbol{\mu}}_{\overline{A}}}&\\
&\mathcal{E}\arrow{r}{R}&\mathcal{H}_A\otimes\mathcal{H}_{\overline{A}}&
\end{tikzcd},
\end{equation}
in which ``commutative'' simply means that the two compositions of the linear maps are equal, i.e. Eq.~\ref{diagram10}. This diagram clearly shows how the decomposition of $\mathcal{E}$ corresponding to $R^+\mathcal{M}_A R$ and $R^+\mathcal{M}_{\overline{A}} R$ is ``encoded'' in the decompositions of the complementary boundary parts. It essentially bridges the von Neumann algebra description of subregion duality to the RT formula of entanglement entropy~\cite{harlow2017}.

Then, we consider an arbitrary $\widetilde{\rho}$ on $\mathcal{H}_{\mathrm{code}}$. According to Ref.~\cite{harlow2017}, once the decompositions of $\mathcal{E}$, $\mathcal{H}_A$ and $\mathcal{H}_{\overline{A}}$, together with the isometries $\{\boldsymbol{J}^{\boldsymbol{\mu}}\}$ (or the $\ket{\chi^{\boldsymbol{\mu}}}$s) are specified and proved to satisfy the conditions described in Eq.~\ref{decl1} and in Eq.~\ref{diagram10} (\ref{diagram11}), we can define the projected part of  $\widetilde{\rho}$ onto $\mathcal{E}^{\boldsymbol{\mu}}_a\otimes\mathcal{E}^{\boldsymbol{\mu}}_{\overline{a}}$ as $\widetilde{\boldsymbol{\rho}}^{\boldsymbol{\mu}}_{a\overline{a}}={\boldsymbol{I}^{\boldsymbol{\mu}}}^+R^+\widetilde{\rho}R\boldsymbol{I}^{\boldsymbol{\mu}}$, the partial-traced density operator on $\mathcal{E}^{\boldsymbol{\mu}}_a$ as $p_{\boldsymbol{\mu}}\widetilde{\boldsymbol{\rho}}^{\boldsymbol{\mu}}_a=\mathrm{Tr}^{\boldsymbol{\mu}}_{\overline{a}}[\widetilde{\boldsymbol{\rho}}^{\boldsymbol{\mu}}_{a\overline{a}}]$ and the partial-traced density operator of $\ket{\chi^{\boldsymbol{\mu}}}$ on $\mathcal{F}^{\boldsymbol{\mu}}_a$ as $\chi^{\boldsymbol{\mu}}_a$, so that the terms in Eq.~\ref{rtformula0} can be extracted as
\begin{align}\label{rtformula1}
\begin{split}
&\mathrm{Z}(\mathcal{M}_A)\ni\mathcal{L}_A=\sum_{\boldsymbol{\mu}}\mathrm{S}(\chi^{\boldsymbol{\mu}}_a)R\widetilde{\boldsymbol{P}}^{\boldsymbol{\mu}}R^+,\\
&\mathrm{S}(\widetilde{\rho},\mathcal{M}_A)=-\sum_{\boldsymbol{\mu}}p_{\boldsymbol{\mu}}\mathrm{log}p_{\boldsymbol{\mu}}+\sum_{\boldsymbol{\mu}}p_{\boldsymbol{\mu}}\mathrm{S}(\widetilde{\boldsymbol{\rho}}^{\boldsymbol{\mu}}_a),\\
&\mathrm{S}(\widetilde{\rho},\mathcal{M}_{\overline{A}})=-\sum_{\boldsymbol{\mu}}p_{\boldsymbol{\mu}}\mathrm{log}p_{\boldsymbol{\mu}}+\sum_{\boldsymbol{\mu}}p_{\boldsymbol{\mu}}\mathrm{S}(\widetilde{\boldsymbol{\rho}}^{\boldsymbol{\mu}}_{\overline{a}}),
\end{split}
\end{align}
where $R\widetilde{\boldsymbol{P}}^{\boldsymbol{\mu}}R^+$ is the projection operator on $\mathcal{H}_{\mathrm{code}}$ induced by the decomposition.

\subsubsection{Splits and the bulk terms}\label{sbtee}
According to the above prescription, to extract the area and the bulk terms as described in Eq.~\ref{rtformula1} for an arbitrary state $\widetilde{\rho}$, we simply need to specify concrete decompositions and isometric maps as described in Eq.~\ref{decl1} and \ref{decl2}, and show that they satisfy the condition described by Eq.~\ref{diagram10} or \ref{diagram11}. We firstly show that the decomposition $\mathcal{E}=\oplus_{\boldsymbol{\mu}}(\mathcal{E}^{\boldsymbol{\mu}}_a\otimes\mathcal{E}^{\boldsymbol{\mu}}_{\overline{a}})$ (in Eq.~\ref{decl1}) can be naturally specified in terms of the splits on the entangling surface, based on which we can explicitly compute the bulk terms $\mathrm{S}(\widetilde{\rho},\mathcal{M}_A)$ and $\mathrm{S}(\widetilde{\rho},\mathcal{M}_{\overline{A}})$.

Recall that as discussed in Sec.~\ref{exsplitsec}, corresponding to the split $(\boldsymbol{\mathcal{M}}_a(\boldsymbol{x}),\boldsymbol{\mathcal{M}}_{\overline{a}}(\boldsymbol{x}))$, the local Hilbert space of each bulk qudit $\boldsymbol{x}$ in the entangling surface $\mathrm{E}[A\overline{A}]$ also has a decomposition $\mathfrak{e}_{\boldsymbol{x}}=\oplus_{\mu}(\mathfrak{e}_{\boldsymbol{x}a}^{\mu}\otimes\mathfrak{e}_{\boldsymbol{x}\overline{a}}^{\mu})$. According to such decompositions, the Hilbert space $\otimes_{\boldsymbol{x}\in\mathrm{E}[A\overline{A}]}\mathfrak{e}_{\boldsymbol{x}}$ for the bulk subsystem on the entangling surface has the following structure
\begin{align}\label{decl3}
\begin{split}
&~~~\otimes_{\boldsymbol{x}\in\mathrm{E}[A\overline{A}]}\mathfrak{e}_{\boldsymbol{x}}\\
&=\otimes_{\boldsymbol{x}\in\mathrm{E}[A\overline{A}]}[\oplus_{\mu_{\boldsymbol{x}}}(\mathfrak{e}_{\boldsymbol{x}a}^{\mu_{\boldsymbol{x}}}\otimes\mathfrak{e}_{\boldsymbol{x}\overline{a}}^{\mu_{\boldsymbol{x}}})]\\
&=[\oplus_{\mu_{\boldsymbol{x}_1}}(\mathfrak{e}_{\boldsymbol{x}a}^{\mu_{\boldsymbol{x}_1}}\otimes\mathfrak{e}_{\boldsymbol{x}\overline{a}}^{\mu_{\boldsymbol{x}_1}})]\otimes[\oplus_{\mu_{\boldsymbol{x}_2}}(\mathfrak{e}_{\boldsymbol{x}a}^{\mu_{\boldsymbol{x}_2}}\otimes\mathfrak{e}_{\boldsymbol{x}\overline{a}}^{\mu_{\boldsymbol{x}_2}})]\otimes\cdots\\
&=\oplus_{\{\mu_{\boldsymbol{x}}\}}[(\mathfrak{e}_{\boldsymbol{x}a}^{\mu_{\boldsymbol{x}_1}}\otimes\mathfrak{e}_{\boldsymbol{x}\overline{a}}^{\mu_{\boldsymbol{x}_1}})\otimes(\mathfrak{e}_{\boldsymbol{x}a}^{\mu_{\boldsymbol{x}_2}}\otimes\mathfrak{e}_{\boldsymbol{x}\overline{a}}^{\mu_{\boldsymbol{x}_2}})\otimes\cdots]\\
&=\oplus_{\{\mu_{\boldsymbol{x}}\}}[(\mathfrak{e}_{\boldsymbol{x}a}^{\mu_{\boldsymbol{x}_1}}\otimes\mathfrak{e}_{\boldsymbol{x}a}^{\mu_{\boldsymbol{x}_2}}\cdots)\otimes(\mathfrak{e}_{\boldsymbol{x}\overline{a}}^{\mu_{\boldsymbol{x}_1}}\otimes\mathfrak{e}_{\boldsymbol{x}\overline{a}}^{\mu_{\boldsymbol{x}_2}}\cdots)]\\
&=\oplus_{\{\mu_{\boldsymbol{x}}\}}[(\otimes_{\boldsymbol{x}\in\mathrm{E}[A\overline{A}]}\mathfrak{e}_{\boldsymbol{x}a}^{\mu_{\boldsymbol{x}}})\otimes(\otimes_{\boldsymbol{x}\in\mathrm{E}[A\overline{A}]}\mathfrak{e}_{\boldsymbol{x}\overline{a}}^{\mu_{\boldsymbol{x}}})].
\end{split}
\end{align}
Here, the index $\{\mu_{\boldsymbol{x}}\}=\mu_{\boldsymbol{x}_1},\mu_{\boldsymbol{x}_2},\ldots$ of the direct sum runs through all bulk qudits in the entangling surface $\mathrm{E}[A\overline{A}]$ and includes all possible $\mu_{\boldsymbol{x}}$ for each $\boldsymbol{x}$. Comparing the above factor spaces $\otimes_{\boldsymbol{x}\in\mathrm{E}[A\overline{A}]}\mathfrak{e}_{\boldsymbol{x}a}^{\mu_{\boldsymbol{x}}}$ and $\otimes_{\boldsymbol{x}\in\mathrm{E}[A\overline{A}]}\mathfrak{e}_{\boldsymbol{x}\overline{a}}^{\mu_{\boldsymbol{x}}}$ with $\mathcal{E}^{\boldsymbol{\mu}}_a$ and $\mathcal{E}^{\boldsymbol{\mu}}_{\overline{a}}$ in Eq.~\ref{decl1}, or, with $\mathfrak{e}_{\boldsymbol{x}a}^{\mu}$ and $\mathfrak{e}_{\boldsymbol{x}\overline{a}}^{\mu}$ in Eq.~\ref{decl0}, it is clear that the decomposition $\otimes_{\boldsymbol{x}\in\mathrm{E}[A\overline{A}]}\mathfrak{e}_{\boldsymbol{x}}=\oplus_{\{\mu_{\boldsymbol{x}}\}}[(\otimes_{\boldsymbol{x}\in\mathrm{E}[A\overline{A}]}\mathfrak{e}_{\boldsymbol{x}a}^{\mu_{\boldsymbol{x}}})\otimes(\otimes_{\boldsymbol{x}\in\mathrm{E}[A\overline{A}]}\mathfrak{e}_{\boldsymbol{x}\overline{a}}^{\mu_{\boldsymbol{x}}})]$ determines two von Neumann algebras as commutants to one another (in analogous to $R^+\mathcal{M}_A R$ and $R^+\mathcal{M}_{\overline{A}} R$, or to $\boldsymbol{\mathcal{M}}_a(\boldsymbol{x})$ and $\boldsymbol{\mathcal{M}}_{\overline{a}}(\boldsymbol{x})$). And such von Neumann algebras essentially represent the total split of the whole bulk degrees of freedom in the entangling surface.

Now, inserting the above decomposition for the total split into the expression of the bulk Hilbert space $\mathcal{E}=(\otimes_{\boldsymbol{x}\in\mathrm{W}[A]}\mathfrak{e}_{\boldsymbol{x}})\otimes(\otimes_{\boldsymbol{x}\in\mathrm{E}[A\overline{A}]}\mathfrak{e}_{\boldsymbol{x}})\otimes(\otimes_{\boldsymbol{x}\in\mathrm{W}[\overline{A}]}\mathfrak{e}_{\boldsymbol{x}})$, we have
\begin{align}\label{decl4}
\begin{split}
\mathcal{E}&=(\otimes_{\boldsymbol{x}\in\mathrm{W}[A]}\mathfrak{e}_{\boldsymbol{x}})\otimes(\otimes_{\boldsymbol{x}\in\mathrm{E}[A\overline{A}]}\mathfrak{e}_{\boldsymbol{x}})\otimes(\otimes_{\boldsymbol{x}\in\mathrm{W}[\overline{A}]}\mathfrak{e}_{\boldsymbol{x}})\\
&=\{\otimes_{\boldsymbol{x}\in\mathrm{W}[A]}\mathfrak{e}_{\boldsymbol{x}}\}\otimes\{\oplus_{\{\mu_{\boldsymbol{x}}\}}[(\otimes_{\boldsymbol{x}\in\mathrm{E}[A\overline{A}]}\mathfrak{e}_{\boldsymbol{x}a}^{\mu_{\boldsymbol{x}}})\\
&~~~\otimes(\otimes_{\boldsymbol{x}\in\mathrm{E}[A\overline{A}]}\mathfrak{e}_{\boldsymbol{x}\overline{a}}^{\mu_{\boldsymbol{x}}})]\}\otimes\{\otimes_{\boldsymbol{x}\in\mathrm{W}[\overline{A}]}\mathfrak{e}_{\boldsymbol{x}}\}\\
&=\oplus_{\{\mu_{\boldsymbol{x}}\}}\{[(\otimes_{\boldsymbol{x}\in\mathrm{W}[A]}\mathfrak{e}_{\boldsymbol{x}})\otimes(\otimes_{\boldsymbol{x}\in\mathrm{E}[A\overline{A}]}\mathfrak{e}_{\boldsymbol{x}a}^{\mu_{\boldsymbol{x}}})]\\
&~~~\otimes[(\otimes_{\boldsymbol{x}\in\mathrm{E}[A\overline{A}]}\mathfrak{e}_{\boldsymbol{x}{\overline{a}}}^{\mu_{\boldsymbol{x}}})\otimes(\otimes_{\boldsymbol{x}\in\mathrm{W}[\overline{A}]}\mathfrak{e}_{\boldsymbol{x}})]\},
\end{split}
\end{align}
which can specify a decomposition and corresponding bases
\begin{align}\label{decl5}
\begin{split}
&\mathcal{E}=\oplus_{\{\mu_{\boldsymbol{x}}\}}(\mathcal{E}^{\{\mu_{\boldsymbol{x}}\}}_a\otimes\mathcal{E}^{\{\mu_{\boldsymbol{x}}\}}_{\overline{a}}),\\
&\mathcal{E}^{\{\mu_{\boldsymbol{x}}\}}_a=(\otimes_{\boldsymbol{x}\in\mathrm{W}[A]}\mathfrak{e}_{\boldsymbol{x}})\otimes(\otimes_{\boldsymbol{x}\in\mathrm{E}[A\overline{A}]}\mathfrak{e}_{\boldsymbol{x}a}^{\mu_{\boldsymbol{x}}}),\\
&\mathcal{E}^{\{\mu_{\boldsymbol{x}}\}}_{\overline{a}}=(\otimes_{\boldsymbol{x}\in\mathrm{E}[A\overline{A}]}\mathfrak{e}_{\boldsymbol{x}{\overline{a}}}^{\mu_{\boldsymbol{x}}})\otimes(\otimes_{\boldsymbol{x}\in\mathrm{W}[\overline{A}]}\mathfrak{e}_{\boldsymbol{x}})\\
&\ket*{\boldsymbol{B}^{\{\mu_{\boldsymbol{x}}\}}_{a,\boldsymbol{n}}}=(\otimes_{\boldsymbol{x}\in\mathrm{W}[A]}\ket{\boldsymbol{\beta}_{\boldsymbol{x}}})\otimes(\otimes_{\boldsymbol{x}\in\mathrm{E}[A\overline{A}]}\ket{\boldsymbol{\beta}^{\mu_{\boldsymbol{x}}}_{\boldsymbol{x}a}})\\
&\ket*{\boldsymbol{B}^{\{\mu_{\boldsymbol{x}}\}}_{\overline{a},\boldsymbol{n}'}}=(\otimes_{\boldsymbol{x}\in\mathrm{E}[A\overline{A}]}\ket{\boldsymbol{\beta}^{\mu_{\boldsymbol{x}}}_{\boldsymbol{x}\overline{a}}})\otimes(\otimes_{\boldsymbol{x}\in\mathrm{W}[\overline{A}]}\ket{\boldsymbol{\beta}_{\boldsymbol{x}}}),
\end{split}
\end{align}
where $\{\ket{\boldsymbol{\beta}^{\mu_{\boldsymbol{x}}}_{\boldsymbol{x}a}}\}$ and $\{\ket{\boldsymbol{\beta}^{\mu_{\boldsymbol{x}}}_{\boldsymbol{x}\overline{a}}}\}$, as bases of $\mathfrak{e}_{\boldsymbol{x}a}^{\mu_{\boldsymbol{x}}}$ and $\mathfrak{e}_{\boldsymbol{x}\overline{a}}^{\mu_{\boldsymbol{x}}}$ respectively have been specified in App.~\ref{posplit}.

To show that this is the desired decomposition, we have to check the relationship on the operators as shown in Eq.~\ref{decl1}, i.e. to prove that the von Neumann algebras specified by this decomposition are exactly $R^+\mathcal{M}_A R$ and $R^+\mathcal{M}_{\overline{A}} R$. Indeed, according to the derivation detailed in App.~\ref{podecl}, which takes advantage of Prop.~\ref{sova}, we can prove the following proposition.
\begin{proposition}\label{declprop}
Following the same conditions as in Prop.~\ref{sova}, the decomposition given by Eq.~\ref{decl5} is the desired one, i.e., we have
\begin{align}
\begin{split}
&R^+\mathcal{M}_A R\ni\widetilde{\boldsymbol{O}}=\sum_{\{\mu_{\boldsymbol{x}}\}}\widetilde{\boldsymbol{O}}^{\{\mu_{\boldsymbol{x}}\}}_a\otimes\mathds{1}^{\{\mu_{\boldsymbol{x}}\}}_{\overline{a}},\\
&R^+\mathcal{M}_{\overline{A}} R\ni\widetilde{\boldsymbol{O}}=\sum_{\{\mu_{\boldsymbol{x}}\}}\mathds{1}^{\{\mu_{\boldsymbol{x}}\}}_a\otimes\widetilde{\boldsymbol{O}}^{\{\mu_{\boldsymbol{x}}\}}_{\overline{a}},\\
&\mathrm{Z}(R^+\mathcal{M}_A R)\ni\widetilde{\boldsymbol{O}}=\sum_{\{\mu_{\boldsymbol{x}}\}}c_{\{\mu_{\boldsymbol{x}}\}}\mathds{1}^{\{\mu_{\boldsymbol{x}}\}}_a\otimes\mathds{1}^{\{\mu_{\boldsymbol{x}}\}}_{\overline{a}},
\end{split}
\end{align}
i.e., the von Neumann algebra on the left consists of operators of the form on the right. Here, compared with Eq.~\ref{decl1}, we simply omit the isometry $\boldsymbol{I}^{\{\mu_{\boldsymbol{x}}\}}$ for simplicity.
\end{proposition}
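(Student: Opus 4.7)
The plan is to leverage Proposition~\ref{sova}, which already expresses $R^+\mathcal{M}_A R$, $R^+\mathcal{M}_{\overline{A}} R$, and $\mathrm{Z}(R^+\mathcal{M}_A R)$ as tensor products over bulk qudits in $\mathrm{W}[A]$, $\mathrm{W}[\overline{A}]$, and $\mathrm{E}[A\overline{A}]$ of either the full local algebras or the split pieces $\boldsymbol{\mathcal{M}}_a(\boldsymbol{x})$, $\boldsymbol{\mathcal{M}}_{\overline{a}}(\boldsymbol{x})$. The goal is then to translate each of these three tensor-product descriptions into the direct-sum-of-tensor-products form dictated by the decomposition \eqref{decl5}. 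The bridge is the per-qudit decomposition $\mathfrak{e}_{\boldsymbol{x}}=\oplus_{\mu_{\boldsymbol{x}}}(\mathfrak{e}_{\boldsymbol{x}a}^{\mu_{\boldsymbol{x}}}\otimes\mathfrak{e}_{\boldsymbol{x}\overline{a}}^{\mu_{\boldsymbol{x}}})$ established in App.~\ref{posplit}, which is precisely what encodes $\boldsymbol{\mathcal{M}}_a(\boldsymbol{x})$ and $\boldsymbol{\mathcal{M}}_{\overline{a}}(\boldsymbol{x})$: by the standard structure theorem for von Neumann algebras on finite-dimensional Hilbert spaces, $\boldsymbol{\mathcal{M}}_a(\boldsymbol{x})$ is exactly the set of operators of the form $\sum_{\mu_{\boldsymbol{x}}}\widetilde{\boldsymbol{O}}_{\boldsymbol{x}a}^{\mu_{\boldsymbol{x}}}\otimes\mathds{1}^{\mu_{\boldsymbol{x}}}_{\boldsymbol{x}\overline{a}}$ and $\boldsymbol{\mathcal{M}}_{\overline{a}}(\boldsymbol{x})$ the set $\sum_{\mu_{\boldsymbol{x}}}\mathds{1}^{\mu_{\boldsymbol{x}}}_{\boldsymbol{x}a}\otimes\widetilde{\boldsymbol{O}}_{\boldsymbol{x}\overline{a}}^{\mu_{\boldsymbol{x}}}$, while $\mathrm{Z}(\boldsymbol{\mathcal{M}}_a(\boldsymbol{x}))$ consists of $\sum_{\mu_{\boldsymbol{x}}} c_{\mu_{\boldsymbol{x}}}\widetilde{\boldsymbol{P}}^{\mu_{\boldsymbol{x}}}_{\boldsymbol{x}}$.

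First I would take a pure tensor $\bigotimes_{\boldsymbol{x}\in\mathrm{W}[A]}\widetilde{\boldsymbol{O}}_{\boldsymbol{x}}\otimes\bigotimes_{\boldsymbol{x}\in\mathrm{E}[A\overline{A}]}\widetilde{\boldsymbol{O}}_{\boldsymbol{x}}$ with $\widetilde{\boldsymbol{O}}_{\boldsymbol{x}}\in\boldsymbol{\mathcal{M}}_a(\boldsymbol{x})$ for $\boldsymbol{x}\in\mathrm{E}[A\overline{A}]$, substitute the split expansion for each entangling-surface factor, and use distributivity of tensor product over direct sum to move the sums outside,
\begin{equation*}
\bigotimes_{\boldsymbol{x}\in\mathrm{E}[A\overline{A}]}\Bigl(\sum_{\mu_{\boldsymbol{x}}}\widetilde{\boldsymbol{O}}_{\boldsymbol{x}a}^{\mu_{\boldsymbol{x}}}\otimes\mathds{1}^{\mu_{\boldsymbol{x}}}_{\boldsymbol{x}\overline{a}}\Bigr)=\sum_{\{\mu_{\boldsymbol{x}}\}}\bigotimes_{\boldsymbol{x}\in\mathrm{E}[A\overline{A}]}\bigl(\widetilde{\boldsymbol{O}}_{\boldsymbol{x}a}^{\mu_{\boldsymbol{x}}}\otimes\mathds{1}^{\mu_{\boldsymbol{x}}}_{\boldsymbol{x}\overline{a}}\bigr).
\end{equation*}
Regrouping the $a$-factors with the full $\mathrm{W}[A]$-factors and the $\overline{a}$-factors with the trivial $\mathrm{W}[\overline{A}]$-factors recovers exactly the form $\sum_{\{\mu_{\boldsymbol{x}}\}}\widetilde{\boldsymbol{O}}^{\{\mu_{\boldsymbol{x}}\}}_a\otimes\mathds{1}^{\{\mu_{\boldsymbol{x}}\}}_{\overline{a}}$ from Eq.~\eqref{decl5}, identifying $\widetilde{\boldsymbol{O}}^{\{\mu_{\boldsymbol{x}}\}}_a$ as an operator on $\mathcal{E}^{\{\mu_{\boldsymbol{x}}\}}_a=(\otimes_{\boldsymbol{x}\in\mathrm{W}[A]}\mathfrak{e}_{\boldsymbol{x}})\otimes(\otimes_{\boldsymbol{x}\in\mathrm{E}[A\overline{A}]}\mathfrak{e}_{\boldsymbol{x}a}^{\mu_{\boldsymbol{x}}})$. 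Extending by linearity from pure tensors to the full von Neumann algebra tensor product gives one inclusion. For the reverse inclusion, since the $\widetilde{\boldsymbol{O}}_{\boldsymbol{x}a}^{\mu_{\boldsymbol{x}}}$ can be chosen freely in $\mathbf{L}(\mathfrak{e}_{\boldsymbol{x}a}^{\mu_{\boldsymbol{x}}})$ and likewise for the $\mathrm{W}[A]$ factors, every operator of the form $\sum_{\{\mu_{\boldsymbol{x}}\}}\widetilde{\boldsymbol{O}}^{\{\mu_{\boldsymbol{x}}\}}_a\otimes\mathds{1}^{\{\mu_{\boldsymbol{x}}\}}_{\overline{a}}$ is obtained this way.

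The argument for $R^+\mathcal{M}_{\overline{A}} R$ is verbatim with $a\leftrightarrow\overline{a}$ and $\mathrm{W}[A]\leftrightarrow\mathrm{W}[\overline{A}]$. For the center I would use the last clause of Prop.~\ref{sova}, $\mathrm{Z}(R^+\mathcal{M}_A R)=\otimes_{\boldsymbol{x}\in\mathrm{E}[A\overline{A}]}\mathrm{Z}(\boldsymbol{\mathcal{M}}_a(\boldsymbol{x}))$, together with the fact that $\mathrm{Z}(\boldsymbol{\mathcal{M}}_a(\boldsymbol{x}))$ is generated by the projections $\{\widetilde{\boldsymbol{P}}^{\mu_{\boldsymbol{x}}}_{\boldsymbol{x}}\}$; an analogous distributivity manipulation shows that the tensor product of such sums becomes $\sum_{\{\mu_{\boldsymbol{x}}\}}c_{\{\mu_{\boldsymbol{x}}\}}\bigotimes_{\boldsymbol{x}\in\mathrm{E}[A\overline{A}]}\widetilde{\boldsymbol{P}}^{\mu_{\boldsymbol{x}}}_{\boldsymbol{x}}$, which equals $\sum_{\{\mu_{\boldsymbol{x}}\}}c_{\{\mu_{\boldsymbol{x}}\}}\mathds{1}^{\{\mu_{\boldsymbol{x}}\}}_a\otimes\mathds{1}^{\{\mu_{\boldsymbol{x}}\}}_{\overline{a}}$ after identifying each projector with the identity on the corresponding summand.

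The main technical obstacle I anticipate is bookkeeping rather than conceptual: making rigorous the interchange of the tensor product $\bigotimes_{\boldsymbol{x}\in\mathrm{E}[A\overline{A}]}$ with the direct sums $\oplus_{\mu_{\boldsymbol{x}}}$, and verifying that the isometry $\boldsymbol{I}^{\{\mu_{\boldsymbol{x}}\}}$ induced from the qudit-wise isometries onto $\widetilde{\boldsymbol{P}}^{\mu_{\boldsymbol{x}}}_{\boldsymbol{x}}\mathfrak{e}_{\boldsymbol{x}}$ indeed embeds $\mathcal{E}^{\{\mu_{\boldsymbol{x}}\}}_a\otimes\mathcal{E}^{\{\mu_{\boldsymbol{x}}\}}_{\overline{a}}$ as orthogonal and complete subspaces of $\mathcal{E}$, so that the operator identifications used above are well-defined. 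The orthogonality of the summands follows from $\widetilde{\boldsymbol{P}}^{\mu_{\boldsymbol{x}}}_{\boldsymbol{x}}\widetilde{\boldsymbol{P}}^{\mu'_{\boldsymbol{x}}}_{\boldsymbol{x}}=\delta_{\mu_{\boldsymbol{x}}\mu'_{\boldsymbol{x}}}\widetilde{\boldsymbol{P}}^{\mu_{\boldsymbol{x}}}_{\boldsymbol{x}}$ applied qudit-by-qudit, and completeness from $\sum_{\mu_{\boldsymbol{x}}}\widetilde{\boldsymbol{P}}^{\mu_{\boldsymbol{x}}}_{\boldsymbol{x}}=\mathds{1}_{\mathfrak{e}_{\boldsymbol{x}}}$; once these two are stated cleanly, the proposition reduces to the per-qudit structure theorem applied factor-wise, and the three claims of Prop.~\ref{declprop} follow in parallel.
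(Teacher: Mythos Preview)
Your approach is essentially the same as the paper's: start from Prop.~\ref{sova}, take a pure tensor in $(\otimes_{\boldsymbol{x}\in\mathrm{W}[A]}\mathbf{L}(\mathfrak{e}_{\boldsymbol{x}}))\otimes(\otimes_{\boldsymbol{x}\in\mathrm{E}[A\overline{A}]}\boldsymbol{\mathcal{M}}_a(\boldsymbol{x}))$, expand each entangling-surface factor via the per-qudit decomposition $\widetilde{\boldsymbol{O}}_{\boldsymbol{x}}=\sum_{\mu_{\boldsymbol{x}}}\widetilde{\boldsymbol{O}}^{\mu_{\boldsymbol{x}}}_{\boldsymbol{x}a}\otimes\mathds{1}^{\mu_{\boldsymbol{x}}}_{\boldsymbol{x}\overline{a}}$, distribute, and regroup. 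The one substantive difference is how the reverse inclusion is handled. You argue directly that the $\widetilde{\boldsymbol{O}}_{\boldsymbol{x}a}^{\mu_{\boldsymbol{x}}}$ can be chosen freely, so every $\sum_{\{\mu_{\boldsymbol{x}}\}}\widetilde{\boldsymbol{O}}^{\{\mu_{\boldsymbol{x}}\}}_a\otimes\mathds{1}^{\{\mu_{\boldsymbol{x}}\}}_{\overline{a}}$ is reached; this is correct in spirit but as written glosses over the fact that a single pure tensor produces \emph{correlated} blocks $\widetilde{\boldsymbol{O}}^{\{\mu_{\boldsymbol{x}}\}}_a$ across different $\{\mu_{\boldsymbol{x}}\}$ (same $\mathrm{W}[A]$-factors, and each $\widetilde{\boldsymbol{O}}^{\mu_{\boldsymbol{x}}}_{\boldsymbol{x}a}$ comes from one fixed $\widetilde{\boldsymbol{O}}_{\boldsymbol{x}}$), so achieving independent blocks requires an extra step using the sector projections. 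The paper sidesteps this entirely with a commutant sandwich: it shows only the inclusions $R^+\mathcal{M}_A R\subset\{\sum\widetilde{\boldsymbol{O}}^{\{\mu_{\boldsymbol{x}}\}}_a\otimes\mathds{1}^{\{\mu_{\boldsymbol{x}}\}}_{\overline{a}}\}$ and $R^+\mathcal{M}_{\overline{A}} R\subset\{\sum\mathds{1}^{\{\mu_{\boldsymbol{x}}\}}_a\otimes\widetilde{\boldsymbol{O}}^{\{\mu_{\boldsymbol{x}}\}}_{\overline{a}}\}$, then observes that both the left pair and the right pair are mutual commutants, forcing equality. This is the same trick already used in the proofs of Eq.~\eqref{ssc} and Prop.~\ref{sova}, and it buys you the reverse inclusion for free without any sector-by-sector bookkeeping.
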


Note that the specification of the decomposition (Eq.~\ref{decl5}) and the above proposition is general for holographic code within the formalism of OAQEC, since the splits in the entangling surface is an intrinsic structure (see Sec.~\ref{sdew}).

The decomposition given in Eq.~\ref{decl5} enables us to explicitly compute the bulk term for any state $\widetilde{\rho}$, since it relates the $\{\ket*{\boldsymbol{B}^{\{\mu_{\boldsymbol{x}}\}}_{a,\boldsymbol{n}}}\otimes\ket*{\boldsymbol{B}^{\{\mu_{\boldsymbol{x}}\}}_{\overline{a},\boldsymbol{n}'}}\}$ basis that is compatible with the decomposition to the boundary code basis $\{\ket{\widetilde{\varphi}_n}\}$ that are used to define the model (see Eq.~\ref{cbs1}). Indeed, the following relation (based on Eq.~\ref{decl4})
\begin{align}\label{area0}
\begin{split}
&\ket*{\boldsymbol{B}^{\{\mu_{\boldsymbol{x}}\}}_{a,\boldsymbol{n}}}\otimes\ket*{\boldsymbol{B}^{\{\mu_{\boldsymbol{x}}\}}_{\overline{a},\boldsymbol{n}'}}\\
&=(\otimes_{\boldsymbol{x}\in\mathrm{W}[A]}\ket{\boldsymbol{\beta}_{\boldsymbol{x}}})\otimes(\otimes_{\boldsymbol{x}\in\mathrm{E}[A\overline{A}]}\ket{\boldsymbol{\beta}^{\mu_{\boldsymbol{x}}}_{\boldsymbol{x}a}})\\
&\otimes(\otimes_{\boldsymbol{x}\in\mathrm{E}[A\overline{A}]}\ket{\boldsymbol{\beta}^{\mu_{\boldsymbol{x}}}_{\boldsymbol{x}\overline{a}}})\otimes(\otimes_{\boldsymbol{x}\in\mathrm{W}[\overline{A}]}\ket{\boldsymbol{\beta}_{\boldsymbol{x}}})\\
&=(\otimes_{\boldsymbol{x}\in\mathrm{W}[A]}\ket{\boldsymbol{\beta}_{\boldsymbol{x}}})\otimes[\otimes_{\boldsymbol{x}\in\mathrm{E}[A\overline{A}]}(\ket{\boldsymbol{\beta}^{\mu_{\boldsymbol{x}}}_{\boldsymbol{x}a}}\otimes\ket{\boldsymbol{\beta}^{\mu_{\boldsymbol{x}}}_{\boldsymbol{x}\overline{a}}})]\\
&\otimes(\otimes_{\boldsymbol{x}\in\mathrm{W}[\overline{A}]}\ket{\boldsymbol{\beta}_{\boldsymbol{x}}}),
\end{split}
\end{align}
can be further clarified in terms of the details of the splits as listed in Sec.~\ref{exsplitsec} and App.~\ref{posplit}, i.e., on whether $\ket{\boldsymbol{\beta}^{\mu_{\boldsymbol{x}}}_{\boldsymbol{x}a}}\otimes\ket{\boldsymbol{\beta}^{\mu_{\boldsymbol{x}}}_{\boldsymbol{x}\overline{a}}}\in\mathfrak{e}_{\boldsymbol{x}a}^{\mu_{\boldsymbol{x}}}\otimes\mathfrak{e}_{\boldsymbol{x}\overline{a}}^{\mu_{\boldsymbol{x}}}$ is simply identified as a bulk local basis state $\ket{\boldsymbol{\beta}_{\boldsymbol{x}}}\in\mathfrak{e}_{\boldsymbol{x}}$ or identified as a linear sum of the basis states.

According to App.~\ref{posplit}, for type-2 and type-3 splits, $\ket{\boldsymbol{\beta}^{\mu_{\boldsymbol{x}}}_{\boldsymbol{x}a}}\otimes\ket{\boldsymbol{\beta}^{\mu_{\boldsymbol{x}}}_{\boldsymbol{x}\overline{a}}}$ is simply identified as a bulk local basis state $\ket{\boldsymbol{\beta}_{\boldsymbol{x}}}$, while for type-1 split, $\ket{\boldsymbol{\beta}^{\mu_{\boldsymbol{x}}}_{\boldsymbol{x}a}}\otimes\ket{\boldsymbol{\beta}^{\mu_{\boldsymbol{x}}}_{\boldsymbol{x}\overline{a}}}$ is identified as $(1/\sqrt{2})(\ket{\boldsymbol{\beta}_{\boldsymbol{x}}}\pm\ket{\boldsymbol{\beta}'_{\boldsymbol{x}}})$. Then we have
\begin{align}\label{area1}
\begin{split}
&\ket*{\boldsymbol{B}^{\{\mu_{\boldsymbol{x}}\}}_{a,\boldsymbol{n}}}\otimes\ket*{\boldsymbol{B}^{\{\mu_{\boldsymbol{x}}\}}_{\overline{a},\boldsymbol{n}'}}\\
&=2^{-\frac{N_1}{2}}\ket{\boldsymbol{\beta}_{\boldsymbol{1}}\cdots(\boldsymbol{\beta}_{\boldsymbol{x}}\pm\boldsymbol{\beta}'_{\boldsymbol{x}})\cdots(\boldsymbol{\beta}_{\boldsymbol{x}'}\pm\boldsymbol{\beta}'_{\boldsymbol{x}'})\cdots},\\
&R\ket*{\boldsymbol{B}^{\{\mu_{\boldsymbol{x}}\}}_{a,\boldsymbol{n}} \boldsymbol{B}^{\{\mu_{\boldsymbol{x}}\}}_{\overline{a},\boldsymbol{n}'}}\\
&=2^{-\frac{N_1}{2}}R(\ket{\boldsymbol{\beta}_{\boldsymbol{1}}\cdots\boldsymbol{\beta}_{\boldsymbol{x}}\cdots\boldsymbol{\beta}_{\boldsymbol{x}'}\cdots}\\
&\pm \ket{\boldsymbol{\beta}_{\boldsymbol{1}}\cdots\boldsymbol{\beta}_{\boldsymbol{x}}\cdots\boldsymbol{\beta}'_{\boldsymbol{x}'}\cdots}\pm\ket{\boldsymbol{\beta}_{\boldsymbol{1}}\cdots\boldsymbol{\beta}'_{\boldsymbol{x}}\cdots\boldsymbol{\beta}_{\boldsymbol{x}'}\cdots}\\
&\pm\ket{\boldsymbol{\beta}_{\boldsymbol{1}}\cdots\boldsymbol{\beta}'_{\boldsymbol{x}}\cdots\boldsymbol{\beta}'_{\boldsymbol{x}'}\cdots}\pm\cdots)\\
&=2^{-\frac{N_1}{2}}(\ket{\widetilde{\varphi}_n}\pm\ket{\widetilde{\varphi}_{n'}}\pm\ket{\widetilde{\varphi}_{n''}}\pm\cdots).
\end{split}
\end{align}
It follows that if the expansion coefficients of $\widetilde{\rho}$ in terms of $\dyad{\widetilde{\varphi}_n}{\widetilde{\varphi}_{n'}}$ can be specified, through Eq.~\ref{area1} we can specify how $\widetilde{\boldsymbol{\rho}}=R^+\widetilde{\rho}R$ is expanded by the $\ket*{\boldsymbol{B}^{\{\mu_{\boldsymbol{x}}\}}_{a,\boldsymbol{n}}}\otimes\ket*{\boldsymbol{B}^{\{\mu_{\boldsymbol{x}}\}}_{\overline{a},\boldsymbol{n}'}}$ states  and hence compute the bulk terms according to Eq.~\ref{rtformula1}.

\subsubsection{Extracting the area term: entanglement patterns revisited}\label{rtformula}
Now, according to Eq.~\ref{decl5} and Prop.~\ref{declprop}, we show how to specify the desired decompositions for the boundary complementary subsystem $\mathcal{H}_A$ and $\mathcal{H}_{\overline{A}}$ respectively, and also the corresponding family of isometric linear maps (see Eq.~\ref{decl2}), i.e.,
\begin{align}\label{area2}
\begin{split}
&\boldsymbol{U}^{\{\mu_{\boldsymbol{x}}\}}_A: \mathcal{E}^{\{\mu_{\boldsymbol{x}}\}}_a\otimes\mathcal{F}^{\{\mu_{\boldsymbol{x}}\}}_a\rightarrow\mathcal{H}_A,\\
&\boldsymbol{U}^{\{\mu_{\boldsymbol{x}}\}}_A(\ket*{\boldsymbol{B}^{\{\mu_{\boldsymbol{x}}\}}_{a,\boldsymbol{n}}}\otimes\ket*{\boldsymbol{Z}^{\{\mu_{\boldsymbol{x}}\}}_{a,\boldsymbol{l}}})=\ket*{\widetilde{\varphi}^{\{\mu_{\boldsymbol{x}}\}}_{A, \boldsymbol{n}\boldsymbol{l}}},\\
&\boldsymbol{U}^{\{\mu_{\boldsymbol{x}}\}}_{\overline{A}}: \mathcal{E}^{\{\mu_{\boldsymbol{x}}\}}_{\overline{a}}\otimes\mathcal{F}^{\{\mu_{\boldsymbol{x}}\}}_{\overline{a}}\rightarrow\mathcal{H}_{\overline{A}},\\
&\boldsymbol{U}^{\{\mu_{\boldsymbol{x}}\}}_{\overline{A}}(\ket*{\boldsymbol{B}^{\{\mu_{\boldsymbol{x}}\}}_{{\overline{a}},\boldsymbol{n}'}}\otimes\ket*{\boldsymbol{Z}^{\{\mu_{\boldsymbol{x}}\}}_{{\overline{a}},\boldsymbol{l}'}})=\ket*{\widetilde{\varphi}^{\{\mu_{\boldsymbol{x}}\}}_{{\overline{A}}, \boldsymbol{n}'\boldsymbol{l}'}},\\
&\boldsymbol{J}^{\{\mu_{\boldsymbol{x}}\}}(\ket*{\boldsymbol{B}^{\{\mu_{\boldsymbol{x}}\}}_{a,\boldsymbol{n}}}\otimes\ket*{\boldsymbol{B}^{\{\mu_{\boldsymbol{x}}\}}_{\overline{a},\boldsymbol{n}'}})\\
&=\ket*{\boldsymbol{B}^{\{\mu_{\boldsymbol{x}}\}}_{a,\boldsymbol{n}}}\otimes\ket*{\chi^{\{\mu_{\boldsymbol{x}}\}}}\otimes\ket*{\boldsymbol{B}^{\{\mu_{\boldsymbol{x}}\}}_{\overline{a},\boldsymbol{n}'}},
\end{split}
\end{align}
where $\{\ket*{\boldsymbol{Z}^{\{\mu_{\boldsymbol{x}}\}}_{a,\boldsymbol{l}}}\}$ and $\{\ket*{\boldsymbol{Z}^{\{\mu_{\boldsymbol{x}}\}}_{{\overline{a}},\boldsymbol{l}'}}\}$ are bases of $\mathcal{F}^{\{\mu_{\boldsymbol{x}}\}}_a$ and $\mathcal{F}^{\{\mu_{\boldsymbol{x}}\}}_{\overline{a}}$ respectively. We will prove that the way we specify the structures will ensure Eq.~\ref{diagram10} (\ref{diagram11}), i.e. $R\boldsymbol{I}^{\{\mu_{\boldsymbol{x}}\}}=(\boldsymbol{U}^{\{\mu_{\boldsymbol{x}}\}}_A\otimes\boldsymbol{U}^{\{\mu_{\boldsymbol{x}}\}}_{\overline{A}})\boldsymbol{J}^{\{\mu_{\boldsymbol{x}}\}}$ so that the $\{\ket*{\chi^{\{\mu_{\boldsymbol{x}}\}}}\}$ states will give rise to the area term.

As discussed previously, these decompositions are expected to represent how those factors $\mathcal{E}^{\{\mu_{\boldsymbol{x}}\}}_a$ and $\mathcal{E}^{\{\mu_{\boldsymbol{x}}\}}_{\overline{a}}$ can be read off the complementary boundary subregions. Hence, the structures of such decompositions in a holographic code should reflect general properties of the boundary entanglement with respect to how the bulk structure is emergent.

The general idea for specifying these structures can be described as follows with backward reasoning. First, if $\boldsymbol{U}^{\{\mu_{\boldsymbol{x}}\}}_A$ maps $\ket*{\boldsymbol{B}^{\{\mu_{\boldsymbol{x}}\}}_{a,\boldsymbol{n}}}\otimes\ket*{\boldsymbol{Z}^{\{\mu_{\boldsymbol{x}}\}}_{a,\boldsymbol{l}}}$ to some partial boundary state $\ket*{\widetilde{\varphi}^{\{\mu_{\boldsymbol{x}}\}}_{A, \boldsymbol{n}\boldsymbol{l}}}$ on subregion $A$, the basis state $\ket*{\boldsymbol{Z}^{\{\mu_{\boldsymbol{x}}\}}_{a,\boldsymbol{l}}}\in\mathcal{F}^{\{\mu_{\boldsymbol{x}}\}}_a$ can be understood as an auxiliary state that specify the partial boundary states together with $\ket*{\boldsymbol{B}^{\{\mu_{\boldsymbol{x}}\}}_{a,\boldsymbol{n}}}$, and so are $\ket*{\boldsymbol{Z}^{\{\mu_{\boldsymbol{x}}\}}_{\overline{a},\boldsymbol{l}'}}\in\mathcal{F}^{\{\mu_{\boldsymbol{x}}\}}_{\overline{a}}$ and $\ket*{\widetilde{\varphi}^{\{\mu_{\boldsymbol{x}}\}}_{\overline{A}, \boldsymbol{n}'\boldsymbol{l}'}}$ for the boundary subregion $\overline{A}$. Then, if the decompositions and isometries satisfy $R\ket*{\boldsymbol{B}^{\{\mu_{\boldsymbol{x}}\}}_{a,\boldsymbol{n}}\boldsymbol{B}^{\{\mu_{\boldsymbol{x}}\}}_{\overline{a},\boldsymbol{n}'}}=(\boldsymbol{U}^{\{\mu_{\boldsymbol{x}}\}}_A\otimes\boldsymbol{U}^{\{\mu_{\boldsymbol{x}}\}}_{\overline{A}})(\ket*{\boldsymbol{B}^{\{\mu_{\boldsymbol{x}}\}}_{a,\boldsymbol{n}}}\otimes\ket*{\chi^{\{\mu_{\boldsymbol{x}}\}}}\otimes\ket*{\boldsymbol{B}^{\{\mu_{\boldsymbol{x}}\}}_{\overline{a},\boldsymbol{n}'}})$, the total effect of $\boldsymbol{U}^{\{\mu_{\boldsymbol{x}}\}}_A$, $\boldsymbol{U}^{\{\mu_{\boldsymbol{x}}\}}_{\overline{A}}$ and the entangled $\ket*{\chi^{\{\mu_{\boldsymbol{x}}\}}}=\sum_{\boldsymbol{l}\boldsymbol{l}'}c_{\boldsymbol{l}\boldsymbol{l}'}\ket*{\boldsymbol{Z}^{\{\mu_{\boldsymbol{x}}\}}_{a,\boldsymbol{l}}}\otimes\ket*{\boldsymbol{Z}^{\{\mu_{\boldsymbol{x}}\}}_{\overline{a},\boldsymbol{l}'}}$ are expected to realize the meaning of ``sewing'' the partial boundary states $\{\ket*{\widetilde{\varphi}^{\{\mu_{\boldsymbol{x}}\}}_{A, \boldsymbol{n}\boldsymbol{l}}}\}$ and $\{\ket*{\widetilde{\varphi}^{\{\mu_{\boldsymbol{x}}\}}_{\overline{A}, \boldsymbol{n}'\boldsymbol{l}'}}\}$ into the full boundary code state $R\ket*{\boldsymbol{B}^{\{\mu_{\boldsymbol{x}}\}}_{a,\boldsymbol{n}}\boldsymbol{B}^{\{\mu_{\boldsymbol{x}}\}}_{\overline{a},\boldsymbol{n}'}}=2^{-\frac{N_1}{2}}(\ket{\widetilde{\varphi}_n}\pm\ket{\widetilde{\varphi}_{n'}}\pm\ket{\widetilde{\varphi}_{n''}}\pm\cdots)$.

To see how two of the partial boundary states $\ket*{\widetilde{\varphi}^{\{\mu_{\boldsymbol{x}}\}}_{A, \boldsymbol{n}\boldsymbol{l}}}$ and $\ket*{\widetilde{\varphi}^{\{\mu_{\boldsymbol{x}}\}}_{\overline{A}, \boldsymbol{n}'\boldsymbol{l}'}}$ can contribute to the ``sewing'' in our model, we switch to the alternative geometric setting and recall that the boundary code basis states $\ket{\widetilde{\varphi}_n}$s are equal-weight sums of the physical qudit-product-states $\ket{\psi_m}$s whose properties underlie the entanglement patterns of the $\ket{\widetilde{\varphi}_n}$s. According to the boundary bipartition, e.g., the one shown in Fig.~\ref{21a}, we can tear $\ket{\psi_m}$ into halves $\ket*{\psi_{A,m_A}}$ and $\ket*{\psi_{\overline{A},m_{\overline{A}}}}$ as qudit-product-states on boundary subregions $A$ and $\overline{A}$ respectively, as shown in Fig.~\ref{21b}. Then, viewing $\ket*{\widetilde{\varphi}^{\{\mu_{\boldsymbol{x}}\}}_{A, \boldsymbol{n}\boldsymbol{l}}}$ and $\ket*{\widetilde{\varphi}^{\{\mu_{\boldsymbol{x}}\}}_{\overline{A}, \boldsymbol{n}'\boldsymbol{l}'}}$ also as sums of certain $\{\ket*{\psi_{A,m_A}}\}$ and $\{\ket*{\psi_{\overline{A},m_{\overline{A}}}}\}$, the ``sewing'' is essentially matching the $\ket*{\psi_{A,m_A}}$s and the $\ket*{\psi_{\overline{A},m_{\overline{A}}}}$s and concatenating the matched into the $\ket{\psi_m}$s that participate in the expansion of $R\ket*{\boldsymbol{B}^{\{\mu_{\boldsymbol{x}}\}}_{a,\boldsymbol{n}}\boldsymbol{B}^{\{\mu_{\boldsymbol{x}}\}}_{\overline{a},\boldsymbol{n}'}}$.

\begin{center}
\begin{figure}[ht]
\centering
    \includegraphics[width=8.5cm]{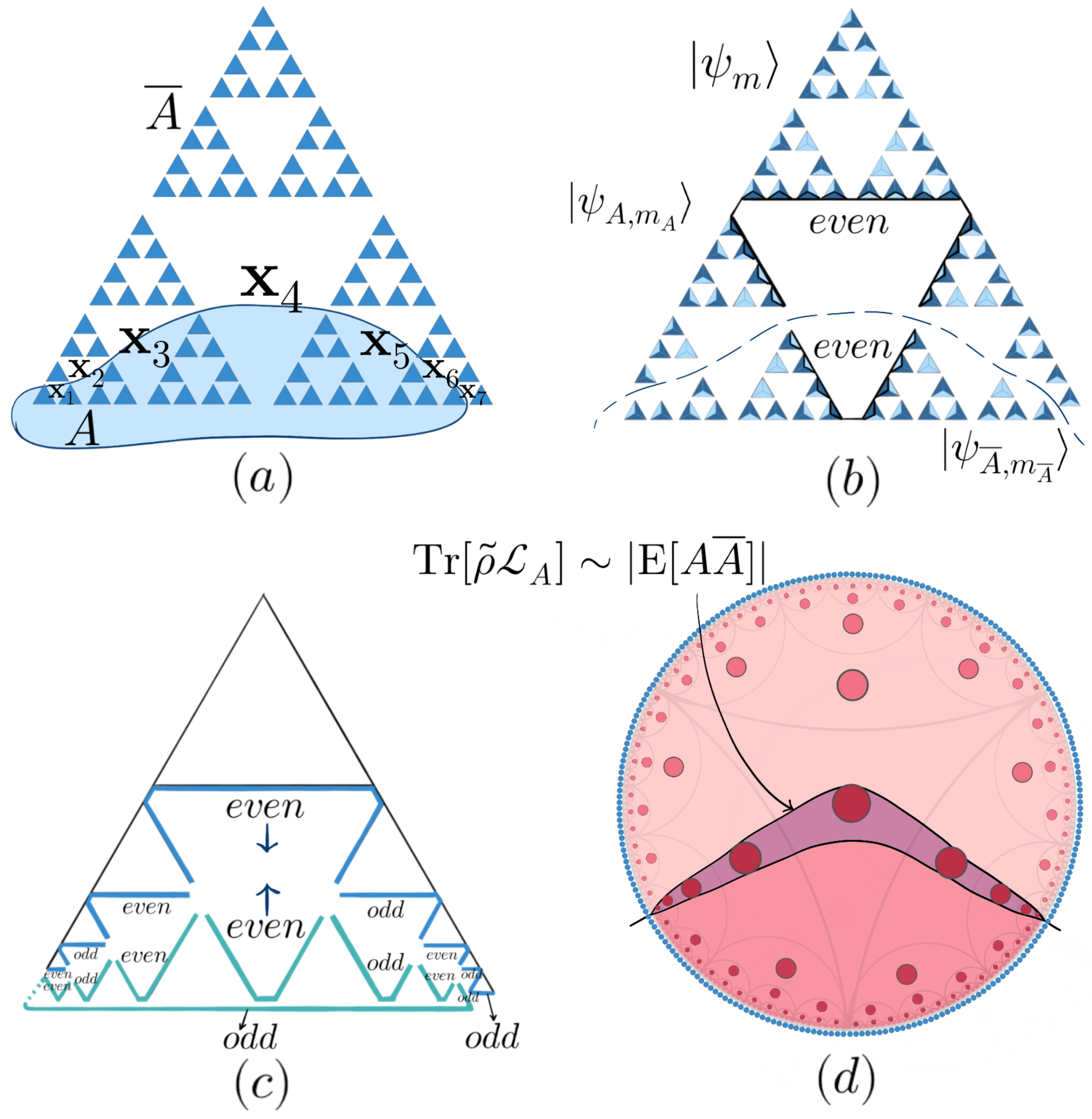}   
\phantomsubfloat{\label{21a}}\phantomsubfloat{\label{21b}}
\phantomsubfloat{\label{21c}}\phantomsubfloat{\label{21d}}
\caption{(a) An example of boundary bipartition $A\overline{A}$ as present in the alternative geometry. (b) A $\ket{\psi_m}$ state is torn into halves by the bipartition, and so is the parity of number of dark sides on the loop surrounding a hole. (c) A necessary condition for a $\ket*{\psi_{A,m_A}}$ and a $\ket*{\psi_{\overline{A},m_{\overline{A}}}}$ to be matched and concatenated into a $\ket{\psi_m}$ is that the parities of the number of dark sides on the torn halves (of the loop) must match. (d) The area term scales linearly on the size of the entangling surface.}
\label{fig21}
\end{figure}
\end{center}

In this perspective, in satisfying $R\ket*{\boldsymbol{B}^{\{\mu_{\boldsymbol{x}}\}}_{a,\boldsymbol{n}}\boldsymbol{B}^{\{\mu_{\boldsymbol{x}}\}}_{\overline{a},\boldsymbol{n}'}}=(\boldsymbol{U}^{\{\mu_{\boldsymbol{x}}\}}_A\otimes\boldsymbol{U}^{\{\mu_{\boldsymbol{x}}\}}_{\overline{A}})(\ket*{\boldsymbol{B}^{\{\mu_{\boldsymbol{x}}\}}_{a,\boldsymbol{n}}}\otimes\ket*{\chi^{\{\mu_{\boldsymbol{x}}\}}}\otimes\ket*{\boldsymbol{B}^{\{\mu_{\boldsymbol{x}}\}}_{\overline{a},\boldsymbol{n}'}})$, the role of the pairs of states $(\ket*{\boldsymbol{B}^{\{\mu_{\boldsymbol{x}}\}}_{a,\boldsymbol{n}}},\ket*{\boldsymbol{Z}^{\{\mu_{\boldsymbol{x}}\}}_{a,\boldsymbol{l}}})$ and $(\ket*{\boldsymbol{B}^{\{\mu_{\boldsymbol{x}}\}}_{\overline{a},\boldsymbol{n}'}},\ket*{\boldsymbol{Z}^{\{\mu_{\boldsymbol{x}}\}}_{\overline{a},\boldsymbol{l}'}})$ is to qualify that $\ket*{\psi_{A,m_A}}$s and $\ket*{\psi_{\overline{A},m_{\overline{A}}}}$s, upon being matched, are concatenated into the expected $\ket{\psi_m}$s (contributing to the expansion of $R\ket*{\boldsymbol{B}^{\{\mu_{\boldsymbol{x}}\}}_{a,\boldsymbol{n}}\boldsymbol{B}^{\{\mu_{\boldsymbol{x}}\}}_{\overline{a},\boldsymbol{n}'}}$), while the entanglement in $\ket*{\chi^{\{\mu_{\boldsymbol{x}}\}}}=\sum_{\boldsymbol{l}\boldsymbol{l}'}c_{\boldsymbol{l}\boldsymbol{l}'}\ket*{\boldsymbol{Z}^{\{\mu_{\boldsymbol{x}}\}}_{a,\boldsymbol{l}}}\otimes\ket*{\boldsymbol{Z}^{\{\mu_{\boldsymbol{x}}\}}_{\overline{a},\boldsymbol{l}'}}$ are in charge of judging which pairs of $\ket*{\psi_{A,m_A}}$ and $\ket*{\psi_{\overline{A},m_{\overline{A}}}}$ can be matched.

In App.~\ref{portformula}, we explicitly realize the above general idea in a representative and concrete example of connected boundary bipartition. Based on the example, the detailed process of specifying the decompositions and isometries in Eq.~\ref{area2}, and also the proof for them to satisfy Eq.~\ref{diagram10} (\ref{diagram11}) can be straightforwardly generalized to other connected boundary bipartitions and also disconnected bipartitions relevant for uberholography.

As shown in App.~\ref{portformula}, we can define
\begin{align}
\begin{split}
&\mathcal{F}^{\{\mu_{\boldsymbol{x}}\}}_a=\mathfrak{f}_{\boldsymbol{x}_1a}\otimes\mathfrak{f}_{\boldsymbol{x}_2a}\otimes\cdots,\\
&\mathfrak{f}_{\boldsymbol{x}a}=\mathbb{C}^2, \quad \mathfrak{f}_{\boldsymbol{x}a}\ni\ket{\zeta_{\boldsymbol{x}a}}=\ket{even},\ket{odd},\\
&\ket*{\boldsymbol{Z}^{\{\mu_{\boldsymbol{x}}\}}_{a,\boldsymbol{l}}}=\ket{\zeta_{\boldsymbol{x}_1a}\zeta_{\boldsymbol{x}_2a}\cdots},\\
&\mathcal{F}^{\{\mu_{\boldsymbol{x}}\}}_{\overline{a}}=\mathfrak{f}_{\boldsymbol{x}_1{\overline{a}}}\otimes\mathfrak{f}_{\boldsymbol{x}_2{\overline{a}}}\otimes\cdots,\\
&\mathfrak{f}_{\boldsymbol{x}{\overline{a}}}=\mathbb{C}^2, \quad \mathfrak{f}_{\boldsymbol{x}{\overline{a}}}\ni\ket{\zeta_{\boldsymbol{x}\overline{a}}}=\ket{even},\ket{odd},\\
&\ket*{\boldsymbol{Z}^{\{\mu_{\boldsymbol{x}}\}}_{{\overline{a}},\boldsymbol{l}'}}=\ket{\zeta_{\boldsymbol{x}_1\overline{a}}\zeta_{\boldsymbol{x}_2\overline{a}}\cdots},
\end{split}
\end{align}
where the $\boldsymbol{x}_1,\boldsymbol{x}_2,\ldots,\boldsymbol{x}_{N_E}$ basically agrees with the bulk qudits in the entangling surface (see Fig.~\ref{21a}), and $N_E$ scales linearly on the size of the entangling surface.

To see the meaning of each $\mathfrak{f}_{\boldsymbol{x}a}$ or $\mathfrak{f}_{\boldsymbol{x}\bar a}$, we note that the boundary bipartition $A\overline{A}$ tears the alternative geometry and also the holes corresponding to the entangling surface into halves (see Fig.\ref{21b} and \ref{21c}). And we recall that the characterizing property of the $\ket{\psi_m}$ states is that the parity of the number of dark sides on each loop surrounding each hole $\boldsymbol{x}$ (also on each of the three laterals of the lattice) is even, which underlies the description of the entanglement patterns of the boundary code states $\ket{\widetilde{\varphi}_n}$s (see Sec.~\ref{pattern1}). Hence, a necessary condition for judging $\ket*{\psi_{A,m_A}}$ and $\ket*{\psi_{\overline{A},m_{\overline{A}}}}$ to be matched and concatenated into a $\ket{\psi_m}$ state is that the parities on the torn halves of each hole match as $(even,even)$ or $(odd,odd)$, as shown in Fig.~\ref{21c}. In view of this, we index the basis states $\ket{\zeta_{\boldsymbol{x}a}}$ or $\ket{\zeta_{\boldsymbol{x}\bar a}}$ in each $\mathfrak{f}_{\boldsymbol{x}a}$ or $\mathfrak{f}_{\boldsymbol{x}\bar a}$ simply by the parity so that the basis states $\ket*{\boldsymbol{Z}^{\{\mu_{\boldsymbol{x}}\}}_{a,\boldsymbol{l}}}$ and $\ket*{\boldsymbol{Z}^{\{\mu_{\boldsymbol{x}}\}}_{{\overline{a}},\boldsymbol{l}'}}$ are simply indexed by the configurations of the parities on the torn halves on $A$ and $\overline{A}$ respectively.

Then, according to App.~\ref{portformula}, upon the detailed definition of the decompositions $\boldsymbol{U}^{\{\mu_{\boldsymbol{x}}\}}_A$ and $\boldsymbol{U}^{\{\mu_{\boldsymbol{x}}\}}_{\overline{A}}$ following the above general idea, if the isometries $\{\boldsymbol{J}^{\{\mu_{\boldsymbol{x}}\}}\}$ are simply specified by defining the ``judge'' as
\begin{align}
\begin{split}
&\ket*{\chi^{\{\mu_{\boldsymbol{x}}\}}}=\frac{1}{\sqrt{2^{N_E}}}\sum_{\boldsymbol{l}}\ket*{\boldsymbol{Z}^{\{\mu_{\boldsymbol{x}}\}}_{a,\boldsymbol{l}}}\otimes\ket*{\boldsymbol{Z}^{\{\mu_{\boldsymbol{x}}\}}_{\bar a,\boldsymbol{l}}}\\
&=\frac{1}{\sqrt{2^{N_E}}}(\ket{even,even,\ldots}\otimes\ket{even,even,\ldots}\\
&+\ket{even,odd,\ldots}\otimes\ket{even,odd,\ldots}\\
&+\cdots+\ket{odd,odd,\ldots}\otimes\ket{odd,odd,\ldots}),
\end{split}
\end{align} 
where the sum goes through all the $2^{N_E}$ possibilities, then we can prove Eq.~\ref{diagram10} (\ref{diagram11}) for these defined structures.

It follows that we have
\begin{align}
\begin{split}
&\mathrm{S}(\chi^{\{\mu_{\boldsymbol{x}}\}}_a)=N_E\mathrm{log}2,\\
&\mathrm{Z}(\mathcal{M}_A)\ni\mathcal{L}_A=\sum_{\boldsymbol{\mu}}N_E\mathrm{log}2R\widetilde{\boldsymbol{P}}^{\boldsymbol{\mu}}R^+\\
&=N_E\mathrm{log}2\mathds{1}_{\mathcal{H}_{\mathrm{code}}},
\end{split}
\end{align}
i.e., the area term for any $\widetilde{\rho}$ scales linearly on the size of the entangling surface (see Fig.~\ref{21d}). It can be checked by more examples for subregion duality in our code to see that this number also scales linearly on the number of tiles that the minimal surface crosses or passes by.

\subsection{Remarks on the exact model}\label{remark3}
According to all the arguments above and in the previous sections, we have formally and comprehensively demonstrated the HQEC characteristics described in Sec.~\ref{hqecc} for our code, in the forms of both theorems and descriptions combined with pictorial illustrations. We believe that these demonstrations sufficiently confirms the feasibility and the advantage of our proposed novel approach for constructing exact models of holographic code.

While the demonstrations ensure that our code can capture certain essence of the hypothesized quantum-information interpretation of the AdS/CFT correspondence, our code might be regarded as a minimalistic model of HQEC in the sense that our construction is to be further developed for more complex and hence more optimal models in studying AdS/CFT. We briefly comment on such development from the following three aspects.

Firstly, as we have shown, the area operator of the RT formula in our code appears to be a constant operator, though the bulk term confirms the essential difference from the subsystem-code formalism. While it is an attempt to realize a non-constant area operator in an exact holographic code constructed within our approach, as pointed out in Ref.~\cite{cao2023}, this might be a challenge not just a matter of model construction, but require further investigations on the hypothetical HQEC characteristics themselves. That is because the existence of a non-constant area operator seems conflicting to certain properties also expected in the construction of exact models~\cite{cao2023}. Alternatively, construction of approximate holographic code based on our model might be a possible route towards realizing the physics of bulk gravity with subleading corrections~\cite{cao2021,cao2023}. Indeed, the explicit boundary state structures of our model might facilitate investigations in this direction.

Second, while the area term in our code scales linearly on the size of the entangling surface, there are certain cases of boundary bipartitions for which the entangling surface contains constant number of bulk qudits and is inconsistent with the expected meaning (see Fig.~\ref{22a}). According to our discussion in Sec.~\ref{remark1}, this implies that the alternative geometry and hence the bulk discretization in our construction, due to its simplicity, is barely sufficient to underlie the formal demonstration of the HQEC characteristics, but inadequate to ideally describe the AdS bulk geometry as cut off for a finite system.

Instead of searching for more complex alternative geometry, which should guarantee more effective modeling of the AdS bulk geometry, a simple extension of our code can already significantly improve the insufficiency. That is, we can consider a double-layer stacking of our code as shown in Fig.~\ref{22b} and \ref{22c}, where the upper copy is rotated for $\pi/3$. In the stacked code we can view each boundary qudit as $\mathbb{C}^8$ and its construction can also be viewed as lying with our approach, for which the alternative geometry is shown in Fig.~\ref{22b}. It should be easy to check in the stacked code the connected code distance, the complementary recovery, the genuine OAQEC and other characteristics. Then, as shown in Fig.~\ref{22d}, the improved entangling surface can be viewed as containing bulk qudits from bulk copies and hence appears consistent with the expected properties.

Third, the above observation, e.g., a comparison between the bulk geometry in Fig.~\ref{22a} and that in Fig.~\ref{22d}, might imply that the boundary entanglement patterns of our code, as can be embedded in a larger code, ``partially'' captures the expected global entanglement structures modeling the AdS/CFT correspondence. Conversely, our code itself might be more relevant to some ``reduced'' version of the AdS/CFT correspondence.

Indeed, there are certain qualitative similarities between the emergent bulk locality of our code and that of the $p$-adic AdS/CFT. For instances, the bulk qudit in our code can be viewed as living on the vertex of certain tree graph (comparing Fig.~\ref{22e} and \ref{22f}), i.e., the dual graph of the ideal tiling. And the entangling surfaces as shown in Fig.~\ref{21d}, \ref{22a} and \ref{22e} appear consistent with the branching structures of the tree shown in Fig.~\ref{22f}. Because the behavior of entanglement entropy for the $p$-adic AdS/CFT remains puzzling according to the current studies~\cite{bhattacharyya2018,hung2019}, much less the corresponding HQEC structures, it might be important to combine and compare the studies on the $p$-adic AdS/CFT with our code or certain extension for future studies. For example, we might extend our construction to a variant alternative geometry as shown in Fig.~\ref{22g}, which seems to underlie a bulk tree structure (in a similar way as in our code) for the $3$-adic AdS geometry without center in the infinite limit.

\onecolumngrid
\begin{center}
\begin{figure}[ht]
\centering
    \includegraphics[width=16cm]{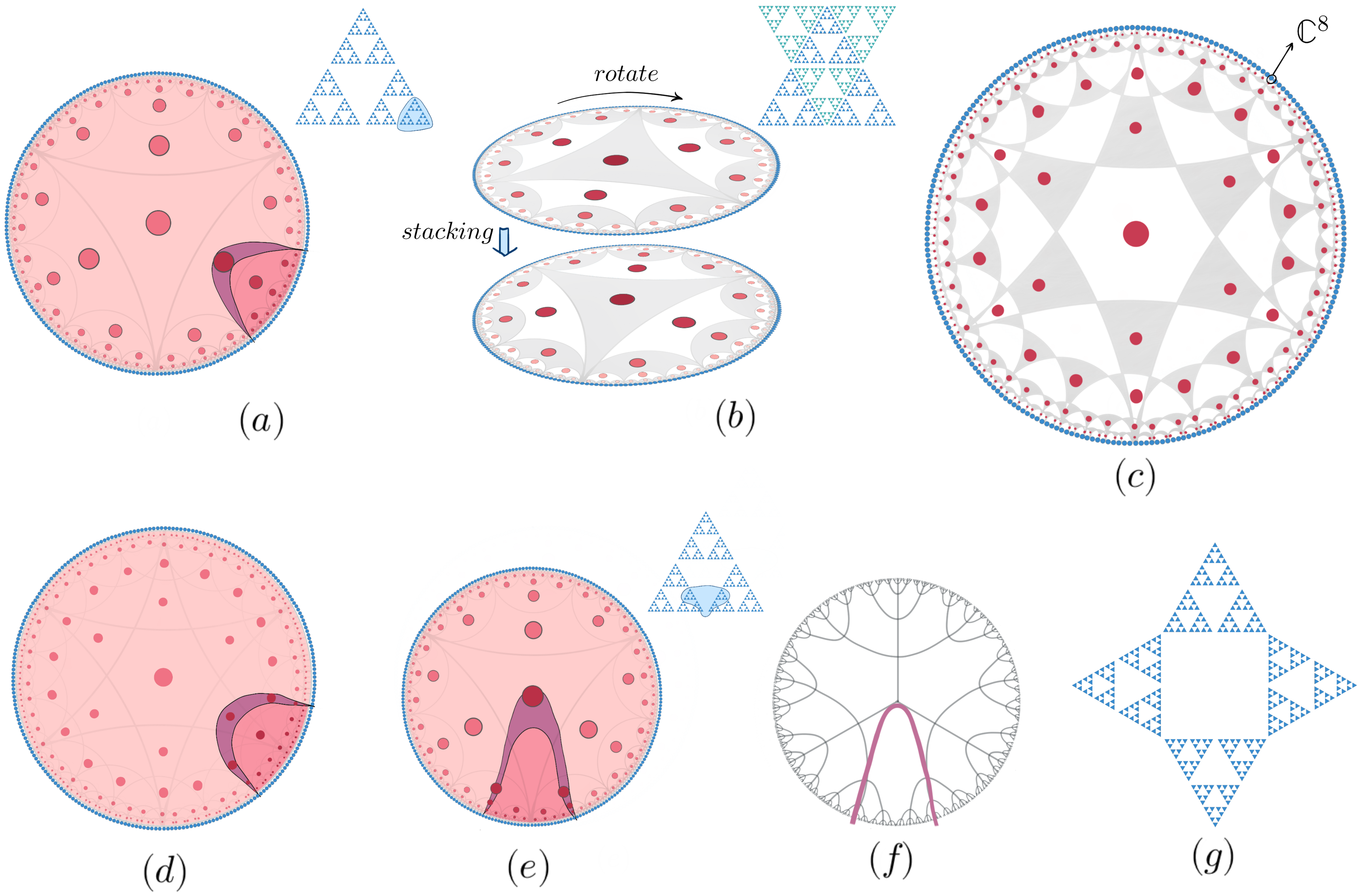}   
\phantomsubfloat{\label{22a}}\phantomsubfloat{\label{22b}}
\phantomsubfloat{\label{22c}}\phantomsubfloat{\label{22d}}
\phantomsubfloat{\label{22e}}\phantomsubfloat{\label{22f}}
\phantomsubfloat{\label{22g}}
\caption{(a) The entangling surface in the example of subregion duality includes one bulk qudit. (b) Double-layer stacking of our codes, with the upper rotated for $\pi/3$. (c) The stacked code with each boundary qudit viewed as $\mathbb{C}^8$. (d) The entangling surface in the stacked code corresponding to the same boundary bipartition as (a). The entangling surface in another example of subregion duality in (e) exhibits similarities to the geometric in a hyperbolic tree shown in (f). (g) A fractional-Hausdorff-dimension structure which might give rise to the bulk geometry of $3$-adic AdS/CFT through rearrangement.}
\label{fig22}
\end{figure}
\end{center}
\twocolumngrid

It is noticeable that our code cannot be constructed as a tree tensor-network of perfect tensors, since any two branches cannot recover the central bulk qudit (see Fig.~\ref{16i}, \ref{16j}, \ref{16k} and \ref{16l}). But our code might be related to more complicated tensor-network structures, e.g., those developed for capturing the correlation properties of $p$-adic AdS/CFT~\cite{bhattacharyya2018,hung2019}. In this possibility, it might be an entry point to unveil the potential deep relation between our approach for constructing holographic code and the tensor-network paradigm.

\section{Discussion and outlook}
With the systematic framework and the illustrative model, we have introduced and elaborated a novel approach for building exact models of holographic quantum error-correcting code (HQEC). This approach takes an opposite route to the conventional tensor-network paradigm, and explicitly exemplifies the holographic principle, i.e. how the bulk is emergent from the boundary entanglement, in the construction of an exact model. In other words, instead of contracting tensors from the bulk to the boundary, our construction starts with scalable descriptions of the boundary/physical qudits and specifies the bulk/logical degrees of freedom as emerged from the boundary entanglement patterns, based on which the hypothetical HQEC strucutre are comprehensively unfolded.

We believe that our work presents a fresh perspective for future studies and constructions of holographic code. Indeed, with the explicit structures of boundary code states, a wide range of research topics related to HQEC, e.g., the advanced topics of quantum gravity, the toy models of black holes, and the dynamics, might be studied in an analytic, and more efficient and comprehensive way. But more importantly, our work might significantly advance researches in the following directions.

Our work illustrates how the hypothetical properties of bulk reconstruction proposed for capturing the AdS/CFT correspondence can be simply unfolded from ``elementary'' descriptions of the physical qudits, which is applicable to broad contexts. Hence, for the emerging directions which attempt to realize and study HQEC within different areas, e.g., the potential quantum simulation studies of the physics related to quantum gravity, our work might have established a crucial bridge connecting the structure of HQEC to the interdisciplinary contexts, thereby substantially boosting the researches. Indeed, exploring our exact model or its extensions mentioned above in these emerging directions may already be sufficient to yield significant findings. For example, as shown in our exact model, the scalable descriptions of the boundary states can derive succinct prescription for direct quantum-circuit simulations of the boundary code states.

As illustrated in our exact model, our approach reveals explicit relationship between the description of the boundary and the characteristics of bulk operator reconstruction. According to this relationship, the exact code, as the initial model of our approach, can already formally demonstrate certain desired properties that are not yet proved in the established models. Because of this advantage, our approach is expected to play an important role in future studies of holographic code, and it might be promising to incorporate our approach and the conventional tensor-network paradigm towards the construction of more optimal models for the quantum-information studies of AdS/CFT.

It is noticeable that our arguments embody a general and systematic framework for demonstrating and studying subregion duality (including the RT formula) in the genuine formalism of operator algebra quantum error correction (OAQEC). To the best of our knowledge, such framework is rarely presented in the literature of HQEC, and hence can serve as a reference for future studies on the subleading corrections to the bulk gravity. Additionally, considering the recently significant renewed focus on OAQEC, our results might also shed light on the development of this subject in studies of quantum error correction.

It might be surprising that in demonstrating the properties of uberholography, our approach establishes a connection between HQEC and the studies of fractal or fractional-Hausdorff-dimension many-body physics which is attracting growing attention but is still in its infancy. According to our results, certain nonlocally entangled systems in fractional-Hausdorff-dimension can give rise to a 1D quantum systems with holographic dual. This observation is expected to be further investigated to drive advancements in both specific and interdisciplinary fields.

It might be also worth mentioning that our description of the boundary entanglement patterns coincides with the way the many-body entanglement is studied in the context of condensed matter physics. For example, the way the code states in our model can be expressed in terms of local quantum gates shares similarities with that of the long-range entangled states in topological matter. It might imply that further investigation of such similarity can pave the way for exploring potential connections between the quantum information structures applied in different fields, which is expected to reveal new source for future progress in quantum physics~\cite{hartnoll2021}.


\begin{acknowledgements}
The author thanks Anne E. B. Nielsen and Cristiane Morais Smith for inspiring discussions on the fractal geometry. The author thanks Jinwu Ye, Li You, Wei Ku, Chi Ming Yim and Hong Ding for inspiring discussions on relevant topics in quantum error correction, quantum gravity, quantum simulation and condensed matter physics. The author thanks Fangyuan Gu, Rajae Malek, Liu Yang, Yuzhu Cui, Weikang Lin, Zhen Wang for their help in reading the manuscript and their valuable suggestions. This work is supported by Tsung-Dao Lee Institute Postdoc Fellowship Research Funding.
\end{acknowledgements}

\appendix
\section{Proofs}
\subsection{Proof of the criteria of complementary recovery}\label{crcr}
In this section we give the proof of Prop.~\ref{cr1} and \ref{cr2}. We start with reviewing how the condition of complementary recovery for $\mathcal{H}_{\mathrm{code}}$ is stated in the literature: Not only certain $\mathcal{M}$ on $\mathcal{H}_{\mathrm{code}}$ can be reconstructed on subregion $A$, but also its commutant $\mathcal{M}'$ can be correspondingly reconstructed on the complement $\overline{A}$~\cite{harlow2017}. This requirement, though, does not explicitly specify which $\mathcal{M}$ is to be considered, it implicitly guarantees the uniqueness of such $\mathcal{M}$.

Indeed, according to definition of $\mathcal{M}_A, \mathcal{M}_{\overline{A}}$ and Lemma.~\ref{oaqec}, we have $\mathcal{M}_{\overline{A}}\subset\mathcal{M}'_A$. Now, suppose the condition of complementary recovery, i.e., for some von Neumann algebra $\mathcal{M}$ we have $\mathcal{M}\subset\mathcal{M}_A$ and $\mathcal{M}'\subset\mathcal{M}_{\overline{A}}$. Then, by the basic property of the commutant, the condition of complementary recovery implies $\mathcal{M}'_A\subset\mathcal{M}'\subset\mathcal{M}_{\overline{A}}$. It follows that we have $\mathcal{M}_{\overline{A}}=\mathcal{M}'_A=\mathcal{M}'$, which also means that $\mathcal{M}=\mathcal{M}_A$, i.e., $\mathcal{M}_A$ is the only possible von Neumann algebra to satisfy the complementary recovery. Reversely, it is obvious that if $\mathcal{M}_{\overline{A}}=\mathcal{M}'_A$, then the von Neumann algebra $\mathcal{M}_A$ ensures $\mathcal{H}_{\mathrm{code}}$ to satisfy the condition of complementary recovery.

The above arguments have simply proved Prop.~\ref{cr1}: \emph{For a given decomposition $\mathcal{H}=\mathcal{H}_{A}\otimes\mathcal{H}_{\overline{A}}$, the code $\mathcal{H}_{\mathrm{code}}$ exhibits complementary recovery if and only if $\mathcal{M}_A$ and $\mathcal{M}'_A$ can be reconstructed on $A$ and on $\overline{A}$ respectively, or equivalently, $\mathcal{M}'_A=\mathcal{M}_{\overline{A}}$ ($\mathcal{M}'_{\overline{A}}=\mathcal{M}_A$).}

The motivation for Prop.~\ref{cr2} is that we need to prove the equality $\mathcal{M}'_A=\mathcal{M}_{\overline{A}}$ in our code. The application of those properties requires further knowledge of operators in $\mathcal{M}'_A$ as described by the lemma below.

\begin{lemma}\label{mvn}
$\mathcal{M}'_A$ is the minimal von Neumann algebra containing all operators of the form $P_{\mathrm{code}}O_{\overline{A}}P_{\mathrm{code}}$ ($O_{\overline{A}}$ not necessarily commuting with $P_{\mathrm{code}}$).
\end{lemma}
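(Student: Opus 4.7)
The plan is to establish the two directions separately: first, that every $P_{\mathrm{code}}O_{\overline{A}}P_{\mathrm{code}}$ does lie in $\mathcal{M}'_A$, and second, that any von Neumann algebra on $\mathcal{H}_{\mathrm{code}}$ containing all such operators must already contain $\mathcal{M}'_A$. These together say that $\mathcal{M}'_A$ is the smallest such von Neumann algebra, which is the statement of the lemma.

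For the first direction I would either compute the commutator directly---writing a generic element $\widetilde{O} = P_{\mathrm{code}}Q_A P_{\mathrm{code}} \in \mathcal{M}_A$ with $[Q_A,P_{\mathrm{code}}]=0$, and using $[Q_A,O_{\overline{A}}]=0$ (disjoint supports) together with $[Q_A,P_{\mathrm{code}}]=0$ to push $Q_A$ across $P_{\mathrm{code}}O_{\overline{A}}P_{\mathrm{code}}$---or, more cleanly, invoke Lemma~\ref{oaqec}: by definition every element of $\mathcal{M}_A$ is reconstructable on $A$, so $\overline{A}$ is correctable with respect to $\mathcal{M}_A$, which by the lemma is equivalent to $P_{\mathrm{code}}O_{\overline{A}}P_{\mathrm{code}}\in\mathcal{M}'_A$ for every $O_{\overline{A}}$.

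For the minimality direction, the natural move is to pass to commutants. Let $\mathcal{N}$ be any von Neumann algebra on $\mathcal{H}_{\mathrm{code}}$ containing all $P_{\mathrm{code}}O_{\overline{A}}P_{\mathrm{code}}$. By the bicommutant theorem, $\mathcal{M}'_A\subset\mathcal{N}$ is equivalent to $\mathcal{N}'\subset\mathcal{M}_A$, so it suffices to show every $\widetilde{X}\in\mathcal{N}'$ is reconstructable on $A$. Since $\widetilde{X}$ commutes with every $P_{\mathrm{code}}O_{\overline{A}}P_{\mathrm{code}}$ and $\mathcal{N}'$ is a $*$-algebra, the whole von Neumann algebra $\mathcal{W}(\widetilde{X})$ generated by $\widetilde{X}$ and $\widetilde{X}^+$ (together with $\mathds{1}_{\mathcal{H}_{\mathrm{code}}}$) has the property that $P_{\mathrm{code}}O_{\overline{A}}P_{\mathrm{code}}\in\mathcal{W}(\widetilde{X})'$ for every $O_{\overline{A}}$. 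Lemma~\ref{oaqec} then says $\overline{A}$ is correctable with respect to $\mathcal{W}(\widetilde{X})$, and applying the same lemma once more gives that every operator in $\mathcal{W}(\widetilde{X})$---in particular $\widetilde{X}$---is reconstructable on $A$, i.e.\ $\widetilde{X}\in\mathcal{M}_A$.

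The proof involves no hard computation; the main conceptual point is simply to recognize that Lemma~\ref{oaqec} is applied twice, in the two opposite directions of its ``if and only if,'' to translate the algebraic condition ``commutes with every $P_{\mathrm{code}}O_{\overline{A}}P_{\mathrm{code}}$'' into the geometric condition ``reconstructable on $A$.'' The one small preliminary that should be spelled out is that $\mathcal{M}_A$ as given in Eq.~\ref{mamba} is itself a von Neumann algebra---so that the identification $\mathcal{M}''_A=\mathcal{M}_A$ is legitimate---but this is immediate from the commutativity clause $[O_A,P_{\mathrm{code}}]=0$ built into the definition, which ensures that sums, products, and adjoints of elements stay of the required form.
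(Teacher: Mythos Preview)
Your proof is correct and follows essentially the same route as the paper's: both directions are obtained by invoking Lemma~\ref{oaqec} and then passing to commutants. The only cosmetic difference is that in the minimality step the paper applies Lemma~\ref{oaqec} directly to $\mathcal{N}'$ as a whole (since $\mathcal{N}=(\mathcal{N}')'$ already contains every $P_{\mathrm{code}}O_{\overline{A}}P_{\mathrm{code}}$), whereas you pass through the auxiliary algebra $\mathcal{W}(\widetilde{X})$ for each element; this detour is harmless but unnecessary.
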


\begin{proof}
We use $\mathcal{N}$ to denote the minimal von Neumann algebra containing all operators of the form $P_{\mathrm{code}}O_{\overline{A}}P_{\mathrm{code}}$ ($O_{\overline{A}}$ not necessarily commuting with $P_{\mathrm{code}}$). $\mathcal{N}$ is simply the intersection of all von Neumann algebras containing those operators. It is easy to check that such intersection is still a von Neumann algebra. 

According to the definition of $\mathcal{M}_A$, Lemma~\ref{oaqec} implies that $\mathcal{M}'_A$ contains all those $P_{\mathrm{code}}O_{\overline{A}}P_{\mathrm{code}}$ operators and hence we have $\mathcal{N}\subset\mathcal{M}'_A$. 

Now apply Lemma~\ref{oaqec} to $\mathcal{N}'$. Since $\mathcal{N}=(\mathcal{N}')'$ contains all the $P_{\mathrm{code}}O_{\overline{A}}P_{\mathrm{code}}$ operators, $\mathcal{N}'$ must be reconstructed on $A$, i.e., we have $\mathcal{N}'\subset\mathcal{M}_A$, or equivalently, $\mathcal{M}'_A\subset\mathcal{N}$.

Combining the above arguments, we have $\mathcal{M}'_A=\mathcal{N}$.
\end{proof}

Now, combining Proposition~\ref{cr1} and Lemma~\ref{mvn}, we can conclude: the condition of the complementary recovery is satisfied if and only if $\mathcal{M}_{\overline{A}}$ equals the minimal von Neumann algebra containing all the $P_{\mathrm{code}}O_{\overline{A}}P_{\mathrm{code}}$ operators. Based on this conclusion, we can derive a more direct and applicable criterion for for the complementary recovery. Indeed, $\mathcal{M}_{\overline{A}}$ is, by definition, included in the minimal algebra. Then, as long as we can prove the reverse, the complementary recovery will be confirmed. Obviously, the reverse simply requires that every $P_{\mathrm{code}}O_{\overline{A}}P_{\mathrm{code}}$, even with $O_{\overline{A}}$ not commuting with $P_{\mathrm{code}}$, is included in $\mathcal{M}_{\overline{A}}$. Hence, according to this observation, we can easily prove Prop.~\ref{cr2}:

\emph{For a given decomposition $\mathcal{H}=\mathcal{H}_{A}\otimes\mathcal{H}_{\overline{A}}$, the code $\mathcal{H}_{\mathrm{code}}$ exhibits complementary recovery if and only if every $P_{\mathrm{code}}O_{\overline{A}}P_{\mathrm{code}}$, with $O_{\overline{A}}$ not necessarily commuting with $P_{\mathrm{code}}$, equals $P_{\mathrm{code}}Q_{\overline{A}}P_{\mathrm{code}}$ for some $Q_{\overline{A}}$ commuting with $P_{\mathrm{code}}$.}

\subsection{Proof of Eq.~\ref{ssc}}\label{possc}
In this proof we focus on showing the equivalent version of Eq.~\ref{ssc}, i.e.,
\begin{align*}
\begin{split}
&\mathcal{M}_A=R(\otimes_{\boldsymbol{x}\in\mathrm{W}[A]}\mathbf{L}(\mathfrak{e}_{\boldsymbol{x}}))R^+,\\
&\mathcal{M}_{\overline{A}}=R(\otimes_{\boldsymbol{x}\in\mathrm{W}[\overline{A}]}\mathbf{L}(\mathfrak{e}_{\boldsymbol{x}}))R^+.
\end{split}
\end{align*}
And then based on the equalities, it will be easy to prove the condition of complementary recovery and the trivial center. Our proof utilizes the fact that according to the commutation of tensor products of von Neumann algebras~\cite{stratila2019}, we have $\big[(\otimes_{\boldsymbol{x}\in\mathrm{W}[A]}\mathbb{C}\mathds{1}_{\mathfrak{e}_{\boldsymbol{x}}})\otimes(\otimes_{\boldsymbol{x}\in\mathrm{W}[\overline{A}]}\mathbf{L}(\mathfrak{e}_{\boldsymbol{x}}))\big]'=(\otimes_{\boldsymbol{x}\in\mathrm{W}[A]}\mathbf{L}(\mathfrak{e}_{\boldsymbol{x}}))\otimes(\otimes_{\boldsymbol{x}\in\mathrm{W}[\overline{A}]}\mathbb{C}\mathds{1}_{\mathfrak{e}_{\boldsymbol{x}}})$. In the simplified notations, it is $\otimes_{\boldsymbol{x}\in\mathrm{W}[\overline{A}]}\mathbf{L}(\mathfrak{e}_{\boldsymbol{x}})=(\otimes_{\boldsymbol{x}\in\mathrm{W}[A]}\mathbf{L}(\mathfrak{e}_{\boldsymbol{x}}))'$, and through the isomorphism, we have $R(\otimes_{\boldsymbol{x}\in\mathrm{W}[\overline{A}]}\mathbf{L}(\mathfrak{e}_{\boldsymbol{x}}))R^+=(R(\otimes_{\boldsymbol{x}\in\mathrm{W}[A]}\mathbf{L}(\mathfrak{e}_{\boldsymbol{x}}))R^+)'$. It is easy to see that the center is trivial, as only the constant operators on $\mathcal{E}$ belong to both $\otimes_{\boldsymbol{x}\in\mathrm{W}[A]}\mathbf{L}(\mathfrak{e}_{\boldsymbol{x}})$ and $\otimes_{\boldsymbol{x}\in\mathrm{W}[\overline{A}]}\mathbf{L}(\mathfrak{e}_{\boldsymbol{x}})$.

To show the equalities, we first show that $R(\otimes_{\boldsymbol{x}\in\mathrm{W}[A]}\mathbf{L}(\mathfrak{e}_{\boldsymbol{x}}))R^+\subset\mathcal{M}_A$ and $R(\otimes_{\boldsymbol{x}\in\mathrm{W}[\overline{A}]}\mathbf{L}(\mathfrak{e}_{\boldsymbol{x}}))R^+\subset\mathcal{M}_{\overline{A}}$. To show that $R(\otimes_{\boldsymbol{x}\in\mathrm{W}[A]}\mathbf{L}(\mathfrak{e}_{\boldsymbol{x}}))R^+\subset\mathcal{M}_A$, we note that operators in $\otimes_{\boldsymbol{x}\in\mathrm{W}[A]}\mathbf{L}(\mathfrak{e}_{\boldsymbol{x}})$ are linear sums of product operator of the form $\otimes_{\boldsymbol{x}\in\mathrm{W}[A]}\widetilde{\boldsymbol{o}}_{\boldsymbol{x}}$ with $\widetilde{\boldsymbol{o}}_{\boldsymbol{x}}\in\mathbf{L}(\mathfrak{e}_{\boldsymbol{x}})$. And $\otimes_{\boldsymbol{x}\in\mathrm{W}[A]}\widetilde{\boldsymbol{o}}_{\boldsymbol{x}}$ simply equals the product $(\cdots\otimes\mathds{1}_{\mathfrak{e}_{\boldsymbol{x}'}}\otimes\widetilde{\boldsymbol{o}}_{\boldsymbol{x}}\otimes\mathds{1}_{\mathfrak{e}_{\boldsymbol{x}''}}\otimes\cdots)(\cdots\otimes\widetilde{\boldsymbol{o}}_{\boldsymbol{x}'}\otimes\mathds{1}_{\mathfrak{e}_{\boldsymbol{x}}}\otimes\mathds{1}_{\mathfrak{e}_{\boldsymbol{x}''}}\otimes\cdots)\cdots$ for $\boldsymbol{x},\boldsymbol{x}',\boldsymbol{x}'',\ldots\in\mathrm{W}[A]$, which means that $R\otimes_{\boldsymbol{x}\in\mathrm{W}[A]}\widetilde{\boldsymbol{o}}_{\boldsymbol{x}}R^+=R\cdots R^+R(\cdots\otimes\mathds{1}_{\mathfrak{e}_{\boldsymbol{x}'}}\otimes\widetilde{\boldsymbol{o}}_{\boldsymbol{x}}\otimes\mathds{1}_{\mathfrak{e}_{\boldsymbol{x}''}}\otimes\cdots)R^+R(\cdots\otimes\widetilde{\boldsymbol{o}}_{\boldsymbol{x}'}\otimes\mathds{1}_{\mathfrak{e}_{\boldsymbol{x}}}\otimes\mathds{1}_{\mathfrak{e}_{\boldsymbol{x}''}}\otimes\cdots)R^+R\cdots R^+$, i.e., a product of operators in $\mathcal{M}_A$ which also belongs to $\mathcal{M}_A$. Then, $\mathcal{M}_A$ surely includes $R(\otimes_{\boldsymbol{x}\in\mathrm{W}[A]}\mathbf{L}(\mathfrak{e}_{\boldsymbol{x}}))R^+$. Similarly, we can show that $R(\otimes_{\boldsymbol{x}\in\mathrm{W}[\overline{A}]}\mathbf{L}(\mathfrak{e}_{\boldsymbol{x}}))R^+\subset\mathcal{M}_{\overline{A}}$.

Now, we note that by definition, we have $\mathcal{M}_{\overline{A}}\subset\mathcal{M}'_A$, hence what we really have is $R(\otimes_{\boldsymbol{x}\in\mathrm{W}[\overline{A}]}\mathbf{L}(\mathfrak{e}_{\boldsymbol{x}}))R^+\subset\mathcal{M}'_A$. Then, according to the property of commutant, we have $\mathcal{M}_A\subset\big[R(\otimes_{\boldsymbol{x}\in\mathrm{W}[\overline{A}]}\mathbf{L}(\mathfrak{e}_{\boldsymbol{x}}))R^+\big]'= R(\otimes_{\boldsymbol{x}\in\mathrm{W}[A]}\mathbf{L}(\mathfrak{e}_{\boldsymbol{x}}))R^+$, which means that $R(\otimes_{\boldsymbol{x}\in\mathrm{W}[A]}\mathbf{L}(\mathfrak{e}_{\boldsymbol{x}}))R^+$ must equal $\mathcal{M}_A$. The arguments for $R(\otimes_{\boldsymbol{x}\in\mathrm{W}[\overline{A}]}\mathbf{L}(\mathfrak{e}_{\boldsymbol{x}}))R^+=\mathcal{M}_{\overline{A}}$ are similar.

Finally, since $R(\otimes_{\boldsymbol{x}\in\mathrm{W}[\overline{A}]}\mathbf{L}(\mathfrak{e}_{\boldsymbol{x}}))R^+=(R(\otimes_{\boldsymbol{x}\in\mathrm{W}[A]}\mathbf{L}(\mathfrak{e}_{\boldsymbol{x}}))R^+)'$, we have
$\mathcal{M}_{\overline{A}}=\mathcal{M}'_A$. And the trivial center is clear.

\subsection{Proof of Prop.~\ref{gates1}}\label{pogates1}
(1) To prove that states of the form $(T_{ii'}T_{jj'}\cdots)\ket{\psi_{m_0}}$ go through the whole collection $\{\ket{\psi_m}\}$, we can simply replace $\ket{\psi_{m_0}}$ by $(T_{kk'}T_{ll'}\cdots)\ket{0\cdots0\cdots0}$. Then we exactly have the form as given by Eq.~\ref{dpsi}, which is the definition of all the states in the collection $\{\ket{\psi_m}\}$.

(2) To show that distinct products $(T_{i_1i'_1}T_{i_2i'_2}\cdots)$ and $(T_{j_1j'_1}T_{j_2j'_2}\cdots)$ map $\ket{\psi_{m_0}}$ to distinct qudit-product-states, we prove by contradiction and assume the opposite, i.e., the equality
\begin{equation*}
(T_{i_1i'_1}T_{i_2i'_2}\cdots)\ket{\psi_{m_0}}=(T_{j_1j'_1}T_{j_2j'_2}\cdots)\ket{\psi_{m_0}}.
\end{equation*}
According to the basic properties of the $T_{ii'}$ operators ($T_{ii'}T_{ii'}=\mathds{1}$ and their commutativity), we have $(T_{i_1i'_1}T_{i_2i'_2}\cdots)(T_{i_1i'_1}T_{i_2i'_2}\cdots)=\mathds{1}$ and can rewrite the equality as
\begin{equation*}
\ket{\psi_{m_0}}=(T_{i_1i'_1}T_{i_2i'_2}\cdots)(T_{j_1j'_1}T_{j_2j'_2}\cdots)\ket{\psi_{m_0}}.
\end{equation*}
Thus, we just need to prove that this equality leads to contradiction. In the following arguments, we will rewrite the product $(T_{i_1i'_1}T_{i_2i'_2}\cdots)(T_{j_1j'_1}T_{j_2j'_2}\cdots)$ stepwise into a concise form as a product of the commutative $S_i^{\sigma_i}$ operators with a nontrivial support of qudits, and then show that the action of such product cannot leave $\ket{\psi_{m_0}}$ unchanged.

Step 1. Since $(T_{i_1i'_1}T_{i_2i'_2}\cdots)$ and $(T_{j_1j'_1}T_{j_2j'_2}\cdots)$ are distinct, their multiplication cancels the possible $T_{ii'}$s that appear in both the two product, and only certain distinct gates $T_{kk'},T_{ll'},\ldots$s that appear in one of the two brackets remain. Then, we have a new form of the product $(T_{i_1i'_1}T_{i_2i'_2}\cdots)(T_{j_1j'_1}T_{j_2j'_2}\cdots)=(T_{kk'}T_{ll'}\cdots)$, and can rewrite the equality as
\begin{equation*}
\ket{\psi_{m_0}}=(T_{kk'}T_{ll'}\cdots)\ket{\psi_{m_0}}.
\end{equation*}

Step 2. Since each $T_{kk'}$ is a product of two unitary operators $S_k^{\sigma_k}$ and $S_{k'}^{\sigma_{k'}}$ on qudits $k$ and $k'$ respectively, we can rewrite $(T_{kk'}T_{ll'}\cdots)$ as $(S_k^{\sigma_k}S_{k'}^{\sigma_{k'}}S_l^{\sigma_l}S_{l'}^{\sigma_{l'}}\cdots)$ with $\sigma_k,\sigma_{k'},\ldots=1,2,3$.

Step 3.  The product $(T_{kk'}T_{ll'}\cdots)=(S_k^{\sigma_k}S_{k'}^{\sigma_{k'}}S_l^{\sigma_l}S_{l'}^{\sigma_{l'}}\cdots)$ can be further reduced. That is, some qudits $l$ can repeat appearing in the supports of more than one gate from the three $T_{il}$, $T_{jl}$ and $T_{kl}$ corresponding to its three neighbors $i$, $j$ and $k$ respectively (see Fig.~\ref{app1a}). In that case, more than one out of the three single-qudit operators $S_l^{1}$, $S_l^{2}$ and $S_l^{3}$ appear in the product. Utilizing the commutativity of the single-qudit operators, we can multiply together the operators on the same qudit $l$. Then, according to the basic properties stated in Eq.~\ref{ss}, such multiplication on qudit $l$ has two possible results: (i) When all the three gates $T_{il}$, $T_{jl}$ and $T_{kl}$ appears in the product, the multiplication must be $S_l^{1}S_l^{2}S_l^{3}=\mathds{1}$ (see Fig.~\ref{app1a}), i.e., operators on the qudit $l$ cancel each other and hence qudit $l$ is not really in the support of the product. (ii) When only one or two out of the three gates $T_{il}$, $T_{jl}$ and $T_{kl}$ appear, the multiplication always results in some $S_l^{\bar{\sigma}_l}$ out of $S_l^{1}$, $S_l^{2}$ and $S_l^{3}$, and qudit $l$ belongs to the support of the product. Hence, the product $(S_k^{\sigma_k}S_{k'}^{\sigma_{k'}}S_l^{\sigma_l}S_{l'}^{\sigma_{l'}}\cdots)$ can be further reduced to a concise form $(S_k^{\bar{\sigma}_k}S_l^{\bar{\sigma}_l}\cdots)$ with $\bar{\sigma}_k,\bar{\sigma}_k,\ldots=1,2,3$ and with no repetition in the index of qudits $k,l,\ldots$. And the assumed equality can be eventually rewritten as
\begin{equation*}
\ket{\psi_{m_0}}=(S_k^{\bar{\sigma}_k}S_l^{\bar{\sigma}_l}\cdots)\ket{\psi_{m_0}}.
\end{equation*}

\begin{center}
\begin{figure}[ht]
\centering
    \includegraphics[width=7.5cm]{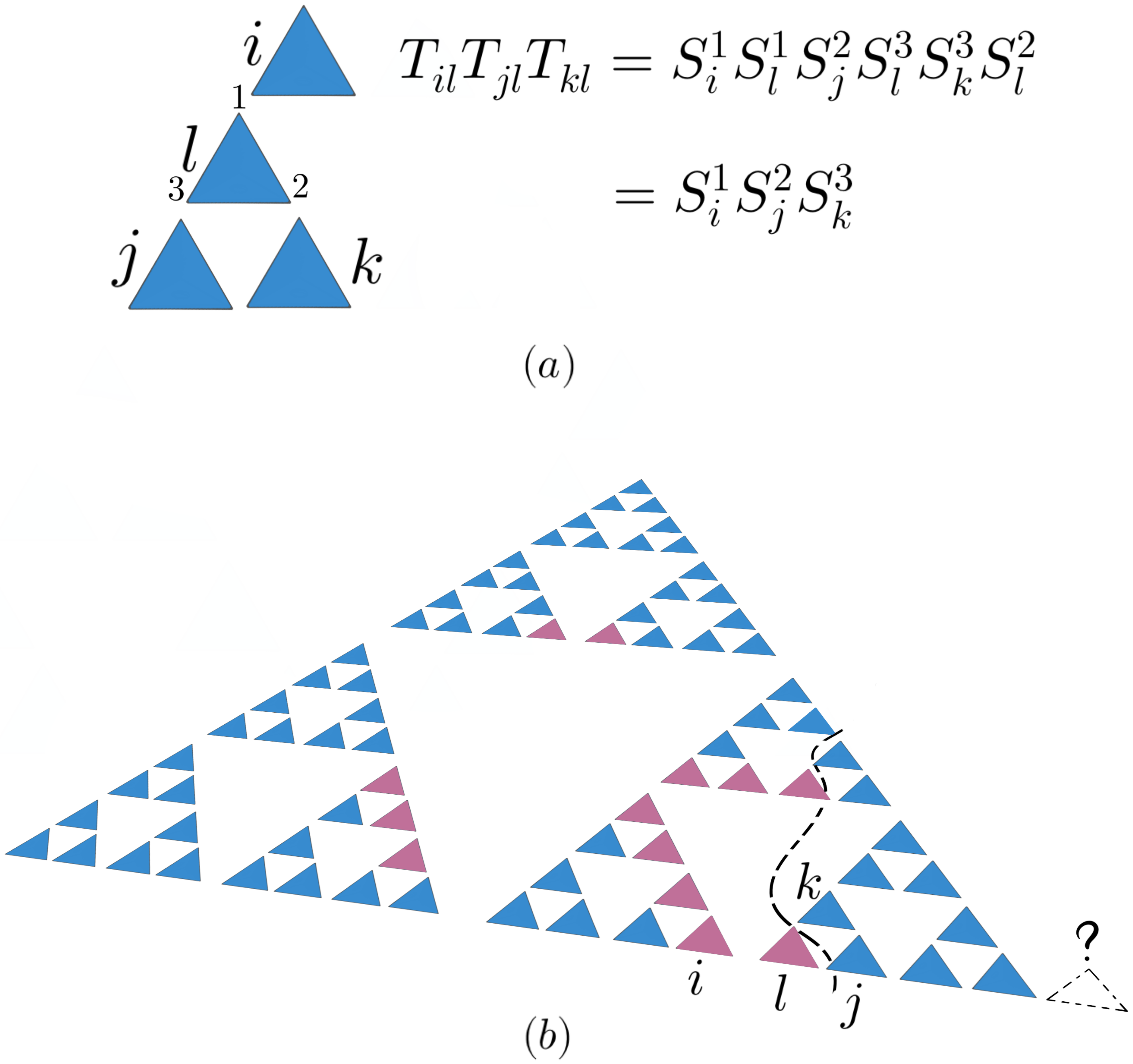}   
\phantomsubfloat{\label{app1a}}
\phantomsubfloat{\label{app1b}}
\caption{(a) The action of all the three gates sharing the qudit $l$ in their supports cancels the action on $l$. (b) The corner qudit (right) has only two neighbors. The purple indicate the qudits supporting at least one gate in $T_{kk'}T_{ll'}\cdots$, among which $l$ is the closest to the corner.}
\label{figapp1}
\end{figure}
\end{center}

Now, we show that the support of $(T_{kk'}T_{ll'}\cdots)=(S_k^{\bar{\sigma}_k}S_l^{\bar{\sigma}_l}\cdots)$ is nonempty. There are two possibilities. First, some gate in the product $(T_{kk'}T_{ll'}\cdots)$ acts on a corner qudits in the Sierpi\'nski geometry. Then, since the corner qudit has only two neighbors (see Fig.~\ref{app1b}), according to Step 3 above, the single-qudit operators on this corner qudit do not cancel, and hence the qudit belongs to the support of the product. Second, if no gate acts on a corner qudit, there must be a qudit $l$ underlying a gate $T_{ll'}$ in the product $(T_{kk'}T_{ll'}\cdots)$ such that qudit $l$ is the closest one to a corner qudit among all the qudits that underlie some gates in the product (see Fig.~\ref{app1b}). In this case, qudit $l$ has one neighboring qudit $j$ which is even closer to the corner but does not underlie any gates in the product. Hence, in Step 3 above, the single-qudit operators on qudit $l$ do not cancel, and qudit $l$ belongs to the support of the product.

According to the arguments above, the support of $(S_k^{\bar{\sigma}_k}S_l^{\bar{\sigma}_l}\cdots)$ is nonempty. Then, considering that $\ket{\psi_{m_0}}=\ket{\alpha_1\cdots\alpha_k\cdots\alpha_l\cdots\alpha_N}$, the product must map $\ket{\psi_{m_0}}$ to a different qudit-product-state $\ket{\alpha_1\cdots\alpha'_k\cdots\alpha'_l\cdots\alpha_N}$ with $\alpha'_k=S_k^{\bar{\sigma}_k}\ket{\alpha_k}\ne\ket{\alpha_k}, \alpha'_l=S_l^{\bar{\sigma}_l}\ket{\alpha_l}\ne\ket{\alpha_l},\ldots$. Hence, we have the inequality
\begin{equation*}
\ket{\psi_{m_0}}\ne(S_k^{\bar{\sigma}_k}S_l^{\bar{\sigma}_l}\cdots)\ket{\psi_{m_0}},
\end{equation*}
which contradicts the assumed equality.

\subsection{Proof of Prop.~\ref{pp1}}\label{popp1}
\begin{center}
\begin{figure}[ht]
\centering
    \includegraphics[width=8.5cm]{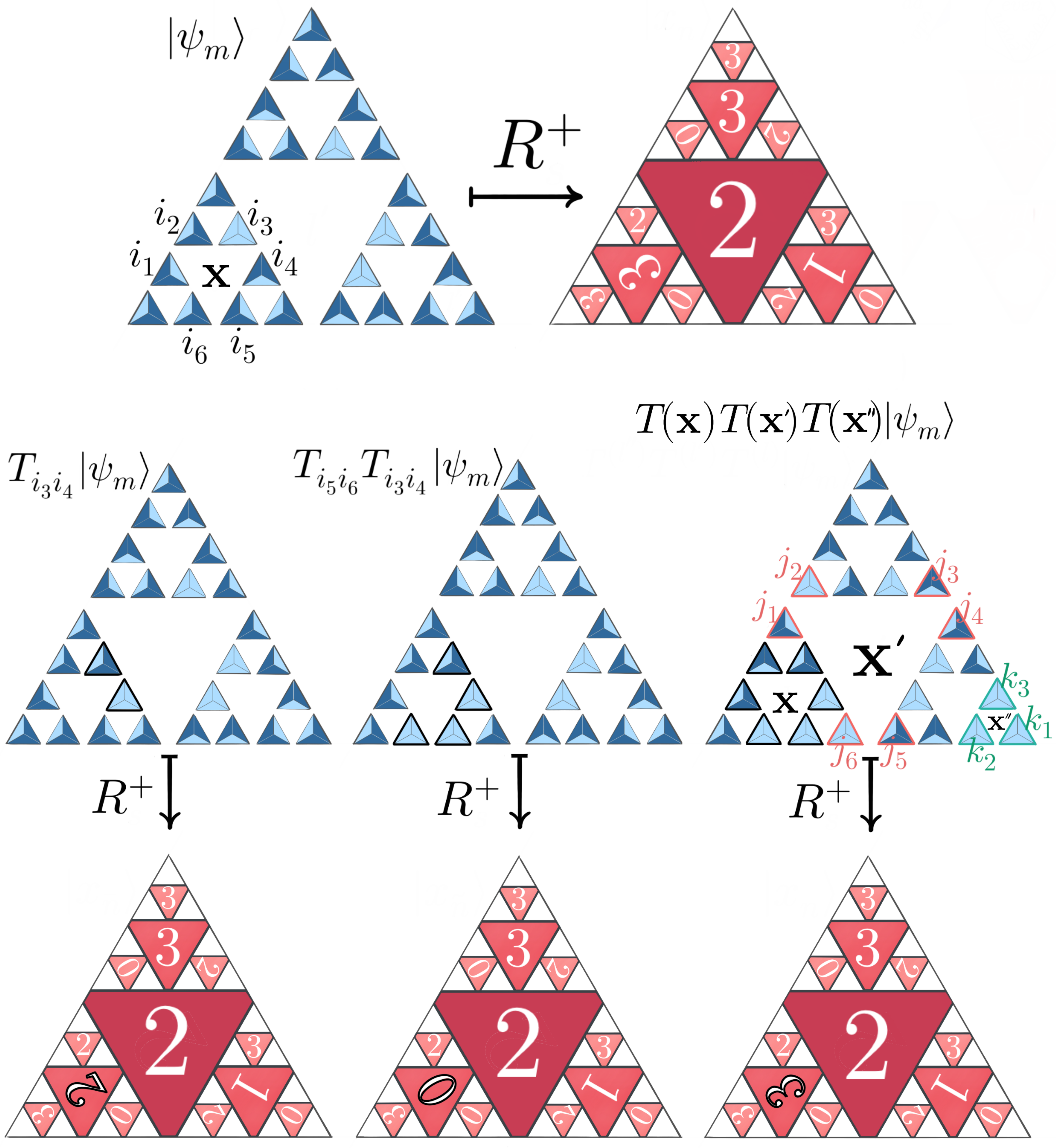}   
\caption{Illustration of how the actions of the gates on $\ket{\psi_m}$ change the associated emergent degrees of freedom.}
\label{figapp2}
\end{figure}
\end{center}

(1) We consider an arbitrary $\ket{\psi_m}$ and fix an arbitrary hole $\boldsymbol{x}$ with the gate $T(\boldsymbol{x})=T_{i_1i_2}T_{i_3i_4}T_{i_5i_6}$. We also consider $\ket{\boldsymbol{\beta}_{\boldsymbol{1}}\cdots\boldsymbol{\beta}_{\boldsymbol{x}}\cdots\boldsymbol{\beta}_{K}}=\sqrt{2^K}R^+\ket{\psi_m}$, or equivalently the emergent degrees of freedom $\boldsymbol{\beta}_{\boldsymbol{1}},\ldots,\boldsymbol{\beta}_{\boldsymbol{x}},\ldots,\boldsymbol{\beta}_{K}$ that can be read off the pictorial representation of $\ket{\psi_m}$ (see Fig.~\ref{figapp2}). Then, the following properties are clear from the pictorial representations as shown in Fig.~\ref{figapp2}.

Applying any gate out of the three $T_{i_1i_2},T_{i_3i_4},T_{i_5i_6}$, or applying the product of any two out of the three will map $\ket{\psi_m}$ to a different $\ket{\psi_{m'}}$, changing the degrees of freedom $\boldsymbol{\beta}_{\boldsymbol{x}}$ as manifest in $\ket{\psi_m}$ to a different $\boldsymbol{\beta}'_{\boldsymbol{x}}$ as manifest in $\ket{\psi_{m'}}$. And by suitably choosing the gate or the product, the changed $\boldsymbol{\beta}'_{\boldsymbol{x}}$ can be arbitrary but different from $\boldsymbol{\beta}_{\boldsymbol{x}}$.

It is important to note that such change has no effect on other emergent degrees of freedom $\boldsymbol{\beta}_{\boldsymbol{x}'}$ associated to other hole $\boldsymbol{x}'$. That is because (i) any of the three gates is either apart from the hole $\boldsymbol{x}'$ and hence does not act on the sub-configuration $\alpha_{i'_1},\alpha_{i'_2},\ldots$ surrounding the $\boldsymbol{x}'$, (ii) or such gate only locally change the the sub-configuration on one lateral sides and hence does not affect $\boldsymbol{\beta}_{\boldsymbol{x}'}$ (see Fig.~\ref{figapp2}).

Now, if we formalize these properties using the $R^+$ operator, we exactly get the desired conditions.

(2) Following the above arguments, if we apply $T(\boldsymbol{x})=T_{i_1i_2}T_{i_3i_4}T_{i_5i_6}$ to $\ket{\psi_m}$, the effects from the three three gates $T_{i_1i_2},T_{i_3i_4},T_{i_5i_6}$ on $\boldsymbol{\beta}_{\boldsymbol{x}}$ as manifest in $\ket{\psi_m}$ cancel each other, although $T(\boldsymbol{x})\ket{\psi_m}=(T_{i_1i_2}T_{i_3i_4}T_{i_5i_6})\ket{\psi_m}\ne\ket{\psi_m}$ (see Prop.~\ref{gates1}). Because this argument also applies to the gates associated to other holes, applying any nontrivial product of the gates $(T(\boldsymbol{x})T(\boldsymbol{x}')\cdots)$ will not change the entanglement pattern, or the emergent degrees o freedom as manifest on $\ket{\psi_m}$.

However, if a nontrivial product $(T_{ii'}T_{jj'}\cdots)$ cannot be written as the form $(T(\boldsymbol{x})T(\boldsymbol{x}')\cdots)$, then for some hole $\boldsymbol{x}'$, only one or two out of the three gates $T_{i'_1i'_2},T_{i'_3i'_4},T_{i'_5i'_6}$ appear in the product. Then, according to our arguments for (1), the product $(T_{ii'}T_{jj'}\cdots)$ will definitely change the emergent degrees of freedom $\boldsymbol{\beta}_{\boldsymbol{x}'}$ as manifest in $\ket{\psi_m}$ and hence change the entanglement pattern.

These properties can also be formalized with $R^+$, which exactly gives the desired condition.

\subsection{Proof of Lemma~\ref{stot1}}\label{postot1}
We only need to prove the forward. The proof for the backward is trivial since the gates commute with $P_0$. We assume that $P_0(\otimes_{i\in A}S_i^{\sigma_i})P_0\ne 0$ and $\otimes_{i\in A}S_i^{\sigma_i}\ne\mathds{1}$ (at least one $\sigma_i\ne0$).

Since $\otimes_{i\in A}S_i^{\sigma_i}$ maps a qudit-product-state to another different one, according to the assumption, $\otimes_{i\in A}S_i^{\sigma_i}$ must map some $\ket{\psi_m}\in\mathcal{H}_0$ to another different $\ket{\psi_{m'}}\in\mathcal{H}_0$. Otherwise, $\otimes_{i\in A}S_i^{\sigma_i}$ maps all $\ket{\psi_m}$ states out of $\mathcal{H}_0$, then the projection is zero.

Then, according to Prop.~\ref{gates1}, we must have $(T_{ii'}T_{jj'}\cdots)\ket{\psi_m}=\ket{\psi_{m'}}$. And we have the equality
\begin{equation*}
\otimes_{i\in A}S_i^{\sigma_i}\ket{\psi_m}=(T_{ii'}T_{jj'}\cdots)\ket{\psi_m}.
\end{equation*}
Indee, the product $(T_{ii'}T_{jj'}\cdots)$ itself is certain tensor product of the $S^\sigma_i$ operators, i.e. $(T_{ii'}T_{jj'}\cdots)=(S_{i'}^{\sigma_{i'}}S_{j'}^{\sigma_{j'}}\cdots)$ with $\sigma_{i'},\sigma_{j'},\ldots\ne0$. Hence, we have
\begin{equation*}
\otimes_{i\in A}S_i^{\sigma_i}\ket{\psi_m}=(S_{i'}^{\sigma_{i'}}S_{j'}^{\sigma_{j'}}\cdots)\ket{\psi_m}.
\end{equation*}

Note that in $(S_{i'}^{\sigma_{i'}}S_{j'}^{\sigma_{j'}}\cdots)$, each $\sigma_{i'}\ne0$ means that each $S_{i'}^{\sigma_{i'}}\ne\mathds{1}$. Hence, all the qudits engaged in $(S_{i'}^{\sigma_{i'}}S_{j'}^{\sigma_{j'}}\cdots)$ form its support. However in $\otimes_{i\in A}S_i^{\sigma_i}$, generally, not every $\sigma_i$ is guaranteed nonzero ($S_i^0=\mathds{1}$ possibly for some qudits in $A$), but only those qudits with nonzero $\sigma_i$ specify the support.

Thus, according to the definition of the $S^\sigma$ operator (see Eq.~\ref{ds}) and recalling that $\ket{\psi_m}=\ket{\alpha_1\cdots\alpha_i\cdots\alpha_N}$, in the above equality, if a $S_i^{\sigma_i}$ operator with $\sigma_i\ne0$ on one side maps one qudit state $\ket{\alpha_i}$ to $\ket{\alpha'_i}\ne\ket{\alpha_i}$, the same operator must appear in the other side. In other words, the above equality holds if and only if the products on the two sides share the same support, and for each qudit within the support, the $S^\sigma$ operator in the two products are the same. Therefore, we have $\otimes_{i\in A}S_i^{\sigma_i}=(S_{i'}^{\sigma_{i'}}S_{j'}^{\sigma_{j'}}\cdots)=(T_{ii'}T_{jj'}\cdots)$.

\subsection{Proof of Prop.~\ref{qa}}\label{poqa}
(1) To check that the support of $Q_A$ is $A$, we note that each $T(\boldsymbol{x})$ is a product of the $S^{\sigma_i}_i$ operators, and so is $T(\boldsymbol{x})T(\boldsymbol{x}')\cdots$. Then, we can re-express each term in Eq.~\ref{qoa1} as
\begin{align}\label{qaapp1}
\begin{split}
&(T(\boldsymbol{x})T(\boldsymbol{x}')\cdots)(\otimes_{i\in A}\dyad{\alpha_i})(T(\boldsymbol{x})T(\boldsymbol{x}')\cdots)\\
=&(S^{\sigma_i}_iS^{\sigma_j}_j\cdots)(\dyad{\alpha_{i_1}\alpha_{i_2}\cdots})(S^{\sigma_i}_iS^{\sigma_j}_j\cdots)\\
=&\dyad{\alpha'_{i_1}\alpha'_{i_2}\cdots}\\
=&\otimes_{i\in A}\dyad{\alpha'_i}.
\end{split}
\end{align}
The derivation takes advantage of the fact that $S^{\sigma_i}_i=(S^{\sigma_i}_i)^+=(S^{\sigma_i}_i)^{-1}$ (see Eq.~\ref{ds} and \ref{ss}). It says that for qudit $j$ outside the support of $\otimes_{i\in A}\dyad{\alpha_i}$, the $S^{\sigma_j}_j$ on the two sides of the expression cancels, i.e. $S^{\sigma_j}_jS^{\sigma_j}_j=\mathds{1}$; and for qudit $i$ within the support, $S^{\sigma_i}_i$ simply change the local state $\ket{\alpha_i}$ into $\ket{\alpha'_i}\ne\ket{\alpha_i}$. And hence, each term of $Q_A$ obviously has the same support as $\otimes_{i\in A}\dyad{\alpha_i}$.

(2) It is obvious from Eq.~\ref{qoa1} that $Q_A$ is self-adjoint (Hermitian). Hence, to show that $Q_A$ commutes with $P_{\mathrm{code}}$, it suffices to show that $\mathcal{H}_{\mathrm{code}}$ is invariant under the action of $Q_A$~\footnote{A basic property in operator algebra on finite-dimension Hilbert space is that the subspace $\mathcal{H}_{\mathrm{code}}$ is invariant under the action of both operator $O$ and its adjoint $O^+$ if and only if $[P_{\mathrm{code}},O]=0$.}. Consider a basis state $\ket*{\tilde{\varphi}_n}$ of $\mathcal{H}_{\mathrm{code}}$. We show that $Q_A\ket*{\tilde{\varphi}_n}$ is propositional to $\ket*{\tilde{\varphi}_n}$ itself.

Recall that $\ket*{\tilde{\varphi}_n}$ is an equal-weight sum of a sub-collection of the qudit-product-states $\ket{\psi_m}$s (see Par.~\ref{cbs0}). And if fixing an arbitrary $\ket{\psi_m}$ in the summation, all the qudit-product-states in the sum can be expressed as $(T(\boldsymbol{x})T(\boldsymbol{x}')\cdots)\ket{\psi_m}$, i.e., in a one-to-one correspondence to the products $(T(\boldsymbol{x})T(\boldsymbol{x}')\cdots)$s in Eq.~\ref{qoa1} (or in Eq.~\ref{exp1} and Eq.~\ref{cbs1}). Based on this fact, we can understand the action of $Q_A$ on $\ket*{\tilde{\varphi}_n}$ as following.

Indeed, we can firstly consider the action of one term in $Q_A$ on $\ket{\psi_m}$, i.e., the action of 
\begin{align*}
\begin{split}
&(T(\boldsymbol{x})T(\boldsymbol{x}')\cdots)(\otimes_{i\in A}\dyad{\alpha_i})(T(\boldsymbol{x})T(\boldsymbol{x}')\cdots)=\\
&(T(\boldsymbol{x})T(\boldsymbol{x}')\cdots)(\dyad{\alpha_{i_1}\alpha_{i_2}\cdots})(T(\boldsymbol{x})T(\boldsymbol{x}')\cdots) 
\end{split}
\end{align*}
on $\ket{\psi_m}=\ket{\alpha_1\cdots\alpha_i\cdots\alpha_N}$. It is easy to see that $(T(\boldsymbol{x})T(\boldsymbol{x}')\cdots)(\otimes_{i\in A}\dyad{\alpha_i})(T(\boldsymbol{x})T(\boldsymbol{x}')\cdots)\ket{\psi_m}\ne0$ if and only if the state $\ket{\psi_{m'}}=(T(\boldsymbol{x})T(\boldsymbol{x}')\cdots)\ket{\psi_m}$, which also participates in expanding $\ket*{\tilde{\varphi}_n}$, matches the sub-configuration $\alpha_{i_1},\alpha_{i_2},\ldots$ on the support $A$. And if so, the action $(T(\boldsymbol{x})T(\boldsymbol{x}')\cdots)(\otimes_{i\in A}\dyad{\alpha_i})(T(\boldsymbol{x})T(\boldsymbol{x}')\cdots)\ket{\psi_m}$ simply turns back to $\ket{\psi_m}$ itself. That is
\begin{align*}
\begin{split}
&(T(\boldsymbol{x})T(\boldsymbol{x}')\cdots)(\otimes_{i\in A}\dyad{\alpha_i})(T(\boldsymbol{x})T(\boldsymbol{x}')\cdots)\ket{\psi_m}\\
=&(T(\boldsymbol{x})T(\boldsymbol{x}')\cdots)(\otimes_{i\in A}\dyad{\alpha_i})\ket{\psi_{m'}}\\
=&(T(\boldsymbol{x})T(\boldsymbol{x}')\cdots)\ket{\psi_{m'}}\\
=&\ket{\psi_m},
\end{split}
\end{align*}
where we utilized the fact that $(T(\boldsymbol{x})T(\boldsymbol{x}')\cdots)(T(\boldsymbol{x})T(\boldsymbol{x}')\cdots)=\mathds{1}$.

Then, we note that the terms in $Q_A$, i.e., $(T(\boldsymbol{x})T(\boldsymbol{x}')\cdots)(\otimes_{i\in A}\dyad{\alpha_i})(T(\boldsymbol{x})T(\boldsymbol{x}')\cdots)$s, and all the qudit-product-states in the expansion of $\ket*{\tilde{\varphi}_n}$, i.e., $\ket{\psi_{m'}}=(T(\boldsymbol{x})T(\boldsymbol{x}')\cdots)\ket{\psi_m}$s, are both indexed by the product $(T(\boldsymbol{x})T(\boldsymbol{x}')\cdots)$ in Eq.~\ref{exp1} and in a one-to-one correspondence to each other. It means that the action of $Q_A$ on $\ket{\psi_m}$ simply checks with every state $\ket{\psi_{m'}}$ in the expansion of $\ket*{\tilde{\varphi}_n}$ on whether the state matches the sub-configuration $\alpha_{i_1},\alpha_{i_2},\ldots$.

Then, if no $\ket{\psi_{m'}}$ in the expansion matches the sub-configuration, we have $Q_A\ket*{\tilde{\varphi}_n}=0$. And hence, we only need to consider the case where some $\ket{\psi_{m'}}$ matches the sub-configuration. And in this case the result of this action is $c_A\ket{\psi_m}$. Here, $c_A$ exactly equals $c_n/{2^K}$ and $c_n$ is the number of all the qudit-product-states in the expansion, which match the sub-configuration. Note that the constant $c_A$ is, by definition, independent on the choice of $\ket{\psi_m}$ in consideration.

Now, for $\ket*{\tilde{\varphi}_n}$ with $Q_A\ket*{\tilde{\varphi}_n}\ne0$, since the $\ket{\psi_m}$ in the above arguments is arbitrary among all the states expanding $\ket*{\tilde{\varphi}_n}$, we have the desired result:
\begin{align*}
\begin{split}
Q_A\ket*{\tilde{\varphi}_n}&=\frac{1}{\sqrt{2^K}}\sum Q_A\ket{\psi_m}\\
&=c_A\frac{1}{\sqrt{2^K}}\sum \ket{\psi_m}\\
&=c_A\ket*{\tilde{\varphi}_n},
\end{split}
\end{align*}
where we keep in mind that the sum only goes through a sub-collection of the $\ket{\psi_m}$ states (see Eq.~\ref{cbs1}).

(3) To show that $c_A$ is independent on $\ket*{\tilde{\varphi}_n}$ with $Q_A\ket*{\tilde{\varphi}_n}\ne0$, we simply need to show that for different $\ket*{\tilde{\varphi}_n}$s, the number $c_n$ of the qudit-product-states that participate in each expansion and match the sub-configuration $\alpha_{i_1},\alpha_{i_2},\ldots$ on $A$ is the same.

Suppose that a $\ket{\psi_m}$ state in the expansion of $\ket*{\tilde{\varphi}_n}$ matches the sub-configuration. Then an arbitrary state $\ket{\psi_{m'}}=(T(\boldsymbol{x})T(\boldsymbol{x}')\cdots)\ket{\psi_m}$ in the expansion matches the sub-configuration if and only if its whole configuration $\alpha'_i,\alpha'_j,\ldots$ differs from that of $\ket{\psi_m}$ in the subregion that avoids the support of $(T(\boldsymbol{x})T(\boldsymbol{x}')\cdots)$. It follows that counting the states that match the sub-configuration is equivalent to counting the products $(T(\boldsymbol{x})T(\boldsymbol{x}')\cdots)$s with support disjoint from $A$. Obviously, the latter is only determined by the geometry of $A$ on the lattice, and is independent on the content of $\otimes_{i\in A}\dyad{\alpha_i}$, i.e., the sub-configuration $\alpha_{i_1},\alpha_{i_2},\ldots$.

(4) To show that $P_{\mathrm{code}}Q_AP_{\mathrm{code}}=P_{\mathrm{code}}\otimes_{i\in A}\dyad{\alpha_i}P_{\mathrm{code}}$, we compare the matrix elements as represented on the basis $\{\ket*{\tilde{\varphi}_n}\}$. According to the previous arguments, the action of $\otimes_{i\in A}\dyad{\alpha_i}$ on $\ket*{\tilde{\varphi}_n}$ simply ``picks up'' only the $\ket{\psi_m}$ states in the expansion, which match the sub-configuration on $A$ and are in total $c_n$ states. Hence, we have $\mel{\tilde{\varphi}_n}{(\otimes_{i\in A}\dyad{\alpha_i})}{\tilde{\varphi}_{n'}}=\delta_{nn'}(c_n/2^K)=c_A=\mel{\tilde{\varphi}_n}{Q_A}{\tilde{\varphi}_{n'}}$. This has proved the desired results.

\subsection{Proof of Cor.~\ref{qaa}}\label{poqaa}
(1) $\Rightarrow$ (2): Since $P_{\mathrm{code}}Q_AP_{\mathrm{code}}\ne0$ and $P_{\mathrm{code}}Q'_AP_{\mathrm{code}}\ne0$, then, $Q_A=Q'_A$ means that for some $\ket*{\tilde{\varphi}_n}$ we have $Q_A\ket*{\tilde{\varphi}_n}=Q'_A\ket*{\tilde{\varphi}_n}=c_A\ket*{\tilde{\varphi}_n}$. According to Prop.~\ref{qa}, some $\ket{\psi_m}$ in the expansion of $\ket*{\tilde{\varphi}_n}$ must match the sub-configuration $\alpha_{i_1},\alpha_{i_2},\ldots$. And similarly, some $\ket{\psi_{m'}}$ in the expansion of $\ket*{\tilde{\varphi}_n}$ must match the sub-configuration $\alpha'_{i_1},\alpha'_{i_2},\ldots$.

(2) $\Rightarrow$ (3): Suppose that $\ket{\psi_m}$ and $\ket{\psi_{m'}}$ that both participate the expansion of the same $\ket*{\tilde{\varphi}_n}$ match the two sub-configurations $\alpha_{i_1},\alpha_{i_2},\ldots$ and $\alpha'_{i_1},\alpha'_{i_2},\ldots$ (supported on boundary subregion $A$) respectively. Then, according to the definition of $\ket*{\tilde{\varphi}_n}$ as given by Eq.~\ref{cbs1}, we have $\ket{\psi_{m'}}=(T(\boldsymbol{x})T(\boldsymbol{x}')\cdots)\ket{\psi_m}$.

To show what this equality means, we can write $(T(\boldsymbol{x})T(\boldsymbol{x}')\cdots)=(S^{\sigma_i}_iS^{\sigma_j}_j\cdots)$. And denoting the support of this product by $A_0\cup A'_0$ with $A_0\subset A$ and $A'_0\subset \overline{A}$, we can further write $(T(\boldsymbol{x})T(\boldsymbol{x}')\cdots)=(\otimes_{i\in A_0}S^{\sigma_i}_i)(\otimes_{j\in A'_0}S^{\sigma_j}_j)$. Then, $\ket{\psi_{m'}}=(T(\boldsymbol{x})T(\boldsymbol{x}')\cdots)\ket{\psi_m}$ simply means that $(\otimes_{i\in A_0}S^{\sigma_i}_i)$ exactly changes the sub-configuration $\alpha_{i_1},\alpha_{i_2},\ldots$ into $\alpha'_{i_1},\alpha'_{i_2},\ldots$, i.e., $\ket{\alpha'_{i_1}\alpha'_{i_2}\cdots}_{A}=(\otimes_{i\in A_0}S^{\sigma_i}_i)\ket{\alpha_{i_1}\alpha_{i_2}\cdots}_{A}$.

Now, we have
\begin{align*}
\begin{split}
&(T(\boldsymbol{x})T(\boldsymbol{x}')\cdots)(\otimes_{i\in A}\dyad{\alpha_i})(T(\boldsymbol{x})T(\boldsymbol{x}')\cdots)\\
=&(\otimes_{i\in A_0}S^{\sigma_i}_i)(\otimes_{i\in A}\dyad{\alpha_i})(\otimes_{i\in A_0}S^{\sigma_i}_i)\\
=&(\otimes_{i\in A_0}S^{\sigma_i}_i)\dyad{\alpha_{i_1}\alpha_{i_2}\cdots}_A(\otimes_{i\in A_0}S^{\sigma_i}_i)\\
=&\dyad{\alpha'_{i_1}\alpha'_{i_2}\cdots}_A\\
=&\otimes_{i\in A}\dyad{\alpha'_i},
\end{split}
\end{align*}
as desired. Note that in the first step of the above derivation we utilize the fact that $(\otimes_{j\in A'_0}S^{\sigma_j}_j)$ lies outside the support of $(\otimes_{i\in A}\dyad{\alpha_i})$ and hence cancels from the two sides, i.e., $(\otimes_{j\in A'_0}S^{\sigma_j}_j)(\otimes_{j\in A'_0}S^{\sigma_j}_j)=\mathds{1}$.

(3) $\Rightarrow$ (1): We assume that $\otimes_{i\in A}\dyad{\alpha'_i}$ equals $(T(\boldsymbol{x})T(\boldsymbol{x}')\cdots)(\otimes_{i\in A}\dyad{\alpha_i})(T(\boldsymbol{x})T(\boldsymbol{x}')\cdots)$. Then , according to Eq.~\ref{qoa1}, $Q'_A$ equals the summation running through all the possible terms of the form 
\begin{multline}\label{appqaa1}
(T(\boldsymbol{x}'')T(\boldsymbol{x}''')\cdots)\big[(T(\boldsymbol{x})T(\boldsymbol{x}')\cdots)(\otimes_{i\in A}\dyad{\alpha_i})\\
\times(T(\boldsymbol{x})T(\boldsymbol{x}')\cdots)\big](T(\boldsymbol{x}'')T(\boldsymbol{x}''')\cdots).
\end{multline}
Note that the product $(T(\boldsymbol{x}'')T(\boldsymbol{x}''')\cdots)(T(\boldsymbol{x})T(\boldsymbol{x}')\cdots)$ is again a product of the assembled gates so that the term in Eq.~\ref{appqaa1} is also a term in the definition of $Q_A$ as given by Eq.~\ref{qoa1}. Indeed, since we have $(T(\boldsymbol{x})T(\boldsymbol{x}')\cdots)(T(\boldsymbol{x})T(\boldsymbol{x}')\cdots)=\mathds{1}$, multiplying by $(T(\boldsymbol{x})T(\boldsymbol{x}')\cdots)$ simply maps the collection of all the possible products of the assembled gates one-to-one onto the collection itself. And hence the terms in the summation for $Q'_A$, i.e., of the form as in Eq.~\ref{appqaa1}, are exactly the terms in the summation for $Q_A$. That is, $Q_A=Q'_A$ as desired.

\subsection{Proof of Prop.~\ref{qaaa}}\label{poqaaa}
In the proof, we check with the condition for reconstructing $\dyad{\overline{\boldsymbol{\beta}}_{\boldsymbol{x}}}$ on $A$. That is \emph{i)} $[Q_A,P_{\mathrm{code}}]=0$; \emph{ii)} $Q_A\ket*{\tilde{\varphi}_n}\ne0$ if and only if $\ket*{\tilde{\varphi}_n}=R\ket{\boldsymbol{\beta}_{\boldsymbol{1}}\cdots\overline{\boldsymbol{\beta}}_{\boldsymbol{x}}\cdots\boldsymbol{\beta}_{K}}$, irrespective of $\boldsymbol{\beta}_{\boldsymbol{x}'}$ ($\boldsymbol{x}'\ne\boldsymbol{x}$); \emph{iii)} in the nonzero case, we have $Q_A\ket*{\tilde{\varphi}_n}=\ket*{\tilde{\varphi}_n}$.

\emph{i)} It is easy to show that $Q_A$ commutes with $P_{\mathrm{code}}$, since every $Q_A(\alpha_{i_1},\alpha_{i_2},\ldots)$ does (see Prop.~\ref{qa}).

\emph{iii)} In terms of Cor.~\ref{qaa}, we can also show that if $Q_A\ket*{\tilde{\varphi}_n}\ne0$, then $Q_A\ket*{\tilde{\varphi}_n}=\ket*{\tilde{\varphi}_n}$. Indeed, if $Q_A\ket*{\tilde{\varphi}_n}\ne0$, then for some $Q_A(\alpha_{i_1},\alpha_{i_2},\ldots)$, we have $Q_A(\alpha_{i_1},\alpha_{i_2},\ldots)\ket*{\tilde{\varphi}_n}=c_A\ket*{\tilde{\varphi}_n}$. To show that $Q_A\ket*{\tilde{\varphi}_n}=\ket*{\tilde{\varphi}_n}$, it suffices to show that if we have both $Q_A(\alpha_{i_1},\alpha_{i_2},\ldots)\ket*{\tilde{\varphi}_n}=c_A\ket*{\tilde{\varphi}_n}$ and $Q_A(\alpha'_{i_1},\alpha'_{i_2},\ldots)\ket*{\tilde{\varphi}_n}=c_A\ket*{\tilde{\varphi}_n}$, then we must have $Q_A(\alpha_{i_1},\alpha_{i_2},\ldots)=Q_A(\alpha'_{i_1},\alpha'_{i_2},\ldots)$, i.e., the operator only appears once in the summation for $Q_A$ defined in Eq.~\ref{qoa2}. Indeed, this is guaranteed by Prop.~\ref{qa} and Cor.~\ref{qaa}. According to Prop.~\ref{qa}, the two equalities means that sub-configurations $\alpha_{i_1},\alpha_{i_2},\ldots$ and $\alpha'_{i_1},\alpha'_{i_2},\ldots$ matches some qudit-product-states $\ket{\psi_m}$ and $\ket{\psi_{m'}}$ in the expansion of $\ket*{\tilde{\varphi}_n}$ respectively. Then, Cor.~\ref{qaa} implies that in this case $Q_A(\alpha_{i_1},\alpha_{i_2},\ldots)$ must be equal to $Q_A(\alpha'_{i_1},\alpha'_{i_2},\ldots)$.

\emph{ii)} It is clear from Eq.~\ref{collection} and Eq.~\ref{qoa2} that for any $\ket*{\tilde{\varphi}_n}=R\ket{\boldsymbol{\beta}_{\boldsymbol{1}}\cdots\overline{\boldsymbol{\beta}}_{\boldsymbol{x}}\cdots\boldsymbol{\beta}_{K}}$, irrespective of $\boldsymbol{\beta}_{\boldsymbol{x}'}$ ($\boldsymbol{x}'\ne\boldsymbol{x}$), some sub-configuration $\alpha_{i_1},\alpha_{i_2},\ldots$ matches the $\ket*{\tilde{\varphi}_n}$, and hence $Q_A(\alpha_{i_1},\alpha_{i_2},\ldots)\ket*{\tilde{\varphi}_n}=c_A\ket*{\tilde{\varphi}_n}$, $Q_A\ket*{\tilde{\varphi}_n}\ne0$. Then, to ensure that $Q_A$ reconstructs $\dyad{\overline{\boldsymbol{\beta}}_{\boldsymbol{x}}}$ on $A$, we are only left to prove that if $Q_A\ket*{\tilde{\varphi}_n}\ne0$, then $\ket*{\tilde{\varphi}_n}=R\ket{\boldsymbol{\beta}'_{\boldsymbol{1}}\cdots\overline{\boldsymbol{\beta}}_{\boldsymbol{x}}\cdots\boldsymbol{\beta}'_{K}}$.

Indeed, $Q_A\ket*{\tilde{\varphi}_n}\ne0$ implies $Q_A(\alpha_{i_1},\alpha_{i_2},\ldots)\ket*{\tilde{\varphi}_n}\ne0$ for some sub-configuration $\alpha_{i_1},\alpha_{i_2},\ldots$ specified in Eq.~\ref{collection}, i.e., the sub-configuration matches the $\ket*{\tilde{\varphi}_n}$ (and the entanglement patterns). Hence, if the condition in the proposition is satisfied, i.e., any entanglement pattern matching a sub-configuration specified by Eq.~\ref{collection} must correspond to some $\ket{\boldsymbol{\beta}_{\boldsymbol{1}}\cdots\overline{\boldsymbol{\beta}}_{\boldsymbol{x}}\cdots\boldsymbol{\beta}_{K}}$ in the encoding, it is clear that we have $\ket*{\tilde{\varphi}_n}=R\ket{\boldsymbol{\beta}'_{\boldsymbol{1}}\cdots\overline{\boldsymbol{\beta}}_{\boldsymbol{x}}\cdots\boldsymbol{\beta}'_{K}}$.

\subsection{Proof of Prop.~\ref{qaaaa}}\label{poqaaaa}
\subsubsection{Condition (1)}
Assume that condition (1) is satisfied. Then, starting from the qudit $i$, any non-repeating connected path linking it to $j'$ or $l$ (within the same block) has the form of $i,i_1,i_2,\ldots,\bar{i},\ldots$, as illustrated in Fig.~\ref{app3a}. Here, $\bar{i}$ is the first qudit belonging to $\overline{A}$ in the path. Going through all such paths, we can denote by $\bar{\mathscr{I}}\subset\overline{A}$ the collection of all the $\bar{i}$ qudits (see Fig.~\ref{app3a}, the black-colored). Similarly, in the another block, we can specify the collections $\bar{\mathscr{K}}$ (see Fig.~\ref{app3a}, the black colored). In the following, we show that there exists a reconstruction of $\widetilde{\boldsymbol{S}}^{\boldsymbol{\sigma_{\boldsymbol{x}}}}_{\boldsymbol{x}}\widetilde{\boldsymbol{S}}^{\boldsymbol{\sigma_{\boldsymbol{x}'}}}_{\boldsymbol{x}'}\cdots$ (including $\widetilde{\boldsymbol{S}}^{\boldsymbol{\sigma_{\boldsymbol{x}}}}_{\boldsymbol{x}}$ for the central hole $\boldsymbol{x}$ and with $\boldsymbol{\sigma}_{\boldsymbol{x}}\ne0$) supported on $\bar{\mathscr{I}}\cup\bar{\mathscr{K}}$, as described in Sec.~\ref{trs} (see Fig.~\ref{15c}).

We only consider the case where $i$ belongs to $A$. The case with $i\in\overline{A}$ can be viewed as a trivial case in our description. We begin with describing a process to specify a tree of paths lying within $A$ and within the same big block as $i$ (see Fig.~\ref{app3a}, the pruple-colored). In the tree, the path will start at qudit $i$, and end at a qudit in $A$ neighboring to a qudit in $\bar{\mathscr{I}}$ or a qudit with all its three neighbors already appeared in the branches. In the notation for each qudit $i^n_{b_n}$ in a path, the superscript $n$ indicates the step in which the qudit is included in the tree, while the subscript $b_n=1,2$ indicates the branch the qudit is derived from some $i^{n-1}_{b_{n-1}}$. Note that for each $i^n_{b_n}$, there are only three neighbors, one is the $i^{n-1}_{b_{n-1}}$ and the other two correspond to $b_n=1,2$. It can be expected that till the $n$th step before the process ends, a path can be of the form $i,i^1_{b_1},i^2_{b_2},\ldots,i^n_{b_n}$. Our specification will be step-wise, and along these steps, we keep building a product of the gates, and also enlarging a collection $\mathscr{I}\in A$ that consists of the qudits already present in these ``growing'' paths.

We initiate the product of gates before the first step. That is $T_{ii^1_1}T_{ii^1_2}$, i.e., on $i$ and its only two neighbors $i^1_1$ and $i^1_2$ in the big block (the other neighbor is $i'$ on another block). In the first step, we derive two branch of paths $i,i^1_1$ and $i,i^1_2$ from $i$ to these two neighbors. If both $i^1_1$ and $i^1_2$ belong to $A$, we can define $\mathscr{I}=\{i,i^1_1,i^1_2\}$, and before going to the next step we update the product of gates by adding all new gates with supports including $i^1_1$ or $i^1_2$. Note that in the updated product of gates, no gate appears repeatedly.

Now we can keep deriving the tree of paths by considering the two neighbors for each of $i^1_1$ and $i^1_2$. There are general rules in deriving the paths.

Once we have reached $i^n_{b_n}$ in the $n$th step, we add $i^n_{b_n}$ to $\mathscr{I}$ and add all new gates with supports including $i^n_{b_n}$ to the product of gates. Then, for whether go to the next step from $i^n_{b_n}$, there are two possibilities on its only two neighbors (excluding the one from which $i^n_{b_n}$ is derived): \emph{i)} If for both of the two neighbors, each one either belongs to $\overline{A}$, or has already appeared in the established part of the tree, we simply end this path with $i^n_{b_n}$. \emph{ii)} If one neighbor still belongs to $A$ and is new to the established part of the tree, then we go to the next step with this new neighbor.

\begin{center}
\begin{figure}[ht]
\centering
    \includegraphics[width=8.5cm]{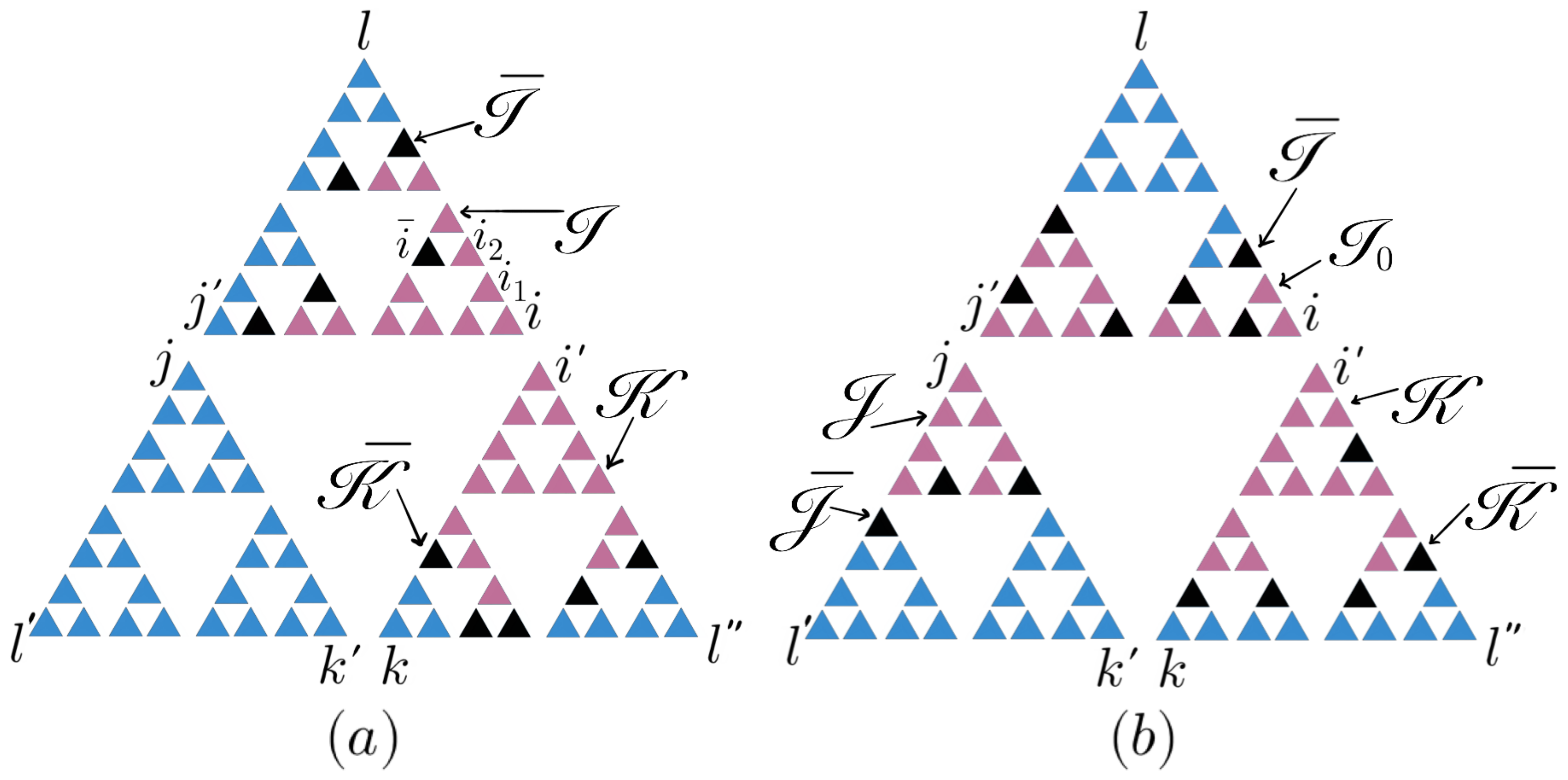}   
\phantomsubfloat{\label{app3a}}\phantomsubfloat{\label{app3b}}
\caption{Purple-colored represent qudits in $\mathscr{I}\subset A$ for (a), in $\mathscr{I}_0\subset A$ for (b), in $\mathscr{K}\subset A$ or in $\mathscr{J}\subset A$. Note that for simplicity not all qudits in $A$ are indicated. Black-colored represent qudits in $\bar{\mathscr{I}}\subset \overline{A}$, in $\bar{\mathscr{K}}\subset \overline{A}$ or in $\bar{\mathscr{J}}\subset \overline{A}$. Similarly, not all qudits in $\overline{A}$ are indicated. (a) and (b) describe the cases for condition (1) and (2) respectively, and are simply generalization of Fig.~\ref{15c} and \ref{15d}, but considering the reconstruction on $\overline{A}$ instead of on $A$.}
\label{figapp3}
\end{figure}
\end{center}

According to the rules, starting from the paths $i,i^1_1$ and $i,i^1_2$ that we have established, after finishing the $n$th step, the updated $\mathscr{I}$ always lies within $A$, and the product of gates $T_{ii^1_1}T_{ii^1_2}\cdots T_{i^n_{b_n}i^{n+1}_{b_{n+1}}}$ always consists of all distinct gates with supports including a qudit in $\mathscr{I}$ (except that $T_{ii'}$ is not included). Furthermore, for any gates in the product, its support either consists of two qudits in $\mathscr{I}$, or one qudit in $\mathscr{I}$ with the other in $A$ for extending the path, or one in $\mathscr{I}$ and the other in $\overline{A}$. Then, obviously, according to what we have mentioned in the proof App.~\ref{pogates1}, for each qudit in $\mathscr{I}$ (except $i$), all the three gates including this qudit participate in the product, and hence the qudit lies outside the support of the product of $T_{ii^1_1}T_{ii^1_2}\cdots T_{i^n_{b_n}i^{n+1}_{b_{n+1}}}$.

According to condition (1), the tree will end in the situation where possibility \emph{i)} applies to every ending qudit. It follows that in the last step, for any gates in the product, either its support consists of two qudit in the updated $\mathscr{I}$, or one qudit in $\mathscr{I}$ and the other in $\overline{A}$. Hence the support of the whole product of the gates will be on $i$ and qudits in $\bar{\mathscr{I}}\subset\overline{A}$. It is noticeable that these qudits in $\bar{\mathscr{I}}\subset\overline{A}$ are generically of very limited number. As shown in Fig.~\ref{app3a}, these qudits do not need to go through the whole intersection of $\overline{A}$ and the big block. And it is easy to show that their total number scales linearly or sublinearly on the linear size $N_0$. Indeed, the number must be upper bounded by half of the lateral which can be viewed as the ends of the tree if we completely derive from $i$ through the whole big block.

Now, we can apply the whole process to the other big block including $i'$ and result in a product of gates with support on $i'$ and qudits in $\bar{\mathscr{K}}\subset\overline{A}$. Then, according to what we have shown in Sec.~\ref{trs}, the combination of the two products together with the gate $T_{ii'}$ will be a reconstruction of $\widetilde{\boldsymbol{S}}^{\boldsymbol{\sigma_{\boldsymbol{x}}}}_{\boldsymbol{x}}\widetilde{\boldsymbol{S}}^{\boldsymbol{\sigma_{\boldsymbol{x}'}}}_{\boldsymbol{x}'}\cdots$ ($\boldsymbol{\sigma}_{\boldsymbol{x}}\ne\boldsymbol{0}$) on $\overline{A}$.

\subsubsection{Condition (2)}
Similar to the above arguments, we can define $\bar{\mathscr{J}}$ and $\bar{\mathscr{K}}$. Accordingly, we can also define a product of gates with support on $j$ together with qudits in $\bar{\mathscr{J}}$, and a product of gates with support on $i'$ together with qudits in $\bar{\mathscr{K}}$ (see Fig.~\ref{app3b}).

In addition, on the block including $l$, we repeat the process to define a tree of path starting from $l$ and define $\mathscr{I}$. Note that here $\mathscr{I}$ has different geometric meaning from that in the above arguments, since the tree starts from $l$, instead of from $i$. As another difference, instead of considering the product of gates defined along the tree, we define $\mathscr{I}_0$ as consisting of all qudits in $A$, in the block, but not in $\mathscr{I}$, and consider the product of all gates with one supporting qudit belonging to $\mathscr{I}_0$ (see Fig.~\ref{app3b}). Obviously, for each gate in the considered product, say $T_{\check{i}\hat{i}}$, since by definition one qudit $\check{i}\in A$ does not belong to $\mathscr{I}$ (not present in the tree), either both $\check{i}$ and $\hat{i}$ belong to $\mathscr{I}_0$, or $\hat{i}$ belongs to $\overline{A}$, or be qudits $i'$ or $j$ in the other blocks. Indeed, if it is not the case, i.e., $\hat{i}$ belongs to $\mathscr{I}$, a path in the tree will extend to $\hat{i}$, which leads to contradiction. Consequently, the support of the considered product of gates is on qudits in $\overline{A}$, possibly together with $i'$ and $j$.

Now, we combine the three products. If $T_{ii'}$ or $T_{jj'}$ are not included, i.e., when $i$ or $j'$ does not belong to $\mathscr{I}_0$ and hence not in $A$, we add them in. It is easy to show that the support of the whole product is within $\overline{A}$. Then according to similar arguments as in Sec.~\ref{trs}, the finial product of gates reconstructs $\widetilde{\boldsymbol{S}}^{\boldsymbol{\sigma_{\boldsymbol{x}}}}_{\boldsymbol{x}}\widetilde{\boldsymbol{S}}^{\boldsymbol{\sigma_{\boldsymbol{x}'}}}_{\boldsymbol{x}'}\cdots$ ($\boldsymbol{\sigma}_{\boldsymbol{x}}\ne\boldsymbol{0}$) on $\overline{A}$.

\subsection{Proofs for Sec.~\ref{exsplitsec}}\label{posplit}

\subsubsection{Detailed arguments for type-1 split (the first example)}
We show that for certain specific decomposition $\mathfrak{e}_{\boldsymbol{x}}=\oplus_{\mu}(\mathfrak{e}_{\boldsymbol{x}a}^{\mu}\otimes\mathfrak{e}_{\boldsymbol{x}\overline{a}}^{\mu})$
the von Neumann algebras generated by the collections of operators
\begin{align*}
\begin{split}
&\{\widetilde{\boldsymbol{S}}^{\boldsymbol{1}}_{\boldsymbol{x}},~ \mathds{1}_{\mathfrak{e}_{\boldsymbol{x}}}\},\\
&\{\widetilde{\boldsymbol{S}}^{\boldsymbol{2}}_{\boldsymbol{x}},~\widetilde{\boldsymbol{S}}^{\boldsymbol{3}}_{\boldsymbol{x}},~ \dyad{\boldsymbol0}+\dyad{\boldsymbol1},~ \dyad{\boldsymbol2}+\dyad{\boldsymbol3},~ \mathds{1}_{\mathfrak{e}_{\boldsymbol{x}}}\}
\end{split}
\end{align*}
are exactly the von Neumann algebra consisting of operators of the form $\sum_{\mu}\widetilde{\boldsymbol{O}}_{\boldsymbol{x}a}^{\mu}\otimes\mathds{1}^{\mu}_{\boldsymbol{x}\overline{a}}$
and the one consisting of operators of the form $\sum_{\mu}\mathds{1}^{\mu}_{\boldsymbol{x}a}\otimes\widetilde{\boldsymbol{O}}_{\boldsymbol{x}\overline{a}}^{\mu}$ respectively.

A convenient way to specify such a decomposition is identifying the basis states $\ket{\boldsymbol{\beta}^{\mu}_{\boldsymbol{x}a}}\otimes\ket{\boldsymbol{\beta}^{\mu}_{\boldsymbol{x}\overline{a}}}$ (with $\ket{\boldsymbol{\beta}^{\mu}_{\boldsymbol{x}a}}\in\mathfrak{e}_{\boldsymbol{x}a}^{\mu}$ and $\ket{\boldsymbol{\beta}^{\mu}_{\boldsymbol{x}\overline{a}}}\in\mathfrak{e}_{\boldsymbol{x}\overline{a}}^{\mu}$) of each tensor product as certain orthonormal states in $\mathfrak{e}_{\boldsymbol{x}}$. Here, by ``identifying as'', we mean mapping each tensor product $\mathfrak{e}_{\boldsymbol{x}a}^{\mu}\otimes\mathfrak{e}_{\boldsymbol{x}\overline{a}}^{\mu}$ onto a subspace of $\mathfrak{e}_{\boldsymbol{x}}$ through an isometry $\iota^{\mu}$ with $\iota^{\mu}(\iota^{\mu})^+=\widetilde{\boldsymbol{P}}^{\mu}_{\boldsymbol{x}}$. In the arguments, it is sufficient to use ``$\mapsto$'' to specify the details of the isometry, and hence we do not explicitly work with the notation ``$\iota^{\mu}$''. Furthermore, we will also use ``$\mapsto$'' for the detail of the mapping of operators $\iota^{\mu}\boldsymbol{\cdot}(\iota^{\mu})^+$. In this sense, we can also ``identify'' $\widetilde{\boldsymbol{O}}_{\boldsymbol{x}a}^{\mu}\otimes\mathds{1}^{\mu}_{\boldsymbol{x}\overline{a}}$ and $\mathds{1}^{\mu}_{\boldsymbol{x}a}\otimes\widetilde{\boldsymbol{O}}_{\boldsymbol{x}\overline{a}}^{\mu}$ as operators on $\mathfrak{e}_{\boldsymbol{x}}$.

We specify the decomposition structure as follows. 
\begin{align}\label{splitex1}
\begin{split}
&\mu=1,2,~\mathfrak{e}_{\boldsymbol{x}a}^{1}=\mathbb{C},~\mathfrak{e}_{\boldsymbol{x}a}^{2}=\mathbb{C},~\mathfrak{e}_{\boldsymbol{x}\overline{a}}^{1}=\mathbb{C}^2,~\mathfrak{e}_{\boldsymbol{x}\overline{a}}^{2}=\mathbb{C}^2\\
&\ket{\boldsymbol{\beta}^1_{\boldsymbol{x}a}}=\ket{1}^1_a, \quad \ket{\boldsymbol{\beta}^2_{\boldsymbol{x}a}}=\ket{1}^2_a,\\
&\ket{\boldsymbol{\beta}^1_{\boldsymbol{x}\overline{a}}}=\ket{\uparrow}^1_{\overline{a}},~\ket{\downarrow}^1_{\overline{a}}, \quad \ket{\boldsymbol{\beta}^2_{\boldsymbol{x}\overline{a}}}=\ket{\uparrow}^2_{\overline{a}},~\ket{\downarrow}^2_{\overline{a}},\\
&\ket{1}^1_a\otimes\ket{\uparrow}^1_{\overline{a}}\mapsto(1/\sqrt{2})(\ket{\boldsymbol{0}}+\ket{\boldsymbol{1}})\in\mathfrak{e}_{\boldsymbol{x}},\\
&\ket{1}^1_a\otimes\ket{\downarrow}^1_{\overline{a}}\mapsto(1/\sqrt{2})(\ket{\boldsymbol{2}}+\ket{\boldsymbol{3}})\in\mathfrak{e}_{\boldsymbol{x}},\\
&\ket{1}^2_a\otimes\ket{\uparrow}^2_{\overline{a}}\mapsto(1/\sqrt{2})(\ket{\boldsymbol{0}}-\ket{\boldsymbol{1}})\in\mathfrak{e}_{\boldsymbol{x}},\\
&\ket{1}^2_a\otimes\ket{\downarrow}^2_{\overline{a}}\mapsto(1/\sqrt{2})(\ket{\boldsymbol{2}}-\ket{\boldsymbol{3}})\in\mathfrak{e}_{\boldsymbol{x}},\\
&\widetilde{\boldsymbol{P}}^{1}_{\boldsymbol{x}}=(1/2)(\dyad{\boldsymbol{0}+\boldsymbol{1}}{\boldsymbol{0}+\boldsymbol{1}}+\dyad{\boldsymbol{2}+\boldsymbol{3}}{\boldsymbol{2}+\boldsymbol{3}}),\\
&\widetilde{\boldsymbol{P}}^{2}_{\boldsymbol{x}}=(1/2)(\dyad{\boldsymbol{0}-\boldsymbol{1}}{\boldsymbol{0}-\boldsymbol{1}}+\dyad{\boldsymbol{2}-\boldsymbol{3}}{\boldsymbol{2}-\boldsymbol{3}}).
\end{split}
\end{align}
Here, we view $\mathbb{C}$ as a one-dimensional Hilbert space consisting of states $c\ket{1}$ ($c$ a complex number), and view $\mathbb{C}^2$ as a two-dimensional Hilbert space spanned by $\ket{\uparrow},\ket{\downarrow}$. Note that $\dyad{\boldsymbol{0}+\boldsymbol{1}}{\boldsymbol{0}+\boldsymbol{1}}$ is abbreviated for $(\ket{\boldsymbol{0}}+\ket{\boldsymbol{1}})(\bra{\boldsymbol{0}}+\bra{\boldsymbol{1}})$.

Now, all operators of the form $\sum_{\mu}\widetilde{\boldsymbol{O}}_{\boldsymbol{x}a}^{\mu}\otimes\mathds{1}^{\mu}_{\boldsymbol{x}\overline{a}}$ are simply linear expansions of
\begin{align}
\begin{split}
&\dyad{1}^1_a\otimes\mathds{1}^{1}_{\overline{a}}\mapsto\widetilde{\boldsymbol{P}}^{1}_{\boldsymbol{x}}=\mathds{1}_{\mathfrak{e}_{\boldsymbol{x}}}+\widetilde{\boldsymbol{S}}^{\boldsymbol{1}}_{\boldsymbol{x}},\\
&\dyad{1}^2_a\otimes\mathds{1}^{2}_{\overline{a}}\mapsto\widetilde{\boldsymbol{P}}^{2}_{\boldsymbol{x}}=\mathds{1}_{\mathfrak{e}_{\boldsymbol{x}}}-\widetilde{\boldsymbol{S}}^{\boldsymbol{1}}_{\boldsymbol{x}}.
\end{split}
\end{align}
Obviously, operators on the two sides of the above two equations expand each other. This has proved that the von Neumann algebras generated by the collection $\{\widetilde{\boldsymbol{S}}^{\boldsymbol{1}}_{\boldsymbol{x}},~ \mathds{1}_{\mathfrak{e}_{\boldsymbol{x}}}\}$ is exactly the von Neumann algebra consisting of operators of the form $\sum_{\mu}\widetilde{\boldsymbol{O}}_{\boldsymbol{x}a}^{\mu}\otimes\mathds{1}^{\mu}_{\boldsymbol{x}\overline{a}}$. Furthermore, since both $\mathfrak{e}_{\boldsymbol{x}a}^{1}$ and $\mathfrak{e}_{\boldsymbol{x}a}^{2}$ are one-dimensional, a $\sum_{\mu}\widetilde{\boldsymbol{O}}_{\boldsymbol{x}a}^{\mu}\otimes\mathds{1}^{\mu}_{\boldsymbol{x}\overline{a}}$ are identified simply as a linear combination of the projection operators. It follows that this von Neumann algebra is also the center of itself (see the discussion in Sec.~\ref{exsplitsec}).

Then, all the operators of the form $\sum_{\mu}\mathds{1}^{\mu}_{\boldsymbol{x}a}\otimes\widetilde{\boldsymbol{O}}_{\boldsymbol{x}\overline{a}}^{\mu}$ are simply linear expansions of
\begin{align}
\begin{split}
\mathds{1}^{1}_{a}\otimes\dyad{\uparrow}^1_{\overline{a}}\mapsto&(1/2)\dyad{\boldsymbol{0}+\boldsymbol{1}}\\
&=(1/2)(\dyad{\boldsymbol{0}}+\dyad{\boldsymbol{1}})(\mathds{1}_{\mathfrak{e}_{\boldsymbol{x}}}+\widetilde{\boldsymbol{S}}^{\boldsymbol{1}}_{\boldsymbol{x}})\\
\mathds{1}^{1}_{a}\otimes\dyad{\downarrow}^1_{\overline{a}}\mapsto&(1/2)\dyad{\boldsymbol{2}+\boldsymbol{3}}\\
&=(1/2)(\dyad{\boldsymbol{2}}+\dyad{\boldsymbol{3}})(\mathds{1}_{\mathfrak{e}_{\boldsymbol{x}}}+\widetilde{\boldsymbol{S}}^{\boldsymbol{1}}_{\boldsymbol{x}})\\
\mathds{1}^{1}_{a}\otimes\dyad{\uparrow}{\downarrow}^1_{\overline{a}}\mapsto&(1/2)\dyad{\boldsymbol{0}+\boldsymbol{1}}{\boldsymbol{2}+\boldsymbol{3}}\\
&=(1/2)(\dyad{\boldsymbol{0}}+\dyad{\boldsymbol{1}})(\widetilde{\boldsymbol{S}}^{\boldsymbol{2}}_{\boldsymbol{x}}+\widetilde{\boldsymbol{S}}^{\boldsymbol{3}}_{\boldsymbol{x}})\\
\mathds{1}^{1}_{a}\otimes\dyad{\downarrow}{\uparrow}^1_{\overline{a}}\mapsto&(1/2)\dyad{\boldsymbol{2}+\boldsymbol{3}}{\boldsymbol{0}+\boldsymbol{1}}\\
&=(1/2)(\dyad{\boldsymbol{2}}+\dyad{\boldsymbol{3}})(\widetilde{\boldsymbol{S}}^{\boldsymbol{2}}_{\boldsymbol{x}}+\widetilde{\boldsymbol{S}}^{\boldsymbol{3}}_{\boldsymbol{x}})\\
\mathds{1}^{2}_{a}\otimes\dyad{\uparrow}^2_{\overline{a}}\mapsto&(1/2)\dyad{\boldsymbol{0}-\boldsymbol{1}}\\
&=(1/2)(\dyad{\boldsymbol{0}}+\dyad{\boldsymbol{1}})(\mathds{1}_{\mathfrak{e}_{\boldsymbol{x}}}-\widetilde{\boldsymbol{S}}^{\boldsymbol{1}}_{\boldsymbol{x}})\\
\mathds{1}^{2}_{a}\otimes\dyad{\downarrow}^2_{\overline{a}}\mapsto&(1/2)\dyad{\boldsymbol{2}-\boldsymbol{3}}\\
&=(1/2)(\dyad{\boldsymbol{2}}+\dyad{\boldsymbol{3}})(\mathds{1}_{\mathfrak{e}_{\boldsymbol{x}}}-\widetilde{\boldsymbol{S}}^{\boldsymbol{1}}_{\boldsymbol{x}})\\
\mathds{1}^{2}_{a}\otimes\dyad{\uparrow}{\downarrow}^2_{\overline{a}}\mapsto&(1/2)\dyad{\boldsymbol{0}-\boldsymbol{1}}{\boldsymbol{2}-\boldsymbol{3}}\\
&=(1/2)(\dyad{\boldsymbol{0}}+\dyad{\boldsymbol{1}})(\widetilde{\boldsymbol{S}}^{\boldsymbol{2}}_{\boldsymbol{x}}-\widetilde{\boldsymbol{S}}^{\boldsymbol{3}}_{\boldsymbol{x}})\\
\mathds{1}^{2}_{a}\otimes\dyad{\downarrow}{\uparrow}^2_{\overline{a}}\mapsto&(1/2)\dyad{\boldsymbol{2}-\boldsymbol{3}}{\boldsymbol{0}-\boldsymbol{1}}\\
&=(1/2)(\dyad{\boldsymbol{2}}+\dyad{\boldsymbol{3}})(\widetilde{\boldsymbol{S}}^{\boldsymbol{2}}_{\boldsymbol{x}}-\widetilde{\boldsymbol{S}}^{\boldsymbol{3}}_{\boldsymbol{x}})
\end{split}
\end{align}
Note that we have $\widetilde{\boldsymbol{S}}^{\boldsymbol{1}}_{\boldsymbol{x}}=\widetilde{\boldsymbol{S}}^{\boldsymbol{2}}_{\boldsymbol{x}}\widetilde{\boldsymbol{S}}^{\boldsymbol{3}}_{\boldsymbol{x}}$ and $\dyad{\boldsymbol0}+\dyad{\boldsymbol1}+\dyad{\boldsymbol2}+\dyad{\boldsymbol3}=\mathds{1}_{\mathfrak{e}_{\boldsymbol{x}}}$. Then, in the above equations, it is clear that operators on the left side can be generated from $\{\widetilde{\boldsymbol{S}}^{\boldsymbol{2}}_{\boldsymbol{x}},\widetilde{\boldsymbol{S}}^{\boldsymbol{3}}_{\boldsymbol{x}}, \dyad{\boldsymbol0}+\dyad{\boldsymbol1},\dyad{\boldsymbol2}+\dyad{\boldsymbol3},\mathds{1}_{\mathfrak{e}_{\boldsymbol{x}}}\}$. On the other hand, through combination of the above equations, it will be easy to see that operators on the right can be linearly expanded by those on the left. This observation has proved that the von Neumann algebra generated by the collection $\{\widetilde{\boldsymbol{S}}^{\boldsymbol{2}}_{\boldsymbol{x}},\widetilde{\boldsymbol{S}}^{\boldsymbol{3}}_{\boldsymbol{x}}, \dyad{\boldsymbol0}+\dyad{\boldsymbol1},\dyad{\boldsymbol2}+\dyad{\boldsymbol3},\mathds{1}_{\mathfrak{e}_{\boldsymbol{x}}}\}$ is exactly the one consisting of operators of the form $\sum_{\mu}\mathds{1}^{\mu}_{\boldsymbol{x}a}\otimes\widetilde{\boldsymbol{O}}_{\boldsymbol{x}\overline{a}}^{\mu}$.

\subsubsection{Detailed arguments for type-2 split (the first example)}
For the first example of type-2 split, we consider the collections
\begin{align*}
\begin{split}
&\{\dyad{\boldsymbol{0}}+\dyad{\boldsymbol{1}},~\dyad{\boldsymbol{2}}+\dyad{\boldsymbol{3}},~\mathds{1}_{\mathfrak{e}_{\boldsymbol{x}}}\},\\
&\{\widetilde{\boldsymbol{S}}^{\boldsymbol{1}}_{\boldsymbol{x}},~\dyad{\boldsymbol{0}},~\dyad{\boldsymbol{1}},~\dyad{\boldsymbol{2}},~\dyad{\boldsymbol{3}},~\mathds{1}_{\mathfrak{e}_{\boldsymbol{x}}}\},
\end{split}
\end{align*}
which can be reconstructed on a boundary subregion $A$ and the its complement $\overline{A}$ respectively. To show that they generate two von Neumann algebras that are commutant to each other, we specify the decomposition $\mathfrak{e}_{\boldsymbol{x}}=\oplus_{\mu}(\mathfrak{e}_{\boldsymbol{x}a}^{\mu}\otimes\mathfrak{e}_{\boldsymbol{x}\overline{a}}^{\mu})$ as follows,
\begin{align}
\begin{split}
&\mu=1,2,~\mathfrak{e}_{\boldsymbol{x}a}^{1}=\mathbb{C},~\mathfrak{e}_{\boldsymbol{x}a}^{2}=\mathbb{C},~\mathfrak{e}_{\boldsymbol{x}\overline{a}}^{1}=\mathbb{C}^2,~\mathfrak{e}_{\boldsymbol{x}\overline{a}}^{2}=\mathbb{C}^2\\
&\ket{\boldsymbol{\beta}^1_{\boldsymbol{x}a}}=\ket{1}^1_a, \quad \ket{\boldsymbol{\beta}^2_{\boldsymbol{x}a}}=\ket{1}^2_a,\\
&\ket{\boldsymbol{\beta}^1_{\boldsymbol{x}\overline{a}}}=\ket{\uparrow}^1_{\overline{a}},~\ket{\downarrow}^1_{\overline{a}}, \quad \ket{\boldsymbol{\beta}^2_{\boldsymbol{x}\overline{a}}}=\ket{\uparrow}^2_{\overline{a}},~\ket{\downarrow}^2_{\overline{a}},\\
&\ket{1}^1_a\otimes\ket{\uparrow}^1_{\overline{a}}\mapsto\ket{\boldsymbol{0}}\in\mathfrak{e}_{\boldsymbol{x}},\\
&\ket{1}^1_a\otimes\ket{\downarrow}^1_{\overline{a}}\mapsto\ket{\boldsymbol{1}}\in\mathfrak{e}_{\boldsymbol{x}},\\
&\ket{1}^2_a\otimes\ket{\uparrow}^2_{\overline{a}}\mapsto\ket{\boldsymbol{2}}\in\mathfrak{e}_{\boldsymbol{x}},\\
&\ket{1}^2_a\otimes\ket{\downarrow}^2_{\overline{a}}\mapsto\ket{\boldsymbol{3}}\in\mathfrak{e}_{\boldsymbol{x}},\\
&\widetilde{\boldsymbol{P}}^{1}_{\boldsymbol{x}}=\dyad{\boldsymbol{0}}+\dyad{\boldsymbol{1}},~\widetilde{\boldsymbol{P}}^{2}_{\boldsymbol{x}}=\dyad{\boldsymbol{2}}+\dyad{\boldsymbol{3}}.
\end{split}
\end{align}

Now, all operators of the form $\sum_{\mu}\widetilde{\boldsymbol{O}}_{\boldsymbol{x}a}^{\mu}\otimes\mathds{1}^{\mu}_{\boldsymbol{x}\overline{a}}$ are simply linear expansions of
\begin{align}
\begin{split}
&\dyad{1}^1_a\otimes\mathds{1}^{1}_{\overline{a}}\mapsto\widetilde{\boldsymbol{P}}^{1}_{\boldsymbol{x}}=\dyad{\boldsymbol{0}}+\dyad{\boldsymbol{1}},\\
&\dyad{1}^2_a\otimes\mathds{1}^{2}_{\overline{a}}\mapsto\widetilde{\boldsymbol{P}}^{2}_{\boldsymbol{x}}=\dyad{\boldsymbol{2}}+\dyad{\boldsymbol{3}}.
\end{split}
\end{align}
Obviously, operators on the two sides of the above two equations expand each other. This has proved that the von Neumann algebras generated by the collection $\{\dyad{\boldsymbol{0}}+\dyad{\boldsymbol{1}},~\dyad{\boldsymbol{2}}+\dyad{\boldsymbol{3}},~\mathds{1}_{\mathfrak{e}_{\boldsymbol{x}}}\}$ is exactly the von Neumann algebra consisting of operators of the form $\sum_{\mu}\widetilde{\boldsymbol{O}}_{\boldsymbol{x}a}^{\mu}\otimes\mathds{1}^{\mu}_{\boldsymbol{x}\overline{a}}$. Similar to the case of type-1 split, this von Neumann algebra is also the center of itself.

Then, all the operators of the form $\sum_{\mu}\mathds{1}^{\mu}_{\boldsymbol{x}a}\otimes\widetilde{\boldsymbol{O}}_{\boldsymbol{x}\overline{a}}^{\mu}$ are simply linear expansions of
\begin{align}
\begin{split}
&\mathds{1}^{1}_{a}\otimes\dyad{\uparrow}^1_{\overline{a}}\mapsto\dyad{\boldsymbol{0}}\\
&\mathds{1}^{1}_{a}\otimes\dyad{\downarrow}^1_{\overline{a}}\mapsto\dyad{\boldsymbol{1}}\\
&\mathds{1}^{1}_{a}\otimes\dyad{\uparrow}{\downarrow}^1_{\overline{a}}\mapsto\dyad{\boldsymbol{0}}{\boldsymbol{1}}=\dyad{\boldsymbol{0}}\widetilde{\boldsymbol{S}}^{\boldsymbol{1}}_{\boldsymbol{x}}\\
&\mathds{1}^{1}_{a}\otimes\dyad{\downarrow}{\uparrow}^1_{\overline{a}}\mapsto\dyad{\boldsymbol{1}}{\boldsymbol{0}}=\dyad{\boldsymbol{1}}\widetilde{\boldsymbol{S}}^{\boldsymbol{1}}_{\boldsymbol{x}}\\
&\mathds{1}^{2}_{a}\otimes\dyad{\uparrow}^2_{\overline{a}}\mapsto\dyad{\boldsymbol{2}}\\
&\mathds{1}^{2}_{a}\otimes\dyad{\downarrow}^2_{\overline{a}}\mapsto\dyad{\boldsymbol{3}}\\
&\mathds{1}^{2}_{a}\otimes\dyad{\uparrow}{\downarrow}^2_{\overline{a}}\mapsto\dyad{\boldsymbol{2}}{\boldsymbol{3}}=\dyad{\boldsymbol{2}}\widetilde{\boldsymbol{S}}^{\boldsymbol{1}}_{\boldsymbol{x}}\\
&\mathds{1}^{2}_{a}\otimes\dyad{\downarrow}{\uparrow}^2_{\overline{a}}\mapsto\dyad{\boldsymbol{3}}{\boldsymbol{2}}=\dyad{\boldsymbol{3}}\widetilde{\boldsymbol{S}}^{\boldsymbol{1}}_{\boldsymbol{x}}.
\end{split}
\end{align}
Obviously, operators on the two sides of the above two equations expand each other. Hence, the von Neumann algebra generated by $\{\widetilde{\boldsymbol{S}}^{\boldsymbol{1}}_{\boldsymbol{x}},~\dyad{\boldsymbol{0}},~\dyad{\boldsymbol{1}},~\dyad{\boldsymbol{2}},~\dyad{\boldsymbol{3}},~\mathds{1}_{\mathfrak{e}_{\boldsymbol{x}}}\}$ is exactly the one consisting of operators of the form $\sum_{\mu}\mathds{1}^{\mu}_{\boldsymbol{x}a}\otimes\widetilde{\boldsymbol{O}}_{\boldsymbol{x}\overline{a}}^{\mu}$.

\subsubsection{Detailed arguments for type-3 split (the first example)}
For the first example of type-3 split, we consider the collections
\begin{align*}
\begin{split}
&\{\widetilde{\boldsymbol{S}}^{\boldsymbol{3}}_{\boldsymbol{x}},~\dyad{\boldsymbol{0}}+\dyad{\boldsymbol{1}},~\dyad{\boldsymbol{2}}+\dyad{\boldsymbol{3}},~\mathds{1}_{\mathfrak{e}_{\boldsymbol{x}}}\},\\
&\{\widetilde{\boldsymbol{S}}^{\boldsymbol{1}}_{\boldsymbol{x}},~\dyad{\boldsymbol{0}}+\dyad{\boldsymbol{3}},~\dyad{\boldsymbol{1}}+\dyad{\boldsymbol{2}},~\mathds{1}_{\mathfrak{e}_{\boldsymbol{x}}}\},
\end{split}
\end{align*}
which can be reconstructed on a boundary subregion $A$ and the its complement $\overline{A}$ respectively. In the following, we show that they generate two von Neumann algebras that are commutant to each other, and the generated von Neumann algebras are two factors, i.e. the center is trivial. To that end, we specify the decomposition $\mathfrak{e}_{\boldsymbol{x}}=\oplus_{\mu}(\mathfrak{e}_{\boldsymbol{x}a}^{\mu}\otimes\mathfrak{e}_{\boldsymbol{x}\overline{a}}^{\mu})$ with only one sector, i.e. $\mathfrak{e}_{\boldsymbol{x}}=\mathfrak{e}_{\boldsymbol{x}a}^{1}\otimes\mathfrak{e}_{\boldsymbol{x}\overline{a}}^{1}$,
\begin{align}
\begin{split}
&\mathfrak{e}_{\boldsymbol{x}a}^{1}=\mathbb{C}^2,~\mathfrak{e}_{\boldsymbol{x}\overline{a}}^{1}=\mathbb{C}^2,\\
&\ket{\boldsymbol{\beta}^1_{\boldsymbol{x}a}}=\ket{\uparrow}^1_{a},~\ket{\downarrow}^1_{a}, \quad \ket{\boldsymbol{\beta}^1_{\boldsymbol{x}\overline{a}}}=\ket{\uparrow}^1_{\overline{a}},~\ket{\downarrow}^1_{\overline{a}},\\
&\ket{\uparrow}^1_a\otimes\ket{\uparrow}^1_{\overline{a}}\mapsto\ket{\boldsymbol{0}}\in\mathfrak{e}_{\boldsymbol{x}},\\
&\ket{\uparrow}^1_a\otimes\ket{\downarrow}^1_{\overline{a}}\mapsto\ket{\boldsymbol{1}}\in\mathfrak{e}_{\boldsymbol{x}},\\
&\ket{\downarrow}^1_a\otimes\ket{\uparrow}^1_{\overline{a}}\mapsto\ket{\boldsymbol{3}}\in\mathfrak{e}_{\boldsymbol{x}},\\
&\ket{\downarrow}^1_a\otimes\ket{\downarrow}^1_{\overline{a}}\mapsto\ket{\boldsymbol{2}}\in\mathfrak{e}_{\boldsymbol{x}}.
\end{split}
\end{align}

We want to show that the two generated von Neumann algebras consist of operators of the form $\widetilde{\boldsymbol{O}}_{\boldsymbol{x}a}^{1}\otimes\mathds{1}^{1}_{\boldsymbol{x}\overline{a}}$ and of the form $\mathds{1}^{1}_{\boldsymbol{x}a}\otimes\widetilde{\boldsymbol{O}}_{\boldsymbol{x}\overline{a}}^{1}$ respectively, and hence the two von Neumann algebras are simply $\mathbf{L}(\mathfrak{e}_{\boldsymbol{x}a}^{1})\otimes\mathbb{C}\mathds{1}_{\mathfrak{e}_{\boldsymbol{x}\overline{a}}^{1}}$ and $\mathbb{C}\mathds{1}_{\mathfrak{e}_{\boldsymbol{x}{a}}^{1}}\otimes\mathbf{L}(\mathfrak{e}_{\boldsymbol{x}\overline{a}}^{1})$, which are surely commutant to each other and with trivial center ($\mathbb{C}\mathds{1}_{\mathfrak{e}_{\boldsymbol{x}}}$).

According to the above specification of the decomposition, all operators of the form $\widetilde{\boldsymbol{O}}_{\boldsymbol{x}a}^{1}\otimes\mathds{1}^{1}_{\boldsymbol{x}\overline{a}}$ are simply linear expansions of
\begin{align}
\begin{split}
\dyad{\uparrow}^1_a\otimes\mathds{1}^{1}_{\overline{a}}\mapsto&\dyad{\boldsymbol{0}}+\dyad{\boldsymbol{1}}\\
\dyad{\downarrow}^1_a\otimes\mathds{1}^{1}_{\overline{a}}\mapsto&\dyad{\boldsymbol{2}}+\dyad{\boldsymbol{3}}\\
\dyad{\uparrow}{\downarrow}^1_a\otimes\mathds{1}^{1}_{\overline{a}}\mapsto&\dyad{\boldsymbol{0}}{\boldsymbol{3}}+\dyad{\boldsymbol{1}}{\boldsymbol{2}}\\
&=(\dyad{\boldsymbol{0}}+\dyad{\boldsymbol{1}})\widetilde{\boldsymbol{S}}^{\boldsymbol{3}}_{\boldsymbol{x}}\\
\dyad{\downarrow}{\uparrow}^1_a\otimes\mathds{1}^{1}_{\overline{a}}\mapsto&\dyad{\boldsymbol{3}}{\boldsymbol{0}}+\dyad{\boldsymbol{2}}{\boldsymbol{1}}\\
&=(\dyad{\boldsymbol{2}}+\dyad{\boldsymbol{3}})\widetilde{\boldsymbol{S}}^{\boldsymbol{3}}_{\boldsymbol{x}}.
\end{split}
\end{align}
It is obvious that these operators and the collection $\{\widetilde{\boldsymbol{S}}^{\boldsymbol{3}}_{\boldsymbol{x}},~\dyad{\boldsymbol{0}}+\dyad{\boldsymbol{1}},~\dyad{\boldsymbol{2}}+\dyad{\boldsymbol{3}},~\mathds{1}_{\mathfrak{e}_{\boldsymbol{x}}}\}$ expand each other.

Then, all operators of the form $\mathds{1}^{1}_{\boldsymbol{x}a}\otimes\widetilde{\boldsymbol{O}}_{\boldsymbol{x}\overline{a}}^{1}$ are simply linear expansions of
\begin{align}
\begin{split}
\mathds{1}^{1}_{a}\otimes\dyad{\uparrow}^1_{\overline{a}}\mapsto&\dyad{\boldsymbol{0}}+\dyad{\boldsymbol{3}}\\
\mathds{1}^{1}_{a}\otimes\dyad{\downarrow}^1_{\overline{a}}\mapsto&\dyad{\boldsymbol{1}}+\dyad{\boldsymbol{2}}\\
\mathds{1}^{1}_{a}\otimes\dyad{\uparrow}{\downarrow}^1_{\overline{a}}\mapsto&\dyad{\boldsymbol{0}}{\boldsymbol{1}}+\dyad{\boldsymbol{3}}{\boldsymbol{2}}\\
&=(\dyad{\boldsymbol{0}}+\dyad{\boldsymbol{3}})\widetilde{\boldsymbol{S}}^{\boldsymbol{1}}_{\boldsymbol{x}}\\
\mathds{1}^{1}_{a}\otimes\dyad{\downarrow}{\uparrow}^1_{\overline{a}}\mapsto&\dyad{\boldsymbol{1}}{\boldsymbol{0}}+\dyad{\boldsymbol{2}}{\boldsymbol{3}}\\
&=(\dyad{\boldsymbol{1}}+\dyad{\boldsymbol{2}})\widetilde{\boldsymbol{S}}^{\boldsymbol{1}}_{\boldsymbol{x}}.
\end{split}
\end{align}
Again, it is clear that these operators and the collection $\{\widetilde{\boldsymbol{S}}^{\boldsymbol{1}}_{\boldsymbol{x}},~\dyad{\boldsymbol{0}}+\dyad{\boldsymbol{3}},~\dyad{\boldsymbol{1}}+\dyad{\boldsymbol{2}},~\mathds{1}_{\mathfrak{e}_{\boldsymbol{x}}}\}$ expand each other. And hence the desired result has been proved.

\subsection{Proof of Prop.~\ref{sova}}\label{posova}
We firstly show that $\otimes_{\boldsymbol{x}}\boldsymbol{\mathcal{M}}_a(\boldsymbol{x})$ and $\otimes_{\boldsymbol{x}}\boldsymbol{\mathcal{M}}_{\overline{a}}(\boldsymbol{x})$ are included in $R^+\mathcal{M}_A R$ and $R^+\mathcal{M}_{\overline{A}} R$ respectively, and apply their alternative expression in Eq.~\ref{structure2} later.

We note that $\otimes_{\boldsymbol{x}}\boldsymbol{\mathcal{M}}_a(\boldsymbol{x})$ consists of all linear combinations of operators of the form $\cdots\otimes\widetilde{\boldsymbol{O}}_{\boldsymbol{x}}\otimes\widetilde{\boldsymbol{O}}_{\boldsymbol{x}'}\otimes\widetilde{\boldsymbol{O}}_{\boldsymbol{x}''}\otimes\cdots$ with $\widetilde{\boldsymbol{O}}_{\boldsymbol{x}}\in\boldsymbol{\mathcal{M}}_a(\boldsymbol{x}),\widetilde{\boldsymbol{O}}_{\boldsymbol{x}'}\in\boldsymbol{\mathcal{M}}_a(\boldsymbol{x}'),\widetilde{\boldsymbol{O}}_{\boldsymbol{x}''}\in\boldsymbol{\mathcal{M}}_a(\boldsymbol{x}''),\dots$. And such a tensor product of operators is simply a product of operators as
\begin{align*}
\begin{split}
&\cdots\otimes\widetilde{\boldsymbol{O}}_{\boldsymbol{x}}\otimes\widetilde{\boldsymbol{O}}_{\boldsymbol{x}'}\otimes\widetilde{\boldsymbol{O}}_{\boldsymbol{x}''}\otimes\cdots\\
&=\cdots(\cdots\otimes\widetilde{\boldsymbol{O}}_{\boldsymbol{x}}\otimes\mathds{1}_{\mathfrak{e}_{\boldsymbol{x}'}}\otimes\mathds{1}_{\mathfrak{e}_{\boldsymbol{x}''}}\otimes\cdots)(\cdots\otimes\mathds{1}_{\mathfrak{e}_{\boldsymbol{x}}}\otimes\widetilde{\boldsymbol{O}}_{\boldsymbol{x}'}\\
&\otimes\mathds{1}_{\mathfrak{e}_{\boldsymbol{x}''}}\otimes\cdots)(\cdots\otimes\mathds{1}_{\mathfrak{e}_{\boldsymbol{x}}}\otimes\mathds{1}_{\mathfrak{e}_{\boldsymbol{x}'}}\otimes\widetilde{\boldsymbol{O}}_{\boldsymbol{x}''}\otimes\cdots)\cdots
\end{split}
\end{align*}
in which, each bracket belongs to one $\cdots\otimes\mathbb{C}\mathds{1}_{\mathfrak{e}_{\boldsymbol{x}'}}\otimes\boldsymbol{\mathcal{M}}_a(\boldsymbol{x})\otimes\mathbb{C}\mathds{1}_{\mathfrak{e}_{\boldsymbol{x}''}}\otimes\cdots$ and hence belongs to the von Neumann algebra $R^+\mathcal{M}_A R$ (see Eq.~\ref{structure0}). It follows that the tensor product $\cdots\otimes\widetilde{\boldsymbol{O}}_{\boldsymbol{x}}\otimes\widetilde{\boldsymbol{O}}_{\boldsymbol{x}'}\otimes\widetilde{\boldsymbol{O}}_{\boldsymbol{x}''}\otimes\cdots$, as product of elements in $R^+\mathcal{M}_A R$ must also belong to $R^+\mathcal{M}_A R$, and so are the linear combinations. Hence, we conclude that $\otimes_{\boldsymbol{x}}\boldsymbol{\mathcal{M}}_a(\boldsymbol{x})\subset R^+\mathcal{M}_A R$, and with similar arguments, $\otimes_{\boldsymbol{x}}\boldsymbol{\mathcal{M}}_{\overline{a}}(\boldsymbol{x})\subset R^+\mathcal{M}_{\overline{A}} R$.

Now, we apply the basic property regarding the commutation of tensor products of von Neumann algebras~\cite{stratila2019}, we have 
\begin{equation*}
(\otimes_{\boldsymbol{x}}\boldsymbol{\mathcal{M}}_a(\boldsymbol{x}))'=\otimes_{\boldsymbol{x}}\boldsymbol{\mathcal{M}}'_a(\boldsymbol{x})=\otimes_{\boldsymbol{x}}\boldsymbol{\mathcal{M}}_{\overline{a}}(\boldsymbol{x}), 
\end{equation*}
(see Eq.~\ref{structure1}: $\boldsymbol{\mathcal{M}}_{\overline{a}}(\boldsymbol{x})=\boldsymbol{\mathcal{M}}'_a(\boldsymbol{x})$). Meanwhile, because $R^+\mathcal{M}_{\overline{A}} R$ is the commutant of $R^+\mathcal{M}_A R$, we have
\begin{equation*}
\otimes_{\boldsymbol{x}}\boldsymbol{\mathcal{M}}_a(\boldsymbol{x})\subset R^+\mathcal{M}_A R,\quad\quad (\otimes_{\boldsymbol{x}}\boldsymbol{\mathcal{M}}_a(\boldsymbol{x}))'\subset (R^+\mathcal{M}_A R)'.
\end{equation*}

Then, with similar arguments as in App.~\ref{possc}, we have 
\begin{equation*}
\otimes_{\boldsymbol{x}}\boldsymbol{\mathcal{M}}_a(\boldsymbol{x})= R^+\mathcal{M}_A R,\quad\quad \otimes_{\boldsymbol{x}}\boldsymbol{\mathcal{M}}_{\overline{a}}(\boldsymbol{x})= R^+\mathcal{M}_{\overline{A}} R.
\end{equation*}

Finally, if we apply the alternative expression of $\otimes_{\boldsymbol{x}}\boldsymbol{\mathcal{M}}_a(\boldsymbol{x})$ and $\otimes_{\boldsymbol{x}}\boldsymbol{\mathcal{M}}_{\overline{a}}(\boldsymbol{x})$ given in Eq.~\ref{structure2}, we eventually reach
\begin{align*}
\begin{split}
R^+\mathcal{M}_A R=(\otimes_{\boldsymbol{x}\in\mathrm{W}[A]}\mathbf{L}(\mathfrak{e}_{\boldsymbol{x}}))\otimes(\otimes_{\boldsymbol{x}\in\mathrm{E}[A\overline{A}]}\boldsymbol{\mathcal{M}}_a(\boldsymbol{x})),\\
R^+\mathcal{M}_{\overline{A}} R=(\otimes_{\boldsymbol{x}\in\mathrm{E}[A\overline{A}]}\boldsymbol{\mathcal{M}}_{\overline{a}}(\boldsymbol{x}))\otimes(\otimes_{\boldsymbol{x}\in\mathrm{W}[\overline{A}]}\mathbf{L}(\mathfrak{e}_{\boldsymbol{x}})).
\end{split}
\end{align*}

As for the center of von Neumann algebra, we also the basic property regarding the commutation of tensor products of von Neumann algebras~\cite{stratila2019}. That is 
\begin{equation*}
\mathrm{Z}(\otimes_{\boldsymbol{x}}\boldsymbol{\mathcal{M}}_a(\boldsymbol{x}))=\otimes_{\boldsymbol{x}}\mathrm{Z}(\boldsymbol{\mathcal{M}}_a(\boldsymbol{x})),
\end{equation*}
where $\mathrm{Z}(\otimes_{\boldsymbol{x}}\boldsymbol{\mathcal{M}}_a(\boldsymbol{x}))$ is within $\mathbf{L}(\mathcal{E})$ while $\mathrm{Z}(\boldsymbol{\mathcal{M}}_a(\boldsymbol{x}))$ is within $\mathbf{L}(\mathfrak{e}_{\boldsymbol{x}})$. Then, according to Eq.~\ref{ew2}, the ``local'' center $\mathrm{Z}(\boldsymbol{\mathcal{M}}_a(\boldsymbol{x}))$ is trivial for $\boldsymbol{x}\in\mathrm{W}[A]$ or $\boldsymbol{x}\in\mathrm{W}[\overline{A}]$. Hence, replacing $\otimes_{\boldsymbol{x}}\boldsymbol{\mathcal{M}}_a(\boldsymbol{x})$ by $R^+\mathcal{M}_A R$ as from above result, we have
\begin{equation*}
\mathrm{Z}(R^+\mathcal{M}_A R)=\otimes_{\boldsymbol{x}\in\mathrm{E}[A\overline{A}]}\mathrm{Z}(\boldsymbol{\mathcal{M}}_a(\boldsymbol{x})).
\end{equation*}

\subsection{Proof of Prop.~\ref{declprop}}\label{podecl}
With respect to the decomposition in Eq.~\ref{decl5}, we need to prove that $R^+\mathcal{M}_A R$ and $R^+\mathcal{M}_{\overline{A}} R$ consist of operators of the form $\sum_{\{\mu_{\boldsymbol{x}}\}}\widetilde{\boldsymbol{O}}^{\{\mu_{\boldsymbol{x}}\}}_a\otimes\mathds{1}^{\{\mu_{\boldsymbol{x}}\}}_{\overline{a}}$ and operators of the form $\sum_{\{\mu_{\boldsymbol{x}}\}}\mathds{1}^{\{\mu_{\boldsymbol{x}}\}}_a\otimes\widetilde{\boldsymbol{O}}^{\{\mu_{\boldsymbol{x}}\}}_{\overline{a}}$ respectively.

According to Prop.~\ref{sova}, any operator $\widetilde{\boldsymbol{O}}\in R^+\mathcal{M}_A R$ is a sum of operators of the form
\begin{align}
\begin{split}
\widetilde{\boldsymbol{O}}&=(\otimes_{\boldsymbol{x}\in\mathrm{W}[A]}\widetilde{\boldsymbol{O}}_{\boldsymbol{x}})\otimes(\otimes_{\boldsymbol{x}\in\mathrm{E}[A\overline{A}]}\widetilde{\boldsymbol{O}}_{\boldsymbol{x}})\\
&\otimes(\otimes_{\boldsymbol{x}\in\mathrm{W}[\overline{A}]}\mathds{1}_{\boldsymbol{x}}),\\
&\widetilde{\boldsymbol{O}}_{\boldsymbol{x}}\in\mathbf{L}(\mathfrak{e}_{\boldsymbol{x}})~for~\boldsymbol{x}\in\mathrm{W}[A],\\ &\widetilde{\boldsymbol{O}}_{\boldsymbol{x}}\in \boldsymbol{\mathcal{M}}_a(\boldsymbol{x})~for~\boldsymbol{x}\in\mathrm{E}[A\overline{A}].
\end{split}
\end{align}
And according to the decomposition $\mathfrak{e}_{\boldsymbol{x}}=\oplus_{\mu_{\boldsymbol{x}}}(\mathfrak{e}_{\boldsymbol{x}a}^{\mu_{\boldsymbol{x}}}\otimes\mathfrak{e}_{\boldsymbol{x}\overline{a}}^{\mu_{\boldsymbol{x}}})$ for $\boldsymbol{x}\in\mathrm{E}[A\overline{A}]$ (see Sec.~\ref{exsplitsec}), we can write $\widetilde{\boldsymbol{O}}_{\boldsymbol{x}}=\sum_{\mu_{\boldsymbol{x}}}(\widetilde{\boldsymbol{O}}^{\mu_{\boldsymbol{x}}}_{\boldsymbol{x}a}\otimes\mathds{1}^{\mu_{\boldsymbol{x}}}_{\boldsymbol{x}\overline{a}})$ for $\widetilde{\boldsymbol{O}}_{\boldsymbol{x}}\in\boldsymbol{\mathcal{M}}_a(\boldsymbol{x})$. Then, with respect to the equivalence between different decompositions of $\mathcal{E}$ in Eq.~\ref{decl3} and \ref{decl4}, the above form of operator can be written as
\begin{align}
\begin{split}
&\widetilde{\boldsymbol{O}}\\
=&(\otimes_{\boldsymbol{x}\in\mathrm{W}[A]}\widetilde{\boldsymbol{O}}_{\boldsymbol{x}})\otimes\{\otimes_{\boldsymbol{x}\in\mathrm{E}[A\overline{A}]}[\sum_{\mu_{\boldsymbol{x}}}(\widetilde{\boldsymbol{O}}^{\mu_{\boldsymbol{x}}}_{\boldsymbol{x}a}\otimes\mathds{1}^{\mu_{\boldsymbol{x}}}_{\boldsymbol{x}\overline{a}})]\}\\
&\otimes(\otimes_{\boldsymbol{x}\in\mathrm{W}[\overline{A}]}\mathds{1}_{\boldsymbol{x}})\\
=&(\otimes_{\boldsymbol{x}\in\mathrm{W}[A]}\widetilde{\boldsymbol{O}}_{\boldsymbol{x}})\otimes\{\sum_{\{\mu_{\boldsymbol{x}}\}}[\otimes_{\boldsymbol{x}\in\mathrm{E}[A\overline{A}]}(\widetilde{\boldsymbol{O}}^{\mu_{\boldsymbol{x}}}_{\boldsymbol{x}a}\otimes\mathds{1}^{\mu_{\boldsymbol{x}}}_{\boldsymbol{x}\overline{a}})]\}\\
&\otimes(\otimes_{\boldsymbol{x}\in\mathrm{W}[\overline{A}]}\mathds{1}_{\boldsymbol{x}})\\
=&\sum_{\{\mu_{\boldsymbol{x}}\}}\{(\otimes_{\boldsymbol{x}\in\mathrm{W}[A]}\widetilde{\boldsymbol{O}}_{\boldsymbol{x}})\otimes[\otimes_{\boldsymbol{x}\in\mathrm{E}[A\overline{A}]}(\widetilde{\boldsymbol{O}}^{\mu_{\boldsymbol{x}}}_{\boldsymbol{x}a}\otimes\mathds{1}^{\mu_{\boldsymbol{x}}}_{\boldsymbol{x}\overline{a}})]\\
&\otimes(\otimes_{\boldsymbol{x}\in\mathrm{W}[\overline{A}]}\mathds{1}_{\boldsymbol{x}})\}\\
=&\sum_{\{\mu_{\boldsymbol{x}}\}}\{[(\otimes_{\boldsymbol{x}\in\mathrm{W}[A]}\widetilde{\boldsymbol{O}}_{\boldsymbol{x}})\otimes(\otimes_{\boldsymbol{x}\in\mathrm{E}[A\overline{A}]}\widetilde{\boldsymbol{O}}^{\mu_{\boldsymbol{x}}}_{\boldsymbol{x}a})]\\
&\otimes[(\otimes_{\boldsymbol{x}\in\mathrm{E}[A\overline{A}]}\mathds{1}^{\mu_{\boldsymbol{x}}}_{\boldsymbol{x}\overline{a}})\otimes(\otimes_{\boldsymbol{x}\in\mathrm{W}[\overline{A}]}\mathds{1}_{\boldsymbol{x}})]\},
\end{split}
\end{align}
which is exactly an operator of the form $\sum_{\{\mu_{\boldsymbol{x}}\}}\widetilde{\boldsymbol{O}}^{\{\mu_{\boldsymbol{x}}\}}_a\otimes\mathds{1}^{\{\mu_{\boldsymbol{x}}\}}_{\overline{a}}$ with $\widetilde{\boldsymbol{O}}^{\{\mu_{\boldsymbol{x}}\}}_a=(\otimes_{\boldsymbol{x}\in\mathrm{W}[A]}\widetilde{\boldsymbol{O}}_{\boldsymbol{x}})\otimes(\otimes_{\boldsymbol{x}\in\mathrm{E}[A\overline{A}]}\widetilde{\boldsymbol{O}}^{\mu_{\boldsymbol{x}}}_{\boldsymbol{x}a})$. It follows that $R^+\mathcal{M}_A R$ is included in the von Neumann algebra consisting of all the $\sum_{\{\mu_{\boldsymbol{x}}\}}\widetilde{\boldsymbol{O}}^{\{\mu_{\boldsymbol{x}}\}}_a\otimes\mathds{1}^{\{\mu_{\boldsymbol{x}}\}}_{\overline{a}}$ operators.

Similarly, we can also show that $R^+\mathcal{M}_{\overline{A}} R$ is included in the von Neumann algebra consisting of all the $\sum_{\{\mu_{\boldsymbol{x}}\}}\mathds{1}^{\{\mu_{\boldsymbol{x}}\}}_a\otimes\widetilde{\boldsymbol{O}}^{\{\mu_{\boldsymbol{x}}\}}_{\overline{a}}$ operators. Then, two von Neumann algebras as commutant of one another being included in another two von Neumann algebras as commutant of one another respectively, the situation we have met for several times in the texts, implies that the two pairs are equal. And hence the desired results are proved.

\section{Extracting the RT-formula area term}\label{portformula}
To formalize and elucidate the ``sewing'' as discussed in Sec.~\ref{rtformula}, we specify the desired decompositions $\{\boldsymbol{U}^{\{\mu_{\boldsymbol{x}}\}}_A\}$ and $\{\boldsymbol{U}^{\{\mu_{\boldsymbol{x}}\}}_{\overline{A}}\}$ together with the family of isometry $\{\boldsymbol{J}^{\{\mu_{\boldsymbol{x}}\}}\}$ in a representative and concise example of boundary bipartition $A\overline{A}$ as shown in Fig.~\ref{app4a}, in which the entangling surface (after rearranged to the standard geometric setting) includes all the three types of splits but only one type-1 split. Based on this specification, we can rigorously extract the $\{\ket*{\chi^{\{\mu_{\boldsymbol{x}}\}}}\}$ states for the area terms in the RT formula. The generalization to other cases of boundary bipartitions should be straightforward.

\subsubsection{Specifying $\mathcal{F}^{\{\mu_{\pmb{x}}\}}_a$ and $\mathcal{F}^{\{\mu_{\pmb{x}}\}}_{\bar{a}}$, the pair of parities}

As shown in Fig.~\ref{app4a}, the entangling surface $\mathrm{E}[A\overline{A}]$ consists of seven bulk qudits $\boldsymbol{x}_1,\boldsymbol{x}_2,\ldots,\boldsymbol{x}_7$. We can write
\begin{align}
\begin{split}
\mathcal{E}^{\{\mu_{\boldsymbol{x}}\}}_a&=(\otimes_{\boldsymbol{x}\in\mathrm{W}[A]}\mathfrak{e}_{\boldsymbol{x}})\otimes(\otimes_{\boldsymbol{x}\in\mathrm{E}[A\overline{A}]}\mathfrak{e}_{\boldsymbol{x}a}^{\mu_{\boldsymbol{x}}})\\
&=(\otimes_{\boldsymbol{x}\in\mathrm{W}[A]}\mathfrak{e}_{\boldsymbol{x}})\otimes(\mathfrak{e}_{\boldsymbol{x}_1a}^{\mu_{\boldsymbol{x}}}\otimes\cdots\otimes\mathfrak{e}_{\boldsymbol{x}_7a}^{\mu_{\boldsymbol{x}}}),\\
\ket*{\boldsymbol{B}^{\{\mu_{\boldsymbol{x}}\}}_{a,\boldsymbol{n}}}&=(\otimes_{\boldsymbol{x}\in\mathrm{W}[A]}\ket{\boldsymbol{\beta}_{\boldsymbol{x}}})\otimes\ket*{\boldsymbol{\beta}^{\mu_{\boldsymbol{x}_1}}_{\boldsymbol{x}_1a}\cdots\boldsymbol{\beta}^{\mu_{\boldsymbol{x}_7}}_{\boldsymbol{x}_7a}},\\
\mathcal{E}^{\{\mu_{\boldsymbol{x}}\}}_{\overline{a}}&=(\otimes_{\boldsymbol{x}\in\mathrm{E}[A\overline{A}]}\mathfrak{e}_{\boldsymbol{x}{\overline{a}}}^{\mu_{\boldsymbol{x}}})\otimes(\otimes_{\boldsymbol{x}\in\mathrm{W}[\overline{A}]}\mathfrak{e}_{\boldsymbol{x}})\\
&=(\mathfrak{e}_{\boldsymbol{x}_1{\overline{a}}}^{\mu_{\boldsymbol{x}}}\otimes\cdots\otimes\mathfrak{e}_{\boldsymbol{x}_7{\overline{a}}}^{\mu_{\boldsymbol{x}}})\otimes(\otimes_{\boldsymbol{x}\in\mathrm{W}[\overline{A}]}\mathfrak{e}_{\boldsymbol{x}}),\\
\ket*{\boldsymbol{B}^{\{\mu_{\boldsymbol{x}}\}}_{\overline{a},\boldsymbol{n}'}}&=\ket*{\boldsymbol{\beta}^{\mu_{\boldsymbol{x}_1}}_{\boldsymbol{x}_1\overline{a}}\cdots\boldsymbol{\beta}^{\mu_{\boldsymbol{x}_7}}_{\boldsymbol{x}_7\overline{a}}}\otimes(\otimes_{\boldsymbol{x}\in\mathrm{W}[\overline{A}]}\ket{\boldsymbol{\beta}_{\boldsymbol{x}}}).
\end{split}
\end{align}

To specify $\mathcal{F}^{\{\mu_{\boldsymbol{x}}\}}_a$ and $\mathcal{F}^{\{\mu_{\boldsymbol{x}}\}}_{\overline{a}}$, we consider the pictorial representation of a $\ket{\psi_m}$ state as shown in Fig.~\ref{app4b} with the bipartition indicated. Recall the discussion in Par.~\ref{psim} that in such a pictorial representation, the defining characteristic of a $\ket{\psi_m}$ state is that the parity of the number of dark sides on each loop surrounding each hole $\boldsymbol{x}$ (also on each of the three laterals of the lattice) is even, which underlies the description of the entanglement patterns of the boundary code states $\ket{\widetilde{\varphi}_n}$s (see Sec.~\ref{pattern1}).

As shown in Fig.~\ref{app4b} and \ref{app4c}, we can view $\ket{\psi_m}$ as concatenated from $\ket*{\psi_{A,m_A}}$ and $\ket*{\psi_{\overline{A},m_{\overline{A}}}}$. In this way, the holes (also the laterals) are torn into halves (see the blue and green segments in Fig.~\ref{app4c}) and so are the parities of number of dark sides thereon, and the pairs of the torn halves of the holes coincide with the splits (see Fig.~\ref{19a}) in the entangling surface. Obviously, a necessary condition for any $\ket*{\psi_{A,m_A}}$ and $\ket*{\psi_{\overline{A},m_{\overline{A}}}}$ to be concatenated into a $\ket{\psi_m}$ state is that the parities on the torn halves match (the sum for each pair is even), i.e., the configurations of the parities on the $A$ part and on the $\overline{A}$ part must be $\{even,even,even,\ldots\}_A, \{even,even,even,\ldots\}_{\overline{A}}$ or $\{odd,odd,odd,\ldots\}_A, \{odd,odd,odd,\ldots\}_{\overline{A}}$ or $\{even,odd,even,\ldots\}_A, \{even,odd,even,\ldots\}_{\overline{A}}$, etc. (see Fig.~\ref{app4c}).

According to the above observation, we consider to index the basis states in $\mathcal{F}^{\{\mu_{\boldsymbol{x}}\}}_a$ and $\mathcal{F}^{\{\mu_{\boldsymbol{x}}\}}_{\overline{a}}$ by the possible configurations of parities on the torn halves, and we start with clarifying the number of independent pairs of parities (on each pair of torn halves). Here, by independent pairs of parity, we mean those parities, e.g., in the configurations $\{even,odd,even,\ldots\}_A, \{even,odd,even,\ldots\}_{\overline{A}}$, which can be arbitrary as read from $\ket*{\psi_{A,m_A}}$ and $\ket*{\psi_{\overline{A},m_{\overline{A}}}}$ that are torn from some $\ket{\psi_m}$.

\onecolumngrid
\begin{center}
\begin{figure}[ht]
\centering
    \includegraphics[width=18cm]{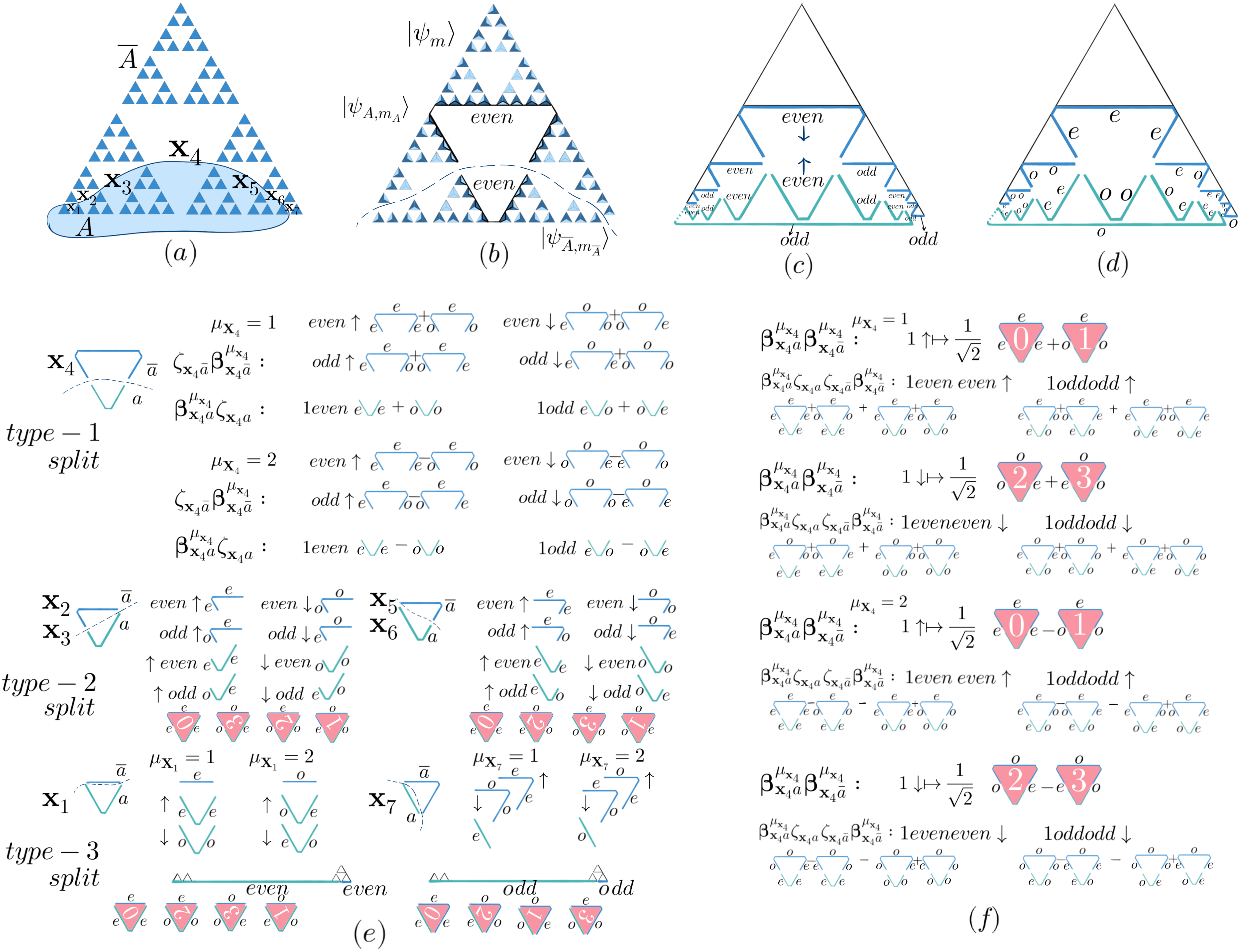}   
\phantomsubfloat{\label{app4a}}\phantomsubfloat{\label{app4b}}
\phantomsubfloat{\label{app4c}}\phantomsubfloat{\label{app4d}}
\phantomsubfloat{\label{app4e}}\phantomsubfloat{\label{app4f}}
\caption{According to the bipartition specified by (a), a $\ket{\psi_m}$ state is torn into $\ket*{\psi_{A,m_A}}$ and $\ket*{\psi_{\overline{A},m_{\overline{A}}}}$ as illustrated in (b). Also, the holes corresponding to the bulk qudits in the entangling surface are torn into halves, and so are the parity of the dark sides on the loop surrounding a hole. (c) To concatenated any $\ket*{\psi_{A,m_A}}$ and $\ket*{\psi_{\overline{A},m_{\overline{A}}}}$ into a $\ket{\psi_m}$ state, a necessary condition is the match of the parties on the torn halves. (d) The finer description of the parities on the torn halves. (e) The complete list of how $(\boldsymbol{\beta}^{\mu_{\boldsymbol{x}}}_{\boldsymbol{x}a},\zeta_{\boldsymbol{x}a})$ or $(\boldsymbol{\beta}^{\mu_{\boldsymbol{x}}}_{\boldsymbol{x}\bar a},\zeta_{\boldsymbol{x}\bar a})$ corresponds to the finer descriptions of parities on the torn halves. (f) Combining the finer description characterized by $(\boldsymbol{\beta}^{\mu_{\boldsymbol{x}_4}}_{\boldsymbol{x}_4a},\zeta_{\boldsymbol{x}_4a})$ and $(\boldsymbol{\beta}^{\mu_{\boldsymbol{x}_4}}_{\boldsymbol{x}_4\bar a},\zeta_{\boldsymbol{x}_4\bar a})$, following the match of the parities on the torn halves, gives rise to all possible descriptions of the hole.}
\label{figapp4}
\end{figure}
\end{center}
\twocolumngrid

As shown in Fig.~\ref{app4c}, generally, for any $\ket{\psi_m}$ to be concatenated from $\ket*{\psi_{A,m_A}}$ and $\ket*{\psi_{\overline{A},m_{\overline{A}}}}$, while there are nine pairs of torn halves (including the torn halves for both the two laterals as indicated by the solid green segment and the dashed green segment), according to the basics of the pictorial representation (see Par.~\ref{picrep}), there are indeed exactly eight of the parities (only the solid segments) can be chosen arbitrarily~\footnote{There are two important properties of the pictorial representations of the $\ket{\psi_m}$s as discussed in Par.~\ref{picrep}: (1) The number of all dark sides on either $A$ or $\overline{A}$ is even in total. (2) By sequentially applying the $T_{ii'}$ operators, one can arbitrarily change the parities in the torn halves of any hole or on the lateral indicated by the solid segments, without changing the parities in other torn halves except on the torn halves on the other lateral, i.e., the one indicated by the dashed segment. Here, by ``arbitrarily changed'', we mean that before and after the change, $\ket*{\psi_{A,m_A}}$ and $\ket*{\psi_{\overline{A},m_{\overline{A}}}}$ satisfying the configurations of parities can all be viewed as torn from some $\ket{\psi_m}$ state.}. If we fix the index $\{\mu_{\boldsymbol{x}}\}$, the independence can be further reduced by two. Indeed, according to the illustration in Fig.~\ref{app4a} and \ref{19a}, bulk qudit $\boldsymbol{x}_4$ exhibits type-1 split, bulk qudits $\boldsymbol{x}_1,\boldsymbol{x}_7$ exhibit type-2 splits, and bulk qudits $\boldsymbol{x}_2,\boldsymbol{x}_3,\boldsymbol{x}_5,\boldsymbol{x}_6$ exhibit type-3 splits. According to Sec.~\ref{exsplitsec} and App.~\ref{posplit}, for a fixed index $\mu_{\boldsymbol{x}}$ of a type-2 split, i.e., a fixed sector in $\mathfrak{e}_{\boldsymbol{x}}=\oplus_{\mu_{\boldsymbol{x}}}(\mathfrak{e}_{\boldsymbol{x}a}^{\mu_{\boldsymbol{x}}}\otimes\mathfrak{e}_{\boldsymbol{x}\overline{a}}^{\mu_{\boldsymbol{x}}})$, the parity of either of the corresponding torn halves is fixed. Hence, the parities on $\boldsymbol{x}_1,\boldsymbol{x}_7$ are not independent. However, it is easy to check that for connected boundary bipartition the number of type-2 splits is always finite and independent on the system size, and the torn parities thereon can be viewed as attached to the parties on the laterals (see Fig.~\ref{app4c}).

Now, we can view the independent pairs of parity as on holes $\boldsymbol{x}_2,\boldsymbol{x}_3,\boldsymbol{x}_4,\boldsymbol{x}_5,\boldsymbol{x}_6$, together with the pair on the lateral related to $(\boldsymbol{x}_1,\boldsymbol{x}_7)$. Note that it is easy to check that the number of independent pairs scales linearly on the number of qudits in the entangling surface. Accordingly, we define. 
\begin{align}
\begin{split}
&\mathcal{F}^{\{\mu_{\boldsymbol{x}}\}}_a=\mathfrak{f}_{\boldsymbol{x}_1,\boldsymbol{x}_7a}\otimes\mathfrak{f}_{\boldsymbol{x}_2a}\otimes\mathfrak{f}_{\boldsymbol{x}_3a}\otimes\mathfrak{f}_{\boldsymbol{x}_4a}\otimes\mathfrak{f}_{\boldsymbol{x}_5a}\otimes\mathfrak{f}_{\boldsymbol{x}_6a},\\
&\mathfrak{f}_{\boldsymbol{x}a}=\mathbb{C}^2, \quad \mathfrak{f}_{\boldsymbol{x}a}\ni\ket{\zeta_{\boldsymbol{x}a}}=\ket{even},\ket{odd},\\
&\ket*{\boldsymbol{Z}^{\{\mu_{\boldsymbol{x}}\}}_{a,\boldsymbol{l}}}=\ket{\zeta_{\boldsymbol{x}_1\boldsymbol{x}_7a}\zeta_{\boldsymbol{x}_2a}\cdots\zeta_{\boldsymbol{x}_6a}},\\
&\mathcal{F}^{\{\mu_{\boldsymbol{x}}\}}_{\overline{a}}=\mathfrak{f}_{\boldsymbol{x}_1,\boldsymbol{x}_7{\overline{a}}}\otimes\mathfrak{f}_{\boldsymbol{x}_2{\overline{a}}}\otimes\mathfrak{f}_{\boldsymbol{x}_3{\overline{a}}}\otimes\mathfrak{f}_{\boldsymbol{x}_4{\overline{a}}}\otimes\mathfrak{f}_{\boldsymbol{x}_5{\overline{a}}}\otimes\mathfrak{f}_{\boldsymbol{x}_6{\overline{a}}},\\
&\mathfrak{f}_{\boldsymbol{x}{\overline{a}}}=\mathbb{C}^2, \quad \mathfrak{f}_{\boldsymbol{x}{\overline{a}}}\ni\ket{\zeta_{\boldsymbol{x}\overline{a}}}=\ket{even},\ket{odd},\\
&\ket*{\boldsymbol{Z}^{\{\mu_{\boldsymbol{x}}\}}_{{\overline{a}},\boldsymbol{l}'}}=\ket{\zeta_{\boldsymbol{x}_1\boldsymbol{x}_7\overline{a}}\zeta_{\boldsymbol{x}_2\overline{a}}\cdots\zeta_{\boldsymbol{x}_6\overline{a}}},
\end{split}
\end{align}
where, according to the subscripts, each pair $(\mathfrak{f}_{\boldsymbol{x}a},\mathfrak{f}_{\boldsymbol{x}{\overline{a}}})$ corresponds to an independent pair of torn halves, and the basis states $\ket{\zeta_{\boldsymbol{x}a}}$ and $\ket{\zeta_{\boldsymbol{x}\bar a}}$ are indexed by the possible parities.

\subsubsection{Specifying $\pmb{U}^{\{\mu_{\pmb{x}}\}}_A$ and $\pmb{U}^{\{\mu_{\pmb{x}}\}}_{\overline{A}}$, finer description of parities}
To specify the decompositions of $\mathcal{H}_A$ and $\mathcal{H}_{\overline{A}}$, i.e.,
\begin{align}
\begin{split}
&\boldsymbol{U}^{\{\mu_{\boldsymbol{x}}\}}_A: \mathcal{E}^{\{\mu_{\boldsymbol{x}}\}}_a\otimes\mathcal{F}^{\{\mu_{\boldsymbol{x}}\}}_a\rightarrow\mathcal{H}_A,\\
&\boldsymbol{U}^{\{\mu_{\boldsymbol{x}}\}}_A(\ket*{\boldsymbol{B}^{\{\mu_{\boldsymbol{x}}\}}_{a,\boldsymbol{n}}}\otimes\ket*{\boldsymbol{Z}^{\{\mu_{\boldsymbol{x}}\}}_{a,\boldsymbol{l}}})=\ket*{\widetilde{\varphi}^{\{\mu_{\boldsymbol{x}}\}}_{A, \boldsymbol{n}\boldsymbol{l}}},\\
&\boldsymbol{U}^{\{\mu_{\boldsymbol{x}}\}}_{\overline{A}}: \mathcal{E}^{\{\mu_{\boldsymbol{x}}\}}_{\overline{a}}\otimes\mathcal{F}^{\{\mu_{\boldsymbol{x}}\}}_{\overline{a}}\rightarrow\mathcal{H}_{\overline{A}},\\
&\boldsymbol{U}^{\{\mu_{\boldsymbol{x}}\}}_{\overline{A}}(\ket*{\boldsymbol{B}^{\{\mu_{\boldsymbol{x}}\}}_{{\overline{a}},\boldsymbol{n}'}}\otimes\ket*{\boldsymbol{Z}^{\{\mu_{\boldsymbol{x}}\}}_{{\overline{a}},\boldsymbol{l}'}})=\ket*{\widetilde{\varphi}^{\{\mu_{\boldsymbol{x}}\}}_{{\overline{A}}, \boldsymbol{n}'\boldsymbol{l}'}},
\end{split}
\end{align}
we simply need to define $\ket*{\widetilde{\varphi}^{\{\mu_{\boldsymbol{x}}\}}_{A, \boldsymbol{n}\boldsymbol{l}}}$ and $\ket*{\widetilde{\varphi}^{\{\mu_{\boldsymbol{x}}\}}_{{\overline{A}}, \boldsymbol{n}'\boldsymbol{l}'}}$ as sums of certain $\ket*{\psi_{A,m_A}}$s and $\ket*{\psi_{\overline{A},m_{\overline{A}}}}$s respectively. To that end, we firstly note that according to Sec.~\ref{exsplitsec} and App.~\ref{posplit}, for type-1 split, we have $\ket*{\boldsymbol{\beta}^{\mu_{\boldsymbol{x}}}_{\boldsymbol{x}a}\boldsymbol{\beta}^{\mu_{\boldsymbol{x}}}_{\boldsymbol{x}{\overline{a}}}}=\ket{\boldsymbol{\beta}_{\boldsymbol{x}}}\pm\ket{\boldsymbol{\beta}'_{\boldsymbol{x}}}$ while for other types of split we have $\ket*{\boldsymbol{\beta}^{\mu_{\boldsymbol{x}}}_{\boldsymbol{x}a}\boldsymbol{\beta}^{\mu_{\boldsymbol{x}}}_{\boldsymbol{x}{\overline{a}}}}=\ket{\boldsymbol{\beta}_{\boldsymbol{x}}}$. Then, in this concrete example, we have
\begin{align}\label{sign00}
\begin{split}
&\ket*{\boldsymbol{B}^{\{\mu_{\boldsymbol{x}}\}}_{a,\boldsymbol{n}}\boldsymbol{B}^{\{\mu_{\boldsymbol{x}}\}}_{{\overline{a}},\boldsymbol{n}'}}\\
&=(\otimes_{\boldsymbol{x}\in\mathrm{W}[A]}\ket{\boldsymbol{\beta}_{\boldsymbol{x}}})\otimes\ket*{\boldsymbol{\beta}^{\mu_{\boldsymbol{x}_1}}_{\boldsymbol{x}_1a}\cdots\boldsymbol{\beta}^{\mu_{\boldsymbol{x}_7}}_{\boldsymbol{x}_7a}}\\
&\otimes\ket*{\boldsymbol{\beta}^{\mu_{\boldsymbol{x}_1}}_{\boldsymbol{x}_1\overline{a}}\cdots\boldsymbol{\beta}^{\mu_{\boldsymbol{x}_7}}_{\boldsymbol{x}_7\overline{a}}}\otimes(\otimes_{\boldsymbol{x}\in\mathrm{W}[\overline{A}]}\ket{\boldsymbol{\beta}_{\boldsymbol{x}}})\\
&=(\otimes_{\boldsymbol{x}\in\mathrm{W}[A]}\ket{\boldsymbol{\beta}_{\boldsymbol{x}}})\\
&\otimes\ket*{\boldsymbol{\beta}^{\mu_{\boldsymbol{x}_1}}_{\boldsymbol{x}_1a}\boldsymbol{\beta}^{\mu_{\boldsymbol{x}_1}}_{\boldsymbol{x}_1{\overline{a}}}}\otimes\cdots\otimes\ket*{\boldsymbol{\beta}^{\mu_{\boldsymbol{x}_7}}_{\boldsymbol{x}_7a}\boldsymbol{\beta}^{\mu_{\boldsymbol{x}_7}}_{\boldsymbol{x}_7{\overline{a}}}}\\
&\otimes(\otimes_{\boldsymbol{x}\in\mathrm{W}[\overline{A}]}\ket{\boldsymbol{\beta}_{\boldsymbol{x}}})\\
&=\frac{1}{\sqrt{2}}(\otimes_{\boldsymbol{x}\in\mathrm{W}[A]}\ket{\boldsymbol{\beta}_{\boldsymbol{x}}})\\
&\otimes\ket{\boldsymbol{\beta}_{\boldsymbol{x}_1}\boldsymbol{\beta}_{\boldsymbol{x}_2}\boldsymbol{\beta}_{\boldsymbol{x}_3}(\boldsymbol{\beta}_{\boldsymbol{x}_4}\pm\boldsymbol{\beta}'_{\boldsymbol{x}_4})\boldsymbol{\beta}_{\boldsymbol{x}_5}\boldsymbol{\beta}_{\boldsymbol{x}_6}\boldsymbol{\beta}_{\boldsymbol{x}_7}}\\
&\otimes(\otimes_{\boldsymbol{x}\in\mathrm{W}[\overline{A}]}\ket{\boldsymbol{\beta}_{\boldsymbol{x}}})\\
&=\frac{1}{\sqrt{2}}\ket{\boldsymbol{\beta}_{\boldsymbol{1}}\cdots\boldsymbol{\beta}_{\boldsymbol{x}_1}\boldsymbol{\beta}_{\boldsymbol{x}_2}\boldsymbol{\beta}_{\boldsymbol{x}_3}(\boldsymbol{\beta}_{\boldsymbol{x}_4}\pm\boldsymbol{\beta}'_{\boldsymbol{x}_4})\boldsymbol{\beta}_{\boldsymbol{x}_5}\boldsymbol{\beta}_{\boldsymbol{x}_6}\boldsymbol{\beta}_{\boldsymbol{x}_7}\cdots\boldsymbol{\beta}_{K}}\\
&=\frac{1}{\sqrt{2}}\ket{\boldsymbol{\beta}_{\boldsymbol{1}}\cdots\boldsymbol{\beta}_{\boldsymbol{x}_1}\boldsymbol{\beta}_{\boldsymbol{x}_2}\boldsymbol{\beta}_{\boldsymbol{x}_3}\boldsymbol{\beta}_{\boldsymbol{x}_4}\boldsymbol{\beta}_{\boldsymbol{x}_5}\boldsymbol{\beta}_{\boldsymbol{x}_6}\boldsymbol{\beta}_{\boldsymbol{x}_7}\cdots\boldsymbol{\beta}_{K}}\\
&\pm\ket{\boldsymbol{\beta}_{\boldsymbol{1}}\cdots\boldsymbol{\beta}_{\boldsymbol{x}_1}\boldsymbol{\beta}_{\boldsymbol{x}_2}\boldsymbol{\beta}_{\boldsymbol{x}_3}\boldsymbol{\beta}'_{\boldsymbol{x}_4}\boldsymbol{\beta}_{\boldsymbol{x}_5}\boldsymbol{\beta}_{\boldsymbol{x}_6}\boldsymbol{\beta}_{\boldsymbol{x}_7}\cdots\boldsymbol{\beta}_{K}},
\end{split}
\end{align}
and
\begin{align}
\begin{split}
&R\ket*{\boldsymbol{B}^{\{\mu_{\boldsymbol{x}}\}}_{a,\boldsymbol{n}}\boldsymbol{B}^{\{\mu_{\boldsymbol{x}}\}}_{{\overline{a}},\boldsymbol{n}'}}\\
&=\frac{1}{\sqrt{2}}R\ket{\boldsymbol{\beta}_{\boldsymbol{1}}\cdots\boldsymbol{\beta}_{\boldsymbol{x}_1}\boldsymbol{\beta}_{\boldsymbol{x}_2}\boldsymbol{\beta}_{\boldsymbol{x}_3}\boldsymbol{\beta}_{\boldsymbol{x}_4}\boldsymbol{\beta}_{\boldsymbol{x}_5}\boldsymbol{\beta}_{\boldsymbol{x}_6}\boldsymbol{\beta}_{\boldsymbol{x}_7}\cdots\boldsymbol{\beta}_{K}}\\
&\pm \frac{1}{\sqrt{2}}R\ket{\boldsymbol{\beta}_{\boldsymbol{1}}\cdots\boldsymbol{\beta}_{\boldsymbol{x}_1}\boldsymbol{\beta}_{\boldsymbol{x}_2}\boldsymbol{\beta}_{\boldsymbol{x}_3}\boldsymbol{\beta}'_{\boldsymbol{x}_4}\boldsymbol{\beta}_{\boldsymbol{x}_5}\boldsymbol{\beta}_{\boldsymbol{x}_6}\boldsymbol{\beta}_{\boldsymbol{x}_7}\cdots\boldsymbol{\beta}_{K}}\\
&=\frac{1}{\sqrt{2}}(\ket{\widetilde{\varphi}_n}\pm\ket{\widetilde{\varphi}_{n'}}).
\end{split}
\end{align}

In view of these observations, to ensure that the to-be-defined $\{\ket*{\widetilde{\varphi}^{\{\mu_{\boldsymbol{x}}\}}_{A, \boldsymbol{n}\boldsymbol{l}}}\}$ and $\{\ket*{\widetilde{\varphi}^{\{\mu_{\boldsymbol{x}}\}}_{{\overline{A}}, \boldsymbol{n}'\boldsymbol{l}'}}\}$ can be ``sewed'' into $R\ket*{\boldsymbol{B}^{\{\mu_{\boldsymbol{x}}\}}_{a,\boldsymbol{n}}\boldsymbol{B}^{\{\mu_{\boldsymbol{x}}\}}_{{\overline{a}},\boldsymbol{n}'}}=1/2(\ket{\widetilde{\varphi}_n}\pm\ket{\widetilde{\varphi}_{n'}}$, any pair of $\ket*{\psi_{A,m_A}})$ and $\ket*{\psi_{\overline{A},m_{\overline{A}}}}$ (contributing to the expansion of $\ket*{\widetilde{\varphi}^{\{\mu_{\boldsymbol{x}}\}}_{A, \boldsymbol{n}\boldsymbol{l}}}$ and $\ket*{\widetilde{\varphi}^{\{\mu_{\boldsymbol{x}}\}}_{{\overline{A}}, \boldsymbol{n}'\boldsymbol{l}'}}$ respectively), upon being matched and concatenated into some $\ket{\psi_m}$, should satisfy (according to Eq.~\ref{dr+})
\begin{align}\label{sign0}
\begin{split}
&R^+\ket{\psi_m}\\
&=\frac{1}{\sqrt{2^K}}\ket{\boldsymbol{\beta}_{\boldsymbol{1}}\cdots\boldsymbol{\beta}_{\boldsymbol{x}_1}\boldsymbol{\beta}_{\boldsymbol{x}_2}\boldsymbol{\beta}_{\boldsymbol{x}_3}\boldsymbol{\beta}_{\boldsymbol{x}_4}\boldsymbol{\beta}_{\boldsymbol{x}_5}\boldsymbol{\beta}_{\boldsymbol{x}_6}\boldsymbol{\beta}_{\boldsymbol{x}_7}\cdots\boldsymbol{\beta}_{K}}\\
&or\\
&R^+\ket{\psi_m}\\
&=\frac{1}{\sqrt{2^K}}\ket{\boldsymbol{\beta}_{\boldsymbol{1}}\cdots\boldsymbol{\beta}_{\boldsymbol{x}_1}\boldsymbol{\beta}_{\boldsymbol{x}_2}\boldsymbol{\beta}_{\boldsymbol{x}_3}\boldsymbol{\beta}'_{\boldsymbol{x}_4}\boldsymbol{\beta}_{\boldsymbol{x}_5}\boldsymbol{\beta}_{\boldsymbol{x}_6}\boldsymbol{\beta}_{\boldsymbol{x}_7}\cdots\boldsymbol{\beta}_{K}}.
\end{split}
\end{align}
And note that $1/2(\ket{\widetilde{\varphi}_n}\pm\ket{\widetilde{\varphi}_{n'}}$ is a sum of certain $\ket{\psi_m}$ states with coefficients equal to $\pm 1$.

Recall that if $\ket*{\psi_{A,m_A}}$ and $\ket*{\psi_{\overline{A},m_{\overline{A}}}}$ can be matched and concatenated into a $\ket{\psi_m}$ state that satisfies the above condition, we need finer description regarding the parities than simply the parities on the torn halves as described previously. Indeed, according to the discussion of the entanglement patterns in Par.~\ref{pattern1}, if the above condition is satisfied for $\ket{\psi_m}$, then the parity on each side of the triangular loops surrounding a hole should be fixed to represent the emergent $\boldsymbol{\beta}_{\boldsymbol{x}}$. As shown in Fig.~\ref{app4d}, we not only need the parities for each of the torn halves to match as $(even,even)$ or $(odd,odd)$, but also need the parities on the line segments of each of the halves to satisfy certain conditions. For instance, for $\boldsymbol{x}_4$ as illustrated in Fig.~\ref{app4d}, to ensure that the concatenated $\ket{\psi_m}$ maps to the bulk qudit-product-state with $\boldsymbol{\beta}_{\boldsymbol{x}_4}=\boldsymbol{1}$, corresponding to $(even,even)$ on the two torn halves, we can exactly have the finer descriptions $[(e,e,e),(o,o)],[(o,e,o),(e,e)]$ (with $e$ and $o$ abbreviated for $even$ and $odd$ respectively); and corresponding to $(odd,odd)$ on the two torn halves, we can exactly have finer descriptions $[(e,e,o),(o,e)],[(o,e,e),(e,o)]$.

In the following, we show how to exactly characterize the ``correct'' $\ket*{\psi_{A,m_A}}$s and $\ket*{\psi_{\overline{A},m_{\overline{A}}}}$s by specifying the finer description of the parities. An important property of these finer descriptions is that on each of the torn halves for each holes, the corresponding finer description can be arbitrary, e.g., being $(e,e,e),(e,e,o),\ldots$ all the $2\times2\times2$ possibilities for the upper torn half of $\boldsymbol{x}_4$, and is independent on the finer descriptions for other holes. In other words, any configuration of such finer descriptions for all the torn halves on $A$ or $\overline{A}$ can be read off some $\ket*{\psi_{A,m_A}}$s and $\ket*{\psi_{\overline{A},m_{\overline{A}}}}$ as torn form some $\ket{\psi_m}$. The reason is the same for the independence of the pairs of parities for the torn halves (see previous arguments and footnote).

Firstly, we write
\begin{align}
\begin{split}
&\ket*{\boldsymbol{B}^{\{\mu_{\boldsymbol{x}}\}}_{a,\boldsymbol{n}}}\otimes\ket*{\boldsymbol{Z}^{\{\mu_{\boldsymbol{x}}\}}_{a,\boldsymbol{l}}}\\
&=(\otimes_{\boldsymbol{x}\in\mathrm{W}[A]}\ket{\boldsymbol{\beta}_{\boldsymbol{x}}})\\
&\otimes\ket*{\boldsymbol{\beta}^{\mu_{\boldsymbol{x}_1}}_{\boldsymbol{x}_1a}\cdots\boldsymbol{\beta}^{\mu_{\boldsymbol{x}_7}}_{\boldsymbol{x}_7a}}\otimes\ket{\zeta_{\boldsymbol{x}_1\boldsymbol{x}_7a}\zeta_{\boldsymbol{x}_2a}\cdots\zeta_{\boldsymbol{x}_6a}}\\
&=(\otimes_{\boldsymbol{x}\in\mathrm{W}[A]}\ket{\boldsymbol{\beta}_{\boldsymbol{x}}})\\
&\otimes\ket*{(\boldsymbol{\beta}^{\mu_{\boldsymbol{x}_1}}_{\boldsymbol{x}_1a}\boldsymbol{\beta}^{\mu_{\boldsymbol{x}_7}}_{\boldsymbol{x}_7a}\zeta_{\boldsymbol{x}_1\boldsymbol{x}_7a})(\boldsymbol{\beta}^{\mu_{\boldsymbol{x}_2}}_{\boldsymbol{x}_2a}\zeta_{\boldsymbol{x}_2a})\cdots(\boldsymbol{\beta}^{\mu_{\boldsymbol{x}_6}}_{\boldsymbol{x}_6a}\zeta_{\boldsymbol{x}_6a})},\\
&\ket*{\boldsymbol{Z}^{\{\mu_{\boldsymbol{x}}\}}_{\overline{a},\boldsymbol{l}'}}\otimes\ket*{\boldsymbol{B}^{\{\mu_{\boldsymbol{x}}\}}_{\overline{a},\boldsymbol{n}'}}\\
&=\ket{\zeta_{\boldsymbol{x}_1\boldsymbol{x}_7\overline{a}}\zeta_{\boldsymbol{x}_2\overline{a}}\cdots\zeta_{\boldsymbol{x}_6\overline{a}}}\otimes\ket*{\boldsymbol{\beta}^{\mu_{\boldsymbol{x}_1}}_{\boldsymbol{x}_1\overline{a}}\cdots\boldsymbol{\beta}^{\mu_{\boldsymbol{x}_7}}_{\boldsymbol{x}_7\overline{a}}}\\
&\otimes(\otimes_{\boldsymbol{x}\in\mathrm{W}[\overline{A}]}\ket{\boldsymbol{\beta}_{\boldsymbol{x}}})\\
&=\ket*{(\zeta_{\boldsymbol{x}_1\boldsymbol{x}_7\overline{a}}\boldsymbol{\beta}^{\mu_{\boldsymbol{x}_1}}_{\boldsymbol{x}_1\overline{a}}\boldsymbol{\beta}^{\mu_{\boldsymbol{x}_7}}_{\boldsymbol{x}_7\overline{a}})(\zeta_{\boldsymbol{x}_2\overline{a}}\boldsymbol{\beta}^{\mu_{\boldsymbol{x}_2}}_{\boldsymbol{x}_2\overline{a}})\cdots(\zeta_{\boldsymbol{x}_6\overline{a}}\boldsymbol{\beta}^{\mu_{\boldsymbol{x}_6}}_{\boldsymbol{x}_6\overline{a}})}\\
&\otimes(\otimes_{\boldsymbol{x}\in\mathrm{W}[\overline{A}]}\ket{\boldsymbol{\beta}_{\boldsymbol{x}}}).
\end{split}
\end{align}
And recall that the degrees of freedom $\boldsymbol{\beta}_{\boldsymbol{x}},\boldsymbol{\beta}^{\mu_{\boldsymbol{x}}}_{\boldsymbol{x}a},\zeta_{\boldsymbol{x}a},\boldsymbol{\beta}^{\mu_{\boldsymbol{x}}}_{\boldsymbol{x}\overline{a}},\zeta_{\boldsymbol{x}\overline{a}}$ can be given as as below (see Sec.~\ref{exsplitsec} and App.~\ref{posplit})
\begin{align}
\begin{split}
&\boldsymbol{\beta}_{\boldsymbol{x}}={\boldsymbol{0}},{\boldsymbol{1}},{\boldsymbol{2}},{\boldsymbol{3}},\quad \boldsymbol{x}\in\mathrm{W}[A],\mathrm{W}[\overline{A}],\\
&\boldsymbol{\beta}^{1}_{\boldsymbol{x}_1a}=\uparrow, \downarrow,~\boldsymbol{\beta}^{2}_{\boldsymbol{x}_1a}=\uparrow, \downarrow,\quad \boldsymbol{\beta}^{1}_{\boldsymbol{x}_1\overline{a}}=1,~\boldsymbol{\beta}^{2}_{\boldsymbol{x}_1\overline{a}}=1\\
&\boldsymbol{\beta}^{1}_{\boldsymbol{x}_7a}=1,~\boldsymbol{\beta}^{2}_{\boldsymbol{x}_7a}=1,\quad\quad~ \boldsymbol{\beta}^{1}_{\boldsymbol{x}_7\overline{a}}=\uparrow, \downarrow,~\boldsymbol{\beta}^{2}_{\boldsymbol{x}_7\overline{a}}=\uparrow, \downarrow,\\
&\boldsymbol{\beta}^{1}_{\boldsymbol{x}_4a}=1,~\boldsymbol{\beta}^{2}_{\boldsymbol{x}_4a}=1,\quad\quad~ \boldsymbol{\beta}^{1}_{\boldsymbol{x}_4\overline{a}}=\uparrow, \downarrow,~\boldsymbol{\beta}^{2}_{\boldsymbol{x}_4\overline{a}}=\uparrow, \downarrow,\\
&\boldsymbol{\beta}^{1}_{\boldsymbol{x}_2a}=\uparrow, \downarrow,\quad\quad\quad\quad\quad\quad~ \boldsymbol{\beta}^{1}_{\boldsymbol{x}_2\overline{a}}=\uparrow, \downarrow,\\
&......
\end{split}
\end{align}
Note that for $\boldsymbol{x}_3,\boldsymbol{x}_5,\boldsymbol{x}_6$ the case is the same to $\boldsymbol{x}_2$.

Then, we can associate the pairs $(\boldsymbol{\beta}^{\mu_{\boldsymbol{x}}}_{\boldsymbol{x}a},\zeta_{\boldsymbol{x}a})$ and $(\boldsymbol{\beta}^{\mu_{\boldsymbol{x}}}_{\boldsymbol{x}\bar a},\zeta_{\boldsymbol{x}\bar a})$ the finer descriptions on the lower torn half and the upper torn half respectively. The complete association or correspondence is listed in Fig.~\ref{app4e}. Accordingly, for $(\boldsymbol{\beta}^{\mu_{\boldsymbol{x}_1}}_{\boldsymbol{x}_1a}\boldsymbol{\beta}^{\mu_{\boldsymbol{x}_7}}_{\boldsymbol{x}_7a}\zeta_{\boldsymbol{x}_1\boldsymbol{x}_7a})(\boldsymbol{\beta}^{\mu_{\boldsymbol{x}_2}}_{\boldsymbol{x}_2a}\zeta_{\boldsymbol{x}_2a})\cdots(\boldsymbol{\beta}^{\mu_{\boldsymbol{x}_6}}_{\boldsymbol{x}_6a}\zeta_{\boldsymbol{x}_6a})$ and $(\zeta_{\boldsymbol{x}_1\boldsymbol{x}_7\overline{a}}\boldsymbol{\beta}^{\mu_{\boldsymbol{x}_1}}_{\boldsymbol{x}_1\overline{a}}\boldsymbol{\beta}^{\mu_{\boldsymbol{x}_7}}_{\boldsymbol{x}_7\overline{a}})(\zeta_{\boldsymbol{x}_2\overline{a}}\boldsymbol{\beta}^{\mu_{\boldsymbol{x}_2}}_{\boldsymbol{x}_2\overline{a}})\cdots(\zeta_{\boldsymbol{x}_6\overline{a}}\boldsymbol{\beta}^{\mu_{\boldsymbol{x}_6}}_{\boldsymbol{x}_6\overline{a}})$, we have associated configurations of the finer descriptions, like the illustration in Fig.~\ref{app4d}. It is easy to check that all the possible finer descriptions are included in Fig.~\ref{app4e}. In this list, for $\boldsymbol{x}_4$, each pair $(\boldsymbol{\beta}^{\mu_{\boldsymbol{x}}}_{\boldsymbol{x}a},\zeta_{\boldsymbol{x}a})$ or $(\boldsymbol{\beta}^{\mu_{\boldsymbol{x}}}_{\boldsymbol{x}\bar a},\zeta_{\boldsymbol{x}\bar a})$ corresponds to two finer descriptions with the sign $+$ or $-$ assigned, whose meaning will be clarified below. And for others, each pair corresponds to exactly one finer description.

Now, based on the list in Fig.~\ref{app4e}, for each specific $\{\mu_{\boldsymbol{x}}\}=\mu_{\boldsymbol{x}_1},\mu_{\boldsymbol{x}_2},\ldots,\mu_{\boldsymbol{x}_7}$, and for given $\ket*{\boldsymbol{B}^{\{\mu_{\boldsymbol{x}}\}}_{a,\boldsymbol{n}}}=(\otimes_{\boldsymbol{x}\in\mathrm{W}[A]}\ket{\boldsymbol{\beta}_{\boldsymbol{x}}})\otimes\ket*{\boldsymbol{\beta}^{\mu_{\boldsymbol{x}_1}}_{\boldsymbol{x}_1a}\cdots\boldsymbol{\beta}^{\mu_{\boldsymbol{x}_7}}_{\boldsymbol{x}_7a}}$, $\ket*{\boldsymbol{Z}^{\{\mu_{\boldsymbol{x}}\}}_{a,\boldsymbol{l}}}=\ket{\zeta_{\boldsymbol{x}_1\boldsymbol{x}_7a}\zeta_{\boldsymbol{x}_2a}\cdots\zeta_{\boldsymbol{x}_6a}}$ and $\ket*{\boldsymbol{B}^{\{\mu_{\boldsymbol{x}}\}}_{\overline{a},\boldsymbol{n}'}}=\ket*{\boldsymbol{\beta}^{\mu_{\boldsymbol{x}_1}}_{\boldsymbol{x}_1\overline{a}}\cdots\boldsymbol{\beta}^{\mu_{\boldsymbol{x}_7}}_{\boldsymbol{x}_7\overline{a}}}\otimes(\otimes_{\boldsymbol{x}\in\mathrm{W}[\overline{A}]}\ket{\boldsymbol{\beta}_{\boldsymbol{x}}})$, $\ket*{\boldsymbol{Z}^{\{\mu_{\boldsymbol{x}}\}}_{\overline{a},\boldsymbol{l}'}}=\ket{\zeta_{\boldsymbol{x}_1\boldsymbol{x}_7\overline{a}}\zeta_{\boldsymbol{x}_2\overline{a}}\cdots\zeta_{\boldsymbol{x}_6\overline{a}}}$, we define $\ket*{\widetilde{\varphi}^{\{\mu_{\boldsymbol{x}}\}}_{A, \boldsymbol{n}\boldsymbol{l}}}$ and $\ket*{\widetilde{\varphi}^{\{\mu_{\boldsymbol{x}}\}}_{{\overline{A}}, \boldsymbol{n}'\boldsymbol{l}'}}$ as follows: (1) The two normalized states have the form 
\begin{align}\label{sign1}
\begin{split}
&\ket*{\widetilde{\varphi}^{\{\mu_{\boldsymbol{x}}\}}_{A, \boldsymbol{n}\boldsymbol{l}}}=\frac{1}{\sqrt{\mathscr{N}_A}}\sum_{m_A}c_{m_A}\ket*{\psi_{A,m_A}},~c_{m_A}=\pm 1,\\
&\ket*{\widetilde{\varphi}^{\{\mu_{\boldsymbol{x}}\}}_{\overline{A}, \boldsymbol{n}'\boldsymbol{l}'}}=\frac{1}{\sqrt{\mathscr{N}_{\overline{A}}}}\sum_{m_{\overline{A}}}c_{m_{\overline{A}}}\ket*{\psi_{{\overline{A}},m_{\overline{A}}}},~c_{m_{\overline{A}}}=\pm 1.
\end{split}
\end{align}
(2) A $\ket*{\psi_{A,m_A}}$ contributes to the sum of $\ket*{\widetilde{\varphi}^{\{\mu_{\boldsymbol{x}}\}}_{A, \boldsymbol{n}\boldsymbol{l}}}$ if and only if it is torn from certain $\ket{\psi_m}$ state off which we can read the emergent bulk degrees of freedom and the finer description of parities specified by $\ket*{\boldsymbol{B}^{\{\mu_{\boldsymbol{x}}\}}_{a,\boldsymbol{n}}}$ and $\ket*{\boldsymbol{Z}^{\{\mu_{\boldsymbol{x}}\}}_{a,\boldsymbol{l}}}$. More explicitly, in $R^+\ket{\psi_m}$, $\boldsymbol{\beta}_{\boldsymbol{x}}$ with $\boldsymbol{x}\in\mathrm{W}[A]$ equals the $\boldsymbol{\beta}_{\boldsymbol{x}}$ in $\ket*{\boldsymbol{B}^{\{\mu_{\boldsymbol{x}}\}}_{a,\boldsymbol{n}}}$ (see Eq.~\ref{sign0}), and the finer descriptions of the parities on the torn halves as specified by $(\boldsymbol{\beta}^{\mu_{\boldsymbol{x}_1}}_{\boldsymbol{x}_1a}\boldsymbol{\beta}^{\mu_{\boldsymbol{x}_7}}_{\boldsymbol{x}_7a}\zeta_{\boldsymbol{x}_1\boldsymbol{x}_7a})(\boldsymbol{\beta}^{\mu_{\boldsymbol{x}_1}}_{\boldsymbol{x}_2a}\zeta_{\boldsymbol{x}_2a})\cdots(\boldsymbol{\beta}^{\mu_{\boldsymbol{x}_7}}_{\boldsymbol{x}_6a}\zeta_{\boldsymbol{x}_6a})$ ($\ket*{\boldsymbol{B}^{\{\mu_{\boldsymbol{x}}\}}_{a,\boldsymbol{n}}}$) in Fig.~\ref{app4e} can be read off the pictorial representation of $\ket{\psi_m}$. (3) In reading the finer descriptions off $\ket*{\psi_{A,m_A}}$, for $\boldsymbol{x}_4$ with two possible finer description for the torn half, $\ket*{\psi_{A,m_A}}$ can match either one: if it matches the one with $``-''$ sign, the coefficient $c_{m_A}$ in Eq.~\ref{sign1} is $-1$, otherwise, the coefficient is $1$. (4) Similar rules apply to $\ket*{\psi_{\overline{A},m_{\overline{A}}}}$.

In terms of the above definition, with reference to Fig.~\ref{app4e}, it is easy to see the following properties of $\ket*{\widetilde{\varphi}^{\{\mu_{\boldsymbol{x}}\}}_{A, \boldsymbol{n}\boldsymbol{l}}}$ and $\ket*{\widetilde{\varphi}^{\{\mu_{\boldsymbol{x}}\}}_{\overline{A}, \boldsymbol{n}'\boldsymbol{l}'}}$.

\emph{i)} If we fix $\{\mu_{\boldsymbol{x}}\}=\mu_{\boldsymbol{x}_1},\mu_{\boldsymbol{x}_2},\ldots,\mu_{\boldsymbol{x}_7}$, then, for distinct $(\ket*{\boldsymbol{B}^{\{\mu_{\boldsymbol{x}}\}}_{a,\boldsymbol{n}_1}},\ket*{\boldsymbol{Z}^{\{\mu_{\boldsymbol{x}}\}}_{a,\boldsymbol{l}_1}})$ and $(\ket*{\boldsymbol{B}^{\{\mu_{\boldsymbol{x}}\}}_{a,\boldsymbol{n}_2}},\ket*{\boldsymbol{Z}^{\{\mu_{\boldsymbol{x}}\}}_{a,\boldsymbol{l}_2}})$, i.e., with difference in either $\boldsymbol{\beta}_{\boldsymbol{x}}$ for $\boldsymbol{x}\in\mathrm{W}[A]$, or in $\boldsymbol{\beta}^{\mu_{\boldsymbol{x}}}_{\boldsymbol{x}a}$, or in $\zeta_{\boldsymbol{x}a}$, the $\ket*{\psi_{A,m_A}}$s specified respectively according to the above rules must be orthogonal, and hence we have $\braket*{\widetilde{\varphi}^{\{\mu_{\boldsymbol{x}}\}}_{A, \boldsymbol{n}_1\boldsymbol{l}_1}}{\widetilde{\varphi}^{\{\mu_{\boldsymbol{x}}\}}_{A, \boldsymbol{n}_2\boldsymbol{l}_2}}=0$. This property also applies to $\ket*{\psi_{\overline{A},m_{\overline{A}}}}$ and $\ket*{\widetilde{\varphi}^{\{\mu_{\boldsymbol{x}}\}}_{\overline{A}, \boldsymbol{n}'\boldsymbol{l}'}}$.

\emph{ii)} The total number of the $\ket*{\psi_{A,m_A}}$s specified by $(\ket*{\boldsymbol{B}^{\{\mu_{\boldsymbol{x}}\}}_{a,\boldsymbol{n}_1}},\ket*{\boldsymbol{Z}^{\{\mu_{\boldsymbol{x}}\}}_{a,\boldsymbol{l}_1}})$ or by $(\ket*{\boldsymbol{B}^{\{\mu_{\boldsymbol{x}}\}}_{a,\boldsymbol{n}_2}},\ket*{\boldsymbol{Z}^{\{\mu_{\boldsymbol{x}}\}}_{a,\boldsymbol{l}_2}})$ is the same, i.e., $\mathscr{N}_A$ is independent on $(\ket*{\boldsymbol{B}^{\{\mu_{\boldsymbol{x}}\}}_{a,\boldsymbol{n}}},\ket*{\boldsymbol{Z}^{\{\mu_{\boldsymbol{x}}\}}_{a,\boldsymbol{l}}})$. Moreover, for fixed $(\ket*{\boldsymbol{B}^{\{\mu_{\boldsymbol{x}}\}}_{a,\boldsymbol{n}}},\ket*{\boldsymbol{Z}^{\{\mu_{\boldsymbol{x}}\}}_{a,\boldsymbol{l}}})$, the total number of the $\ket*{\psi_{A,m_A}}$s with $c_{m_A}=1$ (corresponding to one possibility of finer description on $\boldsymbol{x}_4$ as shown in Fig.~\ref{app4e}) equals the total number for $c_{m_A}=-1$ (corresponding to the other possibility). The reason is that these different collections of $\ket*{\psi_{A,m_A}}$s corresponding to the difference in either $\boldsymbol{\beta}_{\boldsymbol{x}}$ for $\boldsymbol{x}\in\mathrm{W}[A]$, or in $\boldsymbol{\beta}^{\mu_{\boldsymbol{x}}}_{\boldsymbol{x}a}$, or in $\zeta_{\boldsymbol{x}a}$ can always be mapped from one to another through sequentially applying the $T_{ii'}$ gates, which is unitary and hence keeps the number invariant. These properties also apply to $\ket*{\psi_{\overline{A},m_{\overline{A}}}}$ and $\ket*{\widetilde{\varphi}^{\{\mu_{\boldsymbol{x}}\}}_{\overline{A}, \boldsymbol{n}'\boldsymbol{l}'}}$.

\emph{iii)} In terms of the above properties, it is easy to show that for distinct $\{\mu_{\boldsymbol{x}}\}$ and $\{\mu'_{\boldsymbol{x}}\}$, we have $\braket{\widetilde{\varphi}^{\{\mu_{\boldsymbol{x}}\}}_{A, \boldsymbol{n}\boldsymbol{l}}}{\widetilde{\varphi}^{\{\mu'_{\boldsymbol{x}}\}}_{A, \bar{\boldsymbol{n}}\bar{\boldsymbol{l}}}}=0$ and $\braket{\widetilde{\varphi}^{\{\mu_{\boldsymbol{x}}\}}_{\overline{A}, \boldsymbol{n}'\boldsymbol{l}'}}{\widetilde{\varphi}^{\{\mu'_{\boldsymbol{x}}\}}_{\overline{A}, \bar{\boldsymbol{n}'}\bar{\boldsymbol{l}'}}}=0$.

In this way, it is clear that we have specified isometric $\boldsymbol{U}^{\{\mu_{\boldsymbol{x}}\}}_A$ as $\boldsymbol{U}^{\{\mu_{\boldsymbol{x}}\}}_A(\ket*{\boldsymbol{B}^{\{\mu_{\boldsymbol{x}}\}}_{a,\boldsymbol{n}}}\otimes\ket*{\boldsymbol{Z}^{\{\mu_{\boldsymbol{x}}\}}_{a,\boldsymbol{l}}})=\ket*{\widetilde{\varphi}^{\{\mu_{\boldsymbol{x}}\}}_{A, \boldsymbol{n}\boldsymbol{l}}}$ and isometric $\boldsymbol{U}^{\{\mu_{\boldsymbol{x}}\}}_{\overline{A}}$ as $\boldsymbol{U}^{\{\mu_{\boldsymbol{x}}\}}_{\overline{A}}(\ket*{\boldsymbol{B}^{\{\mu_{\boldsymbol{x}}\}}_{{\overline{a}},\boldsymbol{n}'}}\otimes\ket*{\boldsymbol{Z}^{\{\mu_{\boldsymbol{x}}\}}_{{\overline{a}},\boldsymbol{l}'}})=\ket*{\widetilde{\varphi}^{\{\mu_{\boldsymbol{x}}\}}_{{\overline{A}}, \boldsymbol{n}'\boldsymbol{l}'}}$ with ${\boldsymbol{U}^{\{\mu'_{\boldsymbol{x}}\}}_A}^+\boldsymbol{U}^{\{\mu_{\boldsymbol{x}}\}}_A=0$ and ${\boldsymbol{U}^{\{\mu'_{\boldsymbol{x}}\}}_{\overline{A}}}^+\boldsymbol{U}^{\{\mu_{\boldsymbol{x}}\}}_{\overline{A}}=0$ for distinct $\{\mu_{\boldsymbol{x}}\}$ and $\{\mu'_{\boldsymbol{x}}\}$.


\subsubsection{Specifying $\ket*{\chi^{\{\mu_{\pmb{x}}\}}}$, extracting the area operator}
We can infer a crucial property from the list in Fig.~\ref{app4e}, which indicates the way to define the isometries $\{\boldsymbol{J}^{\{\mu_{\pmb{x}}\}}\}$, or equivalently, the $\ket*{\chi^{\{\mu_{\pmb{x}}\}}}$s. That is, for any $(\boldsymbol{\beta}^{\mu_{\boldsymbol{x}_4}}_{\boldsymbol{x}_4 a},\boldsymbol{\beta}^{\mu_{\boldsymbol{x}_4}}_{\boldsymbol{x}_4\bar a})$ identified as $\boldsymbol{\beta}_{\boldsymbol{x}_4}\pm\boldsymbol{\beta}'_{\boldsymbol{x}_4}$, if we combine the finer descriptions of the parities for the torn two halves as specified by $[(\boldsymbol{\beta}^{\mu_{\boldsymbol{x}_4}}_{\boldsymbol{x}_4 a},even),(even\boldsymbol{\beta}^{\mu_{\boldsymbol{x}_4}}_{\boldsymbol{x}_4\bar a})]$ together with $[(\boldsymbol{\beta}^{\mu_{\boldsymbol{x}_4}}_{\boldsymbol{x}_4 a},odd),(odd,\boldsymbol{\beta}^{\mu_{\boldsymbol{x}_4}}_{\boldsymbol{x}_4\bar a})]$, i.e., keeping the parities matched (see Fig.~\ref{app4c}), then, as shown in Fig.~\ref{app4f}, the combined finer descriptions exactly give rise to all the possibilities of the parities in the hole corresponding to $\boldsymbol{\beta}_{\boldsymbol{x}_4}$ and $\boldsymbol{\beta}'_{\boldsymbol{x}_4}$. Furthermore, as shown in Fig.~\ref{app4f}, the combination of the sign $\pm$ in the conventional way exactly gives rise to the correct sign in $\boldsymbol{\beta}_{\boldsymbol{x}_4}\pm\boldsymbol{\beta}'_{\boldsymbol{x}_4}$. It is also clear from Fig.~\ref{app4e} that for other $\boldsymbol{x}\ne\boldsymbol{x}_4$ in the entangling surface, the combined finer descriptions as specified by $[(\boldsymbol{\beta}^{\mu_{\boldsymbol{x}}}_{\boldsymbol{x} a},even),(even\boldsymbol{\beta}^{\mu_{\boldsymbol{x}}}_{\boldsymbol{x}\bar a})]$ together with $[(\boldsymbol{\beta}^{\mu_{\boldsymbol{x}}}_{\boldsymbol{x} a},odd),(odd,\boldsymbol{\beta}^{\mu_{\boldsymbol{x}}}_{\boldsymbol{x}\bar a})]$ exactly give rise to all the possibilities of the parities in the hole corresponding to the identified $\boldsymbol{\beta}_{\boldsymbol{x}}$ by $(\boldsymbol{\beta}^{\mu_{\boldsymbol{x}}}_{\boldsymbol{x} a},\boldsymbol{\beta}^{\mu_{\boldsymbol{x}}}_{\boldsymbol{x}\bar a})$.

According to this observation, for each $\{\mu_{\boldsymbol{x}}\}$ we can write
\begin{align}
\begin{split}
&\mathcal{F}^{\{\mu_{\boldsymbol{x}}\}}_a\otimes\mathcal{F}^{\{\mu_{\boldsymbol{x}}\}}_{\overline{a}}\\
&=(\mathfrak{f}_{\boldsymbol{x}_1,\boldsymbol{x}_7a}\otimes\cdots\otimes\mathfrak{f}_{\boldsymbol{x}_6a})\otimes(\mathfrak{f}_{\boldsymbol{x}_1,\boldsymbol{x}_7{\overline{a}}}\otimes\cdots\otimes\mathfrak{f}_{\boldsymbol{x}_6{\overline{a}}})\\
&=(\mathfrak{f}_{\boldsymbol{x}_1,\boldsymbol{x}_7a}\otimes\mathfrak{f}_{\boldsymbol{x}_1,\boldsymbol{x}_7\overline{a}})\\
&\otimes(\mathfrak{f}_{\boldsymbol{x}_2a}\otimes\mathfrak{f}_{\boldsymbol{x}_2\overline{a}})\otimes(\mathfrak{f}_{\boldsymbol{x}_3a}\otimes\mathfrak{f}_{\boldsymbol{x}_3\overline{a}})\\
&\otimes(\mathfrak{f}_{\boldsymbol{x}_4a}\otimes\mathfrak{f}_{\boldsymbol{x}_4\overline{a}})\otimes(\mathfrak{f}_{\boldsymbol{x}_5a}\otimes\mathfrak{f}_{\boldsymbol{x}_5\overline{a}})\\
&\otimes(\mathfrak{f}_{\boldsymbol{x}_6a}\otimes\mathfrak{f}_{\boldsymbol{x}_6\overline{a}})
\end{split}
\end{align}
and define 
\begin{align}\label{chichi}
\begin{split}
&\ket*{\chi_{\boldsymbol{x}_1\boldsymbol{x}_7}}=\frac{1}{\sqrt{2}}(\ket{even}\otimes\ket{even}+\ket{odd}\otimes\ket{odd}),\\
&\ket*{\chi_{\boldsymbol{x}_2}}=\frac{1}{\sqrt{2}}(\ket{even}\otimes\ket{even}+\ket{odd}\otimes\ket{odd}),\\
&\ket*{\chi_{\boldsymbol{x}_3}}=\frac{1}{\sqrt{2}}(\ket{even}\otimes\ket{even}+\ket{odd}\otimes\ket{odd}),\\
&\ket*{\chi_{\boldsymbol{x}_4}}=\frac{1}{\sqrt{2}}(\ket{even}\otimes\ket{even}+\ket{odd}\otimes\ket{odd}),\\
&\ket*{\chi_{\boldsymbol{x}_5}}=\frac{1}{\sqrt{2}}(\ket{even}\otimes\ket{even}+\ket{odd}\otimes\ket{odd}),\\
&\ket*{\chi_{\boldsymbol{x}_6}}=\frac{1}{\sqrt{2}}(\ket{even}\otimes\ket{even}+\ket{odd}\otimes\ket{odd}),\\
&\ket*{\chi^{\{\mu_{\boldsymbol{x}}\}}}=\frac{1}{\sqrt{2^6}}\ket*{\chi_{\boldsymbol{x}_1\boldsymbol{x}_7}}\otimes\ket*{\chi_{\boldsymbol{x}_2}}\otimes\cdots\otimes\ket*{\chi_{\boldsymbol{x}_6}}\\
&=\frac{1}{\sqrt{2^6}}(\ket{even,even,\ldots}\otimes\ket{even,even,\ldots}\\
&+\ket{even,odd,\ldots}\otimes\ket{even,odd,\ldots}\\
&+\cdots+\ket{odd,odd,\ldots}\otimes\ket{odd,odd,\ldots})\\
&=\frac{1}{\sqrt{2^6}}\sum_{\boldsymbol{l}}\ket*{\boldsymbol{Z}^{\{\mu_{\boldsymbol{x}}\}}_{a,\boldsymbol{l}}}\otimes\ket*{\boldsymbol{Z}^{\{\mu_{\boldsymbol{x}}\}}_{\bar a,\boldsymbol{l}}}.
\end{split}
\end{align}

Then, for any $\ket*{\boldsymbol{B}^{\{\mu_{\boldsymbol{x}}\}}_{a,\boldsymbol{n}}}$ and $\ket*{\boldsymbol{B}^{\{\mu_{\boldsymbol{x}}\}}_{\bar a,\boldsymbol{n}'}}$ we have
\begin{align}
\begin{split}
&\ket*{\boldsymbol{B}^{\{\mu_{\boldsymbol{x}}\}}_{a,\boldsymbol{n}}}\otimes\ket*{\chi^{\{\mu_{\boldsymbol{x}}\}}}\otimes\ket*{\boldsymbol{B}^{\{\mu_{\boldsymbol{x}}\}}_{\bar a,\boldsymbol{n}'}}\\
&=\frac{1}{\sqrt{2^6}}\sum_{\boldsymbol{l}}\ket*{\boldsymbol{B}^{\{\mu_{\boldsymbol{x}}\}}_{a,\boldsymbol{n}}}\otimes\ket*{\boldsymbol{Z}^{\{\mu_{\boldsymbol{x}}\}}_{a,\boldsymbol{l}}}\otimes\ket*{\boldsymbol{Z}^{\{\mu_{\boldsymbol{x}}\}}_{\bar a,\boldsymbol{l}}}\otimes\ket*{\boldsymbol{B}^{\{\mu_{\boldsymbol{x}}\}}_{\bar a,\boldsymbol{n}'}},
\end{split}
\end{align}
and hence we have
\begin{align}
\begin{split}
&\boldsymbol{U}^{\{\mu_{\boldsymbol{x}}\}}_A\otimes\boldsymbol{U}^{\{\mu_{\boldsymbol{x}}\}}_{\overline{A}}(\ket*{\boldsymbol{B}^{\{\mu_{\boldsymbol{x}}\}}_{a,\boldsymbol{n}}}\otimes\ket*{\chi^{\{\mu_{\boldsymbol{x}}\}}}\otimes\ket*{\boldsymbol{B}^{\{\mu_{\boldsymbol{x}}\}}_{\bar a,\boldsymbol{n}'}})\\
&=\frac{1}{\sqrt{2^6}}\sum_{\boldsymbol{l}}\boldsymbol{U}^{\{\mu_{\boldsymbol{x}}\}}_A(\ket*{\boldsymbol{B}^{\{\mu_{\boldsymbol{x}}\}}_{a,\boldsymbol{n}}}\otimes\ket*{\boldsymbol{Z}^{\{\mu_{\boldsymbol{x}}\}}_{a,\boldsymbol{l}}})\\
&\otimes\boldsymbol{U}^{\{\mu_{\boldsymbol{x}}\}}_{\overline{A}}(\ket*{\boldsymbol{Z}^{\{\mu_{\boldsymbol{x}}\}}_{\bar a,\boldsymbol{l}}}\otimes\ket*{\boldsymbol{B}^{\{\mu_{\boldsymbol{x}}\}}_{\bar a,\boldsymbol{n}'}})\\
&=\frac{1}{\sqrt{2^6}}\sum_{\boldsymbol{l}}\ket*{\widetilde{\varphi}^{\{\mu_{\boldsymbol{x}}\}}_{A, \boldsymbol{n}\boldsymbol{l}}}\otimes\ket*{\widetilde{\varphi}^{\{\mu_{\boldsymbol{x}}\}}_{\overline{A}, \boldsymbol{n}'\boldsymbol{l}}}.
\end{split}
\end{align}

Note that here the index $\boldsymbol{l}$ goes through all possible configurations of the parities on the torn halves on $A$ (and also equivalent to those on $\overline{A}$), i.e., $(even,even,\ldots),(even,odd,\ldots),\ldots$. For each specific $\boldsymbol{l}$, e.g., $(even,even,\ldots)$, the states $\ket*{\widetilde{\varphi}^{\{\mu_{\boldsymbol{x}}\}}_{A, \boldsymbol{n}\boldsymbol{l}}}=\boldsymbol{U}^{\{\mu_{\boldsymbol{x}}\}}_A(\ket*{\boldsymbol{B}^{\{\mu_{\boldsymbol{x}}\}}_{a,\boldsymbol{n}}}\otimes\ket*{\boldsymbol{Z}^{\{\mu_{\boldsymbol{x}}\}}_{a,\boldsymbol{l}}})$ and $\ket*{\widetilde{\varphi}^{\{\mu_{\boldsymbol{x}}\}}_{\overline{A}, \boldsymbol{n}'\boldsymbol{l}}}=\boldsymbol{U}^{\{\mu_{\boldsymbol{x}}\}}_A(\ket*{\boldsymbol{B}^{\{\mu_{\boldsymbol{x}}\}}_{a,\boldsymbol{n}}}\otimes\ket*{\boldsymbol{Z}^{\{\mu_{\boldsymbol{x}}\}}_{a,\boldsymbol{l}}})$ are exactly determined by
\begin{align}
\begin{split}
&(\otimes_{\boldsymbol{x}\in\mathrm{W}[A]}\ket{\boldsymbol{\beta}_{\boldsymbol{x}}})\otimes\ket*{\boldsymbol{\beta}^{\mu_{\boldsymbol{x}_1}}_{\boldsymbol{x}_1a}\cdots\boldsymbol{\beta}^{\mu_{\boldsymbol{x}_7}}_{\boldsymbol{x}_7a}}\otimes\ket{even,even,\ldots}\\
&=(\otimes_{\boldsymbol{x}\in\mathrm{W}[A]}\ket{\boldsymbol{\beta}_{\boldsymbol{x}}})\\
&\otimes\ket*{(\boldsymbol{\beta}^{\mu_{\boldsymbol{x}_1}}_{\boldsymbol{x}_1a}\boldsymbol{\beta}^{\mu_{\boldsymbol{x}_7}}_{\boldsymbol{x}_7a}even)(\boldsymbol{\beta}^{\mu_{\boldsymbol{x}_2}}_{\boldsymbol{x}_2a}even)\cdots(\boldsymbol{\beta}^{\mu_{\boldsymbol{x}_6}}_{\boldsymbol{x}_6a}even)}
\end{split}
\end{align}
and 
\begin{align}
\begin{split}
&\ket{even,even,\ldots}\otimes\ket*{\boldsymbol{\beta}^{\mu_{\boldsymbol{x}_1}}_{\boldsymbol{x}_1\overline{a}}\cdots\boldsymbol{\beta}^{\mu_{\boldsymbol{x}_7}}_{\boldsymbol{x}_7\overline{a}}}\otimes(\otimes_{\boldsymbol{x}\in\mathrm{W}[\overline{A}]}\ket{\boldsymbol{\beta}_{\boldsymbol{x}}})\\
&=\ket*{(even\boldsymbol{\beta}^{\mu_{\boldsymbol{x}_1}}_{\boldsymbol{x}_1\overline{a}}\boldsymbol{\beta}^{\mu_{\boldsymbol{x}_7}}_{\boldsymbol{x}_7\overline{a}})(even\boldsymbol{\beta}^{\mu_{\boldsymbol{x}_2}}_{\boldsymbol{x}_2\overline{a}})\cdots(even\boldsymbol{\beta}^{\mu_{\boldsymbol{x}_6}}_{\boldsymbol{x}_6\overline{a}})}\\
&\otimes(\otimes_{\boldsymbol{x}\in\mathrm{W}[\overline{A}]}\ket{\boldsymbol{\beta}_{\boldsymbol{x}}})
\end{split}
\end{align}
through the list in Fig.~\ref{app4e}. And hence the product $\ket*{\widetilde{\varphi}^{\{\mu_{\boldsymbol{x}}\}}_{A, \boldsymbol{n}\boldsymbol{l}}}\otimes\ket*{\widetilde{\varphi}^{\{\mu_{\boldsymbol{x}}\}}_{\overline{A}, \boldsymbol{n}'\boldsymbol{l}}}$, as a sum, consists of the states $\frac{1}{\sqrt{\mathscr{N}_A \mathscr{N}_{\overline{A}}}}c_{m_A}c_{m_{\overline{A}}}\ket*{\psi_{A,m_A}}\otimes\ket*{\psi_{{\overline{A}},m_{\overline{A}}}}$ in which the concatenation is characterized by
\begin{align}
\begin{split}
&\boldsymbol{\beta}_{\boldsymbol{1}},\ldots,\\
&(\boldsymbol{\beta}^{\mu_{\boldsymbol{x}_1}}_{\boldsymbol{x}_1a}\boldsymbol{\beta}^{\mu_{\boldsymbol{x}_7}}_{\boldsymbol{x}_7a}eveneven\boldsymbol{\beta}^{\mu_{\boldsymbol{x}_1}}_{\boldsymbol{x}_1\overline{a}}\boldsymbol{\beta}^{\mu_{\boldsymbol{x}_7}}_{\boldsymbol{x}_7\overline{a}}),\\
&(\boldsymbol{\beta}^{\mu_{\boldsymbol{x}_2}}_{\boldsymbol{x}_2a}eveneven\boldsymbol{\beta}^{\mu_{\boldsymbol{x}_2}}_{\boldsymbol{x}_2\overline{a}}),\ldots,(\boldsymbol{\beta}^{\mu_{\boldsymbol{x}_6}}_{\boldsymbol{x}_6a}eveneven\boldsymbol{\beta}^{\mu_{\boldsymbol{x}_6}}_{\boldsymbol{x}_6\overline{a}}),\\
&\ldots,\boldsymbol{\beta}_{K}.
\end{split}
\end{align}
Then, according to the previous discussion on Fig.~\ref{app4f}, $\ket*{\psi_{A,m_A}}\otimes\ket*{\psi_{{\overline{A}},m_{\overline{A}}}}$ must be certain $\ket{\psi_m}$ that corresponds (through $R^+$) to $\ket{\boldsymbol{\beta}_{\boldsymbol{1}}\cdots\boldsymbol{\beta}_{\boldsymbol{x}_1}\boldsymbol{\beta}_{\boldsymbol{x}_2}\boldsymbol{\beta}_{\boldsymbol{x}_3}\boldsymbol{\beta}_{\boldsymbol{x}_4}\boldsymbol{\beta}_{\boldsymbol{x}_5}\boldsymbol{\beta}_{\boldsymbol{x}_6}\boldsymbol{\beta}_{\boldsymbol{x}_7}\cdots\boldsymbol{\beta}_{K}}$ or $\ket{\boldsymbol{\beta}_{\boldsymbol{1}}\cdots\boldsymbol{\beta}_{\boldsymbol{x}_1}\boldsymbol{\beta}_{\boldsymbol{x}_2}\boldsymbol{\beta}_{\boldsymbol{x}_3}\pm\boldsymbol{\beta}'_{\boldsymbol{x}_4}\boldsymbol{\beta}_{\boldsymbol{x}_5}\boldsymbol{\beta}_{\boldsymbol{x}_6}\boldsymbol{\beta}_{\boldsymbol{x}_7}\cdots\boldsymbol{\beta}_{K}}$ in
\begin{align}
\begin{split}
&(\otimes_{\boldsymbol{x}\in\mathrm{W}[A]}\ket{\boldsymbol{\beta}_{\boldsymbol{x}}})\\
&\otimes\ket*{\boldsymbol{\beta}^{\mu_{\boldsymbol{x}_1}}_{\boldsymbol{x}_1a}\boldsymbol{\beta}^{\mu_{\boldsymbol{x}_1}}_{\boldsymbol{x}_1{\overline{a}}}}\otimes\cdots\otimes\ket*{\boldsymbol{\beta}^{\mu_{\boldsymbol{x}_7}}_{\boldsymbol{x}_7a}\boldsymbol{\beta}^{\mu_{\boldsymbol{x}_7}}_{\boldsymbol{x}_7{\overline{a}}}}\\
&\otimes(\otimes_{\boldsymbol{x}\in\mathrm{W}[\overline{A}]}\ket{\boldsymbol{\beta}_{\boldsymbol{x}}})\\
&=\frac{1}{\sqrt{2}}\ket{\boldsymbol{\beta}_{\boldsymbol{1}}\cdots\boldsymbol{\beta}_{\boldsymbol{x}_1}\boldsymbol{\beta}_{\boldsymbol{x}_2}\boldsymbol{\beta}_{\boldsymbol{x}_3}(\boldsymbol{\beta}_{\boldsymbol{x}_4}\pm\boldsymbol{\beta}'_{\boldsymbol{x}_4})\boldsymbol{\beta}_{\boldsymbol{x}_5}\boldsymbol{\beta}_{\boldsymbol{x}_6}\boldsymbol{\beta}_{\boldsymbol{x}_7}\cdots\boldsymbol{\beta}_{K}},
\end{split}
\end{align}
with $c_{m_A}c_{m_{\overline{A}}}$ determines the correct sign.

Then, consider the whole expansion $\frac{1}{\sqrt{2^6}}\sum_{\boldsymbol{l}}\ket*{\widetilde{\varphi}^{\{\mu_{\boldsymbol{x}}\}}_{A, \boldsymbol{n}\boldsymbol{l}}}\otimes\ket*{\widetilde{\varphi}^{\{\mu_{\boldsymbol{x}}\}}_{\overline{A}, \boldsymbol{n}'\boldsymbol{l}}}$ in which the $\frac{1}{\sqrt{\mathscr{N}_A \mathscr{N}_{\overline{A}}}}c_{m_A}c_{m_{\overline{A}}}\ket*{\psi_{A,m_A}}\otimes\ket*{\psi_{{\overline{A}},m_{\overline{A}}}}$ are characterized by
\begin{align}
\begin{split}
&\boldsymbol{\beta}_{\boldsymbol{1}},\ldots,\\
&(\boldsymbol{\beta}^{\mu_{\boldsymbol{x}_1}}_{\boldsymbol{x}_1a}\boldsymbol{\beta}^{\mu_{\boldsymbol{x}_7}}_{\boldsymbol{x}_7a}eveneven\boldsymbol{\beta}^{\mu_{\boldsymbol{x}_1}}_{\boldsymbol{x}_1\overline{a}}\boldsymbol{\beta}^{\mu_{\boldsymbol{x}_7}}_{\boldsymbol{x}_7\overline{a}}),\\
&(\boldsymbol{\beta}^{\mu_{\boldsymbol{x}_2}}_{\boldsymbol{x}_2a}eveneven\boldsymbol{\beta}^{\mu_{\boldsymbol{x}_2}}_{\boldsymbol{x}_2\overline{a}}),\ldots,(\boldsymbol{\beta}^{\mu_{\boldsymbol{x}_6}}_{\boldsymbol{x}_6a}eveneven\boldsymbol{\beta}^{\mu_{\boldsymbol{x}_6}}_{\boldsymbol{x}_6\overline{a}}),\\
&\ldots,\boldsymbol{\beta}_{K},\\
&or\\
&\boldsymbol{\beta}_{\boldsymbol{1}},\ldots,\\
&(\boldsymbol{\beta}^{\mu_{\boldsymbol{x}_1}}_{\boldsymbol{x}_1a}\boldsymbol{\beta}^{\mu_{\boldsymbol{x}_7}}_{\boldsymbol{x}_7a}eveneven\boldsymbol{\beta}^{\mu_{\boldsymbol{x}_1}}_{\boldsymbol{x}_1\overline{a}}\boldsymbol{\beta}^{\mu_{\boldsymbol{x}_7}}_{\boldsymbol{x}_7\overline{a}}),\\
&(\boldsymbol{\beta}^{\mu_{\boldsymbol{x}_2}}_{\boldsymbol{x}_2a}eveneven\boldsymbol{\beta}^{\mu_{\boldsymbol{x}_2}}_{\boldsymbol{x}_2\overline{a}}),\ldots,(\boldsymbol{\beta}^{\mu_{\boldsymbol{x}_6}}_{\boldsymbol{x}_6a}oddodd\boldsymbol{\beta}^{\mu_{\boldsymbol{x}_6}}_{\boldsymbol{x}_6\overline{a}}),\\
&\ldots,\boldsymbol{\beta}_{K},\\
&or\\
&\boldsymbol{\beta}_{\boldsymbol{1}},\ldots,\\
&(\boldsymbol{\beta}^{\mu_{\boldsymbol{x}_1}}_{\boldsymbol{x}_1a}\boldsymbol{\beta}^{\mu_{\boldsymbol{x}_7}}_{\boldsymbol{x}_7a}eveneven\boldsymbol{\beta}^{\mu_{\boldsymbol{x}_1}}_{\boldsymbol{x}_1\overline{a}}\boldsymbol{\beta}^{\mu_{\boldsymbol{x}_7}}_{\boldsymbol{x}_7\overline{a}}),\\
&(\boldsymbol{\beta}^{\mu_{\boldsymbol{x}_2}}_{\boldsymbol{x}_2a}oddodd\boldsymbol{\beta}^{\mu_{\boldsymbol{x}_2}}_{\boldsymbol{x}_2\overline{a}}),\ldots,(\boldsymbol{\beta}^{\mu_{\boldsymbol{x}_6}}_{\boldsymbol{x}_6a}oddodd\boldsymbol{\beta}^{\mu_{\boldsymbol{x}_6}}_{\boldsymbol{x}_6\overline{a}}),\\
&\ldots,\boldsymbol{\beta}_{K},\\
&or\\
&\ldots
\end{split}
\end{align}
going through all the $2^6$ possibilities. We can conclude that 
\begin{align}
\begin{split}
&\boldsymbol{U}^{\{\mu_{\boldsymbol{x}}\}}_A\otimes\boldsymbol{U}^{\{\mu_{\boldsymbol{x}}\}}_{\overline{A}}(\ket*{\boldsymbol{B}^{\{\mu_{\boldsymbol{x}}\}}_{a,\boldsymbol{n}}}\otimes\ket*{\chi^{\{\mu_{\boldsymbol{x}}\}}}\otimes\ket*{\boldsymbol{B}^{\{\mu_{\boldsymbol{x}}\}}_{\bar a,\boldsymbol{n}'}})\\
&=\frac{1}{\sqrt{2^6}}\sum_{\boldsymbol{l}}\ket*{\widetilde{\varphi}^{\{\mu_{\boldsymbol{x}}\}}_{A, \boldsymbol{n}\boldsymbol{l}}}\otimes\ket*{\widetilde{\varphi}^{\{\mu_{\boldsymbol{x}}\}}_{\overline{A}, \boldsymbol{n}'\boldsymbol{l}}}\\
&=\frac{1}{\sqrt{2}}R\ket{\boldsymbol{\beta}_{\boldsymbol{1}}\cdots\boldsymbol{\beta}_{\boldsymbol{x}_1}\boldsymbol{\beta}_{\boldsymbol{x}_2}\boldsymbol{\beta}_{\boldsymbol{x}_3}(\boldsymbol{\beta}_{\boldsymbol{x}_4}\pm\boldsymbol{\beta}'_{\boldsymbol{x}_4})\boldsymbol{\beta}_{\boldsymbol{x}_5}\boldsymbol{\beta}_{\boldsymbol{x}_6}\boldsymbol{\beta}_{\boldsymbol{x}_7}\cdots\boldsymbol{\beta}_{K}}\\
&=R\ket*{\boldsymbol{B}^{\{\mu_{\boldsymbol{x}}\}}_{a,\boldsymbol{n}}\boldsymbol{B}^{\{\mu_{\boldsymbol{x}}\}}_{{\overline{a}},\boldsymbol{n}'}},
\end{split}
\end{align}
which has qualified the decompositions $\{\boldsymbol{U}^{\{\mu_{\boldsymbol{x}}\}}_A\}$ and $\{\boldsymbol{U}^{\{\mu_{\boldsymbol{x}}\}}_{\overline{A}}\}$ and the isometries $\{\boldsymbol{J}^{\{\mu_{\boldsymbol{x}}\}}\}$ given by $\{\ket*{\chi^{\{\mu_{\boldsymbol{x}}\}}}\}$. Therefore, the
\begin{align}
\begin{split}
&\ket*{\chi^{\{\mu_{\boldsymbol{x}}\}}}=\frac{1}{\sqrt{2^6}}\ket*{\chi_{\boldsymbol{x}_1\boldsymbol{x}_7}}\otimes\ket*{\chi_{\boldsymbol{x}_2}}\otimes\cdots\otimes\ket*{\chi_{\boldsymbol{x}_6}}\\
&=\frac{1}{\sqrt{2^6}}(\ket{even,even,\ldots}\otimes\ket{even,even,\ldots}\\
&+\ket{even,odd,\ldots}\otimes\ket{even,odd,\ldots}\\
&+\cdots+\ket{odd,odd,\ldots}\otimes\ket{odd,odd,\ldots})
\end{split}
\end{align}
is the desired.

\section{Description of the quantum-circuits}\label{poscalable}
To describe the circuit introduced in Sec.~\ref{scalable} and illustrated in Fig.~\ref{fig10} in each step, we can rewrite Eq.~\ref{cbs1} so that the commutative $(1/\sqrt{2})(\mathds{1}+T(\boldsymbol{x}))$ terms can be reordered in alignment with the steps in Fig.~\ref{fig10}, i.e.,
\begin{align*}
\begin{split}
&\ket{\tilde{\varphi}_{n_0}}\\
&=\sqrt{2^{K}}\prod_{\boldsymbol{x}=1}^{K}(\frac{\mathds{1}+T(\boldsymbol{x})}{2})\ket{0\cdots0\cdots0}\\
&=\cdots\frac{1}{\sqrt{2}}(\mathds{1}+T(\boldsymbol{x}_2))\frac{1}{\sqrt{2}}(\mathds{1}+T(\boldsymbol{x}_1))\ket{0\cdots0\cdots0}.
\end{split}
\end{align*}

It is clear that the first term acting on $\ket{0\cdots0\cdots0}$ results in 
\begin{align*}
\begin{split}
&\frac{1}{\sqrt{2}}(\mathds{1}+T(\boldsymbol{x}_1))\ket{0\cdots0\cdots0}\\
&=\frac{1}{\sqrt{2}}(\ket{0\cdots0\cdots0}+\ket{0\cdots\alpha_{i}\alpha_{i_1}\alpha_{i_2}\alpha_{i_3}\alpha_{i_4}\alpha_{i_5}\cdots0}),\\
&\alpha_i=S^{\sigma_i}_i\ket{0},~\alpha_{i_1}=S^{\sigma_{i_1}}_{i_1}\ket{0},~\alpha_{i_2}=S^{\sigma_{i_2}}_{i_2}\ket{0},\ldots,
\end{split}
\end{align*}
where the superscripts $\sigma_{i},\sigma_{i_1},\sigma_{i_2},\ldots$ are determined by $T(\boldsymbol{x}_1)$. This effect is equivalent to mapping the violet-colored qudit $i$ from the state $\ket{0}$ to $(1/\sqrt{2})(\ket{0}+\ket{\alpha_i})$ through a single qudit gate, followed by applying controlled gates as shown in Fig.~\ref{fig10} where we view $i$ as the control qudit and the rest as the target qudits. The controlled gates can be written as
\begin{align*}
\begin{split}
&\dyad{0}\otimes\mathds{1}+\dyad{\alpha_i}\otimes S^{\sigma_{i_1}}_{i_1}+\dyad{\alpha'_i}\otimes\mathds{1}+\dyad{\alpha''_i}\otimes\mathds{1},\\
&\dyad{0}\otimes\mathds{1}+\dyad{\alpha_i}\otimes S^{\sigma_{i_2}}_{i_2}+\dyad{\alpha'_i}\otimes\mathds{1}+\dyad{\alpha''_i}\otimes\mathds{1},\\
&......,
\end{split}
\end{align*}
which simply apply the desired operators $S^{\sigma_{i_1}}_{i_1},S^{\sigma_{i_2}}_{i_2},\ldots$ to the target qudits when the control qudit $i$ is in $\ket{\alpha_i}$, while do nothing on the target qudits when the control qudit $i$ is in other orthogonal states $\ket{0}$, $\ket{\alpha'_i}$ or $\ket{\alpha''_i}$. Here, the single-qudit-gate on the control qudit (purple) $i$ contributes to the depiction in Step 1 of Fig.~\ref{fig10}, and these controlled gates are described by the circuit in Step 2.

After these operation, Eq.~\ref{cbs1} can be further rewritten as 
\begin{align*}
\begin{split}
&\ket{\tilde{\varphi}_{n_0}}\\
&=\sqrt{2^{K}}\prod_{\boldsymbol{x}=1}^{K}(\frac{\mathds{1}+T(\boldsymbol{x})}{2})\ket{0\cdots0\cdots0}\\
&=\cdots\frac{1}{\sqrt{2}}(\mathds{1}+T(\boldsymbol{x}_2))\frac{1}{\sqrt{2}}\ket{0\cdots0\cdots0}\\
&+\cdots\frac{1}{\sqrt{2}}(\mathds{1}+T(\boldsymbol{x}_2))\frac{1}{\sqrt{2}}\ket{0\cdots\alpha_{i}\alpha_{i_1}\alpha_{i_2}\alpha_{i_3}\alpha_{i_4}\alpha_{i_5}\cdots0},
\end{split}
\end{align*}
which implies that we can equivalently apply single-qudit and controlled gates in a similar way. The way to apply the gates to $\ket{0\cdots0\cdots0}$ is obvious. Indeed, according to the way we select the purple qudits, in the action of 
\begin{equation*}
(\mathds{1}+T(\boldsymbol{x}_2))\frac{1}{\sqrt{2}}\ket{0\cdots\alpha_{i}\alpha_{i_1}\alpha_{i_2}\alpha_{i_3}\alpha_{i_4}\alpha_{i_5}\cdots0}
\end{equation*}
the supporting qudit $j$ of $T(\boldsymbol{x}_2)$ (see Fig.~\ref{fig10}) lies outside the overlap with the previous step and is hence in the state $\ket{0}$ in $\ket{0\cdots\alpha_{i}\alpha_{i_1}\alpha_{i_2}\alpha_{i_3}\alpha_{i_4}\alpha_{i_5}\cdots0}$. Consequently, we can exactly apply a single-qudit gate to map $\ket{0}$ to $(1/\sqrt{2})(\ket{0}+\ket{\alpha_j})$ with $\ket{\alpha_j}$ determined by the action of $T(\boldsymbol{x}_2)$ on the qudit $j$, and then apply the controlled gates with $j$ as the control qudit and with $S^{\sigma_{j_1}}_{j_1}$ and $S^{\sigma_{j_2}}_{j_2}$ (also determined by $T(\boldsymbol{x}_2)$) as the desired operators on the rest qudits in the support of $T(\boldsymbol{x}_2)$.

Then, we can utilize similar gates in all the following steps. Note that since the single-qudit operators on the purple qudit always commute with the gates in the previous steps, we can reposition them all to the first step, as described in Step 1 in Fig.~\ref{fig10}. In this way, the steps of applying the circuits illustrated in Fig.~\ref{fig10} complete the action of $\sqrt{2^{K}}\prod_{\boldsymbol{x}=1}^{K}(\frac{\mathds{1}+T(\boldsymbol{x})}{2})$ on $\ket{0\cdots0\cdots0}$. And obviously, this description can be easily scaled up to arbitrary system size.

\end{CJK*}

\bibliography{ref.bib}

\end{document}